\newif\ifdraft \draftfalse
\newif\iffull \fullfalse
\newif\ifauthorver \authorvertrue
\newif\ifanon \anonfalse
\newif\ifappendix \appendixtrue
\newif\ifrestate \restatetrue
\newif\iffullproof \fullprooffalse
\newcommand{\ottdrule}[4][]{{\displaystyle\frac{\begin{array}{l}#2\end{array}}{#3}\quad\ottdrulename{#4}}}
\newcommand{\ottpremise}[1]{ #1 \\}
\newenvironment{ottdefnblock}[3][]{ \framebox{\mbox{#2}} \quad #3 \\[0pt]}{}
\newcommand{\ottnt}[1]{\mathit{#1}}
\newcommand{\ottmv}[1]{\mathit{#1}}
\newcommand{\ottsym}[1]{#1}
\newcommand{\ottdrulename}[1]{\textsc{#1}}
\DeclareSymbolFont{symbols2}{LS1}{stixfrak}{m}{n}
\DeclareMathSymbol{\typecolon}{\mathbin}{symbols2}{"25} 
\newif\ifvector \vectorfalse
\newcommand{\algeffseqover}[1]{\ifvector\overrightarrow{#1}\else\bm{#1}\fi}
\newcommand{\algeffseqoverindex}[2]{ \algeffseqover{#1}^{#2} }
\newcommand{\ottdruleEXXEval}[1]{\ottdrule[#1]{%
\ottpremise{\ottnt{M_{{\mathrm{1}}}}  \rightsquigarrow  \ottnt{M_{{\mathrm{2}}}}}%
}{
 \ottnt{E}  [  \ottnt{M_{{\mathrm{1}}}}  ]   \longrightarrow   \ottnt{E}  [  \ottnt{M_{{\mathrm{2}}}}  ] }{%
{\ottdrulename{E\_Eval}}{}%
}}
\newcommand{\ottdruleWFXXEmpty}[1]{\ottdrule[#1]{%
}{
\vdash   \emptyset }{%
{\ottdrulename{WF\_Empty}}{}%
}}
\newcommand{\ottdruleWFXXExtVar}[1]{\ottdrule[#1]{%
\ottpremise{ \mathit{x} \,  \not\in  \,  \mathit{dom}  (  \Gamma  )   \quad  \Gamma  \vdash  \ottnt{A} }%
}{
\vdash  \Gamma  \ottsym{,}  \mathit{x} \,  \mathord{:}  \, \ottnt{A}}{%
{\ottdrulename{WF\_ExtVar}}{}%
}}
\newcommand{\ottdruleWFXXExtTyVar}[1]{\ottdrule[#1]{%
\ottpremise{ \alpha \,  \not\in  \,  \mathit{dom}  (  \Gamma  )   \quad  \vdash  \Gamma }%
}{
\vdash  \Gamma  \ottsym{,}  \alpha}{%
{\ottdrulename{WF\_ExtTyVar}}{}%
}}
\newcommand{\ottdruleTXXVar}[1]{\ottdrule[#1]{%
\ottpremise{ \vdash  \Gamma  \quad  \mathit{x} \,  \mathord{:}  \, \ottnt{A} \,  \in  \, \Gamma }%
}{
\Gamma  \vdash  \mathit{x}  \ottsym{:}  \ottnt{A}}{%
{\ottdrulename{T\_Var}}{}%
}}
\newcommand{\ottdruleTXXConst}[1]{\ottdrule[#1]{%
\ottpremise{\vdash  \Gamma}%
}{
\Gamma  \vdash  \ottnt{c}  \ottsym{:}   \mathit{ty}  (  \ottnt{c}  ) }{%
{\ottdrulename{T\_Const}}{}%
}}
\newcommand{\ottdruleTXXAbs}[1]{\ottdrule[#1]{%
\ottpremise{\Gamma  \ottsym{,}  \mathit{x} \,  \mathord{:}  \, \ottnt{A}  \vdash  \ottnt{M}  \ottsym{:}  \ottnt{B}}%
}{
\Gamma  \vdash   \lambda\!  \, \mathit{x}  \ottsym{.}  \ottnt{M}  \ottsym{:}  \ottnt{A}  \rightarrow  \ottnt{B}}{%
{\ottdrulename{T\_Abs}}{}%
}}
\newcommand{\ottdruleTXXApp}[1]{\ottdrule[#1]{%
\ottpremise{ \Gamma  \vdash  \ottnt{M_{{\mathrm{1}}}}  \ottsym{:}  \ottnt{A}  \rightarrow  \ottnt{B}  \quad  \Gamma  \vdash  \ottnt{M_{{\mathrm{2}}}}  \ottsym{:}  \ottnt{A} }%
}{
\Gamma  \vdash  \ottnt{M_{{\mathrm{1}}}} \, \ottnt{M_{{\mathrm{2}}}}  \ottsym{:}  \ottnt{B}}{%
{\ottdrulename{T\_App}}{}%
}}
\newcommand{\ottdruleTXXPair}[1]{\ottdrule[#1]{%
\ottpremise{ \Gamma  \vdash  \ottnt{M_{{\mathrm{1}}}}  \ottsym{:}  \ottnt{A}  \quad  \Gamma  \vdash  \ottnt{M_{{\mathrm{2}}}}  \ottsym{:}  \ottnt{B} }%
}{
\Gamma  \vdash  \ottsym{(}  \ottnt{M_{{\mathrm{1}}}}  \ottsym{,}  \ottnt{M_{{\mathrm{2}}}}  \ottsym{)}  \ottsym{:}   \ottnt{A}  \times  \ottnt{B} }{%
{\ottdrulename{T\_Pair}}{}%
}}
\newcommand{\ottdruleTXXProjOne}[1]{\ottdrule[#1]{%
\ottpremise{\Gamma  \vdash  \ottnt{M}  \ottsym{:}   \ottnt{A}  \times  \ottnt{B} }%
}{
\Gamma  \vdash  \pi_1  \ottnt{M}  \ottsym{:}  \ottnt{A}}{%
{\ottdrulename{T\_Proj1}}{}%
}}
\newcommand{\ottdruleTXXProjTwo}[1]{\ottdrule[#1]{%
\ottpremise{\Gamma  \vdash  \ottnt{M}  \ottsym{:}   \ottnt{A}  \times  \ottnt{B} }%
}{
\Gamma  \vdash  \pi_2  \ottnt{M}  \ottsym{:}  \ottnt{B}}{%
{\ottdrulename{T\_Proj2}}{}%
}}
\newcommand{\ottdruleTXXInL}[1]{\ottdrule[#1]{%
\ottpremise{ \Gamma  \vdash  \ottnt{M}  \ottsym{:}  \ottnt{A}  \quad  \Gamma  \vdash  \ottnt{B} }%
}{
\Gamma  \vdash  \mathsf{inl} \, \ottnt{M}  \ottsym{:}   \ottnt{A}  +  \ottnt{B} }{%
{\ottdrulename{T\_InL}}{}%
}}
\newcommand{\ottdruleTXXInR}[1]{\ottdrule[#1]{%
\ottpremise{ \Gamma  \vdash  \ottnt{M}  \ottsym{:}  \ottnt{B}  \quad  \Gamma  \vdash  \ottnt{A} }%
}{
\Gamma  \vdash  \mathsf{inr} \, \ottnt{M}  \ottsym{:}   \ottnt{A}  +  \ottnt{B} }{%
{\ottdrulename{T\_InR}}{}%
}}
\newcommand{\ottdruleTXXCase}[1]{\ottdrule[#1]{%
\ottpremise{ \Gamma  \vdash  \ottnt{M}  \ottsym{:}   \ottnt{A}  +  \ottnt{B}   \quad   \Gamma  \ottsym{,}  \mathit{x} \,  \mathord{:}  \, \ottnt{A}  \vdash  \ottnt{M_{{\mathrm{1}}}}  \ottsym{:}  \ottnt{C}  \quad  \Gamma  \ottsym{,}  \mathit{y} \,  \mathord{:}  \, \ottnt{B}  \vdash  \ottnt{M_{{\mathrm{2}}}}  \ottsym{:}  \ottnt{C}  }%
}{
\Gamma  \vdash  \mathsf{case} \, \ottnt{M} \, \mathsf{of} \, \mathsf{inl} \, \mathit{x}  \rightarrow  \ottnt{M_{{\mathrm{1}}}}  \ottsym{;} \, \mathsf{inr} \, \mathit{y}  \rightarrow  \ottnt{M_{{\mathrm{2}}}}  \ottsym{:}  \ottnt{C}}{%
{\ottdrulename{T\_Case}}{}%
}}
\newcommand{\ottdruleTXXNil}[1]{\ottdrule[#1]{%
\ottpremise{\Gamma  \vdash  \ottnt{A}}%
}{
\Gamma  \vdash  \mathsf{nil}  \ottsym{:}   \ottnt{A}  \, \mathsf{list} }{%
{\ottdrulename{T\_Nil}}{}%
}}
\newcommand{\ottdruleTXXCons}[1]{\ottdrule[#1]{%
\ottpremise{\Gamma  \vdash  \ottnt{M}  \ottsym{:}   \ottnt{A}  \times     \ottnt{A}  \, \mathsf{list}    }%
}{
\Gamma  \vdash  \mathsf{cons} \, \ottnt{M}  \ottsym{:}   \ottnt{A}  \, \mathsf{list} }{%
{\ottdrulename{T\_Cons}}{}%
}}
\newcommand{\ottdruleTXXCaseList}[1]{\ottdrule[#1]{%
\ottpremise{ \Gamma  \vdash  \ottnt{M}  \ottsym{:}   \ottnt{A}  \, \mathsf{list}   \quad   \Gamma  \vdash  \ottnt{M_{{\mathrm{1}}}}  \ottsym{:}  \ottnt{B}  \quad  \Gamma  \ottsym{,}  \mathit{x} \,  \mathord{:}  \,  \ottnt{A}  \times     \ottnt{A}  \, \mathsf{list}      \vdash  \ottnt{M_{{\mathrm{2}}}}  \ottsym{:}  \ottnt{B}  }%
}{
\Gamma  \vdash  \mathsf{case} \, \ottnt{M} \, \mathsf{of} \, \mathsf{nil} \, \rightarrow  \ottnt{M_{{\mathrm{1}}}}  \ottsym{;} \, \mathsf{cons} \, \mathit{x}  \rightarrow  \ottnt{M_{{\mathrm{2}}}}  \ottsym{:}  \ottnt{B}}{%
{\ottdrulename{T\_CaseList}}{}%
}}
\newcommand{\ottdruleTXXFix}[1]{\ottdrule[#1]{%
\ottpremise{\Gamma  \ottsym{,}  \mathit{f} \,  \mathord{:}  \, \ottnt{A}  \rightarrow  \ottnt{B}  \ottsym{,}  \mathit{x} \,  \mathord{:}  \, \ottnt{A}  \vdash  \ottnt{M}  \ottsym{:}  \ottnt{B}}%
}{
\Gamma  \vdash  \mathsf{fix} \, \mathit{f}  \ottsym{.}   \lambda\!  \, \mathit{x}  \ottsym{.}  \ottnt{M}  \ottsym{:}  \ottnt{A}  \rightarrow  \ottnt{B}}{%
{\ottdrulename{T\_Fix}}{}%
}}
\newcommand{\ottdruleTXXOp}[1]{\ottdrule[#1]{%
\ottpremise{ \mathit{ty} \, \ottsym{(}  \mathsf{op}  \ottsym{)} \,  =  \,   \text{\unboldmath$\forall$}  \,  \algeffseqover{ \alpha }   \ottsym{.} \,  \ottnt{A}  \hookrightarrow  \ottnt{B}   \quad   \Gamma  \vdash  \ottnt{M}  \ottsym{:}   \ottnt{A}    [   \algeffseqover{ \ottnt{C} }   \ottsym{/}   \algeffseqover{ \alpha }   ]    \quad  \Gamma  \vdash   \algeffseqover{ \ottnt{C} }   }%
}{
\Gamma  \vdash   \textup{\texttt{\#}\relax}  \mathsf{op}   \ottsym{(}   \ottnt{M}   \ottsym{)}   \ottsym{:}   \ottnt{B}    [   \algeffseqover{ \ottnt{C} }   \ottsym{/}   \algeffseqover{ \alpha }   ]  }{%
{\ottdrulename{T\_Op}}{}%
}}
\newcommand{\ottdruleTXXHandle}[1]{\ottdrule[#1]{%
\ottpremise{ \Gamma  \vdash  \ottnt{M}  \ottsym{:}  \ottnt{A}  \quad  \Gamma  \vdash  \ottnt{H}  \ottsym{:}  \ottnt{A}  \Rightarrow  \ottnt{B} }%
}{
\Gamma  \vdash  \mathsf{handle} \, \ottnt{M} \, \mathsf{with} \, \ottnt{H}  \ottsym{:}  \ottnt{B}}{%
{\ottdrulename{T\_Handle}}{}%
}}
\newcommand{\ottdruleTXXGen}[1]{\ottdrule[#1]{%
\ottpremise{\Gamma  \ottsym{,}  \alpha  \vdash  \ottnt{M}  \ottsym{:}  \ottnt{A}}%
}{
\Gamma  \vdash  \ottnt{M}  \ottsym{:}   \text{\unboldmath$\forall$}  \, \alpha  \ottsym{.} \, \ottnt{A}}{%
{\ottdrulename{T\_Gen}}{}%
}}
\newcommand{\ottdruleTXXInst}[1]{\ottdrule[#1]{%
\ottpremise{ \Gamma  \vdash  \ottnt{M}  \ottsym{:}  \ottnt{A}  \quad   \Gamma  \vdash  \ottnt{A}  \sqsubseteq  \ottnt{B}  \quad  \Gamma  \vdash  \ottnt{B}  }%
}{
\Gamma  \vdash  \ottnt{M}  \ottsym{:}  \ottnt{B}}{%
{\ottdrulename{T\_Inst}}{}%
}}
\newcommand{\ottdruleTHXXReturn}[1]{\ottdrule[#1]{%
\ottpremise{\Gamma  \ottsym{,}  \mathit{x} \,  \mathord{:}  \, \ottnt{A}  \vdash  \ottnt{M}  \ottsym{:}  \ottnt{B}}%
}{
\Gamma  \vdash  \mathsf{return} \, \mathit{x}  \rightarrow  \ottnt{M}  \ottsym{:}  \ottnt{A}  \Rightarrow  \ottnt{B}}{%
{\ottdrulename{TH\_Return}}{}%
}}
\newcommand{\ottdruleTHXXOp}[1]{\ottdrule[#1]{%
\ottpremise{ \Gamma  \vdash  \ottnt{H}  \ottsym{:}  \ottnt{A}  \Rightarrow  \ottnt{B}  \quad   \mathit{ty} \, \ottsym{(}  \mathsf{op}  \ottsym{)} \,  =  \,   \text{\unboldmath$\forall$}  \,  \algeffseqover{ \alpha }   \ottsym{.} \,  \ottnt{C}  \hookrightarrow  \ottnt{D}   \quad  \Gamma  \ottsym{,}   \algeffseqover{ \alpha }   \ottsym{,}  \mathit{x} \,  \mathord{:}  \, \ottnt{C}  \ottsym{,}  \mathit{k} \,  \mathord{:}  \, \ottnt{D}  \rightarrow  \ottnt{B}  \vdash  \ottnt{M}  \ottsym{:}  \ottnt{B}  }%
}{
\Gamma  \vdash  \ottnt{H}  \ottsym{;}  \mathsf{op}  \ottsym{(}  \mathit{x}  \ottsym{,}  \mathit{k}  \ottsym{)}  \rightarrow  \ottnt{M}  \ottsym{:}  \ottnt{A}  \Rightarrow  \ottnt{B}}{%
{\ottdrulename{TH\_Op}}{}%
}}
\newcommand{\ottdruleCXXRefl}[1]{\ottdrule[#1]{%
\ottpremise{\vdash  \Gamma}%
}{
\Gamma  \vdash  \ottnt{A}  \sqsubseteq  \ottnt{A}}{%
{\ottdrulename{C\_Refl}}{}%
}}
\newcommand{\ottdruleCXXTrans}[1]{\ottdrule[#1]{%
\ottpremise{ \Gamma  \vdash  \ottnt{A}  \sqsubseteq  \ottnt{C}  \quad  \Gamma  \vdash  \ottnt{C}  \sqsubseteq  \ottnt{B} }%
}{
\Gamma  \vdash  \ottnt{A}  \sqsubseteq  \ottnt{B}}{%
{\ottdrulename{C\_Trans}}{}%
}}
\newcommand{\ottdruleCXXFun}[1]{\ottdrule[#1]{%
\ottpremise{ \Gamma  \vdash  \ottnt{B_{{\mathrm{1}}}}  \sqsubseteq  \ottnt{A_{{\mathrm{1}}}}  \quad  \Gamma  \vdash  \ottnt{A_{{\mathrm{2}}}}  \sqsubseteq  \ottnt{B_{{\mathrm{2}}}} }%
}{
\Gamma  \vdash  \ottnt{A_{{\mathrm{1}}}}  \rightarrow  \ottnt{A_{{\mathrm{2}}}}  \sqsubseteq  \ottnt{B_{{\mathrm{1}}}}  \rightarrow  \ottnt{B_{{\mathrm{2}}}}}{%
{\ottdrulename{C\_Fun}}{}%
}}
\newcommand{\ottdruleCXXFunEff}[1]{\ottdrule[#1]{%
\ottpremise{ \Gamma  \vdash  \ottnt{B_{{\mathrm{1}}}}  \sqsubseteq  \ottnt{A_{{\mathrm{1}}}}  \quad  \Gamma  \vdash  \ottnt{A_{{\mathrm{2}}}}  \sqsubseteq  \ottnt{B_{{\mathrm{2}}}} }%
}{
\Gamma  \vdash   \ottnt{A_{{\mathrm{1}}}}   \rightarrow ^{ \epsilon }  \ottnt{A_{{\mathrm{2}}}}   \sqsubseteq   \ottnt{B_{{\mathrm{1}}}}   \rightarrow ^{ \epsilon }  \ottnt{B_{{\mathrm{2}}}} }{%
{\ottdrulename{C\_FunEff}}{}%
}}
\newcommand{\ottdruleCXXProd}[1]{\ottdrule[#1]{%
\ottpremise{ \Gamma  \vdash  \ottnt{A_{{\mathrm{1}}}}  \sqsubseteq  \ottnt{B_{{\mathrm{1}}}}  \quad  \Gamma  \vdash  \ottnt{A_{{\mathrm{2}}}}  \sqsubseteq  \ottnt{B_{{\mathrm{2}}}} }%
}{
\Gamma  \vdash   \ottnt{A_{{\mathrm{1}}}}  \times  \ottnt{A_{{\mathrm{2}}}}   \sqsubseteq   \ottnt{B_{{\mathrm{1}}}}  \times  \ottnt{B_{{\mathrm{2}}}} }{%
{\ottdrulename{C\_Prod}}{}%
}}
\newcommand{\ottdruleCXXSum}[1]{\ottdrule[#1]{%
\ottpremise{ \Gamma  \vdash  \ottnt{A_{{\mathrm{1}}}}  \sqsubseteq  \ottnt{B_{{\mathrm{1}}}}  \quad  \Gamma  \vdash  \ottnt{A_{{\mathrm{2}}}}  \sqsubseteq  \ottnt{B_{{\mathrm{2}}}} }%
}{
\Gamma  \vdash   \ottnt{A_{{\mathrm{1}}}}  +  \ottnt{A_{{\mathrm{2}}}}   \sqsubseteq   \ottnt{B_{{\mathrm{1}}}}  +  \ottnt{B_{{\mathrm{2}}}} }{%
{\ottdrulename{C\_Sum}}{}%
}}
\newcommand{\ottdruleCXXPoly}[1]{\ottdrule[#1]{%
\ottpremise{\Gamma  \ottsym{,}  \alpha  \vdash  \ottnt{A}  \sqsubseteq  \ottnt{B}}%
}{
\Gamma  \vdash   \text{\unboldmath$\forall$}  \, \alpha  \ottsym{.} \, \ottnt{A}  \sqsubseteq   \text{\unboldmath$\forall$}  \, \alpha  \ottsym{.} \, \ottnt{B}}{%
{\ottdrulename{C\_Poly}}{}%
}}
\newcommand{\ottdruleCXXList}[1]{\ottdrule[#1]{%
\ottpremise{\Gamma  \vdash  \ottnt{A}  \sqsubseteq  \ottnt{B}}%
}{
\Gamma  \vdash   \ottnt{A}  \, \mathsf{list}   \sqsubseteq   \ottnt{B}  \, \mathsf{list} }{%
{\ottdrulename{C\_List}}{}%
}}
\newcommand{\ottdruleCXXInst}[1]{\ottdrule[#1]{%
\ottpremise{\Gamma  \vdash  \ottnt{B}}%
}{
\Gamma  \vdash   \text{\unboldmath$\forall$}  \, \alpha  \ottsym{.} \, \ottnt{A}  \sqsubseteq   \ottnt{A}    [  \ottnt{B}  \ottsym{/}  \alpha  ]  }{%
{\ottdrulename{C\_Inst}}{}%
}}
\newcommand{\ottdruleCXXGen}[1]{\ottdrule[#1]{%
\ottpremise{ \vdash  \Gamma  \quad  \alpha \,  \not\in  \,  \mathit{ftv}  (  \ottnt{A}  )  }%
}{
\Gamma  \vdash  \ottnt{A}  \sqsubseteq   \text{\unboldmath$\forall$}  \, \alpha  \ottsym{.} \, \ottnt{A}}{%
{\ottdrulename{C\_Gen}}{}%
}}
\newcommand{\ottdruleCXXDFun}[1]{\ottdrule[#1]{%
\ottpremise{ \vdash  \Gamma  \quad  \alpha \,  \not\in  \,  \mathit{ftv}  (  \ottnt{A}  )  }%
}{
\Gamma  \vdash   \text{\unboldmath$\forall$}  \, \alpha  \ottsym{.} \, \ottnt{A}  \rightarrow  \ottnt{B}  \sqsubseteq  \ottnt{A}  \rightarrow   \text{\unboldmath$\forall$}  \, \alpha  \ottsym{.} \, \ottnt{B}}{%
{\ottdrulename{C\_DFun}}{}%
}}
\newcommand{\ottdruleCXXDFunEff}[1]{\ottdrule[#1]{%
\ottpremise{ \vdash  \Gamma  \quad   \alpha \,  \not\in  \,  \mathit{ftv}  (  \ottnt{A}  )   \quad  \mathit{SR} \, \ottsym{(}  \epsilon  \ottsym{)}  }%
}{
\Gamma  \vdash   \text{\unboldmath$\forall$}  \, \alpha  \ottsym{.} \,    \ottnt{A}   \rightarrow ^{ \epsilon }  \ottnt{B}     \sqsubseteq   \ottnt{A}   \rightarrow ^{ \epsilon }   \text{\unboldmath$\forall$}  \, \alpha  \ottsym{.} \, \ottnt{B} }{%
{\ottdrulename{C\_DFunEff}}{}%
}}
\newcommand{\ottdruleCXXDProd}[1]{\ottdrule[#1]{%
\ottpremise{\vdash  \Gamma}%
}{
\Gamma  \vdash   \text{\unboldmath$\forall$}  \, \alpha  \ottsym{.} \,    \ottnt{A}  \times  \ottnt{B}     \sqsubseteq   \ottsym{(}   \text{\unboldmath$\forall$}  \, \alpha  \ottsym{.} \, \ottnt{A}  \ottsym{)}  \times  \ottsym{(}   \text{\unboldmath$\forall$}  \, \alpha  \ottsym{.} \, \ottnt{B}  \ottsym{)} }{%
{\ottdrulename{C\_DProd}}{}%
}}
\newcommand{\ottdruleCXXDSum}[1]{\ottdrule[#1]{%
\ottpremise{\vdash  \Gamma}%
}{
\Gamma  \vdash   \text{\unboldmath$\forall$}  \, \alpha  \ottsym{.} \,    \ottnt{A}  +  \ottnt{B}     \sqsubseteq   \ottsym{(}   \text{\unboldmath$\forall$}  \, \alpha  \ottsym{.} \, \ottnt{A}  \ottsym{)}  +  \ottsym{(}   \text{\unboldmath$\forall$}  \, \alpha  \ottsym{.} \, \ottnt{B}  \ottsym{)} }{%
{\ottdrulename{C\_DSum}}{}%
}}
\newcommand{\ottdruleCXXDList}[1]{\ottdrule[#1]{%
\ottpremise{\vdash  \Gamma}%
}{
\Gamma  \vdash   \text{\unboldmath$\forall$}  \, \alpha  \ottsym{.} \,    \ottnt{A}  \, \mathsf{list}     \sqsubseteq   \ottsym{(}   \text{\unboldmath$\forall$}  \, \alpha  \ottsym{.} \, \ottnt{A}  \ottsym{)}  \, \mathsf{list} }{%
{\ottdrulename{C\_DList}}{}%
}}
\newcommand{\ottdruleTeXXVar}[1]{\ottdrule[#1]{%
\ottpremise{ \vdash  \Gamma  \quad  \mathit{x} \,  \mathord{:}  \, \ottnt{A} \,  \in  \, \Gamma }%
}{
\Gamma  \vdash  \mathit{x}  \ottsym{:}  \ottnt{A} \,  |  \, \epsilon}{%
{\ottdrulename{Te\_Var}}{}%
}}
\newcommand{\ottdruleTeXXConst}[1]{\ottdrule[#1]{%
\ottpremise{\vdash  \Gamma}%
}{
\Gamma  \vdash  \ottnt{c}  \ottsym{:}   \mathit{ty}  (  \ottnt{c}  )  \,  |  \, \epsilon}{%
{\ottdrulename{Te\_Const}}{}%
}}
\newcommand{\ottdruleTeXXAbs}[1]{\ottdrule[#1]{%
\ottpremise{\Gamma  \ottsym{,}  \mathit{x} \,  \mathord{:}  \, \ottnt{A}  \vdash  \ottnt{M}  \ottsym{:}  \ottnt{B} \,  |  \, \epsilon'}%
}{
\Gamma  \vdash   \lambda\!  \, \mathit{x}  \ottsym{.}  \ottnt{M}  \ottsym{:}   \ottnt{A}   \rightarrow ^{ \epsilon' }  \ottnt{B}  \,  |  \, \epsilon}{%
{\ottdrulename{Te\_Abs}}{}%
}}
\newcommand{\ottdruleTeXXApp}[1]{\ottdrule[#1]{%
\ottpremise{ \Gamma  \vdash  \ottnt{M_{{\mathrm{1}}}}  \ottsym{:}   \ottnt{A}   \rightarrow ^{ \epsilon' }  \ottnt{B}  \,  |  \, \epsilon  \quad   \Gamma  \vdash  \ottnt{M_{{\mathrm{2}}}}  \ottsym{:}  \ottnt{A} \,  |  \, \epsilon  \quad  \epsilon' \,  \subseteq  \, \epsilon  }%
}{
\Gamma  \vdash  \ottnt{M_{{\mathrm{1}}}} \, \ottnt{M_{{\mathrm{2}}}}  \ottsym{:}  \ottnt{B} \,  |  \, \epsilon}{%
{\ottdrulename{Te\_App}}{}%
}}
\newcommand{\ottdruleTeXXPair}[1]{\ottdrule[#1]{%
\ottpremise{ \Gamma  \vdash  \ottnt{M_{{\mathrm{1}}}}  \ottsym{:}  \ottnt{A} \,  |  \, \epsilon  \quad  \Gamma  \vdash  \ottnt{M_{{\mathrm{2}}}}  \ottsym{:}  \ottnt{B} \,  |  \, \epsilon }%
}{
\Gamma  \vdash  \ottsym{(}  \ottnt{M_{{\mathrm{1}}}}  \ottsym{,}  \ottnt{M_{{\mathrm{2}}}}  \ottsym{)}  \ottsym{:}   \ottnt{A}  \times  \ottnt{B}  \,  |  \, \epsilon}{%
{\ottdrulename{Te\_Pair}}{}%
}}
\newcommand{\ottdruleTeXXProjOne}[1]{\ottdrule[#1]{%
\ottpremise{\Gamma  \vdash  \ottnt{M}  \ottsym{:}   \ottnt{A}  \times  \ottnt{B}  \,  |  \, \epsilon}%
}{
\Gamma  \vdash  \pi_1  \ottnt{M}  \ottsym{:}  \ottnt{A} \,  |  \, \epsilon}{%
{\ottdrulename{Te\_Proj1}}{}%
}}
\newcommand{\ottdruleTeXXProjTwo}[1]{\ottdrule[#1]{%
\ottpremise{\Gamma  \vdash  \ottnt{M}  \ottsym{:}   \ottnt{A}  \times  \ottnt{B}  \,  |  \, \epsilon}%
}{
\Gamma  \vdash  \pi_2  \ottnt{M}  \ottsym{:}  \ottnt{B} \,  |  \, \epsilon}{%
{\ottdrulename{Te\_Proj2}}{}%
}}
\newcommand{\ottdruleTeXXInL}[1]{\ottdrule[#1]{%
\ottpremise{ \Gamma  \vdash  \ottnt{M}  \ottsym{:}  \ottnt{A} \,  |  \, \epsilon  \quad  \Gamma  \vdash  \ottnt{B} }%
}{
\Gamma  \vdash  \mathsf{inl} \, \ottnt{M}  \ottsym{:}   \ottnt{A}  +  \ottnt{B}  \,  |  \, \epsilon}{%
{\ottdrulename{Te\_InL}}{}%
}}
\newcommand{\ottdruleTeXXInR}[1]{\ottdrule[#1]{%
\ottpremise{ \Gamma  \vdash  \ottnt{M}  \ottsym{:}  \ottnt{B} \,  |  \, \epsilon  \quad  \Gamma  \vdash  \ottnt{A} }%
}{
\Gamma  \vdash  \mathsf{inr} \, \ottnt{M}  \ottsym{:}   \ottnt{A}  +  \ottnt{B}  \,  |  \, \epsilon}{%
{\ottdrulename{Te\_InR}}{}%
}}
\newcommand{\ottdruleTeXXCase}[1]{\ottdrule[#1]{%
\ottpremise{ \Gamma  \vdash  \ottnt{M}  \ottsym{:}   \ottnt{A}  +  \ottnt{B}  \,  |  \, \epsilon  \quad   \Gamma  \ottsym{,}  \mathit{x} \,  \mathord{:}  \, \ottnt{A}  \vdash  \ottnt{M_{{\mathrm{1}}}}  \ottsym{:}  \ottnt{C} \,  |  \, \epsilon  \quad  \Gamma  \ottsym{,}  \mathit{y} \,  \mathord{:}  \, \ottnt{B}  \vdash  \ottnt{M_{{\mathrm{2}}}}  \ottsym{:}  \ottnt{C} \,  |  \, \epsilon  }%
}{
\Gamma  \vdash  \mathsf{case} \, \ottnt{M} \, \mathsf{of} \, \mathsf{inl} \, \mathit{x}  \rightarrow  \ottnt{M_{{\mathrm{1}}}}  \ottsym{;} \, \mathsf{inr} \, \mathit{y}  \rightarrow  \ottnt{M_{{\mathrm{2}}}}  \ottsym{:}  \ottnt{C} \,  |  \, \epsilon}{%
{\ottdrulename{Te\_Case}}{}%
}}
\newcommand{\ottdruleTeXXNil}[1]{\ottdrule[#1]{%
\ottpremise{\Gamma  \vdash  \ottnt{A}}%
}{
\Gamma  \vdash  \mathsf{nil}  \ottsym{:}   \ottnt{A}  \, \mathsf{list}  \,  |  \, \epsilon}{%
{\ottdrulename{Te\_Nil}}{}%
}}
\newcommand{\ottdruleTeXXCons}[1]{\ottdrule[#1]{%
\ottpremise{\Gamma  \vdash  \ottnt{M}  \ottsym{:}   \ottnt{A}  \times     \ottnt{A}  \, \mathsf{list}     \,  |  \, \epsilon}%
}{
\Gamma  \vdash  \mathsf{cons} \, \ottnt{M}  \ottsym{:}   \ottnt{A}  \, \mathsf{list}  \,  |  \, \epsilon}{%
{\ottdrulename{Te\_Cons}}{}%
}}
\newcommand{\ottdruleTeXXCaseList}[1]{\ottdrule[#1]{%
\ottpremise{ \Gamma  \vdash  \ottnt{M}  \ottsym{:}   \ottnt{A}  \, \mathsf{list}  \,  |  \, \epsilon  \quad   \Gamma  \vdash  \ottnt{M_{{\mathrm{1}}}}  \ottsym{:}  \ottnt{B} \,  |  \, \epsilon  \quad  \Gamma  \ottsym{,}  \mathit{x} \,  \mathord{:}  \,  \ottnt{A}  \times     \ottnt{A}  \, \mathsf{list}      \vdash  \ottnt{M_{{\mathrm{2}}}}  \ottsym{:}  \ottnt{B} \,  |  \, \epsilon  }%
}{
\Gamma  \vdash  \mathsf{case} \, \ottnt{M} \, \mathsf{of} \, \mathsf{nil} \, \rightarrow  \ottnt{M_{{\mathrm{1}}}}  \ottsym{;} \, \mathsf{cons} \, \mathit{x}  \rightarrow  \ottnt{M_{{\mathrm{2}}}}  \ottsym{:}  \ottnt{B} \,  |  \, \epsilon}{%
{\ottdrulename{Te\_CaseList}}{}%
}}
\newcommand{\ottdruleTeXXFix}[1]{\ottdrule[#1]{%
\ottpremise{\Gamma  \ottsym{,}  \mathit{f} \,  \mathord{:}  \,  \ottnt{A}   \rightarrow ^{ \epsilon }  \ottnt{B}   \ottsym{,}  \mathit{x} \,  \mathord{:}  \, \ottnt{A}  \vdash  \ottnt{M}  \ottsym{:}  \ottnt{B} \,  |  \, \epsilon}%
}{
\Gamma  \vdash  \mathsf{fix} \, \mathit{f}  \ottsym{.}   \lambda\!  \, \mathit{x}  \ottsym{.}  \ottnt{M}  \ottsym{:}   \ottnt{A}   \rightarrow ^{ \epsilon }  \ottnt{B}  \,  |  \, \epsilon'}{%
{\ottdrulename{Te\_Fix}}{}%
}}
\newcommand{\ottdruleTeXXOp}[1]{\ottdrule[#1]{%
\ottpremise{ \mathit{ty} \, \ottsym{(}  \mathsf{op}  \ottsym{)} \,  =  \,   \text{\unboldmath$\forall$}  \,  \algeffseqover{ \alpha }   \ottsym{.} \,  \ottnt{A}  \hookrightarrow  \ottnt{B}   \quad   \mathsf{op} \,  \in  \, \epsilon  \quad   \Gamma  \vdash  \ottnt{M}  \ottsym{:}   \ottnt{A}    [   \algeffseqover{ \ottnt{C} }   \ottsym{/}   \algeffseqover{ \alpha }   ]   \,  |  \, \epsilon  \quad  \Gamma  \vdash   \algeffseqover{ \ottnt{C} }    }%
}{
\Gamma  \vdash   \textup{\texttt{\#}\relax}  \mathsf{op}   \ottsym{(}   \ottnt{M}   \ottsym{)}   \ottsym{:}   \ottnt{B}    [   \algeffseqover{ \ottnt{C} }   \ottsym{/}   \algeffseqover{ \alpha }   ]   \,  |  \, \epsilon}{%
{\ottdrulename{Te\_Op}}{}%
}}
\newcommand{\ottdruleTeXXHandle}[1]{\ottdrule[#1]{%
\ottpremise{ \Gamma  \vdash  \ottnt{M}  \ottsym{:}  \ottnt{A} \,  |  \, \epsilon  \quad  \Gamma  \vdash  \ottnt{H}  \ottsym{:}  \ottnt{A} \,  |  \, \epsilon  \Rightarrow  \ottnt{B} \,  |  \, \epsilon' }%
}{
\Gamma  \vdash  \mathsf{handle} \, \ottnt{M} \, \mathsf{with} \, \ottnt{H}  \ottsym{:}  \ottnt{B} \,  |  \, \epsilon'}{%
{\ottdrulename{Te\_Handle}}{}%
}}
\newcommand{\ottdruleTeXXGen}[1]{\ottdrule[#1]{%
\ottpremise{ \Gamma  \ottsym{,}  \alpha  \vdash  \ottnt{M}  \ottsym{:}  \ottnt{A} \,  |  \, \epsilon  \quad  \mathit{SR} \, \ottsym{(}  \epsilon  \ottsym{)} }%
}{
\Gamma  \vdash  \ottnt{M}  \ottsym{:}   \text{\unboldmath$\forall$}  \, \alpha  \ottsym{.} \, \ottnt{A} \,  |  \, \epsilon}{%
{\ottdrulename{Te\_Gen}}{}%
}}
\newcommand{\ottdruleTeXXInst}[1]{\ottdrule[#1]{%
\ottpremise{ \Gamma  \vdash  \ottnt{M}  \ottsym{:}  \ottnt{A} \,  |  \, \epsilon  \quad   \Gamma  \vdash  \ottnt{A}  \sqsubseteq  \ottnt{B}  \quad  \Gamma  \vdash  \ottnt{B}  }%
}{
\Gamma  \vdash  \ottnt{M}  \ottsym{:}  \ottnt{B} \,  |  \, \epsilon}{%
{\ottdrulename{Te\_Inst}}{}%
}}
\newcommand{\ottdruleTeXXWeak}[1]{\ottdrule[#1]{%
\ottpremise{ \Gamma  \vdash  \ottnt{M}  \ottsym{:}  \ottnt{A} \,  |  \, \epsilon'  \quad  \epsilon' \,  \subseteq  \, \epsilon }%
}{
\Gamma  \vdash  \ottnt{M}  \ottsym{:}  \ottnt{A} \,  |  \, \epsilon}{%
{\ottdrulename{Te\_Weak}}{}%
}}
\newcommand{\ottdruleTHeXXReturn}[1]{\ottdrule[#1]{%
\ottpremise{ \Gamma  \ottsym{,}  \mathit{x} \,  \mathord{:}  \, \ottnt{A}  \vdash  \ottnt{M}  \ottsym{:}  \ottnt{B} \,  |  \, \epsilon'  \quad  \epsilon \,  \subseteq  \, \epsilon' }%
}{
\Gamma  \vdash  \mathsf{return} \, \mathit{x}  \rightarrow  \ottnt{M}  \ottsym{:}  \ottnt{A} \,  |  \, \epsilon  \Rightarrow  \ottnt{B} \,  |  \, \epsilon'}{%
{\ottdrulename{THe\_Return}}{}%
}}
\newcommand{\ottdruleTHeXXOp}[1]{\ottdrule[#1]{%
\ottpremise{ \Gamma  \vdash  \ottnt{H}  \ottsym{:}  \ottnt{A} \,  |  \, \epsilon  \Rightarrow  \ottnt{B} \,  |  \, \epsilon'  \quad   \mathit{ty} \, \ottsym{(}  \mathsf{op}  \ottsym{)} \,  =  \,   \text{\unboldmath$\forall$}  \,  \algeffseqover{ \alpha }   \ottsym{.} \,  \ottnt{C}  \hookrightarrow  \ottnt{D}   \quad  \Gamma  \ottsym{,}   \algeffseqover{ \alpha }   \ottsym{,}  \mathit{x} \,  \mathord{:}  \, \ottnt{C}  \ottsym{,}  \mathit{k} \,  \mathord{:}  \,  \ottnt{D}   \rightarrow ^{ \epsilon' }  \ottnt{B}   \vdash  \ottnt{M}  \ottsym{:}  \ottnt{B} \,  |  \, \epsilon'  }%
}{
\Gamma  \vdash  \ottnt{H}  \ottsym{;}  \mathsf{op}  \ottsym{(}  \mathit{x}  \ottsym{,}  \mathit{k}  \ottsym{)}  \rightarrow  \ottnt{M}  \ottsym{:}  \ottnt{A} \,  |  \, \epsilon \,  \mathbin{\uplus}  \, \ottsym{\{}  \mathsf{op}  \ottsym{\}}  \Rightarrow  \ottnt{B} \,  |  \, \epsilon'}{%
{\ottdrulename{THe\_Op}}{}%
}}
\newenvironment{caseanalysis}
{\begin{description}[leftmargin=1.2em]}
{\end{description}}
\def\case#1:{\item[\textmd{Case {#1}:}]}
\def\otherwise:{\item[\textmd{Otherwise:}]}
\newcommand{\reffig}[1]{Figure~\ref{fig:#1}}
\newcommand{\refsec}[1]{Section~\ref{sec:#1}}
\newcommand{\reflem}[1]{Lemma~\ref{lem:#1}}
\newcommand{\refasm}[1]{Assumption~\ref{assum:#1}}
\newcommand{\refdef}[1]{Definition~\ref{def:#1}}
\newcommand{\refprop}[1]{Proposition~\ref{prop:#1}}
\newcommand{\lang}{$\lambda_\text{eff}$}
\newcommand{\extlang}{$\lambda_\text{eff}^\text{ext}$}
\newcommand{\handlewith}{\texttt{\textbf{handle}}--\texttt{\textbf{with}}}
\newcommand{\handlewithsf}{\textsf{handle}--\textsf{with}}
\newcommand{\emptytype}{\ensuremath{\mathsf{zero}}}
\definecolor{gray96}{gray}{.9}
\newcommand{\textgray}[1]{\textcolor{gray}{#1}}
\newtheorem{defn}{Definition}
\newtheorem{assum}{Assumption}
\newtheorem{conv}{Convention}
\newtheorem{prop}{Proposition}
\newtheorem{lemm}{Lemma}
\newenvironment{lemmap}[2]
{\begin{lemm}[#1] \label{lem:#2} \noindent}
{\end{lemm}}
\newenvironment{lemma}[1]
{\begin{lemm} \label{lem:#1} \noindent}
{\end{lemm}}
\newtheorem{thm}{Theorem}
\theoremstyle{acmdefinition}
\newtheorem{remark}{Remark}
\newcommand{\Rule}[2]{\ensuremath{\text{({\sc{{#1}\_{#2}}})}}}
\newcommand{\RulewoP}[2]{\ensuremath{\text{{\sc{{#1}\_{#2}}}}}}
\newcommand{\Srule}[1]{\Rule{C}{#1}}
\newcommand{\R}[1]{\Rule{R}{#1}}
\newcommand{\RwoP}[1]{\RulewoP{R}{#1}}
\newcommand{\E}[1]{\Rule{E}{#1}}
\newcommand{\T}[1]{\Rule{T}{#1}}
\newcommand{\THrule}[1]{\Rule{TH}{#1}}
\newcommand{\Te}[1]{\Rule{Te}{#1}}
\newcommand{\TS}[1]{\ifdraft\textcolor{red}{TS:#1}\fi}
\newcommand{\AI}[1]{\ifdraft\textcolor{blue}{AI:#1}\fi}
\newcommand{\defeq}{\stackrel{\rm \tiny def}{=}}
\renewcommand{\ottdrule}[4][]{{\displaystyle\frac{\begin{array}{c}#2\end{array}}{#3}\ \ottdrulename{#4}}}
\begin{document}

\title{Signature Restriction for Polymorphic Algebraic Effects}


\author{Taro Sekiyama}
\orcid{0000-0001-9286-230X}             
\affiliation{
  \institution{National Institute of Informatics \& SOKENDAI}            
  \country{Japan}                    
}
\email{tsekiyama@acm.org}          

\author{Takeshi Tsukada}
\orcid{0000-0002-2824-8708}             
\affiliation{
  \institution{The University of Tokyo}           
  \country{Japan}                   
}
\email{tsukada@kb.is.s.u-tokyo.ac.jp}         

\author{Atsushi Igarashi}
\orcid{0000-0002-5143-9764}
\affiliation{
  \institution{Kyoto University}
  \country{Japan}
}
\email{igarashi@kuis.kyoto-u.ac.jp}

\begin{abstract}
 The naive combination of polymorphic effects and polymorphic type assignment
 has been well known to break type safety.  Existing approaches to
 this problem are classified into two groups: one for restricting how effects
 are triggered and the other for restricting how they are implemented.  This
 work explores a new approach to ensuring the safety of polymorphic effects in polymorphic
 type assignment.  A novelty of our work lies in finding a restriction on
 \emph{effect interfaces}.  To formalize our idea, we employ algebraic effects
 and handlers, where an effect interface is given by a set of operations coupled
 with type signatures.  We propose \emph{signature restriction}, a new notion to
 restrict the type signatures of operations, and show that signature restriction
 is sufficient to ensure type safety of an effectful language equipped with
 unrestricted polymorphic type assignment.  We also develop a type-and-effect
 system to enable the use of both operations that satisfy and do not satisfy
 the signature restriction in a single program.
\end{abstract}

\begin{CCSXML}
<ccs2012>
  <concept>
    <concept_id>10011007.10011006.10011008.10011009.10011012</concept_id>
    <concept_desc>Software and its engineering~Functional languages</concept_desc>
    <concept_significance>500</concept_significance>
  </concept>
  <concept>
    <concept_id>10011007.10011006.10011008.10011024.10011025</concept_id>
    <concept_desc>Software and its engineering~Polymorphism</concept_desc>
    <concept_significance>500</concept_significance>
  </concept>
  <concept>
    <concept_id>10011007.10011006.10011008.10011024.10011027</concept_id>
    <concept_desc>Software and its engineering~Control structures</concept_desc>
    <concept_significance>500</concept_significance>
  </concept>
  <concept>
    <concept_id>10011007.10011006.10011039</concept_id>
    <concept_desc>Software and its engineering~Formal language definitions</concept_desc>
    <concept_significance>500</concept_significance>
  </concept>
</ccs2012>
\end{CCSXML}

\ccsdesc[500]{Software and its engineering~Functional languages}
\ccsdesc[500]{Software and its engineering~Polymorphism}
\ccsdesc[500]{Software and its engineering~Control structures}
\ccsdesc[500]{Software and its engineering~Formal language definitions}

\keywords{polymorphic type assignment, polymorphic effects, algebraic effects and handlers}  

\maketitle

\TS{
\begin{itemize}
 \item Discuss undecidability of type containment (and a polymorphic type system).
       We will prepare a surface language for programmers. --> Remark 2.
 \item Mention \Srule{Sum} in \refsec{ext}.
 \item Domain type --> argument type, codomain type --> return type
 \item Reconsider the external URL where the artifact is.  What site should we use?
 \item Does the supplementary material in publishing contains the full definition and proofs?
\end{itemize}
}

\section{Introduction}

\subsection{Background: Polymorphic Type Assignment with Computational Effects}

Pervasive in programming are computational effects, such as mutable memory
cells, backtracking, exception handling, concurrency/parallelism, and I/O
processing for terminals, files, networks, etc.  These effects have a variety of roles:
I/O processing enables interaction with external environments; memory
manipulation and concurrency/parallelism make software efficient; and
backtracking and exception provide reusable, general operations that make
it unnecessary to write boilerplate code.  These effects have also been proven convenient
in functional
programming~\cite{Goron/Milner/Wadsworth_1979_book,Wadler_1992_POPL,Peython-Jones/Wadler_1993_POPL}.
%

In return for convenience, however, computational effects can introduce weird, counterintuitive behavior
into programs and complicate program reasoning and verification.  For
example, incorporating effects into dependent type theory could
easily lead to inconsistency~\cite{Pedrot/Tabareau_2020_POPL}.  This fact
encourages dependent type systems to separate term-level computation from types~\cite{Xi_2007_JFP,Casinghino/Sjoberg/Weirich_2014_POPL,Swamy-et.al_2016_POPL,Sekiyama/Igarashi_2017_POPL,Ahman_2017_PhD,Cong/Asai_2018_ICFP}.
For program reasoning, the state transition caused by
effectful computations has to be tracked~\cite{Pitts/stark_1998_openfl,Ahmed/Dreyer/Rossberg_2009_POPL,Dereyer/Neis/Birkeda_2010_ICFP}.

These kinds of gaps between pure and effectful computations are also found in
our target, i.e., polymorphic type assignment:
%
%
although any pure expressions can safely be assigned polymorphic types~\cite{Leivant_1983_POPL},
unrestricted polymorphic type assignment to effectful expressions may break type
safety.  This problem with polymorphic type assignment has been
discovered in call-by-value languages with \emph{polymorphic effects}, which are
effects caused by polymorphic operations.  For example, ML-style references
are an instance of polymorphic effects because the operations for memory cell
creation, assignment, and dereference are polymorphic~\cite{Milner/Tofte/Harper_1990_SML,OCaml}.
\citet{Goron/Milner/Wadsworth_1979_book} showed that the ML-style references
cannot cooperate safely with unrestricted polymorphic type assignment owing to
the polymorphism of the operations.  Another example is control effects, which are
triggered by control operators such as
\textsf{call/cc}~\cite{Clinger/Friedman/Wand_1985} and
\textsf{shift/reset}~\cite{Danvy/Filinski_1990_LFP}.  These operators can be
assigned polymorphic types but the polymorphic control operators may cause
unsafe behavior in unrestricted polymorphic type
assignment~\cite{Harper/Lillibridge_1993_LSC}.
This fault even occurs in let-polymorphic type
assignment~\cite{Milner_1978_JCSS} where quantifiers only appear at the
outermost positions.

Many approaches to the safe use of polymorphic
effects in polymorphic type
assignment have been proposed~\cite{Tofte_1990_IC,Leroy/Weis_1991_POPL,Appel/MacQueen_1991_PLILP,Hoang/Mitchell/Viswanathan_1993_LICS,Wright_1995_LSC,Garrigue_2004_FLOPS,Asai/Kameyama_2007_APLAS,Kammar/Pretnar_2017_JFP,Sekiyama/Igarashi_2019_ESOP}.
These approaches are classified into two groups.  The first group---to which most of the
approaches belong---aims at restricting \emph{how effects are triggered}.  For
example, the value restriction~\cite{Tofte_1990_IC} restricts
polymorphic expressions to be only values in order to prevent
polymorphic expressions from triggering effects.  The other group aims at
restricting \emph{how effects are implemented}.  For example,
\citet{Sekiyama/Igarashi_2019_ESOP} proposed a type system that accepts only effects
that are \emph{safe}, i.e., that do not cause programs to get stuck no matter how they are
used.


\subsection{Our Work}

This work explores a new approach to safe polymorphic type assignment for
effectful call-by-value languages.  A novelty of our approach lies in
restriction on \emph{effect interfaces}.  In this work, the effect interfaces are represented by
sets of \emph{operations} coupled with \emph{type signatures}.  For example, an
interface for exceptions consists of a single operation \texttt{raise} to raise
an exception and its type signature $  \text{\unboldmath$\forall$}  \, \alpha  \ottsym{.} \,   \mathsf{unit}   \hookrightarrow  \alpha $, which means that
\texttt{raise} takes the unit value as an argument and returns a value of any
type $\alpha$ if the control gets back to the caller at all.  Quantification in the
signature not only provides the clients of the operation with flexibility---they can
instantiate $\alpha$ with any desired type and put a call of \texttt{raise} in any context---but also constrains its
servers in that implementations of the operation have to abstract over types.  In
fact, the type signature of \texttt{raise} is sufficiently restrictive to guarantee
that the exception effect is safe.
Generalizing this idea, we provide a criterion to decide if an effect is safe.
Our criterion is \emph{simple} in that it only mentions the occurrences of bound
type variables $\alpha$ in a type signature, \emph{robust} in that it is
independent of how effects are implemented, and \emph{permissive} in that it is
met by many safe effects---including exception, nondeterminism, and input
streaming.
We call the restriction on type signatures to meet the criterion
\emph{signature restriction}.

We formalize our idea with algebraic effects and
handlers~\cite{Plotkin/Pretnar_2009_ESOP,Plotkin/Pretnar_2013_LMCS}, which are a
programming mechanism to accommodate user-defined control effects in a modular
way.  Algebraic effects and handlers split an effect into an interface (i.e., a
set of operations with type signatures) and an interpretation, so
we can incorporate signature restriction into them
naturally.


%

We provide two polymorphic type assignment systems for a $\lambda$-calculus
equipped with algebraic effects and handlers.  The first is a simple polymorphic type
system based on Curry-style System~F~\cite{Leivant_1983_POPL} (i.e., it supports
implicit, full polymorphism).  This type system allows arbitrary terms (rather than only
values) that invoke effects to be given polymorphic types but is sound, thanks to
signature restriction.  The minimality of this simple type system reveals
the essence of signature restriction.  The second type assignment
system is a polymorphic type-and-effect system.  Using this system, we show that
effect tracking is key to apply signature restriction for programs in which both
safe and potentially unsafe polymorphic effects may happen.\footnote{As
we will show in the paper, signature restriction is permissive and
actually we find no useful effect that invalidates it.  However, the universal
enforcement of signature restriction \emph{may} give rise to inconvenience
in some cases, and we consider the capability of avoiding such (potential)
inconvenience important in designing a general-purpose programming language.}

The contributions of our work are summarized as follows.
\begin{itemize}
 \item We define a $\lambda$-calculus {\lang} with algebraic effects and
       handlers and provide a type system that supports implicit full polymorphism
       and allows any effectful expression to be polymorphic.  We formalize
       signature restriction for {\lang} and prove soundness of the type system
       under the assumption that all operations satisfy signature restriction.

 \item As a technical development to justify signature restriction, we equip the
       type system with Mitchell's type containment~\cite{Mitchell_1988_IC},
       which is an extension of type instantiation.
       In the
       literature~\cite{Peython-Jones/Vytiniotis/Weirich/Shields_2007_JFP,Dunfield/Krishnaswami_2013_ICFP},
       the proof of type soundness of a calculus equipped with type containment
       rests on translation to another calculus, such as
       System~F~\cite{Reynolds_1974_PS,Girard_1972_PhD}.\footnote{The
       translation inserts, as a replacement for type containment, functions that
       are computationally meaningless but work as type conversion statically.}
       By contrast, we show
       soundness of our type system \emph{directly}, i.e., without translation
       to any other calculus.  As far as we know, this is the first work that
       achieves it.

 \item We extend {\lang} and its type system with standard programming features
       such as products, sums, and lists to demonstrate the generality and
       extensibility of signature restriction.

 \item We develop an effect system for {\lang}, which enables a single program
       to use both safe and potentially unsafe polymorphic effects.  In this effect system, an
       expression can be polymorphic if all the effect operations performed
       by the expression satisfy signature restriction.  It also indicates that
       signature restriction can cooperate with value restriction naturally.
\end{itemize}

We employ implicit full polymorphism and type containment to show type
soundness, but either of them makes even type checking
undecidable~\cite{Wells_1994_LICS,Tiuryn/Urzyczyn_1996_LICS}.  It is thus
desirable to identify a subset of our system where type checking---and type
inference as well hopefully---is decidable.  To prove the feasibility of this
idea, we implement an interpreter for a subset of the extended {\lang} in which polymorphism is
restricted to let-polymorphism~\cite{Milner_1978_JCSS,Damas/Milner_1982_POPL}
(the effect system is not supported either).
This restriction on polymorphism ensures that both type checking and type inference
are decidable but it is still expressive so that all of the motivating well-typed
examples in this paper (except for those in \refsec{eff}, which rest on the
effect system) are typechecked.  The implementation is provided as the
supplementary material; alternatively, it can also be found at:
\url{https://github.com/skymountain/MLSR}~.

Finally, we briefly relate our work with the \emph{relaxed} value
restriction~\cite{Garrigue_2004_FLOPS} here.  It is similar to our
signature restriction in that both utilize the occurrences of type variables to
ensure soundness of polymorphic type assignment in the permissive use of
polymorphic effects.  Indeed, a \emph{strong} version of signature restriction
can be justified similarly to the relaxed value restriction.  The strong
signature restriction is, however, too restrictive and rejects many useful, safe
effects.  We generalize it to what we call signature restriction and prove its
correctness with different techniques such as type
containment.  Readers are referred to \refsec{relwork:restrict} for further details.

\paragraph{Organization.}
The remainder of this paper is organized as follows.
We start with an overview of this work (\refsec{overview}) and then define our
base calculus {\lang} (\refsec{lang}).  \refsec{polytype} introduces a
polymorphic type system for {\lang}, formalizes signature restriction, and shows
soundness of the polymorphic type system under the assumption that all
operations satisfy signature restriction.  \refsec{ext} extends {\lang}, the
polymorphic type system, and signature restriction with products, sums, and
lists.  \refsec{eff} presents an effect system to allow programs to use both safe and unsafe effects.
We finally discuss related work in
\refsec{relwork} and conclude in \refsec{conclusion}.

In this paper, we may omit the formal definitions of some well-known notions
and the statements and proofs of auxiliary lemmas for type
soundness.  The full definitions, the full statements, and the full proofs are
provided in \ifappendix{Appendix}\else{the supplementary material}\fi.

\section{Overview}
\label{sec:overview}

This section presents an overview of our work.  After reviewing
algebraic effects and handlers, their extension to polymorphic
effects, and why a naive extension results in unsoundness, we describe
our approach of signature restriction and informally discuss why it
resolves the unsoundness problem.  All program examples in this paper follow
ML-like syntax.

\subsection{Review: Algebraic Effects and Handlers}
\label{sec:overview:algeff}

Algebraic effects and
handlers~\cite{Plotkin/Pretnar_2009_ESOP,Plotkin/Pretnar_2013_LMCS} are a
mechanism that enables users to define their own effects.  They are successfully
able to separate the syntax and semantics of effects.  The syntax of an
effect is given by a set of \emph{operations}, which are used to trigger the
effect.  For example, exception is triggered by the operation \texttt{raise} and
store manipulation is triggered by \texttt{put} and \texttt{get}, which are used to write to
and read from a store, respectively.  The semantics is given by
\emph{handlers}, which decide how to interpret operations performed by effectful
computation.

Our running example is nondeterministic computation which enumerates
all of the possible outcomes~\cite{Plotkin/Pretnar_2009_ESOP,Plotkin/Pretnar_2013_LMCS}.  This computation utilizes two
operations: \texttt{select}, which chooses an element from a given
list, and \texttt{fail}, which signals that the current control flow
is undesired and the computation should abort.\footnote{This
  describes only \emph{intended} semantics; one can also give an
  \emph{unintended} handler, e.g., one that always returns an integer 42 for a
  call of \texttt{select}.  Certain unintended handlers can be excluded in a
  polymorphic setting, as is shown in \refsec{overview:polyeff}.}
\lstinputlisting[numbers=left,xleftmargin=1.3\parindent]{source/mono_filter.ml}

The first two lines declare the operations \texttt{select} and
\texttt{fail}, which have the type signatures \texttt{int list
  $\hookrightarrow$ int} and \texttt{unit $\hookrightarrow$ unit},
respectively.  A type signature $ \ottnt{A}  \hookrightarrow  \ottnt{B} $ of an operation
signifies that the operation is called with an argument of type
$\ottnt{A}$ and, when the control gets back to the caller, it receives a
value of $\ottnt{B}$.  We refer to $\ottnt{A}$ and $\ottnt{B}$ as the \emph{domain type} and
\emph{codomain type}, respectively.\footnote{The domain and codomain types are
also called the \emph{parameter type} and the \emph{arity},
respectively~\cite{Plotkin/Pretnar_2009_ESOP}.}

The function \texttt{filter} in Lines~4--11 operates \texttt{select}
and \texttt{fail} to filter out the elements of \texttt{l} that do not
meet a given predicate \texttt{f}.  Now, let's take a closer look at
the body of the function, which consists of a single {\handlewith}
expression of the form \texttt{\textbf{handle}} $\ottnt{M}$
\texttt{\textbf{with}} $\ottnt{H}$.  This expression installs a \emph{handler}
$\ottnt{H}$ during the evaluation of $\ottnt{M}$, which we refer to as the
\emph{handled expression}.

The handled expression (Lines~6--7) chooses an integer selected from
\texttt{l} by calling \texttt{select}, tests whether the selected
integer \texttt{x} satisfies \texttt{f}, and returns \texttt{x} if
\texttt{f x} is true; otherwise, it aborts the computation by calling
\texttt{fail}.
We write $ \textup{\texttt{\#}\relax}  \mathsf{op}   \ottsym{(}   \ottnt{M}   \ottsym{)} $ to call operation $\mathsf{op}$ with argument $\ottnt{M}$.
%
%
We now explain the handler in Lines~9--11, which collects all the
values in \texttt{l} that satisfy \texttt{f} as a list, along with
an intuitive meaning of {\handlewith} expressions.

The handler $\ottnt{H}$ in \texttt{\textbf{handle}} $\ottnt{M}$
\texttt{\textbf{with}} $\ottnt{H}$ consists of a
single \emph{return clause} and zero or more \emph{operation clauses}.
The return clause takes the form \texttt{\textbf{return} x $\rightarrow$
  $\ottnt{M}$} and computes the entire result $\ottnt{M}$ of the {\handlewith}
expression using the value of the handled expression, which $\ottnt{M}$
refers to by \texttt{x}.  For example, the return clause in this
example is \texttt{\textbf{return} z $\rightarrow$ [z]}.  Because
\texttt{z} will be bound to the result of the handled expression
\texttt{x}, the entire result is the singleton list consisting of
\texttt{x}.  An operation clause of the form \texttt{op x $\rightarrow$
  $\ottnt{M}$} for an operation \texttt{op} decides how to interpret the
operation \texttt{op} called by the handled expression.  Variable \texttt{x}
will be bound to the argument of the call of \texttt{op} and $\ottnt{M}$
is the entire result of the {\handlewith} expression.  For example,
the operation clause \texttt{fail z $\rightarrow$ [$\,$]} means that, if
\texttt{fail} is called, the computation is aborted---similarly to exception
handling---and the entire {\handlewith} expression returns the empty
list, meaning there is no result that satisfies \texttt{$\mathit{f}$}.


%

Unlike exception handling, which discards the continuation of where an
exception is raised, however, handlers can \emph{resume} computation
from the point at which the operation was called.  The ability to resume a
computation suspended by the operation call provides algebraic
effects and handlers with the expressive power to implement control effects~\cite{Bauer/Pretnar_2015_JLAMP,Leijen_2017_POPL,Forster/Kammar/Lindley/Pretnar_2019_JFP}.
In our language, we use the expression \texttt{\textbf{resume}} $\ottnt{M}$ to
resume the computation of the handled expression with the value of $\ottnt{M}$.
\TS{Can this paragraph be improved?}

The operation clause for \texttt{select} enumerates all the possible
outcomes by using \texttt{\textbf{resume}}.  The clause first returns,
for each integer \texttt{y} of a given list \texttt{l}, the integer
\texttt{y} to the caller of \texttt{select} by resuming the
computation from the point at which \texttt{select} was called.
The handled expression in the example calls \texttt{select} only once, so each
resumed computation (which is performed under the same handler) returns either a
singleton list or the empty list (by calling \texttt{fail}).
The next step after the completion of all the resumed computations is to concatenate
their results.  The two steps are expressed by \texttt{concat (map l
($\lambda$y. \textbf{resume} y))}.


More formally, the suspended computation is expressed as a
\emph{delimited continuation}~\cite{Felleisen_1988_POPL,Danvy/Filinski_1990_LFP}, and \texttt{\textbf{resume}} simply
invokes it.
For example, let us consider evaluating
\lstinline[mathescape]{filter [3; 5; 10] ($\lambda$x. x mod 2 = 1)}
in the last line.  This reduces to the following expression:
\begin{lstlisting}
  handle
    let x = #select([3; 5; 10]) in
    let _ = if ((/$\lambda$/)x. x mod 2 = 1) x then () else #fail() in x
  with (/$H$/)
\end{lstlisting}
where $H$ denotes the same handler as that in the example.
At the call of \texttt{select}, the run-time system constructs the following
delimited continuation $c$
\begin{center}
\begin{tabular}{ccc}
 $c$ & $\defeq$ &
\begin{lstlisting}
handle                    
  let x = (/$[]$/) in
  let _ = if ((/$\lambda$/)x. x mod 2 = 1) x then () else #fail() in x
with (/$H$/)
\end{lstlisting}
\end{tabular}
\end{center}
(where $[]$ is the hole to be filled with resumption arguments), and
then evaluates the operation clause for \texttt{select}.  The resumption
expression in the operation clause invokes the delimited continuation $c$
after filling the hole with an integer in list \texttt{[3; 5; 10]}.  For the
case of filling the hole with \texttt{3}, the remaining computation $c[3]$ to
resume is:
\AI{The notation $c[M]$ has been removed.  Do we need it?}
\begin{lstlisting}
  handle
    let x = 3 in
    let _ = if ((/$\lambda$/)x. x mod 2 = 1) x then () else #fail() in x
  with (/$H$/) (/./)
\end{lstlisting}
Because \texttt{3} is an odd number, it satisfies the predicate
\texttt{($\lambda$x. x \textbf{mod} 2 = 1)}, and therefore the final result of this computation
is the singleton list \texttt{[3]}. The case of \texttt{5} behaves similarly
and produces \texttt{[5]}.  In the case of \texttt{10}, because the even number
\texttt{10} does not meet the given predicate, the remaining computation
$c[\texttt{10}]$ would call \texttt{fail} and, from the operation clause for
\texttt{fail}, the final result of $c[\texttt{10}]$ would be the empty list.
The operation clause for \texttt{select} concatenates all of these resulting
lists of the resumptions and finally returns \texttt{[3; 5]}.  This is
the behavior that we expect of \texttt{filter} exactly.

\TS{Can we say an operation is able to be called twice or more times?}  The
handler in the example works even when \texttt{select} is called more than once,
e.g.:
\begin{lstlisting}
  handle
    let x = #select([2; 3]) in
    let y = #select([10; 20]) in
    let z = x * y in
    let _ = if z > 50 then #fail() else () in z
  with (/$H$/) (/./)
\end{lstlisting}
This program
returns a list of the values of the handled expression that are computed with
$(\texttt{x}, \texttt{y}) \in \{ \texttt{2}, \texttt{3} \} \times \{
\texttt{10}, \texttt{20} \} $ such that the multiplication \texttt{x * y} does not
exceed \texttt{50}.

\paragraph{Typechecking.}
We also review the procedure to typecheck an operation clause \texttt{op x $\rightarrow$
$\ottnt{M}$} for \texttt{op} of type signature $\ottnt{A} \hookrightarrow \ottnt{B}$.  Since
the operation \texttt{op} is called with an argument of $\ottnt{A}$, the
typechecking assigns argument variable \texttt{x} type $\ottnt{A}$.  As the value
of $\ottnt{M}$ is the result of the entire {\handlewith} expression, the
typechecking checks $\ottnt{M}$ to have the same type as the other clauses including
the return clause.  The typechecking of resumption expressions
\texttt{\textbf{resume}} $\ottnt{M'}$ is performed as follows.  Since the value of
$\ottnt{M'}$ will be used as a result of calling \texttt{op} in a handled
expression, $\ottnt{M'}$ has to be of the type $\ottnt{B}$, the codomain type of the type
signature of \texttt{op}.  On the other hand, since the resumption expression
returns the evaluation result of the entire {\handlewith} expression, the
typechecking assumes it to have the same type as all of the clauses in the
handler.

For example, let us consider the typechecking of the operation clause for
\texttt{select} in the function \texttt{filter}.  Since the type signature of
\texttt{select} is \texttt{int list $\hookrightarrow$ int}, the variable
\texttt{l} is assigned the type \texttt{int list}.
Here, we suppose \texttt{map} and \texttt{concat} to have the following types:
\begin{center}
 \begin{tabular}{ccl}
  \texttt{map}   & : &
   \texttt{int list $\rightarrow$ (int $\rightarrow$ int list) $\rightarrow$ int list list} \\
  \texttt{concat} & : &
   \texttt{int list list $\rightarrow$ int list}
 \end{tabular}
\end{center}
(these types can be inferred automatically).
The type of \texttt{map} requires that the arguments \texttt{l} and
\lstinline[mathescape]{$\lambda$y.resume y} have the types \texttt{int list} and
\texttt{int $\rightarrow$ int list}, respectively, and they \emph{do} indeed.
The requirement to \texttt{l} is met by the type assigned to \texttt{l}.
We can derive that \lstinline[mathescape]{$\lambda$y.resume y} has type
\texttt{int $\rightarrow$ int list} as follows: first, the typechecking assigns
the bound variable \texttt{y} type \texttt{int} and checks \lstinline{resume y}
to have \texttt{int list}.  An argument of a resumption expression has to be of
the type \texttt{int}, which is the codomain type of the type signature, and \texttt{y}
has that type indeed. Then, the typechecking assumes that \lstinline{resume y}
has the same type as the clauses of the handler, which is the type \texttt{int
list}.  Thus, \lstinline[mathescape]{$\lambda$y.resume y} has the desired type.

\subsection{Polymorphic Effects}
\label{sec:overview:polyeff}
Polymorphic effects are a particular kind of effects that incorporate
polymorphism,\footnote{Another way to incorporate polymorphism is \emph{parameterized} effects, where the
declaration of an operation is parameterized over types~\cite{Wadler_1992_POPL}.} providing a
set of operations with \emph{polymorphic} type signatures.  We also call such
operations \emph{polymorphic}.

For example, we can assign \texttt{select} and \texttt{fail} polymorphic
signatures and write the program as follows:
\lstinputlisting[numbers=left,xleftmargin=1.3\parindent]{source/poly_nondet.ml}
This program evaluates to the list \texttt{[2; 3; 20]} (\texttt{30} is
filtered out by the call of \texttt{fail}).

Polymorphic type signatures enable operation calls with arguments of different
types.  For example, \texttt{\#select([2; 3])} and
\texttt{\#select([true; false])} are legal operation calls that instantiate the
bound type variable $\alpha$ of the type signature with \texttt{int} and
\texttt{bool}, respectively.  The calls of the same operation are
handled by the same operation clause, even if the calls involve different type
instantiations.  It is also interesting to see that the use of polymorphic type signatures makes
programs more natural and succinct:  Thanks to its polymorphic
codomain type, a call to \texttt{fail} can be put anywhere, making it
possible to put \texttt{x} in the \text{else}-branch, unlike the
monomorphic case.

Another benefit of polymorphic type signatures is that they contribute to the
exclusion of undesired operation implementations.  For example, the polymorphic
signature of \texttt{fail} ensures that, once we call \texttt{fail}, the control
\emph{never} gets back and that of \texttt{select} ensures that no other values
than elements in an argument list are chosen.
Parametricity~\cite{Reynolds_IFIP_1983} enables formal reasoning for this; readers are referred to
\citet{Biernacki/Pirog/Polesiuk/Sieczkowski_2020_POPL} for parametricity with
the support for polymorphic algebraic effects and handlers.

\subsection{(Naive) Polymorphic Typechecking and Its Unsoundness}
\label{sec:overview:unsafety}

(Naive) typechecking of operation clauses for polymorphic operations is obtained by
extending the monomorphic setting.  The only difference is that the operation clauses have
to abstract over types.  Namely, an operation clause \texttt{op x $\rightarrow$
$\ottnt{M}$} for \texttt{op} of polymorphic type signature $\forall \alpha.\, \ottnt{A}
\hookrightarrow \ottnt{B}$ is typechecked as follows.  The typechecking process
allocates a fresh type variable $\alpha$, which is bound in $\ottnt{M}$, and assigns
variable \texttt{x} type $\ottnt{A}$ (which may refer to the bound type variable
$\alpha$).  Resumption expressions \lstinline{resume} $\ottnt{M'}$ in $\ottnt{M}$ are
typechecked as in the monomorphic setting; that is, the typechecking checks
$\ottnt{M'}$ to be of $\ottnt{B}$ (which may refer to $\alpha$) and assumes the
resumption expressions to have the same type as the clauses in the handler.
Finally, the typechecking checks whether $\ottnt{M}$ is of the same type as the other
clauses in the handler.
It is easy to see that the polymorphic version of the \texttt{select} and
\texttt{fail} example typechecks.


However, this naive extension is unsound.  In what follows, we revisit the counterexample given
by \citet{Sekiyama/Igarashi_2019_ESOP}, which is an analogue to that found by
\citet{Harper/Lillibridge_1993_LSC,Harper/Lillibridge_1991_types} with
\textsf{call/cc}~\cite{Clinger/Friedman/Wand_1985}.
\lstinputlisting[numbers=left,xleftmargin=1.3\parindent]{source/counterexample.ml}

We first check that this program is well typed.  The handled expression first
binds the variable \texttt{f} to the result returned by \texttt{get\_id}.
In polymorphic type assignment, we can assign a polymorphic type $\forall \alpha.\,
\alpha \rightarrow \alpha$ to \texttt{f} by allocating a fresh type variable
$\alpha$, instantiating the type signature of \texttt{get\_id} with $\alpha$,
and generalizing $\alpha$ finally.  The polymorphic type of \texttt{f} allows
viewing \texttt{f} both as a function of the type \texttt{bool $\rightarrow$ bool} and
of the type \texttt{int $\rightarrow$ int}.  Thus, the handled expression is well typed.
Turning to the operation clause, since the type signature of \texttt{get\_id} is
\texttt{$\forall \alpha.$ unit $\hookrightarrow \alpha \rightarrow \alpha$},
typechecking first allocates a fresh type variable $\alpha$ and assigns the argument
variable \texttt{x} the type \texttt{unit}.  The signature also requires the
arguments of the resumption expressions to have the type $\alpha \rightarrow \alpha$, and
both arguments \texttt{$\lambda$z1. $...$ z1} and
\texttt{$\lambda$z2. z1} do indeed.  The latter function is typed at $\alpha
\rightarrow \alpha$ because the requirement to the former ensures that \texttt{z1} has
$\alpha$.  Thus, the entire program is well typed.

\TS{Remove the following?}

However, this program gets stuck.  The evaluation starts with the call of
\texttt{get\_id} in the handled expression.  It constructs the following
delimited continuation:
\begin{center}
\begin{tabular}{ccc}
 $c$ & $\defeq$ &
\begin{lstlisting}
  handle
    let f = (/$[]$/) in
    if (f true) then ((f 0) + 1) else 2
  with
    return x (/$\rightarrow$/) x
    get_id x (/$\rightarrow$/) resume ((/$\lambda$/)z1. let _ = resume ((/$\lambda$/)z2. z1) in z1) (/$.$/)
\end{lstlisting}
\end{tabular}
\end{center}
The run-time system then replaces the resumption expressions with the invocation
of the delimited continuation.  Namely, the entire program evaluates to
\begin{center}
\begin{tabular}{ccc}
 $M$ & $\defeq$ &
\begin{lstlisting}
(/$c[\lambda$/)z1. let _ = (/$c[\lambda$/)z2. z1(/$]$/) in z1(/$]$/) (/$.$/)
\end{lstlisting}
\end{tabular}
\end{center}
The evaluation of $M$ proceeds as follows.

\begin{tabular}{ccl}
 $M$ & $=$ &
\begin{lstlisting}
handle
  let f = ((/$\lambda$/)z1. let _ = (/$c[\lambda$/)z2. z1(/$]$/) in z1) in
  if (f true) then ((f 0) + 1) else 2
with (/$...$/)
\end{lstlisting} \\
     & $\longrightarrow$ &
\begin{lstlisting}
handle if ((/$\lambda$/)z1. let _ = (/$c[\lambda$/)z2. z1(/$]$/) in z1) true then (/$...$/) with (/$...$/)
\end{lstlisting} \\
     & $\longrightarrow$ &
\begin{lstlisting}
handle if (let _ = (/$c[\lambda$/)z2. true(/$]$/) in true) then (/$...$/) with (/$...$/)
\end{lstlisting}
\end{tabular} \\[.5ex]

\noindent
Subsequently, the term \lstinline[mathescape]{$c[\lambda$z2. true$]$} is to be
evaluated.  The delimited continuation $c$ expects the hole to be
filled with a polymorphic function of $\forall \alpha.\, \alpha \rightarrow
\alpha$ but the function \lstinline[mathescape]{$\lambda$z2. true} is
\emph{not} polymorphic.  As a result, the term gets stuck:
\begin{center}
\begin{tabular}{ccl}
 \texttt{$c[\lambda$z2. \textbf{true}$]$}
& $=$ &
\begin{lstlisting}
handle
  let f = (/$\lambda$/)z2. true in
  if (f true) then ((f 0) + 1) else 2
with (/$...$/)
\end{lstlisting} \\
     & $\longrightarrow^*$ &
\begin{lstlisting}
handle (((/$\lambda$/)z2. true) 0) + 1 with (/$...$/)
\end{lstlisting} \\
     & $\longrightarrow$ &
\begin{lstlisting}
handle true + 1 with (/$...$/)
\end{lstlisting} \\
\end{tabular}
\end{center}

A standard approach to this problem is to restrict operation calls in
polymorphic
expressions~\cite{Tofte_1990_IC,Leroy/Weis_1991_POPL,Appel/MacQueen_1991_PLILP,Hoang/Mitchell/Viswanathan_1993_LICS,Wright_1995_LSC,Garrigue_2004_FLOPS,Asai/Kameyama_2007_APLAS}.
While this kind of approach prevents \texttt{\#get\_id()} from having a
polymorphic type, it disallows calls of any polymorphic operation inside
polymorphic expressions even when the calls are safe; refer to
\citet{Sekiyama/Igarashi_2019_ESOP} for further discussion.
\citet{Sekiyama/Igarashi_2019_ESOP} have proposed a complementary approach to
this problem, that is, restricting, by typing, the handler of a polymorphic
operation, instead of restricting handled expressions.


\subsection{Our Work: Signature Restriction}
\label{sec:overview:our-work}

\AI{We don't use $<:$.}
\TS{''Typable'' --> ``typeable''}
This work takes a new approach to ensuring the safety of any call of an operation.
Instead of restricting how it is used or implemented, we restrict its
type signature: An operation \texttt{op} : $\forall \alpha.\, A \hookrightarrow
B$ should not have a ``bad'' occurrence of $\alpha$ in $A$ and $B$.  We refer to
this restriction as \emph{signature restriction}.

To see how the signature restriction works, let us explain why type preservation
is not easy to prove with the following example, where type abstraction
$\Lambda \beta.\, M$ and type application $M\{A\}$ are explicit for the ease of
readability:
\begin{center}
\begin{tabular}{c}
\begin{lstlisting}
handle let f = (/$\Lambda \beta.\,$/)#op(/$\{\beta\}$/)((/$v$/)) in (/$M$/) with (/$H$/) (/$.$/)
\end{lstlisting}
\end{tabular}
\end{center}
Here, we suppose the type signature of \texttt{op} to be $\forall \alpha.\, A
\hookrightarrow B$.
Notice that the type variable $\alpha$ in the signature $\forall \alpha.\, A \hookrightarrow B$ is instantiated to $\beta$, which is locally bound by $\Lambda \beta$.
Handling of operation \texttt{op} constructs the following delimited continuation:
\begin{center}
\begin{tabular}{ccc}
 $c$ & $\defeq$ &
\begin{lstlisting}
handle let f = (/$\Lambda \beta.\,[]$/) in (/$M$/) with (/$H$/) (/$.$/)
\end{lstlisting}
\end{tabular}
\end{center}
The problem is that an appropriate type cannot be assigned to it under the typing
context of the handler $H$: the type of the hole should be $B[\beta/\alpha]$,
but the type variable $\beta$ is not in the scope of $H$.  This is a
kind of \emph{scope extrusion}.  We have focused on the scope extrusion
via the continuation, but the operation argument $v$ may cause a similar problem
when its type $A[\beta/\alpha]$ contains $\beta$.

This analysis suggests that instantiating polymorphic operations with
\emph{closed types}, i.e., types without free type variables (especially $\beta$
here), is safe because then the types of the hole and the operation argument
should not contain $\beta$ and, thus, the continuation and the argument could be
typed under the typing context of $H$.\footnote{More precisely, the argument may
contain free type variables even when its type does not.  However, we could address
this situation successfully by eliminating them with closing type substitution
as in \citet{Sekiyama/Igarashi_2019_ESOP}.}
However, allowing only instantiation with closed types is too restrictive.  For
example, it even disallows a function wrapping \texttt{select}, \texttt{$\lambda
x.\,$\texttt{\#select}($x$)}, to have a polymorphic type {$\forall \alpha.\,
\alpha \, \texttt{list} \rightarrow \alpha$} because, for the function to have
this type, the bound type variable of the type signature of \texttt{select} has to be
instantiated with a non-closed type $\alpha$.

As another approach to addressing the scope extrusion, we introduce \emph{strong
signature restriction}, which requires that, for each polymorphic operation
\texttt{op} : $\forall \alpha.\, A \hookrightarrow B$, the type variable
$\alpha$ occur \emph{only negatively} in $A$ and \emph{only positively} in $B$.
This is a sufficient condition to prove type preservation.
Consider, for example, the expression
\begin{flushleft}
 \qquad\qquad
\begin{tabular}{ccc}
  $M_1$ & $\defeq$ &
\begin{lstlisting}
handle let f = (/$\Lambda \beta_1 \dots \beta_n.\,$/)#op(/$\{C\}$/)((/$v$/)) in (/$M$/) with (/$H$/)
\end{lstlisting}
\end{tabular}
\end{flushleft}
where $v$ is a value and $C$ is a type with free type variables $\beta_1, \dots, \beta_n$.
The idea is to rewrite this expression, immediately before the call of \texttt{op}, to
\begin{flushleft}
 \qquad\qquad
\begin{tabular}{ccc}
  $M_1'$ & $\defeq$ &
\begin{lstlisting}
handle let f = (/$\Lambda \beta_1 \dots \beta_n.\,$/)#op(/$\{ \, \graybox{\forall \beta_1 \dots \beta_n.} \, C\}$/)((/$v$/)) in (/$M$/) with (/$H$/)
\end{lstlisting}
\end{tabular}
\end{flushleft}
(where the rewritten part is shaded).
In $M_1'$, because the type variable $\alpha$ in
\texttt{op} : $\forall \alpha.\, A \hookrightarrow B$ is instantiated with a
closed type $\forall \beta_1 \dots \beta_n.\, C$, this operation call should be
safe provided that $M_1'$ is well typed.  This expression is indeed typable
if the strong signature restriction is enforced, as seen below.
To ensure that $M_1'$ is typable, we need to have
\begin{center}
\begin{tabular}{r@{\ }c@{\ }ll}
 $v$ &:& $A[\forall \beta_1 \dots \beta_n.\,C/\alpha]$ & (for typing \texttt{\#op} $\{\forall \beta_1 \dots \beta_n.\, C\}$\texttt{(}$v$\texttt{)}) \\
 \texttt{\#op} $\{\forall \beta_1 \dots \beta_n.\, C\}$\texttt{(}$v$\texttt{)} &:& $B[C/\alpha]$ & (for type preservation) ~.
\end{tabular}
\end{center}
To this end, we employ \emph{type containment}~\cite{Mitchell_1988_IC}, which
is also known as ``subtyping for second-order
types''~\cite{Tiuryn/Urzyczyn_1996_LICS}.
Type containment $ \sqsubseteq $ accepts the following key judgments:
\begin{align*}
  A[C/\alpha] & \sqsubseteq  A[\forall \beta_1 \dots \beta_n.\,C/\alpha] \\
  B[\forall \beta_1 \dots \beta_n.\,C/\alpha] & \sqsubseteq  B[C/\alpha]\ ,
\end{align*}
which follow from the acceptance of type instantiation $(\forall \beta_1 \dots \beta_n.\,C)  \sqsubseteq  C$ and the strong signature restriction which assumes that $\alpha$ occurs only negatively in $A$ and only positively in $B$.
Since $M_1$ is typable, we have $v : A[C/\alpha]$ and, by subsumption, $v : A[\forall \beta_1 \dots \beta_n.\,C/\alpha]$.
Therefore, the operation
\texttt{\#op$\{\forall \beta_1 \dots \beta_n.\, C\}$}
is applicable to $v$ and we have
\texttt{\#op$\{\forall \beta_1 \dots \beta_n.\, C\}$($v$)${}: B[\forall \beta_1 \dots \beta_n.\,C/\alpha]$}.
Again, by subsumption,
\texttt{\#op$\{\forall \beta_1 \dots \beta_n.\, C\}$($v$)${}: B[C/\alpha]$}
as desired.
%
Therefore, $M_1'$ is also typable.
Note that the translation from $M_1$ to $M_1'$ does not change the underlying untyped term, but only the types of (sub)expressions; hence, if $M'_1$ does not get stuck, neither does $M_1$.

However, the strong signature restriction is still unsatisfactory in that the type signatures of many operations do not conform to it.
For example, the signature of \texttt{select} : $\forall \alpha.\, \alpha \, \texttt{list} \hookrightarrow \alpha$ in \refsec{overview:polyeff} does not satisfy the requirements for the strong signature restriction;
it disallows positive occurrences of a bound type variable in the left-hand side of $\hookrightarrow$.

\emph{Signature restriction} is a relaxation of the strong signature restriction, allowing the type variable $\alpha$ in the signature $\forall \alpha.\, A \hookrightarrow B$ to occur at \emph{strictly positive} positions in $A$ in addition to negative positions.
The proof of type preservation in this generalized case is essentially the same
as above, but we need an additional type containment rule, known as the distributive law:
\[
 \forall \alpha.\, A \rightarrow B   \sqsubseteq  A \rightarrow \forall \alpha.\, B \quad \text{(if $\alpha$ does not occur free in $A$)}
\]
to derive type containment judgments such as those derived above.
The type signature of \texttt{select} conforms to this relaxed condition---$\alpha$ only
occurs at a strictly positive position in the domain type $\alpha \,
\texttt{list}$; therefore, we can ensure the safety of the operation call of
\texttt{select} in polymorphic expressions.

The signature restriction is a reasonable relaxation in that it rejects unsafe
operations as expected.  For example, \texttt{get\_id} does not conform to the
signature restriction because, in its type signature \texttt{$\forall \alpha.$
unit $\hookrightarrow \alpha \rightarrow \alpha$}, the bound type variable
$\alpha$ occurs negatively in the codomain type $\alpha \rightarrow \alpha$.

\AI{Comparison with \citet{Sekiyama/Igarashi_2019_ESOP} has been dropped.  We may want to move the paragraph to somewhere else.}

\section{A $\lambda$-Calculus with Algebraic Effects and Handlers}
\label{sec:lang}
This section defines the syntax and semantics of our base language {\lang}, a
$\lambda$-calculus extended with algebraic effects and handlers.  They are based
on those of the core calculus of the language Koka~\cite{Leijen_2017_POPL}.  The
only difference is that the Koka core calculus is equipped with let-expressions
whereas {\lang} is not because we focus on implicit full polymorphism, rather than only on
let-polymorphism.
We will
present a polymorphic type system for {\lang} that takes into account signature
restriction in \refsec{polytype}.  We also extend {\lang} and the polymorphic type
system with products, sums, and lists in \refsec{ext} and provide an effect system
for the extended language in \refsec{eff}.
Signature restriction differentiates these systems from the typing discipline of
Koka.  Besides, contrary to Koka's effect system, which is row-based, our effect system is
not; refer to \refsec{eff} for detail.

\subsection{Syntax}

\begin{figure}[t]
 \[
 \begin{array}{lll}
  \multicolumn{3}{l}{
   \textbf{Variables} \quad \mathit{x}, \mathit{y}, \mathit{z}, \mathit{f}, \mathit{k} \qquad
   \textbf{Effect operations} \quad \mathsf{op}
  } \\[.5ex]
  \textbf{Constants} \quad \ottnt{c} & ::= &
    \mathsf{true}  \mid  \mathsf{false}  \mid  0  \mid  \mathsf{+}  \mid ... \\[.5ex]
  \textbf{Terms} \quad \ottnt{M} & ::= &
   \mathit{x} \mid \ottnt{c} \mid
    \lambda\!  \, \mathit{x}  \ottsym{.}  \ottnt{M} \mid \ottnt{M_{{\mathrm{1}}}} \, \ottnt{M_{{\mathrm{2}}}} \mid
    \textup{\texttt{\#}\relax}  \mathsf{op}   \ottsym{(}   \ottnt{M}   \ottsym{)}  \mid
   \mathsf{handle} \, \ottnt{M} \, \mathsf{with} \, \ottnt{H}
   \\[.5ex]
  \textbf{Handlers} \quad \ottnt{H} & ::= &
   \mathsf{return} \, \mathit{x}  \rightarrow  \ottnt{M} \mid \ottnt{H}  \ottsym{;}  \mathsf{op}  \ottsym{(}  \mathit{x}  \ottsym{,}  \mathit{k}  \ottsym{)}  \rightarrow  \ottnt{M}
   \\[.5ex]
  \textbf{Values} \quad \ottnt{v} & ::= &
   \ottnt{c} \mid  \lambda\!  \, \mathit{x}  \ottsym{.}  \ottnt{M}
   \\[.5ex]
  \textbf{Evaluation contexts} \quad
   \ottnt{E} & ::= &  []  \mid
                 \ottnt{E} \, \ottnt{M_{{\mathrm{2}}}} \mid \ottnt{v_{{\mathrm{1}}}} \, \ottnt{E} \mid
                  \textup{\texttt{\#}\relax}  \mathsf{op}   \ottsym{(}   \ottnt{E}   \ottsym{)}  \mid \mathsf{handle} \, \ottnt{E} \, \mathsf{with} \, \ottnt{H}
   \\[.5ex]
 \end{array}
 \]
 \caption{Syntax of {\lang}.}
 \label{fig:syntax}
\end{figure}

\reffig{syntax} presents the syntax of {\lang}.  We use the metavariables $\mathit{x}$,
$\mathit{y}$, $\mathit{z}$, $\mathit{f}$, $\mathit{k}$ for variables and $\mathsf{op}$
for effect operations.  Our language {\lang} is parameterized over constants,
which are ranged over by $\ottnt{c}$ and may include basic values, such as Boolean
and integer values, and basic operations for them, such as \textsf{not}, $+$, $-$,
\textsf{mod}, etc.

Terms, ranged over by $\ottnt{M}$, are from the $\lambda$-calculus, augmented with
constructors for algebraic effects and handlers.  They are composed of:
variables; constants; lambda abstractions $ \lambda\!  \, \mathit{x}  \ottsym{.}  \ottnt{M}$, where variable $\mathit{x}$ is
bound in $\ottnt{M}$; function applications $\ottnt{M_{{\mathrm{1}}}} \, \ottnt{M_{{\mathrm{2}}}}$; operation calls $ \textup{\texttt{\#}\relax}  \mathsf{op}   \ottsym{(}   \ottnt{M}   \ottsym{)} $
with arguments $\ottnt{M}$; and {\handlewithsf} expressions $\mathsf{handle} \, \ottnt{M} \, \mathsf{with} \, \ottnt{H}$,
which install handler $\ottnt{H}$ to interpret effect operations performed by
$\ottnt{M}$.  A resumption expression \lstinline[mathescape]{resume $\ottnt{M}$} that
appears in \refsec{overview} is the syntactic sugar of function application $\mathit{k} \, \ottnt{M}$ where $\mathit{k}$ is a variable that denotes delimited continuations and is
introduced by an operation clause in a handler (we will see the definition of operation
clauses shortly).  The definition of evaluation contexts, ranged over by $\ottnt{E}$, is standard; it indicates that the semantics of {\lang} is call-by-value and terms
are evaluated from left to right.

Handlers, ranged over by $\ottnt{H}$, consist of a single return clause and
zero or more operation clauses.
A return clause takes the form $\mathsf{return} \, \mathit{x}  \rightarrow  \ottnt{M}$, where $\mathit{x}$ is bound in
$\ottnt{M}$.  The body $\ottnt{M}$ is evaluated once a handled expression produces
a value, to which $\mathit{x}$ is bound in $\ottnt{M}$.  An operation clause $\mathsf{op}  \ottsym{(}  \mathit{x}  \ottsym{,}  \mathit{k}  \ottsym{)}  \rightarrow  \ottnt{M}$, where $\mathit{x}$ and $\mathit{k}$ are bound in $\ottnt{M}$, is an implementation of
the effect operation $\mathsf{op}$.  The body $\ottnt{M}$ is evaluated once a
handled expression performs $\mathsf{op}$, referring to the argument of $\mathsf{op}$ by
$\mathit{x}$.  Variable $\mathit{k}$ denotes the delimited continuation from the point
where $\mathsf{op}$ is called up to the {\handlewithsf} expression that installs the
operation clause.  This ability to manipulate delimited continuations enables
the implementation of various control effects.
In this paper, we suppose that a handler may contain at most one operation
clause for each operation.

Here, we introduce a few notions about syntax; they are standard, and therefore we
omit their formal definitions.
Term $ \ottnt{M_{{\mathrm{1}}}}    [  \ottnt{M_{{\mathrm{2}}}}  /  \mathit{x}  ]  $ is the one obtained by substituting $\ottnt{M_{{\mathrm{2}}}}$ for
$\mathit{x}$ in $\ottnt{M_{{\mathrm{1}}}}$ in a capture-avoiding manner.
A term $\ottnt{M}$ is closed if it has no free variable.
We also write $ \ottnt{E}  [  \ottnt{M}  ] $ and $ \ottnt{E}  [  \ottnt{E'}  ] $ for the term and evaluation context
obtained by filling the hole of $\ottnt{E}$ with $\ottnt{M}$ and $\ottnt{E'}$, respectively.

\TS{Show an example in the overview section?}

\subsection{Semantics}
\label{sec:lang:semantics}

\begin{figure}[t]
 \begin{flushleft}
  \textbf{Reduction rules} \quad \framebox{$\ottnt{M_{{\mathrm{1}}}}  \rightsquigarrow  \ottnt{M_{{\mathrm{2}}}}$}
 \end{flushleft}
 \[\begin{array}{rcll}
  \ottnt{c} \, \ottnt{v}                    &  \rightsquigarrow  &  \zeta  (  \ottnt{c}  ,  \ottnt{v}  )    & \RwoP{Const} \\[1ex]
  \ottsym{(}   \lambda\!  \, \mathit{x}  \ottsym{.}  \ottnt{M}  \ottsym{)} \, \ottnt{v}               &  \rightsquigarrow  &  \ottnt{M}    [  \ottnt{v}  /  \mathit{x}  ]        & \RwoP{Beta} \\[1ex]
  \mathsf{handle} \, \ottnt{v} \, \mathsf{with} \, \ottnt{H}        &  \rightsquigarrow  &  \ottnt{M}    [  \ottnt{v}  /  \mathit{x}  ]   \quad
   \text{(where $ \ottnt{H} ^\mathsf{return}  \,  =  \, \mathsf{return} \, \mathit{x}  \rightarrow  \ottnt{M}$)}          & \RwoP{Return} \\[1ex]
  \mathsf{handle} \,  \ottnt{E}  [   \textup{\texttt{\#}\relax}  \mathsf{op}   \ottsym{(}   \ottnt{v}   \ottsym{)}   ]  \, \mathsf{with} \, \ottnt{H} &  \rightsquigarrow  &   \ottnt{M}    [  \ottnt{v}  /  \mathit{x}  ]      [   \lambda\!  \, \mathit{y}  \ottsym{.}  \mathsf{handle} \,  \ottnt{E}  [  \mathit{y}  ]  \, \mathsf{with} \, \ottnt{H}  /  \mathit{k}  ]  
                                                        & \RwoP{Handle} \\
   \multicolumn{3}{r}{
    \text{(where $\mathsf{op} \,  \not\in  \, \ottnt{E}$ and $\ottnt{H}  \ottsym{(}  \mathsf{op}  \ottsym{)} \,  =  \, \mathsf{op}  \ottsym{(}  \mathit{x}  \ottsym{,}  \mathit{k}  \ottsym{)}  \rightarrow  \ottnt{M}$)}
   } \\[1ex]
 \end{array}\]
 \begin{flushleft}
  \textbf{Evaluation rules} \quad \framebox{$\ottnt{M_{{\mathrm{1}}}}  \longrightarrow  \ottnt{M_{{\mathrm{2}}}}$}
 \end{flushleft}
 \begin{center}
  $\ottdruleEXXEval{}$
 \end{center}
 \caption{Semantics of {\lang}.}
 \label{fig:semantics}
\end{figure}

This section defines the semantics of {\lang}.  It consists of two binary
relations over closed terms: the reduction relation $ \rightsquigarrow $, which gives the notion
of basic computation such as $\beta$-reduction, and the evaluation relation
$ \longrightarrow $, which defines how to evaluate programs.  These relations are defined
by the rules shown in \reffig{semantics}.

The reduction relation is defined by four rules.  The rule \R{Const} is for
constant applications.  The denotations of functional constants are given by
$ \zeta $, which is a mapping from pairs of a constant $\ottnt{c}$ and a value
$\ottnt{v}$ to the value that is the result of applying $\ottnt{c}$ to $\ottnt{v}$.  A function
application $\ottsym{(}   \lambda\!  \, \mathit{x}  \ottsym{.}  \ottnt{M}  \ottsym{)} \, \ottnt{v}$ reduces to $ \ottnt{M}    [  \ottnt{v}  /  \mathit{x}  ]  $, as usual, by \R{Beta}.
The other two rules are for computation in terms of algebraic effects and
handlers.  The rule \R{Return} is for the case in which a handled expression
evaluates to a value.  In such a case, the return clause of the installed
handler is evaluated with the value of the handled expression.  We write
$ \ottnt{H} ^\mathsf{return} $ for the return clause of a handler $\ottnt{H}$.  The rule \R{Handle} is
the core of effectful computation in algebraic effects and handlers, and looks for
an operation clause to interpret an operation invoked by a handled expression.
The redex is a {\handlewithsf} expression that takes the form
$\mathsf{handle} \,  \ottnt{E}  [   \textup{\texttt{\#}\relax}  \mathsf{op}   \ottsym{(}   \ottnt{v}   \ottsym{)}   ]  \, \mathsf{with} \, \ottnt{H}$ where the handled expression $ \ottnt{E}  [   \textup{\texttt{\#}\relax}  \mathsf{op}   \ottsym{(}   \ottnt{v}   \ottsym{)}   ] $
performs the operation $\mathsf{op}$ and $\ottnt{E}$ does not install handlers to interpret it.  We call evaluation contexts that install no handler to
interpret $\mathsf{op}$ \emph{$\mathsf{op}$-free}, which is formally
defined as follows.
\begin{restatable}[$\mathsf{op}$-free evaluation contexts]{defn}{defnOpFreeEvalCtx}
 Evaluation context $\ottnt{E}$ is \emph{$\mathsf{op}$-free}, written $\mathsf{op} \,  \not\in  \, \ottnt{E}$,
 if and only if, there exist no $\ottnt{E_{{\mathrm{1}}}}$, $\ottnt{E_{{\mathrm{2}}}}$, and $\ottnt{H}$ such that
 $\ottnt{E} \,  =  \,  \ottnt{E_{{\mathrm{1}}}}  [  \mathsf{handle} \, \ottnt{E_{{\mathrm{2}}}} \, \mathsf{with} \, \ottnt{H}  ] $ and $\ottnt{H}$ has an operation clause for
 $\mathsf{op}$.
\end{restatable}
We also denote the operation clause for $\mathsf{op}$ in $\ottnt{H}$ by $\ottnt{H}  \ottsym{(}  \mathsf{op}  \ottsym{)}$.
Then, the conjunction of $\mathsf{op} \,  \not\in  \, \ottnt{E}$ and $\ottnt{H}  \ottsym{(}  \mathsf{op}  \ottsym{)} \,  =  \, \mathsf{op}  \ottsym{(}  \mathit{x}  \ottsym{,}  \mathit{k}  \ottsym{)}  \rightarrow  \ottnt{M}$ in
\R{Handle} means that the operation clause $\mathsf{op}  \ottsym{(}  \mathit{x}  \ottsym{,}  \mathit{k}  \ottsym{)}  \rightarrow  \ottnt{M}$ installed by the
{\handlewithsf} expression is the innermost among the operation clauses for $\mathsf{op}$ from the
point at which $\mathsf{op}$ is invoked.
The {\handlewithsf} expression with such an operation clause reduces to the body
$\ottnt{M}$ of the operation clause after substituting the argument $\ottnt{v}$ of the
operation call for $\mathit{x}$ and the functional representation of the delimited
continuation $ \lambda\!  \, \mathit{y}  \ottsym{.}  \mathsf{handle} \,  \ottnt{E}  [  \mathit{y}  ]  \, \mathsf{with} \, \ottnt{H}$ for $\mathit{k}$.

The evaluation proceeds according to the evaluation rule \E{Eval} in
\reffig{semantics}.  A program is decomposed into the evaluation context $\ottnt{E}$
and the redex $\ottnt{M_{{\mathrm{1}}}}$ and evaluates to the term $ \ottnt{E}  [  \ottnt{M_{{\mathrm{2}}}}  ] $ obtained by
filling the hole of $\ottnt{E}$ with the resulting term $\ottnt{M_{{\mathrm{2}}}}$ of
the reduction of $\ottnt{M_{{\mathrm{1}}}}$.

\TS{Show an example in the overview section?}

\section{A Polymorphic Type System for Signature Restriction}
\label{sec:polytype}

This section defines a polymorphic type
system for {\lang} that incorporates type containment as subsumption.  We then
formalize signature restriction and show that the type system is sound if all
operations satisfy signature restriction.  The type system in this section does
not track effect information for simplicity, so a well-typed program may
terminate at an unhandled operation call.
%

\subsection{Type Language}

\begin{figure}[t]
 \[
 \begin{array}{llllll}
  \multicolumn{2}{l}{
   \textbf{Type variables} \quad \alpha, \beta, \gamma
  } & 
  \textbf{Base types} \quad \iota ::=  \mathsf{bool}  \mid  \mathsf{int}  \mid ... \\[.5ex]
  \textbf{Types} \quad \ottnt{A}, \ottnt{B}, \ottnt{C}, \ottnt{D} & ::= &
   \alpha \mid \iota \mid \ottnt{A}  \rightarrow  \ottnt{B} \mid  \text{\unboldmath$\forall$}  \, \alpha  \ottsym{.} \, \ottnt{A}
   \\[.5ex]
  \textbf{Typing contexts} \quad \Gamma & ::= &
    \emptyset  \mid \Gamma  \ottsym{,}  \mathit{x} \,  \mathord{:}  \, \ottnt{A} \mid
   \Gamma  \ottsym{,}  \alpha
   \\
 \end{array}
 \]
 \caption{The type language.}
 \label{fig:polytype:lang}
\end{figure}

\reffig{polytype:lang} presents the type language of the polymorphic type
system.  It is standard and in fact the same as that of System~F~\cite{Reynolds_1974_PS,Girard_1972_PhD}.
We use metavariables $\alpha$, $\beta$, $\gamma$ for type
variables and $\iota$ for base types such as $ \mathsf{bool} $ and $ \mathsf{int} $.
Types, ranged over by $\ottnt{A}$, $\ottnt{B}$, $\ottnt{C}$, $\ottnt{D}$, consist of: type
variables; base types; function types $\ottnt{A}  \rightarrow  \ottnt{B}$; and polymorphic types
$ \text{\unboldmath$\forall$}  \, \alpha  \ottsym{.} \, \ottnt{A}$, where type variable $\alpha$ is bound in $\ottnt{A}$.
Typing contexts, ranged over by $\Gamma$, are sequences of bindings of variables
coupled with their types and type variables.
We suppose that each constant $\ottnt{c}$ is assigned a first-order closed type
$ \mathit{ty}  (  \ottnt{c}  ) $ of the form $\iota  \rightarrow  \ldots  \rightarrow  \iota_{\ottmv{n}}  \rightarrow   \iota _{ \ottmv{n}  \ottsym{+}  \ottsym{1} } $
which is consistent with the denotation of $\ottnt{c}$.

We use the following shorthand and notions.
We write $ \algeffseqoverindex{ \alpha }{ \text{\unboldmath$\mathit{I}$} } $ for $ \algeffseqover{ \alpha }  = \alpha_{{\mathrm{1}}}, \cdots, \alpha_{\ottmv{n}}$ with $\text{\unboldmath$\mathit{I}$} = \{ 1,
..., n \}$.  We apply this bold-font notation to other syntax categories as
well; for example, $ \algeffseqoverindex{ \ottnt{A} }{ \text{\unboldmath$\mathit{I}$} } $ denotes a sequence of types.  We often omit the index
sets ($\text{\unboldmath$\mathit{I}$}$, $\text{\unboldmath$\mathit{J}$}$, $\text{\unboldmath$\mathit{K}$}$) if they are clear from the context or irrelevant:
for example, we may abbreviate $ \algeffseqoverindex{ \alpha }{ \text{\unboldmath$\mathit{I}$} } $ to $ \algeffseqover{ \alpha } $.
We also write $ \text{\unboldmath$\forall$}  \,  \algeffseqoverindex{ \alpha }{ \text{\unboldmath$\mathit{I}$} }   \ottsym{.} \, \ottnt{A}$ for $  \text{\unboldmath$\forall$}    \alpha_{{\mathrm{1}}}  . \, ... \,   \text{\unboldmath$\forall$}    \alpha_{\ottmv{n}}  .  \, \ottnt{A}$ with $\text{\unboldmath$\mathit{I}$} = \{ 1, ..., n
\}$.  We may omit the index sets and write $ \text{\unboldmath$\forall$}  \,  \algeffseqover{ \alpha }   \ottsym{.} \, \ottnt{A}$ simply.
We write $ \algeffseqoverindex{  \text{\unboldmath$\forall$}  \,  \algeffseqoverindex{ \alpha }{ \text{\unboldmath$\mathit{I}$} }   \ottsym{.} \, \ottnt{A} }{ \text{\unboldmath$\mathit{J}$} } $ for a sequence of types $ \text{\unboldmath$\forall$}  \,  \algeffseqoverindex{ \alpha }{ \text{\unboldmath$\mathit{I}$} }   \ottsym{.} \, \ottnt{A}_1$, \ldots,
$ \text{\unboldmath$\forall$}  \,  \algeffseqoverindex{ \alpha }{ \text{\unboldmath$\mathit{I}$} }   \ottsym{.} \, \ottnt{A}_n$ with $\text{\unboldmath$\mathit{J}$} = \{1,\ldots,n\}$.
The notions of free type variables and capture-avoiding type substitution are
defined as usual.  We write $ \mathit{ftv}  (  \ottnt{A}  ) $ for the set of free type variables of
$\ottnt{A}$ and $ \ottnt{A}    [   \algeffseqover{ \ottnt{B} }   \ottsym{/}   \algeffseqover{ \alpha }   ]  $ for the type obtained by substituting each
type of $ \algeffseqover{ \ottnt{B} } $ for the corresponding type variable of $ \algeffseqover{ \alpha } $
simultaneously (here we suppose that $ \algeffseqover{ \ottnt{B} } $ and $ \algeffseqover{ \alpha } $ share the same,
omitted index set).

\subsection{Polymorphic Type System}
\label{sec:polytype:typing}

\begin{figure}[t]
 \begin{flushleft}
  \textbf{Well-formedness} \quad
  \framebox{$\vdash  \Gamma$}
 \end{flushleft}
 \begin{center}
  $\ottdruleWFXXEmpty{}$ \hfil
  $\ottdruleWFXXExtVar{}$ \hfil
  $\ottdruleWFXXExtTyVar{}$ \hfil
 \end{center}
 \mbox{} \\

 \begin{flushleft}
  \textbf{Type containment} \quad
  \framebox{$\Gamma  \vdash  \ottnt{A}  \sqsubseteq  \ottnt{B}$}
 \end{flushleft}
 \begin{center}
  $\ottdruleCXXRefl{}$ \hfil
  $\ottdruleCXXTrans{}$ \\[1.5ex]
  $\ottdruleCXXInst{}$ \hfil
  $\ottdruleCXXGen{}$ \hfil
  $\ottdruleCXXPoly{}$ \\[1.5ex]
  $\ottdruleCXXFun{}$ \hfil
  $\ottdruleCXXDFun{}$
 \end{center}
 \mbox{} \\

 \begin{flushleft}
  \textbf{Term typing} \quad
  \framebox{$\Gamma  \vdash  \ottnt{M}  \ottsym{:}  \ottnt{A}$}
 \end{flushleft}
 \begin{center}
  $\ottdruleTXXVar{}$ \hfil
  $\ottdruleTXXConst{}$ \hfil
  $\ottdruleTXXAbs{}$ \\[1.5ex]
  $\ottdruleTXXApp{}$ \hfil
  $\ottdruleTXXInst{}$ \\[1.5ex]
  $\ottdruleTXXGen{}$ \hfil
  $\ottdruleTXXOp{}$ \\[1.5ex]
  $\ottdruleTXXHandle{}$
 \end{center}
 \mbox{} \\

 \begin{flushleft}
  \textbf{Handler typing} \quad
  \framebox{$\Gamma  \vdash  \ottnt{H}  \ottsym{:}  \ottnt{A}  \Rightarrow  \ottnt{B}$}
 \end{flushleft}
 \begin{center}
  $\ottdruleTHXXReturn{}$ \\[1.5ex]
  $\ottdruleTHXXOp{}$
 \end{center}

 \caption{Polymorphic type system for {\lang}.}
 \label{fig:polytype:typing}
\end{figure}

We present a polymorphic type system for {\lang}, which consists of four
judgments:
well-formedness judgment $\vdash  \Gamma$, which states that a typing context $\Gamma$
is well formed;
type containment judgment $\Gamma  \vdash  \ottnt{A}  \sqsubseteq  \ottnt{B}$, which states that, for the types $\ottnt{A}$ and $\ottnt{B}$, which are assumed to be well formed under $\Gamma$, the inhabitants of $\ottnt{A}$ are
contained in $\ottnt{B}$;
%
%
term typing judgment $\Gamma  \vdash  \ottnt{M}  \ottsym{:}  \ottnt{A}$, which states that term $\ottnt{M}$ evaluates
to a value of $\ottnt{A}$ after applying appropriate substitution for variables and
type variables in $\Gamma$;
and handler typing judgment $\Gamma  \vdash  \ottnt{H}  \ottsym{:}  \ottnt{A}  \Rightarrow  \ottnt{B}$, which states that handler
$\ottnt{H}$ handles operations called by a handled term of $\ottnt{A}$ and produces a
value of $\ottnt{B}$ after applying appropriate substitution according to $\Gamma$
(we refer to $\ottnt{A}$ and $\ottnt{B}$ as the \emph{input} and \emph{output types} of the handler,
respectively).
These judgments are defined as the smallest relations that satisfy the rules in
\reffig{polytype:typing}.

The well-formedness rules are standard.  A typing context is well formed if (1)
variables and type variables bound by it are unique and (2) it assigns
well-formed types to the variables.  We write $ \mathit{dom}  (  \Gamma  ) $ for the set of variables and type
variables bound by $\Gamma$.  A type $\ottnt{A}$ is well formed under typing context
$\Gamma$, which is expressed by $\Gamma  \vdash  \ottnt{A}$, if and only if $ \mathit{ftv}  (  \ottnt{A}  )  \,  \subseteq  \,  \mathit{dom}  (  \Gamma  ) $
(i.e., $\Gamma$ binds all of the free type variables in $\ottnt{A}$).

The type containment rules originate from the work of
\citet{Tiuryn/Urzyczyn_1996_LICS}, which simplifies the rules of type
containment of \citet{Mitchell_1988_IC}.  The rules
\Srule{Refl} and \Srule{Trans} indicate that type containment is a preorder.  The
rule \Srule{Inst} instantiates polymorphic types with well-formed types.  The
rule \Srule{Gen} may add a quantifier $\forall$ if it does not bind free type
variables.  The rules \Srule{Poly} and \Srule{Fun} are for compatibility; note
that type containment is a kind of subtyping and hence it is contravariant on the domain types of function types.  The rule
\Srule{DFun} is a simplified version~\cite{Tiuryn/Urzyczyn_1996_LICS} of the original ``distributive'' law~\cite{Mitchell_1988_IC}, which
is the core of type containment.  This rule allows $\forall$ that quantifies a function
type to move to its codomain type if the quantified type variable does not occur
free in the domain type.  This rule is justified by the fact that we can supply a
function from $ \text{\unboldmath$\forall$}  \, \alpha  \ottsym{.} \, \ottnt{A}  \rightarrow  \ottnt{B}$ to $\ottnt{A}  \rightarrow   \text{\unboldmath$\forall$}  \, \alpha  \ottsym{.} \, \ottnt{B}$ in System~F and the
result of applying type erasure to it is equivalent to the identity
function~\cite{Mitchell_1988_IC}.  This rule is crucial for allowing the domain type of a type
signature to refer to quantified type variables in strictly positive positions,
which makes signature restriction permissive.

The typing rules for terms are almost standard, coming from \citet{Mitchell_1988_IC} for polymorphism and \citet{Plotkin/Pretnar_2013_LMCS} for effects.  The rule \T{Inst}
converts types by type containment.  The rule \T{Op} is for operation
calls.  We formalize a type signature of an operation as follows.
\begin{restatable}[Type signature]{defn}{defnTypeSignature}
 \label{def:eff}
 Each effect operation $\mathsf{op}$ is assigned a type signature $\mathit{ty} \, \ottsym{(}  \mathsf{op}  \ottsym{)}$ of the
 form $   \text{\unboldmath$\forall$}    \alpha_{{\mathrm{1}}}  . \, ... \,   \text{\unboldmath$\forall$}    \alpha_{\ottmv{n}}  .  \,  \ottnt{A}  \hookrightarrow  \ottnt{B} $ for some $\ottmv{n}$, where $\alpha_{{\mathrm{1}}}  \ottsym{,}  ...  \ottsym{,}  \alpha_{\ottmv{n}}$ are bound
 in the domain type $\ottnt{A}$ and codomain type $\ottnt{B}$.  It may be abbreviated to
 $  \text{\unboldmath$\forall$}  \,  \algeffseqoverindex{ \alpha }{ \text{\unboldmath$\mathit{I}$} }   \ottsym{.} \,  \ottnt{A}  \hookrightarrow  \ottnt{B} $ or, more simply, to $  \text{\unboldmath$\forall$}  \,  \algeffseqover{ \alpha }   \ottsym{.} \,  \ottnt{A}  \hookrightarrow  \ottnt{B} $.  We suppose that
 $   \text{\unboldmath$\forall$}    \alpha_{{\mathrm{1}}}  . \, ... \,   \text{\unboldmath$\forall$}    \alpha_{\ottmv{n}}  .  \,  \ottnt{A}  \hookrightarrow  \ottnt{B} $ is closed, i.e., $ \mathit{ftv}  (  \ottnt{A}  ) ,  \mathit{ftv}  (  \ottnt{B}  )  \subseteq \{ \alpha_{{\mathrm{1}}}, \cdots, \alpha_{\ottmv{n}} \}$.
\end{restatable}
We note that domain and codomain types may involve polymorphic types.

The rule \T{Op} instantiates the type signature of the operation with
well-formed types and checks that an argument is typed at the domain type of the
instantiated signature.
%
%
We use notation $\Gamma  \vdash   \algeffseqover{ \ottnt{C} } $ for $\Gamma  \vdash  \ottnt{C_{{\mathrm{1}}}}, \cdots, \Gamma  \vdash  \ottnt{C_{\ottmv{n}}}$ when
$ \algeffseqover{ \ottnt{C} }  = \ottnt{C_{{\mathrm{1}}}}, \cdots, \ottnt{C_{\ottmv{n}}}$.

The typing rules for handlers are also ordinary~\cite{Plotkin/Pretnar_2013_LMCS}.  A return clause $\mathsf{return} \, \mathit{x}  \rightarrow  \ottnt{M}$ is typechecked by \THrule{Return}, which allows the body $\ottnt{M}$ to refer
to the values of the handled expression via bound variable $\mathit{x}$.  An
operation clause $\mathsf{op}  \ottsym{(}  \mathit{x}  \ottsym{,}  \mathit{k}  \ottsym{)}  \rightarrow  \ottnt{M}$ is typechecked by \THrule{Op}.  Let the type signature
of $\mathsf{op}$ be $  \text{\unboldmath$\forall$}  \,  \algeffseqover{ \alpha }   \ottsym{.} \,  \ottnt{C}  \hookrightarrow  \ottnt{D} $.  In typechecking $\ottnt{M}$, variable $\mathit{x}$ is
assigned the codomain type $\ottnt{C}$ since variable $\mathit{x}$ will be bound to
the arguments to the operation $\mathsf{op}$.  Variable $\mathit{k}$ is assigned to type $\ottnt{D}  \rightarrow  \ottnt{B}$ where
$\ottnt{B}$ is the output type of the handler.  This is because $\mathit{k}$ will be
bound to the functional representations of delimited continuations such that: the
delimited continuations suppose that their holes are filled with values of the
codomain type $\ottnt{D}$ of the type signature; and they are wrapped by the
$ \mathsf{handle} $--$ \mathsf{with} $ expression installing the handler and therefore they
would produce values of $\ottnt{B}$.

\subsection{Desired Propositions for Type Soundness}
\label{sec:polytype:addprop}
As mentioned in \refsec{overview:unsafety}, the polymorphic type system is unsound if
we impose no further restriction on it.  This section details the proof sketch of type preservation provided in \refsec{overview:our-work} and formulates two propositions
such that they do not hold in the polymorphic type system but, \emph{if they
did}, the type system would be sound.  In \refsec{polytype:signature-restriction:prop}, we show that the propositions hold
if all operations satisfy signature restriction.

We start by considering an issue that arises when proving soundness of the polymorphic type
system.  This issue relates to the handling of an operation call by \R{Handle},
which enables the following reduction:
\[
 \mathsf{handle} \,  \ottnt{E}  [   \textup{\texttt{\#}\relax}  \mathsf{op}   \ottsym{(}   \ottnt{v}   \ottsym{)}   ]  \, \mathsf{with} \, \ottnt{H}  \rightsquigarrow    \ottnt{M}    [  \ottnt{v}  /  \mathit{x}  ]      [   \lambda\!  \, \mathit{y}  \ottsym{.}  \mathsf{handle} \,  \ottnt{E}  [  \mathit{y}  ]  \, \mathsf{with} \, \ottnt{H}  /  \mathit{k}  ]  
\]
where $\mathsf{op} \,  \not\in  \, \ottnt{E}$ and $\ottnt{H}  \ottsym{(}  \mathsf{op}  \ottsym{)} \,  =  \, \mathsf{op}  \ottsym{(}  \mathit{x}  \ottsym{,}  \mathit{k}  \ottsym{)}  \rightarrow  \ottnt{M}$.
The problem is that the RHS term does not preserve the type of the LHS term.  If
this type preservation were successful, we would be able to prove soundness of the polymorphic type system, but
it is contradictory to the existence of the counterexample presented in
\refsec{overview:unsafety}.

A detailed investigation of this problem reveals two propositions that are lacking
but sufficient to make the polymorphic type system sound.
\begin{prop} \label{prop:sig-forall-move:domain}
 If\/ $\mathit{ty} \, \ottsym{(}  \mathsf{op}  \ottsym{)} \,  =  \,   \text{\unboldmath$\forall$}  \,  \algeffseqoverindex{ \alpha }{ \text{\unboldmath$\mathit{I}$} }   \ottsym{.} \,  \ottnt{A}  \hookrightarrow  \ottnt{B} $ and
 $\Gamma  \vdash  \ottnt{M}  \ottsym{:}    \text{\unboldmath$\forall$}  \,  \algeffseqoverindex{ \beta }{ \text{\unboldmath$\mathit{J}$} }   \ottsym{.} \, \ottnt{A}    [   \algeffseqoverindex{ \ottnt{C} }{ \text{\unboldmath$\mathit{I}$} }   \ottsym{/}   \algeffseqoverindex{ \alpha }{ \text{\unboldmath$\mathit{I}$} }   ]  $, then $\Gamma  \vdash  \ottnt{M}  \ottsym{:}   \ottnt{A}    [   \algeffseqoverindex{  \text{\unboldmath$\forall$}  \,  \algeffseqoverindex{ \beta }{ \text{\unboldmath$\mathit{J}$} }   \ottsym{.} \, \ottnt{C} }{ \text{\unboldmath$\mathit{I}$} }   \ottsym{/}   \algeffseqoverindex{ \alpha }{ \text{\unboldmath$\mathit{I}$} }   ]  $.
\end{prop}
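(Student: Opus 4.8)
The plan is to reduce the statement to a single type-containment judgment and then finish by subsumption. Concretely, from the hypothesis \(\Gamma \vdash M : \forall\boldsymbol{\beta}^J.\,A[\boldsymbol{C}^I/\boldsymbol{\alpha}^I]\) I would derive the containment
\[
 \Gamma \vdash \forall\boldsymbol{\beta}^J.\,A[\boldsymbol{C}^I/\boldsymbol{\alpha}^I] \sqsubseteq A[\,\overline{\forall\boldsymbol{\beta}^J.\,C}^I/\boldsymbol{\alpha}^I\,]
\]
and conclude \(\Gamma \vdash M : A[\,\overline{\forall\boldsymbol{\beta}^J.\,C}^I/\boldsymbol{\alpha}^I\,]\) by \T{Inst}, after checking that the target type is well formed under \(\Gamma\) (each \(\forall\boldsymbol{\beta}^J.\,C_i\) is closed over \(\boldsymbol{\beta}^J\)). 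The whole difficulty is thus concentrated in the containment, and, as the counterexample of \refsec{overview:unsafety} shows, it cannot hold for arbitrary \(A\); I would prove it under the \emph{signature restriction on the domain type}, namely that every \(\alpha_i\) occurs in \(A\) only at negative or strictly positive positions.

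I would establish the containment by induction on \(A\), using two complementary mechanisms that exactly match the two allowed kinds of occurrence. At a negative occurrence I would \emph{close off} \(\boldsymbol{\beta}^J\) using \Srule{Inst}: under the extended context \(\Gamma,\boldsymbol{\beta}^J\) we have \(\forall\boldsymbol{\beta}^J.\,C_i \sqsubseteq C_i\), and contravariance of the arrow (\Srule{Fun}) turns this into the direction needed at that position. At a strictly positive occurrence I would \emph{push} the quantifier inward with the distributive law \Srule{DFun}. The key case is \(A = A_1 \to A_2\): signature restriction guarantees that inside the domain \(A_1\) every \(\alpha_i\) occurs only positively (a domain position is never strictly positive in \(A\), and positive-non-strict positions are forbidden), so a standard monotonicity lemma for \(\sqsubseteq\) gives \(\Gamma,\boldsymbol{\beta}^J \vdash A_1[\,\overline{\forall\boldsymbol{\beta}^J.\,C}\,] \sqsubseteq A_1[\boldsymbol{C}]\). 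Applying \Srule{Fun} under the quantifier and then \Srule{Poly} yields
\[
 \forall\boldsymbol{\beta}^J.\,(A_1[\boldsymbol{C}] \to A_2[\boldsymbol{C}]) \sqsubseteq \forall\boldsymbol{\beta}^J.\,(A_1[\,\overline{\forall\boldsymbol{\beta}^J.\,C}\,] \to A_2[\boldsymbol{C}]),
\]
whose left domain is now free of \(\boldsymbol{\beta}^J\); hence \Srule{DFun} applies and moves \(\forall\boldsymbol{\beta}^J\) into the codomain, after which the induction hypothesis on \(A_2\) together with \Srule{Fun} and \Srule{Trans} closes the case. The quantifier case \(A = \forall\gamma.\,A'\) is handled by first permuting \(\forall\boldsymbol{\beta}^J\) past \(\forall\gamma\) and then appealing to the induction hypothesis under \Srule{Poly}; the permutation \(\forall\boldsymbol{\beta}^J\forall\gamma.\,X \sqsubseteq \forall\gamma\forall\boldsymbol{\beta}^J.\,X\) is itself derivable from \Srule{Gen}, \Srule{Poly}, \Srule{Inst}, and \Srule{Trans}, so I would isolate it as an auxiliary lemma. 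The leaf cases are immediate: \Srule{Refl} when \(A = \alpha_i\), and a vacuous instantiation of \(\forall\boldsymbol{\beta}^J\) (via \Srule{Inst}) when \(A = \iota\). Two auxiliary results carry most of the routine weight: (i) monotonicity of \(\sqsubseteq\) under substitution into purely positive (resp. purely negative) positions, proved by the usual congruence induction over \Srule{Fun} and \Srule{Poly}; and (ii) the quantifier-permutation lemma above.

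The main obstacle is the interaction, \emph{within a single \(A\)}, between the negative and strictly positive occurrences of one and the same \(\alpha_i\). Because only one quantifier block \(\forall\boldsymbol{\beta}^J\) sits out front, one cannot replace \(C_i\) by \(\forall\boldsymbol{\beta}^J.\,C_i\) uniformly and independently at each position: \Srule{DFun} can move a quantifier only into a codomain (a strictly positive position), never into a domain. The argument therefore hinges on the \emph{ordering} --- first neutralize every negative (domain) occurrence by contravariant instantiation, so that \(\boldsymbol{\beta}^J\) survives only along the strictly positive spine, and only then distribute. Making this ordering rigorous inside an induction that keeps switching polarity, and observing that signature restriction excludes \emph{exactly} the positive-non-strict positions that neither mechanism can reach, is the delicate part; it is also precisely where an unrestricted signature would break the proof, consistently with the counterexample.
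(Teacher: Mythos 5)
Your proposal is correct and takes essentially the same route as the paper: it reduces the statement to the containment $\Gamma \vdash \forall\,\algeffseqoverindex{\beta}{\text{\unboldmath$\mathit{J}$}}.\,\ottnt{A}[\algeffseqoverindex{\ottnt{C}}{\text{\unboldmath$\mathit{I}$}}/\algeffseqoverindex{\alpha}{\text{\unboldmath$\mathit{I}$}}] \sqsubseteq \ottnt{A}[\algeffseqoverindex{\text{\unboldmath$\forall$}\,\algeffseqoverindex{\beta}{\text{\unboldmath$\mathit{J}$}}.\,\ottnt{C}}{\text{\unboldmath$\mathit{I}$}}/\algeffseqoverindex{\alpha}{\text{\unboldmath$\mathit{I}$}}]$ discharged by \T{Inst}, which is exactly \reflem{subtyping-forall-move}(1), proved by the same induction on $\ottnt{A}$ with your ``monotonicity lemma'' playing the role of \reflem{subtyping-forall-remove} and \Srule{DFun} applied at the arrow case after closing the domain. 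The only deviation is cosmetic: in the arrow case you pass through the intermediate $\text{\unboldmath$\forall$}\,\algeffseqover{\beta}.\,(\ottnt{A_{{\mathrm{1}}}}[\algeffseqover{\text{\unboldmath$\forall$}\,\algeffseqover{\beta}.\,\ottnt{C}}/\algeffseqover{\alpha}] \rightarrow \ottnt{A_{{\mathrm{2}}}}[\algeffseqover{\ottnt{C}}/\algeffseqover{\alpha}])$ where the paper closes the domain to $\text{\unboldmath$\forall$}\,\algeffseqover{\beta}.\,\ottnt{A_{{\mathrm{1}}}}[\algeffseqover{\ottnt{C}}/\algeffseqover{\alpha}]$ instead, which is immaterial.
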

\begin{prop} \label{prop:sig-forall-move:codomain}
 If\/ $\mathit{ty} \, \ottsym{(}  \mathsf{op}  \ottsym{)} \,  =  \,   \text{\unboldmath$\forall$}  \,  \algeffseqoverindex{ \alpha }{ \text{\unboldmath$\mathit{I}$} }   \ottsym{.} \,  \ottnt{A}  \hookrightarrow  \ottnt{B} $ and
 $\Gamma  \vdash  \ottnt{M}  \ottsym{:}   \ottnt{B}    [   \algeffseqoverindex{  \text{\unboldmath$\forall$}  \,  \algeffseqoverindex{ \beta }{ \text{\unboldmath$\mathit{J}$} }   \ottsym{.} \, \ottnt{C} }{ \text{\unboldmath$\mathit{I}$} }   \ottsym{/}   \algeffseqoverindex{ \alpha }{ \text{\unboldmath$\mathit{I}$} }   ]  $, then $\Gamma  \vdash  \ottnt{M}  \ottsym{:}    \text{\unboldmath$\forall$}  \,  \algeffseqoverindex{ \beta }{ \text{\unboldmath$\mathit{J}$} }   \ottsym{.} \, \ottnt{B}    [   \algeffseqoverindex{ \ottnt{C} }{ \text{\unboldmath$\mathit{I}$} }   \ottsym{/}   \algeffseqoverindex{ \alpha }{ \text{\unboldmath$\mathit{I}$} }   ]  $.
\end{prop}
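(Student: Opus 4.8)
The plan is to derive both propositions from a single lemma about type containment and then close with subsumption (\T{Inst}). Focusing on the final statement, \refprop{sig-forall-move:codomain}: its hypothesis gives $\Gamma \vdash M : B[\,\overline{\forall\vec\beta.\,C}/\overline{\alpha}\,]$ (writing $\vec\beta$ for $\beta_1\cdots\beta_m$), and its conclusion then follows by a single application of \T{Inst} once I establish the containment
\[
  \Gamma \vdash B[\,\overline{\forall\vec\beta.\,C}/\overline{\alpha}\,] \sqsubseteq \forall\vec\beta.\, B[\,\overline{C}/\overline{\alpha}\,]
\]
together with well-formedness of the right-hand type (which I get from the hypothesis via a standard regularity lemma: since the signature is closed, $\mathit{ftv}(B)\subseteq\{\alpha_1,\ldots,\alpha_n\}$, so no new free variables are introduced). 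The one external ingredient I rely on is the signature restriction assumption that the next subsection enforces, namely that $\alpha_1,\ldots,\alpha_n$ occur \emph{only positively} in $B$; the dual \refprop{sig-forall-move:domain} will instead use that they occur only negatively, and strictly positively, in $A$.

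The heart of the argument is a polarity-indexed lemma proved by induction on the structure of a type $T$ with $\mathit{ftv}(T)\subseteq\{\alpha_1,\ldots,\alpha_n\}$: (1) if the $\alpha_i$ occur only positively in $T$, then $T[\,\overline{\forall\vec\beta.\,C}/\overline{\alpha}\,]\sqsubseteq\forall\vec\beta.\,T[\,\overline{C}/\overline{\alpha}\,]$; and (2) the reverse containment holds when they occur only negatively. For \refprop{sig-forall-move:codomain} I need only clause (1), applied to $T=B$. In the base case $T=\alpha_i$ both sides coincide and \Srule{Refl} applies; when no $\alpha_i$ occurs in $T$ (so $T$ is closed), I pull the quantifiers out by repeated \Srule{Gen}, legitimate since each fresh $\beta_j$ is absent from $T$. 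For $T=\forall\gamma.\,T'$ I invoke the induction hypothesis under \Srule{Poly} and then commute adjacent quantifiers, a containment itself derivable from \Srule{Inst}, \Srule{Gen}, and \Srule{Poly} via \Srule{Trans}.

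The delicate case — and where the distributive rule \Srule{DFun} enters — is the arrow $T=T_1\to T_2$ in clause (1). Because the $\alpha_i$ may not occur negatively in $B$, none of them occurs in the contravariant component $T_1$, so $T_1$ is untouched by both substitutions and the case stays self-contained (no appeal to clause (2)). The induction hypothesis on $T_2$ gives $T_2[\,\overline{\forall\vec\beta.\,C}/\overline{\alpha}\,]\sqsubseteq\forall\vec\beta.\,T_2[\,\overline{C}/\overline{\alpha}\,]$, and \Srule{Fun} (with \Srule{Refl} on the domain) lifts this to $T_1\to T_2[\,\overline{\forall\vec\beta.\,C}/\overline{\alpha}\,]\sqsubseteq T_1\to\forall\vec\beta.\,T_2[\,\overline{C}/\overline{\alpha}\,]$. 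What remains is to commute the quantifier back under the arrow, i.e. to derive $T_1\to\forall\vec\beta.\,T_2[\,\overline{C}/\overline{\alpha}\,]\sqsubseteq\forall\vec\beta.\,(T_1\to T_2[\,\overline{C}/\overline{\alpha}\,])$ with $\vec\beta$ fresh for $T_1$. This is the \emph{converse} of the primitive rule \Srule{DFun}, so it is not available directly and must be derived: generalize the whole type with \Srule{Gen}, then, under \Srule{Poly}, instantiate the inner quantifier by \Srule{Inst} and recombine through the arrow by \Srule{Fun}, chaining with \Srule{Trans}. I expect this derived converse-distributivity step to be the main obstacle. It is also where \refprop{sig-forall-move:domain} becomes heavier rather than dual: the strictly-positive occurrences that restriction permits in $A$ (for example the $\alpha\,\mathsf{list}$ of \texttt{select} introduced in \refsec{ext}) force the general form of the distributive law, moving $\forall$ into an arbitrary strictly-positive position instead of merely an arrow's codomain.
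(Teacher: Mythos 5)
Your top-level structure (reduce the proposition to a type-containment judgment and close with \T{Inst}, proving the containment by a polarity-indexed induction) matches the paper, but the arrow case of your clause~(1) rests on a false premise. You claim that because the $\alpha_i$ occur only positively in $B$, ``none of them occurs in the contravariant component $T_1$'' of an arrow $T_1 \rightarrow T_2$ inside $B$. That is not what positivity means: if the $\alpha_i$ occur only positively in $T_1 \rightarrow T_2$, they occur only \emph{negatively} in $T_1$, which does not exclude their occurring there. For instance $B = (\alpha \rightarrow \mathsf{int}) \rightarrow \alpha$ has $\alpha$ occurring only positively, yet $\alpha$ appears inside the domain $\alpha \rightarrow \mathsf{int}$; such codomains are admitted by the signature restriction. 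Consequently $T_1$ is \emph{not} ``untouched by both substitutions,'' \Srule{Refl} on the domain is unavailable, and your induction does not go through as described. Even after inserting your derived converse-distributivity step, you would still be left needing $T_1[\overline{C}/\overline{\alpha}] \sqsubseteq T_1[\overline{\forall\vec\beta.C}/\overline{\alpha}]$ under the extended context --- precisely the negative-polarity clause you claimed the arrow case could avoid.

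The repair is essentially the paper's Lemma~\ref{lem:subtyping-forall-remove}: carry out a genuinely \emph{mutual} induction over the two polarities, but on the $\forall$-free containments $A[\overline{C}/\overline{\alpha}] \sqsubseteq A[\overline{\forall\vec\beta.C}/\overline{\alpha}]$ (negative) and $A[\overline{\forall\vec\beta.C}/\overline{\alpha}] \sqsubseteq A[\overline{C}/\overline{\alpha}]$ (positive), stated in a context extended with $\vec\beta$ so that \Srule{Fun}'s contravariance lets the two clauses feed each other; the variable leaf is vacuous in the negative clause and is a single \Srule{Inst} ($\forall\vec\beta.C_i \sqsubseteq C_i$) in the positive one. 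The outer $\forall\vec\beta$ is then reattached once at the end by \Srule{Gen}, \Srule{Poly}, and \Srule{Trans}. Note that this makes the codomain direction \emph{simpler} than you anticipate: neither \Srule{DFun} nor any derived converse distributivity is needed for \refprop{sig-forall-move:codomain}; that machinery enters only in Lemma~\ref{lem:subtyping-forall-move}(1), i.e.\ for \refprop{sig-forall-move:domain}.
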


In what follows, we show how these propositions allow us to prove type soundness.
Before that, we first fix and examine the type information of the terms appearing in the
LHS term.  Let us suppose that $\mathit{ty} \, \ottsym{(}  \mathsf{op}  \ottsym{)} \,  =  \,   \text{\unboldmath$\forall$}  \,  \algeffseqoverindex{ \alpha }{ \text{\unboldmath$\mathit{I}$} }   \ottsym{.} \,  \ottnt{A}  \hookrightarrow  \ottnt{B} $ and that the LHS term
has a type $\ottnt{D}$ under a typing context $\Gamma$.  We can then find that
\begin{equation}
 \Gamma  \ottsym{,}   \algeffseqoverindex{ \alpha }{ \text{\unboldmath$\mathit{I}$} }   \ottsym{,}  \mathit{x} \,  \mathord{:}  \, \ottnt{A}  \ottsym{,}  \mathit{k} \,  \mathord{:}  \, \ottnt{B}  \rightarrow  \ottnt{D}  \vdash  \ottnt{M}  \ottsym{:}  \ottnt{D}
  \label{eqn:polytype:addprop:one}
\end{equation}
is derived.
Turning to the handled expression $ \ottnt{E}  [   \textup{\texttt{\#}\relax}  \mathsf{op}   \ottsym{(}   \ottnt{v}   \ottsym{)}   ] $, we can find two facts about
the typing judgment for $\ottnt{v}$.  The first fact originates from \T{Op}: since
$\ottnt{v}$ is an argument of operation $\mathsf{op}$, it should be of
$ \ottnt{A}    [   \algeffseqoverindex{ \ottnt{C} }{ \text{\unboldmath$\mathit{I}$} }   \ottsym{/}   \algeffseqoverindex{ \alpha }{ \text{\unboldmath$\mathit{I}$} }   ]  $, which is a type obtained by substituting certain types
$ \algeffseqoverindex{ \ottnt{C} }{ \text{\unboldmath$\mathit{I}$} } $ for type variables $ \algeffseqoverindex{ \alpha }{ \text{\unboldmath$\mathit{I}$} } $ in the domain type $\ottnt{A}$ of the type
signature of $\mathsf{op}$.  The second is from \T{Gen}, which allows the generalization
of types \emph{anywhere}.  Thus, $\ottnt{v}$ is well typed under a typing context
$\Gamma  \ottsym{,}   \algeffseqoverindex{ \beta }{ \text{\unboldmath$\mathit{J}$} } $, an extension of $\Gamma$ with some type variables $ \algeffseqoverindex{ \beta }{ \text{\unboldmath$\mathit{J}$} } $ (note
that $\text{\unboldmath$\mathit{I}$} \neq \text{\unboldmath$\mathit{J}$}$ in general).  In summary, the typing judgment for
$\ottnt{v}$ takes the following form:
\begin{equation}
 \Gamma  \ottsym{,}   \algeffseqoverindex{ \beta }{ \text{\unboldmath$\mathit{J}$} }   \vdash  \ottnt{v}  \ottsym{:}   \ottnt{A}    [   \algeffseqoverindex{ \ottnt{C} }{ \text{\unboldmath$\mathit{I}$} }   \ottsym{/}   \algeffseqoverindex{ \alpha }{ \text{\unboldmath$\mathit{I}$} }   ]   ~.
  \label{eqn:polytype:addprop:two}
\end{equation}

Now, we show that \refprop{sig-forall-move:domain} makes $ \ottnt{M}    [  \ottnt{v}  /  \mathit{x}  ]  $ typed at
$\ottnt{D}$.
First, we can derive
\[
 \Gamma  \vdash  \ottnt{v}  \ottsym{:}    \text{\unboldmath$\forall$}  \,  \algeffseqoverindex{ \beta }{ \text{\unboldmath$\mathit{J}$} }   \ottsym{.} \, \ottnt{A}    [   \algeffseqoverindex{ \ottnt{C} }{ \text{\unboldmath$\mathit{I}$} }   \ottsym{/}   \algeffseqoverindex{ \alpha }{ \text{\unboldmath$\mathit{I}$} }   ]  
\]
by the typing derivation of judgment (\ref{eqn:polytype:addprop:two}) and
\T{Gen}.  \refprop{sig-forall-move:domain} enables us to prove
\begin{equation}
 \Gamma  \vdash  \ottnt{v}  \ottsym{:}   \ottnt{A}    [   \algeffseqoverindex{  \text{\unboldmath$\forall$}  \,  \algeffseqoverindex{ \beta }{ \text{\unboldmath$\mathit{J}$} }   \ottsym{.} \, \ottnt{C} }{ \text{\unboldmath$\mathit{I}$} }   \ottsym{/}   \algeffseqoverindex{ \alpha }{ \text{\unboldmath$\mathit{I}$} }   ]   ~.
  \label{eqn:polytype:addprop:three}
\end{equation}
We can also derive
\[
 \Gamma  \ottsym{,}  \mathit{x} \,  \mathord{:}  \, \ottnt{A} \,  [   \algeffseqoverindex{  \text{\unboldmath$\forall$}  \,  \algeffseqoverindex{ \beta }{ \text{\unboldmath$\mathit{J}$} }   \ottsym{.} \, \ottnt{C} }{ \text{\unboldmath$\mathit{I}$} }   \ottsym{/}   \algeffseqoverindex{ \alpha }{ \text{\unboldmath$\mathit{I}$} }   ]   \ottsym{,}  \mathit{k} \,  \mathord{:}  \,  \ottnt{B}    [   \algeffseqoverindex{  \text{\unboldmath$\forall$}  \,  \algeffseqoverindex{ \beta }{ \text{\unboldmath$\mathit{J}$} }   \ottsym{.} \, \ottnt{C} }{ \text{\unboldmath$\mathit{I}$} }   \ottsym{/}   \algeffseqoverindex{ \alpha }{ \text{\unboldmath$\mathit{I}$} }   ]    \rightarrow  \ottnt{D}  \vdash  \ottnt{M}  \ottsym{:}  \ottnt{D}
\]
by substituting $ \algeffseqoverindex{  \text{\unboldmath$\forall$}  \,  \algeffseqoverindex{ \beta }{ \text{\unboldmath$\mathit{J}$} }   \ottsym{.} \, \ottnt{C} }{ \text{\unboldmath$\mathit{I}$} } $ for $ \algeffseqoverindex{ \alpha }{ \text{\unboldmath$\mathit{I}$} } $ in the typing judgment
(\ref{eqn:polytype:addprop:one}); note that the type variables in $ \algeffseqoverindex{ \alpha }{ \text{\unboldmath$\mathit{I}$} } $ do not
occur free in $\ottnt{D}$ because they are bound by the type signature.  Thus,
we can derive
\begin{equation}
 \Gamma  \ottsym{,}  \mathit{k} \,  \mathord{:}  \,  \ottnt{B}    [   \algeffseqoverindex{  \text{\unboldmath$\forall$}  \,  \algeffseqoverindex{ \beta }{ \text{\unboldmath$\mathit{J}$} }   \ottsym{.} \, \ottnt{C} }{ \text{\unboldmath$\mathit{I}$} }   \ottsym{/}   \algeffseqoverindex{ \alpha }{ \text{\unboldmath$\mathit{I}$} }   ]    \rightarrow  \ottnt{D}  \vdash   \ottnt{M}    [  \ottnt{v}  /  \mathit{x}  ]    \ottsym{:}  \ottnt{D}
  \label{eqn:polytype:addprop:four}
\end{equation}
using an ordinary substitution lemma with the derivation for judgment
(\ref{eqn:polytype:addprop:three}).

Next, we show that \refprop{sig-forall-move:codomain} makes
$  \ottnt{M}    [  \ottnt{v}  /  \mathit{x}  ]      [   \lambda\!  \, \mathit{y}  \ottsym{.}  \mathsf{handle} \,  \ottnt{E}  [  \mathit{y}  ]  \, \mathsf{with} \, \ottnt{H}  /  \mathit{k}  ]  $ typed at $\ottnt{D}$.
This is possible if
\[
 \Gamma  \vdash   \lambda\!  \, \mathit{y}  \ottsym{.}  \mathsf{handle} \,  \ottnt{E}  [  \mathit{y}  ]  \, \mathsf{with} \, \ottnt{H}  \ottsym{:}   \ottnt{B}    [   \algeffseqoverindex{  \text{\unboldmath$\forall$}  \,  \algeffseqoverindex{ \beta }{ \text{\unboldmath$\mathit{J}$} }   \ottsym{.} \, \ottnt{C} }{ \text{\unboldmath$\mathit{I}$} }   \ottsym{/}   \algeffseqoverindex{ \alpha }{ \text{\unboldmath$\mathit{I}$} }   ]    \rightarrow  \ottnt{D}
\]
is derivable, jointly with the derivation of typing judgment
(\ref{eqn:polytype:addprop:four}).  Namely, it suffices to derive
\[
 \Gamma  \ottsym{,}  \mathit{y} \,  \mathord{:}  \, \ottnt{B} \,  [   \algeffseqoverindex{  \text{\unboldmath$\forall$}  \,  \algeffseqoverindex{ \beta }{ \text{\unboldmath$\mathit{J}$} }   \ottsym{.} \, \ottnt{C} }{ \text{\unboldmath$\mathit{I}$} }   \ottsym{/}   \algeffseqoverindex{ \alpha }{ \text{\unboldmath$\mathit{I}$} }   ]   \vdash  \mathsf{handle} \,  \ottnt{E}  [  \mathit{y}  ]  \, \mathsf{with} \, \ottnt{H}  \ottsym{:}  \ottnt{D} ~.
\]
%
By an observation similar to $\ottnt{v}$, we find that $ \textup{\texttt{\#}\relax}  \mathsf{op}   \ottsym{(}   \ottnt{v}   \ottsym{)} $
is typed at $ \ottnt{B}    [   \algeffseqoverindex{ \ottnt{C} }{ \text{\unboldmath$\mathit{I}$} }   \ottsym{/}   \algeffseqoverindex{ \alpha }{ \text{\unboldmath$\mathit{I}$} }   ]  $ under $\Gamma  \ottsym{,}   \algeffseqoverindex{ \beta }{ \text{\unboldmath$\mathit{J}$} } $ (note that $\ottnt{B}$ is the
codomain type of the type signature of $\mathsf{op}$).  Thus, for the above typing judgment
to hold, it suffices for $\mathit{y}$ to have the same type as
$ \textup{\texttt{\#}\relax}  \mathsf{op}   \ottsym{(}   \ottnt{v}   \ottsym{)} $.  Hence, we will derive
\begin{equation}
 \Gamma  \ottsym{,}  \mathit{y} \,  \mathord{:}  \, \ottnt{B} \,  [   \algeffseqoverindex{  \text{\unboldmath$\forall$}  \,  \algeffseqoverindex{ \beta }{ \text{\unboldmath$\mathit{J}$} }   \ottsym{.} \, \ottnt{C} }{ \text{\unboldmath$\mathit{I}$} }   \ottsym{/}   \algeffseqoverindex{ \alpha }{ \text{\unboldmath$\mathit{I}$} }   ]   \ottsym{,}   \algeffseqoverindex{ \beta }{ \text{\unboldmath$\mathit{J}$} }   \vdash  \mathit{y}  \ottsym{:}   \ottnt{B}    [   \algeffseqoverindex{ \ottnt{C} }{ \text{\unboldmath$\mathit{I}$} }   \ottsym{/}   \algeffseqoverindex{ \alpha }{ \text{\unboldmath$\mathit{I}$} }   ]   ~.
 \label{eqn:polytype:addprop:five}
\end{equation}
Because
$\Gamma  \ottsym{,}  \mathit{y} \,  \mathord{:}  \, \ottnt{B} \,  [   \algeffseqoverindex{  \text{\unboldmath$\forall$}  \,  \algeffseqoverindex{ \beta }{ \text{\unboldmath$\mathit{J}$} }   \ottsym{.} \, \ottnt{C} }{ \text{\unboldmath$\mathit{I}$} }   \ottsym{/}   \algeffseqoverindex{ \alpha }{ \text{\unboldmath$\mathit{I}$} }   ]   \ottsym{,}   \algeffseqoverindex{ \beta }{ \text{\unboldmath$\mathit{J}$} }   \vdash  \mathit{y}  \ottsym{:}   \ottnt{B}    [   \algeffseqoverindex{  \text{\unboldmath$\forall$}  \,  \algeffseqoverindex{ \beta }{ \text{\unboldmath$\mathit{J}$} }   \ottsym{.} \, \ottnt{C} }{ \text{\unboldmath$\mathit{I}$} }   \ottsym{/}   \algeffseqoverindex{ \alpha }{ \text{\unboldmath$\mathit{I}$} }   ]  $, we can derive
$\Gamma  \ottsym{,}  \mathit{y} \,  \mathord{:}  \, \ottnt{B} \,  [   \algeffseqoverindex{  \text{\unboldmath$\forall$}  \,  \algeffseqoverindex{ \beta }{ \text{\unboldmath$\mathit{J}$} }   \ottsym{.} \, \ottnt{C} }{ \text{\unboldmath$\mathit{I}$} }   \ottsym{/}   \algeffseqoverindex{ \alpha }{ \text{\unboldmath$\mathit{I}$} }   ]   \ottsym{,}   \algeffseqoverindex{ \beta }{ \text{\unboldmath$\mathit{J}$} }   \vdash  \mathit{y}  \ottsym{:}    \text{\unboldmath$\forall$}  \,  \algeffseqoverindex{ \beta }{ \text{\unboldmath$\mathit{J}$} }   \ottsym{.} \, \ottnt{B}    [   \algeffseqoverindex{ \ottnt{C} }{ \text{\unboldmath$\mathit{I}$} }   \ottsym{/}   \algeffseqoverindex{ \alpha }{ \text{\unboldmath$\mathit{I}$} }   ]  $ by
\refprop{sig-forall-move:codomain}; and, by instantiating
$  \text{\unboldmath$\forall$}  \,  \algeffseqoverindex{ \beta }{ \text{\unboldmath$\mathit{J}$} }   \ottsym{.} \, \ottnt{B}    [   \algeffseqoverindex{ \ottnt{C} }{ \text{\unboldmath$\mathit{I}$} }   \ottsym{/}   \algeffseqoverindex{ \alpha }{ \text{\unboldmath$\mathit{I}$} }   ]  $ to $ \ottnt{B}    [   \algeffseqoverindex{ \ottnt{C} }{ \text{\unboldmath$\mathit{I}$} }   \ottsym{/}   \algeffseqoverindex{ \alpha }{ \text{\unboldmath$\mathit{I}$} }   ]  $ with $ \algeffseqoverindex{ \beta }{ \text{\unboldmath$\mathit{J}$} } $ in the typing
context, we have succeeded in deriving the typing judgment (\ref{eqn:polytype:addprop:five}).

Thus, if Propositions~\ref{prop:sig-forall-move:domain} and
\ref{prop:sig-forall-move:codomain} held, we could derive
\[
 \Gamma  \vdash    \ottnt{M}    [  \ottnt{v}  /  \mathit{x}  ]      [   \lambda\!  \, \mathit{y}  \ottsym{.}  \mathsf{handle} \,  \ottnt{E}  [  \mathit{y}  ]  \, \mathsf{with} \, \ottnt{H}  /  \mathit{k}  ]    \ottsym{:}  \ottnt{D} ~.
\]
The polymorphic type system in \refsec{polytype:typing} does not actually have
these propositions, but imposing signature restriction produces a type system
that does have them.

\subsection{Signature Restriction}
\label{sec:polytype:signature-restriction}
This section formalizes signature restriction for {\lang} and shows that it
implies Propositions~\ref{prop:sig-forall-move:domain} and
\ref{prop:sig-forall-move:codomain}.

\subsubsection{Definition}
As described in \refsec{overview:our-work}, signature restriction rests on the
polarity of the occurrences of quantified type variables in a type signature.
The polarity is defined in a standard manner, as follows.
\begin{defn}[Polarity of type variable occurrence]
 \label{def:polarity}
 The positive and negative occurrences of a type variable in a type $\ottnt{A}$ are
 defined by induction on $\ottnt{A}$, as follows.
 \begin{itemize}
  \item The occurrence of $\alpha$ in type $\alpha$ is positive.

  \item The positive (resp.\ negative) occurrences of $\alpha$ in $\ottnt{A}  \rightarrow  \ottnt{B}$
        are the negative (resp.\ positive) occurrences of $\alpha$ in
        $\ottnt{A}$ and the positive (resp.\ negative) occurrences of $\alpha$ in
        $\ottnt{B}$.

  \item The positive (resp.\ negative) occurrences of $\alpha$ in $ \text{\unboldmath$\forall$}  \, \beta  \ottsym{.} \, \ottnt{A}$,
        where $\beta$ is supposed to be distinct from $\alpha$,
        are the positive (resp.\ negative) occurrences of $\alpha$ in
        $\ottnt{A}$.
 \end{itemize}

 The strictly positive occurrences of a type variable in a type are defined
 as follows.
 \begin{itemize}
  \item The occurrence of $\alpha$ in type $\alpha$ is strictly positive.

  \item The strictly positive occurrences of $\alpha$ in $\ottnt{A}  \rightarrow  \ottnt{B}$
        are the strictly positive occurrences of $\alpha$ in
        $\ottnt{B}$.

  \item The strictly positive occurrences of $\alpha$ in $ \text{\unboldmath$\forall$}  \, \beta  \ottsym{.} \, \ottnt{A}$,
        where $\beta$ is supposed to be distinct from $\alpha$,
        are the strictly positive occurrences of $\alpha$ in
        $\ottnt{A}$.
 \end{itemize}
\end{defn}

\begin{restatable}[Operations satisfying signature restriction]{defn}{defnSignatureRestriction}
 \label{def:signature-restriction}
 An operation $\mathsf{op}$ having type signature $\mathit{ty} \, \ottsym{(}  \mathsf{op}  \ottsym{)} \,  =  \,   \text{\unboldmath$\forall$}  \,  \algeffseqover{ \alpha }   \ottsym{.} \,  \ottnt{A}  \hookrightarrow  \ottnt{B} $ satisfies
 the signature restriction if and only if:
 (1) the occurrences of each type variable of $ \algeffseqover{ \alpha } $ in $\ottnt{A}$ are only
 negative or strictly positive; and
 (2) the occurrences of each type variable of $ \algeffseqover{ \alpha } $ in $\ottnt{B}$ are only
 positive.
\end{restatable}
The signature restriction allows quantified type variables to occur at strictly
positive positions of the domain type of a type signature.  This is crucial
for many operations, such as \texttt{raise}, \texttt{fail}, and
\texttt{select}, to conform to signature restriction.  The rule \Srule{DFun} plays
an important role to permit this capability, as seen in the next section.

We can easily confirm whether an operation satisfies the signature restriction.  For
example, it is easy to determine that \texttt{get\_id} does not satisfy the signature restriction: since its
type signature is $  \text{\unboldmath$\forall$}  \, \alpha  \ottsym{.} \,   \mathsf{unit}   \hookrightarrow  \alpha  \rightarrow  \alpha $, the quantified type variable $\alpha$
occurs not only at a positive position but also at a negative position in the
codomain type $\alpha  \rightarrow  \alpha$.  By contrast, the operations \texttt{raise} and
\texttt{fail} given in \refsec{overview} satisfy the signature restriction because
their type signature $  \text{\unboldmath$\forall$}  \, \alpha  \ottsym{.} \,   \mathsf{unit}   \hookrightarrow  \alpha $ meets the conditions in
\refdef{signature-restriction}.  To determine whether \texttt{select} satisfies the signature
restriction, we need to extend {\lang} and the polymorphic type system by
introducing other programming constructs such as lists.  Particulars of this
extension are presented in \refsec{ext}.

\subsubsection{Proofs of the Desired Propositions}
\label{sec:polytype:signature-restriction:prop}
The signature restriction enables us to prove
Propositions~\ref{prop:sig-forall-move:domain} and
\ref{prop:sig-forall-move:codomain}, which are crucial to show that reduction
preserves typing.  Below is the key lemma for that.
\begin{restatable}{lemm}{lemmSubtypingForallMove}
\label{lem:subtyping-forall-move}
 Suppose that $\alpha$ does not appear free in $\ottnt{A}$.
 \begin{enumerate}
  \item \label{lem:subtyping-forall-move:neg}
        If the occurrences of $\beta$ in $\ottnt{A}$ are only negative or strictly
        positive, then $\Gamma  \vdash    \text{\unboldmath$\forall$}  \, \alpha  \ottsym{.} \, \ottnt{A}    [  \ottnt{B}  \ottsym{/}  \beta  ]    \sqsubseteq   \ottnt{A}    [   \text{\unboldmath$\forall$}  \, \alpha  \ottsym{.} \, \ottnt{B}  \ottsym{/}  \beta  ]  $.
  \item \label{lem:subtyping-forall-move:pos}
        If the occurrences of $\beta$ in $\ottnt{A}$ are only positive,
        then $\Gamma  \vdash   \ottnt{A}    [   \text{\unboldmath$\forall$}  \, \alpha  \ottsym{.} \, \ottnt{B}  \ottsym{/}  \beta  ]    \sqsubseteq    \text{\unboldmath$\forall$}  \, \alpha  \ottsym{.} \, \ottnt{A}    [  \ottnt{B}  \ottsym{/}  \beta  ]  $.
 \end{enumerate}
\end{restatable}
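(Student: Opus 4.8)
The plan is to derive both parts from a single, prior \emph{monotonicity principle} for type containment: if $\Gamma \vdash C \sqsubseteq D$, then $\Gamma \vdash A[C/\beta] \sqsubseteq A[D/\beta]$ whenever $\beta$ occurs only positively in $A$, and $\Gamma \vdash A[D/\beta] \sqsubseteq A[C/\beta]$ whenever $\beta$ occurs only negatively in $A$. This monotonicity lemma is routine, following by induction on $A$ using only the congruence rules \Srule{Refl}, \Srule{Fun} (exploiting its contravariance in the domain), and \Srule{Poly}. The base containment I will feed into it is instantiation, $\Gamma, \alpha \vdash \forall\alpha.B \sqsubseteq B$, obtained from \Srule{Inst} by instantiating with $\alpha$ itself. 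I will also use repeatedly a derived ``generalize-the-right-hand-side'' step: from $\Gamma,\alpha \vdash C \sqsubseteq D$ with $\alpha \notin \mathit{ftv}(C)$ one obtains $\Gamma \vdash C \sqsubseteq \forall\alpha.D$, by \Srule{Gen} (to get $C \sqsubseteq \forall\alpha.C$), \Srule{Poly} (to get $\forall\alpha.C \sqsubseteq \forall\alpha.D$), and \Srule{Trans}.

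Part (2) needs no induction on $A$. Since $\beta$ occurs only positively, monotonicity applied to $\forall\alpha.B \sqsubseteq B$ in the context $\Gamma,\alpha$ gives $\Gamma,\alpha \vdash A[\forall\alpha.B/\beta] \sqsubseteq A[B/\beta]$. Because $\alpha \notin \mathit{ftv}(A)$ and $\alpha$ is bound inside $\forall\alpha.B$, the type $A[\forall\alpha.B/\beta]$ is $\alpha$-free, so the generalize-the-RHS step yields exactly $\Gamma \vdash A[\forall\alpha.B/\beta] \sqsubseteq \forall\alpha.A[B/\beta]$.

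Part (1) I will prove by induction on $A$ (it does not depend on part (2)). The base cases are immediate: when $A=\beta$ both sides equal $\forall\alpha.B$, so \Srule{Refl} applies; when $A$ is a base type or a variable other than $\beta$, the variable $\beta$ does not occur and the goal $\forall\alpha.A \sqsubseteq A$ is \Srule{Inst}. For $A = \forall\gamma.A'$ with $\gamma$ fresh, the polarity of $\beta$ in $A'$ equals that in $A$, so the induction hypothesis applies to $A'$ in context $\Gamma,\gamma$; wrapping with \Srule{Poly} and then commuting the two leading quantifiers (a derived $\forall$-swap, itself a short chain of \Srule{Inst}, \Srule{Gen}, \Srule{Poly}) gives the result via \Srule{Trans}. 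The interesting case is $A = A_1 \to A_2$, where the hypothesis that $\beta$ occurs only negatively or strictly positively forces $\beta$ to occur only positively in $A_1$ and only negatively or strictly positively in $A_2$ (strictly positive occurrences live solely in the codomain).

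The function case is also where I expect the main obstacle, and the very design of the lemma---permitting strictly positive but not arbitrary positive occurrences in the domain---is there to surmount it. I would like to push the leading $\forall\alpha$ into the codomain with the distributive law \Srule{DFun}, but \Srule{DFun} requires $\alpha$ to be absent from the function's domain, whereas the naive domain $A_1[B/\beta]$ does contain $\alpha$ (through $B$, since $\beta$ may occur in $A_1$). The fix is to first rewrite the domain, \emph{under the binder}, to the $\alpha$-free type $A_1[\forall\alpha.B/\beta]$: working in $\Gamma,\alpha$, monotonicity (positivity of $\beta$ in $A_1$) gives $A_1[\forall\alpha.B/\beta] \sqsubseteq A_1[B/\beta]$, so by \Srule{Fun} contravariantly and then \Srule{Poly} I obtain $\forall\alpha.(A_1[B/\beta]\to A_2[B/\beta]) \sqsubseteq \forall\alpha.(A_1[\forall\alpha.B/\beta]\to A_2[B/\beta])$. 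Now \Srule{DFun} applies, yielding $A_1[\forall\alpha.B/\beta] \to \forall\alpha.A_2[B/\beta]$; finally the induction hypothesis on $A_2$ rewrites the codomain $\forall\alpha.A_2[B/\beta]$ to $A_2[\forall\alpha.B/\beta]$ via \Srule{Fun}, and \Srule{Trans} assembles the required $\forall\alpha.(A_1[B/\beta]\to A_2[B/\beta]) \sqsubseteq A_1[\forall\alpha.B/\beta]\to A_2[\forall\alpha.B/\beta]$. I anticipate that this interleaving of monotonicity, \Srule{Poly}, and \Srule{DFun}---rather than any single step---is the crux, and that the polarity bookkeeping in the function and $\forall$ cases is the part most prone to error and in need of careful tracking.
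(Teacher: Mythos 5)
Your proposal is correct and follows essentially the same route as the paper: your ``monotonicity principle'' is a mild generalization of the paper's \reflem{subtyping-forall-remove} (instantiated at $\Gamma,\alpha \vdash \forall\alpha.B \sqsubseteq B$), your part (2) is literally the paper's argument (monotonicity under the binder, then \Srule{Gen}/\Srule{Poly}/\Srule{Trans}), and your arrow case turns on exactly the same crux --- making the domain $\alpha$-free under the binder so that \Srule{DFun} applies, with the same polarity bookkeeping ($\beta$ only positive in $A_1$, only negative or strictly positive in $A_2$). The only cosmetic difference is that you rewrite the domain directly to $A_1[\forall\alpha.B/\beta]$ via monotonicity where the paper first generalizes it to $\forall\alpha.(A_1[B/\beta])$ and later appeals to case (2) of the induction hypothesis, which is why you can untangle the paper's simultaneous induction into a standalone induction for part (1); both chains are valid.
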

This lemma means that an operation $\mathsf{op}$ conforming to the signature
restriction satisfies Propositions~\ref{prop:sig-forall-move:domain} and
\ref{prop:sig-forall-move:codomain}.  For \refprop{sig-forall-move:domain}:
suppose $\mathit{ty} \, \ottsym{(}  \mathsf{op}  \ottsym{)} \,  =  \,   \text{\unboldmath$\forall$}  \,  \algeffseqoverindex{ \alpha }{ \text{\unboldmath$\mathit{I}$} }   \ottsym{.} \,  \ottnt{A}  \hookrightarrow  \ottnt{B} $ and $\Gamma  \vdash  \ottnt{M}  \ottsym{:}    \text{\unboldmath$\forall$}  \,  \algeffseqoverindex{ \beta }{ \text{\unboldmath$\mathit{J}$} }   \ottsym{.} \, \ottnt{A}    [   \algeffseqoverindex{ \ottnt{C} }{ \text{\unboldmath$\mathit{I}$} }   \ottsym{/}   \algeffseqoverindex{ \alpha }{ \text{\unboldmath$\mathit{I}$} }   ]  $;
since $\mathsf{op}$ satisfies the signature restriction, we can apply case
(\ref{lem:subtyping-forall-move:neg}) of \reflem{subtyping-forall-move}, which
implies $\Gamma  \vdash    \text{\unboldmath$\forall$}  \,  \algeffseqoverindex{ \beta }{ \text{\unboldmath$\mathit{J}$} }   \ottsym{.} \, \ottnt{A}    [   \algeffseqoverindex{ \ottnt{C} }{ \text{\unboldmath$\mathit{I}$} }   \ottsym{/}   \algeffseqoverindex{ \alpha }{ \text{\unboldmath$\mathit{I}$} }   ]    \sqsubseteq   \ottnt{A}    [   \algeffseqoverindex{  \text{\unboldmath$\forall$}  \,  \algeffseqoverindex{ \beta }{ \text{\unboldmath$\mathit{J}$} }   \ottsym{.} \, \ottnt{C} }{ \text{\unboldmath$\mathit{I}$} }   \ottsym{/}   \algeffseqoverindex{ \alpha }{ \text{\unboldmath$\mathit{I}$} }   ]  $; thus,
we can derive $\Gamma  \vdash  \ottnt{M}  \ottsym{:}   \ottnt{A}    [   \algeffseqoverindex{  \text{\unboldmath$\forall$}  \,  \algeffseqoverindex{ \beta }{ \text{\unboldmath$\mathit{J}$} }   \ottsym{.} \, \ottnt{C} }{ \text{\unboldmath$\mathit{I}$} }   \ottsym{/}   \algeffseqoverindex{ \alpha }{ \text{\unboldmath$\mathit{I}$} }   ]  $ by \T{Inst}.
\refprop{sig-forall-move:codomain} is proven similarly by using case
(\ref{lem:subtyping-forall-move:pos}) of \reflem{subtyping-forall-move} instead
of case (\ref{lem:subtyping-forall-move:neg}).

It is easy to prove \reflem{subtyping-forall-move} if the occurrences of $\beta$
in $\ottnt{A}$ are only negative in case (\ref{lem:subtyping-forall-move:neg}).
In fact, the following lemma handles such a case (the statement is generalized
slightly).
\begin{restatable}{lemm}{lemmSubtypingForallRemove}
 \label{lem:subtyping-forall-remove}
 Suppose that $\alpha$ does not appear free in $\ottnt{A}$.
 \begin{enumerate}
  \item \label{lem:subtyping-forall-remove:neg}
        If the occurrences of $\beta$ in $\ottnt{A}$ are only negative,
        then $\Gamma_{{\mathrm{1}}}  \ottsym{,}  \alpha  \ottsym{,}  \Gamma_{{\mathrm{2}}}  \vdash   \ottnt{A}    [  \ottnt{B}  \ottsym{/}  \beta  ]    \sqsubseteq   \ottnt{A}    [   \text{\unboldmath$\forall$}  \, \alpha  \ottsym{.} \, \ottnt{B}  \ottsym{/}  \beta  ]  $.
  \item \label{lem:subtyping-forall-remove:pos}
        If the occurrences of $\beta$ in $\ottnt{A}$ are only positive,
        then $\Gamma_{{\mathrm{1}}}  \ottsym{,}  \alpha  \ottsym{,}  \Gamma_{{\mathrm{2}}}  \vdash   \ottnt{A}    [   \text{\unboldmath$\forall$}  \, \alpha  \ottsym{.} \, \ottnt{B}  \ottsym{/}  \beta  ]    \sqsubseteq   \ottnt{A}    [  \ottnt{B}  \ottsym{/}  \beta  ]  $.
 \end{enumerate}
\end{restatable}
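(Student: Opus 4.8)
The plan is to prove the two statements (\ref{lem:subtyping-forall-remove:neg}) and (\ref{lem:subtyping-forall-remove:pos}) \emph{simultaneously} by induction on the structure of $\ottnt{A}$, since they are mutually dependent: the contravariance of $\sqsubseteq$ on domain types (rule \Srule{Fun}) forces the negative statement about a function type to appeal to the positive statement about its argument type, and vice versa. Throughout I assume the ambient context $\Gamma_{{\mathrm{1}}}, \alpha, \Gamma_{{\mathrm{2}}}$ is well formed, and I adopt the usual Barendregt convention so that every bound type variable occurring in $\ottnt{A}$ is chosen distinct from $\alpha$ and $\beta$ and not free in $\ottnt{B}$; this makes the substitutions $[\ottnt{B}/\beta]$ and $[\forall\alpha.\,\ottnt{B}/\beta]$ commute with the type constructors without capture.

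For the base cases, when $\ottnt{A}$ is a base type $\iota$ or a type variable $\gamma \neq \beta$, the variable $\beta$ does not occur, so both sides of each containment are syntactically identical and \Srule{Refl} closes the goal. When $\ottnt{A} = \beta$, statement (\ref{lem:subtyping-forall-remove:neg}) is vacuous, because the single occurrence of $\beta$ in $\beta$ is positive, contradicting the hypothesis that its occurrences are only negative. The crucial base case is $\ottnt{A} = \beta$ under statement (\ref{lem:subtyping-forall-remove:pos}): here the goal reduces to $\Gamma_{{\mathrm{1}}}, \alpha, \Gamma_{{\mathrm{2}}} \vdash \forall\alpha.\,\ottnt{B} \sqsubseteq \ottnt{B}$, which I obtain from \Srule{Inst} by instantiating the bound $\alpha$ with the \emph{in-scope} type variable $\alpha$ supplied by the context, using $\ottnt{B}[\alpha/\alpha] = \ottnt{B}$. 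This is exactly why the statement is phrased with a context of the shape $\Gamma_{{\mathrm{1}}}, \alpha, \Gamma_{{\mathrm{2}}}$: the premise $\Gamma_{{\mathrm{1}}}, \alpha, \Gamma_{{\mathrm{2}}} \vdash \alpha$ of \Srule{Inst} needs $\alpha$ to be bound.

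For the inductive cases I proceed as follows. If $\ottnt{A} = \ottnt{A_{{\mathrm{1}}}} \rightarrow \ottnt{A_{{\mathrm{2}}}}$, the polarity definition turns ``$\beta$ occurs only negatively in $\ottnt{A}$'' into ``$\beta$ occurs only positively in $\ottnt{A_{{\mathrm{1}}}}$ and only negatively in $\ottnt{A_{{\mathrm{2}}}}$''; I then apply \Srule{Fun}, discharging the (contravariant) domain premise by the induction hypothesis (\ref{lem:subtyping-forall-remove:pos}) on $\ottnt{A_{{\mathrm{1}}}}$ and the codomain premise by (\ref{lem:subtyping-forall-remove:neg}) on $\ottnt{A_{{\mathrm{2}}}}$; statement (\ref{lem:subtyping-forall-remove:pos}) is symmetric, swapping the roles of the two hypotheses. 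If $\ottnt{A} = \forall\gamma.\,\ottnt{A'}$, the polarities of $\beta$ in $\ottnt{A'}$ coincide with those in $\ottnt{A}$, so I apply \Srule{Poly} and invoke the induction hypothesis on $\ottnt{A'}$ in the extended context $\Gamma_{{\mathrm{1}}}, \alpha, (\Gamma_{{\mathrm{2}}}, \gamma)$, which still has the required shape; the side condition $\alpha \notin \mathit{ftv}(\ottnt{A'})$ follows from $\alpha \notin \mathit{ftv}(\ottnt{A})$ together with $\gamma \neq \alpha$.

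The main obstacle is conceptual rather than computational: recognizing that the positive base case forces the quantifier to be \emph{eliminated} by instantiation against a context-supplied variable, which pins down the exact form of the context in the statement and explains why this ``remove'' lemma cannot be strengthened to an arbitrary $\Gamma$ (unlike the companion ``move'' \reflem{subtyping-forall-move}, where the quantifier is genuinely relocated rather than discharged). The remaining work is routine bookkeeping of how polarities split under the function and quantifier constructors, plus checking freshness side conditions, all of which the Barendregt convention handles uniformly.
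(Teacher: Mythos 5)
Your proof is correct and follows essentially the same route as the paper's: a simultaneous structural induction on $\ottnt{A}$ in which the only non-trivial base case is $\ottnt{A} = \beta$ under the positive statement, discharged by \Srule{Inst} instantiating the bound $\alpha$ with the context-supplied $\alpha$ (the negative version of that case being vacuous), with the remaining cases closed by the IHs together with \Srule{Fun} and \Srule{Poly}. Your observation about why the context must have the shape $\Gamma_{{\mathrm{1}}}, \alpha, \Gamma_{{\mathrm{2}}}$ is exactly the point the paper's proof relies on implicitly.
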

\begin{proof}
 We prove both cases simultaneously by structural induction on $\ottnt{A}$.
 The polarity condition on the occurrences of $\beta$ ensures that, if $\ottnt{A} \,  =  \, \beta$, it suffices to show $\Gamma_{{\mathrm{1}}}  \ottsym{,}  \alpha  \ottsym{,}  \Gamma_{{\mathrm{2}}}  \vdash   \text{\unboldmath$\forall$}  \, \alpha  \ottsym{.} \, \ottnt{B}  \sqsubseteq  \ottnt{B}$, which is derived by
 \Srule{Inst}.
\end{proof}

Now, we prove \reflem{subtyping-forall-move} with
\reflem{subtyping-forall-remove} and \Srule{DFun}, which is the key rule for
the signature restriction to allow strictly positive occurrences of quantified type
variables in the domain type of a type signature.
\begin{proof}[Proof of \reflem{subtyping-forall-move}.]
 We prove both cases simultaneously by structural induction on $\ottnt{A}$.  The case
 (\ref{lem:subtyping-forall-move:pos}) can be proven by
 Lemma~\ref{lem:subtyping-forall-remove}: it enables us to show $\Gamma  \ottsym{,}  \alpha  \vdash   \ottnt{A}    [   \text{\unboldmath$\forall$}  \, \alpha  \ottsym{.} \, \ottnt{B}  \ottsym{/}  \beta  ]    \sqsubseteq   \ottnt{A}    [  \ottnt{B}  \ottsym{/}  \beta  ]  $; then, by \Srule{Poly}, \Srule{Gen}, and \Srule{Trans}, we can
 derive $\Gamma  \vdash   \ottnt{A}    [   \text{\unboldmath$\forall$}  \, \alpha  \ottsym{.} \, \ottnt{B}  \ottsym{/}  \beta  ]    \sqsubseteq    \text{\unboldmath$\forall$}  \, \alpha  \ottsym{.} \, \ottnt{A}    [  \ottnt{B}  \ottsym{/}  \beta  ]  $.

 Let us consider case (\ref{lem:subtyping-forall-move:neg}) where $\ottnt{A}$ is
 a function type $\ottnt{C}  \rightarrow  \ottnt{D}$; the other cases are easy to show.
 Suppose that the occurrences of $\beta$ in $\ottnt{C}  \rightarrow  \ottnt{D}$ are only negative or
 strictly positive.
 By definition, the occurrences of $\beta$ in $\ottnt{C}$ are only positive.  Thus,
 by the IH, $\Gamma  \vdash   \ottnt{C}    [   \text{\unboldmath$\forall$}  \, \alpha  \ottsym{.} \, \ottnt{B}  \ottsym{/}  \beta  ]    \sqsubseteq    \text{\unboldmath$\forall$}  \, \alpha  \ottsym{.} \, \ottnt{C}    [  \ottnt{B}  \ottsym{/}  \beta  ]  $.
 Furthermore, by definition, the occurrences of $\beta$ in $\ottnt{D}$
 are only negative or strictly positive.  Thus, by the IH,
 $\Gamma  \vdash    \text{\unboldmath$\forall$}  \, \alpha  \ottsym{.} \, \ottnt{D}    [  \ottnt{B}  \ottsym{/}  \beta  ]    \sqsubseteq   \ottnt{D}    [   \text{\unboldmath$\forall$}  \, \alpha  \ottsym{.} \, \ottnt{B}  \ottsym{/}  \beta  ]  $.
 By \Srule{Fun},
 \begin{equation}
  \Gamma  \vdash   \ottsym{(}    \text{\unboldmath$\forall$}  \, \alpha  \ottsym{.} \, \ottnt{C}    [  \ottnt{B}  \ottsym{/}  \beta  ]    \ottsym{)}  \rightarrow   \text{\unboldmath$\forall$}  \, \alpha  \ottsym{.} \, \ottnt{D}    [  \ottnt{B}  \ottsym{/}  \beta  ]    \sqsubseteq    \ottnt{C}    [   \text{\unboldmath$\forall$}  \, \alpha  \ottsym{.} \, \ottnt{B}  \ottsym{/}  \beta  ]    \rightarrow  \ottnt{D}    [   \text{\unboldmath$\forall$}  \, \alpha  \ottsym{.} \, \ottnt{B}  \ottsym{/}  \beta  ]   ~.
   \label{eqn:subtyping-forall-move:one}
 \end{equation}
 Thus:
 \[\begin{array}{r@{\ }ll}
  \Gamma  \vdash  &   \text{\unboldmath$\forall$}  \, \alpha  \ottsym{.} \, \ottnt{C}    [  \ottnt{B}  \ottsym{/}  \beta  ]    \rightarrow   \ottnt{D}    [  \ottnt{B}  \ottsym{/}  \beta  ]   \\
   \sqsubseteq        &  \text{\unboldmath$\forall$}  \, \alpha  \ottsym{.} \, \ottsym{(}    \text{\unboldmath$\forall$}  \, \alpha  \ottsym{.} \, \ottnt{C}    [  \ottnt{B}  \ottsym{/}  \beta  ]    \ottsym{)}  \rightarrow   \ottnt{D}    [  \ottnt{B}  \ottsym{/}  \beta  ]   &
   \text{(by \Srule{Poly}, \Srule{Fun}, and $\Gamma  \ottsym{,}  \alpha  \vdash    \text{\unboldmath$\forall$}  \, \alpha  \ottsym{.} \, \ottnt{C}    [  \ottnt{B}  \ottsym{/}  \beta  ]    \sqsubseteq   \ottnt{C}    [  \ottnt{B}  \ottsym{/}  \beta  ]  $)} \\
   \sqsubseteq        & \ottsym{(}    \text{\unboldmath$\forall$}  \, \alpha  \ottsym{.} \, \ottnt{C}    [  \ottnt{B}  \ottsym{/}  \beta  ]    \ottsym{)}  \rightarrow    \text{\unboldmath$\forall$}  \, \alpha  \ottsym{.} \, \ottnt{D}    [  \ottnt{B}  \ottsym{/}  \beta  ]   &
   \text{(by \Srule{DFun})} \\
   \sqsubseteq        &  \ottnt{C}    [   \text{\unboldmath$\forall$}  \, \alpha  \ottsym{.} \, \ottnt{B}  \ottsym{/}  \beta  ]    \rightarrow   \ottnt{D}    [   \text{\unboldmath$\forall$}  \, \alpha  \ottsym{.} \, \ottnt{B}  \ottsym{/}  \beta  ]   &
   \text{(by (\ref{eqn:subtyping-forall-move:one}))} ~.
   \end{array}\]
\end{proof}

\subsection{Type Soundness}
\label{sec:polytyp:typesound}

This section shows soundness of the polymorphic type system under the assumption
that all operations satisfy the signature restriction.  As usual, our proof
rests on two properties: progress and subject reduction~\cite{Wright/Felleisen_1994_IC}.  As discussed in
Sections~\ref{sec:polytype:addprop} and
\ref{sec:polytype:signature-restriction}, the signature restriction, together with
type containment, enables us to prove subject reduction.

In this work, type containment is thus a key notion to prove type soundness,
but it complicates certain inversion properties.  In the literature~\cite{Peython-Jones/Vytiniotis/Weirich/Shields_2007_JFP,Dunfield/Krishnaswami_2013_ICFP},
type soundness of a language with subtyping such as $ \sqsubseteq $ has been
shown by translation to another language---typically, System~F---where
the use of subtyping is replaced by ``coercions'' (i.e., certain term representations
for type conversion by subtyping).  This approach is acceptable in the prior work because
the semantics of the source language is determined by the target language.
However, this approach is \emph{not} acceptable in our setting because the terms checked by our type
system should be interpreted by the semantics of {\lang} as they are.  We thus
show soundness of the polymorphic type system directly, without translation to
other languages.

The property that is the most difficult to prove in the direct approach is the inversion of type
containment judgments for function types.
\begin{restatable}{lemm}{lemmSubtypingInvFunMono}
 \label{lem:subtyping-inv-fun-mono}
 If\/ $\Gamma  \vdash  \ottnt{A_{{\mathrm{1}}}}  \rightarrow  \ottnt{A_{{\mathrm{2}}}}  \sqsubseteq  \ottnt{B_{{\mathrm{1}}}}  \rightarrow  \ottnt{B_{{\mathrm{2}}}}$,
 then $\Gamma  \vdash  \ottnt{B_{{\mathrm{1}}}}  \sqsubseteq  \ottnt{A_{{\mathrm{1}}}}$ and $\Gamma  \vdash  \ottnt{A_{{\mathrm{2}}}}  \sqsubseteq  \ottnt{B_{{\mathrm{2}}}}$.
\end{restatable}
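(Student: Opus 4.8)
The plan is to induct on the derivation of $\Gamma \vdash A_1 \rightarrow A_2 \sqsubseteq B_1 \rightarrow B_2$. The first thing to notice is that, because both endpoints are syntactically function types, most rules simply cannot be the last rule used: inspecting the conclusions, \Srule{Inst}, \Srule{Gen}, \Srule{Poly}, and \Srule{DFun} each conclude a judgment in which at least one side is a $\forall$-type, so none of them applies here. Hence the only possible last rules are \Srule{Refl}, \Srule{Fun}, and \Srule{Trans}. For \Srule{Fun} the two desired conclusions $\Gamma \vdash B_1 \sqsubseteq A_1$ and $\Gamma \vdash A_2 \sqsubseteq B_2$ are exactly the premises of the rule. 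For \Srule{Refl} we have $A_1 = B_1$ and $A_2 = B_2$, and both conclusions follow by \Srule{Refl} after extracting $\vdash \Gamma$ from a standard regularity property of $\sqsubseteq$. So the entire content of the lemma concentrates in the \Srule{Trans} case.

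In the \Srule{Trans} case we have $\Gamma \vdash A_1 \rightarrow A_2 \sqsubseteq C$ and $\Gamma \vdash C \sqsubseteq B_1 \rightarrow B_2$ for some unconstrained intermediate type $C$, and the plain induction hypothesis does not apply because neither subderivation has function types on both sides. My plan is therefore to strengthen the statement into a pair of mutually recursive, quantifier-aware inversion lemmas proved by a single induction on derivations: a left-directed one classifying every $C$ with $\Gamma \vdash A_1 \rightarrow A_2 \sqsubseteq C$, and a right-directed one classifying every $C$ with $\Gamma \vdash C \sqsubseteq B_1 \rightarrow B_2$. Each asserts that such a $C$ is, up to the quantifier prefix that \Srule{Gen}, \Srule{Inst}, and \Srule{DFun} may introduce or remove, a function type whose domain is related contravariantly and whose codomain covariantly to the given one. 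The \Srule{Trans} case is then discharged by composing the two classifications at the pivot $C$: the contravariant domain relations chain by \Srule{Trans} to yield $\Gamma \vdash B_1 \sqsubseteq A_1$, and the covariant codomain relations chain to yield $\Gamma \vdash A_2 \sqsubseteq B_2$. In the same induction I would also prove a head-consistency lemma stating that a function type can never be contained in a base type or a type variable, which rules out degenerate pivots $C$ with a rigid head.

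The hard part will be the interaction of \Srule{DFun} with \Srule{Trans}. The distributive law lets a $\forall$ migrate between the prenex position of the intermediate type and a strictly positive position inside its codomain (e.g. $ \text{\unboldmath$\forall$}  \, \alpha  \ottsym{.} \, C_1 \rightarrow C_2 \sqsubseteq C_1 \rightarrow  \text{\unboldmath$\forall$}  \, \alpha  \ottsym{.} \, C_2$), so the ``function structure'' of the pivot $C$ is determined only modulo such migration, and the generalized statements must be phrased so that this movement is absorbed rather than blocking the composition. The observation that makes the whole scheme work — and that I would verify explicitly in each inductive case — is that every rule, and \Srule{DFun} in particular, moves quantifiers only through covariant positions (the prenex and the codomain) and never through the contravariant domain; consequently the domain relationship is never disturbed, and the contravariant half $\Gamma \vdash B_1 \sqsubseteq A_1$ survives. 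Getting the generalized induction hypotheses strong enough to close under both \Srule{Trans} and \Srule{DFun} simultaneously, while remaining provable in the \Srule{Inst} and \Srule{Gen} cases, is the main technical obstacle.
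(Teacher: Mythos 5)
Your plan is essentially the paper's own proof: the paper proves exactly the generalized, quantifier-aware inversion lemma you describe (\reflem{subtyping-inv-fun}, packaged as a single statement with $\forall$-prefixes on both sides rather than as a left/right pair), uses a head-shape lemma (\reflem{subtyping-unqualify}) to force the \Srule{Trans} pivot $C$ to be a prenex-quantified function type, and then recovers the monomorphic statement by stripping the residual quantifiers $\forall\overline{\gamma}$ with \Srule{Inst} and \Srule{Gen}. The one point your sketch leaves implicit is that \Srule{Inst} does not merely remove a quantifier but substitutes a type throughout, so the generalized conclusion must relate $B_1$ and $B_2$ to substitution instances $A_1[\overline{C}/\overline{\alpha}_1]$ and $A_2[\overline{C}/\overline{\alpha}_1]$ under the residual quantifiers rather than to $A_1$ and $A_2$ themselves --- which is precisely the ``strong enough induction hypothesis'' you flag as the remaining obstacle, and which the paper's statement of \reflem{subtyping-inv-fun} resolves.
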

\TS{okay?}
We cannot prove this lemma as it is by induction on the derivation of
$\Gamma  \vdash  \ottnt{A_{{\mathrm{1}}}}  \rightarrow  \ottnt{A_{{\mathrm{2}}}}  \sqsubseteq  \ottnt{B_{{\mathrm{1}}}}  \rightarrow  \ottnt{B_{{\mathrm{2}}}}$ because a premise in the derivation may relate
the (nonpolymorphic) function type on one side to a polymorphic function
type on the other side.  Thus, we need to generalize the assumption to a type containment
judgment that may relate polymorphic function types:
$\Gamma  \vdash   \text{\unboldmath$\forall$}  \,  \algeffseqoverindex{ \alpha }{ \text{\unboldmath$\mathit{I}$} }   \ottsym{.} \, \ottnt{A_{{\mathrm{1}}}}  \rightarrow  \ottnt{A_{{\mathrm{2}}}}  \sqsubseteq   \text{\unboldmath$\forall$}  \,  \algeffseqoverindex{ \beta }{ \text{\unboldmath$\mathit{J}$} }   \ottsym{.} \, \ottnt{B_{{\mathrm{1}}}}  \rightarrow  \ottnt{B_{{\mathrm{2}}}}$.
By investigating the type containment rules, we find that $ \algeffseqoverindex{ \alpha }{ \text{\unboldmath$\mathit{I}$} } $ is split
into three sequences $ \algeffseqoverindex{ \alpha_{{\mathrm{01}}} }{ \text{\unboldmath$\mathit{I_{{\mathrm{01}}}}$} } $, $ \algeffseqoverindex{ \alpha_{{\mathrm{02}}} }{ \text{\unboldmath$\mathit{I_{{\mathrm{02}}}}$} } $, and $ \algeffseqoverindex{ \alpha_{{\mathrm{03}}} }{ \text{\unboldmath$\mathit{I_{{\mathrm{03}}}}$} } $ depending on
how the rules handle the type variables in $ \algeffseqoverindex{ \alpha }{ \text{\unboldmath$\mathit{I}$} } $: those of
$ \algeffseqoverindex{ \alpha_{{\mathrm{01}}} }{ \text{\unboldmath$\mathit{I_{{\mathrm{01}}}}$} } $ stay in $ \algeffseqoverindex{ \beta }{ \text{\unboldmath$\mathit{J}$} } $; those of $ \algeffseqoverindex{ \alpha_{{\mathrm{02}}} }{ \text{\unboldmath$\mathit{I_{{\mathrm{02}}}}$} } $ are quantified in the
return type $\ottnt{B_{{\mathrm{2}}}}$; and those of $ \algeffseqoverindex{ \alpha_{{\mathrm{03}}} }{ \text{\unboldmath$\mathit{I_{{\mathrm{03}}}}$} } $ are instantiated with
some types $ \algeffseqoverindex{ \ottnt{C_{{\mathrm{0}}}} }{ \text{\unboldmath$\mathit{I_{{\mathrm{03}}}}$} } $.
%
Furthermore, we have to take into account certain, unrevealed type variables
$ \algeffseqoverindex{ \gamma }{ \text{\unboldmath$\mathit{K}$} } $ that initially emerge at the outermost position by \T{Gen} and are
subsequently distributed into subcomponent types.  For example:
\[
 \ottnt{A_{{\mathrm{1}}}}  \rightarrow  \ottnt{A_{{\mathrm{2}}}}  \sqsubseteq   \text{\unboldmath$\forall$}  \, \gamma  \ottsym{.} \, \ottnt{A_{{\mathrm{1}}}}  \rightarrow  \ottnt{A_{{\mathrm{2}}}}  \sqsubseteq  \ottsym{(}   \text{\unboldmath$\forall$}  \, \gamma  \ottsym{.} \, \ottnt{A_{{\mathrm{1}}}}  \ottsym{)}  \rightarrow  \ottsym{(}   \text{\unboldmath$\forall$}  \, \gamma  \ottsym{.} \, \ottnt{A_{{\mathrm{2}}}}  \ottsym{)}
\]
where $\gamma \,  \not\in  \,  \mathit{ftv}  (  \ottnt{A_{{\mathrm{1}}}}  )  \,  \mathbin{\cup}  \,  \mathit{ftv}  (  \ottnt{A_{{\mathrm{2}}}}  ) $.
These observations are formulated in the
following inversion lemma for type containment, which implies
\reflem{subtyping-inv-fun-mono}.
We write $\ottsym{\{}   \algeffseqoverindex{ \alpha }{ \text{\unboldmath$\mathit{I}$} }   \ottsym{\}}$ to view the sequence $ \algeffseqoverindex{ \alpha }{ \text{\unboldmath$\mathit{I}$} } $ as a set by ignoring
the order and $\ottsym{\{}   \algeffseqoverindex{ \alpha }{ \text{\unboldmath$\mathit{I}$} }   \ottsym{\}} \,  \mathbin{\uplus}  \, \ottsym{\{}   \algeffseqoverindex{ \beta }{ \text{\unboldmath$\mathit{J}$} }   \ottsym{\}}$ for the union of disjoint sets
$\ottsym{\{}   \algeffseqoverindex{ \alpha }{ \text{\unboldmath$\mathit{I}$} }   \ottsym{\}}$ and $\ottsym{\{}   \algeffseqoverindex{ \beta }{ \text{\unboldmath$\mathit{J}$} }   \ottsym{\}}$.
\begin{restatable}[Type containment inversion: function types]{lemm}{lemmSubtypingInvFun}
 \label{lem:subtyping-inv-fun}
 If\/ $\Gamma  \vdash   \text{\unboldmath$\forall$}  \,  \algeffseqoverindex{ \alpha }{ \text{\unboldmath$\mathit{I}$} }   \ottsym{.} \, \ottnt{A_{{\mathrm{1}}}}  \rightarrow  \ottnt{A_{{\mathrm{2}}}}  \sqsubseteq   \text{\unboldmath$\forall$}  \,  \algeffseqoverindex{ \beta }{ \text{\unboldmath$\mathit{J}$} }   \ottsym{.} \, \ottnt{B_{{\mathrm{1}}}}  \rightarrow  \ottnt{B_{{\mathrm{2}}}}$,
 then there exist $ \algeffseqoverindex{ \alpha_{{\mathrm{1}}} }{ \text{\unboldmath$\mathit{I_{{\mathrm{1}}}}$} } $, $ \algeffseqoverindex{ \alpha_{{\mathrm{2}}} }{ \text{\unboldmath$\mathit{I_{{\mathrm{2}}}}$} } $, $ \algeffseqoverindex{ \gamma }{ \text{\unboldmath$\mathit{K}$} } $, and $ \algeffseqoverindex{ \ottnt{C} }{ \text{\unboldmath$\mathit{I_{{\mathrm{1}}}}$} } $
 such that
 \begin{itemize}
  \item $\ottsym{\{}   \algeffseqoverindex{ \alpha }{ \text{\unboldmath$\mathit{I}$} }   \ottsym{\}} \,  =  \, \ottsym{\{}   \algeffseqoverindex{ \alpha_{{\mathrm{1}}} }{ \text{\unboldmath$\mathit{I_{{\mathrm{1}}}}$} }   \ottsym{\}} \,  \mathbin{\uplus}  \, \ottsym{\{}   \algeffseqoverindex{ \alpha_{{\mathrm{2}}} }{ \text{\unboldmath$\mathit{I_{{\mathrm{2}}}}$} }   \ottsym{\}}$,
  \item $\Gamma  \ottsym{,}   \algeffseqoverindex{ \beta }{ \text{\unboldmath$\mathit{J}$} }   \ottsym{,}   \algeffseqoverindex{ \gamma }{ \text{\unboldmath$\mathit{K}$} }   \vdash   \algeffseqoverindex{ \ottnt{C} }{ \text{\unboldmath$\mathit{I_{{\mathrm{1}}}}$} } $,
  \item $\Gamma  \ottsym{,}   \algeffseqoverindex{ \beta }{ \text{\unboldmath$\mathit{J}$} }   \vdash  \ottnt{B_{{\mathrm{1}}}}  \sqsubseteq    \text{\unboldmath$\forall$}  \,  \algeffseqoverindex{ \gamma }{ \text{\unboldmath$\mathit{K}$} }   \ottsym{.} \, \ottnt{A_{{\mathrm{1}}}}    [   \algeffseqoverindex{ \ottnt{C} }{ \text{\unboldmath$\mathit{I_{{\mathrm{1}}}}$} }   \ottsym{/}   \algeffseqoverindex{ \alpha_{{\mathrm{1}}} }{ \text{\unboldmath$\mathit{I_{{\mathrm{1}}}}$} }   ]  $, and
  \item $\Gamma  \ottsym{,}   \algeffseqoverindex{ \beta }{ \text{\unboldmath$\mathit{J}$} }   \vdash    \text{\unboldmath$\forall$}  \,  \algeffseqoverindex{ \alpha_{{\mathrm{2}}} }{ \text{\unboldmath$\mathit{I_{{\mathrm{2}}}}$} }   \ottsym{.} \,  \text{\unboldmath$\forall$}  \,  \algeffseqoverindex{ \gamma }{ \text{\unboldmath$\mathit{K}$} }   \ottsym{.} \, \ottnt{A_{{\mathrm{2}}}}    [   \algeffseqoverindex{ \ottnt{C} }{ \text{\unboldmath$\mathit{I_{{\mathrm{1}}}}$} }   \ottsym{/}   \algeffseqoverindex{ \alpha_{{\mathrm{1}}} }{ \text{\unboldmath$\mathit{I_{{\mathrm{1}}}}$} }   ]    \sqsubseteq  \ottnt{B_{{\mathrm{2}}}}$.
 \end{itemize}
\end{restatable}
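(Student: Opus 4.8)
The plan is to prove \reflem{subtyping-inv-fun} directly, by induction on the derivation of $\Gamma \vdash \text{\unboldmath$\forall$}\, \algeffseqover{\alpha}.\, A_1 \to A_2 \sqsubseteq \text{\unboldmath$\forall$}\, \algeffseqover{\beta}.\, B_1 \to B_2$; the statement is generalized to polymorphic function types on both sides precisely so that this induction is closed, and \reflem{subtyping-inv-fun-mono} then drops out as the instance where both prefixes are empty, so that $\algeffseqover{\gamma}$, the split, and the witnesses $\algeffseqover{C}$ are all empty and the two conclusions become the premises of \Srule{Fun}. Before the induction I would establish a routine \emph{shape lemma}: whenever a quantified function type is related by $\sqsubseteq$ to a type $E$ (on either side), $E$ is again a quantified function type $\text{\unboldmath$\forall$}\,\algeffseqover{\delta}.\, E_1 \to E_2$. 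This is proved by a separate induction and is exactly what lets the induction hypothesis be invoked on the intermediate type in the transitivity case.

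For the six non-transitivity rules the witnesses can be read off directly, and checking the four conclusions is a short calculation using \Srule{Refl} and \Srule{Inst}. \Srule{Refl} and \Srule{Fun} put everything on the identity side: no variable is genuinely instantiated or pushed inward. \Srule{Inst} peels the outermost quantifier and records it in the instantiated part $\algeffseqover{\alpha_1}$ with the substituted type as its witness, the remaining prefix (which is $\algeffseqover{\beta}$) being instantiated by itself. \Srule{Gen} makes the whole prefix $\algeffseqover{\alpha}$ quantified, its freshly introduced variable simply extending $\algeffseqover{\beta}$, while \Srule{DFun} moves its bound variable into $\algeffseqover{\alpha_2}$. \Srule{Poly} peels a shared outer quantifier, applies the induction hypothesis to the premise, and adds the peeled variable to $\algeffseqover{\alpha_1}$ with itself as witness. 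Note that $\algeffseqover{\gamma}$ and nontrivial witnesses only ever arise through transitivity.

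The transitivity case is the crux. The derivation ends with $\Gamma \vdash T \sqsubseteq S$ and $\Gamma \vdash S \sqsubseteq U$, and by the shape lemma $S = \text{\unboldmath$\forall$}\,\algeffseqover{\delta}.\, S_1 \to S_2$, so the induction hypothesis applies to both steps. The first yields a split of $\algeffseqover{\alpha}$ into instantiated $\algeffseqover{\alpha_a}$ and quantified $\algeffseqover{\alpha_b}$, fresh $\algeffseqover{g}$, and witnesses $\algeffseqover{C_a}$ with $\Gamma,\algeffseqover{\delta} \vdash S_1 \sqsubseteq \text{\unboldmath$\forall$}\,\algeffseqover{g}.\, A_1[\algeffseqover{C_a}/\algeffseqover{\alpha_a}]$ and $\Gamma,\algeffseqover{\delta} \vdash \text{\unboldmath$\forall$}\,\algeffseqover{\alpha_b}.\,\text{\unboldmath$\forall$}\,\algeffseqover{g}.\, A_2[\algeffseqover{C_a}/\algeffseqover{\alpha_a}] \sqsubseteq S_2$; the second yields a split of $\algeffseqover{\delta}$ into $\algeffseqover{\delta_a}$ (instantiated with $\algeffseqover{D_a}$) and $\algeffseqover{\delta_b}$, fresh $\algeffseqover{h}$, with $\Gamma,\algeffseqover{\beta} \vdash B_1 \sqsubseteq \text{\unboldmath$\forall$}\,\algeffseqover{h}.\, S_1[\algeffseqover{D_a}/\algeffseqover{\delta_a}]$ and $\Gamma,\algeffseqover{\beta} \vdash \text{\unboldmath$\forall$}\,\algeffseqover{\delta_b}.\,\text{\unboldmath$\forall$}\,\algeffseqover{h}.\, S_2[\algeffseqover{D_a}/\algeffseqover{\delta_a}] \sqsubseteq B_2$. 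A side fact, forced by well-formedness of the latter domain judgment, is that $\algeffseqover{\delta_b}$ does not occur free in $S_1$. I then take the composite instantiated part to be $\algeffseqover{\alpha_a}$ with witnesses $\algeffseqover{C_a}[\algeffseqover{D_a}/\algeffseqover{\delta_a}]$, the quantified part $\algeffseqover{\alpha_b}$, and $\algeffseqover{\gamma}$ the concatenation $\algeffseqover{h},\algeffseqover{\delta_b},\algeffseqover{g}$; since $A_1,A_2$ contain no $\algeffseqover{\delta}$ the two substitutions collapse into one. Pushing the $\algeffseqover{D_a}/\algeffseqover{\delta_a}$ substitution through the first step's containments (via a type-substitution lemma for $\sqsubseteq$), generalizing the freed $\algeffseqover{\delta_b}$ and $\algeffseqover{h}$ with \Srule{Poly} (using \Srule{Gen} to insert the vacuous quantifiers on the side where they are absent, which is legitimate exactly because $\algeffseqover{\delta_b}$ avoids $S_1$), and finally chaining with the second step's containments by \Srule{Trans}, separately for domain and codomain, produces the two required judgments.

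The main obstacle is this composition. Its difficulty is threefold: (i) guessing the correct merged split, fresh-variable list, and witness list; (ii) the context and scope bookkeeping needed to push an instantiating substitution from the second step through the generalized types of the first step, which requires both weakening and a substitution lemma for $\sqsubseteq$; and (iii) matching quantifier prefixes, since \Srule{Poly} generalizes both sides in lockstep while the two steps introduce their quantifiers in different orders, so a derivable reordering of adjacent independent quantifiers must be applied before the concluding \Srule{Trans}. The auxiliary ingredients---weakening, type substitution for $\sqsubseteq$, and quantifier reordering---are standard structural properties that I would state as lemmas and discharge by straightforward induction on derivations.
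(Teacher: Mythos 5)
Your proposal matches the paper's proof essentially step for step: the same induction on the containment derivation, the same shape lemma (the paper's \reflem{subtyping-unqualify} together with \reflem{subtyping-unqualify-no-tyvar}) to recover a quantified function type for the intermediate type in the transitivity case, the same per-rule witness assignments, and the same composition of splits, witnesses, and fresh-variable lists under \Srule{Trans}. The only cosmetic difference is that the paper bakes the freshness of the $\algeffseqoverindex{ \gamma }{ \text{\unboldmath$\mathit{K}$} }$ with respect to $\ottnt{A_{{\mathrm{1}}}}, \ottnt{A_{{\mathrm{2}}}}$ into a strengthened induction hypothesis rather than appealing to it ad hoc as you do, but this does not change the argument.
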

\noindent
In this statement, the sequence $ \algeffseqoverindex{ \alpha_{{\mathrm{2}}} }{ \text{\unboldmath$\mathit{I_{{\mathrm{2}}}}$} } $ corresponds to $ \algeffseqoverindex{ \alpha_{{\mathrm{02}}} }{ \text{\unboldmath$\mathit{I_{{\mathrm{02}}}}$} } $ in the above informal description and $ \algeffseqoverindex{ \alpha_{{\mathrm{1}}} }{ \text{\unboldmath$\mathit{I_{{\mathrm{1}}}}$} } $ includes the type variables that remain in
$ \algeffseqoverindex{ \beta }{ \text{\unboldmath$\mathit{J}$} } $ (i.e., $ \algeffseqoverindex{ \alpha_{{\mathrm{01}}} }{ \text{\unboldmath$\mathit{I_{{\mathrm{01}}}}$} } $) and those instantiated with some types in $ \algeffseqoverindex{ \ottnt{C} }{ \text{\unboldmath$\mathit{I_{{\mathrm{1}}}}$} } $ (i.e., $ \algeffseqoverindex{ \alpha_{{\mathrm{03}}} }{ \text{\unboldmath$\mathit{I_{{\mathrm{03}}}}$} } $).  Type substitution $ [   \algeffseqoverindex{ \ottnt{C} }{ \text{\unboldmath$\mathit{I_{{\mathrm{1}}}}$} }   \ottsym{/}   \algeffseqoverindex{ \alpha_{{\mathrm{1}}} }{ \text{\unboldmath$\mathit{I_{{\mathrm{1}}}}$} }   ] $ replaces a type variable in $ \algeffseqoverindex{ \alpha_{{\mathrm{01}}} }{ \text{\unboldmath$\mathit{I_{{\mathrm{01}}}}$} } $ with itself.

We also prove other lemmas such as weakening, substitution, canonical forms, and
value inversion.  We omit their formal statements and proofs in this
paper; the details can be found in \ifappendix Appendix~\ref{sec:app:proof:sound-typing}. \else the supplementary material. \fi

Now, we show progress and subject reduction.  In what follows, the metavariable
$\Delta$ ranges over typing contexts that consist of only type variable
bindings.  Note that the polymorphic type system is not equipped with a
mechanism to track effects, so the operations that are carried out may not be handled.
\begin{restatable}[Progress]{lemm}{lemmProgress}
 \label{lem:progress}
 If\/ $\Delta  \vdash  \ottnt{M}  \ottsym{:}  \ottnt{A}$, then:
 \begin{itemize}
  \item $\ottnt{M}  \longrightarrow  \ottnt{M'}$ for some $\ottnt{M'}$;
  \item $\ottnt{M}$ is a value; or
  \item $\ottnt{M} \,  =  \,  \ottnt{E}  [   \textup{\texttt{\#}\relax}  \mathsf{op}   \ottsym{(}   \ottnt{v}   \ottsym{)}   ] $ for some $\ottnt{E}$, $\mathsf{op}$, and $\ottnt{v}$
 such that $\mathsf{op} \,  \not\in  \, \ottnt{E}$.
 \end{itemize}
\end{restatable}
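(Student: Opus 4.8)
The plan is to prove the statement by induction on the derivation of $\Delta \vdash M : A$, establishing for each term the trichotomy that $M$ reduces, is a value, or is stuck at an unhandled operation call $E[\#\mathsf{op}(v)]$ with $\mathsf{op} \notin E$. Since $\Delta$ binds only type variables, $M$ is closed, so the rule \T{Var} can never conclude the derivation---this is the only place the restriction on $\Delta$ is genuinely used. For the cases that form a redex I will appeal to the canonical forms and value inversion lemmas mentioned earlier, which must see through \T{Inst} and \T{Gen} to expose the top-level shape of a value of a given type.

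Several cases are immediate or purely congruential. By \T{Const} and \T{Abs}, constants and $\lambda$-abstractions are values. The rules \T{Inst} and \T{Gen} leave the subject term unchanged and keep the context type-variable-only (for \T{Gen}, $\Delta, \alpha$ is still admissible), so the induction hypothesis applied to the premise yields the conclusion directly. For \T{App}, \T{Op}, and \T{Handle} I first apply the induction hypothesis to the evaluated subterm and push any reduction, or any stuck operation call, through the corresponding evaluation-context frame ($[]\,M_2$, $v_1\,[]$, $\#\mathsf{op}([])$, or $\mathsf{handle}\,[]\,\mathsf{with}\,H$). Propagating the third alternative requires checking that prefixing such a frame preserves $\mathsf{op}$-freeness, which holds because none of these frames except the $\mathsf{handle}$ frame installs a handler.

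The genuine work lies in the cases where the evaluated subterms are fully reduced. For \T{App} with $M = v_1\,v_2$, canonical forms on the function type $A \to B$ gives that $v_1$ is either $\lambda x.\,M'$, whence \R{Beta} applies, or a constant $c$, whence \R{Const} applies provided $\zeta(c, v_2)$ is defined; the latter is guaranteed by the standard totality assumption on $\zeta$ for well-typed arguments. For \T{Op} with argument a value $v$, the term is already of the form $[][\#\mathsf{op}(v)]$, landing in the third alternative with the empty (hence $\mathsf{op}$-free) context. For \T{Handle} with $M = \mathsf{handle}\,M'\,\mathsf{with}\,H$: if $M'$ is a value, \R{Return} fires using the return clause $H^{\mathsf{return}}$, which every handler possesses by construction; if $M' = E'[\#\mathsf{op}(v)]$ with $\mathsf{op} \notin E'$, I split on whether $H$ has an operation clause for $\mathsf{op}$. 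If it does, then $\mathsf{op} \notin E'$ together with $H(\mathsf{op})$ defined are exactly the side conditions of \R{Handle}, so $M$ reduces. If it does not, then $E = \mathsf{handle}\,E'\,\mathsf{with}\,H$ is itself $\mathsf{op}$-free---any inner $\mathsf{handle}$ handling $\mathsf{op}$ would have to lie within $E'$, contradicting $\mathsf{op} \notin E'$, and the outermost frame does not handle $\mathsf{op}$---so $M = E[\#\mathsf{op}(v)]$ falls into the third alternative.

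The main obstacle is precisely this bookkeeping around $\mathsf{op}$-free evaluation contexts in the \T{Handle} case: the case analysis on whether the handler's operations contain $\mathsf{op}$ is what makes the third disjunct stable under the $\mathsf{handle}$ frame, and it is exactly where the definition of $\mathsf{op}$-free contexts does its work. A secondary point of care---discharged by a separately proved lemma rather than inside this induction---is that canonical forms must be robust against type containment and implicit polymorphism, since a value of type $A \to B$ may be typed via intervening \T{Inst} and \T{Gen} steps rather than directly by \T{Abs} or \T{Const}.
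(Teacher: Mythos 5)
Your proposal is correct and follows essentially the same route as the paper's proof: induction on the typing derivation, with \T{Var} ruled out by the type-variable-only context, congruence through evaluation-context frames for the three disjuncts, canonical-forms/inversion lemmas (robust to \T{Inst}/\T{Gen}) for the $\beta$/constant redexes, and the case split on whether $H$ handles $\mathsf{op}$ to keep the third alternative stable under the $\mathsf{handle}$ frame. The only place you compress more than the paper is the constant-application subcase, where the paper explicitly chains the value-inversion and type-containment-inversion lemmas to show $\zeta(c_1,c_2)$ is defined, but you correctly flag that this is discharged by those separately proved lemmas.
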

\begin{restatable}[Subject reduction]{lemm}{lemmSubjectRed}
 \label{lem:subject-red}
 Suppose that all operations satisfy the signature restriction.
 \begin{enumerate}
  \item If\/ $\Delta  \vdash  \ottnt{M_{{\mathrm{1}}}}  \ottsym{:}  \ottnt{A}$ and $\ottnt{M_{{\mathrm{1}}}}  \rightsquigarrow  \ottnt{M_{{\mathrm{2}}}}$,
        then $\Delta  \vdash  \ottnt{M_{{\mathrm{2}}}}  \ottsym{:}  \ottnt{A}$.
  \item If\/ $\Delta  \vdash  \ottnt{M_{{\mathrm{1}}}}  \ottsym{:}  \ottnt{A}$ and $\ottnt{M_{{\mathrm{1}}}}  \longrightarrow  \ottnt{M_{{\mathrm{2}}}}$,
        then $\Delta  \vdash  \ottnt{M_{{\mathrm{2}}}}  \ottsym{:}  \ottnt{A}$.
 \end{enumerate}
\end{restatable}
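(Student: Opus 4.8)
The plan is to prove both parts together, reducing Part~2 to Part~1. For Part~1 I would perform a case analysis on the rule deriving $\ottnt{M_{{\mathrm{1}}}}  \rightsquigarrow  \ottnt{M_{{\mathrm{2}}}}$ (\RwoP{Const}, \RwoP{Beta}, \RwoP{Return}, \RwoP{Handle}); for Part~2, since the only evaluation rule is \E{Eval}, I would establish a replacement lemma for evaluation contexts and compose it with Part~1. Throughout I rely on the auxiliary results the paper already assumes—weakening, the substitution lemma, canonical forms—and, crucially, the function-type inversion \reflem{subtyping-inv-fun} (hence \reflem{subtyping-inv-fun-mono}), together with generation lemmas for each term form that absorb the subsumption rules \T{Inst} and \T{Gen} that may sit on top of a derivation.

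For the routine cases of Part~1: \RwoP{Const} follows from the assumption that $ \zeta $ is consistent with the first-order types $ \mathit{ty}  (  \ottnt{c}  ) $ of constants. For \RwoP{Beta}, I would invert $\Delta  \vdash  \ottsym{(}   \lambda\!  \, \mathit{x}  \ottsym{.}  \ottnt{M}  \ottsym{)} \, \ottnt{v}  \ottsym{:}  \ottnt{A}$ with a generation lemma for applications to obtain $\Delta'  \vdash   \lambda\!  \, \mathit{x}  \ottsym{.}  \ottnt{M}  \ottsym{:}  \ottnt{A_{{\mathrm{2}}}}  \rightarrow  \ottnt{A_{{\mathrm{1}}}}$ and $\Delta'  \vdash  \ottnt{v}  \ottsym{:}  \ottnt{A_{{\mathrm{2}}}}$ (where $\Delta'$ extends $\Delta$ with the type variables possibly introduced by \T{Gen}); a generation lemma for abstractions, combined with \reflem{subtyping-inv-fun-mono}, then yields $\Delta'  \ottsym{,}  \mathit{x} \,  \mathord{:}  \, \ottnt{A_{{\mathrm{2}}}}  \vdash  \ottnt{M}  \ottsym{:}  \ottnt{A_{{\mathrm{1}}}}$, and the substitution lemma gives $\Delta'  \vdash   \ottnt{M}    [  \ottnt{v}  /  \mathit{x}  ]    \ottsym{:}  \ottnt{A_{{\mathrm{1}}}}$, from which \T{Inst} and \T{Gen} recover $\ottnt{A}$. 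The case \RwoP{Return} is analogous: inverting the handle typing (\T{Handle}) and the return clause (\THrule{Return}) gives $\Delta  \ottsym{,}  \mathit{x} \,  \mathord{:}  \, \ottnt{A_{{\mathrm{0}}}}  \vdash  \ottnt{M}  \ottsym{:}  \ottnt{B_{{\mathrm{0}}}}$ with $\Delta  \vdash  \ottnt{v}  \ottsym{:}  \ottnt{A_{{\mathrm{0}}}}$, and substitution closes the case.

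The heart of Part~1 is \RwoP{Handle}, and here I would reuse the derivation spelled out in \refsec{polytype:addprop}. Generation and inversion lemmas for the handled expression, the operation call (\T{Op}), and the operation clause (\THrule{Op}) are meant to yield exactly the typing facts numbered~(\ref{eqn:polytype:addprop:one}) and~(\ref{eqn:polytype:addprop:two}) there: namely $\Gamma  \ottsym{,}   \algeffseqoverindex{ \alpha }{ \text{\unboldmath$\mathit{I}$} }   \ottsym{,}  \mathit{x} \,  \mathord{:}  \, \ottnt{A}  \ottsym{,}  \mathit{k} \,  \mathord{:}  \, \ottnt{B}  \rightarrow  \ottnt{D}  \vdash  \ottnt{M}  \ottsym{:}  \ottnt{D}$ and $\Gamma  \ottsym{,}   \algeffseqoverindex{ \beta }{ \text{\unboldmath$\mathit{J}$} }   \vdash  \ottnt{v}  \ottsym{:}   \ottnt{A}    [   \algeffseqoverindex{ \ottnt{C} }{ \text{\unboldmath$\mathit{I}$} }   \ottsym{/}   \algeffseqoverindex{ \alpha }{ \text{\unboldmath$\mathit{I}$} }   ]  $. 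Because every operation satisfies the signature restriction, Propositions~\ref{prop:sig-forall-move:domain} and~\ref{prop:sig-forall-move:codomain} hold (they follow from \reflem{subtyping-forall-move}), and applying them—together with \T{Gen}, \T{Inst}, and the substitution lemma, exactly as in \refsec{polytype:addprop}—produces a typing of the contractum at type $\ottnt{D}$. I expect this to be the main obstacle: not because of new ideas, but because it is where all the machinery (type containment, \Srule{DFun}, signature restriction, and the delicate inversion of the operation-clause and operation-call typings through \T{Inst} and \T{Gen}) must be combined correctly.

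For Part~2, I would prove a replacement lemma by induction on the structure of $\ottnt{E}$: if $\Delta  \vdash   \ottnt{E}  [  \ottnt{M_{{\mathrm{1}}}}  ]   \ottsym{:}  \ottnt{A}$, then there exist a context $\Delta'$ (extending $\Delta$ only with type variables) and a type $\ottnt{B}$ with $\Delta'  \vdash  \ottnt{M_{{\mathrm{1}}}}  \ottsym{:}  \ottnt{B}$ such that $\Delta'  \vdash  \ottnt{M_{{\mathrm{2}}}}  \ottsym{:}  \ottnt{B}$ implies $\Delta  \vdash   \ottnt{E}  [  \ottnt{M_{{\mathrm{2}}}}  ]   \ottsym{:}  \ottnt{A}$; the inductive step inverts the frame at the top of $\ottnt{E}$ using the generation lemmas above. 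Since the evaluation contexts of {\lang} place the hole under no binder, $\ottnt{M_{{\mathrm{1}}}}$ is the closed redex and Part~1 applies under the extended context $\Delta'$ (by weakening), giving $\Delta'  \vdash  \ottnt{M_{{\mathrm{2}}}}  \ottsym{:}  \ottnt{B}$; the replacement lemma then reassembles $\Delta  \vdash   \ottnt{E}  [  \ottnt{M_{{\mathrm{2}}}}  ]   \ottsym{:}  \ottnt{A}$, completing the proof.
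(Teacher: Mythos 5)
Your proposal is correct and matches the paper's proof in all essentials: the crux of the \RwoP{Handle} case is handled exactly as in \refsec{polytype:addprop}, via the signature restriction and \reflem{subtyping-forall-move} applied to both the operation argument and the captured continuation, with the type-containment inversion lemmas and substitution doing the rest. The only differences are organizational—the paper runs Part~1 as an induction on the typing derivation (absorbing \T{Gen}/\T{Inst} through the induction hypothesis rather than through explicit generation lemmas, though its value-inversion lemmas play precisely that role) and proves Part~2 by the same induction instead of a separate replacement lemma—which changes nothing of substance.
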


We write $ \longrightarrow^{*} $ for the reflexive, transitive closure of $ \longrightarrow $ and
$\ottnt{M}  \centernot\longrightarrow$ to mean that there exists no term $\ottnt{M'}$ such that $\ottnt{M}  \longrightarrow  \ottnt{M'}$.
\begin{restatable}[Type soundness]{thm}{thmTypeSoundness}
 \label{thm:type-sound}
 Suppose that all operations satisfy the signature restriction.
 If\/ $\Delta  \vdash  \ottnt{M}  \ottsym{:}  \ottnt{A}$ and $\ottnt{M}  \longrightarrow^{*}  \ottnt{M'}$ and $\ottnt{M'}  \centernot\longrightarrow$, then:
 \begin{itemize}
  \item $\ottnt{M'}$ is a value; or
  \item $\ottnt{M'} \,  =  \,  \ottnt{E}  [   \textup{\texttt{\#}\relax}  \mathsf{op}   \ottsym{(}   \ottnt{v}   \ottsym{)}   ] $ for some $\ottnt{E}$, $\mathsf{op}$, and $\ottnt{v}$
        such that $\mathsf{op} \,  \not\in  \, \ottnt{E}$.
 \end{itemize}
\end{restatable}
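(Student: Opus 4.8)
The plan is to obtain the theorem as the standard corollary of progress and subject reduction in the manner of Wright and Felleisen. The two lemmas doing the real work, \reflem{progress} and \reflem{subject-red}, are already in hand, so all that remains is to chain them together along the reduction sequence.

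First I would show, by induction on the number of steps in $\Delta  \vdash  \ottnt{M}  \longrightarrow^{*}  \ottnt{M'}$, that $\Delta  \vdash  \ottnt{M'}  \ottsym{:}  \ottnt{A}$. The base case is immediate since $\ottnt{M'} = \ottnt{M}$. For the inductive step, writing the sequence as $\ottnt{M}  \longrightarrow  \ottnt{M''}  \longrightarrow^{*}  \ottnt{M'}$, part~(2) of \reflem{subject-red}---invoking the global hypothesis that every operation satisfies the signature restriction---gives $\Delta  \vdash  \ottnt{M''}  \ottsym{:}  \ottnt{A}$, and the induction hypothesis applied to $\ottnt{M''}  \longrightarrow^{*}  \ottnt{M'}$ then yields $\Delta  \vdash  \ottnt{M'}  \ottsym{:}  \ottnt{A}$. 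The typing context $\Delta$, consisting only of type-variable bindings, is carried unchanged throughout, so no extra weakening bookkeeping is needed at this stage.

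Having established $\Delta  \vdash  \ottnt{M'}  \ottsym{:}  \ottnt{A}$, I would apply \reflem{progress} to $\ottnt{M'}$. This produces three alternatives: either $\ottnt{M'}  \longrightarrow  \ottnt{M''}$ for some $\ottnt{M''}$; or $\ottnt{M'}$ is a value; or $\ottnt{M'} = \ottnt{E}  [   \textup{\texttt{\#}\relax}  \mathsf{op}   \ottsym{(}   \ottnt{v}   \ottsym{)}   ] $ for some $\ottnt{E}$, $\mathsf{op}$, $\ottnt{v}$ with $\mathsf{op} \,  \not\in  \, \ottnt{E}$. The first alternative directly contradicts the hypothesis $\ottnt{M'}  \centernot\longrightarrow$, so it is ruled out, leaving precisely the two cases asserted in the statement.

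The theorem is therefore a short argument whose difficulty is entirely inherited from the lemmas it cites. The only points requiring care inside its own proof are making the induction on reduction length explicit and ensuring that progress is applied to the appropriately typed endpoint $\ottnt{M'}$ rather than the original $\ottnt{M}$. The genuine obstacle lies upstream, in \reflem{subject-red}: its case for the \RwoP{Handle} reduction is what forces Propositions~\ref{prop:sig-forall-move:domain} and~\ref{prop:sig-forall-move:codomain}, which in turn rest on \reflem{subtyping-forall-move} (hence on \Srule{DFun}) and on the delicate type-containment inversion of \reflem{subtyping-inv-fun}. Once those are in place, the soundness theorem itself presents no substantive difficulty.
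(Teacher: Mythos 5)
Your proposal is correct and follows exactly the same route as the paper, which proves the theorem directly from \reflem{progress} and \reflem{subject-red}; your explicit induction on the length of the reduction sequence is just the standard unfolding of that one-line argument. Nothing further is needed.
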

\begin{proof}
 By progress (\reflem{progress}) and subject reduction (\reflem{subject-red}).
\end{proof}

\begin{remark}
  It is natural to ask whether the signature restriction can be further relaxed.
  Consider a type signature $  \text{\unboldmath$\forall$}  \, \alpha  \ottsym{.} \,  \ottnt{A}  \hookrightarrow  \ottnt{B} $.  A negative occurrence of
  $\alpha$ in $\ottnt{B}$ is problematic as \texttt{get\_id}, which has type
  signature $  \text{\unboldmath$\forall$}  \, \alpha  \ottsym{.} \,   \mathsf{unit}   \hookrightarrow  \alpha  \rightarrow  \alpha $, is unsafe (see \refsec{overview:unsafety}).
  A non-strictly positive occurrence of $\alpha$ in $\ottnt{A}$ is also problematic, as the following example shows.
  Let us consider a calculus with $ \mathsf{int} $, $ \mathsf{bool} $, and sum types $ \ottnt{D_{{\mathrm{1}}}}  +  \ottnt{D_{{\mathrm{2}}}} $
  for simplicity (we write $\mathsf{inl} \, \ottnt{M}$ and $\mathsf{inr} \, \ottnt{M}$ for injection into sum types).
  Consider an operation \texttt{op} of the signature $  \text{\unboldmath$\forall$}  \, \alpha  \ottsym{.} \,  \ottsym{(}  \ottsym{(}  \alpha  \rightarrow   \mathsf{int}   \ottsym{)}  \rightarrow  \alpha  \ottsym{)}  \hookrightarrow  \alpha $
  and let
  \[\begin{array}{lll}
    \ottnt{v} &\defeq&  \lambda\!  \, \mathit{f}  \ottsym{.}   \lambda\!  \, \mathit{x}  \ottsym{.}  \mathsf{inr} \, \ottsym{(}  \mathit{f} \, \ottsym{(}   \lambda\!  \, \mathit{y}  \ottsym{.}  \mathsf{inl} \, \mathit{x}  \ottsym{)}  \ottsym{)}
    ~:~
    \ottsym{(}  \ottsym{(}  \beta  \rightarrow  \ottsym{(}   \beta  +   \mathsf{int}    \ottsym{)}  \ottsym{)}  \rightarrow   \mathsf{int}   \ottsym{)}  \rightarrow  \ottsym{(}  \beta  \rightarrow  \ottsym{(}   \beta  +   \mathsf{int}    \ottsym{)}  \ottsym{)}
    \\
    \ottnt{M} &\defeq& \mathrm{let}\,\mathit{g} =  \textup{\texttt{\#}\relax}  \mathsf{op}   \ottsym{(}   \ottnt{v}   \ottsym{)} \,\mathrm{in}\,\mathsf{case} \, \mathit{g} \,  0  \, \mathsf{of} \, \mathsf{inl} \, \mathit{z}  \rightarrow  \mathit{z}  \ottsym{;} \, \mathsf{inr} \, \mathit{z}  \rightarrow   \ottnt{E}  [  \mathit{g} \,  \mathsf{true}   ] 
    ~:~
     \mathsf{int}  ,
   \end{array}\]
  where $\ottnt{E}$ is an evaluation context such that $\mathit{x} \,  \mathord{:}  \,   \mathsf{bool}   +   \mathsf{int}    \vdash   \ottnt{E}  [  \mathit{x}  ]   \ottsym{:}   \mathsf{int} $ and $ \ottnt{E}  [  \mathsf{inr} \,  \mathsf{true}   ] $ causes a run-time error (it is easy to construct such $\ottnt{E}$).
  It is not difficult to check that $\ottnt{M}$ has type $ \mathsf{int} $.
  In $ \textup{\texttt{\#}\relax}  \mathsf{op}   \ottsym{(}   \ottnt{v}   \ottsym{)} $, the type variable $\alpha$ bound by the type signature is instantiated with $\beta  \rightarrow  \ottsym{(}   \beta  +   \mathsf{int}    \ottsym{)}$, and thus $\mathit{g}$ has type $ \text{\unboldmath$\forall$}  \, \beta  \ottsym{.} \, \beta  \rightarrow  \ottsym{(}   \beta  +   \mathsf{int}    \ottsym{)}$.
  The type variable $\beta$ is instantiated with $ \mathsf{int} $ in $\mathit{g} \,  0 $ and with $ \mathsf{bool} $ in $\mathit{g} \,  \mathsf{true} $.
  Then the counterexample is given by
  \begin{equation*}
    \mathsf{handle} \, \ottnt{M} \, \mathsf{with} \, \mathsf{return} \, \mathit{x}  \rightarrow  \mathit{x}  \ottsym{;}  \mathsf{op}  \ottsym{(}  \mathit{x}  \ottsym{,}  \mathit{k}  \ottsym{)}  \rightarrow  \mathit{k} \, \ottsym{(}  \mathit{x} \, \mathit{k}  \ottsym{)}
    ~:~  \mathsf{int}   ,
  \end{equation*}
  which is reduced to $\mathsf{handle} \,  \ottnt{E}  [  \mathsf{inr} \,  \mathsf{true}   ]  \, \mathsf{with} \, \mathsf{return} \, \mathit{x}  \rightarrow  \mathit{x}  \ottsym{;}  \mathsf{op}  \ottsym{(}  \mathit{x}  \ottsym{,}  \mathit{k}  \ottsym{)}  \rightarrow  \mathit{k} \, \ottsym{(}  \mathit{x} \, \mathit{k}  \ottsym{)}$ and causes an error.
\end{remark}

\section{An Extension of {\lang}}
\label{sec:ext}

This section demonstrates the extensibility of the signature restriction.  To this
end, we extend {\lang}, the polymorphic type system, and the signature restriction
with products, sums, and lists and show soundness of the extended polymorphic
type system under the extended signature restriction.  We also provide a few
examples of operations that satisfy the extended signature restriction.

\subsection{Extended Language}

\begin{figure}[t]
 \[
 \begin{array}{lll}
  \textbf{Terms} \quad \ottnt{M} & ::= &
   \textgray{
    \mathit{x} \mid \ottnt{c} \mid  \lambda\!  \, \mathit{x}  \ottsym{.}  \ottnt{M} \mid \ottnt{M_{{\mathrm{1}}}} \, \ottnt{M_{{\mathrm{2}}}} \mid
     \textup{\texttt{\#}\relax}  \mathsf{op}   \ottsym{(}   \ottnt{M}   \ottsym{)}  \mid \mathsf{handle} \, \ottnt{M} \, \mathsf{with} \, \ottnt{H} \mid
   }\;
   \ottsym{(}  \ottnt{M_{{\mathrm{1}}}}  \ottsym{,}  \ottnt{M_{{\mathrm{2}}}}  \ottsym{)} \mid \\ && \pi_1  \ottnt{M} \mid \pi_2  \ottnt{M} \mid
   \mathsf{inl} \, \ottnt{M} \mid \mathsf{inr} \, \ottnt{M} \mid
   \mathsf{case} \, \ottnt{M} \, \mathsf{of} \, \mathsf{inl} \, \mathit{x}  \rightarrow  \ottnt{M_{{\mathrm{1}}}}  \ottsym{;} \, \mathsf{inr} \, \mathit{y}  \rightarrow  \ottnt{M_{{\mathrm{2}}}} \mid \\ &&
    \mathsf{nil}  \mid \mathsf{cons} \, \ottnt{M} \mid
   \mathsf{case} \, \ottnt{M} \, \mathsf{of} \, \mathsf{nil} \, \rightarrow  \ottnt{M_{{\mathrm{1}}}}  \ottsym{;} \, \mathsf{cons} \, \mathit{x}  \rightarrow  \ottnt{M_{{\mathrm{2}}}} \mid
   \mathsf{fix} \, \mathit{f}  \ottsym{.}   \lambda\!  \, \mathit{x}  \ottsym{.}  \ottnt{M}
   \\[.5ex]
  \textbf{Values} \quad \ottnt{v} & ::= &
   \textgray{\ottnt{c} \mid  \lambda\!  \, \mathit{x}  \ottsym{.}  \ottnt{M} \mid}\;
   \ottsym{(}  \ottnt{v_{{\mathrm{1}}}}  \ottsym{,}  \ottnt{v_{{\mathrm{2}}}}  \ottsym{)} \mid
   \mathsf{inl} \, \ottnt{v} \mid \mathsf{inr} \, \ottnt{v} \mid  \mathsf{nil}  \mid \mathsf{cons} \, \ottnt{v}
   \\[.5ex]
  \textbf{Types} \quad \ottnt{A}, \ottnt{B}, \ottnt{C}, \ottnt{D} & ::= &
   \textgray{\alpha \mid \iota \mid \ottnt{A}  \rightarrow  \ottnt{B} \mid  \text{\unboldmath$\forall$}  \, \alpha  \ottsym{.} \, \ottnt{A} \mid}\;
    \ottnt{A}  \times  \ottnt{B}  \mid  \ottnt{A}  +  \ottnt{B}  \mid  \ottnt{A}  \, \mathsf{list} 
   \\[.5ex]
  \textbf{Evaluation contexts} \quad
   \ottnt{E} & ::= & \textgray{ []  \mid
                 \ottnt{E} \, \ottnt{M_{{\mathrm{2}}}} \mid \ottnt{v_{{\mathrm{1}}}} \, \ottnt{E} \mid
                  \textup{\texttt{\#}\relax}  \mathsf{op}   \ottsym{(}   \ottnt{E}   \ottsym{)}  \mid \mathsf{handle} \, \ottnt{E} \, \mathsf{with} \, \ottnt{H} \mid}\;
                 \ottsym{(}  \ottnt{E}  \ottsym{,}  \ottnt{M_{{\mathrm{2}}}}  \ottsym{)} \mid \ottsym{(}  \ottnt{v_{{\mathrm{1}}}}  \ottsym{,}  \ottnt{E}  \ottsym{)} \mid \\ &&
                 \pi_1  \ottnt{E} \mid \pi_2  \ottnt{E} \mid
                 \mathsf{inl} \, \ottnt{E} \mid \mathsf{inr} \, \ottnt{E} \mid
                 \mathsf{case} \, \ottnt{E} \, \mathsf{of} \, \mathsf{inl} \, \mathit{x}  \rightarrow  \ottnt{M_{{\mathrm{1}}}}  \ottsym{;} \, \mathsf{inr} \, \mathit{y}  \rightarrow  \ottnt{M_{{\mathrm{2}}}} \mid \\ &&
                 \mathsf{cons} \, \ottnt{E} \mid
                 \mathsf{case} \, \ottnt{E} \, \mathsf{of} \, \mathsf{nil} \, \rightarrow  \ottnt{M_{{\mathrm{1}}}}  \ottsym{;} \, \mathsf{cons} \, \mathit{x}  \rightarrow  \ottnt{M_{{\mathrm{2}}}}
 \end{array}
 \]
 \mbox{} \\
 
 \begin{flushleft}
  \textbf{Reduction rules} \quad
  \framebox{$\ottnt{M_{{\mathrm{1}}}}  \rightsquigarrow  \ottnt{M_{{\mathrm{2}}}}$}
 \end{flushleft}
 \[\begin{array}{@{}l@{}}
  \pi_1  \ottsym{(}  \ottnt{v_{{\mathrm{1}}}}  \ottsym{,}  \ottnt{v_{{\mathrm{2}}}}  \ottsym{)}  \rightsquigarrow  \ottnt{v_{{\mathrm{1}}}} \hfil
  \pi_2  \ottsym{(}  \ottnt{v_{{\mathrm{1}}}}  \ottsym{,}  \ottnt{v_{{\mathrm{2}}}}  \ottsym{)}  \rightsquigarrow  \ottnt{v_{{\mathrm{2}}}} \hfil
  \mathsf{fix} \, \mathit{f}  \ottsym{.}   \lambda\!  \, \mathit{x}  \ottsym{.}  \ottnt{M}  \rightsquigarrow   \ottsym{(}   \lambda\!  \, \mathit{x}  \ottsym{.}  \ottnt{M}  \ottsym{)}    [  \mathsf{fix} \, \mathit{f}  \ottsym{.}   \lambda\!  \, \mathit{x}  \ottsym{.}  \ottnt{M}  /  \mathit{f}  ]   \\
  \mathsf{case} \, \mathsf{inl} \, \ottnt{v} \, \mathsf{of} \, \mathsf{inl} \, \mathit{x}  \rightarrow  \ottnt{M_{{\mathrm{1}}}}  \ottsym{;} \, \mathsf{inr} \, \mathit{y}  \rightarrow  \ottnt{M_{{\mathrm{2}}}}  \rightsquigarrow   \ottnt{M_{{\mathrm{1}}}}    [  \ottnt{v}  /  \mathit{x}  ]   \\
  \mathsf{case} \, \mathsf{inr} \, \ottnt{v} \, \mathsf{of} \, \mathsf{inl} \, \mathit{x}  \rightarrow  \ottnt{M_{{\mathrm{1}}}}  \ottsym{;} \, \mathsf{inr} \, \mathit{y}  \rightarrow  \ottnt{M_{{\mathrm{2}}}}  \rightsquigarrow   \ottnt{M_{{\mathrm{2}}}}    [  \ottnt{v}  /  \mathit{y}  ]   \\
  \mathsf{case} \, \mathsf{nil} \, \mathsf{of} \, \mathsf{nil} \, \rightarrow  \ottnt{M_{{\mathrm{1}}}}  \ottsym{;} \, \mathsf{cons} \, \mathit{x}  \rightarrow  \ottnt{M_{{\mathrm{2}}}}  \rightsquigarrow  \ottnt{M_{{\mathrm{1}}}} \qquad
  \mathsf{case} \, \mathsf{cons} \, \ottnt{v} \, \mathsf{of} \, \mathsf{nil} \, \rightarrow  \ottnt{M_{{\mathrm{1}}}}  \ottsym{;} \, \mathsf{cons} \, \mathit{x}  \rightarrow  \ottnt{M_{{\mathrm{2}}}}  \rightsquigarrow   \ottnt{M_{{\mathrm{2}}}}    [  \ottnt{v}  /  \mathit{x}  ]  
 \end{array}\]
 \mbox{} \\

 \begin{flushleft}
  \textbf{Type containment} \quad
  \framebox{$\Gamma  \vdash  \ottnt{A}  \sqsubseteq  \ottnt{B}$} \\[1.5ex]
 \end{flushleft}
 \begin{center}
  $\ottdruleCXXProd{}$ \hfil
  $\ottdruleCXXDProd{}$ \\[1.5ex]
  $\ottdruleCXXSum{}$ \hfil
  $\ottdruleCXXDSum{}$ \\[1.5ex]
  $\ottdruleCXXList{}$ \hfil
  $\ottdruleCXXDList{}$
 \end{center}
 \iffull
 \mbox{} \\

 \begin{flushleft}
  \textbf{Term typing} \quad
  \framebox{$\Gamma  \vdash  \ottnt{M}  \ottsym{:}  \ottnt{A}$} \\[1.5ex]
 \end{flushleft}
 \begin{center}
  $\ottdruleTXXPair{}$ \hfil
  $\ottdruleTXXProjOne{}$ \hfil
  $\ottdruleTXXProjTwo{}$ \\[1.5ex]
  $\ottdruleTXXInL{}$ \hfil
  $\ottdruleTXXInR{}$ \\[1.5ex]
  $\ottdruleTXXCase{}$ \\[1.5ex]
  $\ottdruleTXXNil{}$ \hfil
  $\ottdruleTXXCons{}$ \hfil
  $\ottdruleTXXFix{}$ \\[1.5ex]
  $\ottdruleTXXCaseList{}$ \hfil
 \end{center}
 \fi
 \caption{The extended part.}
 \label{fig:extlang}
\end{figure}

The extension of {\lang} and the polymorphic type system is shown in
\reffig{extlang}, in which the extended part of the syntax is highlighted.
Terms
support: pairs; projections; injections; case expressions for sums;
the nil constant; $ \mathsf{cons} $ expressions; case
expressions for lists; and the fixed-point operator.  A case
expression matching injections $\mathsf{case} \, \ottnt{M} \, \mathsf{of} \, \mathsf{inl} \, \mathit{x}  \rightarrow  \ottnt{M_{{\mathrm{1}}}}  \ottsym{;} \, \mathsf{inr} \, \mathit{y}  \rightarrow  \ottnt{M_{{\mathrm{2}}}}$ binds
$\mathit{x}$ in $\ottnt{M_{{\mathrm{1}}}}$ and $\mathit{y}$ in $\ottnt{M_{{\mathrm{2}}}}$, respectively; a case expression
matching lists $\mathsf{case} \, \ottnt{M} \, \mathsf{of} \, \mathsf{nil} \, \rightarrow  \ottnt{M_{{\mathrm{1}}}}  \ottsym{;} \, \mathsf{cons} \, \mathit{x}  \rightarrow  \ottnt{M_{{\mathrm{2}}}}$ binds $\mathit{x}$ in
$\ottnt{M_{{\mathrm{2}}}}$; the fixed-point operator $\mathsf{fix} \, \mathit{f}  \ottsym{.}   \lambda\!  \, \mathit{x}  \ottsym{.}  \ottnt{M}$ binds $\mathit{f}$ and $\mathit{x}$
in $\ottnt{M}$.
Pairs, injections, and $ \mathsf{cons} $ expressions are values if their immediate
subterms are also values.  Types are extended with product types, sum types, and list
types.  The extension of evaluation contexts follows that of terms.
For the semantics, the reduction rules for projections, case
expressions, and the fixed-point operator are added.
The extension of the polymorphic type system is also straightforward.  Type containment
is extended by adding six rules: the three rules on the left in \reffig{extlang} are for
compatibility and the three rules on the right are for distributing $\forall$ over
immediate subcomponent types.  All of the additional typing rules are standard\iffull\else\relax{} and are thus omitted\fi.

\begin{remark}
  The rule \Srule{DSum} in \reffig{extlang} may look peculiar or questionable.
  Actually, there exists no term $\ottnt{M}$ in (implicitly typed) System~F such that $\mathit{x} \,  \mathord{:}  \,  \text{\unboldmath$\forall$}  \, \alpha  \ottsym{.} \, \ottsym{(}   \ottnt{A}  +  \ottnt{B}   \ottsym{)}  \vdash  \ottnt{M}  \ottsym{:}   \ottsym{(}   \text{\unboldmath$\forall$}  \, \alpha  \ottsym{.} \, \ottnt{A}  \ottsym{)}  +  \ottsym{(}   \text{\unboldmath$\forall$}  \, \alpha  \ottsym{.} \, \ottnt{B}  \ottsym{)} $,
  and thus the expected coercion function of $\ottsym{(}   \text{\unboldmath$\forall$}  \, \alpha  \ottsym{.} \, \ottsym{(}   \ottnt{A}  +  \ottnt{B}   \ottsym{)}  \ottsym{)}  \rightarrow   \ottsym{(}   \text{\unboldmath$\forall$}  \, \alpha  \ottsym{.} \, \ottnt{A}  \ottsym{)}  +  \ottsym{(}   \text{\unboldmath$\forall$}  \, \alpha  \ottsym{.} \, \ottnt{B}  \ottsym{)} $
  is not definable in System~F.
  \AI{Temporarily removed: Hence it is hard to justify \Srule{DSum} by a static translation.}
  A justification can be given by the following fact: for every \emph{closed value}  ${}\vdash \ottnt{v} :  \text{\unboldmath$\forall$}  \, \alpha  \ottsym{.} \, \ottsym{(}   \ottnt{A}  +  \ottnt{B}   \ottsym{)}$,
  one has ${}\vdash \ottnt{v} :  \ottsym{(}   \text{\unboldmath$\forall$}  \, \alpha  \ottsym{.} \, \ottnt{A}  \ottsym{)}  +  \ottsym{(}   \text{\unboldmath$\forall$}  \, \alpha  \ottsym{.} \, \ottnt{B}  \ottsym{)} $.
  In fact ${}\vdash \ottnt{v} :  \text{\unboldmath$\forall$}  \, \alpha  \ottsym{.} \, \ottsym{(}   \ottnt{A}  +  \ottnt{B}   \ottsym{)}$ implies $\ottnt{v} = \mathsf{inl} \, \ottnt{v'}$ or $\mathsf{inr} \, \ottnt{v''}$.
  Assuming the former for definiteness, $\alpha  \vdash  \ottnt{v'}  \ottsym{:}  \ottnt{A}$ and thus ${}\vdash \ottnt{v'} :  \text{\unboldmath$\forall$}  \, \alpha  \ottsym{.} \, \ottnt{A}$.
\end{remark}

We also extend the polarity of the occurrences of a type variable.  The polarity
of the occurrences in type variables, function types, and polymorphic types is
given in \refdef{polarity}.  We also define the polarity in product, sum, and
list types as follows.
\begin{defn}[Polarity of type variable occurrence in product, sum, and list types]
 \label{def:ext-polarity}
 \TS{OK?}
 The positive and negative occurrences of a type variable in a product, sum, and
 list type are defined as follows.
 \begin{itemize}



  \item The positive (resp.\ negative) occurrences of $\alpha$ in $ \ottnt{A}  \times  \ottnt{B} $
        are the positive (resp.\ negative) occurrences of $\alpha$ in
        $\ottnt{A}$ and those in $\ottnt{B}$.

  \item The positive (resp.\ negative) occurrences of $\alpha$ in $ \ottnt{A}  +  \ottnt{B} $
        are the positive (resp.\ negative) occurrences of $\alpha$ in
        $\ottnt{A}$ and those in $\ottnt{B}$.

  \item The positive (resp.\ negative) occurrences of $\alpha$ in $ \ottnt{A}  \, \mathsf{list} $
        are the positive (resp.\ negative) occurrences of $\alpha$ in
        $\ottnt{A}$.
 \end{itemize}

 The strictly positive occurrences of a type variable in a product, sum, and
 list type are defined as follows.
 \begin{itemize}



  \item The strictly positive occurrences of $\alpha$ in $ \ottnt{A}  \times  \ottnt{B} $
        are the strictly positive occurrences of $\alpha$ in
        $\ottnt{A}$ and those in $\ottnt{B}$.

  \item The strictly positive occurrences of $\alpha$ in $ \ottnt{A}  +  \ottnt{B} $
        are the strictly positive occurrences of $\alpha$ in
        $\ottnt{A}$ and those in $\ottnt{B}$.

  \item The strictly positive occurrences of $\alpha$ in $ \ottnt{A}  \, \mathsf{list} $
        are the strictly positive occurrences of $\alpha$ in
        $\ottnt{A}$.
 \end{itemize}
\end{defn}

The signature restriction for the extended language is defined as in
\refdef{signature-restriction} except that the polarity of occurrences of type
variables is defined by both of Definitions~\ref{def:polarity} and
\ref{def:ext-polarity}.

We can prove that the extended polymorphic type system satisfies type soundness
under the assumption that all operations conform to the signature restriction for the
extended language in a similar way as in \refsec{polytyp:typesound}; refer to \ifappendix Appendix~\ref{sec:app:proof:sound-typing} \else the supplementary material \fi
for the proof.

\subsection{Examples}
This section presents two operations that satisfy the signature restriction in the
extended language.

The first example is \texttt{select}, which is an operation given in
\refsec{overview:algeff} for nondeterministic computation.  The operation has
the type signature $  \text{\unboldmath$\forall$}  \, \alpha  \ottsym{.} \,   \alpha  \, \mathsf{list}   \hookrightarrow  \alpha $, where the quantified type variable
$\alpha$ occurs only at a strictly positive position in the domain type $ \alpha  \, \mathsf{list} $
and only at a positive position in the codomain type $\alpha$.  Thus, \texttt{select}
satisfies the signature restriction and, therefore, it can be safely called by any
polymorphic expression.

The second example is from \citet{Leijen_2017_POPL}, who implemented parser
combinators using algebraic effects and handlers.  The effect for parsing
provides a basic operation \texttt{satisfy} which has the type signature
\[
   \text{\unboldmath$\forall$}  \, \alpha  \ottsym{.} \,  \ottsym{(}    \mathsf{str}   \rightarrow  \ottsym{(}   \alpha  \times   \mathsf{str}    \ottsym{)}  +   \mathsf{unit}    \ottsym{)}  \hookrightarrow  \alpha 
\]
where $ \mathsf{str} $ is the type of strings.  This operation takes a parsing function
of $ \mathsf{str}   \rightarrow   \ottsym{(}   \alpha  \times   \mathsf{str}    \ottsym{)}  +   \mathsf{unit}  $ such that: the parsing function returns the unit value if an input string
does not conform to the grammar; otherwise, it returns the parsing result of $\alpha$
and the unparsed, remaining string.  The operation \texttt{satisfy} would return
the result of parsing if it succeeds.
For example, we can give \texttt{satisfy} a parsing function that returns the
first character of a given input---and returns the unit value if the input is
the empty string---as follows:
\[
  \textup{\texttt{\#}\relax}   \mathsf{satisfy}    \ottsym{(}    \lambda\!  \, \mathit{x}  \ottsym{.}  \mathsf{if} \, \ottsym{(}   \mathsf{length}  \, \mathit{x}  \ottsym{)} \,  >  \,  0  \, \mathsf{then} \, \mathsf{inl} \, \ottsym{(}   \mathsf{first}  \, \mathit{x}  \ottsym{,}   \mathsf{last}  \, \mathit{x}  \ottsym{)} \, \mathsf{else} \, \mathsf{inr} \, \ottsym{()}   \ottsym{)} .
\]
Here: $ \mathsf{length} $ is a function of $ \mathsf{str}   \rightarrow   \mathsf{int} $ that returns the length of
a given string; $ \mathsf{first} $ is of $ \mathsf{str}   \rightarrow   \mathsf{char} $ ($ \mathsf{char} $ is the type of
characters) that returns the first character of a given string; and $ \mathsf{last} $
is of $ \mathsf{str}   \rightarrow   \mathsf{str} $ that returns the same string as an input except that it
does not contain the first character of the input.  In this example, the call of
\texttt{satisfy} is of the type $ \mathsf{char} $ because the argument function is of the type
$ \mathsf{str}   \rightarrow   \ottsym{(}    \mathsf{char}   \times   \mathsf{str}    \ottsym{)}  +   \mathsf{unit}  $, which requires the quantified type variable
$\alpha$ to be instantiated to $ \mathsf{char} $.
The operation \texttt{satisfy} satisfies the signature restriction clearly.  The
quantified type variable $\alpha$ occurs only at a strictly positive position in
the domain type $ \mathsf{str}   \rightarrow   \ottsym{(}   \alpha  \times   \mathsf{str}    \ottsym{)}  +   \mathsf{unit}  $ of the type signature and it also
occurs only at a positive position in the codomain type $\alpha$.

\section{Cooperation of Safe and Unsafe Effects}
\label{sec:eff}

This section describes an effect system for {\extlang}, which enables the type-safe
cooperation of safe and unsafe effects in a single program.  Our effect system
allows expressions to be
polymorphic if their evaluation performs only operations that satisfy the signature
restriction.  This capability makes it possible for the effect system to
incorporate value restriction---i.e., any value can be polymorphic.  The
definition of signature restriction changes to take into account effect
information on types.  Soundness of the effect system enables us to ensure that
programs handle all the operations performed at run time.

Our effect system is inspired by \citet{Kammar/Lindley/Oury_2013_ICFP}, where
the effect system tracks involved effect operations by their names together with
their type signatures.  There are, however,
two differences between Kammar et al.'s and our effect systems.  The first
difference comes from that of the evaluation strategies the calculi adopt: the
calculus of Kammar et al.\ is based on call-by-push-value
(CBPV)~\cite{Levy_2001_PhD} and we adopt call-by-value (CBV).  This difference
influences the design of effect systems because the two strategies have
different notions for the value representations of suspended computations and effect
systems have to manage the effects caused by their run.  CBPV views functions as
(not suspended) computations, and thus Kammar et al.\ did not equip function
types with effect information; instead, they augmented the types of thunks (which
are value representations of suspended computations in CBPV) with it.  By
contrast, because CBV views functions as values that represent suspended
computations, our effect system equips function types with effect information.
The second difference is that we include only operation names and not their type
signatures in the effect information.  This is merely for simplifying the presentation
but it makes the calculus non-terminating~\cite{Kammar/Pretnar_2017_JFP}.

\TS{Cite Pretnar's subtyping for effects.}

\subsection{Effect System}
\reffig{eff-typing} shows only the key part of the effect
system; the full definition is found in \ifappendix{Appendix~\ref{sec:app:defn:effect}}\else{the supplementary material}\fi.

\begin{figure}[t]
 \[\begin{array}{lll}
  \textbf{Effects} \quad \epsilon & ::= & \{ \mathsf{op}_{{\mathrm{1}}}, \cdots, \mathsf{op}_{\ottmv{n}} \} \\
  \textbf{Types} \quad \ottnt{A}, \ottnt{B}, \ottnt{C}, \ottnt{D} & ::= &
   \textgray{\alpha \mid \iota \mid}\;
    \ottnt{A}   \rightarrow ^{ \epsilon }  \ottnt{B} 
   \;\textgray{\mid  \text{\unboldmath$\forall$}  \, \alpha  \ottsym{.} \, \ottnt{A} \mid  \ottnt{A}  \times  \ottnt{B}  \mid  \ottnt{A}  +  \ottnt{B}  \mid  \ottnt{A}  \, \mathsf{list} }
 \end{array}\]
 \mbox{} \\

 \begin{flushleft}
  \textbf{Type containment} \quad
  \framebox{$\Gamma  \vdash  \ottnt{A}  \sqsubseteq  \ottnt{B}$}
 \end{flushleft}
 \begin{center}
  $\ottdruleCXXFunEff{}$ \hfil
  $\ottdruleCXXDFunEff{}$ \hfil
  $\cdots$
 \end{center}
 \mbox{} \\

 \begin{flushleft}
  \textbf{Term typing} \quad
  \framebox{$\Gamma  \vdash  \ottnt{M}  \ottsym{:}  \ottnt{A} \,  |  \, \epsilon$}
 \end{flushleft}
 \begin{center}
  $\ottdruleTeXXApp{}$ \hfil
  $\ottdruleTeXXGen{}$ \\[1.5ex]
  $\ottdruleTeXXHandle{}$ \\[1.5ex]
  $\ottdruleTeXXFix{}$ \hfil
  $\ottdruleTeXXWeak{}$ \hfil
  $\cdots$
 \end{center}
 \mbox{} \\

 \begin{flushleft}
  \textbf{Handler typing} \quad
  \framebox{$\Gamma  \vdash  \ottnt{H}  \ottsym{:}  \ottnt{A} \,  |  \, \epsilon  \Rightarrow  \ottnt{B} \,  |  \, \epsilon'$}
 \end{flushleft}
 \begin{center}
  $\ottdruleTHeXXReturn{}$ \\[1.5ex]
  $\ottdruleTHeXXOp{}$
 \end{center}

 \caption{The effect system (excerpt).}
 \label{fig:eff-typing}
\end{figure}

The type language includes effect information.  Effects, ranged over by $\epsilon$, are
finite sets of operations.  Function types are augmented with effects that may
be triggered in applying the functions of those types.

Typing judgments also incorporate effects.
A term typing judgment $\Gamma  \vdash  \ottnt{M}  \ottsym{:}  \ottnt{A} \,  |  \, \epsilon$ asserts that $\ottnt{M}$ is a
computation that produces a value of $\ottnt{A}$ possibly with effect $\epsilon$.
A handler typing judgment $\Gamma  \vdash  \ottnt{H}  \ottsym{:}  \ottnt{A} \,  |  \, \epsilon  \Rightarrow  \ottnt{B} \,  |  \, \epsilon'$ asserts that
$\ottnt{H}$ handles a computation that produces values of $\ottnt{A}$ possibly with
effect $\epsilon$ and the handling produces values of $\ottnt{B}$ possibly with
effect $\epsilon'$.  Type containment judgments $\Gamma  \vdash  \ottnt{A}  \sqsubseteq  \ottnt{B}$ and well-formedness
judgments $\vdash  \Gamma$ take the same forms as those of the polymorphic type system
in \refsec{polytype}.

Most of the typing rules for terms are almost the same as those of the
polymorphic type system except that they take into account effect information.
The rule \Te{App} shows how effects are incorporated into the typing rules: the
effect triggered by a term is determined by its subterms.  Besides, \Te{App}
requires effect $\epsilon'$ triggered by a function to be a subset of the effect
$\epsilon$ of the subterms.
The rule \Te{Gen} is the key of the effect system, allowing a term to have a
polymorphic type if it triggers only safe effects.  The safety of an effect
$\epsilon$ is checked by the predicate $\mathit{SR} \, \ottsym{(}  \epsilon  \ottsym{)}$, which asserts that any
operation in $\epsilon$ satisfies the signature restriction for the type language in
\reffig{eff-typing}; we will formalize $\mathit{SR} \, \ottsym{(}  \epsilon  \ottsym{)}$ after explaining the
type containment rules.  A byproduct of adopting \Te{Gen} is that the effect system
incorporates the value restriction~\cite{Tofte_1990_IC} successfully: it allows values to have
polymorphic types because the values perform no operation (thus, their effects
can be the empty set $ \emptyset $) and $\mathit{SR} \, \ottsym{(}   \emptyset   \ottsym{)}$ obviously holds.
The rule \Te{Fix} gives any effect $\epsilon'$ to the fixed-point operator.  This
means that the fixed-point operator can be viewed as a pure computation because
it only produces a lambda abstraction without triggering effects.  The rule
\Te{Weak} weakens the effect information of a term.

There are two rules for deriving a handler typing judgment $\Gamma  \vdash  \ottnt{H}  \ottsym{:}  \ottnt{A} \,  |  \, \epsilon  \Rightarrow  \ottnt{B} \,  |  \, \epsilon'$.  They state that the effect of a $ \mathsf{handle} $--$ \mathsf{with} $
expression installing $\ottnt{H}$ consists of the operations that the handled expression
may call but $\ottnt{H}$ does not handle and those that the return clause or some
operation clause of $\ottnt{H}$ may call.  The effect $\epsilon \,  \mathbin{\uplus}  \, \ottsym{\{}  \mathsf{op}  \ottsym{\}}$ is the same
as $\epsilon \,  \mathbin{\cup}  \, \ottsym{\{}  \mathsf{op}  \ottsym{\}}$ except that it requires $\mathsf{op} \,  \not\in  \, \epsilon$.

Most of the type containment rules of the effect system are the same as those of the
polymorphic type system.  The exception is the rules for function types
\Srule{Fun} and \Srule{DFun}, which are replaced by \Srule{FunEff} and
\Srule{DFunEff} to take into account effect information.  The rule
\Srule{DFunEff} for deriving $\Gamma  \vdash    \text{\unboldmath$\forall$}  \, \alpha  \ottsym{.} \, \ottnt{A}   \rightarrow ^{ \epsilon }  \ottnt{B}   \sqsubseteq   \ottnt{A}   \rightarrow ^{ \epsilon }   \text{\unboldmath$\forall$}  \, \alpha  \ottsym{.} \, \ottnt{B} $ has an
addition conditional that $\mathit{SR} \, \ottsym{(}  \epsilon  \ottsym{)}$ must hold.  This condition originates from
\Te{Gen}.  The rule \Srule{DFunEff} allows that, if a lambda abstraction
$ \lambda\!  \, \mathit{x}  \ottsym{.}  \ottnt{M}$ has a polymorphic type $  \text{\unboldmath$\forall$}  \, \alpha  \ottsym{.} \, \ottnt{A}   \rightarrow ^{ \epsilon }  \ottnt{B} $, the body $\ottnt{M}$ may also
have another polymorphic type $ \text{\unboldmath$\forall$}  \, \alpha  \ottsym{.} \, \ottnt{B}$.  In general, $\ottnt{M}$ may be a
non-value term.  In such a case, only \Te{Gen} justifies that $\ottnt{M}$ has a
polymorphic type; however, to apply \Te{Gen} the effect $\epsilon$ triggered by
$\ottnt{M}$ has to meet $\mathit{SR} \, \ottsym{(}  \epsilon  \ottsym{)}$.  This is the reason why \Srule{DFunEff}
requires that $\mathit{SR} \, \ottsym{(}  \epsilon  \ottsym{)}$ hold.

Now, we formalize the predicate $\mathit{SR} \, \ottsym{(}  \epsilon  \ottsym{)}$, which states that any operation
in $\epsilon$ satisfies signature restriction extended by effect information.  In
what follows, we suppose the notions of positive/negative/strictly positive
occurrences of a type variable for the type language in \reffig{eff-typing};
they are defined naturally as in Definitions~\ref{def:polarity} and
\ref{def:ext-polarity}.
In addition, we can decide whether a type occurs at a strictly positive position in a
type by generalizing Definitions~\ref{def:polarity} and \ref{def:ext-polarity}
from the occurrences of type variables to those of types.
%
%
\begin{restatable}[Effects satisfying signature restriction]{defn}{defnEffSignatureRestriction}
 \label{def:eff-signature-restriction}
 The predicate $\mathit{SR} \, \ottsym{(}  \epsilon  \ottsym{)}$ holds if and only if, for any $\mathsf{op} \,  \in  \, \epsilon$ such that
 $\mathit{ty} \, \ottsym{(}  \mathsf{op}  \ottsym{)} \,  =  \,   \text{\unboldmath$\forall$}  \,  \algeffseqover{ \alpha }   \ottsym{.} \,  \ottnt{A}  \hookrightarrow  \ottnt{B} $:
 \begin{itemize}
  \item the occurrences of each type variable of $ \algeffseqover{ \alpha } $ in $\ottnt{A}$ are only
        negative or strictly positive;
  \item the occurrences of each type variable of $ \algeffseqover{ \alpha } $ in $\ottnt{B}$ are only
        positive; and
  \item for any function type $ \ottnt{C}   \rightarrow ^{ \epsilon' }  \ottnt{D} $ occurring at a strictly positive
        position in $\ottnt{A}$, if $\ottsym{\{}   \algeffseqover{ \alpha }   \ottsym{\}} \,  \mathbin{\cap}  \,  \mathit{ftv}  (  \ottnt{D}  )  \,  \not=  \,  \emptyset $, then
        $\mathit{SR} \, \ottsym{(}  \epsilon'  \ottsym{)}$.
 \end{itemize}
\end{restatable}
The first and second conditions of \refdef{eff-signature-restriction} are the
same as those of \refdef{signature-restriction}, signature restriction without
effect information.  The third condition is necessary to apply \Srule{DFunEff}.
The signature restriction in the polymorphic type system allows type variables $ \algeffseqover{ \alpha } $ in type signature $  \text{\unboldmath$\forall$}  \,  \algeffseqover{ \alpha }   \ottsym{.} \,  \ottnt{A}  \hookrightarrow  \ottnt{B} $ to
occur at a strictly positive position in $\ottnt{A}$
(see \refdef{signature-restriction}).  As discussed in
\refsec{polytype:signature-restriction:prop}, this capability originates from \Srule{DFun}. In the
effect system, the counterpart \Srule{DFunEff} is applied to retain this
capability, but \Srule{DFunEff} requires the effect of a function type to
satisfy $ \mathit{SR} $.  This is the reason why the signature restriction for the effect
system imposes the third condition.  Note that, if type variables in $ \algeffseqover{ \alpha } $
do not occur free in $\ottnt{D}$ (and they do not in $\ottnt{C}$ either), then we can
derive $\Gamma  \vdash    \text{\unboldmath$\forall$}  \,  \algeffseqover{ \alpha }   \ottsym{.} \, \ottnt{C}   \rightarrow ^{ \epsilon' }  \ottnt{D}   \sqsubseteq   \ottnt{C}   \rightarrow ^{ \epsilon' }   \text{\unboldmath$\forall$}  \,  \algeffseqover{ \alpha }   \ottsym{.} \, \ottnt{D} $ without \Srule{DFunEff}.
Thus, the third condition does not require $\mathit{SR} \, \ottsym{(}  \epsilon'  \ottsym{)}$ if $\ottsym{\{}   \algeffseqover{ \alpha }   \ottsym{\}} \,  \mathbin{\cap}  \,  \mathit{ftv}  (  \ottnt{D}  )  \,  =  \,  \emptyset $.

We finally state soundness of the effect system, which ensures that a well-typed
program handles all the operations performed at run time.  We prove it by
progress and subject reduction; their formal statements and proofs are found in
\ifappendix{Appendix~\ref{sec:app:proof:eff-sound-typing}}\else{the supplementary material}\fi.
\begin{restatable}[Type soundness]{thm}{thmEffTypeSoundness}
 \label{thm:eff-type-sound}
 If\/ $\Delta  \vdash  \ottnt{M}  \ottsym{:}  \ottnt{A} \,  |  \,  \emptyset $ and $\ottnt{M}  \longrightarrow^{*}  \ottnt{M'}$ and $\ottnt{M'}  \centernot\longrightarrow$, then
 $\ottnt{M'}$ is a value.
\end{restatable}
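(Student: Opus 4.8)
The plan is to prove \refthm{eff-type-sound} by the standard syntactic method~\cite{Wright/Felleisen_1994_IC}: establish progress and subject reduction for the effect system and combine them by induction on the length of $\ottnt{M}  \longrightarrow^{*}  \ottnt{M'}$, exactly as \refthm{type-sound} was obtained from \reflem{progress} and \reflem{subject-red}. The essential difference is that the empty effect annotation, together with the fact that the effect system only licenses the delicate type manipulations when the relevant operations satisfy signature restriction, replaces the global assumption used in \refsec{polytype}.

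First I would prove a progress lemma refined with effect information: if $\Delta  \vdash  \ottnt{M}  \ottsym{:}  \ottnt{A} \,  |  \, \epsilon$, then $\ottnt{M}  \longrightarrow  \ottnt{M'}$ for some $\ottnt{M'}$, or $\ottnt{M}$ is a value, or $\ottnt{M} \,  =  \,  \ottnt{E}  [  \#\mathsf{op}  \ottsym{(}  \ottnt{v}  \ottsym{)}  ] $ for some $\ottnt{E}$, $\mathsf{op}$, $\ottnt{v}$ with $\mathsf{op} \,  \not\in  \, \ottnt{E}$ \emph{and} $\mathsf{op} \,  \in  \, \epsilon$. The proof mirrors that of \reflem{progress}; the only new bookkeeping is tracking the side condition $\mathsf{op} \,  \in  \, \epsilon$, which is introduced by \Te{Op}, preserved outward by \Te{App}, \Te{Weak}, and the other congruence rules, and removed exactly for handled operations by \Te{Handle}. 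Instantiating this lemma at $\epsilon =  \emptyset $ discharges the third alternative, because $\mathsf{op} \,  \in  \,  \emptyset $ is impossible; hence a well-typed term of empty effect is either a value or reducible.

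Second I would prove subject reduction in the form: if $\Delta  \vdash  \ottnt{M_{{\mathrm{1}}}}  \ottsym{:}  \ottnt{A} \,  |  \, \epsilon$ and $\ottnt{M_{{\mathrm{1}}}}  \longrightarrow  \ottnt{M_{{\mathrm{2}}}}$, then $\Delta  \vdash  \ottnt{M_{{\mathrm{2}}}}  \ottsym{:}  \ottnt{A} \,  |  \, \epsilon$, so that both the type and the effect are preserved (the latter is available because \Te{Weak} lets us re-widen an effect that a reduction shrinks). This rests on the customary auxiliary lemmas---weakening, term and type substitution, canonical forms, and value/effect inversion---redeveloped for the effect type language, and on an effect-annotated analogue of the function-type inversion lemma \reflem{subtyping-inv-fun}, needed to invert \Srule{FunEff}/\Srule{DFunEff} derivations when reducing \R{Beta} and \R{Const}. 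The substantive case is \R{Handle}, which I would treat exactly along the lines of the derivation in \refsec{polytype:addprop}, now using effect-annotated versions of \refprop{sig-forall-move:domain} and \refprop{sig-forall-move:codomain}; these in turn follow from an effect-annotated version of \reflem{subtyping-forall-move} whose domain case discharges the $\forall$-moves across function types via \Srule{DFunEff}.

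The main obstacle, and where the effect system genuinely departs from \refsec{polytype}, is discharging the $\mathit{SR}(\epsilon')$ premises that \Srule{DFunEff} demands \emph{without} a global signature-restriction assumption. The third clause of \refdef{eff-signature-restriction} is tailored precisely so that an operation satisfying the effect-refined signature restriction makes every such premise dischargeable. The key structural fact I must extract is therefore that, in the \R{Handle} scenario $\mathsf{handle} \,  \ottnt{E}  [  \#\mathsf{op}  \ottsym{(}  \ottnt{v}  \ottsym{)}  ]  \, \mathsf{with} \, \ottnt{H}$, the operation $\mathsf{op}$ to which the propositions are applied does satisfy \refdef{eff-signature-restriction}. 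This holds because the generalized type variables $ \algeffseqover{ \beta } $ responsible for the scope extrusion are introduced by \Te{Gen} applied to subterms that sit on the evaluation path to $\#\mathsf{op}  \ottsym{(}  \ottnt{v}  \ottsym{)} $ and hence perform $\mathsf{op}$: since $\mathsf{op} \,  \not\in  \, \ottnt{E}$, the operation is unhandled up to the handler and so belongs to the effect of each such subterm, and \Te{Gen}'s premise $\mathit{SR} \, \ottsym{(}  \epsilon  \ottsym{)}$ then forces $\mathsf{op}$ to meet the signature restriction. I would make this precise by threading the $\mathit{SR}$ obligation through the inversion of the typing derivation of the handled expression, so that each generalization producing a $\forall$ to be moved simultaneously supplies the $\mathit{SR}$ fact required to move it. With subject reduction in hand, the theorem follows by induction on $\ottnt{M}  \longrightarrow^{*}  \ottnt{M'}$: the reduct $\ottnt{M'}$ remains typed at $\ottnt{A}$ with effect $ \emptyset $, and since $\ottnt{M'}  \centernot\longrightarrow$ the refined progress lemma leaves only the possibility that $\ottnt{M'}$ is a value.
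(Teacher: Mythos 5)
Your proposal is correct and follows essentially the same route as the paper: a progress lemma refined with the side condition $\mathsf{op} \,  \in  \, \epsilon$ (instantiated at $ \emptyset $ to rule out unhandled operations), an effect-preserving subject reduction lemma, and the key observation that the $\mathit{SR}$ obligations needed to move quantifiers in the \RwoP{Handle} case are supplied by the \Te{Gen} premises along the typing derivation of the handled expression --- which is exactly how the paper's evaluation-context inversion lemmas thread the ``if the generalized variables are nonempty then $\mathit{SR} \, \ottsym{(}  \ottsym{\{}  \mathsf{op}  \ottsym{\}}  \ottsym{)}$'' side condition. Nothing essential is missing.
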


\subsection{Example}
The effect system allows us to use both safe and unsafe effects in a single
program.  For example, let us consider the following program (which can be represented in
{\extlang}).
\begin{lstlisting}
 let f : (/$\forall \alpha .\, \alpha \rightarrow^{\{\texttt{get\_id}\}} \alpha$/) = (/$\lambda$/)x. #get_id() x in
 let g : (/$\forall \alpha .\, \alpha \rightarrow^{\{\texttt{get\_id}\}} \alpha$/) = #select([(/$\lambda$/)x. x; f]) in
 if g true then (g 2) + 1 else 0
\end{lstlisting}
This example would be rejected if we were to enforce all operations to follow the signature
restriction as in \refsec{polytype} because it uses the unsafe operation
\texttt{get\_id}.  By contrast, the effect system accepts it because: the
polymorphic expression
\lstinline[mathescape]{$\lambda$x. #get_id() x}
calls no operation and
\texttt{\#select([$\lambda$x. x; f])}
calls only \texttt{select}, which satisfies the signature restriction, during the
evaluation; therefore, they can have the polymorphic
type $\forall \alpha .\, \alpha \rightarrow^{\{\texttt{get\_id}\}} \alpha$
by \Te{Gen}.
Note that the effect system still rejects the counterexample given in
\refsec{overview:unsafety} because it disallows polymorphic expressions to call
operations that do not satisfy the signature restriction, such as \texttt{get\_id}.

\section{Related Work}
\label{sec:relwork}

\subsection{Restriction for the Use of Effects in Polymorphic Type Assignment}
\label{sec:relwork:restrict}
The problem that type safety is broken in naively combining polymorphic effects
and polymorphic type assignment was initially discovered in a language with polymorphic
references~\cite{Goron/Milner/Wadsworth_1979_book} and later in one with
polymorphic control operators~\cite{Harper/Lillibridge_1991_types,Harper/Lillibridge_1993_LSC}.  Researchers
have developed many approaches to reconcile these conflicting features thus
far~\cite{Tofte_1990_IC,Leroy/Weis_1991_POPL,Appel/MacQueen_1991_PLILP,Hoang/Mitchell/Viswanathan_1993_LICS,Wright_1995_LSC,Garrigue_2004_FLOPS,Asai/Kameyama_2007_APLAS,Sekiyama/Igarashi_2019_ESOP}.

A major direction shared among them is to prevent the generalization of type
variables assigned to an expression if the type variables are used to
instantiate polymorphic effects triggered by the expression.  \citet{Leroy/Weis_1991_POPL} called such type variables \emph{dangerous}.  The value
restriction~\cite{Tofte_1990_IC,Wright_1995_LSC}, which allows only syntactic
values to be polymorphic, is justified by this idea because these values trigger
no effect and therefore no dangerous type variable exists.  Similarly,
\citet{Asai/Kameyama_2007_APLAS} and \citet{Leijen_2017_POPL} allowed only
observationally pure expressions to be polymorphic.  \citet{Tofte_1990_IC}
proposed another approach that classifies type variables into applicative ones,
which cannot be used to instantiate effects, and imperative ones, which may be
used, and allows the generalization of only applicative type variables.  Weak
polymorphism~\cite{Appel/MacQueen_1991_PLILP,Hoang/Mitchell/Viswanathan_1993_LICS}
extends this idea by assigning a type variable the number of function
applications necessary to trigger effects instantiated with the type variable.
If the numbers assigned to type variables are positive, effects instantiated
with the type variables are not triggered; therefore, they are not dangerous and can be
generalized safely.  \citet{Leroy/Weis_1991_POPL} prevented the generalization of
dangerous type variables by making the type information of free variables in
closures accessible.
These approaches focused on a specific effect (especially, the ML-style
reference effect) basically, but they can be applied to other effects as well.
We prevent the generalization of dangerous type variables by \emph{closing} the type
arguments of an operation call at run time, as discussed in
\refsec{overview:our-work}.
This type transformation is not always acceptable, but we find that it is if the
operation satisfies the signature restriction.
\TS{
For example, let us consider the following
program, where we make type information explicit for clarification, i.e., we write
$\Lambda \alpha.\,\ottnt{M}$ for type abstraction, $\ottnt{M} \{\ottnt{A}\}$ for type
application, and
for a call of
\texttt{op} that instantiates the type signature of \texttt{op} with $\ottnt{A}$.
%
%
In this example, the type variable $\alpha$ is dangerous because it is used to
instantiate the type signature of \texttt{select} that is called (note that we now
suppose an evaluation strategy where the body of a type abstraction is
evaluated).  What our work does essentially is to convert the type information
in the above example at run time, as follows:
%
where the type argument $\alpha$ of the operation call is converted to a closed
type $\forall \alpha.\, \alpha$.  If this conversion preserves typing, the
program should be safe because there is no dangerous type variable left.  It
does not always hold unfortunately, but we find that the type signature of a
called operation works as a sufficient condition of when this conversion
preserves typing and formalize it as signature restriction.
}

\citet{Garrigue_2004_FLOPS} proposed the relaxed value restriction, which allows the generalization of type
variables assigned to an expression if the type variables occur only at positive
positions in the type of the expression.  The polarity condition on generalized
type variables makes it possible to use the empty type as a surrogate of the
type variables and, as a result, prevents instantiating effects with the type
variables.
The relaxed value restriction is similar to signature restriction in that both
utilize the polarity of type variables.  In fact, the \emph{strong} signature
restriction, introduced in \refsec{overview:our-work}, is explainable by using
the empty type {\emptytype} and subtyping $<:$ for it (i.e., deeming {\emptytype} a
subtype of any type) as in the relaxed value restriction.
First, let us recall the key idea of the strong signature restriction: it is to
rewrite an operation call
\texttt{$\Lambda \beta_1 \dots \beta_n.\,$\#op$\{C\}$($v$)}
for \texttt{op} : $\forall \alpha.\, A \hookrightarrow B$
to
\texttt{$\Lambda \beta_1 \dots \beta_n.\,$\#op$\{ \forall \beta_1 \dots \beta_n. \, C\}$($v$)}
to close the type argument $C$ and to use provable type containment judgments
$A[C/\alpha] \mathrel{ \sqsubseteq } A[\forall \beta_1 \dots \beta_n.\,C/\alpha]$ and
$B[\forall \beta_1 \dots \beta_n.\,C/\alpha] \mathrel{ \sqsubseteq } B[C/\alpha]$ for
typing the latter term.
We can rephrase this idea with {\emptytype}, instead of $\forall \beta_1 \dots
\beta_n. \, C$, as follows: the operation call is rewritten to \texttt{$\Lambda
\beta_1 \dots \beta_n.\,$\#op$\{ \emptytype \}$($v$)} and this term can be
typed by using the subtyping judgments
$A[C/\alpha] \mathrel{<:} A[{\emptytype}/\alpha]$ and
$B[{\emptytype}/\alpha] \mathrel{<:} B[C/\alpha]$,
which are provable owing to the polarity condition of the strong signature
restriction (i.e., $\alpha$ occurs only negatively in $A$ and only positively in $B$).
However, this argument does \emph{not} extend to the (non-strong) signature
restriction because it allows the bound type variable $\alpha$ to occur at
strictly positive positions in the domain type $\ottnt{A}$ and then $A[C/\alpha]
\mathrel{<:} A[{\emptytype}/\alpha]$ no longer holds.
Thus, our technical contributions include the findings that the type argument
$\ottnt{C}$ can be closed by \emph{quantifying} it and that $A[C/\alpha]
\mathrel{ \sqsubseteq } A[\forall \beta_1 \dots \beta_n.\,C/\alpha]$ is provable by
type containment, where the distributive law plays a key role.  This
change---which may seem minor perhaps---renders the signature restriction quite
permissive.

\citet{Sekiyama/Igarashi_2019_ESOP} followed another line of research: they
restricted the definitions of effects instead of their usage.  They also employed
algebraic effects and handlers to accommodate effect definitions in a programmable
way and provided a type system that accepts only effects such that programs do not
get stuck even if they are instantiated with dangerous type
variables.  However, a problem with their work is that all effects have to be safe for any
usage and a program cannot use both safe and unsafe effects.  Our work, by contrast,
provides an effect system that allows the use of both operations that satisfy and do
not satisfy the signature restriction---inasmuch as they are performed appropriately.
The effect system utilizes the benefit of the signature restriction that it only
depends on the type interfaces of effects.
\TS{
A problem with their work is that resumption arguments cannot refer to
values computed with the operation arguments.  For example, let us consider an
operation \texttt{choose} with type signature $ \text{\unboldmath$\forall$}  \, \alpha  \ottsym{.} \,  \alpha  \times  \alpha   \rightarrow  \alpha$.  Their type
system accepts an operation clause
%
%
(which returns the first component of a given pair to the caller of
\texttt{choose}), but it rejects
%
%
(which is obtained from the first operation clause just by naming the resumption
argument $\pi_1$ \texttt{x} variable \texttt{y}) because \texttt{y} denotes the
value of $\pi_1$ \texttt{x} computed with operation argument \texttt{x}.  This
example indicates that the type system in the prior work is too sensitive to
slight change of effect implementations and, even worse, is not very
permissive---for example, the operation clause for \texttt{select} given in
\refsec{overview:algeff} would be ill typed because it uses each element of a
given list as a resumption argument.
}

Effect systems have been used to safely introduce effects
in polymorphic type assignment thus far.  \citet{Asai/Kameyama_2007_APLAS} and
\citet{Leijen_2017_POPL} utilized effect systems for the control operators
\textsf{shift/reset}~\cite{Danvy/Filinski_1990_LFP} and algebraic effects and
handlers to ensure that polymorphic expressions are observationally pure,
respectively.
\citet{Kammar/Pretnar_2017_JFP} proposed an effect system for \emph{parameterized}
algebraic effects, which are declared with type parameters and invoked with type
arguments.  Unlike polymorphic effects, parameterized effects invoked
with different type arguments are deemed different.  Kammar and Pretnar utilized
the effect system to prevent the generalization of the type variables involved by type
arguments of parameterized effects.

\subsection{User-Defined Effects}
Our work employs algebraic effects and handlers as a technical development to
describe a variety of effects.  Algebraic effects were originally proposed
as a denotational framework to describe the meaning of an effect by separating
the interface of an effect, which is given by a set of operations, and its
interpretation, which is given by the equational theory over the
operations~\cite{Plotkin/Power_2003_ACS}.
\citet{Plotkin/Pretnar_2013_LMCS,Plotkin/Pretnar_2009_ESOP} introduced effect
handlers in order to represent the semantics of exception handling in an
equational theory.  The idea of separating an effect interface and its
interpretation makes it possible to handle user-defined effects in a modular
way and encourages the emergence of languages equipped with algebraic effect handlers,
such as Eff~\cite{Bauer/Pretnar_2015_JLAMP}, Koka~\cite{Leijen_2017_POPL},
Frank~\cite{Lindley/MacBrid/McLaughlin_2017_POPL}, Multicore
OCaml~\cite{multicoreOCaml_2017_TFP}.  We also utilize the separation and
restrict only effect interfaces in order to achieve type safety in polymorphic
type assignment.

Another approach to user-defined effects is to use control operators, which
enable programmers to make access to continuations.  Many control operators have
been developed thus far---e.g., \textsf{call/cc}~\cite{Clinger/Friedman/Wand_1985},
\textsf{control/prompt}~\cite{Felleisen_1988_POPL},
\textsf{shift/reset}~\cite{Danvy/Filinski_1990_LFP},
\textsf{fcontrol/run}~\cite{Sitaram_1993_PLDI}, and
\textsf{cupto/prompt}~\cite{Gunter/Remy/Riecke_1995_FPCA}.
These operators are powerful and generic, but, in return for that, it is unsafe
to naively combine them with polymorphic type
assignment~\cite{Harper/Lillibridge_1993_LSC}.  They do not provide a means to
assign individual effects fine-grained type interfaces.  Thus, it is not clear how to
apply the idea of signature restriction for the effects implemented by control
operators.

Monads can also express the interpretation of an effect in a
denotational manner~\cite{Moggi_1991_IC} and have been used as a
long-established, programmable means for user-defined
effects~\cite{Wadler_1992_POPL,Peython-Jones/Wadler_1993_POPL}.
\citet{Filinski_2010_POPL} extracted the essence of monadic effects and proposed a
language equipped with a type system and an operation semantics for them.  We
expect our idea of restriction on effect interfaces to be applicable to monadic
effects as well, but for that we would first need to consider how to introduce
polymorphic effects into a monadic language because Filinski's language
supports parametric effects but not polymorphic effects.

\section{Conclusion}
\label{sec:conclusion}

This work addresses a classic problem with polymorphic effects in
polymorphic type assignment.  Our key idea is to restrict the type interfaces of
effects.  We formalize our idea with polymorphic algebraic effects and handlers,
propose the signature restriction, which restricts the type signatures of operations
by the polarity of occurrences of quantified type variables, and prove that a
polymorphic type system is sound if all operations satisfy the signature restriction.
We also give an effect system in which operations performed by polymorphic
expressions have to satisfy the signature restriction but those performed by monomorphic
expressions do not have.  This effect system enables us to use both
operations that satisfy and do not satisfy the signature restriction in a single
program safely.

There are several directions for future work.
First, we are interested in analyzing the
signature restriction from a more semantic perspective.  For example, the
semantics of a language with control effects is often given by transformation to
continuation-passing style (CPS).  It would be interesting to study CPS
transformation for implicit polymorphism by taking the signature
restriction into account.
Another direction would be to apply the signature restriction to evaluation strategies other than
call-by-value.
\citet{Harper/Lillibridge_1993_POPL} showed that polymorphic type assignment and
the polymorphic version of the control operator
\textsf{call/cc} can be reconciled safely in
call-by-name at the cost of expressivity
and by changing the timing of type instantiation slightly.  However, it is
unclear---and we would imagine impossible---whether similar reconcilement is
achievable in other strategies such as call-by-need and call-by-push-value.
Exporting the idea of signature restriction to other evaluation strategies would
be beneficial also for testing the robustness and developing a more in-depth understanding of signature restriction.

\ifanon\else
\begin{acks}                            
 We would like to thank Yusuke Matsushita for a fruitful discussion at an early
 stage of the research and the anonymous reviewers of ICFP 2020 PC and AEC for
 their close reading and valuable comments.
 This work was supported in part by
 ERATO HASUO Metamathematics for Systems Design Project (No.\ JPMJER1603), JST
 and JSPS KAKENHI Grant Numbers JP19K20247 (Sekiyama), JP19K20211 (Tsukada), and
 JP15H05706 (Igarashi).
\end{acks}
\fi

\bibliography{main}

\ifappendix

\renewcommand{\refsec}[1]{Appendix~\ref{sec:#1}}

\clearpage
\appendix
\section{Definition}

\subsection{Syntax}

\[
 \begin{array}{lll}
  \multicolumn{3}{l}{
   \textbf{Variables} \quad \mathit{x}, \mathit{y}, \mathit{z}, \mathit{f}, \mathit{k} \qquad
   \textbf{Type variables} \quad \alpha, \beta, \gamma \qquad
   \textbf{Effect operations} \quad \mathsf{op}
  } \\
  \textbf{Base types} \quad \iota & ::= &  \mathsf{bool}  \mid  \mathsf{int}  \mid ... \\
  \textbf{Types} \quad \ottnt{A}, \ottnt{B}, \ottnt{C}, \ottnt{D} & ::= &
   \alpha \mid \iota \mid \ottnt{A}  \rightarrow  \ottnt{B} \mid  \text{\unboldmath$\forall$}  \, \alpha  \ottsym{.} \, \ottnt{A} \mid
    \ottnt{A}  \times  \ottnt{B}  \mid  \ottnt{A}  +  \ottnt{B}  \mid  \ottnt{A}  \, \mathsf{list} 
   \\
   \textbf{Constants} \quad \ottnt{c} & ::= &
    \mathsf{true}  \mid  \mathsf{false}  \mid  0  \mid  \mathsf{+}  \mid ... \\
   \textbf{Terms} \quad \ottnt{M} & ::= &
   \mathit{x} \mid \ottnt{c} \mid
    \lambda\!  \, \mathit{x}  \ottsym{.}  \ottnt{M} \mid \ottnt{M_{{\mathrm{1}}}} \, \ottnt{M_{{\mathrm{2}}}} \mid
    \textup{\texttt{\#}\relax}  \mathsf{op}   \ottsym{(}   \ottnt{M}   \ottsym{)}  \mid
   \mathsf{handle} \, \ottnt{M} \, \mathsf{with} \, \ottnt{H} \mid \\ &&
   \ottsym{(}  \ottnt{M_{{\mathrm{1}}}}  \ottsym{,}  \ottnt{M_{{\mathrm{2}}}}  \ottsym{)} \mid \pi_1  \ottnt{M} \mid \pi_2  \ottnt{M} \mid \\ &&
   \mathsf{inl} \, \ottnt{M} \mid \mathsf{inr} \, \ottnt{M} \mid
   \mathsf{case} \, \ottnt{M} \, \mathsf{of} \, \mathsf{inl} \, \mathit{x}  \rightarrow  \ottnt{M_{{\mathrm{1}}}}  \ottsym{;} \, \mathsf{inr} \, \mathit{y}  \rightarrow  \ottnt{M_{{\mathrm{2}}}} \mid \\ &&
    \mathsf{nil}  \mid \mathsf{cons} \, \ottnt{M} \mid
   \mathsf{case} \, \ottnt{M} \, \mathsf{of} \, \mathsf{nil} \, \rightarrow  \ottnt{M_{{\mathrm{1}}}}  \ottsym{;} \, \mathsf{cons} \, \mathit{x}  \rightarrow  \ottnt{M_{{\mathrm{2}}}} \mid
   \mathsf{fix} \, \mathit{f}  \ottsym{.}   \lambda\!  \, \mathit{x}  \ottsym{.}  \ottnt{M}
   \\
   \textbf{Handlers} \quad \ottnt{H} & ::= &
   \mathsf{return} \, \mathit{x}  \rightarrow  \ottnt{M} \mid \ottnt{H}  \ottsym{;}  \mathsf{op}  \ottsym{(}  \mathit{x}  \ottsym{,}  \mathit{k}  \ottsym{)}  \rightarrow  \ottnt{M}
   \\
   \textbf{Values} \quad \ottnt{v} & ::= &
   \ottnt{c} \mid
    \lambda\!  \, \mathit{x}  \ottsym{.}  \ottnt{M} \mid 
   \ottsym{(}  \ottnt{v_{{\mathrm{1}}}}  \ottsym{,}  \ottnt{v_{{\mathrm{2}}}}  \ottsym{)} \mid
   \mathsf{inl} \, \ottnt{v} \mid \mathsf{inr} \, \ottnt{v} \mid  \mathsf{nil}  \mid \mathsf{cons} \, \ottnt{v}
   \\[.5ex]
   \textbf{Typing contexts} \quad \Gamma & ::= &
    \emptyset  \mid \Gamma  \ottsym{,}  \mathit{x} \,  \mathord{:}  \, \ottnt{A} \mid 
   \Gamma  \ottsym{,}  \alpha
   \\
   \textbf{Evaluation contexts} \quad
    \ottnt{E} & ::= &  []  \mid
                  \ottnt{E} \, \ottnt{M_{{\mathrm{2}}}} \mid \ottnt{v_{{\mathrm{1}}}} \, \ottnt{E} \mid
                   \textup{\texttt{\#}\relax}  \mathsf{op}   \ottsym{(}   \ottnt{E}   \ottsym{)}  \mid \mathsf{handle} \, \ottnt{E} \, \mathsf{with} \, \ottnt{H} \mid \\ &&
                  \ottsym{(}  \ottnt{E}  \ottsym{,}  \ottnt{M_{{\mathrm{2}}}}  \ottsym{)} \mid \ottsym{(}  \ottnt{v_{{\mathrm{1}}}}  \ottsym{,}  \ottnt{E}  \ottsym{)} \mid
                  \pi_1  \ottnt{E} \mid \pi_2  \ottnt{E} \mid \\ &&
                  \mathsf{inl} \, \ottnt{E} \mid \mathsf{inr} \, \ottnt{E} \mid
                  \mathsf{case} \, \ottnt{E} \, \mathsf{of} \, \mathsf{inl} \, \mathit{x}  \rightarrow  \ottnt{M_{{\mathrm{1}}}}  \ottsym{;} \, \mathsf{inr} \, \mathit{y}  \rightarrow  \ottnt{M_{{\mathrm{2}}}} \mid \\ &&
                  \mathsf{cons} \, \ottnt{E} \mid
                  \mathsf{case} \, \ottnt{E} \, \mathsf{of} \, \mathsf{nil} \, \rightarrow  \ottnt{M_{{\mathrm{1}}}}  \ottsym{;} \, \mathsf{cons} \, \mathit{x}  \rightarrow  \ottnt{M_{{\mathrm{2}}}}
 \end{array}
\]

\begin{conv} \label{conv:surface}
 This work follows the conventions as below.
 \begin{itemize}
  \item  We write $ \algeffseqoverindex{ \alpha }{ \text{\unboldmath$\mathit{I}$} } $ for $ \algeffseqover{ \alpha }  = \alpha_{{\mathrm{1}}}, \cdots, \alpha_{\ottmv{n}}$
         with $\text{\unboldmath$\mathit{I}$} = \{ 1, ..., n \}$.
         We often omit index sets ($\text{\unboldmath$\mathit{I}$}$ and $\text{\unboldmath$\mathit{J}$}$) if they are not
         important: for example, we often abbreviate $ \algeffseqoverindex{ \alpha }{ \text{\unboldmath$\mathit{I}$} } $ to $ \algeffseqover{ \alpha } $.
         We apply this bold-font notation to other syntax categories as well;
         for example, $ \algeffseqoverindex{ \ottnt{A} }{ \text{\unboldmath$\mathit{I}$} } $ denotes a sequence of types.
         %

  \item We write $\{s\}$ to view the sequence $s$ as a set by ignoring
        the order.

  \item We write $ \text{\unboldmath$\forall$}  \,  \algeffseqoverindex{ \alpha }{ \text{\unboldmath$\mathit{I}$} }   \ottsym{.} \, \ottnt{A}$ for
        $  \text{\unboldmath$\forall$}    \alpha_{{\mathrm{1}}}  . \, ... \,   \text{\unboldmath$\forall$}    \alpha_{\ottmv{n}}  .  \, \ottnt{A}$ with $\text{\unboldmath$\mathit{I}$} = \{ 1, ..., n \}$.
        We may omit index sets ($ \text{\unboldmath$\forall$}  \,  \algeffseqover{ \alpha }   \ottsym{.} \, \ottnt{A}$).
        We write $ \algeffseqoverindex{  \text{\unboldmath$\forall$}  \,  \algeffseqoverindex{ \alpha }{ \text{\unboldmath$\mathit{I}$} }   \ottsym{.} \, \ottnt{A} }{ \text{\unboldmath$\mathit{J}$} } $ for a sequence of types
        $ \text{\unboldmath$\forall$}  \,  \algeffseqoverindex{ \alpha }{ \text{\unboldmath$\mathit{I}$} }   \ottsym{.} \, \ottnt{A}_1$, \ldots, $ \text{\unboldmath$\forall$}  \,  \algeffseqoverindex{ \alpha }{ \text{\unboldmath$\mathit{I}$} }   \ottsym{.} \, \ottnt{A}_n$ with $\text{\unboldmath$\mathit{J}$} =
        \{1,\ldots,n\}$.


  \item We write $\Gamma_{{\mathrm{1}}}  \ottsym{,}  \Gamma_{{\mathrm{2}}}$ for the concatenation of $\Gamma_{{\mathrm{1}}}$ and $\Gamma_{{\mathrm{2}}}$, and
        $\mathit{x} \,  \mathord{:}  \, \ottnt{A}$ and $ \algeffseqover{ \alpha } $ for $\ottsym{(}   \emptyset   \ottsym{,}  \mathit{x} \,  \mathord{:}  \, \ottnt{A}  \ottsym{)}$,
        $\ottsym{(}   \emptyset   \ottsym{,}   \algeffseqover{ \alpha }   \ottsym{)}$, respectively.
        %

  \item We write $ \ottnt{H} ^\mathsf{return} $ for the return clause in $\ottnt{H}$ and $\ottnt{H}  \ottsym{(}  \mathsf{op}  \ottsym{)}$
         for the operation clause of $\mathsf{op}$ in $\ottnt{H}$.


 \end{itemize}
\end{conv}


\begin{defn}[Domain of typing contexts] \label{defn:tctx-domain}
 We define $ \mathit{dom}  (  \Gamma  ) $ as follows.
 \[\begin{array}{lll}
   \mathit{dom}  (   \emptyset   )        &\defeq&  \emptyset  \\
   \mathit{dom}  (  \Gamma  \ottsym{,}  \mathit{x} \,  \mathord{:}  \, \ottnt{A}  )       &\defeq&  \mathit{dom}  (  \Gamma  )  \,  \mathbin{\cup}  \, \ottsym{\{}  \mathit{x}  \ottsym{\}} \\
   \mathit{dom}  (  \Gamma  \ottsym{,}  \alpha  )         &\defeq&  \mathit{dom}  (  \Gamma  )  \,  \mathbin{\cup}  \, \ottsym{\{}  \alpha  \ottsym{\}} \\
   \end{array}\]
\end{defn}

\begin{defn}[Free type variables and type substitution in types]
 Free type variables $ \mathit{ftv}  (  \ottnt{A}  ) $ in a type $\ottnt{A}$ and type substitution
 $ \ottnt{B}    [   \algeffseqover{ \ottnt{A} }   \ottsym{/}   \algeffseqover{ \alpha }   ]  $ of types $ \algeffseqover{ \ottnt{A} } $ for type variables $ \algeffseqover{ \alpha } $ in $\ottnt{B}$ are
 defined as usual.
 Type $\ottnt{A}$ is closed if and only if $ \mathit{ftv}  (  \ottnt{A}  ) $ is empty.
\end{defn}

\begin{assum} \label{assum:const}
 We suppose that each constant $\ottnt{c}$ is assigned a first-order closed type
 $ \mathit{ty}  (  \ottnt{c}  ) $ of the form $\iota  \rightarrow  \ldots  \rightarrow  \iota_{\ottmv{n}}  \rightarrow   \iota _{ \ottmv{n}  \ottsym{+}  \ottsym{1} } $.
 %
 We also suppose that, for any $\iota$, there exists the set $ \mathbb{K}_{ \iota } $ of
 constants of $\iota$.
 For any constant $\ottnt{c}$, $ \mathit{ty}  (  \ottnt{c}  )  \,  =  \, \iota$ if and only if $\ottnt{c} \,  \in  \,  \mathbb{K}_{ \iota } $.
 The function $ \zeta $ gives a denotation to pairs of constants.
 In particular, for any constants $\ottnt{c_{{\mathrm{1}}}}$ and $\ottnt{c_{{\mathrm{2}}}}$:
 (1) $ \zeta  (  \ottnt{c_{{\mathrm{1}}}}  ,  \ottnt{c_{{\mathrm{2}}}}  ) $ is defined if and only if
 $ \mathit{ty}  (  \ottnt{c_{{\mathrm{1}}}}  )  \,  =  \, \iota_{{\mathrm{0}}}  \rightarrow  \ottnt{A}$ and $ \mathit{ty}  (  \ottnt{c_{{\mathrm{2}}}}  )  \,  =  \, \iota_{{\mathrm{0}}}$ for some
 $\iota_{{\mathrm{0}}}$ and $\ottnt{A}$; and
 (2) if $ \zeta  (  \ottnt{c_{{\mathrm{1}}}}  ,  \ottnt{c_{{\mathrm{2}}}}  ) $ is defined, $ \zeta  (  \ottnt{c_{{\mathrm{1}}}}  ,  \ottnt{c_{{\mathrm{2}}}}  ) $ is a constant and
 $ \mathit{ty}  (   \zeta  (  \ottnt{c_{{\mathrm{1}}}}  ,  \ottnt{c_{{\mathrm{2}}}}  )   )  \,  =  \, \ottnt{A}$ where $ \mathit{ty}  (  \ottnt{c_{{\mathrm{1}}}}  )  \,  =  \, \iota_{{\mathrm{0}}}  \rightarrow  \ottnt{A}$.
\end{assum}

\begin{defn}[Polarity of type variable occurrence]
 The positive and negative occurrences of a type variable in a type $\ottnt{A}$ are defined
 by induction on $\ottnt{A}$, as follows.
 \begin{itemize}
  \item The occurrence of $\alpha$ in type $\alpha$ is positive.

  \item The positive (resp.\ negative) occurrences of $\alpha$ in $\ottnt{A}  \rightarrow  \ottnt{B}$
        are the negative (resp.\ positive) occurrences of $\alpha$ in
        $\ottnt{A}$ and the positive (resp.\ negative) occurrences of $\alpha$ in
        $\ottnt{B}$.

  \item The positive (resp.\ negative) occurrences of $\alpha$ in $ \text{\unboldmath$\forall$}  \, \beta  \ottsym{.} \, \ottnt{A}$,
        where $\beta$ is supposed to be distinct from $\alpha$,
        are the positive (resp.\ negative) occurrences of $\alpha$ in
        $\ottnt{A}$.

  \item The positive (resp.\ negative) occurrences of $\alpha$ in $ \ottnt{A}  \times  \ottnt{B} $
        are the positive (resp.\ negative) occurrences of $\alpha$ in
        $\ottnt{A}$ and those in $\ottnt{B}$.

  \item The positive (resp.\ negative) occurrences of $\alpha$ in $ \ottnt{A}  +  \ottnt{B} $
        are the positive (resp.\ negative) occurrences of $\alpha$ in
        $\ottnt{A}$ and those in $\ottnt{B}$.

  \item The positive (resp.\ negative) occurrences of $\alpha$ in $ \ottnt{A}  \, \mathsf{list} $
        are the positive (resp.\ negative) occurrences of $\alpha$ in
        $\ottnt{A}$.
 \end{itemize}

 The strictly positive occurrences of a type variable in a type $\ottnt{A}$ are
 defined by induction on $\ottnt{A}$, as follows.
 \begin{itemize}
  \item The occurrence of $\alpha$ in type $\alpha$ is strictly positive.

  \item The strictly positive occurrences of $\alpha$ in $\ottnt{A}  \rightarrow  \ottnt{B}$
        are the strictly positive occurrences of $\alpha$ in
        $\ottnt{B}$.

  \item The strictly positive occurrences of $\alpha$ in $ \text{\unboldmath$\forall$}  \, \beta  \ottsym{.} \, \ottnt{A}$,
        where $\beta$ is supposed to be distinct from $\alpha$,
        are the strictly positive occurrences of $\alpha$ in
        $\ottnt{A}$.

  \item The strictly positive occurrences of $\alpha$ in $ \ottnt{A}  \times  \ottnt{B} $
        are the strictly positive occurrences of $\alpha$ in
        $\ottnt{A}$ and those in $\ottnt{B}$.

  \item The strictly positive occurrences of $\alpha$ in $ \ottnt{A}  +  \ottnt{B} $
        are the strictly positive occurrences of $\alpha$ in
        $\ottnt{A}$ and those in $\ottnt{B}$.

  \item The strictly positive occurrences of $\alpha$ in $ \ottnt{A}  \, \mathsf{list} $
        are the strictly positive occurrences of $\alpha$ in
        $\ottnt{A}$.
 \end{itemize}
\end{defn}

\ifrestate
\defnTypeSignature*
\else
\begin{defn}[Type signature]
 \label{def:eff}
 Each effect operation $\mathsf{op}$ is assigned a type signature $\mathit{ty} \, \ottsym{(}  \mathsf{op}  \ottsym{)}$ of the
 form $   \text{\unboldmath$\forall$}    \alpha_{{\mathrm{1}}}  . \, ... \,   \text{\unboldmath$\forall$}    \alpha_{\ottmv{n}}  .  \,  \ottnt{A}  \hookrightarrow  \ottnt{B} $ for some $\ottmv{n}$, where $\alpha_{{\mathrm{1}}}  \ottsym{,}  ...  \ottsym{,}  \alpha_{\ottmv{n}}$ are bound
 in the domain type $\ottnt{A}$ and codomain type $\ottnt{B}$.  It may be abbreviated to
 $  \text{\unboldmath$\forall$}  \,  \algeffseqoverindex{ \alpha }{ \text{\unboldmath$\mathit{I}$} }   \ottsym{.} \,  \ottnt{A}  \hookrightarrow  \ottnt{B} $ or, more simply, to $  \text{\unboldmath$\forall$}  \,  \algeffseqover{ \alpha }   \ottsym{.} \,  \ottnt{A}  \hookrightarrow  \ottnt{B} $.
 We suppose that $   \text{\unboldmath$\forall$}    \alpha_{{\mathrm{1}}}  . \, ... \,   \text{\unboldmath$\forall$}    \alpha_{\ottmv{n}}  .  \,  \ottnt{A}  \hookrightarrow  \ottnt{B} $ is closed, i.e., $ \mathit{ftv}  (  \ottnt{A}  ) ,
  \mathit{ftv}  (  \ottnt{B}  )  \subseteq \{ \alpha_{{\mathrm{1}}}, \cdots, \alpha_{\ottmv{n}} \}$.
\end{defn}
\fi

\ifrestate
\defnSignatureRestriction*
\else
\begin{defn}[Operations satisfying signature restriction]
 \label{def:signature-restriction}
 An operation $\mathsf{op}$ having type signature $\mathit{ty} \, \ottsym{(}  \mathsf{op}  \ottsym{)} \,  =  \,   \text{\unboldmath$\forall$}  \,  \algeffseqover{ \alpha }   \ottsym{.} \,  \ottnt{A}  \hookrightarrow  \ottnt{B} $ satisfies
 the signature restriction if and only if:
 \begin{itemize}
  \item the occurrences of each type variable of $ \algeffseqover{ \alpha } $ in $\ottnt{A}$ are only
        negative or strictly positive; and
  \item the occurrences of each type variable of $ \algeffseqover{ \alpha } $ in $\ottnt{B}$ are only
        positive.
 \end{itemize}
\end{defn}
\fi

\subsection{Semantics}

\begin{figure}[t!]
 \begin{flushleft}
 \textbf{Reduction rules} \quad \framebox{$\ottnt{M_{{\mathrm{1}}}}  \rightsquigarrow  \ottnt{M_{{\mathrm{2}}}}$}
 \end{flushleft}
 \[\begin{array}{rcll}
  \ottnt{c} \, \ottnt{v}                    &  \rightsquigarrow  &  \zeta  (  \ottnt{c}  ,  \ottnt{v}  )    & \RwoP{Const} \\
  \ottsym{(}   \lambda\!  \, \mathit{x}  \ottsym{.}  \ottnt{M}  \ottsym{)} \, \ottnt{v}               &  \rightsquigarrow  &  \ottnt{M}    [  \ottnt{v}  /  \mathit{x}  ]        & \RwoP{Beta} \\
  %
  %
  \mathsf{handle} \, \ottnt{v} \, \mathsf{with} \, \ottnt{H}        &  \rightsquigarrow  &  \ottnt{M}    [  \ottnt{v}  /  \mathit{x}  ]        & \RwoP{Return} \\
   \multicolumn{3}{r}{\text{(where $ \ottnt{H} ^\mathsf{return}  \,  =  \, \mathsf{return} \, \mathit{x}  \rightarrow  \ottnt{M}$)}} \\
  \mathsf{handle} \,  \ottnt{E}  [   \textup{\texttt{\#}\relax}  \mathsf{op}   \ottsym{(}   \ottnt{v}   \ottsym{)}   ]  \, \mathsf{with} \, \ottnt{H} &  \rightsquigarrow  &   \ottnt{M}    [  \ottnt{v}  /  \mathit{x}  ]      [   \lambda\!  \, \mathit{y}  \ottsym{.}  \mathsf{handle} \,  \ottnt{E}  [  \mathit{y}  ]  \, \mathsf{with} \, \ottnt{H}  /  \mathit{k}  ]   & \RwoP{Handle} \\
   \multicolumn{3}{r}{\text{(where $\mathsf{op} \,  \not\in  \, \ottnt{E}$ and $\ottnt{H}  \ottsym{(}  \mathsf{op}  \ottsym{)} \,  =  \, \mathsf{op}  \ottsym{(}  \mathit{x}  \ottsym{,}  \mathit{k}  \ottsym{)}  \rightarrow  \ottnt{M}$)}} \\
  \pi_1  \ottsym{(}  \ottnt{v_{{\mathrm{1}}}}  \ottsym{,}  \ottnt{v_{{\mathrm{2}}}}  \ottsym{)}       &  \rightsquigarrow  & \ottnt{v_{{\mathrm{1}}}} & \RwoP{Proj1} \\
  \pi_2  \ottsym{(}  \ottnt{v_{{\mathrm{1}}}}  \ottsym{,}  \ottnt{v_{{\mathrm{2}}}}  \ottsym{)}       &  \rightsquigarrow  & \ottnt{v_{{\mathrm{2}}}} & \RwoP{Proj2} \\
  \mathsf{case} \, \mathsf{inl} \, \ottnt{v} \, \mathsf{of} \, \mathsf{inl} \, \mathit{x}  \rightarrow  \ottnt{M_{{\mathrm{1}}}}  \ottsym{;} \, \mathsf{inr} \, \mathit{y}  \rightarrow  \ottnt{M_{{\mathrm{2}}}} &  \rightsquigarrow  &  \ottnt{M_{{\mathrm{1}}}}    [  \ottnt{v}  /  \mathit{x}  ]   & \RwoP{CaseL} \\
  \mathsf{case} \, \mathsf{inr} \, \ottnt{v} \, \mathsf{of} \, \mathsf{inl} \, \mathit{x}  \rightarrow  \ottnt{M_{{\mathrm{1}}}}  \ottsym{;} \, \mathsf{inr} \, \mathit{y}  \rightarrow  \ottnt{M_{{\mathrm{2}}}} &  \rightsquigarrow  &
     \ottnt{M_{{\mathrm{2}}}}    [  \ottnt{v}  /  \mathit{y}  ]   & \RwoP{CaseR} \\
  \mathsf{case} \, \mathsf{nil} \, \mathsf{of} \, \mathsf{nil} \, \rightarrow  \ottnt{M_{{\mathrm{1}}}}  \ottsym{;} \, \mathsf{cons} \, \mathit{x}  \rightarrow  \ottnt{M_{{\mathrm{2}}}}    &  \rightsquigarrow  & \ottnt{M_{{\mathrm{1}}}} & \RwoP{Nil} \\
  \mathsf{case} \, \mathsf{cons} \, \ottnt{v} \, \mathsf{of} \, \mathsf{nil} \, \rightarrow  \ottnt{M_{{\mathrm{1}}}}  \ottsym{;} \, \mathsf{cons} \, \mathit{x}  \rightarrow  \ottnt{M_{{\mathrm{2}}}} &  \rightsquigarrow  &  \ottnt{M_{{\mathrm{2}}}}    [  \ottnt{v}  /  \mathit{x}  ]   & \RwoP{Cons} \\
  \mathsf{fix} \, \mathit{f}  \ottsym{.}   \lambda\!  \, \mathit{x}  \ottsym{.}  \ottnt{M}                           &  \rightsquigarrow  &  \ottsym{(}   \lambda\!  \, \mathit{x}  \ottsym{.}  \ottnt{M}  \ottsym{)}    [  \mathsf{fix} \, \mathit{f}  \ottsym{.}   \lambda\!  \, \mathit{x}  \ottsym{.}  \ottnt{M}  /  \mathit{f}  ]   & \RwoP{Fix} \\
   \end{array}\]
 \begin{flushleft}
 \textbf{Evaluation rules} \quad \framebox{$\ottnt{M_{{\mathrm{1}}}}  \longrightarrow  \ottnt{M_{{\mathrm{2}}}}$}
 \end{flushleft}
 \begin{center}
  $\ottdruleEXXEval{}$
 \end{center}
 \caption{Semantics.}
 \label{fig:app-inter-semantics}
\end{figure}

\ifrestate
\defnOpFreeEvalCtx*
\else
\begin{defn}[$\mathsf{op}$-free evaluation contexts]
 Evaluation context $\ottnt{E}$ is \emph{$\mathsf{op}$-free}, written $\mathsf{op} \,  \not\in  \, \ottnt{E}$,
 if and only if, there exist no $\ottnt{E_{{\mathrm{1}}}}$, $\ottnt{E_{{\mathrm{2}}}}$, and $\ottnt{H}$ such that
 $\ottnt{E} \,  =  \,  \ottnt{E_{{\mathrm{1}}}}  [  \mathsf{handle} \, \ottnt{E_{{\mathrm{2}}}} \, \mathsf{with} \, \ottnt{H}  ] $ and $\ottnt{H}$ has an operation clause for
 $\mathsf{op}$.
\end{defn}
\fi

\begin{defn}
 %
 Relations $ \longrightarrow $ and $ \rightsquigarrow $ are the smallest relations
 satisfying the rules in \reffig{app-inter-semantics}.
\end{defn}

\begin{defn}[Multi-step evaluation]
 Binary relation $ \longrightarrow^{*} $ over terms is the reflexive and transitive closure of $ \longrightarrow $.
\end{defn}

\begin{defn}[Nonreducible terms]
 We write $\ottnt{M}  \centernot\longrightarrow$ if there exists no term $\ottnt{M'}$ such that $\ottnt{M}  \longrightarrow  \ottnt{M'}$.
\end{defn}

\subsection{Typing}

\begin{figure}[t]
 \begin{flushleft}
 \textbf{Type containment} \quad \framebox{$\Gamma  \vdash  \ottnt{A}  \sqsubseteq  \ottnt{B}$}
 \end{flushleft}
 \begin{center}
  $\ottdruleCXXRefl{}$ \hfil
  $\ottdruleCXXTrans{}$ \hfil
  $\ottdruleCXXFun{}$ \\[2ex]
  $\ottdruleCXXInst{}$ \hfil
  $\ottdruleCXXGen{}$ \hfil
  $\ottdruleCXXPoly{}$ \\[2ex]
  $\ottdruleCXXProd{}$ \hfil
  $\ottdruleCXXSum{}$ \hfil
  $\ottdruleCXXList{}$ \\[2ex]
  $\ottdruleCXXDFun{}$ \hfil
  $\ottdruleCXXDProd{}$ \\[2ex]
  $\ottdruleCXXDSum{}$ \hfil
  $\ottdruleCXXDList{}$
 \end{center}

 \caption{Type containment.}
 \label{fig:app-subtyping}
\end{figure}


\begin{figure}[t]
 \begin{flushleft}
 \textbf{Well-formedness} \quad \framebox{$\vdash  \Gamma$}
 \end{flushleft}
 \begin{center}
  $\ottdruleWFXXEmpty{}$ \hfil
  $\ottdruleWFXXExtVar{}$ \hfil
  $\ottdruleWFXXExtTyVar{}$ \hfil
 \end{center}
 %

 \caption{Well-formedness.}
 \label{fig:app-surface-well-formed}
\end{figure}

\begin{figure}[t]
 \begin{flushleft}
 \textbf{Term typing} \quad \framebox{$\Gamma  \vdash  \ottnt{M}  \ottsym{:}  \ottnt{A}$}
 \end{flushleft}
 \begin{center}
  $\ottdruleTXXVar{}$ \hfil
  $\ottdruleTXXConst{}$ \hfil
  $\ottdruleTXXAbs{}$ \\[2ex]
  $\ottdruleTXXApp{}$ \hfil
  $\ottdruleTXXGen{}$ \hfil
  $\ottdruleTXXInst{}$ \\[2ex]
  $\ottdruleTXXOp{}$ \hfil
  $\ottdruleTXXHandle{}$ \\[2ex]
  $\ottdruleTXXPair{}$ \hfil
  $\ottdruleTXXProjOne{}$ \hfil
  $\ottdruleTXXProjTwo{}$ \\[2ex]
  $\ottdruleTXXInL{}$ \hfil
  $\ottdruleTXXInR{}$ \\[2ex]
  $\ottdruleTXXCase{}$ \\[2ex]
  $\ottdruleTXXNil{}$ \hfil
  $\ottdruleTXXCons{}$ \\[2ex]
  $\ottdruleTXXCaseList{}$ \hfil
  $\ottdruleTXXFix{}$
 \end{center}
 \begin{flushleft}
 \textbf{Handler typing} \quad \framebox{$\Gamma  \vdash  \ottnt{H}  \ottsym{:}  \ottnt{A}  \Rightarrow  \ottnt{B}$}
 \end{flushleft}
 \begin{center}
  $\ottdruleTHXXReturn{}$ \\[2ex]
  $\ottdruleTHXXOp{}$
 \end{center}

 \caption{Typing.}
 \label{fig:app-surface-typing}
\end{figure}



\begin{defn}
 Well-formedness judgment $\vdash  \Gamma$ is the smallest relations satisfying the
 rules in \reffig{app-surface-well-formed}.
 We write $\Gamma  \vdash  \ottnt{A}$ if and only if $ \mathit{ftv}  (  \ottnt{A}  )  \,  \subseteq  \,  \mathit{dom}  (  \Gamma  ) $ and $\vdash  \Gamma$ is derived.
 Type containment judgment $\Gamma  \vdash  \ottnt{A}  \sqsubseteq  \ottnt{B}$ is the least relation satisfying the
 rules in \reffig{app-subtyping}.
 Typing judgments $\Gamma  \vdash  \ottnt{M}  \ottsym{:}  \ottnt{A}$ and $\Gamma  \vdash  \ottnt{H}  \ottsym{:}  \ottnt{A}  \Rightarrow  \ottnt{B}$ are the smallest
 relations satisfying the rules in \reffig{app-surface-typing}.
 %
\end{defn}

\clearpage

\subsection{Type-and-effect System}
\label{sec:app:defn:effect}

The type language for the type-and-effect system is shown
\reffig{app-eff-typelang}.  \reffig{app-eff-subtyping} describes only the change
of the type containment rules from those of the polymorphic type system.

\ifrestate
\defnEffSignatureRestriction*
\else
\begin{defn}[Effects satisfying signature restriction]
 The predicate $\mathit{SR} \, \ottsym{(}  \epsilon  \ottsym{)}$ holds if and only if, for any $\mathsf{op} \,  \in  \, \epsilon$ such
 that $\mathit{ty} \, \ottsym{(}  \mathsf{op}  \ottsym{)} \,  =  \,   \text{\unboldmath$\forall$}  \,  \algeffseqover{ \alpha }   \ottsym{.} \,  \ottnt{A}  \hookrightarrow  \ottnt{B} $:
 \begin{itemize}
  \item the occurrences of each type variable of $ \algeffseqover{ \alpha } $ in $\ottnt{A}$ are only
        negative or strictly positive;
  \item the occurrences of each type variable of $ \algeffseqover{ \alpha } $ in $\ottnt{B}$ are only
        positive; and
  \item for any function type $ \ottnt{C}   \rightarrow ^{ \epsilon' }  \ottnt{D} $ occurring at a strictly positive
        position in $\ottnt{A}$, if $\ottsym{\{}   \algeffseqover{ \alpha }   \ottsym{\}} \,  \mathbin{\cap}  \,  \mathit{ftv}  (  \ottnt{D}  )  \,  \not=  \,  \emptyset $, then
        $\mathit{SR} \, \ottsym{(}  \epsilon'  \ottsym{)}$.
 \end{itemize}
\end{defn}
\fi

\begin{defn}
 Typing judgments $\Gamma  \vdash  \ottnt{M}  \ottsym{:}  \ottnt{A} \,  |  \, \epsilon$ and $\Gamma  \vdash  \ottnt{H}  \ottsym{:}  \ottnt{A} \,  |  \, \epsilon  \Rightarrow  \ottnt{B} \,  |  \, \epsilon'$
 are the smallest relations satisfying the rules in \reffig{app-eff-typing}.
\end{defn}

\begin{figure}[t]
 \[\begin{array}{lll}
  \textbf{Effects} \quad \epsilon & ::= & \{ \mathsf{op}_{{\mathrm{1}}}, \cdots, \mathsf{op}_{\ottmv{n}} \} \\
  \textbf{Types} \quad \ottnt{A}, \ottnt{B}, \ottnt{C}, \ottnt{D} & ::= &
   \alpha \mid \iota \mid  \ottnt{A}   \rightarrow ^{ \epsilon }  \ottnt{B}  \mid  \text{\unboldmath$\forall$}  \, \alpha  \ottsym{.} \, \ottnt{A} \mid
    \ottnt{A}  \times  \ottnt{B}  \mid  \ottnt{A}  +  \ottnt{B}  \mid  \ottnt{A}  \, \mathsf{list} 
 \end{array}\]
 \caption{Type language for the effect-and-type system.}
 \label{fig:app-eff-typelang}
\end{figure}

\begin{figure}[t]
 \begin{flushleft}
 \textbf{Type containment} \quad \framebox{$\Gamma  \vdash  \ottnt{A}  \sqsubseteq  \ottnt{B}$}
 \end{flushleft}
 \begin{center}
  $\ottdruleCXXFunEff{}$ \hfil
  $\ottdruleCXXDFunEff{}$
 \end{center}
 \caption{Change from \reffig{app-subtyping} for type containment of the
 effect-and-type system. It gets rid of \Srule{Fun} and \Srule{DFun} instead of
 adding \Srule{FunEff} and \Srule{DFunEff}.}
 \label{fig:app-eff-subtyping}
\end{figure}

\begin{figure}[t]
 \begin{flushleft}
 \textbf{Term typing} \quad \framebox{$\Gamma  \vdash  \ottnt{M}  \ottsym{:}  \ottnt{A} \,  |  \, \epsilon$}
 \end{flushleft}
 \begin{center}
  $\ottdruleTeXXVar{}$ \hfil
  $\ottdruleTeXXConst{}$ \\[2ex]
  $\ottdruleTeXXAbs{}$ \hfil
  $\ottdruleTeXXApp{}$ \\[2ex]
  $\ottdruleTeXXGen{}$ \hfil
  $\ottdruleTeXXInst{}$ \\[2ex]
  $\ottdruleTeXXOp{}$ \\[2ex]
  $\ottdruleTeXXHandle{}$ \\[2ex]
  $\ottdruleTeXXPair{}$ \hfil
  $\ottdruleTeXXProjOne{}$ \hfil
  $\ottdruleTeXXProjTwo{}$ \\[2ex]
  $\ottdruleTeXXInL{}$ \hfil
  $\ottdruleTeXXInR{}$ \\[2ex]
  $\ottdruleTeXXCase{}$ \\[2ex]
  $\ottdruleTeXXNil{}$ \hfil
  $\ottdruleTeXXCons{}$ \\[2ex]
  $\ottdruleTeXXCaseList{}$ \\[2ex]
  $\ottdruleTeXXFix{}$ \hfil
  $\ottdruleTeXXWeak{}$
 \end{center}
 \begin{flushleft}
 \textbf{Handler typing} \quad
 \framebox{$\Gamma  \vdash  \ottnt{H}  \ottsym{:}  \ottnt{A} \,  |  \, \epsilon  \Rightarrow  \ottnt{B} \,  |  \, \epsilon'$}
 \end{flushleft}
 \begin{center}
  $\ottdruleTHeXXReturn{}$ \\[2ex]
  $\ottdruleTHeXXOp{}$
 \end{center}

 \caption{Typing of the effect-and-type system.}
 \label{fig:app-eff-typing}
\end{figure}

\clearpage
\section{Proofs}

\subsection{Soundness of the Type System}
\label{sec:app:proof:sound-typing}


\begin{lemmap}{Weakening}{weakening}
 Suppose that $\vdash  \Gamma_{{\mathrm{1}}}  \ottsym{,}  \Gamma_{{\mathrm{2}}}$.
 Let $\Gamma_{{\mathrm{3}}}$ be a typing context such that
 $ \mathit{dom}  (  \Gamma_{{\mathrm{2}}}  )  \,  \mathbin{\cap}  \,  \mathit{dom}  (  \Gamma_{{\mathrm{3}}}  )  \,  =  \,  \emptyset $.
 \begin{enumerate}
  \item \label{lem:weakening:typing-context}
        If $\vdash  \Gamma_{{\mathrm{1}}}  \ottsym{,}  \Gamma_{{\mathrm{3}}}$, then $\vdash  \Gamma_{{\mathrm{1}}}  \ottsym{,}  \Gamma_{{\mathrm{2}}}  \ottsym{,}  \Gamma_{{\mathrm{3}}}$.

  \item \label{lem:weakening:type}
        If $\Gamma_{{\mathrm{1}}}  \ottsym{,}  \Gamma_{{\mathrm{3}}}  \vdash  \ottnt{A}$, then $\Gamma_{{\mathrm{1}}}  \ottsym{,}  \Gamma_{{\mathrm{2}}}  \ottsym{,}  \Gamma_{{\mathrm{3}}}  \vdash  \ottnt{A}$.

  \item \label{lem:weakening:sub}
        If $\Gamma_{{\mathrm{1}}}  \ottsym{,}  \Gamma_{{\mathrm{3}}}  \vdash  \ottnt{A}  \sqsubseteq  \ottnt{B}$,
        then $\Gamma_{{\mathrm{1}}}  \ottsym{,}  \Gamma_{{\mathrm{2}}}  \ottsym{,}  \Gamma_{{\mathrm{3}}}  \vdash  \ottnt{A}  \sqsubseteq  \ottnt{B}$.

  \item \label{lem:weakening:term}
        If $\Gamma_{{\mathrm{1}}}  \ottsym{,}  \Gamma_{{\mathrm{3}}}  \vdash  \ottnt{M}  \ottsym{:}  \ottnt{A}$,
        then $\Gamma_{{\mathrm{1}}}  \ottsym{,}  \Gamma_{{\mathrm{2}}}  \ottsym{,}  \Gamma_{{\mathrm{3}}}  \vdash  \ottnt{M}  \ottsym{:}  \ottnt{A}$.

  \item \label{lem:weakening:handler}
        If $\Gamma_{{\mathrm{1}}}  \ottsym{,}  \Gamma_{{\mathrm{3}}}  \vdash  \ottnt{H}  \ottsym{:}  \ottnt{A}  \Rightarrow  \ottnt{B}$,
        then $\Gamma_{{\mathrm{1}}}  \ottsym{,}  \Gamma_{{\mathrm{2}}}  \ottsym{,}  \Gamma_{{\mathrm{3}}}  \vdash  \ottnt{H}  \ottsym{:}  \ottnt{A}  \Rightarrow  \ottnt{B}$.
 \end{enumerate}
\end{lemmap}
\begin{proof}
 By (mutual) induction on the derivations of the judgments.
\end{proof}


\begin{lemmap}{Type substitution}{ty-subst}
 Suppose that $\Gamma_{{\mathrm{1}}}  \vdash  \ottnt{A}$.
 \begin{enumerate}
  \item \label{lem:ty-subst:typing-context}
        If $\vdash  \Gamma_{{\mathrm{1}}}  \ottsym{,}  \alpha  \ottsym{,}  \Gamma_{{\mathrm{2}}}$, then $\vdash  \Gamma_{{\mathrm{1}}}  \ottsym{,}  \Gamma_{{\mathrm{2}}} \,  [  \ottnt{A}  \ottsym{/}  \alpha  ] $.

  \item \label{lem:ty-subst:type}
        If $\Gamma_{{\mathrm{1}}}  \ottsym{,}  \alpha  \ottsym{,}  \Gamma_{{\mathrm{2}}}  \vdash  \ottnt{B}$,
        then $\Gamma_{{\mathrm{1}}}  \ottsym{,}  \Gamma_{{\mathrm{2}}} \,  [  \ottnt{A}  \ottsym{/}  \alpha  ]   \vdash   \ottnt{B}    [  \ottnt{A}  \ottsym{/}  \alpha  ]  $.

  \item \label{lem:ty-subst:sub}
        If $\Gamma_{{\mathrm{1}}}  \ottsym{,}  \alpha  \ottsym{,}  \Gamma_{{\mathrm{2}}}  \vdash  \ottnt{B}  \sqsubseteq  \ottnt{C}$,
        then $\Gamma_{{\mathrm{1}}}  \ottsym{,}  \Gamma_{{\mathrm{2}}} \,  [  \ottnt{A}  \ottsym{/}  \alpha  ]   \vdash   \ottnt{B}    [  \ottnt{A}  \ottsym{/}  \alpha  ]    \sqsubseteq   \ottnt{C}    [  \ottnt{A}  \ottsym{/}  \alpha  ]  $.

  \item \label{lem:ty-subst:term}
        If $\Gamma_{{\mathrm{1}}}  \ottsym{,}  \alpha  \ottsym{,}  \Gamma_{{\mathrm{2}}}  \vdash  \ottnt{M}  \ottsym{:}  \ottnt{B}$,
        then $\Gamma_{{\mathrm{1}}}  \ottsym{,}  \Gamma_{{\mathrm{2}}} \,  [  \ottnt{A}  \ottsym{/}  \alpha  ]   \vdash  \ottnt{M}  \ottsym{:}   \ottnt{B}    [  \ottnt{A}  \ottsym{/}  \alpha  ]  $.

  \item If $\Gamma_{{\mathrm{1}}}  \ottsym{,}  \alpha  \ottsym{,}  \Gamma_{{\mathrm{2}}}  \vdash  \ottnt{H}  \ottsym{:}  \ottnt{B}  \Rightarrow  \ottnt{C}$,
        then $\Gamma_{{\mathrm{1}}}  \ottsym{,}  \Gamma_{{\mathrm{2}}} \,  [  \ottnt{A}  \ottsym{/}  \alpha  ]   \vdash  \ottnt{H}  \ottsym{:}   \ottnt{B}    [  \ottnt{A}  \ottsym{/}  \alpha  ]    \Rightarrow   \ottnt{C}    [  \ottnt{A}  \ottsym{/}  \alpha  ]  $.

 \end{enumerate}
\end{lemmap}
\begin{proof}
 Straightforward by (mutual) induction on the derivations of the judgments.
 Note that the cases for \T{Op} and \THrule{Op} depend on \refdef{eff}, which
 states that,
 for any $\mathsf{op}$, if $\mathit{ty} \, \ottsym{(}  \mathsf{op}  \ottsym{)} \,  =  \,   \text{\unboldmath$\forall$}  \,  \algeffseqover{ \beta }   \ottsym{.} \,  \ottnt{C}  \hookrightarrow  \ottnt{D} $, $ \mathit{ftv}  (  \ottnt{C}  )  \,  \mathbin{\cup}  \,  \mathit{ftv}  (  \ottnt{D}  )  \,  \subseteq  \, \ottsym{\{}   \algeffseqover{ \beta }   \ottsym{\}}$.
\end{proof}


\begin{lemma}{strengthening}
 \begin{enumerate}
  \item If $\vdash  \Gamma_{{\mathrm{1}}}  \ottsym{,}  \mathit{x} \,  \mathord{:}  \, \ottnt{A}  \ottsym{,}  \Gamma_{{\mathrm{2}}}$, then $\vdash  \Gamma_{{\mathrm{1}}}  \ottsym{,}  \Gamma_{{\mathrm{2}}}$.
  \item If $\Gamma_{{\mathrm{1}}}  \ottsym{,}  \mathit{x} \,  \mathord{:}  \, \ottnt{A}  \ottsym{,}  \Gamma_{{\mathrm{2}}}  \vdash  \ottnt{B}$, then $\Gamma_{{\mathrm{1}}}  \ottsym{,}  \Gamma_{{\mathrm{2}}}  \vdash  \ottnt{B}$.
  \item If $\Gamma_{{\mathrm{1}}}  \ottsym{,}  \mathit{x} \,  \mathord{:}  \, \ottnt{A}  \ottsym{,}  \Gamma_{{\mathrm{2}}}  \vdash  \ottnt{B}  \sqsubseteq  \ottnt{C}$, then $\Gamma_{{\mathrm{1}}}  \ottsym{,}  \Gamma_{{\mathrm{2}}}  \vdash  \ottnt{B}  \sqsubseteq  \ottnt{C}$.
 \end{enumerate}
\end{lemma}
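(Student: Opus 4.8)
The plan is to prove all three parts together, with part~1 as the engine and parts~2 and~3 riding on it. The crucial observation is that none of the three judgments involved---well-formedness $\vdash \Gamma$, type well-formedness $\Gamma \vdash \ottnt{B}$, and type containment $\Gamma \vdash \ottnt{B} \sqsubseteq \ottnt{C}$---ever inspects a term-variable binding: the well-formedness rules only test membership of \emph{type} variables in $\mathit{dom}(\Gamma)$, and the type containment rules only ever extend the context with type variables (via \Srule{Poly}). Concretely, removing the binding $\mathit{x} \mathord{:} \ottnt{A}$ changes the domain only by deleting the \emph{term} variable $\mathit{x}$, so $\mathit{dom}(\Gamma_1, \mathit{x} \mathord{:} \ottnt{A}, \Gamma_2)$ and $\mathit{dom}(\Gamma_1, \Gamma_2)$ contain exactly the same type variables.

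First I would prove part~1 by induction on the structure of $\Gamma_2$. When $\Gamma_2$ is empty, $\vdash \Gamma_1, \mathit{x} \mathord{:} \ottnt{A}$ can only be derived by \WF{ExtVar}, whose premise $\Gamma_1 \vdash \ottnt{A}$ already contains the desired $\vdash \Gamma_1$. When $\Gamma_2 = \Gamma_2', \mathit{y} \mathord{:} \ottnt{B}$, the derivation ends in \WF{ExtVar} with premises $\mathit{y} \notin \mathit{dom}(\Gamma_1, \mathit{x} \mathord{:} \ottnt{A}, \Gamma_2')$ and $\Gamma_1, \mathit{x} \mathord{:} \ottnt{A}, \Gamma_2' \vdash \ottnt{B}$; the induction hypothesis for part~1 gives $\vdash \Gamma_1, \Gamma_2'$, part~2 (below) gives $\Gamma_1, \Gamma_2' \vdash \ottnt{B}$, and $\mathit{y} \notin \mathit{dom}(\Gamma_1, \Gamma_2')$ is immediate, so \WF{ExtVar} reassembles $\vdash \Gamma_1, \Gamma_2$. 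The case $\Gamma_2 = \Gamma_2', \alpha$ is analogous via \WF{ExtTyVar}.

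Part~2 then falls out directly from part~1: by definition $\Gamma_1, \mathit{x} \mathord{:} \ottnt{A}, \Gamma_2 \vdash \ottnt{B}$ unfolds to $\mathit{ftv}(\ottnt{B}) \subseteq \mathit{dom}(\Gamma_1, \mathit{x} \mathord{:} \ottnt{A}, \Gamma_2)$ together with $\vdash \Gamma_1, \mathit{x} \mathord{:} \ottnt{A}, \Gamma_2$; since $\mathit{ftv}(\ottnt{B})$ consists solely of type variables and the two domains agree on type variables, the inclusion survives as $\mathit{ftv}(\ottnt{B}) \subseteq \mathit{dom}(\Gamma_1, \Gamma_2)$, while part~1 supplies $\vdash \Gamma_1, \Gamma_2$. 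Finally, part~3 is proven by induction on the derivation of $\Gamma_1, \mathit{x} \mathord{:} \ottnt{A}, \Gamma_2 \vdash \ottnt{B} \sqsubseteq \ottnt{C}$: the leaf rules \Srule{Refl}, \Srule{Gen}, and \Srule{DFun} discharge their well-formedness premises through part~1, \Srule{Inst} discharges its $\Gamma \vdash \ottnt{B}$ premise through part~2, and every rule carrying a containment premise (\Srule{Trans}, \Srule{Fun}, \Srule{Poly}, and the compatibility and distributivity rules for products, sums, and lists) simply applies the induction hypothesis. In \Srule{Poly} the premise is derived under $\Gamma_1, \mathit{x} \mathord{:} \ottnt{A}, \Gamma_2, \alpha$, so one instantiates the hypothesis with $\Gamma_2, \alpha$ in place of $\Gamma_2$.

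I expect no serious obstacle here; the only point demanding care is the mutual entanglement of parts~1 and~2 through the type-well-formedness premise of \WF{ExtVar}. This is resolved cleanly by the domain observation above, which both keeps the recursion well-founded (each appeal to part~2 inside the part~1 induction is on a strictly shorter $\Gamma_2'$) and guarantees that deleting a term binding can never invalidate a side condition phrased in terms of type variables.
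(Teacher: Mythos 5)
Your proof is correct and takes essentially the same route as the paper, which disposes of this lemma with the single line ``by induction on the derivations of the judgments''; your elaboration of the mutual dependence between parts~1 and~2 through the $\Gamma \vdash \ottnt{A}$ premise of \WF{ExtVar}, and the observation that deleting a term binding leaves the type-variable part of $ \mathit{dom}  (  \Gamma  ) $ unchanged, is exactly the content that induction has to supply.
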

\begin{proof}
 By induction on the derivations of the judgments.
\end{proof}

\begin{lemmap}{Term substitution}{term-subst}
 Suppose that $\Gamma_{{\mathrm{1}}}  \vdash  \ottnt{M}  \ottsym{:}  \ottnt{A}$.
 \begin{enumerate}
  \item \label{lem:term-subst:term}
        If $\Gamma_{{\mathrm{1}}}  \ottsym{,}  \mathit{x} \,  \mathord{:}  \, \ottnt{A}  \ottsym{,}  \Gamma_{{\mathrm{2}}}  \vdash  \ottnt{M'}  \ottsym{:}  \ottnt{B}$,
        then $\Gamma_{{\mathrm{1}}}  \ottsym{,}  \Gamma_{{\mathrm{2}}}  \vdash   \ottnt{M'}    [  \ottnt{M}  /  \mathit{x}  ]    \ottsym{:}  \ottnt{B}$.

  \item If $\Gamma_{{\mathrm{1}}}  \ottsym{,}  \mathit{x} \,  \mathord{:}  \, \ottnt{A}  \ottsym{,}  \Gamma_{{\mathrm{2}}}  \vdash  \ottnt{H}  \ottsym{:}  \ottnt{B}  \Rightarrow  \ottnt{C}$,
        then $\Gamma_{{\mathrm{1}}}  \ottsym{,}  \Gamma_{{\mathrm{2}}}  \vdash   \ottnt{H}    [  \ottnt{M}  /  \mathit{x}  ]    \ottsym{:}  \ottnt{B}  \Rightarrow  \ottnt{C}$.
 \end{enumerate}
\end{lemmap}
\begin{proof}
 By mutual induction on the typing derivations with \reflem{strengthening}.
 The case for \T{Var} uses \reflem{weakening} (\ref{lem:weakening:term}).
\end{proof}


\begin{defn}
 The function $ \mathit{unqualify} $ returns the type obtained by removing all the
 $\forall$s at the top-level from a given type, defined as follows.
 \[\begin{array}{lll}
   \mathit{unqualify}  (   \text{\unboldmath$\forall$}  \, \alpha  \ottsym{.} \, \ottnt{A}  )  & \defeq &  \mathit{unqualify}  (  \ottnt{A}  )  \\
   \mathit{unqualify}  (  \ottnt{A}  )     & \defeq & \ottnt{A}
   \quad \text{(if $\ottnt{A} \,  \not=  \,  \text{\unboldmath$\forall$}  \, \alpha  \ottsym{.} \, \ottnt{B}$ for any $\alpha$ and $\ottnt{B}$)}
   \end{array}\]
\end{defn}

\begin{lemma}{subtyping-unqualify-no-tyvar}
 Suppose $\Gamma  \vdash  \ottnt{A}  \sqsubseteq  \ottnt{B}$.
 If $ \mathit{unqualify}  (  \ottnt{A}  ) $ is not a type variable,
 then $ \mathit{unqualify}  (  \ottnt{B}  ) $ is not either.
\end{lemma}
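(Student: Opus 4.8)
The plan is to prove the statement by induction on the derivation of $\Gamma \vdash A \sqsubseteq B$. The guiding observation is that $\mathit{unqualify}$ erases only leading universal quantifiers, so I would write $A =  \text{\unboldmath$\forall$}  \,  \algeffseqover{ \gamma }   \ottsym{.} \, A'$ with $A' = \mathit{unqualify}(A)$ having no outermost $\forall$; the hypothesis then says $A'$ is one of $\iota$, $A_1 \rightarrow A_2$, $ \ottnt{A_{{\mathrm{1}}}}  \times  \ottnt{A_{{\mathrm{2}}}} $, $ \ottnt{A_{{\mathrm{1}}}}  +  \ottnt{A_{{\mathrm{2}}}} $, or $ \ottnt{A_{{\mathrm{1}}}}  \, \mathsf{list} $. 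The payoff is that for every containment rule whose right-hand side already exhibits a definite, non-quantifier, non-variable head constructor, the conclusion is immediate regardless of the hypothesis.

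First I would dispatch this large block of routine cases. The compatibility rules \Srule{Fun}, \Srule{Prod}, \Srule{Sum}, \Srule{List} (and \Srule{FunEff} in the effect system) all have a function/product/sum/list head on both sides, so $\mathit{unqualify}(B) = B$ is manifestly not a type variable. The distributive rules \Srule{DFun}, \Srule{DProd}, \Srule{DSum}, \Srule{DList} (and \Srule{DFunEff}) likewise have non-variable heads on the right ($ \ottnt{A}  \rightarrow   \text{\unboldmath$\forall$}  \, \alpha  \ottsym{.} \, \ottnt{B}$, a product, a sum, a list). For \Srule{Refl} the claim is trivial, and for \Srule{Gen}, whose conclusion is $\Gamma \vdash A \sqsubseteq  \text{\unboldmath$\forall$}  \, \alpha  \ottsym{.} \, A$, I would use $\mathit{unqualify}( \text{\unboldmath$\forall$}  \, \alpha  \ottsym{.} \, A) = \mathit{unqualify}(A)$, so the conclusion coincides with the hypothesis. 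The two genuinely inductive cases are \Srule{Trans} and \Srule{Poly}: for \Srule{Trans} I would chain the induction hypothesis through the intermediate type $C$ (if $\mathit{unqualify}(A)$ is not a variable, neither is $\mathit{unqualify}(C)$, hence neither is $\mathit{unqualify}(B)$), and for \Srule{Poly}, whose premise is $\Gamma  \ottsym{,}  \alpha  \vdash  \ottnt{A}  \sqsubseteq  \ottnt{B}$ and whose conclusion relates $ \text{\unboldmath$\forall$}  \, \alpha  \ottsym{.} \, A$ to $ \text{\unboldmath$\forall$}  \, \alpha  \ottsym{.} \, B$, I would note $\mathit{unqualify}$ erases the added quantifier on both sides and apply the induction hypothesis directly.

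The main obstacle is the instantiation rule \Srule{Inst}, whose conclusion is $\Gamma \vdash  \text{\unboldmath$\forall$}  \, \alpha  \ottsym{.} \, A \sqsubseteq  \ottnt{A}    [  \ottnt{B}  \ottsym{/}  \alpha  ] $, because here the right-hand side is obtained by substitution rather than having a syntactically visible head. For this I would prove the small auxiliary fact that substitution preserves the quantifier-stripped head when it is not a variable: if $\mathit{unqualify}(D)$ is not a type variable, then $\mathit{unqualify}(D[B/\alpha])$ is not either. The argument is that $D =  \text{\unboldmath$\forall$}  \,  \algeffseqover{ \gamma }   \ottsym{.} \, D'$ with $D' = \mathit{unqualify}(D)$ a base/function/product/sum/list type, so (choosing the $ \algeffseqover{ \gamma } $ fresh for $B$ and distinct from $\alpha$) $D[B/\alpha] =  \text{\unboldmath$\forall$}  \,  \algeffseqover{ \gamma }   \ottsym{.} \, D'[B/\alpha]$, and $D'[B/\alpha]$ retains the same outermost constructor of $D'$; hence $\mathit{unqualify}(D[B/\alpha]) = D'[B/\alpha]$, which is not a type variable. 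Applying this with $D = A$ (using $\mathit{unqualify}( \text{\unboldmath$\forall$}  \, \alpha  \ottsym{.} \, A) = \mathit{unqualify}(A)$ for the hypothesis) closes the \Srule{Inst} case and completes the induction.
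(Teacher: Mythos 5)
Your proof is correct and follows essentially the same route as the paper: induction on the type containment derivation, with all cases immediate except \Srule{Inst}, which is handled by the observation that substitution preserves a non-variable quantifier-stripped head. The paper merely asserts that last fact as "easy to see," whereas you spell it out, but the argument is the same.
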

\begin{proof}
 By induction on the type containment derivation.
 Only the interesting case is for \Srule{Inst}.  In that case, we are given
 $\Gamma  \vdash   \text{\unboldmath$\forall$}  \, \alpha  \ottsym{.} \, \ottnt{C}  \sqsubseteq   \ottnt{C}    [  \ottnt{D}  \ottsym{/}  \alpha  ]  $ ($\ottnt{A} \,  =  \,  \text{\unboldmath$\forall$}  \, \alpha  \ottsym{.} \, \ottnt{C}$ and $\ottnt{B} \,  =  \,  \ottnt{C}    [  \ottnt{D}  \ottsym{/}  \alpha  ]  $) for some
 $\alpha$, $\ottnt{C}$, and $\ottnt{D}$, and, by inversion,
 $\Gamma  \vdash  \ottnt{D}$.  It is easy to see, if
 $ \mathit{unqualify}  (   \text{\unboldmath$\forall$}  \, \beta  \ottsym{.} \, \ottnt{C}  )  =  \mathit{unqualify}  (  \ottnt{C}  ) $ is not a type variable,
 then $ \mathit{unqualify}  (   \ottnt{C}    [  \ottnt{D}  \ottsym{/}  \beta  ]    ) $ is not either.
\end{proof}

\begin{lemma}{subtyping-unqualify}
 Suppose that $\Gamma  \vdash  \ottnt{A}  \sqsubseteq  \ottnt{B}$ and $ \mathit{unqualify}  (  \ottnt{A}  ) $ is not a type variable.
 \begin{enumerate}
  \item If $ \mathit{unqualify}  (  \ottnt{B}  )  \,  =  \, \iota$,
        then $ \mathit{unqualify}  (  \ottnt{A}  )  \,  =  \, \iota$.

  \item \label{lem:subtyping-unqualify:fun}
        If $ \mathit{unqualify}  (  \ottnt{B}  )  \,  =  \, \ottnt{B_{{\mathrm{1}}}}  \rightarrow  \ottnt{B_{{\mathrm{2}}}}$,
        then $ \mathit{unqualify}  (  \ottnt{A}  )  \,  =  \, \ottnt{A_{{\mathrm{1}}}}  \rightarrow  \ottnt{A_{{\mathrm{2}}}}$ for some $\ottnt{A_{{\mathrm{1}}}}$ and $\ottnt{A_{{\mathrm{2}}}}$.

  \item If $ \mathit{unqualify}  (  \ottnt{B}  )  \,  =  \,  \ottnt{B_{{\mathrm{1}}}}  \times  \ottnt{B_{{\mathrm{2}}}} $,
        then $ \mathit{unqualify}  (  \ottnt{A}  )  \,  =  \,  \ottnt{A_{{\mathrm{1}}}}  \times  \ottnt{A_{{\mathrm{2}}}} $ for some $\ottnt{A_{{\mathrm{1}}}}$ and $\ottnt{A_{{\mathrm{2}}}}$.

  \item If $ \mathit{unqualify}  (  \ottnt{B}  )  \,  =  \,  \ottnt{B_{{\mathrm{1}}}}  +  \ottnt{B_{{\mathrm{2}}}} $,
        then $ \mathit{unqualify}  (  \ottnt{A}  )  \,  =  \,  \ottnt{A_{{\mathrm{1}}}}  +  \ottnt{A_{{\mathrm{2}}}} $ for some $\ottnt{A_{{\mathrm{1}}}}$ and $\ottnt{A_{{\mathrm{2}}}}$.

  \item If $ \mathit{unqualify}  (  \ottnt{B}  )  \,  =  \,  \ottnt{B'}  \, \mathsf{list} $,
        then $ \mathit{unqualify}  (  \ottnt{A}  )  \,  =  \,  \ottnt{A'}  \, \mathsf{list} $ for some $\ottnt{A'}$.
 \end{enumerate}
\end{lemma}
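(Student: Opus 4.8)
The plan is to prove the five statements simultaneously by induction on the derivation of $\Gamma \vdash A \sqsubseteq B$. The unifying observation is that $\mathit{unqualify}(A)$ exposes the \emph{head constructor} of $A$ (a base type $\iota$, an arrow, a product, a sum, or a list) after discarding the leading quantifiers, and that every type containment rule either keeps both sides with the same head constructor or relates them through the induction hypothesis; the hypothesis that $\mathit{unqualify}(A)$ is not a type variable is exactly what rules out the rules that could otherwise change a head.

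First I would dispatch the rules whose two sides manifestly share a head constructor. For \Srule{Refl} we have $A = B$, so the claim is immediate. For the compatibility rules \Srule{Fun}, \Srule{Prod}, \Srule{Sum}, and \Srule{List}, both $A$ and $B$ are already of the matching non-quantified shape, so $\mathit{unqualify}$ leaves them untouched and the heads coincide; the premises of the four non-matching statements are then vacuously false. The distributive rules \Srule{DFun}, \Srule{DProd}, \Srule{DSum}, and \Srule{DList} behave the same way: stripping the single outermost $\forall$ on the left yields exactly the head constructor appearing on the right (e.g.\ in \Srule{DFun} both $\mathit{unqualify}(\forall\alpha.\,A'\to B')$ and $\mathit{unqualify}(A'\to\forall\alpha.\,B')$ are arrows), so again the heads agree.

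The quantifier-manipulating rules \Srule{Gen} and \Srule{Inst} are where the non-variable hypothesis is used. For \Srule{Gen}, $B = \forall\alpha.\,A$, so $\mathit{unqualify}(B) = \mathit{unqualify}(A)$ and the heads coincide. For \Srule{Inst}, $A = \forall\alpha.\,A'$ and $B = A'[B'/\alpha]$; writing $A' = \forall\vec\beta.\,A''$ with $A''$ not a leading quantifier, we have $\mathit{unqualify}(A) = A''$, which is non-variable by assumption and hence has a genuine base/arrow/product/sum/list head. Choosing $\alpha \notin \vec\beta$, substitution distributes over that constructor, so $\mathit{unqualify}(B) = A''[B'/\alpha]$ has the \emph{same} head constructor as $A''$; this is precisely the observation already used in the proof of \reflem{subtyping-unqualify-no-tyvar}. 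The conclusion then follows directly. For \Srule{Poly}, $A = \forall\alpha.\,A'$ and $B = \forall\alpha.\,B'$ with $\Gamma,\alpha \vdash A' \sqsubseteq B'$; since $\mathit{unqualify}(A) = \mathit{unqualify}(A')$ is non-variable, the induction hypothesis applied to the premise transfers the head of $\mathit{unqualify}(B') = \mathit{unqualify}(B)$ back to $\mathit{unqualify}(A') = \mathit{unqualify}(A)$.

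The main obstacle is the \Srule{Trans} case, where $\Gamma \vdash A \sqsubseteq C$ and $\Gamma \vdash C \sqsubseteq B$: to chain the two induction hypotheses I must know that the intermediate type $C$ also has a non-variable head, for otherwise the hypothesis on $C \sqsubseteq B$ would not apply. This is supplied exactly by \reflem{subtyping-unqualify-no-tyvar}, which guarantees that $\mathit{unqualify}(C)$ is not a type variable once $\mathit{unqualify}(A)$ is not. Given a head shape for $\mathit{unqualify}(B)$, the induction hypothesis on $C \sqsubseteq B$ yields the same shape for $\mathit{unqualify}(C)$, and then the induction hypothesis on $A \sqsubseteq C$ yields it for $\mathit{unqualify}(A)$, closing the case.
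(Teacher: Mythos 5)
Your proposal is correct and follows essentially the same route as the paper: induction on the type containment derivation, with \Srule{Trans} handled by chaining the induction hypotheses through \reflem{subtyping-unqualify-no-tyvar} applied to the intermediate type, and \Srule{Inst} handled by observing that substitution preserves the head constructor of a non-variable $\mathit{unqualify}(\ottnt{C})$. The paper merely compresses the remaining cases into ``straightforward,'' which you spell out explicitly.
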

\begin{proof}
 By induction on the type containment derivation.
 The case for \Srule{Trans} is shown by the IHs and
 \reflem{subtyping-unqualify-no-tyvar}.
 In the case for \Srule{Inst}, we are given
 $\Gamma  \vdash   \text{\unboldmath$\forall$}  \, \alpha  \ottsym{.} \, \ottnt{C}  \sqsubseteq   \ottnt{C}    [  \ottnt{D}  \ottsym{/}  \alpha  ]  $ for some $\alpha$, $\ottnt{C}$, and $\ottnt{D}$
 ($\ottnt{A} \,  =  \,  \text{\unboldmath$\forall$}  \, \alpha  \ottsym{.} \, \ottnt{C}$ and $\ottnt{B} \,  =  \,  \ottnt{C}    [  \ottnt{D}  \ottsym{/}  \alpha  ]  $).
 Since $ \mathit{unqualify}  (   \text{\unboldmath$\forall$}  \, \alpha  \ottsym{.} \, \ottnt{C}  )  \,  =  \,  \mathit{unqualify}  (  \ottnt{C}  ) $ is not a type variable,
 it is easy to see that the top type constructor of $ \mathit{unqualify}  (  \ottnt{C}  ) $
 is the same as that of $ \mathit{unqualify}  (   \ottnt{C}    [  \ottnt{D}  \ottsym{/}  \alpha  ]    ) $.
 Proving the other cases is straightforward.
\end{proof}

\begin{lemma}{canonical-forms-unqualify-no-tyvar}
 If $\Gamma  \vdash  \ottnt{v}  \ottsym{:}  \ottnt{A}$, then $ \mathit{unqualify}  (  \ottnt{A}  ) $ is not a type variable.
\end{lemma}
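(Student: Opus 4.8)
The plan is to proceed by induction on the derivation of $\Gamma  \vdash  \ottnt{v}  \ottsym{:}  \ottnt{A}$, with a case analysis on the last typing rule. Because $\ottnt{v}$ is a value, the only term-directed rules whose subject can have the shape of $\ottnt{v}$ are \T{Const}, \T{Abs}, \T{Pair}, \T{InL}, \T{InR}, \T{Nil}, and \T{Cons}; the remaining rules applicable to a value are \T{Gen} and \T{Inst}, which leave the underlying term unchanged. (Rules such as \T{Var}, \T{App}, \T{Proj1}, \T{Fix}, and so on cannot conclude a judgment whose subject is a value, so they need not be considered.)

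For each term-directed case, the assigned type $\ottnt{A}$ has a top-level constructor other than $\forall$, so $\mathit{unqualify}(\ottnt{A}) = \ottnt{A}$ and one reads off the claim directly: \T{Const} assigns $\mathit{ty}(\ottnt{c})$, which by \refasm{const} is a base type or a function type; \T{Abs} a function type; \T{Pair} a product type; \T{InL} and \T{InR} a sum type; and \T{Nil} and \T{Cons} a list type. None of these is a type variable. The rule \T{Gen} concludes $\Gamma  \vdash  \ottnt{v}  \ottsym{:}   \text{\unboldmath$\forall$}  \, \alpha  \ottsym{.} \, \ottnt{A}$ from $\Gamma  \ottsym{,}  \alpha  \vdash  \ottnt{v}  \ottsym{:}  \ottnt{A}$; since $\mathit{unqualify}( \text{\unboldmath$\forall$}  \, \alpha  \ottsym{.} \, \ottnt{A}) = \mathit{unqualify}(\ottnt{A})$, the result is immediate from the induction hypothesis.

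The only genuinely nontrivial case is \T{Inst}, which concludes $\Gamma  \vdash  \ottnt{v}  \ottsym{:}  \ottnt{B}$ from $\Gamma  \vdash  \ottnt{v}  \ottsym{:}  \ottnt{A}$ and $\Gamma  \vdash  \ottnt{A}  \sqsubseteq  \ottnt{B}$. Here I would invoke \reflem{subtyping-unqualify-no-tyvar}: the induction hypothesis gives that $\mathit{unqualify}(\ottnt{A})$ is not a type variable, and that lemma then transports this property across the type containment judgment to yield that $\mathit{unqualify}(\ottnt{B})$ is not a type variable either. I expect \T{Inst} to be the main obstacle, precisely because type containment may rewrite the outer shape of a type; the essential difficulty, however, has already been isolated into \reflem{subtyping-unqualify-no-tyvar}, whose critical \Srule{Inst} step rests on the observation that instantiating a type whose head (after stripping $\forall$s) is a non-variable constructor cannot collapse it into a bare type variable.
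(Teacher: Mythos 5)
Your proof is correct and matches the paper's own argument: induction on the typing derivation, with the structural rules read off directly, \T{Gen} by the IH, and \T{Inst} discharged by combining the IH with \reflem{subtyping-unqualify-no-tyvar}. Nothing further is needed.
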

\begin{proof}
 By induction on the typing derivation for $\ottnt{v}$.
 We can show the case for \T{Inst} by the IH and \reflem{subtyping-unqualify-no-tyvar}.
\end{proof}

\begin{lemmap}{Canonical forms}{canonical-forms}
 Suppose that $\Gamma  \vdash  \ottnt{v}  \ottsym{:}  \ottnt{A}$.
 \begin{enumerate}
  \item If $ \mathit{unqualify}  (  \ottnt{A}  )  \,  =  \, \iota$,
        then $\ottnt{v} \,  =  \, \ottnt{c}$ for some $\ottnt{c}$.

  \item If $ \mathit{unqualify}  (  \ottnt{A}  )  \,  =  \, \ottnt{B}  \rightarrow  \ottnt{C}$,
        then $\ottnt{v} \,  =  \, \ottnt{c}$ for some $\ottnt{c}$,
        or $\ottnt{v} \,  =  \,  \lambda\!  \, \mathit{x}  \ottsym{.}  \ottnt{M}$ for some $\mathit{x}$ and $\ottnt{M}$.

  \item If $ \mathit{unqualify}  (  \ottnt{A}  )  \,  =  \,  \ottnt{B}  \times  \ottnt{C} $,
        then $\ottnt{v} \,  =  \, \ottsym{(}  \ottnt{v_{{\mathrm{1}}}}  \ottsym{,}  \ottnt{v_{{\mathrm{2}}}}  \ottsym{)}$ for some $\ottnt{v_{{\mathrm{1}}}}$ and $\ottnt{v_{{\mathrm{2}}}}$.

  \item If $ \mathit{unqualify}  (  \ottnt{A}  )  \,  =  \,  \ottnt{B}  +  \ottnt{C} $,
        then $\ottnt{v} \,  =  \, \mathsf{inl} \, \ottnt{v'}$ or $\ottnt{v} \,  =  \, \mathsf{inr} \, \ottnt{v'}$ for some $\ottnt{v'}$.

  \item If $ \mathit{unqualify}  (  \ottnt{A}  )  \,  =  \,  \ottnt{B}  \, \mathsf{list} $,
        then $\ottnt{v} \,  =  \, \mathsf{nil}$ or $\ottnt{v} \,  =  \, \mathsf{cons} \, \ottnt{v'}$ for some $\ottnt{v'}$.
 \end{enumerate}
\end{lemmap}
\begin{proof}
 Straightforward by induction on the typing derivation for $\ottnt{v}$.
 Only the interesting case is for \T{Inst}.  In the case, we are given, by
 inversion, $\Gamma  \vdash  \ottnt{v}  \ottsym{:}  \ottnt{B}$ and $\Gamma  \vdash  \ottnt{B}  \sqsubseteq  \ottnt{A}$ and $\Gamma  \vdash  \ottnt{A}$ for some
 $\ottnt{B}$.  By \reflem{canonical-forms-unqualify-no-tyvar}, $ \mathit{unqualify}  (  \ottnt{B}  ) $ is
 not a type variable.  Thus, by \reflem{subtyping-unqualify} and the IH,
 we finish.
\end{proof}

\begin{defn}
 We use the metavariable $\Delta$ for ranging over typing contexts that consist of
 only type variables.  Formally, they are defined by the following syntax.
 \[
  \Delta ::=  \emptyset  \mid \Delta  \ottsym{,}  \alpha
 \]
\end{defn}

\ifrestate
\lemmSubtypingInvFun*
\else
\begin{lemmap}{Type containment inversion: function types}{subtyping-inv-fun}
 If $\Gamma  \vdash   \text{\unboldmath$\forall$}  \,  \algeffseqoverindex{ \alpha_{{\mathrm{1}}} }{ \text{\unboldmath$\mathit{I_{{\mathrm{1}}}}$} }   \ottsym{.} \, \ottnt{A_{{\mathrm{1}}}}  \rightarrow  \ottnt{A_{{\mathrm{2}}}}  \sqsubseteq   \text{\unboldmath$\forall$}  \,  \algeffseqoverindex{ \alpha_{{\mathrm{2}}} }{ \text{\unboldmath$\mathit{I_{{\mathrm{2}}}}$} }   \ottsym{.} \, \ottnt{B_{{\mathrm{1}}}}  \rightarrow  \ottnt{B_{{\mathrm{2}}}}$,
 then there exist $ \algeffseqoverindex{ \alpha_{{\mathrm{11}}} }{ \text{\unboldmath$\mathit{I_{{\mathrm{11}}}}$} } $, $ \algeffseqoverindex{ \alpha_{{\mathrm{12}}} }{ \text{\unboldmath$\mathit{I_{{\mathrm{12}}}}$} } $, $ \algeffseqoverindex{ \beta }{ \text{\unboldmath$\mathit{J}$} } $, and $ \algeffseqoverindex{ \ottnt{C} }{ \text{\unboldmath$\mathit{I_{{\mathrm{11}}}}$} } $
 such that
 \begin{itemize}
  \item $\ottsym{\{}   \algeffseqoverindex{ \alpha_{{\mathrm{1}}} }{ \text{\unboldmath$\mathit{I_{{\mathrm{1}}}}$} }   \ottsym{\}} \,  =  \, \ottsym{\{}   \algeffseqoverindex{ \alpha_{{\mathrm{11}}} }{ \text{\unboldmath$\mathit{I_{{\mathrm{11}}}}$} }   \ottsym{\}} \,  \mathbin{\uplus}  \, \ottsym{\{}   \algeffseqoverindex{ \alpha_{{\mathrm{12}}} }{ \text{\unboldmath$\mathit{I_{{\mathrm{12}}}}$} }   \ottsym{\}}$,
  \item $\Gamma  \ottsym{,}   \algeffseqoverindex{ \alpha_{{\mathrm{2}}} }{ \text{\unboldmath$\mathit{I_{{\mathrm{2}}}}$} }   \ottsym{,}   \algeffseqoverindex{ \beta }{ \text{\unboldmath$\mathit{J}$} }   \vdash   \algeffseqoverindex{ \ottnt{C} }{ \text{\unboldmath$\mathit{I_{{\mathrm{11}}}}$} } $,
  \item $\Gamma  \ottsym{,}   \algeffseqoverindex{ \alpha_{{\mathrm{2}}} }{ \text{\unboldmath$\mathit{I_{{\mathrm{2}}}}$} }   \vdash  \ottnt{B_{{\mathrm{1}}}}  \sqsubseteq    \text{\unboldmath$\forall$}  \,  \algeffseqoverindex{ \beta }{ \text{\unboldmath$\mathit{J}$} }   \ottsym{.} \, \ottnt{A_{{\mathrm{1}}}}    [   \algeffseqoverindex{ \ottnt{C} }{ \text{\unboldmath$\mathit{I_{{\mathrm{11}}}}$} }   \ottsym{/}   \algeffseqoverindex{ \alpha_{{\mathrm{11}}} }{ \text{\unboldmath$\mathit{I_{{\mathrm{11}}}}$} }   ]  $,
  \item $\Gamma  \ottsym{,}   \algeffseqoverindex{ \alpha_{{\mathrm{2}}} }{ \text{\unboldmath$\mathit{I_{{\mathrm{2}}}}$} }   \vdash    \text{\unboldmath$\forall$}  \,  \algeffseqoverindex{ \alpha_{{\mathrm{12}}} }{ \text{\unboldmath$\mathit{I_{{\mathrm{12}}}}$} }   \ottsym{.} \,  \text{\unboldmath$\forall$}  \,  \algeffseqoverindex{ \beta }{ \text{\unboldmath$\mathit{J}$} }   \ottsym{.} \, \ottnt{A_{{\mathrm{2}}}}    [   \algeffseqoverindex{ \ottnt{C} }{ \text{\unboldmath$\mathit{I_{{\mathrm{11}}}}$} }   \ottsym{/}   \algeffseqoverindex{ \alpha_{{\mathrm{11}}} }{ \text{\unboldmath$\mathit{I_{{\mathrm{11}}}}$} }   ]    \sqsubseteq  \ottnt{B_{{\mathrm{2}}}}$, and
  \item type variables in $\ottsym{\{}   \algeffseqoverindex{ \beta }{ \text{\unboldmath$\mathit{J}$} }   \ottsym{\}}$ do not appear free in $\ottnt{A_{{\mathrm{1}}}}$ and $\ottnt{A_{{\mathrm{2}}}}$.
 \end{itemize}
\end{lemmap}
\fi
\begin{proof}
 \ifrestate
 We show the following statement (which just uses different metavariables):
 \begin{quote}
  If $\Gamma  \vdash   \text{\unboldmath$\forall$}  \,  \algeffseqoverindex{ \alpha_{{\mathrm{1}}} }{ \text{\unboldmath$\mathit{I_{{\mathrm{1}}}}$} }   \ottsym{.} \, \ottnt{A_{{\mathrm{1}}}}  \rightarrow  \ottnt{A_{{\mathrm{2}}}}  \sqsubseteq   \text{\unboldmath$\forall$}  \,  \algeffseqoverindex{ \alpha_{{\mathrm{2}}} }{ \text{\unboldmath$\mathit{I_{{\mathrm{2}}}}$} }   \ottsym{.} \, \ottnt{B_{{\mathrm{1}}}}  \rightarrow  \ottnt{B_{{\mathrm{2}}}}$,
  then there exist $ \algeffseqoverindex{ \alpha_{{\mathrm{11}}} }{ \text{\unboldmath$\mathit{I_{{\mathrm{11}}}}$} } $, $ \algeffseqoverindex{ \alpha_{{\mathrm{12}}} }{ \text{\unboldmath$\mathit{I_{{\mathrm{12}}}}$} } $, $ \algeffseqoverindex{ \beta }{ \text{\unboldmath$\mathit{J}$} } $, and $ \algeffseqoverindex{ \ottnt{C} }{ \text{\unboldmath$\mathit{I_{{\mathrm{11}}}}$} } $
  such that
  \begin{itemize}
   \item $\ottsym{\{}   \algeffseqoverindex{ \alpha_{{\mathrm{1}}} }{ \text{\unboldmath$\mathit{I_{{\mathrm{1}}}}$} }   \ottsym{\}} \,  =  \, \ottsym{\{}   \algeffseqoverindex{ \alpha_{{\mathrm{11}}} }{ \text{\unboldmath$\mathit{I_{{\mathrm{11}}}}$} }   \ottsym{\}} \,  \mathbin{\uplus}  \, \ottsym{\{}   \algeffseqoverindex{ \alpha_{{\mathrm{12}}} }{ \text{\unboldmath$\mathit{I_{{\mathrm{12}}}}$} }   \ottsym{\}}$,
   \item $\Gamma  \ottsym{,}   \algeffseqoverindex{ \alpha_{{\mathrm{2}}} }{ \text{\unboldmath$\mathit{I_{{\mathrm{2}}}}$} }   \ottsym{,}   \algeffseqoverindex{ \beta }{ \text{\unboldmath$\mathit{J}$} }   \vdash   \algeffseqoverindex{ \ottnt{C} }{ \text{\unboldmath$\mathit{I_{{\mathrm{11}}}}$} } $,
   \item $\Gamma  \ottsym{,}   \algeffseqoverindex{ \alpha_{{\mathrm{2}}} }{ \text{\unboldmath$\mathit{I_{{\mathrm{2}}}}$} }   \vdash  \ottnt{B_{{\mathrm{1}}}}  \sqsubseteq    \text{\unboldmath$\forall$}  \,  \algeffseqoverindex{ \beta }{ \text{\unboldmath$\mathit{J}$} }   \ottsym{.} \, \ottnt{A_{{\mathrm{1}}}}    [   \algeffseqoverindex{ \ottnt{C} }{ \text{\unboldmath$\mathit{I_{{\mathrm{11}}}}$} }   \ottsym{/}   \algeffseqoverindex{ \alpha_{{\mathrm{11}}} }{ \text{\unboldmath$\mathit{I_{{\mathrm{11}}}}$} }   ]  $,
   \item $\Gamma  \ottsym{,}   \algeffseqoverindex{ \alpha_{{\mathrm{2}}} }{ \text{\unboldmath$\mathit{I_{{\mathrm{2}}}}$} }   \vdash    \text{\unboldmath$\forall$}  \,  \algeffseqoverindex{ \alpha_{{\mathrm{12}}} }{ \text{\unboldmath$\mathit{I_{{\mathrm{12}}}}$} }   \ottsym{.} \,  \text{\unboldmath$\forall$}  \,  \algeffseqoverindex{ \beta }{ \text{\unboldmath$\mathit{J}$} }   \ottsym{.} \, \ottnt{A_{{\mathrm{2}}}}    [   \algeffseqoverindex{ \ottnt{C} }{ \text{\unboldmath$\mathit{I_{{\mathrm{11}}}}$} }   \ottsym{/}   \algeffseqoverindex{ \alpha_{{\mathrm{11}}} }{ \text{\unboldmath$\mathit{I_{{\mathrm{11}}}}$} }   ]    \sqsubseteq  \ottnt{B_{{\mathrm{2}}}}$, and
   \item type variables in $\ottsym{\{}   \algeffseqoverindex{ \beta }{ \text{\unboldmath$\mathit{J}$} }   \ottsym{\}}$ do not appear free in $\ottnt{A_{{\mathrm{1}}}}$ and $\ottnt{A_{{\mathrm{2}}}}$.
  \end{itemize}
 \end{quote}
 \fi

 By induction on the type containment derivation.  Throughout the proof, we use the
 fact of $\vdash  \Gamma$ for applying \Srule{Refl}; it is shown easily by induction
 on the type containment derivation.
 \begin{caseanalysis}
  \case \Srule{Refl}: We have $ \algeffseqoverindex{ \alpha_{{\mathrm{1}}} }{ \text{\unboldmath$\mathit{I_{{\mathrm{1}}}}$} }  \,  =  \,  \algeffseqoverindex{ \alpha_{{\mathrm{2}}} }{ \text{\unboldmath$\mathit{I_{{\mathrm{2}}}}$} } $ and $\ottnt{A_{{\mathrm{1}}}} \,  =  \, \ottnt{B_{{\mathrm{1}}}}$ and
  $\ottnt{A_{{\mathrm{2}}}} \,  =  \, \ottnt{B_{{\mathrm{2}}}}$.
  Let $ \algeffseqoverindex{ \alpha_{{\mathrm{12}}} }{ \text{\unboldmath$\mathit{I_{{\mathrm{12}}}}$} } $ and $ \algeffseqoverindex{ \beta }{ \text{\unboldmath$\mathit{J}$} } $ be the empty sequence,
  $ \algeffseqoverindex{ \alpha_{{\mathrm{11}}} }{ \text{\unboldmath$\mathit{I_{{\mathrm{11}}}}$} }  \,  =  \,  \algeffseqoverindex{ \alpha_{{\mathrm{1}}} }{ \text{\unboldmath$\mathit{I_{{\mathrm{1}}}}$} } $, and $ \algeffseqoverindex{ \ottnt{C} }{ \text{\unboldmath$\mathit{I_{{\mathrm{11}}}}$} }  \,  =  \,  \algeffseqoverindex{ \alpha_{{\mathrm{1}}} }{ \text{\unboldmath$\mathit{I_{{\mathrm{1}}}}$} } $.
   We have to show that
   \begin{itemize}
    \item $\Gamma  \ottsym{,}   \algeffseqoverindex{ \alpha_{{\mathrm{2}}} }{ \text{\unboldmath$\mathit{I_{{\mathrm{2}}}}$} }   \vdash  \ottnt{B_{{\mathrm{1}}}}  \sqsubseteq  \ottnt{A_{{\mathrm{1}}}}$ and
    \item $\Gamma  \ottsym{,}   \algeffseqoverindex{ \alpha_{{\mathrm{2}}} }{ \text{\unboldmath$\mathit{I_{{\mathrm{2}}}}$} }   \vdash  \ottnt{A_{{\mathrm{2}}}}  \sqsubseteq  \ottnt{B_{{\mathrm{2}}}}$.
   \end{itemize}
   They are derived by \Srule{Refl}.

  \case \Srule{Trans}:
   By inversion, we have
   $\Gamma  \vdash   \text{\unboldmath$\forall$}  \,  \algeffseqoverindex{ \alpha_{{\mathrm{1}}} }{ \text{\unboldmath$\mathit{I_{{\mathrm{1}}}}$} }   \ottsym{.} \, \ottnt{A_{{\mathrm{1}}}}  \rightarrow  \ottnt{A_{{\mathrm{2}}}}  \sqsubseteq  \ottnt{D}$ and
   $\Gamma  \vdash  \ottnt{D}  \sqsubseteq   \text{\unboldmath$\forall$}  \,  \algeffseqoverindex{ \alpha_{{\mathrm{2}}} }{ \text{\unboldmath$\mathit{I_{{\mathrm{2}}}}$} }   \ottsym{.} \, \ottnt{B_{{\mathrm{1}}}}  \rightarrow  \ottnt{B_{{\mathrm{2}}}}$ for some $\ottnt{D}$.
   By \reflem{subtyping-unqualify},
   $\ottnt{D} \,  =  \,  \text{\unboldmath$\forall$}  \,  \algeffseqoverindex{ \alpha_{{\mathrm{3}}} }{ \text{\unboldmath$\mathit{I_{{\mathrm{3}}}}$} }   \ottsym{.} \, \ottnt{D_{{\mathrm{1}}}}  \rightarrow  \ottnt{D_{{\mathrm{2}}}}$ for some $ \algeffseqoverindex{ \alpha_{{\mathrm{3}}} }{ \text{\unboldmath$\mathit{I_{{\mathrm{3}}}}$} } $, $\ottnt{D_{{\mathrm{1}}}}$, and $\ottnt{D_{{\mathrm{2}}}}$.
   By the IH on $\Gamma  \vdash   \text{\unboldmath$\forall$}  \,  \algeffseqoverindex{ \alpha_{{\mathrm{1}}} }{ \text{\unboldmath$\mathit{I_{{\mathrm{1}}}}$} }   \ottsym{.} \, \ottnt{A_{{\mathrm{1}}}}  \rightarrow  \ottnt{A_{{\mathrm{2}}}}  \sqsubseteq   \text{\unboldmath$\forall$}  \,  \algeffseqoverindex{ \alpha_{{\mathrm{3}}} }{ \text{\unboldmath$\mathit{I_{{\mathrm{3}}}}$} }   \ottsym{.} \, \ottnt{D_{{\mathrm{1}}}}  \rightarrow  \ottnt{D_{{\mathrm{2}}}}$,
   there exist $ \algeffseqoverindex{ \alpha_{{\mathrm{11}}} }{ \text{\unboldmath$\mathit{I_{{\mathrm{11}}}}$} } $, $ \algeffseqoverindex{ \alpha_{{\mathrm{12}}} }{ \text{\unboldmath$\mathit{I_{{\mathrm{12}}}}$} } $, $ \algeffseqoverindex{ \ottnt{C_{{\mathrm{1}}}} }{ \text{\unboldmath$\mathit{I_{{\mathrm{11}}}}$} } $, and $ \algeffseqoverindex{ \beta_{{\mathrm{1}}} }{ \text{\unboldmath$\mathit{J_{{\mathrm{1}}}}$} } $
   such that
   \begin{itemize}
    \item $\ottsym{\{}   \algeffseqoverindex{ \alpha_{{\mathrm{1}}} }{ \text{\unboldmath$\mathit{I_{{\mathrm{1}}}}$} }   \ottsym{\}} \,  =  \, \ottsym{\{}   \algeffseqoverindex{ \alpha_{{\mathrm{11}}} }{ \text{\unboldmath$\mathit{I_{{\mathrm{11}}}}$} }   \ottsym{\}} \,  \mathbin{\uplus}  \, \ottsym{\{}   \algeffseqoverindex{ \alpha_{{\mathrm{12}}} }{ \text{\unboldmath$\mathit{I_{{\mathrm{12}}}}$} }   \ottsym{\}}$,
    \item $\Gamma  \ottsym{,}   \algeffseqoverindex{ \alpha_{{\mathrm{3}}} }{ \text{\unboldmath$\mathit{I_{{\mathrm{3}}}}$} }   \ottsym{,}   \algeffseqoverindex{ \beta_{{\mathrm{1}}} }{ \text{\unboldmath$\mathit{J_{{\mathrm{1}}}}$} }   \vdash   \algeffseqoverindex{ \ottnt{C_{{\mathrm{1}}}} }{ \text{\unboldmath$\mathit{I_{{\mathrm{11}}}}$} } $,
    \item $\Gamma  \ottsym{,}   \algeffseqoverindex{ \alpha_{{\mathrm{3}}} }{ \text{\unboldmath$\mathit{I_{{\mathrm{3}}}}$} }   \vdash  \ottnt{D_{{\mathrm{1}}}}  \sqsubseteq    \text{\unboldmath$\forall$}  \,  \algeffseqoverindex{ \beta_{{\mathrm{1}}} }{ \text{\unboldmath$\mathit{J_{{\mathrm{1}}}}$} }   \ottsym{.} \, \ottnt{A_{{\mathrm{1}}}}    [   \algeffseqoverindex{ \ottnt{C_{{\mathrm{1}}}} }{ \text{\unboldmath$\mathit{I_{{\mathrm{11}}}}$} }   \ottsym{/}   \algeffseqoverindex{ \alpha_{{\mathrm{11}}} }{ \text{\unboldmath$\mathit{I_{{\mathrm{11}}}}$} }   ]  $,
    \item $\Gamma  \ottsym{,}   \algeffseqoverindex{ \alpha_{{\mathrm{3}}} }{ \text{\unboldmath$\mathit{I_{{\mathrm{3}}}}$} }   \vdash    \text{\unboldmath$\forall$}  \,  \algeffseqoverindex{ \alpha_{{\mathrm{12}}} }{ \text{\unboldmath$\mathit{I_{{\mathrm{12}}}}$} }   \ottsym{.} \,  \text{\unboldmath$\forall$}  \,  \algeffseqoverindex{ \beta_{{\mathrm{1}}} }{ \text{\unboldmath$\mathit{J_{{\mathrm{1}}}}$} }   \ottsym{.} \, \ottnt{A_{{\mathrm{2}}}}    [   \algeffseqoverindex{ \ottnt{C_{{\mathrm{1}}}} }{ \text{\unboldmath$\mathit{I_{{\mathrm{11}}}}$} }   \ottsym{/}   \algeffseqoverindex{ \alpha_{{\mathrm{11}}} }{ \text{\unboldmath$\mathit{I_{{\mathrm{11}}}}$} }   ]    \sqsubseteq  \ottnt{D_{{\mathrm{2}}}}$, and
    \item type variables in $ \algeffseqoverindex{ \beta_{{\mathrm{1}}} }{ \text{\unboldmath$\mathit{J_{{\mathrm{1}}}}$} } $ do not appear free in $\ottnt{A_{{\mathrm{1}}}}$ and $\ottnt{A_{{\mathrm{2}}}}$.
   \end{itemize}
   By the IH on $\Gamma  \vdash   \text{\unboldmath$\forall$}  \,  \algeffseqoverindex{ \alpha_{{\mathrm{3}}} }{ \text{\unboldmath$\mathit{I_{{\mathrm{3}}}}$} }   \ottsym{.} \, \ottnt{D_{{\mathrm{1}}}}  \rightarrow  \ottnt{D_{{\mathrm{2}}}}  \sqsubseteq   \text{\unboldmath$\forall$}  \,  \algeffseqoverindex{ \alpha_{{\mathrm{2}}} }{ \text{\unboldmath$\mathit{I_{{\mathrm{2}}}}$} }   \ottsym{.} \, \ottnt{B_{{\mathrm{1}}}}  \rightarrow  \ottnt{B_{{\mathrm{2}}}}$,
   there exist $ \algeffseqoverindex{ \alpha_{{\mathrm{31}}} }{ \text{\unboldmath$\mathit{I_{{\mathrm{31}}}}$} } $, $ \algeffseqoverindex{ \alpha_{{\mathrm{32}}} }{ \text{\unboldmath$\mathit{I_{{\mathrm{32}}}}$} } $, $ \algeffseqoverindex{ \ottnt{C_{{\mathrm{3}}}} }{ \text{\unboldmath$\mathit{I_{{\mathrm{31}}}}$} } $, and $ \algeffseqoverindex{ \beta_{{\mathrm{3}}} }{ \text{\unboldmath$\mathit{J_{{\mathrm{3}}}}$} } $
   such that
   \begin{itemize}
    \item $\ottsym{\{}   \algeffseqoverindex{ \alpha_{{\mathrm{3}}} }{ \text{\unboldmath$\mathit{I_{{\mathrm{3}}}}$} }   \ottsym{\}} \,  =  \, \ottsym{\{}   \algeffseqoverindex{ \alpha_{{\mathrm{31}}} }{ \text{\unboldmath$\mathit{I_{{\mathrm{31}}}}$} }   \ottsym{\}} \,  \mathbin{\uplus}  \, \ottsym{\{}   \algeffseqoverindex{ \alpha_{{\mathrm{32}}} }{ \text{\unboldmath$\mathit{I_{{\mathrm{32}}}}$} }   \ottsym{\}}$,
    \item $\Gamma  \ottsym{,}   \algeffseqoverindex{ \alpha_{{\mathrm{2}}} }{ \text{\unboldmath$\mathit{I_{{\mathrm{2}}}}$} }   \ottsym{,}   \algeffseqoverindex{ \beta_{{\mathrm{3}}} }{ \text{\unboldmath$\mathit{J_{{\mathrm{3}}}}$} }   \vdash   \algeffseqoverindex{ \ottnt{C_{{\mathrm{3}}}} }{ \text{\unboldmath$\mathit{I_{{\mathrm{31}}}}$} } $,
    \item $\Gamma  \ottsym{,}   \algeffseqoverindex{ \alpha_{{\mathrm{2}}} }{ \text{\unboldmath$\mathit{I_{{\mathrm{2}}}}$} }   \vdash  \ottnt{B_{{\mathrm{1}}}}  \sqsubseteq    \text{\unboldmath$\forall$}  \,  \algeffseqoverindex{ \beta_{{\mathrm{3}}} }{ \text{\unboldmath$\mathit{J_{{\mathrm{3}}}}$} }   \ottsym{.} \, \ottnt{D_{{\mathrm{1}}}}    [   \algeffseqoverindex{ \ottnt{C_{{\mathrm{3}}}} }{ \text{\unboldmath$\mathit{I_{{\mathrm{31}}}}$} }   \ottsym{/}   \algeffseqoverindex{ \alpha_{{\mathrm{31}}} }{ \text{\unboldmath$\mathit{I_{{\mathrm{31}}}}$} }   ]  $,
    \item $\Gamma  \ottsym{,}   \algeffseqoverindex{ \alpha_{{\mathrm{2}}} }{ \text{\unboldmath$\mathit{I_{{\mathrm{2}}}}$} }   \vdash    \text{\unboldmath$\forall$}  \,  \algeffseqoverindex{ \alpha_{{\mathrm{32}}} }{ \text{\unboldmath$\mathit{I_{{\mathrm{32}}}}$} }   \ottsym{.} \,  \text{\unboldmath$\forall$}  \,  \algeffseqoverindex{ \beta_{{\mathrm{3}}} }{ \text{\unboldmath$\mathit{J_{{\mathrm{3}}}}$} }   \ottsym{.} \, \ottnt{D_{{\mathrm{2}}}}    [   \algeffseqoverindex{ \ottnt{C_{{\mathrm{3}}}} }{ \text{\unboldmath$\mathit{I_{{\mathrm{31}}}}$} }   \ottsym{/}   \algeffseqoverindex{ \alpha_{{\mathrm{31}}} }{ \text{\unboldmath$\mathit{I_{{\mathrm{31}}}}$} }   ]    \sqsubseteq  \ottnt{B_{{\mathrm{2}}}}$, and
    \item type variables in $ \algeffseqoverindex{ \beta_{{\mathrm{3}}} }{ \text{\unboldmath$\mathit{J_{{\mathrm{3}}}}$} } $ do not appear free in $\ottnt{D_{{\mathrm{1}}}}$ and $\ottnt{D_{{\mathrm{2}}}}$.
   \end{itemize}

   We show the conclusion by letting
   $ \algeffseqoverindex{ \ottnt{C} }{ \text{\unboldmath$\mathit{I_{{\mathrm{11}}}}$} }  \,  =  \,  \algeffseqoverindex{  \ottnt{C_{{\mathrm{1}}}}    [   \algeffseqoverindex{ \ottnt{C_{{\mathrm{3}}}} }{ \text{\unboldmath$\mathit{I_{{\mathrm{31}}}}$} }   \ottsym{/}   \algeffseqoverindex{ \alpha_{{\mathrm{31}}} }{ \text{\unboldmath$\mathit{I_{{\mathrm{31}}}}$} }   ]   }{ \text{\unboldmath$\mathit{I_{{\mathrm{11}}}}$} } $ and
   $ \algeffseqoverindex{ \beta }{ \text{\unboldmath$\mathit{J}$} }  \,  =  \,  \algeffseqoverindex{ \alpha_{{\mathrm{32}}} }{ \text{\unboldmath$\mathit{I_{{\mathrm{32}}}}$} }   \ottsym{,}   \algeffseqoverindex{ \beta_{{\mathrm{3}}} }{ \text{\unboldmath$\mathit{J_{{\mathrm{3}}}}$} }   \ottsym{,}   \algeffseqoverindex{ \beta_{{\mathrm{1}}} }{ \text{\unboldmath$\mathit{J_{{\mathrm{1}}}}$} } $.
   We have to show that
   \begin{itemize}
    \item $\Gamma  \ottsym{,}   \algeffseqoverindex{ \alpha_{{\mathrm{2}}} }{ \text{\unboldmath$\mathit{I_{{\mathrm{2}}}}$} }   \ottsym{,}   \algeffseqoverindex{ \alpha_{{\mathrm{32}}} }{ \text{\unboldmath$\mathit{I_{{\mathrm{32}}}}$} }   \ottsym{,}   \algeffseqoverindex{ \beta_{{\mathrm{3}}} }{ \text{\unboldmath$\mathit{J_{{\mathrm{3}}}}$} }   \ottsym{,}   \algeffseqoverindex{ \beta_{{\mathrm{1}}} }{ \text{\unboldmath$\mathit{J_{{\mathrm{1}}}}$} }   \vdash   \algeffseqoverindex{  \ottnt{C_{{\mathrm{1}}}}    [   \algeffseqoverindex{ \ottnt{C_{{\mathrm{3}}}} }{ \text{\unboldmath$\mathit{I_{{\mathrm{31}}}}$} }   \ottsym{/}   \algeffseqoverindex{ \alpha_{{\mathrm{31}}} }{ \text{\unboldmath$\mathit{I_{{\mathrm{31}}}}$} }   ]   }{ \text{\unboldmath$\mathit{I_{{\mathrm{11}}}}$} } $,
    \item $\Gamma  \ottsym{,}   \algeffseqoverindex{ \alpha_{{\mathrm{2}}} }{ \text{\unboldmath$\mathit{I_{{\mathrm{2}}}}$} }   \vdash  \ottnt{B_{{\mathrm{1}}}}  \sqsubseteq    \text{\unboldmath$\forall$}  \,  \algeffseqoverindex{ \alpha_{{\mathrm{32}}} }{ \text{\unboldmath$\mathit{I_{{\mathrm{32}}}}$} }   \ottsym{.} \,  \text{\unboldmath$\forall$}  \,  \algeffseqoverindex{ \beta_{{\mathrm{3}}} }{ \text{\unboldmath$\mathit{J_{{\mathrm{3}}}}$} }   \ottsym{.} \,  \text{\unboldmath$\forall$}  \,  \algeffseqoverindex{ \beta_{{\mathrm{1}}} }{ \text{\unboldmath$\mathit{J_{{\mathrm{1}}}}$} }   \ottsym{.} \, \ottnt{A_{{\mathrm{1}}}}    [   \algeffseqoverindex{ \ottnt{C} }{ \text{\unboldmath$\mathit{I_{{\mathrm{11}}}}$} }   \ottsym{/}   \algeffseqoverindex{ \alpha_{{\mathrm{11}}} }{ \text{\unboldmath$\mathit{I_{{\mathrm{11}}}}$} }   ]  $, and
    \item $\Gamma  \ottsym{,}   \algeffseqoverindex{ \alpha_{{\mathrm{2}}} }{ \text{\unboldmath$\mathit{I_{{\mathrm{2}}}}$} }   \vdash    \text{\unboldmath$\forall$}  \,  \algeffseqoverindex{ \alpha_{{\mathrm{12}}} }{ \text{\unboldmath$\mathit{I_{{\mathrm{12}}}}$} }   \ottsym{.} \,  \text{\unboldmath$\forall$}  \,  \algeffseqoverindex{ \alpha_{{\mathrm{32}}} }{ \text{\unboldmath$\mathit{I_{{\mathrm{32}}}}$} }   \ottsym{.} \,  \text{\unboldmath$\forall$}  \,  \algeffseqoverindex{ \beta_{{\mathrm{3}}} }{ \text{\unboldmath$\mathit{J_{{\mathrm{3}}}}$} }   \ottsym{.} \,  \text{\unboldmath$\forall$}  \,  \algeffseqoverindex{ \beta_{{\mathrm{1}}} }{ \text{\unboldmath$\mathit{J_{{\mathrm{1}}}}$} }   \ottsym{.} \, \ottnt{A_{{\mathrm{2}}}}    [   \algeffseqoverindex{ \ottnt{C} }{ \text{\unboldmath$\mathit{I_{{\mathrm{11}}}}$} }   \ottsym{/}   \algeffseqoverindex{ \alpha_{{\mathrm{11}}} }{ \text{\unboldmath$\mathit{I_{{\mathrm{11}}}}$} }   ]    \sqsubseteq  \ottnt{B_{{\mathrm{2}}}}$.
   \end{itemize}

   The first requirement is shown by
   $\Gamma  \ottsym{,}   \algeffseqoverindex{ \alpha_{{\mathrm{3}}} }{ \text{\unboldmath$\mathit{I_{{\mathrm{3}}}}$} }   \ottsym{,}   \algeffseqoverindex{ \beta_{{\mathrm{1}}} }{ \text{\unboldmath$\mathit{J_{{\mathrm{1}}}}$} }   \vdash   \algeffseqoverindex{ \ottnt{C_{{\mathrm{1}}}} }{ \text{\unboldmath$\mathit{I_{{\mathrm{11}}}}$} } $ and
   $\Gamma  \ottsym{,}   \algeffseqoverindex{ \alpha_{{\mathrm{2}}} }{ \text{\unboldmath$\mathit{I_{{\mathrm{2}}}}$} }   \ottsym{,}   \algeffseqoverindex{ \beta_{{\mathrm{3}}} }{ \text{\unboldmath$\mathit{J_{{\mathrm{3}}}}$} }   \vdash   \algeffseqoverindex{ \ottnt{C_{{\mathrm{3}}}} }{ \text{\unboldmath$\mathit{I_{{\mathrm{31}}}}$} } $ and
   \reflem{weakening} (\ref{lem:weakening:type}) and
   \reflem{ty-subst} (\ref{lem:ty-subst:type}).

   Next, we show the second requirement.
   Since $\Gamma  \ottsym{,}   \algeffseqoverindex{ \alpha_{{\mathrm{3}}} }{ \text{\unboldmath$\mathit{I_{{\mathrm{3}}}}$} }   \vdash  \ottnt{D_{{\mathrm{1}}}}  \sqsubseteq    \text{\unboldmath$\forall$}  \,  \algeffseqoverindex{ \beta_{{\mathrm{1}}} }{ \text{\unboldmath$\mathit{J_{{\mathrm{1}}}}$} }   \ottsym{.} \, \ottnt{A_{{\mathrm{1}}}}    [   \algeffseqoverindex{ \ottnt{C_{{\mathrm{1}}}} }{ \text{\unboldmath$\mathit{I_{{\mathrm{11}}}}$} }   \ottsym{/}   \algeffseqoverindex{ \alpha_{{\mathrm{11}}} }{ \text{\unboldmath$\mathit{I_{{\mathrm{11}}}}$} }   ]  $ and $\Gamma  \ottsym{,}   \algeffseqoverindex{ \alpha_{{\mathrm{2}}} }{ \text{\unboldmath$\mathit{I_{{\mathrm{2}}}}$} }   \ottsym{,}   \algeffseqoverindex{ \beta_{{\mathrm{3}}} }{ \text{\unboldmath$\mathit{J_{{\mathrm{3}}}}$} }   \vdash   \algeffseqoverindex{ \ottnt{C_{{\mathrm{3}}}} }{ \text{\unboldmath$\mathit{I_{{\mathrm{31}}}}$} } $,
   we have
   $\Gamma  \ottsym{,}   \algeffseqoverindex{ \alpha_{{\mathrm{2}}} }{ \text{\unboldmath$\mathit{I_{{\mathrm{2}}}}$} }   \ottsym{,}   \algeffseqoverindex{ \alpha_{{\mathrm{3}}} }{ \text{\unboldmath$\mathit{I_{{\mathrm{3}}}}$} }   \ottsym{,}   \algeffseqoverindex{ \beta_{{\mathrm{3}}} }{ \text{\unboldmath$\mathit{J_{{\mathrm{3}}}}$} }   \vdash  \ottnt{D_{{\mathrm{1}}}}  \sqsubseteq    \text{\unboldmath$\forall$}  \,  \algeffseqoverindex{ \beta_{{\mathrm{1}}} }{ \text{\unboldmath$\mathit{J_{{\mathrm{1}}}}$} }   \ottsym{.} \, \ottnt{A_{{\mathrm{1}}}}    [   \algeffseqoverindex{ \ottnt{C_{{\mathrm{1}}}} }{ \text{\unboldmath$\mathit{I_{{\mathrm{11}}}}$} }   \ottsym{/}   \algeffseqoverindex{ \alpha_{{\mathrm{11}}} }{ \text{\unboldmath$\mathit{I_{{\mathrm{11}}}}$} }   ]  $ and
   $\Gamma  \ottsym{,}   \algeffseqoverindex{ \alpha_{{\mathrm{2}}} }{ \text{\unboldmath$\mathit{I_{{\mathrm{2}}}}$} }   \ottsym{,}   \algeffseqoverindex{ \alpha_{{\mathrm{32}}} }{ \text{\unboldmath$\mathit{I_{{\mathrm{32}}}}$} }   \ottsym{,}   \algeffseqoverindex{ \beta_{{\mathrm{3}}} }{ \text{\unboldmath$\mathit{J_{{\mathrm{3}}}}$} }   \vdash   \algeffseqoverindex{ \ottnt{C_{{\mathrm{3}}}} }{ \text{\unboldmath$\mathit{I_{{\mathrm{31}}}}$} } $
   by \reflem{weakening} (\ref{lem:weakening:sub}) and (\ref{lem:weakening:type}), respectively.
   Thus, by \reflem{ty-subst} (\ref{lem:ty-subst:sub}),
   \[
    \Gamma  \ottsym{,}   \algeffseqoverindex{ \alpha_{{\mathrm{2}}} }{ \text{\unboldmath$\mathit{I_{{\mathrm{2}}}}$} }   \ottsym{,}   \algeffseqoverindex{ \alpha_{{\mathrm{32}}} }{ \text{\unboldmath$\mathit{I_{{\mathrm{32}}}}$} }   \ottsym{,}   \algeffseqoverindex{ \beta_{{\mathrm{3}}} }{ \text{\unboldmath$\mathit{J_{{\mathrm{3}}}}$} }   \vdash   \ottnt{D_{{\mathrm{1}}}}    [   \algeffseqoverindex{ \ottnt{C_{{\mathrm{3}}}} }{ \text{\unboldmath$\mathit{I_{{\mathrm{31}}}}$} }   \ottsym{/}   \algeffseqoverindex{ \alpha_{{\mathrm{31}}} }{ \text{\unboldmath$\mathit{I_{{\mathrm{31}}}}$} }   ]    \sqsubseteq    \text{\unboldmath$\forall$}  \,  \algeffseqoverindex{ \beta_{{\mathrm{1}}} }{ \text{\unboldmath$\mathit{J_{{\mathrm{1}}}}$} }   \ottsym{.} \, \ottnt{A_{{\mathrm{1}}}}    [   \algeffseqoverindex{ \ottnt{C} }{ \text{\unboldmath$\mathit{I_{{\mathrm{11}}}}$} }   \ottsym{/}   \algeffseqoverindex{ \alpha_{{\mathrm{11}}} }{ \text{\unboldmath$\mathit{I_{{\mathrm{11}}}}$} }   ]  
   \]
   (note that we can suppose that $ \algeffseqoverindex{ \alpha_{{\mathrm{31}}} }{ \text{\unboldmath$\mathit{I_{{\mathrm{31}}}}$} } $ do not appear free in $\ottnt{A_{{\mathrm{1}}}}$).
   By \Srule{Poly}, 
   \[
    \Gamma  \ottsym{,}   \algeffseqoverindex{ \alpha_{{\mathrm{2}}} }{ \text{\unboldmath$\mathit{I_{{\mathrm{2}}}}$} }   \ottsym{,}   \algeffseqoverindex{ \alpha_{{\mathrm{32}}} }{ \text{\unboldmath$\mathit{I_{{\mathrm{32}}}}$} }   \vdash    \text{\unboldmath$\forall$}  \,  \algeffseqoverindex{ \beta_{{\mathrm{3}}} }{ \text{\unboldmath$\mathit{J_{{\mathrm{3}}}}$} }   \ottsym{.} \, \ottnt{D_{{\mathrm{1}}}}    [   \algeffseqoverindex{ \ottnt{C_{{\mathrm{3}}}} }{ \text{\unboldmath$\mathit{I_{{\mathrm{31}}}}$} }   \ottsym{/}   \algeffseqoverindex{ \alpha_{{\mathrm{31}}} }{ \text{\unboldmath$\mathit{I_{{\mathrm{31}}}}$} }   ]    \sqsubseteq    \text{\unboldmath$\forall$}  \,  \algeffseqoverindex{ \beta_{{\mathrm{3}}} }{ \text{\unboldmath$\mathit{J_{{\mathrm{3}}}}$} }   \ottsym{.} \,  \text{\unboldmath$\forall$}  \,  \algeffseqoverindex{ \beta_{{\mathrm{1}}} }{ \text{\unboldmath$\mathit{J_{{\mathrm{1}}}}$} }   \ottsym{.} \, \ottnt{A_{{\mathrm{1}}}}    [   \algeffseqoverindex{ \ottnt{C} }{ \text{\unboldmath$\mathit{I_{{\mathrm{11}}}}$} }   \ottsym{/}   \algeffseqoverindex{ \alpha_{{\mathrm{11}}} }{ \text{\unboldmath$\mathit{I_{{\mathrm{11}}}}$} }   ]  .
   \]
   Since $\Gamma  \ottsym{,}   \algeffseqoverindex{ \alpha_{{\mathrm{2}}} }{ \text{\unboldmath$\mathit{I_{{\mathrm{2}}}}$} }   \vdash  \ottnt{B_{{\mathrm{1}}}}  \sqsubseteq    \text{\unboldmath$\forall$}  \,  \algeffseqoverindex{ \beta_{{\mathrm{3}}} }{ \text{\unboldmath$\mathit{J_{{\mathrm{3}}}}$} }   \ottsym{.} \, \ottnt{D_{{\mathrm{1}}}}    [   \algeffseqoverindex{ \ottnt{C_{{\mathrm{3}}}} }{ \text{\unboldmath$\mathit{I_{{\mathrm{31}}}}$} }   \ottsym{/}   \algeffseqoverindex{ \alpha_{{\mathrm{31}}} }{ \text{\unboldmath$\mathit{I_{{\mathrm{31}}}}$} }   ]  $,
   we have
   \[
    \Gamma  \ottsym{,}   \algeffseqoverindex{ \alpha_{{\mathrm{2}}} }{ \text{\unboldmath$\mathit{I_{{\mathrm{2}}}}$} }   \ottsym{,}   \algeffseqoverindex{ \alpha_{{\mathrm{32}}} }{ \text{\unboldmath$\mathit{I_{{\mathrm{32}}}}$} }   \vdash  \ottnt{B_{{\mathrm{1}}}}  \sqsubseteq    \text{\unboldmath$\forall$}  \,  \algeffseqoverindex{ \beta_{{\mathrm{3}}} }{ \text{\unboldmath$\mathit{J_{{\mathrm{3}}}}$} }   \ottsym{.} \,  \text{\unboldmath$\forall$}  \,  \algeffseqoverindex{ \beta_{{\mathrm{1}}} }{ \text{\unboldmath$\mathit{J_{{\mathrm{1}}}}$} }   \ottsym{.} \, \ottnt{A_{{\mathrm{1}}}}    [   \algeffseqoverindex{ \ottnt{C_{{\mathrm{01}}}} }{ \text{\unboldmath$\mathit{I_{{\mathrm{11}}}}$} }   \ottsym{/}   \algeffseqoverindex{ \alpha_{{\mathrm{11}}} }{ \text{\unboldmath$\mathit{I_{{\mathrm{11}}}}$} }   ]  
   \]
   by \reflem{weakening} (\ref{lem:weakening:sub}) and \Srule{Trans}.
   Since we can suppose that $ \algeffseqoverindex{ \alpha_{{\mathrm{32}}} }{ \text{\unboldmath$\mathit{I_{{\mathrm{32}}}}$} } $ do not appear free in $\ottnt{B_{{\mathrm{1}}}}$,
   we have
   \[
    \Gamma  \ottsym{,}   \algeffseqoverindex{ \alpha_{{\mathrm{2}}} }{ \text{\unboldmath$\mathit{I_{{\mathrm{2}}}}$} }   \vdash  \ottnt{B_{{\mathrm{1}}}}  \sqsubseteq    \text{\unboldmath$\forall$}  \,  \algeffseqoverindex{ \alpha_{{\mathrm{32}}} }{ \text{\unboldmath$\mathit{I_{{\mathrm{32}}}}$} }   \ottsym{.} \,  \text{\unboldmath$\forall$}  \,  \algeffseqoverindex{ \beta_{{\mathrm{3}}} }{ \text{\unboldmath$\mathit{J_{{\mathrm{3}}}}$} }   \ottsym{.} \,  \text{\unboldmath$\forall$}  \,  \algeffseqoverindex{ \beta_{{\mathrm{1}}} }{ \text{\unboldmath$\mathit{J_{{\mathrm{1}}}}$} }   \ottsym{.} \, \ottnt{A_{{\mathrm{1}}}}    [   \algeffseqoverindex{ \ottnt{C} }{ \text{\unboldmath$\mathit{I_{{\mathrm{11}}}}$} }   \ottsym{/}   \algeffseqoverindex{ \alpha_{{\mathrm{11}}} }{ \text{\unboldmath$\mathit{I_{{\mathrm{11}}}}$} }   ]  
   \]
   by \Srule{Gen}, \Srule{Poly}, and \Srule{Trans}.

   Finally, we show the third requirement.
   Since $\Gamma  \ottsym{,}   \algeffseqoverindex{ \alpha_{{\mathrm{3}}} }{ \text{\unboldmath$\mathit{I_{{\mathrm{3}}}}$} }   \vdash    \text{\unboldmath$\forall$}  \,  \algeffseqoverindex{ \alpha_{{\mathrm{12}}} }{ \text{\unboldmath$\mathit{I_{{\mathrm{12}}}}$} }   \ottsym{.} \,  \text{\unboldmath$\forall$}  \,  \algeffseqoverindex{ \beta_{{\mathrm{1}}} }{ \text{\unboldmath$\mathit{J_{{\mathrm{1}}}}$} }   \ottsym{.} \, \ottnt{A_{{\mathrm{2}}}}    [   \algeffseqoverindex{ \ottnt{C_{{\mathrm{1}}}} }{ \text{\unboldmath$\mathit{I_{{\mathrm{11}}}}$} }   \ottsym{/}   \algeffseqoverindex{ \alpha_{{\mathrm{11}}} }{ \text{\unboldmath$\mathit{I_{{\mathrm{11}}}}$} }   ]    \sqsubseteq  \ottnt{D_{{\mathrm{2}}}}$ and
   $\Gamma  \ottsym{,}   \algeffseqoverindex{ \alpha_{{\mathrm{2}}} }{ \text{\unboldmath$\mathit{I_{{\mathrm{2}}}}$} }   \ottsym{,}   \algeffseqoverindex{ \beta_{{\mathrm{3}}} }{ \text{\unboldmath$\mathit{J_{{\mathrm{3}}}}$} }   \vdash   \algeffseqoverindex{ \ottnt{C_{{\mathrm{3}}}} }{ \text{\unboldmath$\mathit{I_{{\mathrm{31}}}}$} } $,
   we have
   $\Gamma  \ottsym{,}   \algeffseqoverindex{ \alpha_{{\mathrm{2}}} }{ \text{\unboldmath$\mathit{I_{{\mathrm{2}}}}$} }   \ottsym{,}   \algeffseqoverindex{ \alpha_{{\mathrm{3}}} }{ \text{\unboldmath$\mathit{I_{{\mathrm{3}}}}$} }   \ottsym{,}   \algeffseqoverindex{ \beta_{{\mathrm{3}}} }{ \text{\unboldmath$\mathit{J_{{\mathrm{3}}}}$} }   \vdash    \text{\unboldmath$\forall$}  \,  \algeffseqoverindex{ \alpha_{{\mathrm{12}}} }{ \text{\unboldmath$\mathit{I_{{\mathrm{12}}}}$} }   \ottsym{.} \,  \text{\unboldmath$\forall$}  \,  \algeffseqoverindex{ \beta_{{\mathrm{1}}} }{ \text{\unboldmath$\mathit{J_{{\mathrm{1}}}}$} }   \ottsym{.} \, \ottnt{A_{{\mathrm{2}}}}    [   \algeffseqoverindex{ \ottnt{C_{{\mathrm{1}}}} }{ \text{\unboldmath$\mathit{I_{{\mathrm{11}}}}$} }   \ottsym{/}   \algeffseqoverindex{ \alpha_{{\mathrm{11}}} }{ \text{\unboldmath$\mathit{I_{{\mathrm{11}}}}$} }   ]    \sqsubseteq  \ottnt{D_{{\mathrm{2}}}}$ and
   $\Gamma  \ottsym{,}   \algeffseqoverindex{ \alpha_{{\mathrm{2}}} }{ \text{\unboldmath$\mathit{I_{{\mathrm{2}}}}$} }   \ottsym{,}   \algeffseqoverindex{ \alpha_{{\mathrm{32}}} }{ \text{\unboldmath$\mathit{I_{{\mathrm{32}}}}$} }   \ottsym{,}   \algeffseqoverindex{ \beta_{{\mathrm{3}}} }{ \text{\unboldmath$\mathit{J_{{\mathrm{3}}}}$} }   \vdash   \algeffseqoverindex{ \ottnt{C_{{\mathrm{3}}}} }{ \text{\unboldmath$\mathit{I_{{\mathrm{31}}}}$} } $
   by \reflem{weakening} (\ref{lem:weakening:sub}) and (\ref{lem:weakening:type}), respectively.
   Thus, by \reflem{ty-subst} (\ref{lem:ty-subst:sub}),
   \[
    \Gamma  \ottsym{,}   \algeffseqoverindex{ \alpha_{{\mathrm{2}}} }{ \text{\unboldmath$\mathit{I_{{\mathrm{2}}}}$} }   \ottsym{,}   \algeffseqoverindex{ \alpha_{{\mathrm{32}}} }{ \text{\unboldmath$\mathit{I_{{\mathrm{32}}}}$} }   \ottsym{,}   \algeffseqoverindex{ \beta_{{\mathrm{3}}} }{ \text{\unboldmath$\mathit{J_{{\mathrm{3}}}}$} }   \vdash    \text{\unboldmath$\forall$}  \,  \algeffseqoverindex{ \alpha_{{\mathrm{12}}} }{ \text{\unboldmath$\mathit{I_{{\mathrm{12}}}}$} }   \ottsym{.} \,  \text{\unboldmath$\forall$}  \,  \algeffseqoverindex{ \beta_{{\mathrm{1}}} }{ \text{\unboldmath$\mathit{J_{{\mathrm{1}}}}$} }   \ottsym{.} \, \ottnt{A_{{\mathrm{2}}}}    [   \algeffseqoverindex{ \ottnt{C} }{ \text{\unboldmath$\mathit{I_{{\mathrm{11}}}}$} }   \ottsym{/}   \algeffseqoverindex{ \alpha_{{\mathrm{11}}} }{ \text{\unboldmath$\mathit{I_{{\mathrm{11}}}}$} }   ]    \sqsubseteq   \ottnt{D_{{\mathrm{2}}}}    [   \algeffseqoverindex{ \ottnt{C_{{\mathrm{3}}}} }{ \text{\unboldmath$\mathit{I_{{\mathrm{31}}}}$} }   \ottsym{/}   \algeffseqoverindex{ \alpha_{{\mathrm{31}}} }{ \text{\unboldmath$\mathit{I_{{\mathrm{31}}}}$} }   ]  
   \]
   (note that we can suppose that $ \algeffseqoverindex{ \alpha_{{\mathrm{31}}} }{ \text{\unboldmath$\mathit{I_{{\mathrm{31}}}}$} } $ do not appear free in $\ottnt{A_{{\mathrm{2}}}}$).
   By \Srule{Poly},
   \[
    \Gamma  \ottsym{,}   \algeffseqoverindex{ \alpha_{{\mathrm{2}}} }{ \text{\unboldmath$\mathit{I_{{\mathrm{2}}}}$} }   \vdash    \text{\unboldmath$\forall$}  \,  \algeffseqoverindex{ \alpha_{{\mathrm{32}}} }{ \text{\unboldmath$\mathit{I_{{\mathrm{32}}}}$} }   \ottsym{.} \,  \text{\unboldmath$\forall$}  \,  \algeffseqoverindex{ \beta_{{\mathrm{3}}} }{ \text{\unboldmath$\mathit{J_{{\mathrm{3}}}}$} }   \ottsym{.} \,  \text{\unboldmath$\forall$}  \,  \algeffseqoverindex{ \alpha_{{\mathrm{12}}} }{ \text{\unboldmath$\mathit{I_{{\mathrm{12}}}}$} }   \ottsym{.} \,  \text{\unboldmath$\forall$}  \,  \algeffseqoverindex{ \beta_{{\mathrm{1}}} }{ \text{\unboldmath$\mathit{J_{{\mathrm{1}}}}$} }   \ottsym{.} \, \ottnt{A_{{\mathrm{2}}}}    [   \algeffseqoverindex{ \ottnt{C} }{ \text{\unboldmath$\mathit{I_{{\mathrm{11}}}}$} }   \ottsym{/}   \algeffseqoverindex{ \alpha_{{\mathrm{11}}} }{ \text{\unboldmath$\mathit{I_{{\mathrm{11}}}}$} }   ]    \sqsubseteq    \text{\unboldmath$\forall$}  \,  \algeffseqoverindex{ \alpha_{{\mathrm{32}}} }{ \text{\unboldmath$\mathit{I_{{\mathrm{32}}}}$} }   \ottsym{.} \,  \text{\unboldmath$\forall$}  \,  \algeffseqoverindex{ \beta_{{\mathrm{3}}} }{ \text{\unboldmath$\mathit{J_{{\mathrm{3}}}}$} }   \ottsym{.} \, \ottnt{D_{{\mathrm{2}}}}    [   \algeffseqoverindex{ \ottnt{C_{{\mathrm{3}}}} }{ \text{\unboldmath$\mathit{I_{{\mathrm{31}}}}$} }   \ottsym{/}   \algeffseqoverindex{ \alpha_{{\mathrm{31}}} }{ \text{\unboldmath$\mathit{I_{{\mathrm{31}}}}$} }   ]  
   \]
   Since $\Gamma  \ottsym{,}   \algeffseqoverindex{ \alpha_{{\mathrm{2}}} }{ \text{\unboldmath$\mathit{I_{{\mathrm{2}}}}$} }   \vdash    \text{\unboldmath$\forall$}  \,  \algeffseqoverindex{ \alpha_{{\mathrm{32}}} }{ \text{\unboldmath$\mathit{I_{{\mathrm{32}}}}$} }   \ottsym{.} \,  \text{\unboldmath$\forall$}  \,  \algeffseqoverindex{ \beta_{{\mathrm{3}}} }{ \text{\unboldmath$\mathit{J_{{\mathrm{3}}}}$} }   \ottsym{.} \, \ottnt{D_{{\mathrm{2}}}}    [   \algeffseqoverindex{ \ottnt{C_{{\mathrm{3}}}} }{ \text{\unboldmath$\mathit{I_{{\mathrm{31}}}}$} }   \ottsym{/}   \algeffseqoverindex{ \alpha_{{\mathrm{31}}} }{ \text{\unboldmath$\mathit{I_{{\mathrm{31}}}}$} }   ]    \sqsubseteq  \ottnt{B_{{\mathrm{2}}}}$,
   we have
   \[
    \Gamma  \ottsym{,}   \algeffseqoverindex{ \alpha_{{\mathrm{2}}} }{ \text{\unboldmath$\mathit{I_{{\mathrm{2}}}}$} }   \vdash    \text{\unboldmath$\forall$}  \,  \algeffseqoverindex{ \alpha_{{\mathrm{32}}} }{ \text{\unboldmath$\mathit{I_{{\mathrm{32}}}}$} }   \ottsym{.} \,  \text{\unboldmath$\forall$}  \,  \algeffseqoverindex{ \beta_{{\mathrm{3}}} }{ \text{\unboldmath$\mathit{J_{{\mathrm{3}}}}$} }   \ottsym{.} \,  \text{\unboldmath$\forall$}  \,  \algeffseqoverindex{ \alpha_{{\mathrm{12}}} }{ \text{\unboldmath$\mathit{I_{{\mathrm{12}}}}$} }   \ottsym{.} \,  \text{\unboldmath$\forall$}  \,  \algeffseqoverindex{ \beta_{{\mathrm{1}}} }{ \text{\unboldmath$\mathit{J_{{\mathrm{1}}}}$} }   \ottsym{.} \, \ottnt{A_{{\mathrm{2}}}}    [   \algeffseqoverindex{ \ottnt{C} }{ \text{\unboldmath$\mathit{I_{{\mathrm{11}}}}$} }   \ottsym{/}   \algeffseqoverindex{ \alpha_{{\mathrm{11}}} }{ \text{\unboldmath$\mathit{I_{{\mathrm{11}}}}$} }   ]    \sqsubseteq  \ottnt{B_{{\mathrm{2}}}}
   \]
   by \Srule{Trans}.  Thus, by permutating $\forall$s on the left-hand side,
   \[
    \Gamma  \ottsym{,}   \algeffseqoverindex{ \alpha_{{\mathrm{2}}} }{ \text{\unboldmath$\mathit{I_{{\mathrm{2}}}}$} }   \vdash    \text{\unboldmath$\forall$}  \,  \algeffseqoverindex{ \alpha_{{\mathrm{12}}} }{ \text{\unboldmath$\mathit{I_{{\mathrm{12}}}}$} }   \ottsym{.} \,  \text{\unboldmath$\forall$}  \,  \algeffseqoverindex{ \alpha_{{\mathrm{32}}} }{ \text{\unboldmath$\mathit{I_{{\mathrm{32}}}}$} }   \ottsym{.} \,  \text{\unboldmath$\forall$}  \,  \algeffseqoverindex{ \beta_{{\mathrm{3}}} }{ \text{\unboldmath$\mathit{J_{{\mathrm{3}}}}$} }   \ottsym{.} \,  \text{\unboldmath$\forall$}  \,  \algeffseqoverindex{ \beta_{{\mathrm{1}}} }{ \text{\unboldmath$\mathit{J_{{\mathrm{1}}}}$} }   \ottsym{.} \, \ottnt{A_{{\mathrm{2}}}}    [   \algeffseqoverindex{ \ottnt{C} }{ \text{\unboldmath$\mathit{I_{{\mathrm{11}}}}$} }   \ottsym{/}   \algeffseqoverindex{ \alpha_{{\mathrm{11}}} }{ \text{\unboldmath$\mathit{I_{{\mathrm{11}}}}$} }   ]    \sqsubseteq  \ottnt{B_{{\mathrm{2}}}}.
   \]

  \case \Srule{Fun}: Obvious by inversion.
  \case \Srule{Inst}:
   We have $ \algeffseqoverindex{ \alpha_{{\mathrm{1}}} }{ \text{\unboldmath$\mathit{I_{{\mathrm{1}}}}$} }  \,  =  \, \alpha  \ottsym{,}   \algeffseqoverindex{ \alpha_{{\mathrm{2}}} }{ \text{\unboldmath$\mathit{I_{{\mathrm{2}}}}$} } $ and $\ottnt{B_{{\mathrm{1}}}} \,  =  \,  \ottnt{A_{{\mathrm{1}}}}    [  \ottnt{C}  \ottsym{/}  \alpha  ]  $ and $\ottnt{B_{{\mathrm{2}}}} \,  =  \,  \ottnt{A_{{\mathrm{2}}}}    [  \ottnt{C}  \ottsym{/}  \alpha  ]  $
   for some $\ottnt{C}$ such that $\Gamma  \vdash  \ottnt{C}$.
   We show the conclusion by letting
   $ \algeffseqoverindex{ \alpha_{{\mathrm{11}}} }{ \text{\unboldmath$\mathit{I_{{\mathrm{11}}}}$} }  \,  =  \, \alpha  \ottsym{,}   \algeffseqoverindex{ \alpha_{{\mathrm{2}}} }{ \text{\unboldmath$\mathit{I_{{\mathrm{2}}}}$} } $, $ \algeffseqoverindex{ \ottnt{C} }{ \text{\unboldmath$\mathit{I_{{\mathrm{11}}}}$} }  \,  =  \, \ottnt{C}  \ottsym{,}   \algeffseqoverindex{ \alpha_{{\mathrm{2}}} }{ \text{\unboldmath$\mathit{I_{{\mathrm{2}}}}$} } $, and
   $ \algeffseqoverindex{ \alpha_{{\mathrm{12}}} }{ \text{\unboldmath$\mathit{I_{{\mathrm{12}}}}$} } $ and $ \algeffseqoverindex{ \beta }{ \text{\unboldmath$\mathit{J}$} } $ be the empty sequence.
   We have to show that
   \begin{itemize}
    \item $\Gamma  \ottsym{,}   \algeffseqoverindex{ \alpha_{{\mathrm{2}}} }{ \text{\unboldmath$\mathit{I_{{\mathrm{2}}}}$} }   \vdash  \ottnt{C}$,
    \item $\Gamma  \ottsym{,}   \algeffseqoverindex{ \alpha_{{\mathrm{2}}} }{ \text{\unboldmath$\mathit{I_{{\mathrm{2}}}}$} }   \vdash   \ottnt{A_{{\mathrm{1}}}}    [  \ottnt{C}  \ottsym{/}  \alpha  ]    \sqsubseteq   \ottnt{A_{{\mathrm{1}}}}    [  \ottnt{C}  \ottsym{/}  \alpha  ]  $,
    \item $\Gamma  \ottsym{,}   \algeffseqoverindex{ \alpha_{{\mathrm{2}}} }{ \text{\unboldmath$\mathit{I_{{\mathrm{2}}}}$} }   \vdash   \ottnt{A_{{\mathrm{2}}}}    [  \ottnt{C}  \ottsym{/}  \alpha  ]    \sqsubseteq   \ottnt{A_{{\mathrm{2}}}}    [  \ottnt{C}  \ottsym{/}  \alpha  ]  $.
   \end{itemize}
   The first is shown by \reflem{weakening} (\ref{lem:weakening:typing-context}).
   The second is by \Srule{Refl}.
   The third is by \Srule{Refl}.

  \case \Srule{Gen}:
   We have $ \algeffseqoverindex{ \alpha_{{\mathrm{2}}} }{ \text{\unboldmath$\mathit{I_{{\mathrm{2}}}}$} }  \,  =  \, \alpha  \ottsym{,}   \algeffseqoverindex{ \alpha_{{\mathrm{1}}} }{ \text{\unboldmath$\mathit{I_{{\mathrm{1}}}}$} } $ and $\ottnt{A_{{\mathrm{1}}}} \,  =  \, \ottnt{B_{{\mathrm{1}}}}$ and $\ottnt{A_{{\mathrm{2}}}} \,  =  \, \ottnt{B_{{\mathrm{2}}}}$ and
   $\alpha \,  \not\in  \,  \mathit{ftv}  (   \text{\unboldmath$\forall$}  \,  \algeffseqoverindex{ \alpha_{{\mathrm{1}}} }{ \text{\unboldmath$\mathit{I_{{\mathrm{1}}}}$} }   \ottsym{.} \, \ottnt{A_{{\mathrm{1}}}}  \rightarrow  \ottnt{A_{{\mathrm{2}}}}  ) $.
   We show the conclusion by letting
   $ \algeffseqoverindex{ \alpha_{{\mathrm{11}}} }{ \text{\unboldmath$\mathit{I_{{\mathrm{11}}}}$} }  \,  =  \,  \algeffseqoverindex{ \alpha_{{\mathrm{1}}} }{ \text{\unboldmath$\mathit{I_{{\mathrm{1}}}}$} } $, $ \algeffseqoverindex{ \ottnt{C} }{ \text{\unboldmath$\mathit{I_{{\mathrm{11}}}}$} }  \,  =  \,  \algeffseqoverindex{ \alpha_{{\mathrm{1}}} }{ \text{\unboldmath$\mathit{I_{{\mathrm{1}}}}$} } $, and
   $ \algeffseqoverindex{ \alpha_{{\mathrm{12}}} }{ \text{\unboldmath$\mathit{I_{{\mathrm{12}}}}$} } $ and $ \algeffseqoverindex{ \beta }{ \text{\unboldmath$\mathit{J}$} } $ be the empty sequence.
   We have to show that
   \begin{itemize}
    \item $\Gamma  \ottsym{,}  \alpha  \ottsym{,}   \algeffseqoverindex{ \alpha_{{\mathrm{1}}} }{ \text{\unboldmath$\mathit{I_{{\mathrm{1}}}}$} }   \vdash  \ottnt{A_{{\mathrm{1}}}}  \sqsubseteq  \ottnt{A_{{\mathrm{1}}}}$ and
    \item $\Gamma  \ottsym{,}  \alpha  \ottsym{,}   \algeffseqoverindex{ \alpha_{{\mathrm{1}}} }{ \text{\unboldmath$\mathit{I_{{\mathrm{1}}}}$} }   \vdash  \ottnt{A_{{\mathrm{2}}}}  \sqsubseteq  \ottnt{A_{{\mathrm{2}}}}$.
   \end{itemize}
   They are derived by \Srule{Refl}.

  \case \Srule{Poly}:
   We have $ \algeffseqoverindex{ \alpha_{{\mathrm{1}}} }{ \text{\unboldmath$\mathit{I_{{\mathrm{1}}}}$} }  \,  =  \, \alpha  \ottsym{,}   \algeffseqoverindex{ \alpha_{{\mathrm{01}}} }{ \text{\unboldmath$\mathit{I_{{\mathrm{01}}}}$} } $ and $ \algeffseqoverindex{ \alpha_{{\mathrm{2}}} }{ \text{\unboldmath$\mathit{I_{{\mathrm{2}}}}$} }  \,  =  \, \alpha  \ottsym{,}   \algeffseqoverindex{ \alpha_{{\mathrm{02}}} }{ \text{\unboldmath$\mathit{I_{{\mathrm{02}}}}$} } $ and, by inversion,
   $\Gamma  \ottsym{,}  \alpha  \vdash   \text{\unboldmath$\forall$}  \,  \algeffseqoverindex{ \alpha_{{\mathrm{01}}} }{ \text{\unboldmath$\mathit{I_{{\mathrm{01}}}}$} }   \ottsym{.} \, \ottnt{A_{{\mathrm{1}}}}  \rightarrow  \ottnt{A_{{\mathrm{2}}}}  \sqsubseteq   \text{\unboldmath$\forall$}  \,  \algeffseqoverindex{ \alpha_{{\mathrm{02}}} }{ \text{\unboldmath$\mathit{I_{{\mathrm{02}}}}$} }   \ottsym{.} \, \ottnt{B_{{\mathrm{1}}}}  \rightarrow  \ottnt{B_{{\mathrm{2}}}}$.
   By the IH, there exist some $ \algeffseqoverindex{ \alpha_{{\mathrm{011}}} }{ \text{\unboldmath$\mathit{I_{{\mathrm{011}}}}$} } $, $ \algeffseqoverindex{ \alpha_{{\mathrm{12}}} }{ \text{\unboldmath$\mathit{I_{{\mathrm{12}}}}$} } $, $ \algeffseqoverindex{ \beta }{ \text{\unboldmath$\mathit{J}$} } $, and
   $ \algeffseqoverindex{ \ottnt{C_{{\mathrm{0}}}} }{ \text{\unboldmath$\mathit{I_{{\mathrm{011}}}}$} } $ such that
   \begin{itemize}
    \item $\ottsym{\{}   \algeffseqoverindex{ \alpha_{{\mathrm{01}}} }{ \text{\unboldmath$\mathit{I_{{\mathrm{01}}}}$} }   \ottsym{\}} \,  =  \, \ottsym{\{}   \algeffseqoverindex{ \alpha_{{\mathrm{011}}} }{ \text{\unboldmath$\mathit{I_{{\mathrm{011}}}}$} }   \ottsym{\}} \,  \mathbin{\uplus}  \, \ottsym{\{}   \algeffseqoverindex{ \alpha_{{\mathrm{12}}} }{ \text{\unboldmath$\mathit{I_{{\mathrm{12}}}}$} }   \ottsym{\}}$,
    \item $\Gamma  \ottsym{,}  \alpha  \ottsym{,}   \algeffseqoverindex{ \alpha_{{\mathrm{02}}} }{ \text{\unboldmath$\mathit{I_{{\mathrm{02}}}}$} }   \ottsym{,}   \algeffseqoverindex{ \beta }{ \text{\unboldmath$\mathit{J}$} }   \vdash   \algeffseqoverindex{ \ottnt{C_{{\mathrm{0}}}} }{ \text{\unboldmath$\mathit{I_{{\mathrm{011}}}}$} } $,
    \item $\Gamma  \ottsym{,}  \alpha  \ottsym{,}   \algeffseqoverindex{ \alpha_{{\mathrm{02}}} }{ \text{\unboldmath$\mathit{I_{{\mathrm{02}}}}$} }   \vdash  \ottnt{B_{{\mathrm{1}}}}  \sqsubseteq    \text{\unboldmath$\forall$}  \,  \algeffseqoverindex{ \beta }{ \text{\unboldmath$\mathit{J}$} }   \ottsym{.} \, \ottnt{A_{{\mathrm{1}}}}    [   \algeffseqoverindex{ \ottnt{C_{{\mathrm{0}}}} }{ \text{\unboldmath$\mathit{I_{{\mathrm{011}}}}$} }   \ottsym{/}   \algeffseqoverindex{ \alpha_{{\mathrm{011}}} }{ \text{\unboldmath$\mathit{I_{{\mathrm{011}}}}$} }   ]  $,
    \item $\Gamma  \ottsym{,}  \alpha  \ottsym{,}   \algeffseqoverindex{ \alpha_{{\mathrm{02}}} }{ \text{\unboldmath$\mathit{I_{{\mathrm{02}}}}$} }   \vdash    \text{\unboldmath$\forall$}  \,  \algeffseqoverindex{ \alpha_{{\mathrm{12}}} }{ \text{\unboldmath$\mathit{I_{{\mathrm{12}}}}$} }   \ottsym{.} \,  \text{\unboldmath$\forall$}  \,  \algeffseqoverindex{ \beta }{ \text{\unboldmath$\mathit{J}$} }   \ottsym{.} \, \ottnt{A_{{\mathrm{2}}}}    [   \algeffseqoverindex{ \ottnt{C_{{\mathrm{0}}}} }{ \text{\unboldmath$\mathit{I_{{\mathrm{011}}}}$} }   \ottsym{/}   \algeffseqoverindex{ \alpha_{{\mathrm{011}}} }{ \text{\unboldmath$\mathit{I_{{\mathrm{011}}}}$} }   ]    \sqsubseteq  \ottnt{B_{{\mathrm{2}}}}$, and
    \item type variables in $ \algeffseqoverindex{ \beta }{ \text{\unboldmath$\mathit{J}$} } $ do not appear free in $\ottnt{A_{{\mathrm{1}}}}$ and $\ottnt{B_{{\mathrm{1}}}}$.
   \end{itemize}
   We can prove the conclusion by letting $ \algeffseqoverindex{ \alpha_{{\mathrm{11}}} }{ \text{\unboldmath$\mathit{I_{{\mathrm{11}}}}$} }  \,  =  \, \alpha  \ottsym{,}   \algeffseqoverindex{ \alpha_{{\mathrm{011}}} }{ \text{\unboldmath$\mathit{I_{{\mathrm{011}}}}$} } $ and
   $ \algeffseqoverindex{ \ottnt{C} }{ \text{\unboldmath$\mathit{I_{{\mathrm{11}}}}$} }  \,  =  \, \alpha  \ottsym{,}   \algeffseqoverindex{ \ottnt{C_{{\mathrm{0}}}} }{ \text{\unboldmath$\mathit{I_{{\mathrm{011}}}}$} } $.

  \case \Srule{DFun}:
   It is found that, for some $\alpha$, $ \algeffseqoverindex{ \alpha_{{\mathrm{1}}} }{ \text{\unboldmath$\mathit{I_{{\mathrm{1}}}}$} }  \,  =  \, \alpha$ and $ \algeffseqoverindex{ \alpha_{{\mathrm{2}}} }{ \text{\unboldmath$\mathit{I_{{\mathrm{2}}}}$} } $ is the empty sequence and
   $\ottnt{B_{{\mathrm{1}}}} \,  =  \, \ottnt{A_{{\mathrm{1}}}}$ and $\ottnt{B_{{\mathrm{2}}}} \,  =  \,  \text{\unboldmath$\forall$}  \, \alpha  \ottsym{.} \, \ottnt{A_{{\mathrm{2}}}}$.
   We show the conclusion by letting
   $ \algeffseqoverindex{ \alpha_{{\mathrm{12}}} }{ \text{\unboldmath$\mathit{I_{{\mathrm{12}}}}$} }  \,  =  \, \alpha$ and $ \algeffseqoverindex{ \alpha_{{\mathrm{11}}} }{ \text{\unboldmath$\mathit{I_{{\mathrm{11}}}}$} } $, $ \algeffseqoverindex{ \ottnt{C} }{ \text{\unboldmath$\mathit{I_{{\mathrm{11}}}}$} } $, and $ \algeffseqoverindex{ \beta }{ \text{\unboldmath$\mathit{J}$} } $ be the empty sequence.
   It suffices to show that $\Gamma  \vdash  \ottnt{A_{{\mathrm{1}}}}  \sqsubseteq  \ottnt{A_{{\mathrm{1}}}}$ and $\Gamma  \vdash   \text{\unboldmath$\forall$}  \, \alpha  \ottsym{.} \, \ottnt{A_{{\mathrm{2}}}}  \sqsubseteq   \text{\unboldmath$\forall$}  \, \alpha  \ottsym{.} \, \ottnt{A_{{\mathrm{2}}}}$,
   which are derived by \Srule{Refl}.

  \case \Srule{Prod}, \Srule{Sum}, \Srule{List}, \Srule{DProd}, \Srule{DSum}, and \Srule{DList}:
   Contradictory.
 \end{caseanalysis}
\end{proof}

\ifrestate
\lemmSubtypingInvFunMono*
\else
\begin{lemma}{subtyping-inv-fun-mono}
 If $\Gamma  \vdash  \ottnt{A_{{\mathrm{1}}}}  \rightarrow  \ottnt{A_{{\mathrm{2}}}}  \sqsubseteq  \ottnt{B_{{\mathrm{1}}}}  \rightarrow  \ottnt{B_{{\mathrm{2}}}}$,
 then $\Gamma  \vdash  \ottnt{B_{{\mathrm{1}}}}  \sqsubseteq  \ottnt{A_{{\mathrm{1}}}}$ and $\Gamma  \vdash  \ottnt{A_{{\mathrm{2}}}}  \sqsubseteq  \ottnt{B_{{\mathrm{2}}}}$.
\end{lemma}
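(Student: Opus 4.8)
The plan is to derive \reflem{subtyping-inv-fun-mono} as a direct corollary of the general inversion lemma \reflem{subtyping-inv-fun}, which has already been set up precisely to handle the difficulty that an inductive proof on $\Gamma \vdash \ottnt{A_{{\mathrm{1}}}} \rightarrow \ottnt{A_{{\mathrm{2}}}} \sqsubseteq \ottnt{B_{{\mathrm{1}}}} \rightarrow \ottnt{B_{{\mathrm{2}}}}$ would encounter (namely, intermediate polymorphic function types appearing in a \Srule{Trans} chain). So the first step is simply to instantiate \reflem{subtyping-inv-fun} with both quantifier sequences $ \algeffseqoverindex{ \alpha }{ \text{\unboldmath$\mathit{I}$} } $ and $ \algeffseqoverindex{ \beta }{ \text{\unboldmath$\mathit{J}$} } $ taken to be empty, since $\ottnt{A_{{\mathrm{1}}}} \rightarrow \ottnt{A_{{\mathrm{2}}}}$ and $\ottnt{B_{{\mathrm{1}}}} \rightarrow \ottnt{B_{{\mathrm{2}}}}$ are already monomorphic function types.

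Feeding the hypothesis into \reflem{subtyping-inv-fun} yields sequences $ \algeffseqoverindex{ \alpha_{{\mathrm{1}}} }{ \text{\unboldmath$\mathit{I_{{\mathrm{1}}}}$} } $, $ \algeffseqoverindex{ \alpha_{{\mathrm{2}}} }{ \text{\unboldmath$\mathit{I_{{\mathrm{2}}}}$} } $, $ \algeffseqoverindex{ \gamma }{ \text{\unboldmath$\mathit{K}$} } $, and $ \algeffseqoverindex{ \ottnt{C} }{ \text{\unboldmath$\mathit{I_{{\mathrm{1}}}}$} } $ with the four stated properties. The key observation is that since the outer quantifier sequence on the left was empty, we must have $\ottsym{\{}   \algeffseqoverindex{ \alpha_{{\mathrm{1}}} }{ \text{\unboldmath$\mathit{I_{{\mathrm{1}}}}$} }   \ottsym{\}} \,  \mathbin{\uplus}  \, \ottsym{\{}   \algeffseqoverindex{ \alpha_{{\mathrm{2}}} }{ \text{\unboldmath$\mathit{I_{{\mathrm{2}}}}$} }   \ottsym{\}} \,  =  \,  \emptyset $, so both $ \algeffseqoverindex{ \alpha_{{\mathrm{1}}} }{ \text{\unboldmath$\mathit{I_{{\mathrm{1}}}}$} } $ and $ \algeffseqoverindex{ \alpha_{{\mathrm{2}}} }{ \text{\unboldmath$\mathit{I_{{\mathrm{2}}}}$} } $ are empty. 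Consequently the substitution $ [   \algeffseqoverindex{ \ottnt{C} }{ \text{\unboldmath$\mathit{I_{{\mathrm{1}}}}$} }   \ottsym{/}   \algeffseqoverindex{ \alpha_{{\mathrm{1}}} }{ \text{\unboldmath$\mathit{I_{{\mathrm{1}}}}$} }   ] $ is the identity, and the remaining two containment judgments collapse to $\Gamma  \vdash  \ottnt{B_{{\mathrm{1}}}}  \sqsubseteq   \text{\unboldmath$\forall$}  \,  \algeffseqoverindex{ \gamma }{ \text{\unboldmath$\mathit{K}$} }   \ottsym{.} \, \ottnt{A_{{\mathrm{1}}}}$ and $\Gamma  \vdash    \text{\unboldmath$\forall$}  \,  \algeffseqoverindex{ \gamma }{ \text{\unboldmath$\mathit{K}$} }   \ottsym{.} \, \ottnt{A_{{\mathrm{2}}}}  \sqsubseteq  \ottnt{B_{{\mathrm{2}}}}$ (the outer typing context simplifies because $ \algeffseqoverindex{ \beta }{ \text{\unboldmath$\mathit{J}$} } $ is empty).

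The only remaining work is to eliminate the auxiliary quantifiers $ \algeffseqoverindex{ \gamma }{ \text{\unboldmath$\mathit{K}$} } $. Here I would use the fact, recorded in the last bullet of \reflem{subtyping-inv-fun}, that the type variables in $ \algeffseqoverindex{ \gamma }{ \text{\unboldmath$\mathit{K}$} } $ do not occur free in $\ottnt{A_{{\mathrm{1}}}}$ or $\ottnt{A_{{\mathrm{2}}}}$. For the codomain judgment, I compose with $\Gamma  \vdash  \ottnt{A_{{\mathrm{2}}}}  \sqsubseteq    \text{\unboldmath$\forall$}  \,  \algeffseqoverindex{ \gamma }{ \text{\unboldmath$\mathit{K}$} }   \ottsym{.} \, \ottnt{A_{{\mathrm{2}}}}$, obtained by repeated application of \Srule{Gen} (using $ \algeffseqoverindex{ \gamma }{ \text{\unboldmath$\mathit{K}$} }  \cap \mathit{ftv}(\ottnt{A_{{\mathrm{2}}}}) = \emptyset$), and then \Srule{Trans} to get $\Gamma  \vdash  \ottnt{A_{{\mathrm{2}}}}  \sqsubseteq  \ottnt{B_{{\mathrm{2}}}}$. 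For the domain judgment, I compose with $\Gamma  \vdash    \text{\unboldmath$\forall$}  \,  \algeffseqoverindex{ \gamma }{ \text{\unboldmath$\mathit{K}$} }   \ottsym{.} \, \ottnt{A_{{\mathrm{1}}}}  \sqsubseteq  \ottnt{A_{{\mathrm{1}}}}$, obtained by repeated application of \Srule{Inst} (instantiating each $\gamma$ with itself, which is well formed since the context binds it), and then \Srule{Trans} to get $\Gamma  \vdash  \ottnt{B_{{\mathrm{1}}}}  \sqsubseteq  \ottnt{A_{{\mathrm{1}}}}$, as required. I do not expect a genuine obstacle here: all the difficulty has been front-loaded into proving \reflem{subtyping-inv-fun}, and the corollary is essentially a matter of specializing the quantifier sequences to empty and discharging the spurious $ \algeffseqoverindex{ \gamma }{ \text{\unboldmath$\mathit{K}$} } $ quantifiers by \Srule{Gen}/\Srule{Inst}; the mildest care needed is just to confirm well-formedness of the contexts so that \Srule{Refl}, \Srule{Gen}, and \Srule{Inst} apply.
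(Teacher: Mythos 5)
Your proposal is correct and takes essentially the same route as the paper: instantiate \reflem{subtyping-inv-fun} with both outer quantifier sequences empty, conclude that $ \algeffseqoverindex{ \alpha_{{\mathrm{1}}} }{ \text{\unboldmath$\mathit{I_{{\mathrm{1}}}}$} } $ and $ \algeffseqoverindex{ \alpha_{{\mathrm{2}}} }{ \text{\unboldmath$\mathit{I_{{\mathrm{2}}}}$} } $ are empty so the substitution is the identity, and then discharge the residual $\forall\, \algeffseqoverindex{ \gamma }{ \text{\unboldmath$\mathit{K}$} } $ by \Srule{Gen} on the codomain side and \Srule{Inst} on the domain side, followed by \Srule{Trans}. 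The one small slip is your justification of the \Srule{Inst} step: $\Gamma$ does \emph{not} in general bind the $ \algeffseqoverindex{ \gamma }{ \text{\unboldmath$\mathit{K}$} } $ (they are fresh variables produced by the inversion lemma), so you cannot instantiate each $\gamma$ with itself; instead, since $ \algeffseqoverindex{ \gamma }{ \text{\unboldmath$\mathit{K}$} } $ do not occur free in $\ottnt{A_{{\mathrm{1}}}}$, you may instantiate with any type well formed under $\Gamma$ (the paper uses $ \text{\unboldmath$\forall$}  \, \beta  \ottsym{.} \, \beta$) and the result of the substitution is still $\ottnt{A_{{\mathrm{1}}}}$.
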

\fi
\begin{proof}
 By \reflem{subtyping-inv-fun},
 $\Gamma  \vdash  \ottnt{B_{{\mathrm{1}}}}  \sqsubseteq   \text{\unboldmath$\forall$}  \,  \algeffseqover{ \alpha }   \ottsym{.} \, \ottnt{A_{{\mathrm{1}}}}$ and
 $\Gamma  \vdash   \text{\unboldmath$\forall$}  \,  \algeffseqover{ \alpha }   \ottsym{.} \, \ottnt{A_{{\mathrm{2}}}}  \sqsubseteq  \ottnt{B_{{\mathrm{2}}}}$ for some ${[<X>]}$ such that
 type variables in $ \algeffseqover{ \alpha } $ do not appear free in $\ottnt{A_{{\mathrm{1}}}}$ and $\ottnt{A_{{\mathrm{2}}}}$.
 Since $\Gamma  \vdash   \text{\unboldmath$\forall$}  \,  \algeffseqover{ \alpha }   \ottsym{.} \, \ottnt{A_{{\mathrm{1}}}}  \sqsubseteq  \ottnt{A_{{\mathrm{1}}}}$ by \Srule{Inst} (we can substitute any type,
 e.g., $ \text{\unboldmath$\forall$}  \, \beta  \ottsym{.} \, \beta$, for $ \algeffseqover{ \alpha } $), we have $\Gamma  \vdash  \ottnt{B_{{\mathrm{1}}}}  \sqsubseteq  \ottnt{A_{{\mathrm{1}}}}$ by \Srule{Trans}.
 Since $\Gamma  \vdash  \ottnt{A_{{\mathrm{2}}}}  \sqsubseteq   \text{\unboldmath$\forall$}  \,  \algeffseqover{ \alpha }   \ottsym{.} \, \ottnt{A_{{\mathrm{2}}}}$ by \Srule{Gen},
 we have $\Gamma  \vdash  \ottnt{A_{{\mathrm{2}}}}  \sqsubseteq  \ottnt{B_{{\mathrm{2}}}}$.
\end{proof}

\begin{lemmap}{Value inversion: constants}{val-inv-const}
 If $\Gamma  \vdash  \ottnt{c}  \ottsym{:}  \ottnt{A}$, then $\Gamma  \vdash   \mathit{ty}  (  \ottnt{c}  )   \sqsubseteq  \ottnt{A}$.
\end{lemmap}
\begin{proof}
 By induction on the typing derivation for $\ottnt{c}$.
 There are only three typing rules that can be applied to $\ottnt{c}$.
 \begin{caseanalysis}
  \case \T{Const}:  By \Srule{Refl}.
  \case \T{Gen}: We are given
   $\Gamma  \vdash  \ottnt{c}  \ottsym{:}   \text{\unboldmath$\forall$}  \, \alpha  \ottsym{.} \, \ottnt{B}$ (i.e., $\ottnt{A} \,  =  \,  \text{\unboldmath$\forall$}  \, \alpha  \ottsym{.} \, \ottnt{B}$) and, by inversion,
   $\Gamma  \ottsym{,}  \alpha  \vdash  \ottnt{c}  \ottsym{:}  \ottnt{B}$.
   By the IH, $\Gamma  \ottsym{,}  \alpha  \vdash   \mathit{ty}  (  \ottnt{c}  )   \sqsubseteq  \ottnt{B}$.
   By \Srule{Poly}, $\Gamma  \vdash   \text{\unboldmath$\forall$}  \, \alpha  \ottsym{.} \,  \mathit{ty}  (  \ottnt{c}  )   \sqsubseteq   \text{\unboldmath$\forall$}  \, \alpha  \ottsym{.} \, \ottnt{B}$.
   Since $ \mathit{ty}  (  \ottnt{c}  ) $ is closed, we have $\Gamma  \vdash   \mathit{ty}  (  \ottnt{c}  )   \sqsubseteq   \text{\unboldmath$\forall$}  \, \alpha  \ottsym{.} \,  \mathit{ty}  (  \ottnt{c}  ) $ by
   \Srule{Gen}.
   Thus, by \Srule{Trans}, we have the conclusion.

  \case \T{Inst}: By the IH and \Srule{Trans}.
 \end{caseanalysis}
\end{proof}

\ifrestate
\lemmProgress*
\else
\begin{lemmap}{Progress}{progress}
 If $\Delta  \vdash  \ottnt{M}  \ottsym{:}  \ottnt{A}$, then:
 \begin{itemize}
  \item $\ottnt{M}  \longrightarrow  \ottnt{M'}$ for some $\ottnt{M'}$;
  \item $\ottnt{M}$ is a value; or
  \item $\ottnt{M} \,  =  \,  \ottnt{E}  [   \textup{\texttt{\#}\relax}  \mathsf{op}   \ottsym{(}   \ottnt{v}   \ottsym{)}   ] $ for some $\ottnt{E}$, $\mathsf{op}$, and $\ottnt{v}$
 such that $\mathsf{op} \,  \not\in  \, \ottnt{E}$.
 \end{itemize}
\end{lemmap}
\fi
\begin{proof}
 By induction on the typing derivation for $\ottnt{M}$.  We proceed by case analysis
 on the typing rule applied last to derive $\Delta  \vdash  \ottnt{M}  \ottsym{:}  \ottnt{A}$.
 \begin{caseanalysis}
  \case \T{Var}: Contradictory.
  \case \T{Const}, \T{Abs}, and \T{Nil}: Obvious.
  \case \T{Abs}: Obvious.
  \case \T{App}:
  We are given
  \begin{itemize}
   \item $\ottnt{M} \,  =  \, \ottnt{M_{{\mathrm{1}}}} \, \ottnt{M_{{\mathrm{2}}}}$,
   \item $\Delta  \vdash  \ottnt{M_{{\mathrm{1}}}} \, \ottnt{M_{{\mathrm{2}}}}  \ottsym{:}  \ottnt{A}$,
   \item $\Delta  \vdash  \ottnt{M_{{\mathrm{1}}}}  \ottsym{:}  \ottnt{B}  \rightarrow  \ottnt{A}$, and
   \item $\Delta  \vdash  \ottnt{M_{{\mathrm{2}}}}  \ottsym{:}  \ottnt{B}$
  \end{itemize}
  for some $\ottnt{M_{{\mathrm{1}}}}$, $\ottnt{M_{{\mathrm{2}}}}$, and $\ottnt{B}$.
  By case analysis on the behavior of $\ottnt{M_{{\mathrm{1}}}}$.
  We have three cases to consider by the IH.
  \begin{caseanalysis}
   \case $\ottnt{M_{{\mathrm{1}}}}  \longrightarrow  \ottnt{M'_{{\mathrm{1}}}}$ for some $\ottnt{M'_{{\mathrm{1}}}}$: We have $\ottnt{M}  \longrightarrow  \ottnt{M'_{{\mathrm{1}}}} \, \ottnt{M_{{\mathrm{2}}}}$.
   \case $\ottnt{M_{{\mathrm{1}}}} \,  =  \,  \ottnt{E_{{\mathrm{1}}}}  [   \textup{\texttt{\#}\relax}  \mathsf{op}   \ottsym{(}   \ottnt{v}   \ottsym{)}   ] $ for some $\ottnt{E_{{\mathrm{1}}}}$, $\mathsf{op}$, and $\ottnt{v}$
         such that $\mathsf{op} \,  \not\in  \, \ottnt{E_{{\mathrm{1}}}}$:
     We have the third case in the conclusion by letting $\ottnt{E} \,  =  \, \ottnt{E_{{\mathrm{1}}}} \, \ottnt{M_{{\mathrm{2}}}}$.
   \case $\ottnt{M_{{\mathrm{1}}}} \,  =  \, \ottnt{v_{{\mathrm{1}}}}$ for some $\ottnt{v_{{\mathrm{1}}}}$:
    By case analysis on the behavior of $\ottnt{M_{{\mathrm{2}}}}$ with the IH.
    \begin{caseanalysis}
    \case $\ottnt{M_{{\mathrm{2}}}}  \longrightarrow  \ottnt{M'_{{\mathrm{2}}}}$ for some $\ottnt{M'_{{\mathrm{2}}}}$: We have $\ottnt{M}  \longrightarrow  \ottnt{v_{{\mathrm{1}}}} \, \ottnt{M'_{{\mathrm{2}}}}$.
    \case $\ottnt{M_{{\mathrm{2}}}} \,  =  \,  \ottnt{E_{{\mathrm{2}}}}  [   \textup{\texttt{\#}\relax}  \mathsf{op}   \ottsym{(}   \ottnt{v}   \ottsym{)}   ] $ for some $\ottnt{E_{{\mathrm{2}}}}$, $\mathsf{op}$, and $\ottnt{v}$
          such that $\mathsf{op} \,  \not\in  \, \ottnt{E_{{\mathrm{2}}}}$:
       We have the third case in the conclusion by letting $\ottnt{E} \,  =  \, \ottnt{v_{{\mathrm{1}}}} \, \ottnt{E_{{\mathrm{2}}}}$.
    \case $\ottnt{M_{{\mathrm{2}}}} \,  =  \, \ottnt{v_{{\mathrm{2}}}}$ for some $\ottnt{v_{{\mathrm{2}}}}$:
     By \reflem{canonical-forms} on $\ottnt{v_{{\mathrm{1}}}}$, we have two cases to consider.
     \begin{caseanalysis}
      \case $\ottnt{v_{{\mathrm{1}}}} \,  =  \, \ottnt{c_{{\mathrm{1}}}}$:
       Since $\Delta  \vdash  \ottnt{c_{{\mathrm{1}}}}  \ottsym{:}  \ottnt{B}  \rightarrow  \ottnt{A}$,
       we have $\Delta  \vdash   \mathit{ty}  (  \ottnt{c_{{\mathrm{1}}}}  )   \sqsubseteq  \ottnt{B}  \rightarrow  \ottnt{A}$
       by \reflem{val-inv-const}.
       By \reflem{subtyping-unqualify} (\ref{lem:subtyping-unqualify:fun}),
       it is found that $ \mathit{ty}  (  \ottnt{c_{{\mathrm{1}}}}  )  \,  =  \, \iota  \rightarrow  \ottnt{C}$ for some $\iota$ and $\ottnt{C}$.
       Since $\Delta  \vdash  \iota  \rightarrow  \ottnt{C}  \sqsubseteq  \ottnt{B}  \rightarrow  \ottnt{A}$,
       we have $\Delta  \vdash  \ottnt{B}  \sqsubseteq  \iota$
       for some $ \algeffseqoverindex{ \gamma }{ \text{\unboldmath$\mathit{I_{{\mathrm{0}}}}$} } $ by \reflem{subtyping-inv-fun-mono}.
       Since $\Delta  \vdash  \ottnt{v_{{\mathrm{2}}}}  \ottsym{:}  \ottnt{B}$,
       $ \mathit{unqualify}  (  \ottnt{B}  ) $ is not a type variable
       by \reflem{canonical-forms-unqualify-no-tyvar}.
       Thus, since $\Delta  \vdash  \ottnt{B}  \sqsubseteq  \iota$,
       it is found that $ \mathit{unqualify}  (  \ottnt{B}  )  \,  =  \, \iota$
       by \reflem{subtyping-unqualify}.
       Since $\Delta  \vdash  \ottnt{v_{{\mathrm{2}}}}  \ottsym{:}  \ottnt{B}$, we have $\ottnt{v_{{\mathrm{2}}}} \,  =  \, \ottnt{c_{{\mathrm{2}}}}$ for some $\ottnt{c_{{\mathrm{2}}}}$
       by \reflem{canonical-forms}.
       Since $\Delta  \vdash  \ottnt{c_{{\mathrm{2}}}}  \ottsym{:}  \ottnt{B}$,
       we have $\Delta  \vdash   \mathit{ty}  (  \ottnt{c_{{\mathrm{2}}}}  )   \sqsubseteq  \ottnt{B}$ by \reflem{val-inv-const}.
       Since $ \mathit{unqualify}  (  \ottnt{B}  )  \,  =  \, \iota$, we have $ \mathit{ty}  (  \ottnt{c_{{\mathrm{2}}}}  )  \,  =  \, \iota$
       by \reflem{subtyping-unqualify}.
       Thus, $ \zeta  (  \ottnt{c_{{\mathrm{1}}}}  ,  \ottnt{c_{{\mathrm{2}}}}  ) $ is defined, and
       $\ottnt{M} = \ottnt{c_{{\mathrm{1}}}} \, \ottnt{c_{{\mathrm{2}}}}  \longrightarrow   \zeta  (  \ottnt{c_{{\mathrm{1}}}}  ,  \ottnt{c_{{\mathrm{2}}}}  ) $ by \R{Const}/\E{Eval}.

      \case $\ottnt{v_{{\mathrm{1}}}} \,  =  \,  \lambda\!  \, \mathit{x}  \ottsym{.}  \ottnt{M'}$:
       By \R{Beta}/\E{Eval}, $\ottnt{M} = \ottsym{(}   \lambda\!  \, \mathit{x}  \ottsym{.}  \ottnt{M'}  \ottsym{)} \, \ottnt{v_{{\mathrm{2}}}}  \longrightarrow   \ottnt{M'}    [  \ottnt{v_{{\mathrm{2}}}}  /  \mathit{x}  ]  $.
     \end{caseanalysis}
   \end{caseanalysis}
  \end{caseanalysis}

  \case \T{Gen}: By the IH.
  \case \T{Inst}: By the IH.

  \case \T{Op}:
   We are given
   \begin{itemize}
    \item $\ottnt{M} \,  =  \,  \textup{\texttt{\#}\relax}  \mathsf{op}   \ottsym{(}   \ottnt{M'}   \ottsym{)} $,
    \item $\mathit{ty} \, \ottsym{(}  \mathsf{op}  \ottsym{)} \,  =  \,   \text{\unboldmath$\forall$}  \,  \algeffseqover{ \alpha }   \ottsym{.} \,  \ottnt{A'}  \hookrightarrow  \ottnt{B'} $,
    \item $\Delta  \vdash   \textup{\texttt{\#}\relax}  \mathsf{op}   \ottsym{(}   \ottnt{M'}   \ottsym{)}   \ottsym{:}   \ottnt{B'}    [   \algeffseqover{ \ottnt{C} }   \ottsym{/}   \algeffseqover{ \alpha }   ]  $, and
    \item $\Delta  \vdash  \ottnt{M'}  \ottsym{:}   \ottnt{A'}    [   \algeffseqover{ \ottnt{C} }   \ottsym{/}   \algeffseqover{ \alpha }   ]  $
  \end{itemize}
   for some $\mathsf{op}$, $\ottnt{M'}$, $ \algeffseqover{ \alpha } $, $\ottnt{A'}$, $\ottnt{B'}$, and $ \algeffseqover{ \ottnt{C} } $.
   By case analysis on the behavior of $\ottnt{M'}$ with the IH.
   \begin{caseanalysis}
    \case $\ottnt{M'}  \longrightarrow  \ottnt{M''}$ for some $\ottnt{M''}$: We have $\ottnt{M}  \longrightarrow   \textup{\texttt{\#}\relax}  \mathsf{op}   \ottsym{(}   \ottnt{M''}   \ottsym{)} $.
    \case $\ottnt{M'} \,  =  \,  \ottnt{E'}  [   \textup{\texttt{\#}\relax}  \mathsf{op}'   \ottsym{(}   \ottnt{v}   \ottsym{)}   ] $ for some $\ottnt{E'}$, $\mathsf{op}'$, and $\ottnt{v}$
          such that $\mathsf{op}' \,  \not\in  \, \ottnt{E'}$:
      We have the third case in the conclusion by letting $\ottnt{E} \,  =  \,  \textup{\texttt{\#}\relax}  \mathsf{op}   \ottsym{(}   \ottnt{E'}   \ottsym{)} $.
   \case $\ottnt{M'} \,  =  \, \ottnt{v}$ for some $\ottnt{v}$:
     We have the third case in the conclusion by letting $\ottnt{E} \,  =  \,  [] $.
  \end{caseanalysis}

  \case \T{Handle}:
  We are given
  \begin{itemize}
   \item $\ottnt{M} \,  =  \, \mathsf{handle} \, \ottnt{M'} \, \mathsf{with} \, \ottnt{H}$,
   \item $\Delta  \vdash  \ottnt{M'}  \ottsym{:}  \ottnt{B}$, and
   \item $\Delta  \vdash  \ottnt{H}  \ottsym{:}  \ottnt{B}  \Rightarrow  \ottnt{A}$
  \end{itemize}
  for some $\ottnt{M'}$, $\ottnt{H}$, and $\ottnt{B}$.
  By case analysis on the behavior of $\ottnt{M'}$ with the IH.
  \begin{caseanalysis}
   \case $\ottnt{M'}  \longrightarrow  \ottnt{M''}$ for some $\ottnt{M''}$:
    We have $\ottnt{M}  \longrightarrow  \mathsf{handle} \, \ottnt{M''} \, \mathsf{with} \, \ottnt{H}$.
   \case $\ottnt{M'} \,  =  \,  \ottnt{E'}  [   \textup{\texttt{\#}\relax}  \mathsf{op}   \ottsym{(}   \ottnt{v}   \ottsym{)}   ] $ for some $\ottnt{E'}$, $\mathsf{op}$, and $\ottnt{v}$
         such that $\mathsf{op} \,  \not\in  \, \ottnt{E'}$:
    If handler $\ottnt{H}$ contains an operation clause $\mathsf{op}  \ottsym{(}  \mathit{x}  \ottsym{,}  \mathit{k}  \ottsym{)}  \rightarrow  \ottnt{M''}$,
    then we have $\ottnt{M}  \longrightarrow    \ottnt{M''}    [  \ottnt{v}  /  \mathit{x}  ]      [   \lambda\!  \, \mathit{y}  \ottsym{.}  \mathsf{handle} \,  \ottnt{E'}  [  \mathit{y}  ]  \, \mathsf{with} \, \ottnt{H}  /  \mathit{k}  ]  $
    by \R{Handle}/\E{Eval}.

    Otherwise, if $\ottnt{H}$ contains no operation clause for $\mathsf{op}$,
    we have the third case in the conclusion by letting
    $\ottnt{E} \,  =  \, \mathsf{handle} \, \ottnt{E'} \, \mathsf{with} \, \ottnt{H}$.

   \case $\ottnt{M'} \,  =  \, \ottnt{v}$ for some $\ottnt{v}$: By \R{Return}/\E{Eval}.
  \end{caseanalysis}

  \case \T{Pair}:
   We are given
   \begin{itemize}
    \item $\ottnt{M} \,  =  \, \ottsym{(}  \ottnt{M_{{\mathrm{1}}}}  \ottsym{,}  \ottnt{M_{{\mathrm{2}}}}  \ottsym{)}$,
    \item $\Delta  \vdash  \ottnt{M_{{\mathrm{1}}}}  \ottsym{:}  \ottnt{B_{{\mathrm{1}}}}$, and
    \item $\Delta  \vdash  \ottnt{M_{{\mathrm{2}}}}  \ottsym{:}  \ottnt{B_{{\mathrm{2}}}}$
   \end{itemize}
   for some $\ottnt{M_{{\mathrm{1}}}}$, $\ottnt{M_{{\mathrm{2}}}}$, $\ottnt{B_{{\mathrm{1}}}}$, and $\ottnt{B_{{\mathrm{2}}}}$.
   By case analysis on the behavior of $\ottnt{M_{{\mathrm{1}}}}$ with the IH.
   \begin{caseanalysis}
    \case $\ottnt{M_{{\mathrm{1}}}}  \longrightarrow  \ottnt{M'_{{\mathrm{1}}}}$ for some $\ottnt{M'_{{\mathrm{1}}}}$:
     We have $\ottnt{M} \,  =  \, \ottsym{(}  \ottnt{M'_{{\mathrm{1}}}}  \ottsym{,}  \ottnt{M_{{\mathrm{2}}}}  \ottsym{)}$.

    \case $\ottnt{M_{{\mathrm{1}}}} \,  =  \,  \ottnt{E_{{\mathrm{1}}}}  [   \textup{\texttt{\#}\relax}  \mathsf{op}   \ottsym{(}   \ottnt{v}   \ottsym{)}   ] $ for some $\ottnt{E_{{\mathrm{1}}}}$, $\mathsf{op}$, and $\ottnt{v}$
          such that $\mathsf{op} \,  \not\in  \, \ottnt{E_{{\mathrm{1}}}}$:
     We have the third case in the conclusion by letting
     $\ottnt{E} \,  =  \, \ottsym{(}  \ottnt{E_{{\mathrm{1}}}}  \ottsym{,}  \ottnt{M_{{\mathrm{2}}}}  \ottsym{)}$.

    \case $\ottnt{M_{{\mathrm{1}}}} \,  =  \, \ottnt{v_{{\mathrm{1}}}}$ for some $\ottnt{v_{{\mathrm{1}}}}$:
     By case analysis on the behavior of $\ottnt{M_{{\mathrm{2}}}}$ with the IH.
     \begin{caseanalysis}
      \case $\ottnt{M_{{\mathrm{2}}}}  \longrightarrow  \ottnt{M'_{{\mathrm{2}}}}$:
       We have $\ottnt{M_{{\mathrm{2}}}}  \longrightarrow  \ottsym{(}  \ottnt{v_{{\mathrm{1}}}}  \ottsym{,}  \ottnt{M'_{{\mathrm{2}}}}  \ottsym{)}$.

      \case $\ottnt{M_{{\mathrm{2}}}} \,  =  \,  \ottnt{E_{{\mathrm{2}}}}  [   \textup{\texttt{\#}\relax}  \mathsf{op}   \ottsym{(}   \ottnt{v}   \ottsym{)}   ] $ for some $\ottnt{E_{{\mathrm{2}}}}$, $\mathsf{op}$, and $\ottnt{v}$
            such that $\mathsf{op} \,  \not\in  \, \ottnt{E_{{\mathrm{2}}}}$:
       We have the third case in the conclusion by letting
       $\ottnt{E} \,  =  \, \ottsym{(}  \ottnt{v_{{\mathrm{1}}}}  \ottsym{,}  \ottnt{E_{{\mathrm{2}}}}  \ottsym{)}$.

      \case $\ottnt{M_{{\mathrm{2}}}} \,  =  \, \ottnt{v_{{\mathrm{2}}}}$:
       We have the second case in the conclusion since $\ottnt{M} \,  =  \, \ottsym{(}  \ottnt{v_{{\mathrm{1}}}}  \ottsym{,}  \ottnt{v_{{\mathrm{2}}}}  \ottsym{)}$.
     \end{caseanalysis}
   \end{caseanalysis}

  \case \T{Proj1}:
   We are given
   \begin{itemize}
    \item $\ottnt{M} \,  =  \, \pi_1  \ottnt{M'}$ and
    \item $\Delta  \vdash  \ottnt{M'}  \ottsym{:}   \ottnt{A}  \times  \ottnt{B} $
   \end{itemize}
   for some $\ottnt{M'}$ and $\ottnt{B}$.
   By case analysis on the behavior of $\ottnt{M'}$ with the IH.
   \begin{caseanalysis}
    \case $\ottnt{M'}  \longrightarrow  \ottnt{M''}$ for some $\ottnt{M''}$:
     We have $\ottnt{M}  \longrightarrow  \pi_1  \ottnt{M''}$.

    \case $\ottnt{M'} \,  =  \,  \ottnt{E'}  [   \textup{\texttt{\#}\relax}  \mathsf{op}   \ottsym{(}   \ottnt{v}   \ottsym{)}   ] $ for some $\ottnt{E'}$, $\mathsf{op}$, and $\ottnt{v}$
          such that $\mathsf{op} \,  \not\in  \, \ottnt{E'}$:
     We have the third case in the conclusion by letting $\ottnt{E} \,  =  \, \pi_1  \ottnt{E'}$.

    \case $\ottnt{M'} \,  =  \, \ottnt{v'}$ for some $\ottnt{v'}$:
     Since $\Delta  \vdash  \ottnt{M'}  \ottsym{:}   \ottnt{A}  \times  \ottnt{B} $ (i.e., $\Delta  \vdash  \ottnt{v'}  \ottsym{:}   \ottnt{A}  \times  \ottnt{B} $),
     we have $\ottnt{v'} \,  =  \, \ottsym{(}  \ottnt{v_{{\mathrm{1}}}}  \ottsym{,}  \ottnt{v_{{\mathrm{2}}}}  \ottsym{)}$ for some $\ottnt{v_{{\mathrm{1}}}}$ and $\ottnt{v_{{\mathrm{2}}}}$
     by \reflem{canonical-forms}.
     By \R{Proj1}/\E{Eval}, we finish.
   \end{caseanalysis}

  \case \T{Proj2}: Similarly to the case for \T{Proj1}.
  \case \T{InL}, \T{InR}, and \T{Cons}: Similarly to the case for \T{Pair}.
  \case \T{Case}:
   We are given
   \begin{itemize}
    \item $\ottnt{M} \,  =  \, \mathsf{case} \, \ottnt{M'} \, \mathsf{of} \, \mathsf{inl} \, \mathit{x}  \rightarrow  \ottnt{M_{{\mathrm{1}}}}  \ottsym{;} \, \mathsf{inr} \, \mathit{y}  \rightarrow  \ottnt{M_{{\mathrm{2}}}}$ and
    \item $\Delta  \vdash  \ottnt{M'}  \ottsym{:}   \ottnt{B}  +  \ottnt{C} $
   \end{itemize}
   for some $\ottnt{M'}$, $\ottnt{M_{{\mathrm{1}}}}$, $\ottnt{M_{{\mathrm{2}}}}$, $\mathit{x}$, $\mathit{y}$, $\ottnt{B}$, and $\ottnt{C}$.
   By case analysis on the behavior of $\ottnt{M'}$ wit the IH.
   \begin{caseanalysis}
    \case $\ottnt{M'}  \longrightarrow  \ottnt{M''}$ for some $\ottnt{M''}$:
     We have $\ottnt{M}  \longrightarrow  \mathsf{case} \, \ottnt{M''} \, \mathsf{of} \, \mathsf{inl} \, \mathit{x}  \rightarrow  \ottnt{M_{{\mathrm{1}}}}  \ottsym{;} \, \mathsf{inr} \, \mathit{y}  \rightarrow  \ottnt{M_{{\mathrm{2}}}}$.

    \case $\ottnt{M'} \,  =  \,  \ottnt{E'}  [   \textup{\texttt{\#}\relax}  \mathsf{op}   \ottsym{(}   \ottnt{v}   \ottsym{)}   ] $ for some $\ottnt{E'}$, $\mathsf{op}$, and $\ottnt{v}$
          such that $\mathsf{op} \,  \not\in  \, \ottnt{E'}$:
     We have the third case in the conclusion by letting
     $\ottnt{E} \,  =  \, \mathsf{case} \, \ottnt{E'} \, \mathsf{of} \, \mathsf{inl} \, \mathit{x}  \rightarrow  \ottnt{M_{{\mathrm{1}}}}  \ottsym{;} \, \mathsf{inr} \, \mathit{y}  \rightarrow  \ottnt{M_{{\mathrm{2}}}}$.

    \case $\ottnt{M'} \,  =  \, \ottnt{v}$ for some $\ottnt{v}$:
     By \reflem{canonical-forms},
     $\ottnt{v} \,  =  \, \mathsf{inl} \, \ottnt{v'}$ or $\ottnt{v} \,  =  \, \mathsf{inr} \, \ottnt{v'}$ for some $\ottnt{v'}$.
     We finish by \R{CaseL}/\E{Eval} or \R{CaseR}/\E{Eval}.
   \end{caseanalysis}

  \case \T{CaseList}: Similar to the case for \T{Case}.
  \case \T{Fix}: By \R{Fix}/\E{Eval}.
 \end{caseanalysis}
\end{proof}


\begin{lemma}{type-wf}
 \begin{enumerate}
  \item If $\Gamma  \vdash  \ottnt{M}  \ottsym{:}  \ottnt{A}$, then $\Gamma  \vdash  \ottnt{A}$.
  \item If $\Gamma  \vdash  \ottnt{H}  \ottsym{:}  \ottnt{A}  \Rightarrow  \ottnt{B}$, then $\Gamma  \vdash  \ottnt{B}$.
 \end{enumerate}
 
\end{lemma}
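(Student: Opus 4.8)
The plan is to prove both statements simultaneously by mutual structural induction on the derivations of $\Gamma \vdash M : A$ and $\Gamma \vdash H : A \Rightarrow B$, case-analyzing on the last rule applied. Recall that, by definition, $\Gamma \vdash A$ abbreviates the conjunction of $\vdash \Gamma$ and $\mathit{ftv}(A) \subseteq \mathit{dom}(\Gamma)$. I would first dispatch the $\vdash \Gamma$ half once and for all by a companion (routine) induction showing that every derivable typing or handler judgment stands over a well-formed context; each rule either carries $\vdash \Gamma$ as an explicit premise (as in \T{Var} and \T{Const}) or inherits it from a premise, possibly after stripping a leading binding via inversion of \WF{ExtVar}/\WF{ExtTyVar} (or \reflem{strengthening}). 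Granting this, the work reduces in each case to verifying the free-type-variable containment $\mathit{ftv}(A) \subseteq \mathit{dom}(\Gamma)$.

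Two bookkeeping facts drive the containment argument. First, a term-variable binding $x : A$ contributes nothing to the type-variable part of the domain, so $\mathit{ftv}(C) \subseteq \mathit{dom}(\Gamma, x : A)$ is equivalent to $\mathit{ftv}(C) \subseteq \mathit{dom}(\Gamma)$; this lets me transport containment facts obtained for a premise back to $\Gamma$ in every binder-introducing rule (\T{Abs}, \T{Case}, \T{CaseList}, \THrule{Return}). Second, $\mathit{ftv}(B[\vec{C}/\vec{\alpha}]) \subseteq (\mathit{ftv}(B) \setminus \{\vec{\alpha}\}) \cup \bigcup_i \mathit{ftv}(C_i)$. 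With these in hand the cases split cleanly: in the composite introduction/elimination rules the induction hypothesis applied to the appropriate premise yields well-formedness of a composite type from which the desired component is extracted (e.g.\ \T{App} and \T{Proj1}/\T{Proj2} read off a subcomponent of $A \to B$ or $A \times B$, while \T{Pair}, \T{InL}/\T{InR}, \T{Nil}, \T{Cons} reassemble from the hypotheses and side premises); \T{Gen} uses that $\mathit{ftv}(\forall \alpha. A) = \mathit{ftv}(A) \setminus \{\alpha\}$ against $\mathit{dom}(\Gamma, \alpha)$; \T{Inst} has $\Gamma \vdash B$ verbatim as a premise; \T{Const} and \T{Op} invoke the closedness of $\mathit{ty}(c)$ and of the signature $\mathit{ty}(\mathsf{op})$ (\refdef{eff}), so that after substitution the only surviving free type variables come from $\vec{C}$, bounded by $\Gamma \vdash \vec{C}$. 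The crucial cross-over is \T{Handle}, which obtains $\Gamma \vdash B$ by invoking part~(2) of the induction hypothesis on its handler premise; symmetrically, \THrule{Return} and \THrule{Op} discharge part~(2) using part~(1) on $\Gamma, x:A \vdash M : B$ and the handler sub-derivation, respectively.

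The main obstacle is not any single deep step but the handful of cases where the result type is not recovered from a structurally smaller typing premise but must instead be read off the \emph{context}: \T{Var}, where $\mathit{ftv}(A) \subseteq \mathit{dom}(\Gamma)$ holds only because $\vdash \Gamma$ forces every type recorded in $\Gamma$ to satisfy the containment (established when the binding was introduced by \WF{ExtVar}), and \T{Fix}, where $A \to B$ never appears as the type of a sub-derivation but only inside the premise context $\Gamma, f : A \to B, x : A$, so its well-formedness must again be extracted from well-formedness of that extended context. Making these cases airtight is exactly what forces the companion context-well-formedness fact and the inversion of \WF{ExtVar} to be available, and it is where care with the distinction between term- and type-variable domains matters most; everything else is mechanical.
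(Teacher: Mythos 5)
Your proof is correct and follows essentially the same route as the paper, which simply performs the same mutual induction on the typing derivations and notes that the \T{Op} case rests on the closedness of operation signatures (the paper discharges that case via \reflem{ty-subst}, whereas you bound $\mathit{ftv}(B[\vec{C}/\vec{\alpha}])$ directly, an inessential difference). Your explicit treatment of \T{Var} and \T{Fix} via inversion of the context well-formedness is exactly the bookkeeping the paper leaves implicit under ``straightforward.''
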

\begin{proof}
 Straightforward by mutual induction on the typing derivations.  The case for
 \T{Op} depends on \reflem{ty-subst} and \refdef{eff}, which states that,
 for $\mathsf{op}$ such that $\mathit{ty} \, \ottsym{(}  \mathsf{op}  \ottsym{)} \,  =  \,   \text{\unboldmath$\forall$}  \,  \algeffseqover{ \alpha }   \ottsym{.} \,  \ottnt{A}  \hookrightarrow  \ottnt{B} $, $ \mathit{ftv}  (  \ottnt{B}  )  \,  \subseteq  \, \ottsym{\{}   \algeffseqover{ \alpha }   \ottsym{\}}$.
\end{proof}

\begin{lemmap}{Value inversion: lambda abstractions}{val-inv-abs}
 If $\Gamma  \vdash   \lambda\!  \, \mathit{x}  \ottsym{.}  \ottnt{M}  \ottsym{:}  \ottnt{A}$,
 then $\Gamma  \ottsym{,}   \algeffseqover{ \alpha }   \ottsym{,}  \mathit{x} \,  \mathord{:}  \, \ottnt{B}  \vdash  \ottnt{M}  \ottsym{:}  \ottnt{C}$ and $\Gamma  \vdash   \text{\unboldmath$\forall$}  \,  \algeffseqover{ \alpha }   \ottsym{.} \, \ottnt{B}  \rightarrow  \ottnt{C}  \sqsubseteq  \ottnt{A}$
 for some $ \algeffseqover{ \alpha } $, $\ottnt{B}$, and $\ottnt{C}$.
\end{lemmap}
\begin{proof}
 By induction on the typing derivation for $ \lambda\!  \, \mathit{x}  \ottsym{.}  \ottnt{M}$.
 There are only three typing rules that can be applied to $ \lambda\!  \, \mathit{x}  \ottsym{.}  \ottnt{M}$.
 \begin{caseanalysis}
  \case \T{Abs}:  We have $\ottnt{A} \,  =  \, \ottnt{B}  \rightarrow  \ottnt{C}$ and let $ \algeffseqover{ \alpha } $ be the empty sequence.
   We have the conclusion by inversion and \Srule{Refl}.

  \case \T{Gen}: We are given
   $\Gamma  \vdash   \lambda\!  \, \mathit{x}  \ottsym{.}  \ottnt{M}  \ottsym{:}   \text{\unboldmath$\forall$}  \, \beta  \ottsym{.} \, \ottnt{D}$ (i.e., $\ottnt{A} \,  =  \,  \text{\unboldmath$\forall$}  \, \beta  \ottsym{.} \, \ottnt{D}$) and, by inversion,
   $\Gamma  \ottsym{,}  \beta  \vdash   \lambda\!  \, \mathit{x}  \ottsym{.}  \ottnt{M}  \ottsym{:}  \ottnt{D}$.
   By the IH, $\Gamma  \ottsym{,}  \beta  \ottsym{,}   \algeffseqoverindex{ \gamma }{ \text{\unboldmath$\mathit{I}$} }   \ottsym{,}  \mathit{x} \,  \mathord{:}  \, \ottnt{B}  \vdash  \ottnt{M}  \ottsym{:}  \ottnt{C}$ and
   $\Gamma  \ottsym{,}  \beta  \vdash   \text{\unboldmath$\forall$}  \,  \algeffseqoverindex{ \gamma }{ \text{\unboldmath$\mathit{I}$} }   \ottsym{.} \, \ottnt{B}  \rightarrow  \ottnt{C}  \sqsubseteq  \ottnt{D}$
   for some $ \algeffseqoverindex{ \gamma }{ \text{\unboldmath$\mathit{I}$} } $, $\ottnt{B}$, and $\ottnt{C}$.
   We show the conclusion by letting $ \algeffseqover{ \alpha }  \,  =  \, \beta  \ottsym{,}   \algeffseqoverindex{ \gamma }{ \text{\unboldmath$\mathit{I}$} } $.
   It suffices to show that $\Gamma  \vdash   \text{\unboldmath$\forall$}  \, \beta  \ottsym{.} \,  \text{\unboldmath$\forall$}  \,  \algeffseqoverindex{ \gamma }{ \text{\unboldmath$\mathit{I}$} }   \ottsym{.} \, \ottnt{B}  \rightarrow  \ottnt{C}  \sqsubseteq   \text{\unboldmath$\forall$}  \, \beta  \ottsym{.} \, \ottnt{D}$,
   which is derived from $\Gamma  \ottsym{,}  \beta  \vdash   \text{\unboldmath$\forall$}  \,  \algeffseqoverindex{ \gamma }{ \text{\unboldmath$\mathit{I}$} }   \ottsym{.} \, \ottnt{B}  \rightarrow  \ottnt{C}  \sqsubseteq  \ottnt{D}$ with \Srule{Poly}.

  \case \T{Inst}: By the IH and \Srule{Trans}.
 \end{caseanalysis}
\end{proof}

\begin{lemmap}{Value inversion: pairs}{val-inv-pair}
 If $\Gamma  \vdash  \ottsym{(}  \ottnt{M_{{\mathrm{1}}}}  \ottsym{,}  \ottnt{M_{{\mathrm{2}}}}  \ottsym{)}  \ottsym{:}  \ottnt{A}$,
 then $\Gamma  \ottsym{,}   \algeffseqover{ \alpha }   \vdash  \ottnt{M_{{\mathrm{1}}}}  \ottsym{:}  \ottnt{B_{{\mathrm{1}}}}$ and $\Gamma  \ottsym{,}   \algeffseqover{ \alpha }   \vdash  \ottnt{M_{{\mathrm{2}}}}  \ottsym{:}  \ottnt{B_{{\mathrm{2}}}}$ and
 $\Gamma  \vdash    \text{\unboldmath$\forall$}  \,  \algeffseqover{ \alpha }   \ottsym{.} \, \ottnt{B_{{\mathrm{1}}}}  \times  \ottnt{B_{{\mathrm{2}}}}   \sqsubseteq  \ottnt{A}$
 for some $ \algeffseqover{ \alpha } $, $\ottnt{B_{{\mathrm{1}}}}$, and $\ottnt{B_{{\mathrm{2}}}}$.
\end{lemmap}
\begin{proof}
 By induction on the typing derivation for $\ottsym{(}  \ottnt{M_{{\mathrm{1}}}}  \ottsym{,}  \ottnt{M_{{\mathrm{2}}}}  \ottsym{)}$.
 There are only three typing rules that can be applied to $\ottsym{(}  \ottnt{M_{{\mathrm{1}}}}  \ottsym{,}  \ottnt{M_{{\mathrm{2}}}}  \ottsym{)}$.
 \begin{caseanalysis}
  \case \T{Pair}:  Obvious by \Srule{Refl}.

  \case \T{Gen}: We are given
   $\Gamma  \vdash  \ottsym{(}  \ottnt{M_{{\mathrm{1}}}}  \ottsym{,}  \ottnt{M_{{\mathrm{2}}}}  \ottsym{)}  \ottsym{:}   \text{\unboldmath$\forall$}  \, \beta  \ottsym{.} \, \ottnt{C}$ (i.e., $\ottnt{A} \,  =  \,  \text{\unboldmath$\forall$}  \, \beta  \ottsym{.} \, \ottnt{C}$) and, by inversion,
   $\Gamma  \ottsym{,}  \beta  \vdash  \ottsym{(}  \ottnt{M_{{\mathrm{1}}}}  \ottsym{,}  \ottnt{M_{{\mathrm{2}}}}  \ottsym{)}  \ottsym{:}  \ottnt{C}$.
   By the IH,
   $\Gamma  \ottsym{,}  \beta  \ottsym{,}   \algeffseqoverindex{ \gamma }{ \text{\unboldmath$\mathit{I}$} }   \vdash  \ottnt{M_{{\mathrm{1}}}}  \ottsym{:}  \ottnt{B_{{\mathrm{1}}}}$ and
   $\Gamma  \ottsym{,}  \beta  \ottsym{,}   \algeffseqoverindex{ \gamma }{ \text{\unboldmath$\mathit{I}$} }   \vdash  \ottnt{M_{{\mathrm{2}}}}  \ottsym{:}  \ottnt{B_{{\mathrm{2}}}}$
   $\Gamma  \ottsym{,}  \beta  \vdash    \text{\unboldmath$\forall$}  \,  \algeffseqoverindex{ \gamma }{ \text{\unboldmath$\mathit{I}$} }   \ottsym{.} \, \ottnt{B_{{\mathrm{1}}}}  \times  \ottnt{B_{{\mathrm{2}}}}   \sqsubseteq  \ottnt{C}$
   for some $ \algeffseqoverindex{ \gamma }{ \text{\unboldmath$\mathit{I}$} } $, $\ottnt{B_{{\mathrm{1}}}}$, and $\ottnt{B_{{\mathrm{2}}}}$.
   We show the conclusion by letting $ \algeffseqover{ \alpha }  \,  =  \, \beta  \ottsym{,}   \algeffseqoverindex{ \gamma }{ \text{\unboldmath$\mathit{I}$} } $.
   It suffices to show that $\Gamma  \vdash    \text{\unboldmath$\forall$}  \, \beta  \ottsym{.} \,  \text{\unboldmath$\forall$}  \,  \algeffseqoverindex{ \gamma }{ \text{\unboldmath$\mathit{I}$} }   \ottsym{.} \, \ottnt{B_{{\mathrm{1}}}}  \times  \ottnt{B_{{\mathrm{2}}}}   \sqsubseteq   \text{\unboldmath$\forall$}  \, \beta  \ottsym{.} \, \ottnt{C}$,
   which is derived from $\Gamma  \ottsym{,}  \beta  \vdash    \text{\unboldmath$\forall$}  \,  \algeffseqoverindex{ \gamma }{ \text{\unboldmath$\mathit{I}$} }   \ottsym{.} \, \ottnt{B_{{\mathrm{1}}}}  \times  \ottnt{B_{{\mathrm{2}}}}   \sqsubseteq  \ottnt{C}$ with \Srule{Poly}.

  \case \T{Inst}: By the IH and \Srule{Trans}.
 \end{caseanalysis}
\end{proof}

\begin{lemmap}{Value inversion: left injections}{val-inv-inl}
 If $\Gamma  \vdash  \mathsf{inl} \, \ottnt{M}  \ottsym{:}  \ottnt{A}$, then
 $\Gamma  \ottsym{,}   \algeffseqover{ \alpha }   \vdash  \ottnt{M}  \ottsym{:}  \ottnt{B}$ and
 $\Gamma  \vdash    \text{\unboldmath$\forall$}  \,  \algeffseqover{ \alpha }   \ottsym{.} \, \ottnt{B}  +  \ottnt{C}   \sqsubseteq  \ottnt{A}$
 for some $ \algeffseqover{ \alpha } $, $\ottnt{B}$, and $\ottnt{C}$.
\end{lemmap}
\begin{proof}
 By induction on the typing derivation for $\mathsf{inl} \, \ottnt{M}$.
 There are only three typing rules that can be applied to $\mathsf{inl} \, \ottnt{M}$.
 \begin{caseanalysis}
  \case \T{InL}:  Obvious by \Srule{Refl}.

  \case \T{Gen}: We are given
   $\Gamma  \vdash  \mathsf{inl} \, \ottnt{M}  \ottsym{:}   \text{\unboldmath$\forall$}  \, \beta  \ottsym{.} \, \ottnt{D}$ (i.e., $\ottnt{A} \,  =  \,  \text{\unboldmath$\forall$}  \, \beta  \ottsym{.} \, \ottnt{D}$) and, by inversion,
   $\Gamma  \ottsym{,}  \beta  \vdash  \mathsf{inl} \, \ottnt{M}  \ottsym{:}  \ottnt{D}$.
   By the IH,
   $\Gamma  \ottsym{,}  \beta  \ottsym{,}   \algeffseqoverindex{ \gamma }{ \text{\unboldmath$\mathit{I}$} }   \vdash  \ottnt{M}  \ottsym{:}  \ottnt{B}$ and
   $\Gamma  \ottsym{,}  \beta  \vdash    \text{\unboldmath$\forall$}  \,  \algeffseqoverindex{ \gamma }{ \text{\unboldmath$\mathit{I}$} }   \ottsym{.} \, \ottnt{B}  +  \ottnt{C}   \sqsubseteq  \ottnt{D}$
   for some $ \algeffseqoverindex{ \gamma }{ \text{\unboldmath$\mathit{I}$} } $, $\ottnt{B}$, and $\ottnt{C}$.
   We show the conclusion by letting $ \algeffseqover{ \alpha }  \,  =  \, \beta  \ottsym{,}   \algeffseqoverindex{ \gamma }{ \text{\unboldmath$\mathit{I}$} } $.
   It suffices to show that $\Gamma  \vdash    \text{\unboldmath$\forall$}  \, \beta  \ottsym{.} \,  \text{\unboldmath$\forall$}  \,  \algeffseqoverindex{ \gamma }{ \text{\unboldmath$\mathit{I}$} }   \ottsym{.} \, \ottnt{B}  +  \ottnt{C}   \sqsubseteq   \text{\unboldmath$\forall$}  \, \beta  \ottsym{.} \, \ottnt{D}$,
   which is derived from $\Gamma  \ottsym{,}  \beta  \vdash    \text{\unboldmath$\forall$}  \,  \algeffseqoverindex{ \gamma }{ \text{\unboldmath$\mathit{I}$} }   \ottsym{.} \, \ottnt{B}  +  \ottnt{C}   \sqsubseteq  \ottnt{D}$ with \Srule{Poly}.

  \case \T{Inst}: By the IH and \Srule{Trans}.
 \end{caseanalysis}
\end{proof}

\begin{lemmap}{Value inversion: right injections}{val-inv-inr}
 If $\Gamma  \vdash  \mathsf{inr} \, \ottnt{M}  \ottsym{:}  \ottnt{A}$, then
 $\Gamma  \ottsym{,}   \algeffseqover{ \alpha }   \vdash  \ottnt{M}  \ottsym{:}  \ottnt{C}$ and
 $\Gamma  \vdash    \text{\unboldmath$\forall$}  \,  \algeffseqover{ \alpha }   \ottsym{.} \, \ottnt{B}  +  \ottnt{C}   \sqsubseteq  \ottnt{A}$
 for some $ \algeffseqover{ \alpha } $, $\ottnt{B}$, and $\ottnt{C}$.
\end{lemmap}
\begin{proof}
 Similarly to the proof of \reflem{val-inv-inl}.
\end{proof}

\begin{lemmap}{Value inversion: cons}{val-inv-cons}
 If $\Gamma  \vdash  \mathsf{cons} \, \ottnt{M}  \ottsym{:}  \ottnt{A}$, then
 $\Gamma  \ottsym{,}   \algeffseqover{ \alpha }   \vdash  \ottnt{M}  \ottsym{:}    \ottnt{B}  \times  \ottnt{B}   \, \mathsf{list} $ and
 $\Gamma  \vdash    \text{\unboldmath$\forall$}  \,  \algeffseqover{ \alpha }   \ottsym{.} \, \ottnt{B}  \, \mathsf{list}   \sqsubseteq  \ottnt{A}$
 for some $ \algeffseqover{ \alpha } $ and $\ottnt{B}$.
\end{lemmap}
\begin{proof}
 By induction on the typing derivations for $\mathsf{cons} \, \ottnt{M}$.
 There are only three typing rules that can be applied to $\mathsf{cons} \, \ottnt{M}$.
 \begin{caseanalysis}
  \case \T{Cons}:  Obvious by \Srule{Refl}.

  \case \T{Gen}: We are given
   $\Gamma  \vdash  \mathsf{cons} \, \ottnt{M}  \ottsym{:}   \text{\unboldmath$\forall$}  \, \beta  \ottsym{.} \, \ottnt{C}$ (i.e., $\ottnt{A} \,  =  \,  \text{\unboldmath$\forall$}  \, \beta  \ottsym{.} \, \ottnt{C}$) and, by inversion,
   $\Gamma  \ottsym{,}  \beta  \vdash  \mathsf{cons} \, \ottnt{M}  \ottsym{:}  \ottnt{C}$.
   By the IH,
   $\Gamma  \ottsym{,}  \beta  \ottsym{,}   \algeffseqoverindex{ \gamma }{ \text{\unboldmath$\mathit{I}$} }   \vdash  \ottnt{M}  \ottsym{:}    \ottnt{B}  \times  \ottnt{B}   \, \mathsf{list} $ and
   $\Gamma  \ottsym{,}  \beta  \vdash    \text{\unboldmath$\forall$}  \,  \algeffseqoverindex{ \gamma }{ \text{\unboldmath$\mathit{I}$} }   \ottsym{.} \, \ottnt{B}  \, \mathsf{list}   \sqsubseteq  \ottnt{C}$
   for some $ \algeffseqoverindex{ \gamma }{ \text{\unboldmath$\mathit{I}$} } $ and $\ottnt{B}$.
   We show the conclusion by letting $ \algeffseqover{ \alpha }  \,  =  \, \beta  \ottsym{,}   \algeffseqoverindex{ \gamma }{ \text{\unboldmath$\mathit{I}$} } $.
   It suffices to show that $\Gamma  \vdash    \text{\unboldmath$\forall$}  \, \beta  \ottsym{.} \,  \text{\unboldmath$\forall$}  \,  \algeffseqoverindex{ \gamma }{ \text{\unboldmath$\mathit{I}$} }   \ottsym{.} \, \ottnt{B}  \, \mathsf{list}   \sqsubseteq   \text{\unboldmath$\forall$}  \, \beta  \ottsym{.} \, \ottnt{C}$,
   which is derived from $\Gamma  \ottsym{,}  \beta  \vdash    \text{\unboldmath$\forall$}  \,  \algeffseqoverindex{ \gamma }{ \text{\unboldmath$\mathit{I}$} }   \ottsym{.} \, \ottnt{B}  \, \mathsf{list}   \sqsubseteq  \ottnt{C}$ with \Srule{Poly}.

  \case \T{Inst}: By the IH and \Srule{Trans}.
 \end{caseanalysis}
\end{proof}

\begin{lemma}{term-inv-op}
 If $\mathit{ty} \, \ottsym{(}  \mathsf{op}  \ottsym{)} \,  =  \,   \text{\unboldmath$\forall$}  \,  \algeffseqoverindex{ \alpha }{ \text{\unboldmath$\mathit{I}$} }   \ottsym{.} \,  \ottnt{A}  \hookrightarrow  \ottnt{B} $ and $\Gamma  \vdash   \textup{\texttt{\#}\relax}  \mathsf{op}   \ottsym{(}   \ottnt{v}   \ottsym{)}   \ottsym{:}  \ottnt{C}$, then
 \begin{itemize}
  \item $\Gamma  \ottsym{,}   \algeffseqoverindex{ \beta }{ \text{\unboldmath$\mathit{J}$} }   \vdash   \algeffseqoverindex{ \ottnt{D} }{ \text{\unboldmath$\mathit{I}$} } $,
  \item $\Gamma  \ottsym{,}   \algeffseqoverindex{ \beta }{ \text{\unboldmath$\mathit{J}$} }   \vdash  \ottnt{v}  \ottsym{:}   \ottnt{A}    [   \algeffseqoverindex{ \ottnt{D} }{ \text{\unboldmath$\mathit{I}$} }   \ottsym{/}   \algeffseqoverindex{ \alpha }{ \text{\unboldmath$\mathit{I}$} }   ]  $, and
  \item $\Gamma  \vdash    \text{\unboldmath$\forall$}  \,  \algeffseqoverindex{ \beta }{ \text{\unboldmath$\mathit{J}$} }   \ottsym{.} \, \ottnt{B}    [   \algeffseqoverindex{ \ottnt{D} }{ \text{\unboldmath$\mathit{I}$} }   \ottsym{/}   \algeffseqoverindex{ \alpha }{ \text{\unboldmath$\mathit{I}$} }   ]    \sqsubseteq  \ottnt{C}$
 \end{itemize}
 for some $ \algeffseqoverindex{ \beta }{ \text{\unboldmath$\mathit{J}$} } $ and $ \algeffseqoverindex{ \ottnt{D} }{ \text{\unboldmath$\mathit{I}$} } $.
\end{lemma}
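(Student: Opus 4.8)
The plan is to prove this inversion statement by induction on the derivation of $\Gamma  \vdash   \textup{\texttt{\#}\relax}  \mathsf{op}   \ottsym{(}   \ottnt{v}   \ottsym{)}   \ottsym{:}  \ottnt{C}$, exactly following the pattern of the value inversion lemmas above (e.g.\ \reflem{val-inv-inl} and \reflem{val-inv-cons}). Because the subject term has the fixed syntactic shape $ \textup{\texttt{\#}\relax}  \mathsf{op}   \ottsym{(}   \ottnt{v}   \ottsym{)} $, only three typing rules can have concluded the judgment: the syntax-directed rule \T{Op}, and the two structural rules \T{Gen} and \T{Inst} that apply to arbitrary terms. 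I would case-split on the rule applied last, carrying the sequences $ \algeffseqoverindex{ \beta }{ \text{\unboldmath$\mathit{J}$} } $ and $ \algeffseqoverindex{ \ottnt{D} }{ \text{\unboldmath$\mathit{I}$} } $ as witnesses through the induction. (As usual, $\vdash  \Gamma$ is derivable from the typing derivation, which I would use freely to apply \Srule{Refl}.)

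In the base case \T{Op}, inversion gives instantiating types, call them $ \algeffseqoverindex{ \ottnt{D} }{ \text{\unboldmath$\mathit{I}$} } $, such that $\ottnt{C} \,  =  \,  \ottnt{B}    [   \algeffseqoverindex{ \ottnt{D} }{ \text{\unboldmath$\mathit{I}$} }   \ottsym{/}   \algeffseqoverindex{ \alpha }{ \text{\unboldmath$\mathit{I}$} }   ] $, with $\Gamma  \vdash   \algeffseqoverindex{ \ottnt{D} }{ \text{\unboldmath$\mathit{I}$} } $ and $\Gamma  \vdash  \ottnt{v}  \ottsym{:}   \ottnt{A}    [   \algeffseqoverindex{ \ottnt{D} }{ \text{\unboldmath$\mathit{I}$} }   \ottsym{/}   \algeffseqoverindex{ \alpha }{ \text{\unboldmath$\mathit{I}$} }   ] $. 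I would take $ \algeffseqoverindex{ \beta }{ \text{\unboldmath$\mathit{J}$} } $ to be the empty sequence and these same $ \algeffseqoverindex{ \ottnt{D} }{ \text{\unboldmath$\mathit{I}$} } $; the first two conclusions are then immediate, and the third reduces to $\Gamma  \vdash   \ottnt{B}    [   \algeffseqoverindex{ \ottnt{D} }{ \text{\unboldmath$\mathit{I}$} }   \ottsym{/}   \algeffseqoverindex{ \alpha }{ \text{\unboldmath$\mathit{I}$} }   ]    \sqsubseteq   \ottnt{B}    [   \algeffseqoverindex{ \ottnt{D} }{ \text{\unboldmath$\mathit{I}$} }   \ottsym{/}   \algeffseqoverindex{ \alpha }{ \text{\unboldmath$\mathit{I}$} }   ] $, which holds by \Srule{Refl}. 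In the case \T{Inst}, inversion gives $\Gamma  \vdash   \textup{\texttt{\#}\relax}  \mathsf{op}   \ottsym{(}   \ottnt{v}   \ottsym{)}   \ottsym{:}  \ottnt{C'}$ and $\Gamma  \vdash  \ottnt{C'}  \sqsubseteq  \ottnt{C}$ for some $\ottnt{C'}$; the IH supplies $ \algeffseqoverindex{ \beta }{ \text{\unboldmath$\mathit{J}$} } $ and $ \algeffseqoverindex{ \ottnt{D} }{ \text{\unboldmath$\mathit{I}$} } $ with $\Gamma  \vdash    \text{\unboldmath$\forall$}  \,  \algeffseqoverindex{ \beta }{ \text{\unboldmath$\mathit{J}$} }   \ottsym{.} \, \ottnt{B}    [   \algeffseqoverindex{ \ottnt{D} }{ \text{\unboldmath$\mathit{I}$} }   \ottsym{/}   \algeffseqoverindex{ \alpha }{ \text{\unboldmath$\mathit{I}$} }   ]    \sqsubseteq  \ottnt{C'}$, and I keep the same witnesses, composing the two containments by \Srule{Trans} to close the case.

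The case \T{Gen} is the only one requiring care. Here the result type is $ \text{\unboldmath$\forall$}  \, \gamma  \ottsym{.} \, \ottnt{C'}$ and inversion gives $\Gamma  \ottsym{,}  \gamma  \vdash   \textup{\texttt{\#}\relax}  \mathsf{op}   \ottsym{(}   \ottnt{v}   \ottsym{)}   \ottsym{:}  \ottnt{C'}$. The IH yields a sequence $ \algeffseqoverindex{ \beta_{{\mathrm{0}}} }{ \text{\unboldmath$\mathit{J_{{\mathrm{0}}}}$} } $ and types $ \algeffseqoverindex{ \ottnt{D} }{ \text{\unboldmath$\mathit{I}$} } $ satisfying the three properties under the extended context $\Gamma  \ottsym{,}  \gamma$. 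I would set $ \algeffseqoverindex{ \beta }{ \text{\unboldmath$\mathit{J}$} }  \,  =  \, \gamma  \ottsym{,}   \algeffseqoverindex{ \beta_{{\mathrm{0}}} }{ \text{\unboldmath$\mathit{J_{{\mathrm{0}}}}$} } $ and keep the same $ \algeffseqoverindex{ \ottnt{D} }{ \text{\unboldmath$\mathit{I}$} } $: the well-formedness and typing conclusions already hold in $\Gamma  \ottsym{,}  \gamma  \ottsym{,}   \algeffseqoverindex{ \beta_{{\mathrm{0}}} }{ \text{\unboldmath$\mathit{J_{{\mathrm{0}}}}$} } $, which is precisely $\Gamma  \ottsym{,}   \algeffseqoverindex{ \beta }{ \text{\unboldmath$\mathit{J}$} } $, and the containment conclusion $\Gamma  \vdash    \text{\unboldmath$\forall$}  \, \gamma  \ottsym{.} \,  \text{\unboldmath$\forall$}  \,  \algeffseqoverindex{ \beta_{{\mathrm{0}}} }{ \text{\unboldmath$\mathit{J_{{\mathrm{0}}}}$} }   \ottsym{.} \, \ottnt{B}    [   \algeffseqoverindex{ \ottnt{D} }{ \text{\unboldmath$\mathit{I}$} }   \ottsym{/}   \algeffseqoverindex{ \alpha }{ \text{\unboldmath$\mathit{I}$} }   ]    \sqsubseteq   \text{\unboldmath$\forall$}  \, \gamma  \ottsym{.} \, \ottnt{C'}$ follows from the IH judgment $\Gamma  \ottsym{,}  \gamma  \vdash    \text{\unboldmath$\forall$}  \,  \algeffseqoverindex{ \beta_{{\mathrm{0}}} }{ \text{\unboldmath$\mathit{J_{{\mathrm{0}}}}$} }   \ottsym{.} \, \ottnt{B}    [   \algeffseqoverindex{ \ottnt{D} }{ \text{\unboldmath$\mathit{I}$} }   \ottsym{/}   \algeffseqoverindex{ \alpha }{ \text{\unboldmath$\mathit{I}$} }   ]    \sqsubseteq  \ottnt{C'}$ by \Srule{Poly}.

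I do not expect a genuine obstacle: the argument is routine and structurally identical to the existing value inversion lemmas. The one place where bookkeeping could go wrong is the \T{Gen} case, where the freshly generalized variable $\gamma$ must be \emph{prepended} to the witness sequence $ \algeffseqoverindex{ \beta }{ \text{\unboldmath$\mathit{J}$} } $ and the quantifier introduced by \Srule{Poly} must nest correctly on top of the already $\forall$-prefixed left-hand type coming from the IH; verifying that $ \text{\unboldmath$\forall$}  \,  \algeffseqoverindex{ \beta }{ \text{\unboldmath$\mathit{J}$} }   \ottsym{.} \, \ottnt{B}    [   \algeffseqoverindex{ \ottnt{D} }{ \text{\unboldmath$\mathit{I}$} }   \ottsym{/}   \algeffseqoverindex{ \alpha }{ \text{\unboldmath$\mathit{I}$} }   ] $ really coincides with $ \text{\unboldmath$\forall$}  \, \gamma  \ottsym{.} \,  \text{\unboldmath$\forall$}  \,  \algeffseqoverindex{ \beta_{{\mathrm{0}}} }{ \text{\unboldmath$\mathit{J_{{\mathrm{0}}}}$} }   \ottsym{.} \, \ottnt{B}    [   \algeffseqoverindex{ \ottnt{D} }{ \text{\unboldmath$\mathit{I}$} }   \ottsym{/}   \algeffseqoverindex{ \alpha }{ \text{\unboldmath$\mathit{I}$} }   ] $ under the conventions of \refsec{polytype} is the only detail that needs attention.
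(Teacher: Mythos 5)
Your proposal matches the paper's own proof essentially step for step: the same induction on the typing derivation, the same three cases (\T{Op}, \T{Gen}, \T{Inst}), the same choice of witnesses (empty $ \algeffseqoverindex{ \beta }{ \text{\unboldmath$\mathit{J}$} } $ with \Srule{Refl} in the base case, prepending the generalized variable and closing with \Srule{Poly} in the \T{Gen} case, and \Srule{Trans} for \T{Inst}). The bookkeeping detail you flag about nesting the quantifiers in the \T{Gen} case is handled exactly as you describe, so there is nothing to add.
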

\begin{proof}
 By induction on the typing derivation for $ \textup{\texttt{\#}\relax}  \mathsf{op}   \ottsym{(}   \ottnt{v}   \ottsym{)} $.
 There are only three typing rules that can be applied to $ \textup{\texttt{\#}\relax}  \mathsf{op}   \ottsym{(}   \ottnt{v}   \ottsym{)} $.
 \begin{caseanalysis}
  \case \T{Op}:
   We have $\ottnt{C} \,  =  \,  \ottnt{B}    [   \algeffseqoverindex{ \ottnt{D} }{ \text{\unboldmath$\mathit{I}$} }   \ottsym{/}   \algeffseqoverindex{ \alpha }{ \text{\unboldmath$\mathit{I}$} }   ]  $ and
   $\Gamma  \vdash   \algeffseqoverindex{ \ottnt{D} }{ \text{\unboldmath$\mathit{I}$} } $ and
   $\Gamma  \vdash  \ottnt{v}  \ottsym{:}   \ottnt{A}    [   \algeffseqoverindex{ \ottnt{D} }{ \text{\unboldmath$\mathit{I}$} }   \ottsym{/}   \algeffseqoverindex{ \alpha }{ \text{\unboldmath$\mathit{I}$} }   ]  $
   for some $ \algeffseqoverindex{ \ottnt{D} }{ \text{\unboldmath$\mathit{I}$} } $.
   We have the conclusion by letting $ \algeffseqoverindex{ \beta }{ \text{\unboldmath$\mathit{J}$} } $ be the empty sequence;
   note that $\Gamma  \vdash   \ottnt{B}    [   \algeffseqoverindex{ \ottnt{D} }{ \text{\unboldmath$\mathit{I}$} }   \ottsym{/}   \algeffseqoverindex{ \alpha }{ \text{\unboldmath$\mathit{I}$} }   ]    \sqsubseteq   \ottnt{B}    [   \algeffseqoverindex{ \ottnt{D} }{ \text{\unboldmath$\mathit{I}$} }   \ottsym{/}   \algeffseqoverindex{ \alpha }{ \text{\unboldmath$\mathit{I}$} }   ]  $
   by \Srule{Refl}.

  \case \T{Gen}:
   We are given $\ottnt{C} \,  =  \,  \text{\unboldmath$\forall$}  \, \beta  \ottsym{.} \, \ottnt{C_{{\mathrm{0}}}}$ and, by inversion,
   $\Gamma  \ottsym{,}  \beta  \vdash   \textup{\texttt{\#}\relax}  \mathsf{op}   \ottsym{(}   \ottnt{v}   \ottsym{)}   \ottsym{:}  \ottnt{C_{{\mathrm{0}}}}$ for some $\beta$ and $\ottnt{C_{{\mathrm{0}}}}$.
   By the IH, there exist some $ \algeffseqoverindex{ \beta_{{\mathrm{0}}} }{ \text{\unboldmath$\mathit{J_{{\mathrm{0}}}}$} } $ and $ \algeffseqoverindex{ \ottnt{D} }{ \text{\unboldmath$\mathit{I}$} } $
   such that
   \begin{itemize}
    \item $\Gamma  \ottsym{,}  \beta  \ottsym{,}   \algeffseqoverindex{ \beta_{{\mathrm{0}}} }{ \text{\unboldmath$\mathit{J_{{\mathrm{0}}}}$} }   \vdash   \algeffseqoverindex{ \ottnt{D} }{ \text{\unboldmath$\mathit{I}$} } $,
    \item $\Gamma  \ottsym{,}  \beta  \ottsym{,}   \algeffseqoverindex{ \beta_{{\mathrm{0}}} }{ \text{\unboldmath$\mathit{J_{{\mathrm{0}}}}$} }   \vdash  \ottnt{v}  \ottsym{:}   \ottnt{A}    [   \algeffseqoverindex{ \ottnt{D} }{ \text{\unboldmath$\mathit{I}$} }   \ottsym{/}   \algeffseqoverindex{ \alpha }{ \text{\unboldmath$\mathit{I}$} }   ]  $ and
    \item $\Gamma  \ottsym{,}  \beta  \vdash    \text{\unboldmath$\forall$}  \,  \algeffseqoverindex{ \beta_{{\mathrm{0}}} }{ \text{\unboldmath$\mathit{J_{{\mathrm{0}}}}$} }   \ottsym{.} \, \ottnt{B}    [   \algeffseqoverindex{ \ottnt{D} }{ \text{\unboldmath$\mathit{I}$} }   \ottsym{/}   \algeffseqoverindex{ \alpha }{ \text{\unboldmath$\mathit{I}$} }   ]    \sqsubseteq  \ottnt{C_{{\mathrm{0}}}}$.
   \end{itemize}
   We show the conclusion by letting $ \algeffseqoverindex{ \beta }{ \text{\unboldmath$\mathit{J}$} }  \,  =  \, \beta  \ottsym{,}   \algeffseqoverindex{ \beta_{{\mathrm{0}}} }{ \text{\unboldmath$\mathit{J_{{\mathrm{0}}}}$} } $.
   It suffices to show $\Gamma  \vdash    \text{\unboldmath$\forall$}  \, \beta  \ottsym{.} \,  \text{\unboldmath$\forall$}  \,  \algeffseqoverindex{ \beta_{{\mathrm{0}}} }{ \text{\unboldmath$\mathit{J_{{\mathrm{0}}}}$} }   \ottsym{.} \, \ottnt{B}    [   \algeffseqoverindex{ \ottnt{D} }{ \text{\unboldmath$\mathit{I}$} }   \ottsym{/}   \algeffseqoverindex{ \alpha }{ \text{\unboldmath$\mathit{I}$} }   ]    \sqsubseteq   \text{\unboldmath$\forall$}  \, \beta  \ottsym{.} \, \ottnt{C_{{\mathrm{0}}}}$,
   which is proven from $\Gamma  \ottsym{,}  \beta  \vdash    \text{\unboldmath$\forall$}  \,  \algeffseqoverindex{ \beta_{{\mathrm{0}}} }{ \text{\unboldmath$\mathit{J_{{\mathrm{0}}}}$} }   \ottsym{.} \, \ottnt{B}    [   \algeffseqoverindex{ \ottnt{D} }{ \text{\unboldmath$\mathit{I}$} }   \ottsym{/}   \algeffseqoverindex{ \alpha }{ \text{\unboldmath$\mathit{I}$} }   ]    \sqsubseteq  \ottnt{C_{{\mathrm{0}}}}$ with \Srule{Poly}.

  \case \T{Inst}: By the IH and \Srule{Trans}.
 \end{caseanalysis}
\end{proof}

\begin{lemma}{ectx-typing}
 If $\Gamma  \ottsym{,}   \algeffseqoverindex{ \alpha }{ \text{\unboldmath$\mathit{I}$} }   \vdash   \ottnt{E}  [  \ottnt{M}  ]   \ottsym{:}  \ottnt{A}$, then
 \begin{itemize}
  \item $\Gamma  \ottsym{,}   \algeffseqoverindex{ \alpha }{ \text{\unboldmath$\mathit{I}$} }   \ottsym{,}   \algeffseqoverindex{ \beta }{ \text{\unboldmath$\mathit{J}$} }   \vdash  \ottnt{M}  \ottsym{:}  \ottnt{B}$ and
  \item $\Gamma  \ottsym{,}  \mathit{y} \,  \mathord{:}  \,  \text{\unboldmath$\forall$}  \,  \algeffseqoverindex{ \alpha }{ \text{\unboldmath$\mathit{I}$} }   \ottsym{.} \,  \text{\unboldmath$\forall$}  \,  \algeffseqoverindex{ \beta }{ \text{\unboldmath$\mathit{J}$} }   \ottsym{.} \, \ottnt{B}  \ottsym{,}   \algeffseqoverindex{ \alpha }{ \text{\unboldmath$\mathit{I}$} }   \vdash   \ottnt{E}  [  \mathit{y}  ]   \ottsym{:}  \ottnt{A}$ for any $\mathit{y} \,  \not\in  \,  \mathit{dom}  (  \Gamma  ) $
 \end{itemize}
 for some $ \algeffseqoverindex{ \beta }{ \text{\unboldmath$\mathit{J}$} } $ and $\ottnt{B}$.
\end{lemma}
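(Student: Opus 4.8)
The plan is to prove the statement by induction on the derivation of $\Gamma  \ottsym{,}   \algeffseqoverindex{ \alpha }{ \text{\unboldmath$\mathit{I}$} }   \vdash   \ottnt{E}  [  \ottnt{M}  ]   \ottsym{:}  \ottnt{A}$, keeping the prefix $\Gamma  \ottsym{,}   \algeffseqoverindex{ \alpha }{ \text{\unboldmath$\mathit{I}$} } $ general so that the induction hypothesis may be re-instantiated at a longer type-variable sequence. The first case split is on whether $\ottnt{E} \,  =  \,  [] $. If so, then $ \ottnt{E}  [  \ottnt{M}  ]  \,  =  \, \ottnt{M}$; I take $ \algeffseqoverindex{ \beta }{ \text{\unboldmath$\mathit{J}$} } $ empty and $\ottnt{B} \,  =  \, \ottnt{A}$, so the first bullet is immediate, and for the second I must derive $\Gamma  \ottsym{,}  \mathit{y} \,  \mathord{:}  \,  \text{\unboldmath$\forall$}  \,  \algeffseqoverindex{ \alpha }{ \text{\unboldmath$\mathit{I}$} }   \ottsym{.} \, \ottnt{A}  \ottsym{,}   \algeffseqoverindex{ \alpha }{ \text{\unboldmath$\mathit{I}$} }   \vdash  \mathit{y}  \ottsym{:}  \ottnt{A}$. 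By \T{Var} I get $\mathit{y}  \ottsym{:}   \text{\unboldmath$\forall$}  \,  \algeffseqoverindex{ \alpha }{ \text{\unboldmath$\mathit{I}$} }   \ottsym{.} \, \ottnt{A}$, and by instantiating each bound $\alpha_{\ottmv{i}}$ with itself (each such $\alpha_{\ottmv{i}}$ being well formed because it is bound in the context) via repeated \Srule{Inst} and \Srule{Trans}, I obtain $ \text{\unboldmath$\forall$}  \,  \algeffseqoverindex{ \alpha }{ \text{\unboldmath$\mathit{I}$} }   \ottsym{.} \, \ottnt{A}  \sqsubseteq  \ottnt{A}$; \T{Inst} then gives the goal, with well-formedness of the extended context following from \reflem{type-wf}.

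When $\ottnt{E} \,  \not=  \,  [] $, the last rule is either \T{Gen}, \T{Inst}, or the structural rule determined by the head constructor of $\ottnt{E}$; since the first two are applicable to any term, I fold them into the induction rather than relying on a naive inversion. For \T{Inst}, the premise types $ \ottnt{E}  [  \ottnt{M}  ] $ at some $\ottnt{A_{{\mathrm{0}}}}$ with $\Gamma  \ottsym{,}   \algeffseqoverindex{ \alpha }{ \text{\unboldmath$\mathit{I}$} }   \vdash  \ottnt{A_{{\mathrm{0}}}}  \sqsubseteq  \ottnt{A}$; I apply the induction hypothesis directly, keeping $ \algeffseqoverindex{ \beta }{ \text{\unboldmath$\mathit{J}$} } $ and $\ottnt{B}$, and re-apply \T{Inst} to the reconstructed derivation of $ \ottnt{E}  [  \mathit{y}  ] $ after weakening the containment judgment with the fresh binding of $\mathit{y}$ (\reflem{weakening}). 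For \T{Gen}, the conclusion type is $ \text{\unboldmath$\forall$}  \, \gamma  \ottsym{.} \, \ottnt{A_{{\mathrm{0}}}}$ and the premise types $ \ottnt{E}  [  \ottnt{M}  ] $ at $\ottnt{A_{{\mathrm{0}}}}$ under $\Gamma  \ottsym{,}   \algeffseqoverindex{ \alpha }{ \text{\unboldmath$\mathit{I}$} }   \ottsym{,}  \gamma$; I invoke the induction hypothesis at the enlarged sequence $ \algeffseqoverindex{ \alpha }{ \text{\unboldmath$\mathit{I}$} }   \ottsym{,}  \gamma$ to obtain $ \algeffseqoverindex{ \beta }{ \text{\unboldmath$\mathit{J}$} } $ and $\ottnt{B}$, then report $\gamma  \ottsym{,}   \algeffseqoverindex{ \beta }{ \text{\unboldmath$\mathit{J}$} } $ as the new type-variable sequence and re-generalize $\gamma$ (which is last in the context) by \T{Gen} to restore the result type $ \text{\unboldmath$\forall$}  \, \gamma  \ottsym{.} \, \ottnt{A_{{\mathrm{0}}}}$; crucially, the type $ \text{\unboldmath$\forall$}  \,  \algeffseqoverindex{ \alpha }{ \text{\unboldmath$\mathit{I}$} }   \ottsym{.} \,  \text{\unboldmath$\forall$}  \, \gamma  \ottsym{.} \,  \text{\unboldmath$\forall$}  \,  \algeffseqoverindex{ \beta }{ \text{\unboldmath$\mathit{J}$} }   \ottsym{.} \, \ottnt{B}$ assigned to $\mathit{y}$ is unchanged by this reshuffling of quantifiers, and $\gamma$ is bound (hence not free) in it.

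For each structural case — \T{App} with $\ottnt{E} \,  =  \, \ottnt{E'} \, \ottnt{M_{{\mathrm{2}}}}$, \T{Handle} with $\ottnt{E} \,  =  \, \mathsf{handle} \, \ottnt{E'} \, \mathsf{with} \, \ottnt{H}$, and analogously \T{Op}, \T{Pair}, \T{Proj1}, \T{Proj2}, the injection, case, and $ \mathsf{cons} $ rules, as well as the application case $\ottnt{E} \,  =  \, \ottnt{v_{{\mathrm{1}}}} \, \ottnt{E'}$ and the pair case $\ottnt{E} \,  =  \, \ottsym{(}  \ottnt{v_{{\mathrm{1}}}}  \ottsym{,}  \ottnt{E'}  \ottsym{)}$ — I invert the rule to expose the premise typing $\ottnt{E'}  [  \ottnt{M}  ] $, apply the induction hypothesis to that strictly smaller subderivation, and rebuild the same rule with $\ottnt{E'}  [  \mathit{y}  ] $ in place of $\ottnt{E'}  [  \ottnt{M}  ] $, carrying $ \algeffseqoverindex{ \beta }{ \text{\unboldmath$\mathit{J}$} } $ and $\ottnt{B}$ over unchanged. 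All side premises (such as $\ottnt{M_{{\mathrm{2}}}}  \ottsym{:}  \ottnt{A'}$, the handler judgment $\Gamma  \ottsym{,}   \algeffseqoverindex{ \alpha }{ \text{\unboldmath$\mathit{I}$} }   \vdash  \ottnt{H}  \ottsym{:}  \ottnt{A_{{\mathrm{0}}}}  \Rightarrow  \ottnt{A}$, or the typings of the values $\ottnt{v_{{\mathrm{1}}}}$) remain derivable in the extended context by \reflem{weakening}, and the operation-call case additionally reuses the instantiation data of \T{Op} verbatim.

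The main obstacle will be the \T{Gen} case, not the structural ones: it is the only place where the type-variable context and the result type change, and it is exactly the mechanism by which local generalizations performed at the hole are absorbed into the polymorphic type $ \text{\unboldmath$\forall$}  \,  \algeffseqoverindex{ \alpha }{ \text{\unboldmath$\mathit{I}$} }   \ottsym{.} \,  \text{\unboldmath$\forall$}  \,  \algeffseqoverindex{ \beta }{ \text{\unboldmath$\mathit{J}$} }   \ottsym{.} \, \ottnt{B}$ of the substitute variable $\mathit{y}$. Getting the bookkeeping right — prepending $\gamma$ to $ \algeffseqoverindex{ \beta }{ \text{\unboldmath$\mathit{J}$} } $, observing that $\mathit{y}$'s type is invariant under permuting the leading quantifiers, and re-applying \T{Gen} to $\ottnt{E'}  [  \mathit{y}  ] $ while $\gamma$ remains last in the context — is the delicate step, and it is what makes the eventual use of this lemma in the \Rule{R}{Handle} case of subject reduction (where $\mathit{y}$ receives precisely the polymorphic codomain type of the handled operation) go through.
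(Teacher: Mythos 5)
Your proposal is correct and follows essentially the same route as the paper's proof: induction on the typing derivation with the base case $\ottnt{E} \,  =  \,  [] $ handled by taking $ \algeffseqoverindex{ \beta }{ \text{\unboldmath$\mathit{J}$} } $ empty and recovering $\mathit{y}  \ottsym{:}  \ottnt{A}$ via \Srule{Inst}/\T{Inst}, the structural cases by inverting, applying the IH to the subderivation containing the hole, weakening the side premises, and rebuilding the rule, and the \T{Gen} case by re-invoking the IH at the extended sequence $ \algeffseqoverindex{ \alpha }{ \text{\unboldmath$\mathit{I}$} }   \ottsym{,}  \gamma$, returning $\gamma  \ottsym{,}   \algeffseqoverindex{ \beta }{ \text{\unboldmath$\mathit{J}$} } $, and re-generalizing $\gamma$. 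Your identification of \T{Gen} as the delicate case, and the observation that $\mathit{y}$'s type is invariant under the quantifier regrouping, match the paper exactly.
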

\begin{proof}
 By induction on the typing derivation of $\Gamma  \ottsym{,}   \algeffseqoverindex{ \alpha }{ \text{\unboldmath$\mathit{I}$} }   \vdash   \ottnt{E}  [  \ottnt{M}  ]   \ottsym{:}  \ottnt{A}$.

 Suppose that $\ottnt{E} \,  =  \,  [] $.  Since $\Gamma  \ottsym{,}   \algeffseqoverindex{ \alpha }{ \text{\unboldmath$\mathit{I}$} }   \vdash   \ottnt{E}  [  \ottnt{M}  ]   \ottsym{:}  \ottnt{A}$, we have
 $\Gamma  \ottsym{,}   \algeffseqoverindex{ \alpha }{ \text{\unboldmath$\mathit{I}$} }   \vdash  \ottnt{M}  \ottsym{:}  \ottnt{A}$.
 We let $ \algeffseqoverindex{ \beta }{ \text{\unboldmath$\mathit{J}$} } $ be the empty sequence and $\ottnt{B} \,  =  \, \ottnt{A}$.  It is then trivial
 that $\Gamma  \ottsym{,}  \mathit{y} \,  \mathord{:}  \,  \text{\unboldmath$\forall$}  \,  \algeffseqoverindex{ \alpha }{ \text{\unboldmath$\mathit{I}$} }   \ottsym{.} \, \ottnt{B}  \ottsym{,}   \algeffseqoverindex{ \alpha }{ \text{\unboldmath$\mathit{I}$} }   \vdash   \ottnt{E}  [  \mathit{y}  ]   \ottsym{:}  \ottnt{A}$ by \T{Inst}.
 Note that $\vdash  \Gamma$ and $\Gamma  \vdash   \text{\unboldmath$\forall$}  \,  \algeffseqover{ \alpha }   \ottsym{.} \, \ottnt{B}$ by \reflem{type-wf}.

 In what follows, we suppose that $\ottnt{E} \,  \not=  \,  [] $.  We proceed by case analysis on
 the typing rule applied last to derive $\Gamma  \ottsym{,}   \algeffseqoverindex{ \alpha }{ \text{\unboldmath$\mathit{I}$} }   \vdash   \ottnt{E}  [  \ottnt{M}  ]   \ottsym{:}  \ottnt{A}$.
 \begin{caseanalysis}
  \case \T{Var}, \T{Const}, \T{Abs}, \T{Nil}, and \T{Fix}:
   Contradictory with the assumption that $\ottnt{E} \,  \not=  \,  [] $.

  \case \T{App}:
   By case analysis on $\ottnt{E}$.
   \begin{caseanalysis}
    \case $\ottnt{E} \,  =  \, \ottnt{E'} \, \ottnt{M_{{\mathrm{2}}}}$:
     By inversion of the typing derivation, we have
     $\Gamma  \ottsym{,}   \algeffseqoverindex{ \alpha }{ \text{\unboldmath$\mathit{I}$} }   \vdash   \ottnt{E'}  [  \ottnt{M}  ]   \ottsym{:}  \ottnt{C}  \rightarrow  \ottnt{A}$ and
     $\Gamma  \ottsym{,}   \algeffseqoverindex{ \alpha }{ \text{\unboldmath$\mathit{I}$} }   \vdash  \ottnt{M_{{\mathrm{2}}}}  \ottsym{:}  \ottnt{C}$
     for some $\ottnt{C}$.
     By the IH,
     (1) $\Gamma  \ottsym{,}   \algeffseqoverindex{ \alpha }{ \text{\unboldmath$\mathit{I}$} }   \ottsym{,}   \algeffseqoverindex{ \beta }{ \text{\unboldmath$\mathit{J}$} }   \vdash  \ottnt{M}  \ottsym{:}  \ottnt{B}$ for some $ \algeffseqoverindex{ \beta }{ \text{\unboldmath$\mathit{J}$} } $ and $\ottnt{B}$ and
     (2) for any $\mathit{y} \,  \not\in  \,  \mathit{dom}  (  \Gamma  ) $,
     $\Gamma  \ottsym{,}  \mathit{y} \,  \mathord{:}  \,  \text{\unboldmath$\forall$}  \,  \algeffseqoverindex{ \alpha }{ \text{\unboldmath$\mathit{I}$} }   \ottsym{.} \,  \text{\unboldmath$\forall$}  \,  \algeffseqoverindex{ \beta }{ \text{\unboldmath$\mathit{J}$} }   \ottsym{.} \, \ottnt{B}  \ottsym{,}   \algeffseqoverindex{ \alpha }{ \text{\unboldmath$\mathit{I}$} }   \vdash   \ottnt{E'}  [  \mathit{y}  ]   \ottsym{:}  \ottnt{C}  \rightarrow  \ottnt{A}$.
     By \reflem{weakening} (\ref{lem:weakening:term}) and \T{App},
     $\Gamma  \ottsym{,}  \mathit{y} \,  \mathord{:}  \,  \text{\unboldmath$\forall$}  \,  \algeffseqoverindex{ \alpha }{ \text{\unboldmath$\mathit{I}$} }   \ottsym{.} \,  \text{\unboldmath$\forall$}  \,  \algeffseqoverindex{ \beta }{ \text{\unboldmath$\mathit{J}$} }   \ottsym{.} \, \ottnt{B}  \ottsym{,}   \algeffseqoverindex{ \alpha }{ \text{\unboldmath$\mathit{I}$} }   \vdash   \ottnt{E'}  [  \mathit{y}  ]  \, \ottnt{M_{{\mathrm{2}}}}  \ottsym{:}  \ottnt{A}$, i.e.,
     $\Gamma  \ottsym{,}  \mathit{y} \,  \mathord{:}  \,  \text{\unboldmath$\forall$}  \,  \algeffseqoverindex{ \alpha }{ \text{\unboldmath$\mathit{I}$} }   \ottsym{.} \,  \text{\unboldmath$\forall$}  \,  \algeffseqoverindex{ \beta }{ \text{\unboldmath$\mathit{J}$} }   \ottsym{.} \, \ottnt{B}  \ottsym{,}   \algeffseqoverindex{ \alpha }{ \text{\unboldmath$\mathit{I}$} }   \vdash   \ottnt{E}  [  \mathit{y}  ]   \ottsym{:}  \ottnt{A}$.

    \case $\ottnt{E} \,  =  \, \ottnt{v_{{\mathrm{1}}}} \, \ottnt{E'}$:  Similarly to the above case.
   \end{caseanalysis}

  \case \T{Gen}:
   We have $\Gamma  \ottsym{,}   \algeffseqoverindex{ \alpha }{ \text{\unboldmath$\mathit{I}$} }   \vdash   \ottnt{E}  [  \ottnt{M}  ]   \ottsym{:}   \text{\unboldmath$\forall$}  \, \gamma  \ottsym{.} \, \ottnt{A'}$ and, by inversion,
   $\Gamma  \ottsym{,}   \algeffseqoverindex{ \alpha }{ \text{\unboldmath$\mathit{I}$} }   \ottsym{,}  \gamma  \vdash   \ottnt{E}  [  \ottnt{M}  ]   \ottsym{:}  \ottnt{A'}$ for some $\gamma$ and $\ottnt{A'}$
   (note $\ottnt{A} \,  =  \,  \text{\unboldmath$\forall$}  \, \gamma  \ottsym{.} \, \ottnt{A'}$).
   By the IH, (1) $\Gamma  \ottsym{,}   \algeffseqoverindex{ \alpha }{ \text{\unboldmath$\mathit{I}$} }   \ottsym{,}  \gamma  \ottsym{,}   \algeffseqoverindex{ \beta }{ \text{\unboldmath$\mathit{J}$} }   \vdash  \ottnt{M}  \ottsym{:}  \ottnt{B}$ for some $ \algeffseqoverindex{ \beta }{ \text{\unboldmath$\mathit{J}$} } $ and $\ottnt{B}$ and
   (2) for any $\mathit{y} \,  \not\in  \,  \mathit{dom}  (  \Gamma  ) $,
   $\Gamma  \ottsym{,}  \mathit{y} \,  \mathord{:}  \,  \text{\unboldmath$\forall$}  \,  \algeffseqoverindex{ \alpha }{ \text{\unboldmath$\mathit{I}$} }   \ottsym{.} \,  \text{\unboldmath$\forall$}  \, \gamma  \ottsym{.} \,  \text{\unboldmath$\forall$}  \,  \algeffseqoverindex{ \beta }{ \text{\unboldmath$\mathit{J}$} }   \ottsym{.} \, \ottnt{B}  \ottsym{,}   \algeffseqoverindex{ \alpha }{ \text{\unboldmath$\mathit{I}$} }   \ottsym{,}  \gamma  \vdash   \ottnt{E}  [  \mathit{y}  ]   \ottsym{:}  \ottnt{A'}$.

   By \T{Gen}, $\Gamma  \ottsym{,}  \mathit{y} \,  \mathord{:}  \,  \text{\unboldmath$\forall$}  \,  \algeffseqoverindex{ \alpha }{ \text{\unboldmath$\mathit{I}$} }   \ottsym{.} \,  \text{\unboldmath$\forall$}  \, \gamma  \ottsym{.} \,  \text{\unboldmath$\forall$}  \,  \algeffseqoverindex{ \beta }{ \text{\unboldmath$\mathit{J}$} }   \ottsym{.} \, \ottnt{B}  \ottsym{,}   \algeffseqoverindex{ \alpha }{ \text{\unboldmath$\mathit{I}$} }   \vdash   \ottnt{E}  [  \mathit{y}  ]   \ottsym{:}   \text{\unboldmath$\forall$}  \, \gamma  \ottsym{.} \, \ottnt{A'}$.
   Since $\ottnt{A} \,  =  \,  \text{\unboldmath$\forall$}  \, \gamma  \ottsym{.} \, \ottnt{A'}$, we finish.

  \otherwise: By the IH(s) and the corresponding typing rule, as the case for \T{App}.
 \end{caseanalysis}
\end{proof}

\begin{lemma}{var-subtype}
 Suppose that $\Gamma_{{\mathrm{1}}}  \vdash  \ottnt{A}  \sqsubseteq  \ottnt{B}$ and $\Gamma_{{\mathrm{1}}}  \vdash  \ottnt{A}$.
 \begin{enumerate}
  \item If $\Gamma_{{\mathrm{1}}}  \ottsym{,}  \mathit{x} \,  \mathord{:}  \, \ottnt{B}  \ottsym{,}  \Gamma_{{\mathrm{2}}}  \vdash  \ottnt{M}  \ottsym{:}  \ottnt{C}$, then $\Gamma_{{\mathrm{1}}}  \ottsym{,}  \mathit{x} \,  \mathord{:}  \, \ottnt{A}  \ottsym{,}  \Gamma_{{\mathrm{2}}}  \vdash  \ottnt{M}  \ottsym{:}  \ottnt{C}$.
  \item If $\Gamma_{{\mathrm{1}}}  \ottsym{,}  \mathit{x} \,  \mathord{:}  \, \ottnt{B}  \ottsym{,}  \Gamma_{{\mathrm{2}}}  \vdash  \ottnt{H}  \ottsym{:}  \ottnt{C}  \Rightarrow  \ottnt{D}$, then $\Gamma_{{\mathrm{1}}}  \ottsym{,}  \mathit{x} \,  \mathord{:}  \, \ottnt{A}  \ottsym{,}  \Gamma_{{\mathrm{2}}}  \vdash  \ottnt{H}  \ottsym{:}  \ottnt{C}  \Rightarrow  \ottnt{D}$.
 \end{enumerate}
\end{lemma}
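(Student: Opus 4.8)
The plan is to prove both statements simultaneously by mutual induction on the typing derivations of $\Gamma_{{\mathrm{1}}}  \ottsym{,}  \mathit{x} \,  \mathord{:}  \, \ottnt{B}  \ottsym{,}  \Gamma_{{\mathrm{2}}}  \vdash  \ottnt{M}  \ottsym{:}  \ottnt{C}$ and $\Gamma_{{\mathrm{1}}}  \ottsym{,}  \mathit{x} \,  \mathord{:}  \, \ottnt{B}  \ottsym{,}  \Gamma_{{\mathrm{2}}}  \vdash  \ottnt{H}  \ottsym{:}  \ottnt{C}  \Rightarrow  \ottnt{D}$. The intuition is that narrowing the declared type of $\mathit{x}$ from $\ottnt{B}$ to the subtype $\ottnt{A}$ cannot invalidate any typing, since every use of $\mathit{x}$ at type $\ottnt{B}$ can be recovered from $\mathit{x} \,  \mathord{:}  \, \ottnt{A}$ by subsumption. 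Before the induction I would record one preliminary fact: the narrowed context is well formed, i.e.\ $\vdash  \Gamma_{{\mathrm{1}}}  \ottsym{,}  \mathit{x} \,  \mathord{:}  \, \ottnt{A}  \ottsym{,}  \Gamma_{{\mathrm{2}}}$. This follows from $\vdash  \Gamma_{{\mathrm{1}}}  \ottsym{,}  \mathit{x} \,  \mathord{:}  \, \ottnt{B}  \ottsym{,}  \Gamma_{{\mathrm{2}}}$ (available because every typing judgment carries a well-formed context) together with the hypothesis $\Gamma_{{\mathrm{1}}}  \vdash  \ottnt{A}$: replacing the binding $\mathit{x} \,  \mathord{:}  \, \ottnt{B}$ by $\mathit{x} \,  \mathord{:}  \, \ottnt{A}$ leaves $ \mathit{dom}  (  \Gamma_{{\mathrm{1}}}  \ottsym{,}  \mathit{x} \,  \mathord{:}  \, \ottnt{B}  \ottsym{,}  \Gamma_{{\mathrm{2}}}  ) $ unchanged, and since well-formedness of the types appearing in $\Gamma_{{\mathrm{2}}}$ depends only on the type variables in scope, the well-formedness of the tail $\Gamma_{{\mathrm{2}}}$ is unaffected.

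The only case that genuinely inspects the binding of $\mathit{x}$ is \T{Var}. Suppose $\Gamma_{{\mathrm{1}}}  \ottsym{,}  \mathit{x} \,  \mathord{:}  \, \ottnt{B}  \ottsym{,}  \Gamma_{{\mathrm{2}}}  \vdash  \mathit{y}  \ottsym{:}  \ottnt{C}$ is derived by \T{Var}, so $\mathit{y} \,  \mathord{:}  \, \ottnt{C} \,  \in  \, \Gamma_{{\mathrm{1}}}  \ottsym{,}  \mathit{x} \,  \mathord{:}  \, \ottnt{B}  \ottsym{,}  \Gamma_{{\mathrm{2}}}$. If $\mathit{y} \,  \not=  \, \mathit{x}$, then the binding of $\mathit{y}$ is identical in $\Gamma_{{\mathrm{1}}}  \ottsym{,}  \mathit{x} \,  \mathord{:}  \, \ottnt{A}  \ottsym{,}  \Gamma_{{\mathrm{2}}}$, and I would re-apply \T{Var} using the preliminary well-formedness fact. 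If $\mathit{y} \,  =  \, \mathit{x}$, then $\ottnt{C} \,  =  \, \ottnt{B}$, and I would first derive $\Gamma_{{\mathrm{1}}}  \ottsym{,}  \mathit{x} \,  \mathord{:}  \, \ottnt{A}  \ottsym{,}  \Gamma_{{\mathrm{2}}}  \vdash  \mathit{x}  \ottsym{:}  \ottnt{A}$ by \T{Var}; then, using \reflem{weakening}~(\ref{lem:weakening:sub}) to lift $\Gamma_{{\mathrm{1}}}  \vdash  \ottnt{A}  \sqsubseteq  \ottnt{B}$ to $\Gamma_{{\mathrm{1}}}  \ottsym{,}  \mathit{x} \,  \mathord{:}  \, \ottnt{A}  \ottsym{,}  \Gamma_{{\mathrm{2}}}  \vdash  \ottnt{A}  \sqsubseteq  \ottnt{B}$ and the well-formedness $\Gamma_{{\mathrm{1}}}  \ottsym{,}  \mathit{x} \,  \mathord{:}  \, \ottnt{A}  \ottsym{,}  \Gamma_{{\mathrm{2}}}  \vdash  \ottnt{B}$ (obtained from $\Gamma_{{\mathrm{1}}}  \vdash  \ottnt{B}$, which is available since $\Gamma_{{\mathrm{1}}}  \ottsym{,}  \mathit{x} \,  \mathord{:}  \, \ottnt{B}  \ottsym{,}  \Gamma_{{\mathrm{2}}}$ is well formed, followed by weakening), I would conclude $\Gamma_{{\mathrm{1}}}  \ottsym{,}  \mathit{x} \,  \mathord{:}  \, \ottnt{A}  \ottsym{,}  \Gamma_{{\mathrm{2}}}  \vdash  \mathit{x}  \ottsym{:}  \ottnt{B}$ by \T{Inst}. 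Since $\ottnt{C} \,  =  \, \ottnt{B}$, this is the desired judgment.

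Every remaining rule is handled uniformly: apply the induction hypothesis to each premise and re-apply the same rule. The rules that extend the context---such as \T{Abs}, \T{Fix}, \T{Case}, \T{CaseList}, \T{Gen}, and \THrule{Op}---pose no difficulty because the statement is quantified over an arbitrary suffix $\Gamma_{{\mathrm{2}}}$; I would simply instantiate the induction hypothesis with $\Gamma_{{\mathrm{2}}}$ enlarged by the newly bound term or type variables (e.g.\ $\Gamma_{{\mathrm{2}}}  \ottsym{,}  \mathit{z} \,  \mathord{:}  \, \ottnt{A'}$ or $\Gamma_{{\mathrm{2}}}  \ottsym{,}  \alpha$), so that the hypothesis applies to the premise verbatim. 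The premise-free rules that merely require context well-formedness---\T{Const} and \T{Nil}---are discharged by the preliminary fact alone. The main obstacle, though minor, is the bookkeeping in the \T{Var} case: one must marshal both the weakened containment judgment and the well-formedness of $\ottnt{B}$ under the narrowed context, and confirm that the narrowed context is well formed; the latter rests on the observation that well-formedness of $\Gamma_{{\mathrm{2}}}$ is insensitive to the specific type assigned to $\mathit{x}$ and depends only on its presence in the domain.
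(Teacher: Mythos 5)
Your proposal is correct and follows exactly the route the paper takes: the paper's proof is simply ``by mutual induction on the typing derivations,'' and your elaboration --- narrowing is trivial everywhere except \T{Var}, where the case $\mathit{y} \,  =  \, \mathit{x}$ is recovered via \T{Var} followed by \T{Inst} using the weakened containment judgment and well-formedness of $\ottnt{B}$ --- is precisely the intended argument. No gaps.
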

\begin{proof}
 By mutual induction on the typing derivations.
\end{proof}

\begin{lemma}{ectx-op-typing}
 If $\mathit{ty} \, \ottsym{(}  \mathsf{op}  \ottsym{)} \,  =  \,   \text{\unboldmath$\forall$}  \,  \algeffseqoverindex{ \alpha }{ \text{\unboldmath$\mathit{I}$} }   \ottsym{.} \,  \ottnt{A}  \hookrightarrow  \ottnt{B} $ and $\Gamma  \vdash   \ottnt{E}  [   \textup{\texttt{\#}\relax}  \mathsf{op}   \ottsym{(}   \ottnt{v}   \ottsym{)}   ]   \ottsym{:}  \ottnt{C}$, then
 \begin{itemize}
  \item $\Gamma  \ottsym{,}   \algeffseqoverindex{ \beta }{ \text{\unboldmath$\mathit{J}$} }   \vdash   \algeffseqoverindex{ \ottnt{D} }{ \text{\unboldmath$\mathit{I}$} } $,
  \item $\Gamma  \ottsym{,}   \algeffseqoverindex{ \beta }{ \text{\unboldmath$\mathit{J}$} }   \vdash  \ottnt{v}  \ottsym{:}   \ottnt{A}    [   \algeffseqoverindex{ \ottnt{D} }{ \text{\unboldmath$\mathit{I}$} }   \ottsym{/}   \algeffseqoverindex{ \alpha }{ \text{\unboldmath$\mathit{I}$} }   ]  $, and
  \item for any $\mathit{y} \,  \not\in  \,  \mathit{dom}  (  \Gamma  ) $,
        $\Gamma  \ottsym{,}  \mathit{y} \,  \mathord{:}  \,  \text{\unboldmath$\forall$}  \,  \algeffseqoverindex{ \beta }{ \text{\unboldmath$\mathit{J}$} }   \ottsym{.} \, \ottnt{B} \,  [   \algeffseqoverindex{ \ottnt{D} }{ \text{\unboldmath$\mathit{I}$} }   \ottsym{/}   \algeffseqoverindex{ \alpha }{ \text{\unboldmath$\mathit{I}$} }   ]   \vdash   \ottnt{E}  [  \mathit{y}  ]   \ottsym{:}  \ottnt{C}$
 \end{itemize}
 for some $ \algeffseqoverindex{ \beta }{ \text{\unboldmath$\mathit{J}$} } $ and $ \algeffseqoverindex{ \ottnt{D} }{ \text{\unboldmath$\mathit{I}$} } $.
\end{lemma}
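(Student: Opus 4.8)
The plan is to prove this lemma purely by composing the two preceding inversion results, \reflem{ectx-typing} and \reflem{term-inv-op}, and then reconciling the two type assignments to the hole variable $\mathit{y}$ with \Srule{Poly} and \reflem{var-subtype}. There is no induction to perform here; all the work lies in threading the existential witnesses from the two lemmas together and checking the freshness/well-formedness side conditions.

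First I would apply \reflem{ectx-typing} to the hypothesis $\Gamma  \vdash   \ottnt{E}  [   \textup{\texttt{\#}\relax}  \mathsf{op}   \ottsym{(}   \ottnt{v}   \ottsym{)}   ]   \ottsym{:}  \ottnt{C}$, taking its leading type-variable block to be empty and the filled term to be the operation call $ \textup{\texttt{\#}\relax}  \mathsf{op}   \ottsym{(}   \ottnt{v}   \ottsym{)} $. This yields a type-variable sequence $ \algeffseqoverindex{ \gamma }{ \text{\unboldmath$\mathit{K}$} } $ and a type $\ottnt{B_{{\mathrm{0}}}}$ with $\Gamma  \ottsym{,}   \algeffseqoverindex{ \gamma }{ \text{\unboldmath$\mathit{K}$} }   \vdash   \textup{\texttt{\#}\relax}  \mathsf{op}   \ottsym{(}   \ottnt{v}   \ottsym{)}   \ottsym{:}  \ottnt{B_{{\mathrm{0}}}}$ and, for every fresh $\mathit{y}$, $\Gamma  \ottsym{,}  \mathit{y} \,  \mathord{:}  \,  \text{\unboldmath$\forall$}  \,  \algeffseqoverindex{ \gamma }{ \text{\unboldmath$\mathit{K}$} }   \ottsym{.} \, \ottnt{B_{{\mathrm{0}}}}  \vdash   \ottnt{E}  [  \mathit{y}  ]   \ottsym{:}  \ottnt{C}$. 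Next I would apply \reflem{term-inv-op} to $\Gamma  \ottsym{,}   \algeffseqoverindex{ \gamma }{ \text{\unboldmath$\mathit{K}$} }   \vdash   \textup{\texttt{\#}\relax}  \mathsf{op}   \ottsym{(}   \ottnt{v}   \ottsym{)}   \ottsym{:}  \ottnt{B_{{\mathrm{0}}}}$, which produces a second block $ \algeffseqoverindex{ \beta_{{\mathrm{0}}} }{ \text{\unboldmath$\mathit{J_{{\mathrm{0}}}}$} } $ and type arguments $ \algeffseqoverindex{ \ottnt{D} }{ \text{\unboldmath$\mathit{I}$} } $ such that $\Gamma  \ottsym{,}   \algeffseqoverindex{ \gamma }{ \text{\unboldmath$\mathit{K}$} }   \ottsym{,}   \algeffseqoverindex{ \beta_{{\mathrm{0}}} }{ \text{\unboldmath$\mathit{J_{{\mathrm{0}}}}$} }   \vdash   \algeffseqoverindex{ \ottnt{D} }{ \text{\unboldmath$\mathit{I}$} } $, $\Gamma  \ottsym{,}   \algeffseqoverindex{ \gamma }{ \text{\unboldmath$\mathit{K}$} }   \ottsym{,}   \algeffseqoverindex{ \beta_{{\mathrm{0}}} }{ \text{\unboldmath$\mathit{J_{{\mathrm{0}}}}$} }   \vdash  \ottnt{v}  \ottsym{:}   \ottnt{A}    [   \algeffseqoverindex{ \ottnt{D} }{ \text{\unboldmath$\mathit{I}$} }   \ottsym{/}   \algeffseqoverindex{ \alpha }{ \text{\unboldmath$\mathit{I}$} }   ]  $, and $\Gamma  \ottsym{,}   \algeffseqoverindex{ \gamma }{ \text{\unboldmath$\mathit{K}$} }   \vdash    \text{\unboldmath$\forall$}  \,  \algeffseqoverindex{ \beta_{{\mathrm{0}}} }{ \text{\unboldmath$\mathit{J_{{\mathrm{0}}}}$} }   \ottsym{.} \, \ottnt{B}    [   \algeffseqoverindex{ \ottnt{D} }{ \text{\unboldmath$\mathit{I}$} }   \ottsym{/}   \algeffseqoverindex{ \alpha }{ \text{\unboldmath$\mathit{I}$} }   ]    \sqsubseteq  \ottnt{B_{{\mathrm{0}}}}$.

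I would then choose the final witnesses to be the concatenation $ \algeffseqoverindex{ \beta }{ \text{\unboldmath$\mathit{J}$} }  \,  =  \,  \algeffseqoverindex{ \gamma }{ \text{\unboldmath$\mathit{K}$} }   \ottsym{,}   \algeffseqoverindex{ \beta_{{\mathrm{0}}} }{ \text{\unboldmath$\mathit{J_{{\mathrm{0}}}}$} } $ together with the same $ \algeffseqoverindex{ \ottnt{D} }{ \text{\unboldmath$\mathit{I}$} } $. The first two goals, $\Gamma  \ottsym{,}   \algeffseqoverindex{ \beta }{ \text{\unboldmath$\mathit{J}$} }   \vdash   \algeffseqoverindex{ \ottnt{D} }{ \text{\unboldmath$\mathit{I}$} } $ and $\Gamma  \ottsym{,}   \algeffseqoverindex{ \beta }{ \text{\unboldmath$\mathit{J}$} }   \vdash  \ottnt{v}  \ottsym{:}   \ottnt{A}    [   \algeffseqoverindex{ \ottnt{D} }{ \text{\unboldmath$\mathit{I}$} }   \ottsym{/}   \algeffseqoverindex{ \alpha }{ \text{\unboldmath$\mathit{I}$} }   ]  $, are immediate because $\Gamma  \ottsym{,}   \algeffseqoverindex{ \beta }{ \text{\unboldmath$\mathit{J}$} } $ is literally $\Gamma  \ottsym{,}   \algeffseqoverindex{ \gamma }{ \text{\unboldmath$\mathit{K}$} }   \ottsym{,}   \algeffseqoverindex{ \beta_{{\mathrm{0}}} }{ \text{\unboldmath$\mathit{J_{{\mathrm{0}}}}$} } $ and these are exactly the first two conclusions of \reflem{term-inv-op}. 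For the third goal I must adjust the type assigned to $\mathit{y}$ by \reflem{ectx-typing}: lifting the subtyping fact $\Gamma  \ottsym{,}   \algeffseqoverindex{ \gamma }{ \text{\unboldmath$\mathit{K}$} }   \vdash    \text{\unboldmath$\forall$}  \,  \algeffseqoverindex{ \beta_{{\mathrm{0}}} }{ \text{\unboldmath$\mathit{J_{{\mathrm{0}}}}$} }   \ottsym{.} \, \ottnt{B}    [   \algeffseqoverindex{ \ottnt{D} }{ \text{\unboldmath$\mathit{I}$} }   \ottsym{/}   \algeffseqoverindex{ \alpha }{ \text{\unboldmath$\mathit{I}$} }   ]    \sqsubseteq  \ottnt{B_{{\mathrm{0}}}}$ through repeated applications of \Srule{Poly}, one per variable in $ \algeffseqoverindex{ \gamma }{ \text{\unboldmath$\mathit{K}$} } $, gives $\Gamma  \vdash    \text{\unboldmath$\forall$}  \,  \algeffseqoverindex{ \gamma }{ \text{\unboldmath$\mathit{K}$} }   \ottsym{.} \,  \text{\unboldmath$\forall$}  \,  \algeffseqoverindex{ \beta_{{\mathrm{0}}} }{ \text{\unboldmath$\mathit{J_{{\mathrm{0}}}}$} }   \ottsym{.} \, \ottnt{B}    [   \algeffseqoverindex{ \ottnt{D} }{ \text{\unboldmath$\mathit{I}$} }   \ottsym{/}   \algeffseqoverindex{ \alpha }{ \text{\unboldmath$\mathit{I}$} }   ]    \sqsubseteq    \text{\unboldmath$\forall$}  \,  \algeffseqoverindex{ \gamma }{ \text{\unboldmath$\mathit{K}$} }   \ottsym{.} \, \ottnt{B_{{\mathrm{0}}}}$, whose left-hand side is precisely $ \text{\unboldmath$\forall$}  \,  \algeffseqoverindex{ \beta }{ \text{\unboldmath$\mathit{J}$} }   \ottsym{.} \, \ottnt{B}    [   \algeffseqoverindex{ \ottnt{D} }{ \text{\unboldmath$\mathit{I}$} }   \ottsym{/}   \algeffseqoverindex{ \alpha }{ \text{\unboldmath$\mathit{I}$} }   ] $. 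Feeding this subtyping judgment into \reflem{var-subtype} (with $\Gamma_{{\mathrm{2}}}$ empty) rewrites $\mathit{y} \,  \mathord{:}  \,  \text{\unboldmath$\forall$}  \,  \algeffseqoverindex{ \gamma }{ \text{\unboldmath$\mathit{K}$} }   \ottsym{.} \, \ottnt{B_{{\mathrm{0}}}}$ into $\mathit{y} \,  \mathord{:}  \,  \text{\unboldmath$\forall$}  \,  \algeffseqoverindex{ \beta }{ \text{\unboldmath$\mathit{J}$} }   \ottsym{.} \, \ottnt{B}    [   \algeffseqoverindex{ \ottnt{D} }{ \text{\unboldmath$\mathit{I}$} }   \ottsym{/}   \algeffseqoverindex{ \alpha }{ \text{\unboldmath$\mathit{I}$} }   ] $ in the judgment $\Gamma  \ottsym{,}  \mathit{y} \,  \mathord{:}  \,  \text{\unboldmath$\forall$}  \,  \algeffseqoverindex{ \gamma }{ \text{\unboldmath$\mathit{K}$} }   \ottsym{.} \, \ottnt{B_{{\mathrm{0}}}}  \vdash   \ottnt{E}  [  \mathit{y}  ]   \ottsym{:}  \ottnt{C}$, yielding the desired third conclusion.

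The step I expect to require the most care is discharging the well-formedness premise $\Gamma  \vdash    \text{\unboldmath$\forall$}  \,  \algeffseqoverindex{ \beta }{ \text{\unboldmath$\mathit{J}$} }   \ottsym{.} \, \ottnt{B}    [   \algeffseqoverindex{ \ottnt{D} }{ \text{\unboldmath$\mathit{I}$} }   \ottsym{/}   \algeffseqoverindex{ \alpha }{ \text{\unboldmath$\mathit{I}$} }   ]  $ demanded by \reflem{var-subtype}, together with the attendant freshness bookkeeping. Here I would use that the type signature is closed (\refdef{eff}), so $ \mathit{ftv}  (  \ottnt{B}  )  \subseteq \{  \algeffseqoverindex{ \alpha }{ \text{\unboldmath$\mathit{I}$} }  \}$ and hence $ \mathit{ftv}  (   \ottnt{B}    [   \algeffseqoverindex{ \ottnt{D} }{ \text{\unboldmath$\mathit{I}$} }   \ottsym{/}   \algeffseqoverindex{ \alpha }{ \text{\unboldmath$\mathit{I}$} }   ]    )  \subseteq  \mathit{ftv}  (   \algeffseqoverindex{ \ottnt{D} }{ \text{\unboldmath$\mathit{I}$} }   ) $; since $\Gamma  \ottsym{,}   \algeffseqoverindex{ \gamma }{ \text{\unboldmath$\mathit{K}$} }   \ottsym{,}   \algeffseqoverindex{ \beta_{{\mathrm{0}}} }{ \text{\unboldmath$\mathit{J_{{\mathrm{0}}}}$} }   \vdash   \algeffseqoverindex{ \ottnt{D} }{ \text{\unboldmath$\mathit{I}$} } $, quantifying over $ \algeffseqoverindex{ \gamma }{ \text{\unboldmath$\mathit{K}$} } $ and $ \algeffseqoverindex{ \beta_{{\mathrm{0}}} }{ \text{\unboldmath$\mathit{J_{{\mathrm{0}}}}$} } $ removes all of those free variables and leaves a type whose free variables lie in $ \mathit{dom}  (  \Gamma  ) $, so it is well formed under $\Gamma$ (using that $\vdash  \Gamma$ holds). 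I must also select the two existential blocks $ \algeffseqoverindex{ \gamma }{ \text{\unboldmath$\mathit{K}$} } $ and $ \algeffseqoverindex{ \beta_{{\mathrm{0}}} }{ \text{\unboldmath$\mathit{J_{{\mathrm{0}}}}$} } $ to be mutually disjoint and fresh for $\Gamma$, which is harmless by $\alpha$-renaming, so that the concatenated context is well formed and the chain of \Srule{Poly} applications goes through cleanly. Everything else reduces to routine substitution and rewriting.
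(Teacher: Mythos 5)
Your proposal is correct and follows essentially the same route as the paper's own proof: apply \reflem{ectx-typing}, then \reflem{term-inv-op}, concatenate the two type-variable blocks as the witness $ \algeffseqoverindex{ \beta }{ \text{\unboldmath$\mathit{J}$} } $, lift the type containment with \Srule{Poly}, and conclude with \reflem{var-subtype}. Your extra attention to the well-formedness premise $\Gamma  \vdash    \text{\unboldmath$\forall$}  \,  \algeffseqoverindex{ \beta }{ \text{\unboldmath$\mathit{J}$} }   \ottsym{.} \, \ottnt{B}    [   \algeffseqoverindex{ \ottnt{D} }{ \text{\unboldmath$\mathit{I}$} }   \ottsym{/}   \algeffseqoverindex{ \alpha }{ \text{\unboldmath$\mathit{I}$} }   ]  $ of \reflem{var-subtype} is a small but welcome addition that the paper leaves implicit.
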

\begin{proof}
 By \reflem{ectx-typing},
 \begin{itemize}
  \item $\Gamma  \ottsym{,}   \algeffseqoverindex{ \beta_{{\mathrm{1}}} }{ \text{\unboldmath$\mathit{J_{{\mathrm{1}}}}$} }   \vdash   \textup{\texttt{\#}\relax}  \mathsf{op}   \ottsym{(}   \ottnt{v}   \ottsym{)}   \ottsym{:}  \ottnt{C'}$ and
  \item $\Gamma  \ottsym{,}  \mathit{y} \,  \mathord{:}  \,  \text{\unboldmath$\forall$}  \,  \algeffseqoverindex{ \beta_{{\mathrm{1}}} }{ \text{\unboldmath$\mathit{J_{{\mathrm{1}}}}$} }   \ottsym{.} \, \ottnt{C'}  \vdash   \ottnt{E}  [  \mathit{y}  ]   \ottsym{:}  \ottnt{C}$ for any $\mathit{y} \,  \not\in  \,  \mathit{dom}  (  \Gamma  ) $
 \end{itemize}
 for some $ \algeffseqoverindex{ \beta_{{\mathrm{1}}} }{ \text{\unboldmath$\mathit{J_{{\mathrm{1}}}}$} } $ and $\ottnt{C'}$.
 By \reflem{term-inv-op},
 \begin{itemize}
  \item $\Gamma  \ottsym{,}   \algeffseqoverindex{ \beta_{{\mathrm{1}}} }{ \text{\unboldmath$\mathit{J_{{\mathrm{1}}}}$} }   \ottsym{,}   \algeffseqoverindex{ \beta_{{\mathrm{2}}} }{ \text{\unboldmath$\mathit{J_{{\mathrm{2}}}}$} }   \vdash   \algeffseqoverindex{ \ottnt{D} }{ \text{\unboldmath$\mathit{I}$} } $,
  \item $\Gamma  \ottsym{,}   \algeffseqoverindex{ \beta_{{\mathrm{1}}} }{ \text{\unboldmath$\mathit{J_{{\mathrm{1}}}}$} }   \ottsym{,}   \algeffseqoverindex{ \beta_{{\mathrm{2}}} }{ \text{\unboldmath$\mathit{J_{{\mathrm{2}}}}$} }   \vdash  \ottnt{v}  \ottsym{:}   \ottnt{A}    [   \algeffseqoverindex{ \ottnt{D} }{ \text{\unboldmath$\mathit{I}$} }   \ottsym{/}   \algeffseqoverindex{ \alpha }{ \text{\unboldmath$\mathit{I}$} }   ]  $, and
  \item $\Gamma  \ottsym{,}   \algeffseqoverindex{ \beta_{{\mathrm{1}}} }{ \text{\unboldmath$\mathit{J_{{\mathrm{1}}}}$} }   \vdash    \text{\unboldmath$\forall$}  \,  \algeffseqoverindex{ \beta_{{\mathrm{2}}} }{ \text{\unboldmath$\mathit{J_{{\mathrm{2}}}}$} }   \ottsym{.} \, \ottnt{B}    [   \algeffseqoverindex{ \ottnt{D} }{ \text{\unboldmath$\mathit{I}$} }   \ottsym{/}   \algeffseqoverindex{ \alpha }{ \text{\unboldmath$\mathit{I}$} }   ]    \sqsubseteq  \ottnt{C'}$
 \end{itemize}
 for some $ \algeffseqoverindex{ \beta_{{\mathrm{2}}} }{ \text{\unboldmath$\mathit{J_{{\mathrm{2}}}}$} } $ and $ \algeffseqoverindex{ \ottnt{D} }{ \text{\unboldmath$\mathit{I}$} } $.

 We show the conclusion by letting $ \algeffseqoverindex{ \beta }{ \text{\unboldmath$\mathit{J}$} }  \,  =  \,  \algeffseqoverindex{ \beta_{{\mathrm{1}}} }{ \text{\unboldmath$\mathit{J_{{\mathrm{1}}}}$} }   \ottsym{,}   \algeffseqoverindex{ \beta_{{\mathrm{2}}} }{ \text{\unboldmath$\mathit{J_{{\mathrm{2}}}}$} } $.
 It suffices to show that, for any $\mathit{y} \,  \not\in  \,  \mathit{dom}  (  \Gamma  ) $,
 \[
  \Gamma  \ottsym{,}  \mathit{y} \,  \mathord{:}  \,  \text{\unboldmath$\forall$}  \,  \algeffseqoverindex{ \beta_{{\mathrm{1}}} }{ \text{\unboldmath$\mathit{J_{{\mathrm{1}}}}$} }   \ottsym{.} \,  \text{\unboldmath$\forall$}  \,  \algeffseqoverindex{ \beta_{{\mathrm{2}}} }{ \text{\unboldmath$\mathit{J_{{\mathrm{2}}}}$} }   \ottsym{.} \, \ottnt{B} \,  [   \algeffseqoverindex{ \ottnt{D} }{ \text{\unboldmath$\mathit{I}$} }   \ottsym{/}   \algeffseqoverindex{ \alpha }{ \text{\unboldmath$\mathit{I}$} }   ]   \vdash   \ottnt{E}  [  \mathit{y}  ]   \ottsym{:}  \ottnt{C}.
 \]
 Since $\Gamma  \ottsym{,}   \algeffseqoverindex{ \beta_{{\mathrm{1}}} }{ \text{\unboldmath$\mathit{J_{{\mathrm{1}}}}$} }   \vdash    \text{\unboldmath$\forall$}  \,  \algeffseqoverindex{ \beta_{{\mathrm{2}}} }{ \text{\unboldmath$\mathit{J_{{\mathrm{2}}}}$} }   \ottsym{.} \, \ottnt{B}    [   \algeffseqoverindex{ \ottnt{D} }{ \text{\unboldmath$\mathit{I}$} }   \ottsym{/}   \algeffseqoverindex{ \alpha }{ \text{\unboldmath$\mathit{I}$} }   ]    \sqsubseteq  \ottnt{C'}$,
 we have
 \[
  \Gamma  \vdash    \text{\unboldmath$\forall$}  \,  \algeffseqoverindex{ \beta_{{\mathrm{1}}} }{ \text{\unboldmath$\mathit{J_{{\mathrm{1}}}}$} }   \ottsym{.} \,  \text{\unboldmath$\forall$}  \,  \algeffseqoverindex{ \beta_{{\mathrm{2}}} }{ \text{\unboldmath$\mathit{J_{{\mathrm{2}}}}$} }   \ottsym{.} \, \ottnt{B}    [   \algeffseqoverindex{ \ottnt{D} }{ \text{\unboldmath$\mathit{I}$} }   \ottsym{/}   \algeffseqoverindex{ \alpha }{ \text{\unboldmath$\mathit{I}$} }   ]    \sqsubseteq   \text{\unboldmath$\forall$}  \,  \algeffseqoverindex{ \beta_{{\mathrm{1}}} }{ \text{\unboldmath$\mathit{J_{{\mathrm{1}}}}$} }   \ottsym{.} \, \ottnt{C'}
 \]
 by \Srule{Poly}.
 Since $\Gamma  \ottsym{,}  \mathit{y} \,  \mathord{:}  \,  \text{\unboldmath$\forall$}  \,  \algeffseqoverindex{ \beta_{{\mathrm{1}}} }{ \text{\unboldmath$\mathit{J_{{\mathrm{1}}}}$} }   \ottsym{.} \, \ottnt{C'}  \vdash   \ottnt{E}  [  \mathit{y}  ]   \ottsym{:}  \ottnt{C}$,
 we have
 \[
  \Gamma  \ottsym{,}  \mathit{y} \,  \mathord{:}  \,  \text{\unboldmath$\forall$}  \,  \algeffseqoverindex{ \beta_{{\mathrm{1}}} }{ \text{\unboldmath$\mathit{J_{{\mathrm{1}}}}$} }   \ottsym{.} \,  \text{\unboldmath$\forall$}  \,  \algeffseqoverindex{ \beta_{{\mathrm{2}}} }{ \text{\unboldmath$\mathit{J_{{\mathrm{2}}}}$} }   \ottsym{.} \, \ottnt{B} \,  [   \algeffseqoverindex{ \ottnt{D} }{ \text{\unboldmath$\mathit{I}$} }   \ottsym{/}   \algeffseqoverindex{ \alpha }{ \text{\unboldmath$\mathit{I}$} }   ]   \vdash   \ottnt{E}  [  \mathit{y}  ]   \ottsym{:}  \ottnt{C}.
 \]
 by \reflem{var-subtype}.
\end{proof}

\begin{lemmap}{Type containment inversion: product types}{subtyping-inv-prod}
 If $\Gamma  \vdash    \text{\unboldmath$\forall$}  \,  \algeffseqoverindex{ \alpha_{{\mathrm{1}}} }{ \text{\unboldmath$\mathit{I_{{\mathrm{1}}}}$} }   \ottsym{.} \, \ottnt{A_{{\mathrm{1}}}}  \times  \ottnt{A_{{\mathrm{2}}}}   \sqsubseteq    \text{\unboldmath$\forall$}  \,  \algeffseqoverindex{ \alpha_{{\mathrm{2}}} }{ \text{\unboldmath$\mathit{I_{{\mathrm{2}}}}$} }   \ottsym{.} \, \ottnt{B_{{\mathrm{1}}}}  \times  \ottnt{B_{{\mathrm{2}}}} $,
 then there exist $ \algeffseqoverindex{ \alpha_{{\mathrm{11}}} }{ \text{\unboldmath$\mathit{I_{{\mathrm{11}}}}$} } $, $ \algeffseqoverindex{ \alpha_{{\mathrm{12}}} }{ \text{\unboldmath$\mathit{I_{{\mathrm{12}}}}$} } $, $ \algeffseqoverindex{ \beta }{ \text{\unboldmath$\mathit{J}$} } $, and $ \algeffseqoverindex{ \ottnt{C} }{ \text{\unboldmath$\mathit{I_{{\mathrm{11}}}}$} } $
 such that
 \begin{itemize}
  \item $\ottsym{\{}   \algeffseqoverindex{ \alpha_{{\mathrm{1}}} }{ \text{\unboldmath$\mathit{I_{{\mathrm{1}}}}$} }   \ottsym{\}} \,  =  \, \ottsym{\{}   \algeffseqoverindex{ \alpha_{{\mathrm{11}}} }{ \text{\unboldmath$\mathit{I_{{\mathrm{11}}}}$} }   \ottsym{\}} \,  \mathbin{\uplus}  \, \ottsym{\{}   \algeffseqoverindex{ \alpha_{{\mathrm{12}}} }{ \text{\unboldmath$\mathit{I_{{\mathrm{12}}}}$} }   \ottsym{\}}$,
  \item $\Gamma  \ottsym{,}   \algeffseqoverindex{ \alpha_{{\mathrm{2}}} }{ \text{\unboldmath$\mathit{I_{{\mathrm{2}}}}$} }   \ottsym{,}   \algeffseqoverindex{ \beta }{ \text{\unboldmath$\mathit{J}$} }   \vdash   \algeffseqoverindex{ \ottnt{C} }{ \text{\unboldmath$\mathit{I_{{\mathrm{11}}}}$} } $,
  \item $\Gamma  \ottsym{,}   \algeffseqoverindex{ \alpha_{{\mathrm{2}}} }{ \text{\unboldmath$\mathit{I_{{\mathrm{2}}}}$} }   \vdash    \text{\unboldmath$\forall$}  \,  \algeffseqoverindex{ \alpha_{{\mathrm{12}}} }{ \text{\unboldmath$\mathit{I_{{\mathrm{12}}}}$} }   \ottsym{.} \,  \text{\unboldmath$\forall$}  \,  \algeffseqoverindex{ \beta }{ \text{\unboldmath$\mathit{J}$} }   \ottsym{.} \, \ottnt{A_{{\mathrm{1}}}}    [   \algeffseqoverindex{ \ottnt{C} }{ \text{\unboldmath$\mathit{I_{{\mathrm{11}}}}$} }   \ottsym{/}   \algeffseqoverindex{ \alpha_{{\mathrm{11}}} }{ \text{\unboldmath$\mathit{I_{{\mathrm{11}}}}$} }   ]    \sqsubseteq  \ottnt{B_{{\mathrm{1}}}}$,
  \item $\Gamma  \ottsym{,}   \algeffseqoverindex{ \alpha_{{\mathrm{2}}} }{ \text{\unboldmath$\mathit{I_{{\mathrm{2}}}}$} }   \vdash    \text{\unboldmath$\forall$}  \,  \algeffseqoverindex{ \alpha_{{\mathrm{12}}} }{ \text{\unboldmath$\mathit{I_{{\mathrm{12}}}}$} }   \ottsym{.} \,  \text{\unboldmath$\forall$}  \,  \algeffseqoverindex{ \beta }{ \text{\unboldmath$\mathit{J}$} }   \ottsym{.} \, \ottnt{A_{{\mathrm{2}}}}    [   \algeffseqoverindex{ \ottnt{C} }{ \text{\unboldmath$\mathit{I_{{\mathrm{11}}}}$} }   \ottsym{/}   \algeffseqoverindex{ \alpha_{{\mathrm{11}}} }{ \text{\unboldmath$\mathit{I_{{\mathrm{11}}}}$} }   ]    \sqsubseteq  \ottnt{B_{{\mathrm{2}}}}$, and
  \item type variables in $\ottsym{\{}   \algeffseqoverindex{ \beta }{ \text{\unboldmath$\mathit{J}$} }   \ottsym{\}}$ do not appear free in $\ottnt{A_{{\mathrm{1}}}}$ and $\ottnt{A_{{\mathrm{2}}}}$.
 \end{itemize}
\end{lemmap}
\begin{proof}
 \iffullproof
 By induction on the type containment derivation.  Throughout the proof, we use the
 fact of $\vdash  \Gamma$ for applying \Srule{Refl}; it is shown easily by induction
 on the type containment derivation.
 \begin{caseanalysis}
  \case \Srule{Refl}: We have $ \algeffseqoverindex{ \alpha_{{\mathrm{1}}} }{ \text{\unboldmath$\mathit{I_{{\mathrm{1}}}}$} }  \,  =  \,  \algeffseqoverindex{ \alpha_{{\mathrm{2}}} }{ \text{\unboldmath$\mathit{I_{{\mathrm{2}}}}$} } $ and $\ottnt{A_{{\mathrm{1}}}} \,  =  \, \ottnt{B_{{\mathrm{1}}}}$ and $\ottnt{A_{{\mathrm{2}}}} \,  =  \, \ottnt{B_{{\mathrm{2}}}}$.
   Let $ \algeffseqoverindex{ \alpha_{{\mathrm{11}}} }{ \text{\unboldmath$\mathit{I_{{\mathrm{11}}}}$} } $, $ \algeffseqoverindex{ \ottnt{C} }{ \text{\unboldmath$\mathit{I_{{\mathrm{11}}}}$} } $, and $ \algeffseqoverindex{ \beta }{ \text{\unboldmath$\mathit{J}$} } $ be the empty sequence and
   $ \algeffseqoverindex{ \alpha_{{\mathrm{12}}} }{ \text{\unboldmath$\mathit{I_{{\mathrm{12}}}}$} }  \,  =  \,  \algeffseqoverindex{ \alpha_{{\mathrm{1}}} }{ \text{\unboldmath$\mathit{I_{{\mathrm{1}}}}$} } $.
   We have to show that
   \begin{itemize}
    \item $\Gamma  \ottsym{,}   \algeffseqoverindex{ \alpha_{{\mathrm{2}}} }{ \text{\unboldmath$\mathit{I_{{\mathrm{2}}}}$} }   \vdash   \text{\unboldmath$\forall$}  \,  \algeffseqoverindex{ \alpha_{{\mathrm{1}}} }{ \text{\unboldmath$\mathit{I_{{\mathrm{1}}}}$} }   \ottsym{.} \, \ottnt{A_{{\mathrm{1}}}}  \sqsubseteq  \ottnt{B_{{\mathrm{1}}}}$ and
    \item $\Gamma  \ottsym{,}   \algeffseqoverindex{ \alpha_{{\mathrm{2}}} }{ \text{\unboldmath$\mathit{I_{{\mathrm{2}}}}$} }   \vdash   \text{\unboldmath$\forall$}  \,  \algeffseqoverindex{ \alpha_{{\mathrm{1}}} }{ \text{\unboldmath$\mathit{I_{{\mathrm{1}}}}$} }   \ottsym{.} \, \ottnt{A_{{\mathrm{2}}}}  \sqsubseteq  \ottnt{B_{{\mathrm{2}}}}$.
   \end{itemize}
   Let $\ottmv{i} \in \{1,2\}$.
   We have $\Gamma  \ottsym{,}   \algeffseqoverindex{ \alpha_{{\mathrm{2}}} }{ \text{\unboldmath$\mathit{I_{{\mathrm{2}}}}$} }   \vdash   \text{\unboldmath$\forall$}  \,  \algeffseqoverindex{ \alpha_{{\mathrm{1}}} }{ \text{\unboldmath$\mathit{I_{{\mathrm{1}}}}$} }   \ottsym{.} \, \ottnt{A_{\ottmv{i}}}  \sqsubseteq   \text{\unboldmath$\forall$}  \,  \algeffseqoverindex{ \alpha_{{\mathrm{2}}} }{ \text{\unboldmath$\mathit{I_{{\mathrm{2}}}}$} }   \ottsym{.} \, \ottnt{B_{\ottmv{i}}}$
   by \Srule{Refl}.  By \Srule{Inst}, $\Gamma  \ottsym{,}   \algeffseqoverindex{ \alpha_{{\mathrm{2}}} }{ \text{\unboldmath$\mathit{I_{{\mathrm{2}}}}$} }   \vdash   \text{\unboldmath$\forall$}  \,  \algeffseqoverindex{ \alpha_{{\mathrm{1}}} }{ \text{\unboldmath$\mathit{I_{{\mathrm{1}}}}$} }   \ottsym{.} \, \ottnt{A_{\ottmv{i}}}  \sqsubseteq  \ottnt{B_{\ottmv{i}}}$.

  \case \Srule{Trans}:
   By inversion, we have
   $\Gamma  \vdash    \text{\unboldmath$\forall$}  \,  \algeffseqoverindex{ \alpha_{{\mathrm{1}}} }{ \text{\unboldmath$\mathit{I_{{\mathrm{1}}}}$} }   \ottsym{.} \, \ottnt{A_{{\mathrm{1}}}}  \times  \ottnt{A_{{\mathrm{2}}}}   \sqsubseteq  \ottnt{D}$ and
   $\Gamma  \vdash  \ottnt{D}  \sqsubseteq    \text{\unboldmath$\forall$}  \,  \algeffseqoverindex{ \alpha_{{\mathrm{2}}} }{ \text{\unboldmath$\mathit{I_{{\mathrm{2}}}}$} }   \ottsym{.} \, \ottnt{B_{{\mathrm{1}}}}  \times  \ottnt{B_{{\mathrm{2}}}} $ for some $\ottnt{D}$.
   By \reflem{subtyping-unqualify},
   $\ottnt{D} \,  =  \,  \text{\unboldmath$\forall$}  \,  \algeffseqoverindex{ \alpha_{{\mathrm{3}}} }{ \text{\unboldmath$\mathit{I_{{\mathrm{3}}}}$} }   \ottsym{.} \,  \ottnt{D_{{\mathrm{1}}}}  \times  \ottnt{D_{{\mathrm{2}}}} $ for some $ \algeffseqoverindex{ \alpha_{{\mathrm{3}}} }{ \text{\unboldmath$\mathit{I_{{\mathrm{3}}}}$} } $, $\ottnt{D_{{\mathrm{1}}}}$, and $\ottnt{D_{{\mathrm{2}}}}$.
   By the IH on $\Gamma  \vdash    \text{\unboldmath$\forall$}  \,  \algeffseqoverindex{ \alpha_{{\mathrm{1}}} }{ \text{\unboldmath$\mathit{I_{{\mathrm{1}}}}$} }   \ottsym{.} \, \ottnt{A_{{\mathrm{1}}}}  \times  \ottnt{A_{{\mathrm{2}}}}   \sqsubseteq    \text{\unboldmath$\forall$}  \,  \algeffseqoverindex{ \alpha_{{\mathrm{3}}} }{ \text{\unboldmath$\mathit{I_{{\mathrm{3}}}}$} }   \ottsym{.} \, \ottnt{D_{{\mathrm{1}}}}  \times  \ottnt{D_{{\mathrm{2}}}} $,
   there exist $ \algeffseqoverindex{ \alpha_{{\mathrm{11}}} }{ \text{\unboldmath$\mathit{I_{{\mathrm{11}}}}$} } $, $ \algeffseqoverindex{ \alpha_{{\mathrm{12}}} }{ \text{\unboldmath$\mathit{I_{{\mathrm{12}}}}$} } $, $ \algeffseqoverindex{ \ottnt{C_{{\mathrm{1}}}} }{ \text{\unboldmath$\mathit{I_{{\mathrm{11}}}}$} } $, and $ \algeffseqoverindex{ \beta_{{\mathrm{1}}} }{ \text{\unboldmath$\mathit{J_{{\mathrm{1}}}}$} } $
   such that
   \begin{itemize}
    \item $\ottsym{\{}   \algeffseqoverindex{ \alpha_{{\mathrm{1}}} }{ \text{\unboldmath$\mathit{I_{{\mathrm{1}}}}$} }   \ottsym{\}} \,  =  \, \ottsym{\{}   \algeffseqoverindex{ \alpha_{{\mathrm{11}}} }{ \text{\unboldmath$\mathit{I_{{\mathrm{11}}}}$} }   \ottsym{\}} \,  \mathbin{\uplus}  \, \ottsym{\{}   \algeffseqoverindex{ \alpha_{{\mathrm{12}}} }{ \text{\unboldmath$\mathit{I_{{\mathrm{12}}}}$} }   \ottsym{\}}$,
    \item $\Gamma  \ottsym{,}   \algeffseqoverindex{ \alpha_{{\mathrm{3}}} }{ \text{\unboldmath$\mathit{I_{{\mathrm{3}}}}$} }   \ottsym{,}   \algeffseqoverindex{ \beta_{{\mathrm{1}}} }{ \text{\unboldmath$\mathit{J_{{\mathrm{1}}}}$} }   \vdash   \algeffseqoverindex{ \ottnt{C_{{\mathrm{1}}}} }{ \text{\unboldmath$\mathit{I_{{\mathrm{11}}}}$} } $,
    \item $\Gamma  \ottsym{,}   \algeffseqoverindex{ \alpha_{{\mathrm{3}}} }{ \text{\unboldmath$\mathit{I_{{\mathrm{3}}}}$} }   \vdash    \text{\unboldmath$\forall$}  \,  \algeffseqoverindex{ \alpha_{{\mathrm{12}}} }{ \text{\unboldmath$\mathit{I_{{\mathrm{12}}}}$} }   \ottsym{.} \,  \text{\unboldmath$\forall$}  \,  \algeffseqoverindex{ \beta_{{\mathrm{1}}} }{ \text{\unboldmath$\mathit{J_{{\mathrm{1}}}}$} }   \ottsym{.} \, \ottnt{A_{{\mathrm{1}}}}    [   \algeffseqoverindex{ \ottnt{C_{{\mathrm{1}}}} }{ \text{\unboldmath$\mathit{I_{{\mathrm{11}}}}$} }   \ottsym{/}   \algeffseqoverindex{ \alpha_{{\mathrm{11}}} }{ \text{\unboldmath$\mathit{I_{{\mathrm{11}}}}$} }   ]    \sqsubseteq  \ottnt{D_{{\mathrm{1}}}}$,
    \item $\Gamma  \ottsym{,}   \algeffseqoverindex{ \alpha_{{\mathrm{3}}} }{ \text{\unboldmath$\mathit{I_{{\mathrm{3}}}}$} }   \vdash    \text{\unboldmath$\forall$}  \,  \algeffseqoverindex{ \alpha_{{\mathrm{12}}} }{ \text{\unboldmath$\mathit{I_{{\mathrm{12}}}}$} }   \ottsym{.} \,  \text{\unboldmath$\forall$}  \,  \algeffseqoverindex{ \beta_{{\mathrm{1}}} }{ \text{\unboldmath$\mathit{J_{{\mathrm{1}}}}$} }   \ottsym{.} \, \ottnt{A_{{\mathrm{2}}}}    [   \algeffseqoverindex{ \ottnt{C_{{\mathrm{1}}}} }{ \text{\unboldmath$\mathit{I_{{\mathrm{11}}}}$} }   \ottsym{/}   \algeffseqoverindex{ \alpha_{{\mathrm{11}}} }{ \text{\unboldmath$\mathit{I_{{\mathrm{11}}}}$} }   ]    \sqsubseteq  \ottnt{D_{{\mathrm{2}}}}$, and
    \item type variables in $ \algeffseqoverindex{ \beta_{{\mathrm{1}}} }{ \text{\unboldmath$\mathit{J_{{\mathrm{1}}}}$} } $ do not appear free in $\ottnt{A_{{\mathrm{1}}}}$ and $\ottnt{A_{{\mathrm{2}}}}$.
   \end{itemize}
   By the IH on $\Gamma  \vdash    \text{\unboldmath$\forall$}  \,  \algeffseqoverindex{ \alpha_{{\mathrm{3}}} }{ \text{\unboldmath$\mathit{I_{{\mathrm{3}}}}$} }   \ottsym{.} \, \ottnt{D_{{\mathrm{1}}}}  \times  \ottnt{D_{{\mathrm{2}}}}   \sqsubseteq    \text{\unboldmath$\forall$}  \,  \algeffseqoverindex{ \alpha_{{\mathrm{2}}} }{ \text{\unboldmath$\mathit{I_{{\mathrm{2}}}}$} }   \ottsym{.} \, \ottnt{B_{{\mathrm{1}}}}  \times  \ottnt{B_{{\mathrm{2}}}} $,
   there exist $ \algeffseqoverindex{ \alpha_{{\mathrm{31}}} }{ \text{\unboldmath$\mathit{I_{{\mathrm{31}}}}$} } $, $ \algeffseqoverindex{ \alpha_{{\mathrm{32}}} }{ \text{\unboldmath$\mathit{I_{{\mathrm{32}}}}$} } $, $ \algeffseqoverindex{ \ottnt{C_{{\mathrm{3}}}} }{ \text{\unboldmath$\mathit{I_{{\mathrm{31}}}}$} } $, and $ \algeffseqoverindex{ \beta_{{\mathrm{3}}} }{ \text{\unboldmath$\mathit{J_{{\mathrm{3}}}}$} } $
   such that
   \begin{itemize}
    \item $\ottsym{\{}   \algeffseqoverindex{ \alpha_{{\mathrm{3}}} }{ \text{\unboldmath$\mathit{I_{{\mathrm{3}}}}$} }   \ottsym{\}} \,  =  \, \ottsym{\{}   \algeffseqoverindex{ \alpha_{{\mathrm{31}}} }{ \text{\unboldmath$\mathit{I_{{\mathrm{31}}}}$} }   \ottsym{\}} \,  \mathbin{\uplus}  \, \ottsym{\{}   \algeffseqoverindex{ \alpha_{{\mathrm{32}}} }{ \text{\unboldmath$\mathit{I_{{\mathrm{32}}}}$} }   \ottsym{\}}$,
    \item $\Gamma  \ottsym{,}   \algeffseqoverindex{ \alpha_{{\mathrm{2}}} }{ \text{\unboldmath$\mathit{I_{{\mathrm{2}}}}$} }   \ottsym{,}   \algeffseqoverindex{ \beta_{{\mathrm{3}}} }{ \text{\unboldmath$\mathit{J_{{\mathrm{3}}}}$} }   \vdash   \algeffseqoverindex{ \ottnt{C_{{\mathrm{3}}}} }{ \text{\unboldmath$\mathit{I_{{\mathrm{31}}}}$} } $,
    \item $\Gamma  \ottsym{,}   \algeffseqoverindex{ \alpha_{{\mathrm{2}}} }{ \text{\unboldmath$\mathit{I_{{\mathrm{2}}}}$} }   \vdash    \text{\unboldmath$\forall$}  \,  \algeffseqoverindex{ \alpha_{{\mathrm{32}}} }{ \text{\unboldmath$\mathit{I_{{\mathrm{32}}}}$} }   \ottsym{.} \,  \text{\unboldmath$\forall$}  \,  \algeffseqoverindex{ \beta_{{\mathrm{3}}} }{ \text{\unboldmath$\mathit{J_{{\mathrm{3}}}}$} }   \ottsym{.} \, \ottnt{D_{{\mathrm{1}}}}    [   \algeffseqoverindex{ \ottnt{C_{{\mathrm{3}}}} }{ \text{\unboldmath$\mathit{I_{{\mathrm{31}}}}$} }   \ottsym{/}   \algeffseqoverindex{ \alpha_{{\mathrm{31}}} }{ \text{\unboldmath$\mathit{I_{{\mathrm{31}}}}$} }   ]    \sqsubseteq  \ottnt{B_{{\mathrm{1}}}}$,
    \item $\Gamma  \ottsym{,}   \algeffseqoverindex{ \alpha_{{\mathrm{2}}} }{ \text{\unboldmath$\mathit{I_{{\mathrm{2}}}}$} }   \vdash    \text{\unboldmath$\forall$}  \,  \algeffseqoverindex{ \alpha_{{\mathrm{32}}} }{ \text{\unboldmath$\mathit{I_{{\mathrm{32}}}}$} }   \ottsym{.} \,  \text{\unboldmath$\forall$}  \,  \algeffseqoverindex{ \beta_{{\mathrm{3}}} }{ \text{\unboldmath$\mathit{J_{{\mathrm{3}}}}$} }   \ottsym{.} \, \ottnt{D_{{\mathrm{2}}}}    [   \algeffseqoverindex{ \ottnt{C_{{\mathrm{3}}}} }{ \text{\unboldmath$\mathit{I_{{\mathrm{31}}}}$} }   \ottsym{/}   \algeffseqoverindex{ \alpha_{{\mathrm{31}}} }{ \text{\unboldmath$\mathit{I_{{\mathrm{31}}}}$} }   ]    \sqsubseteq  \ottnt{B_{{\mathrm{2}}}}$, and
    \item type variables in $ \algeffseqoverindex{ \beta_{{\mathrm{3}}} }{ \text{\unboldmath$\mathit{J_{{\mathrm{3}}}}$} } $ do not appear free in $\ottnt{D_{{\mathrm{1}}}}$ and $\ottnt{D_{{\mathrm{2}}}}$.
   \end{itemize}

   We show the conclusion by letting
   $ \algeffseqoverindex{ \ottnt{C} }{ \text{\unboldmath$\mathit{I_{{\mathrm{11}}}}$} }  \,  =  \,  \algeffseqoverindex{  \ottnt{C_{{\mathrm{1}}}}    [   \algeffseqoverindex{ \ottnt{C_{{\mathrm{3}}}} }{ \text{\unboldmath$\mathit{I_{{\mathrm{31}}}}$} }   \ottsym{/}   \algeffseqoverindex{ \alpha_{{\mathrm{31}}} }{ \text{\unboldmath$\mathit{I_{{\mathrm{31}}}}$} }   ]   }{ \text{\unboldmath$\mathit{I_{{\mathrm{11}}}}$} } $ and
   $ \algeffseqoverindex{ \beta }{ \text{\unboldmath$\mathit{J}$} }  \,  =  \,  \algeffseqoverindex{ \alpha_{{\mathrm{32}}} }{ \text{\unboldmath$\mathit{I_{{\mathrm{32}}}}$} }   \ottsym{,}   \algeffseqoverindex{ \beta_{{\mathrm{3}}} }{ \text{\unboldmath$\mathit{J_{{\mathrm{3}}}}$} }   \ottsym{,}   \algeffseqoverindex{ \beta_{{\mathrm{1}}} }{ \text{\unboldmath$\mathit{J_{{\mathrm{1}}}}$} } $.
   We have to show that
   \begin{itemize}
    \item $\Gamma  \ottsym{,}   \algeffseqoverindex{ \alpha_{{\mathrm{2}}} }{ \text{\unboldmath$\mathit{I_{{\mathrm{2}}}}$} }   \ottsym{,}   \algeffseqoverindex{ \alpha_{{\mathrm{32}}} }{ \text{\unboldmath$\mathit{I_{{\mathrm{32}}}}$} }   \ottsym{,}   \algeffseqoverindex{ \beta_{{\mathrm{3}}} }{ \text{\unboldmath$\mathit{J_{{\mathrm{3}}}}$} }   \ottsym{,}   \algeffseqoverindex{ \beta_{{\mathrm{1}}} }{ \text{\unboldmath$\mathit{J_{{\mathrm{1}}}}$} }   \vdash   \algeffseqoverindex{  \ottnt{C_{{\mathrm{1}}}}    [   \algeffseqoverindex{ \ottnt{C_{{\mathrm{3}}}} }{ \text{\unboldmath$\mathit{I_{{\mathrm{31}}}}$} }   \ottsym{/}   \algeffseqoverindex{ \alpha_{{\mathrm{31}}} }{ \text{\unboldmath$\mathit{I_{{\mathrm{31}}}}$} }   ]   }{ \text{\unboldmath$\mathit{I_{{\mathrm{11}}}}$} } $,
    \item $\Gamma  \ottsym{,}   \algeffseqoverindex{ \alpha_{{\mathrm{2}}} }{ \text{\unboldmath$\mathit{I_{{\mathrm{2}}}}$} }   \vdash    \text{\unboldmath$\forall$}  \,  \algeffseqoverindex{ \alpha_{{\mathrm{12}}} }{ \text{\unboldmath$\mathit{I_{{\mathrm{12}}}}$} }   \ottsym{.} \,  \text{\unboldmath$\forall$}  \,  \algeffseqoverindex{ \alpha_{{\mathrm{32}}} }{ \text{\unboldmath$\mathit{I_{{\mathrm{32}}}}$} }   \ottsym{.} \,  \text{\unboldmath$\forall$}  \,  \algeffseqoverindex{ \beta_{{\mathrm{3}}} }{ \text{\unboldmath$\mathit{J_{{\mathrm{3}}}}$} }   \ottsym{.} \,  \text{\unboldmath$\forall$}  \,  \algeffseqoverindex{ \beta_{{\mathrm{1}}} }{ \text{\unboldmath$\mathit{J_{{\mathrm{1}}}}$} }   \ottsym{.} \, \ottnt{A_{{\mathrm{1}}}}    [   \algeffseqoverindex{ \ottnt{C} }{ \text{\unboldmath$\mathit{I_{{\mathrm{11}}}}$} }   \ottsym{/}   \algeffseqoverindex{ \alpha_{{\mathrm{11}}} }{ \text{\unboldmath$\mathit{I_{{\mathrm{11}}}}$} }   ]    \sqsubseteq  \ottnt{B_{{\mathrm{1}}}}$, and
    \item $\Gamma  \ottsym{,}   \algeffseqoverindex{ \alpha_{{\mathrm{2}}} }{ \text{\unboldmath$\mathit{I_{{\mathrm{2}}}}$} }   \vdash    \text{\unboldmath$\forall$}  \,  \algeffseqoverindex{ \alpha_{{\mathrm{12}}} }{ \text{\unboldmath$\mathit{I_{{\mathrm{12}}}}$} }   \ottsym{.} \,  \text{\unboldmath$\forall$}  \,  \algeffseqoverindex{ \alpha_{{\mathrm{32}}} }{ \text{\unboldmath$\mathit{I_{{\mathrm{32}}}}$} }   \ottsym{.} \,  \text{\unboldmath$\forall$}  \,  \algeffseqoverindex{ \beta_{{\mathrm{3}}} }{ \text{\unboldmath$\mathit{J_{{\mathrm{3}}}}$} }   \ottsym{.} \,  \text{\unboldmath$\forall$}  \,  \algeffseqoverindex{ \beta_{{\mathrm{1}}} }{ \text{\unboldmath$\mathit{J_{{\mathrm{1}}}}$} }   \ottsym{.} \, \ottnt{A_{{\mathrm{2}}}}    [   \algeffseqoverindex{ \ottnt{C} }{ \text{\unboldmath$\mathit{I_{{\mathrm{11}}}}$} }   \ottsym{/}   \algeffseqoverindex{ \alpha_{{\mathrm{11}}} }{ \text{\unboldmath$\mathit{I_{{\mathrm{11}}}}$} }   ]    \sqsubseteq  \ottnt{B_{{\mathrm{2}}}}$.
   \end{itemize}

   The first requirement is shown by
   $\Gamma  \ottsym{,}   \algeffseqoverindex{ \alpha_{{\mathrm{3}}} }{ \text{\unboldmath$\mathit{I_{{\mathrm{3}}}}$} }   \ottsym{,}   \algeffseqoverindex{ \beta_{{\mathrm{1}}} }{ \text{\unboldmath$\mathit{J_{{\mathrm{1}}}}$} }   \vdash   \algeffseqoverindex{ \ottnt{C_{{\mathrm{1}}}} }{ \text{\unboldmath$\mathit{I_{{\mathrm{11}}}}$} } $ and
   $\Gamma  \ottsym{,}   \algeffseqoverindex{ \alpha_{{\mathrm{2}}} }{ \text{\unboldmath$\mathit{I_{{\mathrm{2}}}}$} }   \ottsym{,}   \algeffseqoverindex{ \beta_{{\mathrm{3}}} }{ \text{\unboldmath$\mathit{J_{{\mathrm{3}}}}$} }   \vdash   \algeffseqoverindex{ \ottnt{C_{{\mathrm{3}}}} }{ \text{\unboldmath$\mathit{I_{{\mathrm{31}}}}$} } $ and
   \reflem{weakening} (\ref{lem:weakening:type}) and
   \reflem{ty-subst} (\ref{lem:ty-subst:type}).

   Next, we show the second and third requirements.
   Let $\ottmv{i} \in \{1,2\}$.
   Since $\Gamma  \ottsym{,}   \algeffseqoverindex{ \alpha_{{\mathrm{3}}} }{ \text{\unboldmath$\mathit{I_{{\mathrm{3}}}}$} }   \vdash    \text{\unboldmath$\forall$}  \,  \algeffseqoverindex{ \alpha_{{\mathrm{12}}} }{ \text{\unboldmath$\mathit{I_{{\mathrm{12}}}}$} }   \ottsym{.} \,  \text{\unboldmath$\forall$}  \,  \algeffseqoverindex{ \beta_{{\mathrm{1}}} }{ \text{\unboldmath$\mathit{J_{{\mathrm{1}}}}$} }   \ottsym{.} \, \ottnt{A_{\ottmv{i}}}    [   \algeffseqoverindex{ \ottnt{C_{{\mathrm{1}}}} }{ \text{\unboldmath$\mathit{I_{{\mathrm{11}}}}$} }   \ottsym{/}   \algeffseqoverindex{ \alpha_{{\mathrm{11}}} }{ \text{\unboldmath$\mathit{I_{{\mathrm{11}}}}$} }   ]    \sqsubseteq  \ottnt{D_{\ottmv{i}}}$ and
   $\Gamma  \ottsym{,}   \algeffseqoverindex{ \alpha_{{\mathrm{2}}} }{ \text{\unboldmath$\mathit{I_{{\mathrm{2}}}}$} }   \ottsym{,}   \algeffseqoverindex{ \beta_{{\mathrm{3}}} }{ \text{\unboldmath$\mathit{J_{{\mathrm{3}}}}$} }   \vdash   \algeffseqoverindex{ \ottnt{C_{{\mathrm{3}}}} }{ \text{\unboldmath$\mathit{I_{{\mathrm{31}}}}$} } $,
   we have
   $\Gamma  \ottsym{,}   \algeffseqoverindex{ \alpha_{{\mathrm{2}}} }{ \text{\unboldmath$\mathit{I_{{\mathrm{2}}}}$} }   \ottsym{,}   \algeffseqoverindex{ \alpha_{{\mathrm{3}}} }{ \text{\unboldmath$\mathit{I_{{\mathrm{3}}}}$} }   \ottsym{,}   \algeffseqoverindex{ \beta_{{\mathrm{3}}} }{ \text{\unboldmath$\mathit{J_{{\mathrm{3}}}}$} }   \vdash    \text{\unboldmath$\forall$}  \,  \algeffseqoverindex{ \alpha_{{\mathrm{12}}} }{ \text{\unboldmath$\mathit{I_{{\mathrm{12}}}}$} }   \ottsym{.} \,  \text{\unboldmath$\forall$}  \,  \algeffseqoverindex{ \beta_{{\mathrm{1}}} }{ \text{\unboldmath$\mathit{J_{{\mathrm{1}}}}$} }   \ottsym{.} \, \ottnt{A_{\ottmv{i}}}    [   \algeffseqoverindex{ \ottnt{C_{{\mathrm{1}}}} }{ \text{\unboldmath$\mathit{I_{{\mathrm{11}}}}$} }   \ottsym{/}   \algeffseqoverindex{ \alpha_{{\mathrm{11}}} }{ \text{\unboldmath$\mathit{I_{{\mathrm{11}}}}$} }   ]    \sqsubseteq  \ottnt{D_{\ottmv{i}}}$ and
   $\Gamma  \ottsym{,}   \algeffseqoverindex{ \alpha_{{\mathrm{2}}} }{ \text{\unboldmath$\mathit{I_{{\mathrm{2}}}}$} }   \ottsym{,}   \algeffseqoverindex{ \alpha_{{\mathrm{32}}} }{ \text{\unboldmath$\mathit{I_{{\mathrm{32}}}}$} }   \ottsym{,}   \algeffseqoverindex{ \beta_{{\mathrm{3}}} }{ \text{\unboldmath$\mathit{J_{{\mathrm{3}}}}$} }   \vdash   \algeffseqoverindex{ \ottnt{C_{{\mathrm{3}}}} }{ \text{\unboldmath$\mathit{I_{{\mathrm{31}}}}$} } $
   by \reflem{weakening} (\ref{lem:weakening:sub}) and (\ref{lem:weakening:type}), respectively.
   Thus, by \reflem{ty-subst} (\ref{lem:ty-subst:sub}),
   \[
    \Gamma  \ottsym{,}   \algeffseqoverindex{ \alpha_{{\mathrm{2}}} }{ \text{\unboldmath$\mathit{I_{{\mathrm{2}}}}$} }   \ottsym{,}   \algeffseqoverindex{ \alpha_{{\mathrm{32}}} }{ \text{\unboldmath$\mathit{I_{{\mathrm{32}}}}$} }   \ottsym{,}   \algeffseqoverindex{ \beta_{{\mathrm{3}}} }{ \text{\unboldmath$\mathit{J_{{\mathrm{3}}}}$} }   \vdash    \text{\unboldmath$\forall$}  \,  \algeffseqoverindex{ \alpha_{{\mathrm{12}}} }{ \text{\unboldmath$\mathit{I_{{\mathrm{12}}}}$} }   \ottsym{.} \,  \text{\unboldmath$\forall$}  \,  \algeffseqoverindex{ \beta_{{\mathrm{1}}} }{ \text{\unboldmath$\mathit{J_{{\mathrm{1}}}}$} }   \ottsym{.} \, \ottnt{A_{\ottmv{i}}}    [   \algeffseqoverindex{ \ottnt{C} }{ \text{\unboldmath$\mathit{I_{{\mathrm{11}}}}$} }   \ottsym{/}   \algeffseqoverindex{ \alpha_{{\mathrm{11}}} }{ \text{\unboldmath$\mathit{I_{{\mathrm{11}}}}$} }   ]    \sqsubseteq   \ottnt{D_{\ottmv{i}}}    [   \algeffseqoverindex{ \ottnt{C_{{\mathrm{3}}}} }{ \text{\unboldmath$\mathit{I_{{\mathrm{31}}}}$} }   \ottsym{/}   \algeffseqoverindex{ \alpha_{{\mathrm{31}}} }{ \text{\unboldmath$\mathit{I_{{\mathrm{31}}}}$} }   ]  
   \]
   (note that we can suppose that $ \algeffseqoverindex{ \alpha_{{\mathrm{31}}} }{ \text{\unboldmath$\mathit{I_{{\mathrm{31}}}}$} } $ do not appear free in $\ottnt{A_{\ottmv{i}}}$).
   By \Srule{Poly},
   \[
    \Gamma  \ottsym{,}   \algeffseqoverindex{ \alpha_{{\mathrm{2}}} }{ \text{\unboldmath$\mathit{I_{{\mathrm{2}}}}$} }   \vdash    \text{\unboldmath$\forall$}  \,  \algeffseqoverindex{ \alpha_{{\mathrm{32}}} }{ \text{\unboldmath$\mathit{I_{{\mathrm{32}}}}$} }   \ottsym{.} \,  \text{\unboldmath$\forall$}  \,  \algeffseqoverindex{ \beta_{{\mathrm{3}}} }{ \text{\unboldmath$\mathit{J_{{\mathrm{3}}}}$} }   \ottsym{.} \,  \text{\unboldmath$\forall$}  \,  \algeffseqoverindex{ \alpha_{{\mathrm{12}}} }{ \text{\unboldmath$\mathit{I_{{\mathrm{12}}}}$} }   \ottsym{.} \,  \text{\unboldmath$\forall$}  \,  \algeffseqoverindex{ \beta_{{\mathrm{1}}} }{ \text{\unboldmath$\mathit{J_{{\mathrm{1}}}}$} }   \ottsym{.} \, \ottnt{A_{\ottmv{i}}}    [   \algeffseqoverindex{ \ottnt{C} }{ \text{\unboldmath$\mathit{I_{{\mathrm{11}}}}$} }   \ottsym{/}   \algeffseqoverindex{ \alpha_{{\mathrm{11}}} }{ \text{\unboldmath$\mathit{I_{{\mathrm{11}}}}$} }   ]    \sqsubseteq    \text{\unboldmath$\forall$}  \,  \algeffseqoverindex{ \alpha_{{\mathrm{32}}} }{ \text{\unboldmath$\mathit{I_{{\mathrm{32}}}}$} }   \ottsym{.} \,  \text{\unboldmath$\forall$}  \,  \algeffseqoverindex{ \beta_{{\mathrm{3}}} }{ \text{\unboldmath$\mathit{J_{{\mathrm{3}}}}$} }   \ottsym{.} \, \ottnt{D_{\ottmv{i}}}    [   \algeffseqoverindex{ \ottnt{C_{{\mathrm{3}}}} }{ \text{\unboldmath$\mathit{I_{{\mathrm{31}}}}$} }   \ottsym{/}   \algeffseqoverindex{ \alpha_{{\mathrm{31}}} }{ \text{\unboldmath$\mathit{I_{{\mathrm{31}}}}$} }   ]  
   \]
   Since $\Gamma  \ottsym{,}   \algeffseqoverindex{ \alpha_{{\mathrm{2}}} }{ \text{\unboldmath$\mathit{I_{{\mathrm{2}}}}$} }   \vdash    \text{\unboldmath$\forall$}  \,  \algeffseqoverindex{ \alpha_{{\mathrm{32}}} }{ \text{\unboldmath$\mathit{I_{{\mathrm{32}}}}$} }   \ottsym{.} \,  \text{\unboldmath$\forall$}  \,  \algeffseqoverindex{ \beta_{{\mathrm{3}}} }{ \text{\unboldmath$\mathit{J_{{\mathrm{3}}}}$} }   \ottsym{.} \, \ottnt{D_{\ottmv{i}}}    [   \algeffseqoverindex{ \ottnt{C_{{\mathrm{3}}}} }{ \text{\unboldmath$\mathit{I_{{\mathrm{31}}}}$} }   \ottsym{/}   \algeffseqoverindex{ \alpha_{{\mathrm{31}}} }{ \text{\unboldmath$\mathit{I_{{\mathrm{31}}}}$} }   ]    \sqsubseteq  \ottnt{B_{\ottmv{i}}}$,
   we have
   \[
    \Gamma  \ottsym{,}   \algeffseqoverindex{ \alpha_{{\mathrm{2}}} }{ \text{\unboldmath$\mathit{I_{{\mathrm{2}}}}$} }   \vdash    \text{\unboldmath$\forall$}  \,  \algeffseqoverindex{ \alpha_{{\mathrm{32}}} }{ \text{\unboldmath$\mathit{I_{{\mathrm{32}}}}$} }   \ottsym{.} \,  \text{\unboldmath$\forall$}  \,  \algeffseqoverindex{ \beta_{{\mathrm{3}}} }{ \text{\unboldmath$\mathit{J_{{\mathrm{3}}}}$} }   \ottsym{.} \,  \text{\unboldmath$\forall$}  \,  \algeffseqoverindex{ \alpha_{{\mathrm{12}}} }{ \text{\unboldmath$\mathit{I_{{\mathrm{12}}}}$} }   \ottsym{.} \,  \text{\unboldmath$\forall$}  \,  \algeffseqoverindex{ \beta_{{\mathrm{1}}} }{ \text{\unboldmath$\mathit{J_{{\mathrm{1}}}}$} }   \ottsym{.} \, \ottnt{A_{\ottmv{i}}}    [   \algeffseqoverindex{ \ottnt{C} }{ \text{\unboldmath$\mathit{I_{{\mathrm{11}}}}$} }   \ottsym{/}   \algeffseqoverindex{ \alpha_{{\mathrm{11}}} }{ \text{\unboldmath$\mathit{I_{{\mathrm{11}}}}$} }   ]    \sqsubseteq  \ottnt{B_{\ottmv{i}}}
   \]
   by \Srule{Trans}.  Thus, by permutating $\forall$s on the left-hand side,
   \[
    \Gamma  \ottsym{,}   \algeffseqoverindex{ \alpha_{{\mathrm{2}}} }{ \text{\unboldmath$\mathit{I_{{\mathrm{2}}}}$} }   \vdash    \text{\unboldmath$\forall$}  \,  \algeffseqoverindex{ \alpha_{{\mathrm{12}}} }{ \text{\unboldmath$\mathit{I_{{\mathrm{12}}}}$} }   \ottsym{.} \,  \text{\unboldmath$\forall$}  \,  \algeffseqoverindex{ \alpha_{{\mathrm{32}}} }{ \text{\unboldmath$\mathit{I_{{\mathrm{32}}}}$} }   \ottsym{.} \,  \text{\unboldmath$\forall$}  \,  \algeffseqoverindex{ \beta_{{\mathrm{3}}} }{ \text{\unboldmath$\mathit{J_{{\mathrm{3}}}}$} }   \ottsym{.} \,  \text{\unboldmath$\forall$}  \,  \algeffseqoverindex{ \beta_{{\mathrm{1}}} }{ \text{\unboldmath$\mathit{J_{{\mathrm{1}}}}$} }   \ottsym{.} \, \ottnt{A_{\ottmv{i}}}    [   \algeffseqoverindex{ \ottnt{C} }{ \text{\unboldmath$\mathit{I_{{\mathrm{11}}}}$} }   \ottsym{/}   \algeffseqoverindex{ \alpha_{{\mathrm{11}}} }{ \text{\unboldmath$\mathit{I_{{\mathrm{11}}}}$} }   ]    \sqsubseteq  \ottnt{B_{\ottmv{i}}}.
   \]

  \case \Srule{Inst}:
   We have $ \algeffseqoverindex{ \alpha_{{\mathrm{1}}} }{ \text{\unboldmath$\mathit{I_{{\mathrm{1}}}}$} }  \,  =  \, \alpha  \ottsym{,}   \algeffseqoverindex{ \alpha_{{\mathrm{2}}} }{ \text{\unboldmath$\mathit{I_{{\mathrm{2}}}}$} } $ and $\ottnt{B_{{\mathrm{1}}}} \,  =  \,  \ottnt{A_{{\mathrm{1}}}}    [  \ottnt{C}  \ottsym{/}  \alpha  ]  $ and $\ottnt{B_{{\mathrm{2}}}} \,  =  \,  \ottnt{A_{{\mathrm{2}}}}    [  \ottnt{C}  \ottsym{/}  \alpha  ]  $
   for some $\ottnt{C}$ such that $\Gamma  \vdash  \ottnt{C}$.
   We show the conclusion by letting
   $ \algeffseqoverindex{ \alpha_{{\mathrm{11}}} }{ \text{\unboldmath$\mathit{I_{{\mathrm{11}}}}$} }  \,  =  \, \alpha$, $ \algeffseqoverindex{ \alpha_{{\mathrm{12}}} }{ \text{\unboldmath$\mathit{I_{{\mathrm{12}}}}$} }  \,  =  \,  \algeffseqoverindex{ \alpha_{{\mathrm{2}}} }{ \text{\unboldmath$\mathit{I_{{\mathrm{2}}}}$} } $, $ \algeffseqoverindex{ \ottnt{C} }{ \text{\unboldmath$\mathit{I_{{\mathrm{11}}}}$} }  \,  =  \, \ottnt{C}$, and $ \algeffseqoverindex{ \beta }{ \text{\unboldmath$\mathit{J}$} } $ be the empty sequence.
   We have to show that
   \begin{itemize}
    \item $\Gamma  \ottsym{,}   \algeffseqoverindex{ \alpha_{{\mathrm{2}}} }{ \text{\unboldmath$\mathit{I_{{\mathrm{2}}}}$} }   \vdash  \ottnt{C}$,
    \item $\Gamma  \ottsym{,}   \algeffseqoverindex{ \alpha_{{\mathrm{2}}} }{ \text{\unboldmath$\mathit{I_{{\mathrm{2}}}}$} }   \vdash    \text{\unboldmath$\forall$}  \,  \algeffseqoverindex{ \alpha_{{\mathrm{2}}} }{ \text{\unboldmath$\mathit{I_{{\mathrm{2}}}}$} }   \ottsym{.} \, \ottnt{A_{{\mathrm{1}}}}    [  \ottnt{C}  \ottsym{/}  \alpha  ]    \sqsubseteq   \ottnt{A_{{\mathrm{1}}}}    [  \ottnt{C}  \ottsym{/}  \alpha  ]  $,
    \item $\Gamma  \ottsym{,}   \algeffseqoverindex{ \alpha_{{\mathrm{2}}} }{ \text{\unboldmath$\mathit{I_{{\mathrm{2}}}}$} }   \vdash    \text{\unboldmath$\forall$}  \,  \algeffseqoverindex{ \alpha_{{\mathrm{2}}} }{ \text{\unboldmath$\mathit{I_{{\mathrm{2}}}}$} }   \ottsym{.} \, \ottnt{A_{{\mathrm{2}}}}    [  \ottnt{C}  \ottsym{/}  \alpha  ]    \sqsubseteq   \ottnt{A_{{\mathrm{2}}}}    [  \ottnt{C}  \ottsym{/}  \alpha  ]  $.
   \end{itemize}
   The first is shown by \reflem{weakening} (\ref{lem:weakening:typing-context}).
   The second and third are shown by \Srule{Refl} and \Srule{Inst}.

  \case \Srule{Gen}:
   We have $ \algeffseqoverindex{ \alpha_{{\mathrm{2}}} }{ \text{\unboldmath$\mathit{I_{{\mathrm{2}}}}$} }  \,  =  \, \alpha  \ottsym{,}   \algeffseqoverindex{ \alpha_{{\mathrm{1}}} }{ \text{\unboldmath$\mathit{I_{{\mathrm{1}}}}$} } $ and $\ottnt{A_{{\mathrm{1}}}} \,  =  \, \ottnt{B_{{\mathrm{1}}}}$ and $\ottnt{A_{{\mathrm{2}}}} \,  =  \, \ottnt{B_{{\mathrm{2}}}}$ and
   $\alpha \,  \not\in  \,  \mathit{ftv}  (   \text{\unboldmath$\forall$}  \,  \algeffseqoverindex{ \alpha_{{\mathrm{1}}} }{ \text{\unboldmath$\mathit{I_{{\mathrm{1}}}}$} }   \ottsym{.} \,  \ottnt{A_{{\mathrm{1}}}}  \times  \ottnt{A_{{\mathrm{2}}}}   ) $.
   We show the conclusion by letting
   $ \algeffseqoverindex{ \alpha_{{\mathrm{12}}} }{ \text{\unboldmath$\mathit{I_{{\mathrm{12}}}}$} }  \,  =  \,  \algeffseqoverindex{ \alpha_{{\mathrm{1}}} }{ \text{\unboldmath$\mathit{I_{{\mathrm{1}}}}$} } $ and $ \algeffseqoverindex{ \alpha_{{\mathrm{11}}} }{ \text{\unboldmath$\mathit{I_{{\mathrm{11}}}}$} } $, $ \algeffseqoverindex{ \ottnt{C} }{ \text{\unboldmath$\mathit{I_{{\mathrm{11}}}}$} } $, and $ \algeffseqoverindex{ \beta }{ \text{\unboldmath$\mathit{J}$} } $ be the empty sequence.
   We have to show that
   \begin{itemize}
    \item $\Gamma  \ottsym{,}  \alpha  \ottsym{,}   \algeffseqoverindex{ \alpha_{{\mathrm{1}}} }{ \text{\unboldmath$\mathit{I_{{\mathrm{1}}}}$} }   \vdash   \text{\unboldmath$\forall$}  \,  \algeffseqoverindex{ \alpha_{{\mathrm{1}}} }{ \text{\unboldmath$\mathit{I_{{\mathrm{1}}}}$} }   \ottsym{.} \, \ottnt{A_{{\mathrm{1}}}}  \sqsubseteq  \ottnt{A_{{\mathrm{1}}}}$ and
    \item $\Gamma  \ottsym{,}  \alpha  \ottsym{,}   \algeffseqoverindex{ \alpha_{{\mathrm{1}}} }{ \text{\unboldmath$\mathit{I_{{\mathrm{1}}}}$} }   \vdash   \text{\unboldmath$\forall$}  \,  \algeffseqoverindex{ \alpha_{{\mathrm{1}}} }{ \text{\unboldmath$\mathit{I_{{\mathrm{1}}}}$} }   \ottsym{.} \, \ottnt{A_{{\mathrm{2}}}}  \sqsubseteq  \ottnt{A_{{\mathrm{2}}}}$.
   \end{itemize}
   They are shown by \Srule{Refl} and \Srule{Inst}.

  \case \Srule{Poly}:
   We have $ \algeffseqoverindex{ \alpha_{{\mathrm{1}}} }{ \text{\unboldmath$\mathit{I_{{\mathrm{1}}}}$} }  \,  =  \, \alpha  \ottsym{,}   \algeffseqoverindex{ \alpha_{{\mathrm{01}}} }{ \text{\unboldmath$\mathit{I_{{\mathrm{01}}}}$} } $ and $ \algeffseqoverindex{ \alpha_{{\mathrm{2}}} }{ \text{\unboldmath$\mathit{I_{{\mathrm{2}}}}$} }  \,  =  \, \alpha  \ottsym{,}   \algeffseqoverindex{ \alpha_{{\mathrm{02}}} }{ \text{\unboldmath$\mathit{I_{{\mathrm{02}}}}$} } $ and, by inversion,
   $\Gamma  \ottsym{,}  \alpha  \vdash    \text{\unboldmath$\forall$}  \,  \algeffseqoverindex{ \alpha_{{\mathrm{01}}} }{ \text{\unboldmath$\mathit{I_{{\mathrm{01}}}}$} }   \ottsym{.} \, \ottnt{A_{{\mathrm{1}}}}  \times  \ottnt{A_{{\mathrm{2}}}}   \sqsubseteq    \text{\unboldmath$\forall$}  \,  \algeffseqoverindex{ \alpha_{{\mathrm{02}}} }{ \text{\unboldmath$\mathit{I_{{\mathrm{02}}}}$} }   \ottsym{.} \, \ottnt{B_{{\mathrm{1}}}}  \times  \ottnt{B_{{\mathrm{2}}}} $.
   By the IH, there exist some $ \algeffseqoverindex{ \alpha_{{\mathrm{11}}} }{ \text{\unboldmath$\mathit{I_{{\mathrm{11}}}}$} } $, $ \algeffseqoverindex{ \alpha_{{\mathrm{012}}} }{ \text{\unboldmath$\mathit{I_{{\mathrm{012}}}}$} } $, $ \algeffseqoverindex{ \beta }{ \text{\unboldmath$\mathit{J}$} } $, and
   $ \algeffseqoverindex{ \ottnt{C} }{ \text{\unboldmath$\mathit{I_{{\mathrm{11}}}}$} } $ such that
   \begin{itemize}
    \item $\ottsym{\{}   \algeffseqoverindex{ \alpha_{{\mathrm{01}}} }{ \text{\unboldmath$\mathit{I_{{\mathrm{01}}}}$} }   \ottsym{\}} \,  =  \, \ottsym{\{}   \algeffseqoverindex{ \alpha_{{\mathrm{11}}} }{ \text{\unboldmath$\mathit{I_{{\mathrm{11}}}}$} }   \ottsym{\}} \,  \mathbin{\uplus}  \, \ottsym{\{}   \algeffseqoverindex{ \alpha_{{\mathrm{012}}} }{ \text{\unboldmath$\mathit{I_{{\mathrm{012}}}}$} }   \ottsym{\}}$,
    \item $\Gamma  \ottsym{,}  \alpha  \ottsym{,}   \algeffseqoverindex{ \alpha_{{\mathrm{02}}} }{ \text{\unboldmath$\mathit{I_{{\mathrm{02}}}}$} }   \ottsym{,}   \algeffseqoverindex{ \beta }{ \text{\unboldmath$\mathit{J}$} }   \vdash   \algeffseqoverindex{ \ottnt{C} }{ \text{\unboldmath$\mathit{I_{{\mathrm{11}}}}$} } $,
    \item $\Gamma  \ottsym{,}  \alpha  \ottsym{,}   \algeffseqoverindex{ \alpha_{{\mathrm{02}}} }{ \text{\unboldmath$\mathit{I_{{\mathrm{02}}}}$} }   \vdash    \text{\unboldmath$\forall$}  \,  \algeffseqoverindex{ \alpha_{{\mathrm{012}}} }{ \text{\unboldmath$\mathit{I_{{\mathrm{012}}}}$} }   \ottsym{.} \,  \text{\unboldmath$\forall$}  \,  \algeffseqoverindex{ \beta }{ \text{\unboldmath$\mathit{J}$} }   \ottsym{.} \, \ottnt{A_{{\mathrm{1}}}}    [   \algeffseqoverindex{ \ottnt{C} }{ \text{\unboldmath$\mathit{I_{{\mathrm{11}}}}$} }   \ottsym{/}   \algeffseqoverindex{ \alpha_{{\mathrm{11}}} }{ \text{\unboldmath$\mathit{I_{{\mathrm{11}}}}$} }   ]    \sqsubseteq  \ottnt{B_{{\mathrm{1}}}}$,
    \item $\Gamma  \ottsym{,}  \alpha  \ottsym{,}   \algeffseqoverindex{ \alpha_{{\mathrm{02}}} }{ \text{\unboldmath$\mathit{I_{{\mathrm{02}}}}$} }   \vdash    \text{\unboldmath$\forall$}  \,  \algeffseqoverindex{ \alpha_{{\mathrm{012}}} }{ \text{\unboldmath$\mathit{I_{{\mathrm{012}}}}$} }   \ottsym{.} \,  \text{\unboldmath$\forall$}  \,  \algeffseqoverindex{ \beta }{ \text{\unboldmath$\mathit{J}$} }   \ottsym{.} \, \ottnt{A_{{\mathrm{2}}}}    [   \algeffseqoverindex{ \ottnt{C} }{ \text{\unboldmath$\mathit{I_{{\mathrm{11}}}}$} }   \ottsym{/}   \algeffseqoverindex{ \alpha_{{\mathrm{11}}} }{ \text{\unboldmath$\mathit{I_{{\mathrm{11}}}}$} }   ]    \sqsubseteq  \ottnt{B_{{\mathrm{2}}}}$, and
    \item type variables in $ \algeffseqoverindex{ \beta }{ \text{\unboldmath$\mathit{J}$} } $ do not appear free in $\ottnt{A_{{\mathrm{1}}}}$ and $\ottnt{A_{{\mathrm{2}}}}$.
   \end{itemize}
   We show the conclusion by letting $ \algeffseqoverindex{ \alpha_{{\mathrm{12}}} }{ \text{\unboldmath$\mathit{I_{{\mathrm{12}}}}$} }  \,  =  \, \alpha  \ottsym{,}   \algeffseqoverindex{ \alpha_{{\mathrm{012}}} }{ \text{\unboldmath$\mathit{I_{{\mathrm{012}}}}$} } $.
   Then, it suffices to show that, for $\ottmv{i} \in \{1,2\}$,
   \[
    \Gamma  \ottsym{,}  \alpha  \ottsym{,}   \algeffseqoverindex{ \alpha_{{\mathrm{02}}} }{ \text{\unboldmath$\mathit{I_{{\mathrm{02}}}}$} }   \vdash    \text{\unboldmath$\forall$}  \, \alpha  \ottsym{.} \,  \text{\unboldmath$\forall$}  \,  \algeffseqoverindex{ \alpha_{{\mathrm{012}}} }{ \text{\unboldmath$\mathit{I_{{\mathrm{012}}}}$} }   \ottsym{.} \,  \text{\unboldmath$\forall$}  \,  \algeffseqoverindex{ \beta }{ \text{\unboldmath$\mathit{J}$} }   \ottsym{.} \, \ottnt{A_{\ottmv{i}}}    [   \algeffseqoverindex{ \ottnt{C} }{ \text{\unboldmath$\mathit{I_{{\mathrm{11}}}}$} }   \ottsym{/}   \algeffseqoverindex{ \alpha_{{\mathrm{11}}} }{ \text{\unboldmath$\mathit{I_{{\mathrm{11}}}}$} }   ]    \sqsubseteq  \ottnt{B_{\ottmv{i}}}.
   \]
   Since $\Gamma  \ottsym{,}  \alpha  \ottsym{,}   \algeffseqoverindex{ \alpha_{{\mathrm{02}}} }{ \text{\unboldmath$\mathit{I_{{\mathrm{02}}}}$} }   \vdash    \text{\unboldmath$\forall$}  \,  \algeffseqoverindex{ \alpha_{{\mathrm{012}}} }{ \text{\unboldmath$\mathit{I_{{\mathrm{012}}}}$} }   \ottsym{.} \,  \text{\unboldmath$\forall$}  \,  \algeffseqoverindex{ \beta }{ \text{\unboldmath$\mathit{J}$} }   \ottsym{.} \, \ottnt{A_{\ottmv{i}}}    [   \algeffseqoverindex{ \ottnt{C} }{ \text{\unboldmath$\mathit{I_{{\mathrm{11}}}}$} }   \ottsym{/}   \algeffseqoverindex{ \alpha_{{\mathrm{11}}} }{ \text{\unboldmath$\mathit{I_{{\mathrm{11}}}}$} }   ]    \sqsubseteq  \ottnt{B_{\ottmv{i}}}$, we
   have
   \[
    \Gamma  \ottsym{,}   \algeffseqoverindex{ \alpha_{{\mathrm{02}}} }{ \text{\unboldmath$\mathit{I_{{\mathrm{02}}}}$} }   \vdash    \text{\unboldmath$\forall$}  \, \alpha  \ottsym{.} \,  \text{\unboldmath$\forall$}  \,  \algeffseqoverindex{ \alpha_{{\mathrm{012}}} }{ \text{\unboldmath$\mathit{I_{{\mathrm{012}}}}$} }   \ottsym{.} \,  \text{\unboldmath$\forall$}  \,  \algeffseqoverindex{ \beta }{ \text{\unboldmath$\mathit{J}$} }   \ottsym{.} \, \ottnt{A_{\ottmv{i}}}    [   \algeffseqoverindex{ \ottnt{C} }{ \text{\unboldmath$\mathit{I_{{\mathrm{11}}}}$} }   \ottsym{/}   \algeffseqoverindex{ \alpha_{{\mathrm{11}}} }{ \text{\unboldmath$\mathit{I_{{\mathrm{11}}}}$} }   ]    \sqsubseteq   \text{\unboldmath$\forall$}  \, \alpha  \ottsym{.} \, \ottnt{B_{\ottmv{i}}}
   \]
   by \Srule{Poly} (note that the validity of type containment judgments is preserved by
   permutation of bindings in typing contexts).
   By \reflem{weakening} (\ref{lem:weakening:sub}) and \Srule{Inst},
   we have 
   \[
    \Gamma  \ottsym{,}  \alpha  \ottsym{,}   \algeffseqoverindex{ \alpha_{{\mathrm{02}}} }{ \text{\unboldmath$\mathit{I_{{\mathrm{02}}}}$} }   \vdash    \text{\unboldmath$\forall$}  \, \alpha  \ottsym{.} \,  \text{\unboldmath$\forall$}  \,  \algeffseqoverindex{ \alpha_{{\mathrm{012}}} }{ \text{\unboldmath$\mathit{I_{{\mathrm{012}}}}$} }   \ottsym{.} \,  \text{\unboldmath$\forall$}  \,  \algeffseqoverindex{ \beta }{ \text{\unboldmath$\mathit{J}$} }   \ottsym{.} \, \ottnt{A_{\ottmv{i}}}    [   \algeffseqoverindex{ \ottnt{C} }{ \text{\unboldmath$\mathit{I_{{\mathrm{11}}}}$} }   \ottsym{/}   \algeffseqoverindex{ \alpha_{{\mathrm{11}}} }{ \text{\unboldmath$\mathit{I_{{\mathrm{11}}}}$} }   ]    \sqsubseteq  \ottnt{B_{\ottmv{i}}}.
   \]

  \case \Srule{Prod}: Obvious by inversion.
  \case \Srule{DProd}:
   It is found that, for some $\alpha$, $ \algeffseqoverindex{ \alpha_{{\mathrm{1}}} }{ \text{\unboldmath$\mathit{I_{{\mathrm{1}}}}$} }  \,  =  \, \alpha$ and $ \algeffseqoverindex{ \alpha_{{\mathrm{2}}} }{ \text{\unboldmath$\mathit{I_{{\mathrm{2}}}}$} } $ is the empty sequence and
   $\ottnt{B_{{\mathrm{1}}}} \,  =  \,  \text{\unboldmath$\forall$}  \, \alpha  \ottsym{.} \, \ottnt{A_{{\mathrm{1}}}}$ and $\ottnt{B_{{\mathrm{2}}}} \,  =  \,  \text{\unboldmath$\forall$}  \, \alpha  \ottsym{.} \, \ottnt{A_{{\mathrm{2}}}}$.
   We show the conclusion by letting
   $ \algeffseqoverindex{ \alpha_{{\mathrm{12}}} }{ \text{\unboldmath$\mathit{I_{{\mathrm{12}}}}$} }  \,  =  \, \alpha$ and $ \algeffseqoverindex{ \alpha_{{\mathrm{11}}} }{ \text{\unboldmath$\mathit{I_{{\mathrm{11}}}}$} } $, $ \algeffseqoverindex{ \ottnt{C} }{ \text{\unboldmath$\mathit{I_{{\mathrm{11}}}}$} } $, and $ \algeffseqoverindex{ \beta }{ \text{\unboldmath$\mathit{J}$} } $ be the empty sequence.
   It suffices to show that, for $\ottmv{i} \in \{1,2\}$,
   $\Gamma  \vdash   \text{\unboldmath$\forall$}  \, \alpha  \ottsym{.} \, \ottnt{A_{\ottmv{i}}}  \sqsubseteq   \text{\unboldmath$\forall$}  \, \alpha  \ottsym{.} \, \ottnt{A_{\ottmv{i}}}$,
   which is derived by \Srule{Refl}.

  \case \Srule{Fun}, \Srule{Sum}, \Srule{List}, \Srule{DFun}, \Srule{DSum}, and
        \Srule{DList}: Contradictory.
 \end{caseanalysis}
 \else 
 By induction on the type containment derivation.
 The proof is similar to that of \reflem{subtyping-inv-fun}.
 \fi
\end{proof}

\begin{lemmap}{Type containment inversion: sum types}{subtyping-inv-sum}
 If $\Gamma  \vdash    \text{\unboldmath$\forall$}  \,  \algeffseqoverindex{ \alpha_{{\mathrm{1}}} }{ \text{\unboldmath$\mathit{I_{{\mathrm{1}}}}$} }   \ottsym{.} \, \ottnt{A_{{\mathrm{1}}}}  +  \ottnt{A_{{\mathrm{2}}}}   \sqsubseteq    \text{\unboldmath$\forall$}  \,  \algeffseqoverindex{ \alpha_{{\mathrm{2}}} }{ \text{\unboldmath$\mathit{I_{{\mathrm{2}}}}$} }   \ottsym{.} \, \ottnt{B_{{\mathrm{1}}}}  +  \ottnt{B_{{\mathrm{2}}}} $,
 then there exist $ \algeffseqoverindex{ \alpha_{{\mathrm{11}}} }{ \text{\unboldmath$\mathit{I_{{\mathrm{11}}}}$} } $, $ \algeffseqoverindex{ \alpha_{{\mathrm{12}}} }{ \text{\unboldmath$\mathit{I_{{\mathrm{12}}}}$} } $, $ \algeffseqoverindex{ \beta }{ \text{\unboldmath$\mathit{J}$} } $, and $ \algeffseqoverindex{ \ottnt{C} }{ \text{\unboldmath$\mathit{I_{{\mathrm{11}}}}$} } $
 such that
 \begin{itemize}
  \item $\ottsym{\{}   \algeffseqoverindex{ \alpha_{{\mathrm{1}}} }{ \text{\unboldmath$\mathit{I_{{\mathrm{1}}}}$} }   \ottsym{\}} \,  =  \, \ottsym{\{}   \algeffseqoverindex{ \alpha_{{\mathrm{11}}} }{ \text{\unboldmath$\mathit{I_{{\mathrm{11}}}}$} }   \ottsym{\}} \,  \mathbin{\uplus}  \, \ottsym{\{}   \algeffseqoverindex{ \alpha_{{\mathrm{12}}} }{ \text{\unboldmath$\mathit{I_{{\mathrm{12}}}}$} }   \ottsym{\}}$,
  \item $\Gamma  \ottsym{,}   \algeffseqoverindex{ \alpha_{{\mathrm{2}}} }{ \text{\unboldmath$\mathit{I_{{\mathrm{2}}}}$} }   \ottsym{,}   \algeffseqoverindex{ \beta }{ \text{\unboldmath$\mathit{J}$} }   \vdash   \algeffseqoverindex{ \ottnt{C} }{ \text{\unboldmath$\mathit{I_{{\mathrm{11}}}}$} } $,
  \item $\Gamma  \ottsym{,}   \algeffseqoverindex{ \alpha_{{\mathrm{2}}} }{ \text{\unboldmath$\mathit{I_{{\mathrm{2}}}}$} }   \vdash    \text{\unboldmath$\forall$}  \,  \algeffseqoverindex{ \alpha_{{\mathrm{12}}} }{ \text{\unboldmath$\mathit{I_{{\mathrm{12}}}}$} }   \ottsym{.} \,  \text{\unboldmath$\forall$}  \,  \algeffseqoverindex{ \beta }{ \text{\unboldmath$\mathit{J}$} }   \ottsym{.} \, \ottnt{A_{{\mathrm{1}}}}    [   \algeffseqoverindex{ \ottnt{C} }{ \text{\unboldmath$\mathit{I_{{\mathrm{11}}}}$} }   \ottsym{/}   \algeffseqoverindex{ \alpha_{{\mathrm{11}}} }{ \text{\unboldmath$\mathit{I_{{\mathrm{11}}}}$} }   ]    \sqsubseteq  \ottnt{B_{{\mathrm{1}}}}$,
  \item $\Gamma  \ottsym{,}   \algeffseqoverindex{ \alpha_{{\mathrm{2}}} }{ \text{\unboldmath$\mathit{I_{{\mathrm{2}}}}$} }   \vdash    \text{\unboldmath$\forall$}  \,  \algeffseqoverindex{ \alpha_{{\mathrm{12}}} }{ \text{\unboldmath$\mathit{I_{{\mathrm{12}}}}$} }   \ottsym{.} \,  \text{\unboldmath$\forall$}  \,  \algeffseqoverindex{ \beta }{ \text{\unboldmath$\mathit{J}$} }   \ottsym{.} \, \ottnt{A_{{\mathrm{2}}}}    [   \algeffseqoverindex{ \ottnt{C} }{ \text{\unboldmath$\mathit{I_{{\mathrm{11}}}}$} }   \ottsym{/}   \algeffseqoverindex{ \alpha_{{\mathrm{11}}} }{ \text{\unboldmath$\mathit{I_{{\mathrm{11}}}}$} }   ]    \sqsubseteq  \ottnt{B_{{\mathrm{2}}}}$, and
  \item type variables in $\ottsym{\{}   \algeffseqoverindex{ \beta }{ \text{\unboldmath$\mathit{J}$} }   \ottsym{\}}$ do not appear free in $\ottnt{A_{{\mathrm{1}}}}$ and $\ottnt{A_{{\mathrm{2}}}}$.
 \end{itemize}
\end{lemmap}
\begin{proof}
 By induction on the type containment derivation.
 The proof is similar to that of \reflem{subtyping-inv-fun}.
\end{proof}

\begin{lemmap}{Type containment inversion: list types}{subtyping-inv-list}
 If $\Gamma  \vdash    \text{\unboldmath$\forall$}  \,  \algeffseqoverindex{ \alpha_{{\mathrm{1}}} }{ \text{\unboldmath$\mathit{I_{{\mathrm{1}}}}$} }   \ottsym{.} \, \ottnt{A}  \, \mathsf{list}   \sqsubseteq    \text{\unboldmath$\forall$}  \,  \algeffseqoverindex{ \alpha_{{\mathrm{2}}} }{ \text{\unboldmath$\mathit{I_{{\mathrm{2}}}}$} }   \ottsym{.} \, \ottnt{B}  \, \mathsf{list} $,
 then there exist $ \algeffseqoverindex{ \alpha_{{\mathrm{11}}} }{ \text{\unboldmath$\mathit{I_{{\mathrm{11}}}}$} } $, $ \algeffseqoverindex{ \alpha_{{\mathrm{12}}} }{ \text{\unboldmath$\mathit{I_{{\mathrm{12}}}}$} } $, $ \algeffseqoverindex{ \beta }{ \text{\unboldmath$\mathit{J}$} } $, and $ \algeffseqoverindex{ \ottnt{C} }{ \text{\unboldmath$\mathit{I_{{\mathrm{11}}}}$} } $
 such that
 \begin{itemize}
  \item $\ottsym{\{}   \algeffseqoverindex{ \alpha_{{\mathrm{1}}} }{ \text{\unboldmath$\mathit{I_{{\mathrm{1}}}}$} }   \ottsym{\}} \,  =  \, \ottsym{\{}   \algeffseqoverindex{ \alpha_{{\mathrm{11}}} }{ \text{\unboldmath$\mathit{I_{{\mathrm{11}}}}$} }   \ottsym{\}} \,  \mathbin{\uplus}  \, \ottsym{\{}   \algeffseqoverindex{ \alpha_{{\mathrm{12}}} }{ \text{\unboldmath$\mathit{I_{{\mathrm{12}}}}$} }   \ottsym{\}}$,
  \item $\Gamma  \ottsym{,}   \algeffseqoverindex{ \alpha_{{\mathrm{2}}} }{ \text{\unboldmath$\mathit{I_{{\mathrm{2}}}}$} }   \ottsym{,}   \algeffseqoverindex{ \beta }{ \text{\unboldmath$\mathit{J}$} }   \vdash   \algeffseqoverindex{ \ottnt{C} }{ \text{\unboldmath$\mathit{I_{{\mathrm{11}}}}$} } $,
  \item $\Gamma  \ottsym{,}   \algeffseqoverindex{ \alpha_{{\mathrm{2}}} }{ \text{\unboldmath$\mathit{I_{{\mathrm{2}}}}$} }   \vdash    \text{\unboldmath$\forall$}  \,  \algeffseqoverindex{ \alpha_{{\mathrm{12}}} }{ \text{\unboldmath$\mathit{I_{{\mathrm{12}}}}$} }   \ottsym{.} \,  \text{\unboldmath$\forall$}  \,  \algeffseqoverindex{ \beta }{ \text{\unboldmath$\mathit{J}$} }   \ottsym{.} \, \ottnt{A}    [   \algeffseqoverindex{ \ottnt{C} }{ \text{\unboldmath$\mathit{I_{{\mathrm{11}}}}$} }   \ottsym{/}   \algeffseqoverindex{ \alpha_{{\mathrm{11}}} }{ \text{\unboldmath$\mathit{I_{{\mathrm{11}}}}$} }   ]    \sqsubseteq  \ottnt{B}$, and
  \item type variables in $\ottsym{\{}   \algeffseqoverindex{ \beta }{ \text{\unboldmath$\mathit{J}$} }   \ottsym{\}}$ do not appear free in $\ottnt{A}$.
 \end{itemize}
\end{lemmap}
\begin{proof}
 By induction on the type containment derivation.
 The proof is similar to that of \reflem{subtyping-inv-fun}.
\end{proof}

\ifrestate
lemmSubtypingForallRemove
\else
\begin{lemma}{subtyping-forall-remove}
 Suppose that $\alpha$ does not appear free in $\ottnt{A}$.
 \begin{enumerate}
  \item \label{lem:subtyping-forall-remove:neg}
        If the occurrences of $\beta$ in $\ottnt{A}$ are only negative,
        then $\Gamma_{{\mathrm{1}}}  \ottsym{,}  \alpha  \ottsym{,}  \Gamma_{{\mathrm{2}}}  \vdash   \ottnt{A}    [  \ottnt{B}  \ottsym{/}  \beta  ]    \sqsubseteq   \ottnt{A}    [   \text{\unboldmath$\forall$}  \, \alpha  \ottsym{.} \, \ottnt{B}  \ottsym{/}  \beta  ]  $.
  \item \label{lem:subtyping-forall-remove:pos}
        If the occurrences of $\beta$ in $\ottnt{A}$ are only positive,
        then $\Gamma_{{\mathrm{1}}}  \ottsym{,}  \alpha  \ottsym{,}  \Gamma_{{\mathrm{2}}}  \vdash   \ottnt{A}    [   \text{\unboldmath$\forall$}  \, \alpha  \ottsym{.} \, \ottnt{B}  \ottsym{/}  \beta  ]    \sqsubseteq   \ottnt{A}    [  \ottnt{B}  \ottsym{/}  \beta  ]  $.
 \end{enumerate}
\end{lemma}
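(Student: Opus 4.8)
The plan is to prove both parts simultaneously by structural induction on the type $\ottnt{A}$. Proving them together is essential rather than cosmetic: in the function-type case the polarity of the domain is flipped, so establishing part~(\ref{lem:subtyping-forall-remove:neg}) for $\ottnt{C}  \rightarrow  \ottnt{D}$ will require part~(\ref{lem:subtyping-forall-remove:pos}) of the induction hypothesis on the domain $\ottnt{C}$ together with part~(\ref{lem:subtyping-forall-remove:neg}) on the codomain $\ottnt{D}$, and symmetrically for the other part. Throughout I would maintain the hypothesis $\alpha \,  \not\in  \,  \mathit{ftv}  (  \ottnt{A}  ) $ on all subterms, so that the $\alpha$ bound in $ \text{\unboldmath$\forall$}  \, \alpha  \ottsym{.} \, \ottnt{B}$ never clashes with a free variable of $\ottnt{A}$ after substitution.

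For the base cases, when $\ottnt{A}$ is a base type $\iota$ or a type variable distinct from $\beta$, neither substitution touches $\ottnt{A}$, so both inequalities follow from \Srule{Refl} (using that $\Gamma_{{\mathrm{1}}}  \ottsym{,}  \alpha  \ottsym{,}  \Gamma_{{\mathrm{2}}}$ is well formed). The crucial base case is $\ottnt{A} \,  =  \, \beta$, whose single occurrence of $\beta$ is positive; this falls under part~(\ref{lem:subtyping-forall-remove:pos}), where the goal reduces to $\Gamma_{{\mathrm{1}}}  \ottsym{,}  \alpha  \ottsym{,}  \Gamma_{{\mathrm{2}}}  \vdash   \text{\unboldmath$\forall$}  \, \alpha  \ottsym{.} \, \ottnt{B}  \sqsubseteq  \ottnt{B}$, and this is exactly an instance of \Srule{Inst}, instantiating the bound quantifier with the in-scope $\alpha$ itself (whose well-formedness premise holds since $\alpha$ is bound by the context). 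Part~(\ref{lem:subtyping-forall-remove:neg}) is vacuous at $\ottnt{A} \,  =  \, \beta$ because that occurrence is not negative.

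For the quantifier case $\ottnt{A} \,  =  \,  \text{\unboldmath$\forall$}  \, \gamma  \ottsym{.} \, \ottnt{A'}$, choosing $\gamma$ fresh (in particular distinct from $\alpha$ and $\beta$ and not free in $\ottnt{B}$), the polarity of $\beta$ in $ \text{\unboldmath$\forall$}  \, \gamma  \ottsym{.} \, \ottnt{A'}$ coincides with its polarity in $\ottnt{A'}$, so the hypothesis transfers directly to $\ottnt{A'}$. I would apply the appropriate part of the induction hypothesis in the extended context $\Gamma_{{\mathrm{1}}}  \ottsym{,}  \alpha  \ottsym{,}  \ottsym{(}  \Gamma_{{\mathrm{2}}}  \ottsym{,}  \gamma  \ottsym{)}$ and then close with \Srule{Poly}, using that substitution commutes past the fresh binder $\gamma$. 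The product, sum, and list cases of the extended language are even simpler: each constructor is covariant, so the polarity condition passes unchanged to every immediate subterm, and the result follows from the induction hypotheses with the corresponding compatibility rule \Srule{Prod}, \Srule{Sum}, or \Srule{List}.

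The main work, and the only place where type containment does anything nontrivial, is the function-type case $\ottnt{A} \,  =  \, \ottnt{C}  \rightarrow  \ottnt{D}$. For part~(\ref{lem:subtyping-forall-remove:neg}), unfolding \refdef{polarity} shows that ``$\beta$ occurs only negatively in $\ottnt{C}  \rightarrow  \ottnt{D}$'' forces $\beta$ to occur only positively in $\ottnt{C}$ and only negatively in $\ottnt{D}$. The induction hypothesis then yields $\Gamma_{{\mathrm{1}}}  \ottsym{,}  \alpha  \ottsym{,}  \Gamma_{{\mathrm{2}}}  \vdash   \ottnt{C}    [   \text{\unboldmath$\forall$}  \, \alpha  \ottsym{.} \, \ottnt{B}  \ottsym{/}  \beta  ]    \sqsubseteq   \ottnt{C}    [  \ottnt{B}  \ottsym{/}  \beta  ]  $ (part~(\ref{lem:subtyping-forall-remove:pos}) on $\ottnt{C}$) and $\Gamma_{{\mathrm{1}}}  \ottsym{,}  \alpha  \ottsym{,}  \Gamma_{{\mathrm{2}}}  \vdash   \ottnt{D}    [  \ottnt{B}  \ottsym{/}  \beta  ]    \sqsubseteq   \ottnt{D}    [   \text{\unboldmath$\forall$}  \, \alpha  \ottsym{.} \, \ottnt{B}  \ottsym{/}  \beta  ]  $ (part~(\ref{lem:subtyping-forall-remove:neg}) on $\ottnt{D}$); these are precisely the contravariant domain premise and covariant codomain premise required by \Srule{Fun} to conclude $\Gamma_{{\mathrm{1}}}  \ottsym{,}  \alpha  \ottsym{,}  \Gamma_{{\mathrm{2}}}  \vdash   \ottnt{C}    [  \ottnt{B}  \ottsym{/}  \beta  ]    \rightarrow   \ottnt{D}    [  \ottnt{B}  \ottsym{/}  \beta  ]    \sqsubseteq   \ottnt{C}    [   \text{\unboldmath$\forall$}  \, \alpha  \ottsym{.} \, \ottnt{B}  \ottsym{/}  \beta  ]    \rightarrow   \ottnt{D}    [   \text{\unboldmath$\forall$}  \, \alpha  \ottsym{.} \, \ottnt{B}  \ottsym{/}  \beta  ]  $. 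Part~(\ref{lem:subtyping-forall-remove:pos}) is entirely symmetric, swapping the roles of the two hypotheses. I expect no genuine obstacle beyond this polarity bookkeeping: the argument is shallow, and the only subtlety is setting up the simultaneous induction so that both flipped hypotheses are available in the function case.
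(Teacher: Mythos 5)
Your proposal is correct and follows essentially the same route as the paper's own proof: a simultaneous structural induction on $\ottnt{A}$, with the only nontrivial base case $\ottnt{A} \,  =  \, \beta$ handled by \Srule{Inst} (instantiating $ \text{\unboldmath$\forall$}  \, \alpha  \ottsym{.} \, \ottnt{B}$), reflexivity for the other base cases, and the compatibility rules \Srule{Poly}, \Srule{Fun}, \Srule{Prod}, \Srule{Sum}, \Srule{List} for the inductive cases, with the polarity flip in the domain of \Srule{Fun} exactly as you describe.
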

\fi
\begin{proof}
 By structural induction on $\ottnt{A}$.
 \begin{caseanalysis}
  \case $\ottnt{A} \,  =  \, \gamma$:
   If $\gamma \,  =  \, \beta$, then we have to show that
   $\Gamma_{{\mathrm{1}}}  \ottsym{,}  \alpha  \ottsym{,}  \Gamma_{{\mathrm{2}}}  \vdash   \text{\unboldmath$\forall$}  \, \alpha  \ottsym{.} \, \ottnt{B}  \sqsubseteq  \ottnt{B}$, which is derived
   by \Srule{Refl}, \Srule{Inst}, and \Srule{Trans}.
   Note that we do not need to consider the negative case, i.e., to show
   $\Gamma_{{\mathrm{1}}}  \ottsym{,}  \alpha  \ottsym{,}  \Gamma_{{\mathrm{2}}}  \vdash  \ottnt{B}  \sqsubseteq   \text{\unboldmath$\forall$}  \, \alpha  \ottsym{.} \, \ottnt{B}$, because the occurrence $\beta$ in $\beta$
   is not negative.

  \case $\ottnt{A} \,  =  \, \iota$: By \Srule{Refl}.

  \case $\ottnt{A} \,  =  \,  \text{\unboldmath$\forall$}  \, \gamma  \ottsym{.} \, \ottnt{C}$: By the IH and \Srule{Poly} for each case.
  \case $\ottnt{A} \,  =  \, \ottnt{C}  \rightarrow  \ottnt{D}$: By the IHs and \Srule{Fun} for each case.
  \case $\ottnt{A} \,  =  \,  \ottnt{C}  \times  \ottnt{D} $: By the IH and \Srule{Prod} for each case.
  \case $\ottnt{A} \,  =  \,  \ottnt{C}  +  \ottnt{D} $: By the IH and \Srule{Sum} for each case.
  \case $\ottnt{A} \,  =  \,  \ottnt{C}  \, \mathsf{list} $: By the IH and \Srule{List} for each case.
 \end{caseanalysis}
\end{proof}

\ifrestate
\lemmSubtypingForallMove*
\else
\begin{lemma}{subtyping-forall-move}
 Suppose that $\alpha$ does not appear free in $\ottnt{A}$.
 \begin{enumerate}
  \item \label{lem:subtyping-forall-move:neg}
        If the occurrences of $\beta$ in $\ottnt{A}$ are only negative or strictly
        positive, then $\Gamma  \vdash    \text{\unboldmath$\forall$}  \, \alpha  \ottsym{.} \, \ottnt{A}    [  \ottnt{B}  \ottsym{/}  \beta  ]    \sqsubseteq   \ottnt{A}    [   \text{\unboldmath$\forall$}  \, \alpha  \ottsym{.} \, \ottnt{B}  \ottsym{/}  \beta  ]  $.
  \item \label{lem:subtyping-forall-move:pos}
        If the occurrences of $\beta$ in $\ottnt{A}$ are only positive,
        then $\Gamma  \vdash   \ottnt{A}    [   \text{\unboldmath$\forall$}  \, \alpha  \ottsym{.} \, \ottnt{B}  \ottsym{/}  \beta  ]    \sqsubseteq    \text{\unboldmath$\forall$}  \, \alpha  \ottsym{.} \, \ottnt{A}    [  \ottnt{B}  \ottsym{/}  \beta  ]  $.
 \end{enumerate}
\end{lemma}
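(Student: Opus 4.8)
The plan is to prove parts~(\ref{lem:subtyping-forall-move:neg}) and~(\ref{lem:subtyping-forall-move:pos}) simultaneously by structural induction on $A$, so that each part can be invoked as an induction hypothesis on the other across the contravariant arrow. The workhorse is \reflem{subtyping-forall-remove}: it already disposes of the \emph{purely} negative situation (for part~1) and the \emph{purely} positive situation (for part~2), producing the containment in the extended context $\Gamma, \alpha$. What \reflem{subtyping-forall-move} must add on top is the ability to ``push $\forall \alpha$ inward'' past the covariant constructors and, crucially, past an arrow into its strictly positive codomain; this is exactly where the distributive rules \Srule{DFun}, \Srule{DProd}, \Srule{DSum}, and \Srule{DList} come in.

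Part~(\ref{lem:subtyping-forall-move:pos}) barely uses the induction. From \reflem{subtyping-forall-remove}(\ref{lem:subtyping-forall-remove:pos}) I would obtain $\Gamma, \alpha \vdash A[\forall \alpha. B / \beta] \sqsubseteq A[B/\beta]$; then \Srule{Poly} lifts it to $\Gamma \vdash \forall \alpha. A[\forall \alpha. B / \beta] \sqsubseteq \forall \alpha. A[B/\beta]$, while \Srule{Gen} gives $\Gamma \vdash A[\forall \alpha. B / \beta] \sqsubseteq \forall \alpha. A[\forall \alpha. B / \beta]$ (legitimate since $\alpha$ is not free in $A$ and is bound inside $\forall \alpha. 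B$). Chaining the two with \Srule{Trans} yields the goal.

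For part~(\ref{lem:subtyping-forall-move:neg}) I would case split on the shape of $A$. The base cases ($\alpha$ absent, $A = \beta$, base types) reduce to \Srule{Refl}/\Srule{Inst}. The covariant constructors are routine: for $A = C \times D$ the hypothesis forces $\beta$ to occur only negatively or strictly positively in \emph{both} $C$ and $D$, so IH(1) applies to each, and I combine \Srule{DProd} (to split $\forall \alpha$ over the product) with \Srule{Prod} and \Srule{Trans}; sums and lists are identical via \Srule{DSum} and \Srule{DList}. The $\forall \gamma. C$ case applies the IH under the binder and then needs a permutation of adjacent top-level quantifiers, which is itself admissible from \Srule{Inst}, \Srule{Gen}, \Srule{Poly}, and \Srule{Trans}.

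The crux, and the step I expect to be the main obstacle, is $A = C \to D$, because contravariance makes the two sides of the arrow pull the $\forall$ in opposite directions. The polarity hypothesis forces $\beta$ to occur only positively in $C$ and only negatively or strictly positively in $D$, so I would invoke IH(\ref{lem:subtyping-forall-move:pos}) on $C$, giving $C[\forall \alpha. B / \beta] \sqsubseteq \forall \alpha. C[B/\beta]$, and IH(\ref{lem:subtyping-forall-move:neg}) on $D$, giving $\forall \alpha. D[B/\beta] \sqsubseteq D[\forall \alpha. B / \beta]$; feeding these into \Srule{Fun} (contravariant in the domain) produces $(\forall \alpha. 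C[B/\beta]) \to \forall \alpha. D[B/\beta] \sqsubseteq C[\forall \alpha. B / \beta] \to D[\forall \alpha. B / \beta]$. The delicate part is reaching that left-hand side from $\forall \alpha. (C[B/\beta] \to D[B/\beta])$: I would first strengthen the domain under the outer $\forall \alpha$ by \Srule{Poly} and \Srule{Fun} using $\forall \alpha. C[B/\beta] \sqsubseteq C[B/\beta]$ (\Srule{Inst}), and only then apply \Srule{DFun} to distribute $\forall \alpha$ into the codomain. The side condition of \Srule{DFun} is met precisely because the rewritten domain $\forall \alpha. C[B/\beta]$ no longer has $\alpha$ free. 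Arranging the contravariant domain into this $\forall$-closed shape so that \Srule{DFun} becomes applicable is exactly what licenses strictly positive occurrences, and it is the heart of the argument.
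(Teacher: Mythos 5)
Your proposal is correct and follows essentially the same route as the paper's proof: a simultaneous structural induction on $\ottnt{A}$, with part~(2) discharged by \reflem{subtyping-forall-remove} together with \Srule{Poly}, \Srule{Gen}, and \Srule{Trans}, and the arrow case of part~(1) handled by applying the two induction hypotheses with opposite polarities, strengthening the domain to $ \text{\unboldmath$\forall$}  \, \alpha  \ottsym{.} \,  \ottnt{C}    [  \ottnt{B}  \ottsym{/}  \beta  ]  $ under the outer quantifier via \Srule{Inst}/\Srule{Poly}/\Srule{Fun} so that \Srule{DFun} becomes applicable, and closing with \Srule{Trans}. The covariant cases via \Srule{DProd}, \Srule{DSum}, and \Srule{DList} likewise match the paper's treatment.
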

\fi
\begin{proof}
 By induction on $\ottnt{A}$.
 \begin{caseanalysis}
  \case $\ottnt{A} \,  =  \, \gamma$:
   If $\gamma \,  =  \, \beta$, then we have to show that
   $\Gamma  \vdash   \text{\unboldmath$\forall$}  \, \alpha  \ottsym{.} \, \ottnt{B}  \sqsubseteq   \text{\unboldmath$\forall$}  \, \alpha  \ottsym{.} \, \ottnt{B}$ in the both cases, which
   is shown by \Srule{Refl}.
   Otherwise, if $\gamma \,  \not=  \, \beta$, then
   we have to show that
   $\Gamma  \vdash   \text{\unboldmath$\forall$}  \, \alpha  \ottsym{.} \, \gamma  \sqsubseteq  \gamma$ and
   $\Gamma  \vdash  \gamma  \sqsubseteq   \text{\unboldmath$\forall$}  \, \alpha  \ottsym{.} \, \gamma$.
   By the assumption, $\alpha \,  \not=  \, \gamma$.
   Thus, by \Srule{Gen}, $\Gamma  \vdash  \gamma  \sqsubseteq   \text{\unboldmath$\forall$}  \, \alpha  \ottsym{.} \, \gamma$.
   We also have $\Gamma  \vdash   \text{\unboldmath$\forall$}  \, \alpha  \ottsym{.} \, \gamma  \sqsubseteq  \gamma$
   by \Srule{Inst} (the type used for instantiation can be any, e.g., $ \mathsf{int} $).

  \case $\ottnt{A} \,  =  \, \iota$:
   Similar for the case that $\ottnt{A} \,  =  \, \gamma$ and $\gamma \,  \not=  \, \beta$.

  \case $\ottnt{A} \,  =  \, \ottnt{C}  \rightarrow  \ottnt{D}$:
   We prove the first case.
   The occurrences of $\beta$ in $\ottnt{C}  \rightarrow  \ottnt{D}$ are only negative or strictly positive.
   By definition, the occurrences of $\beta$ in $\ottnt{C}$
   are only positive.  Thus, by the IH, $\Gamma  \vdash   \ottnt{C}    [   \text{\unboldmath$\forall$}  \, \alpha  \ottsym{.} \, \ottnt{B}  \ottsym{/}  \beta  ]    \sqsubseteq    \text{\unboldmath$\forall$}  \, \alpha  \ottsym{.} \, \ottnt{C}    [  \ottnt{B}  \ottsym{/}  \beta  ]  $.
   By definition, the occurrences of $\beta$ in $\ottnt{D}$
   are only negative or strictly positive.  Thus, by the IH,
   $\Gamma  \vdash    \text{\unboldmath$\forall$}  \, \alpha  \ottsym{.} \, \ottnt{D}    [  \ottnt{B}  \ottsym{/}  \beta  ]    \sqsubseteq   \ottnt{D}    [   \text{\unboldmath$\forall$}  \, \alpha  \ottsym{.} \, \ottnt{B}  \ottsym{/}  \beta  ]  $.
   By \Srule{Fun},
   \[
    \Gamma  \vdash   \ottsym{(}    \text{\unboldmath$\forall$}  \, \alpha  \ottsym{.} \, \ottnt{C}    [  \ottnt{B}  \ottsym{/}  \beta  ]    \ottsym{)}  \rightarrow   \text{\unboldmath$\forall$}  \, \alpha  \ottsym{.} \, \ottnt{D}    [  \ottnt{B}  \ottsym{/}  \beta  ]    \sqsubseteq    \ottnt{C}    [   \text{\unboldmath$\forall$}  \, \alpha  \ottsym{.} \, \ottnt{B}  \ottsym{/}  \beta  ]    \rightarrow  \ottnt{D}    [   \text{\unboldmath$\forall$}  \, \alpha  \ottsym{.} \, \ottnt{B}  \ottsym{/}  \beta  ]  .
   \]
   By \Srule{DFun} and \Srule{Trans},
   \begin{equation}
    \Gamma  \vdash    \text{\unboldmath$\forall$}  \, \alpha  \ottsym{.} \, \ottsym{(}    \text{\unboldmath$\forall$}  \, \alpha  \ottsym{.} \, \ottnt{C}    [  \ottnt{B}  \ottsym{/}  \beta  ]    \ottsym{)}  \rightarrow  \ottnt{D}    [  \ottnt{B}  \ottsym{/}  \beta  ]    \sqsubseteq    \ottnt{C}    [   \text{\unboldmath$\forall$}  \, \alpha  \ottsym{.} \, \ottnt{B}  \ottsym{/}  \beta  ]    \rightarrow  \ottnt{D}    [   \text{\unboldmath$\forall$}  \, \alpha  \ottsym{.} \, \ottnt{B}  \ottsym{/}  \beta  ]  .
     \label{eqn:subtyping-forall-move:fun:one}
   \end{equation}
   By \Srule{Inst},
   \begin{equation}
    \Gamma  \ottsym{,}  \alpha  \vdash    \text{\unboldmath$\forall$}  \, \alpha  \ottsym{.} \, \ottnt{C}    [  \ottnt{B}  \ottsym{/}  \beta  ]    \sqsubseteq   \ottnt{C}    [  \ottnt{B}  \ottsym{/}  \beta  ]  .
     \label{eqn:subtyping-forall-move:fun:two}
   \end{equation}
   By \Srule{Fun} and \Srule{Poly} with (\ref{eqn:subtyping-forall-move:fun:two}),
   \[
    \Gamma  \vdash     \text{\unboldmath$\forall$}  \, \alpha  \ottsym{.} \, \ottnt{C}    [  \ottnt{B}  \ottsym{/}  \beta  ]    \rightarrow  \ottnt{D}    [  \ottnt{B}  \ottsym{/}  \beta  ]    \sqsubseteq    \text{\unboldmath$\forall$}  \, \alpha  \ottsym{.} \, \ottsym{(}    \text{\unboldmath$\forall$}  \, \alpha  \ottsym{.} \, \ottnt{C}    [  \ottnt{B}  \ottsym{/}  \beta  ]    \ottsym{)}  \rightarrow  \ottnt{D}    [  \ottnt{B}  \ottsym{/}  \beta  ]  .
   \]
   Thus, by \Srule{Trans} with (\ref{eqn:subtyping-forall-move:fun:one}),
   \[
    \Gamma  \vdash     \text{\unboldmath$\forall$}  \, \alpha  \ottsym{.} \, \ottnt{C}    [  \ottnt{B}  \ottsym{/}  \beta  ]    \rightarrow  \ottnt{D}    [  \ottnt{B}  \ottsym{/}  \beta  ]    \sqsubseteq    \ottnt{C}    [   \text{\unboldmath$\forall$}  \, \alpha  \ottsym{.} \, \ottnt{B}  \ottsym{/}  \beta  ]    \rightarrow  \ottnt{D}    [   \text{\unboldmath$\forall$}  \, \alpha  \ottsym{.} \, \ottnt{B}  \ottsym{/}  \beta  ]  .
   \]

   Next, we prove the second case.
   The occurrences of $\beta$ in $\ottnt{C}  \rightarrow  \ottnt{D}$ are only positive.
   By definition, the occurrences of $\beta$ in $\ottnt{C}$
   are only negative.
   Thus, by \reflem{subtyping-forall-remove} (\ref{lem:subtyping-forall-remove:neg}),
   $\Gamma  \ottsym{,}  \alpha  \vdash   \ottnt{C}    [  \ottnt{B}  \ottsym{/}  \beta  ]    \sqsubseteq   \ottnt{C}    [   \text{\unboldmath$\forall$}  \, \alpha  \ottsym{.} \, \ottnt{B}  \ottsym{/}  \beta  ]  $.
   By definition, the occurrences of $\beta$ in $\ottnt{D}$
   are only positive.
   Thus, by \reflem{subtyping-forall-remove} (\ref{lem:subtyping-forall-remove:pos}),
   $\Gamma  \ottsym{,}  \alpha  \vdash   \ottnt{D}    [   \text{\unboldmath$\forall$}  \, \alpha  \ottsym{.} \, \ottnt{B}  \ottsym{/}  \beta  ]    \sqsubseteq   \ottnt{D}    [  \ottnt{B}  \ottsym{/}  \beta  ]  $.
   By \Srule{Fun}, \Srule{Poly}, and \Srule{Trans},
   \[
    \Gamma  \vdash     \text{\unboldmath$\forall$}  \, \alpha  \ottsym{.} \, \ottnt{C}    [   \text{\unboldmath$\forall$}  \, \alpha  \ottsym{.} \, \ottnt{B}  \ottsym{/}  \beta  ]    \rightarrow  \ottnt{D}    [   \text{\unboldmath$\forall$}  \, \alpha  \ottsym{.} \, \ottnt{B}  \ottsym{/}  \beta  ]    \sqsubseteq     \text{\unboldmath$\forall$}  \, \alpha  \ottsym{.} \, \ottnt{C}    [  \ottnt{B}  \ottsym{/}  \beta  ]    \rightarrow  \ottnt{D}    [  \ottnt{B}  \ottsym{/}  \beta  ]  .
   \]
   Since $\alpha$ does not appear free in $\ottnt{A} \,  =  \, \ottnt{C}  \rightarrow  \ottnt{D}$,
   we have $\Gamma  \vdash    \ottnt{C}    [   \text{\unboldmath$\forall$}  \, \alpha  \ottsym{.} \, \ottnt{B}  \ottsym{/}  \beta  ]    \rightarrow  \ottnt{D}    [   \text{\unboldmath$\forall$}  \, \alpha  \ottsym{.} \, \ottnt{B}  \ottsym{/}  \beta  ]    \sqsubseteq     \text{\unboldmath$\forall$}  \, \alpha  \ottsym{.} \, \ottnt{C}    [   \text{\unboldmath$\forall$}  \, \alpha  \ottsym{.} \, \ottnt{B}  \ottsym{/}  \beta  ]    \rightarrow  \ottnt{D}    [   \text{\unboldmath$\forall$}  \, \alpha  \ottsym{.} \, \ottnt{B}  \ottsym{/}  \beta  ]  $
   by \Srule{Gen}.
   Thus, by \Srule{Trans},
   \[
    \Gamma  \vdash    \ottnt{C}    [   \text{\unboldmath$\forall$}  \, \alpha  \ottsym{.} \, \ottnt{B}  \ottsym{/}  \beta  ]    \rightarrow  \ottnt{D}    [   \text{\unboldmath$\forall$}  \, \alpha  \ottsym{.} \, \ottnt{B}  \ottsym{/}  \beta  ]    \sqsubseteq     \text{\unboldmath$\forall$}  \, \alpha  \ottsym{.} \, \ottnt{C}    [  \ottnt{B}  \ottsym{/}  \beta  ]    \rightarrow  \ottnt{D}    [  \ottnt{B}  \ottsym{/}  \beta  ]  .
   \]

  \case $\ottnt{A} \,  =  \,  \text{\unboldmath$\forall$}  \, \gamma  \ottsym{.} \, \ottnt{C}$: By the IH, \Srule{Poly}, and permutation of the
   top-level $\forall$s for each case.

  \case $\ottnt{A} \,  =  \,  \ottnt{C}  \times  \ottnt{D} $:
   We prove the first case.
   The occurrences of $\beta$ in $ \ottnt{C}  \times  \ottnt{D} $ are only negative or strictly positive.
   By definition, the occurrences of $\beta$ in $\ottnt{C}$
   are only negative or strictly positive.  Thus, by the IH,
   $\Gamma  \vdash    \text{\unboldmath$\forall$}  \, \alpha  \ottsym{.} \, \ottnt{C}    [  \ottnt{B}  \ottsym{/}  \beta  ]    \sqsubseteq   \ottnt{C}    [   \text{\unboldmath$\forall$}  \, \alpha  \ottsym{.} \, \ottnt{B}  \ottsym{/}  \beta  ]  $.
   Similarly, we also have
   $\Gamma  \vdash    \text{\unboldmath$\forall$}  \, \alpha  \ottsym{.} \, \ottnt{D}    [  \ottnt{B}  \ottsym{/}  \beta  ]    \sqsubseteq   \ottnt{D}    [   \text{\unboldmath$\forall$}  \, \alpha  \ottsym{.} \, \ottnt{B}  \ottsym{/}  \beta  ]  $.
   By \Srule{Prod},
   \[
    \Gamma  \vdash    \ottsym{(}    \text{\unboldmath$\forall$}  \, \alpha  \ottsym{.} \, \ottnt{C}    [  \ottnt{B}  \ottsym{/}  \beta  ]    \ottsym{)}  \times   \text{\unboldmath$\forall$}  \, \alpha  \ottsym{.} \, \ottnt{D}     [  \ottnt{B}  \ottsym{/}  \beta  ]    \sqsubseteq     \ottnt{C}    [   \text{\unboldmath$\forall$}  \, \alpha  \ottsym{.} \, \ottnt{B}  \ottsym{/}  \beta  ]    \times  \ottnt{D}     [   \text{\unboldmath$\forall$}  \, \alpha  \ottsym{.} \, \ottnt{B}  \ottsym{/}  \beta  ]  .
   \]
   By \Srule{DProd} and \Srule{Trans},
   \[
    \Gamma  \vdash   \text{\unboldmath$\forall$}  \, \alpha  \ottsym{.} \, \ottsym{(}     \ottnt{C}    [  \ottnt{B}  \ottsym{/}  \beta  ]    \times  \ottnt{D}     [  \ottnt{B}  \ottsym{/}  \beta  ]    \ottsym{)}  \sqsubseteq     \ottnt{C}    [   \text{\unboldmath$\forall$}  \, \alpha  \ottsym{.} \, \ottnt{B}  \ottsym{/}  \beta  ]    \times  \ottnt{D}     [   \text{\unboldmath$\forall$}  \, \alpha  \ottsym{.} \, \ottnt{B}  \ottsym{/}  \beta  ]  .
   \]

   We prove the second case.
   The occurrences of $\beta$ in $ \ottnt{C}  \times  \ottnt{D} $ are only positive.
   By definition, the occurrences of $\beta$ in $\ottnt{C}$
   are only positive.  Thus, by the IH,
   $\Gamma  \vdash   \ottnt{C}    [   \text{\unboldmath$\forall$}  \, \alpha  \ottsym{.} \, \ottnt{B}  \ottsym{/}  \beta  ]    \sqsubseteq    \text{\unboldmath$\forall$}  \, \alpha  \ottsym{.} \, \ottnt{C}    [  \ottnt{B}  \ottsym{/}  \beta  ]  $.
   Similarly, we also have
   $\Gamma  \vdash   \ottnt{D}    [   \text{\unboldmath$\forall$}  \, \alpha  \ottsym{.} \, \ottnt{B}  \ottsym{/}  \beta  ]    \sqsubseteq    \text{\unboldmath$\forall$}  \, \alpha  \ottsym{.} \, \ottnt{D}    [  \ottnt{B}  \ottsym{/}  \beta  ]  $.
   By \Srule{Prod},
   \[
    \Gamma  \vdash     \ottnt{C}    [   \text{\unboldmath$\forall$}  \, \alpha  \ottsym{.} \, \ottnt{B}  \ottsym{/}  \beta  ]    \times  \ottnt{D}     [   \text{\unboldmath$\forall$}  \, \alpha  \ottsym{.} \, \ottnt{B}  \ottsym{/}  \beta  ]    \sqsubseteq    \ottsym{(}   \text{\unboldmath$\forall$}  \, \alpha  \ottsym{.} \,  \ottnt{C}    [  \ottnt{B}  \ottsym{/}  \beta  ]    \ottsym{)}  \times   \text{\unboldmath$\forall$}  \, \alpha  \ottsym{.} \, \ottnt{D}     [  \ottnt{B}  \ottsym{/}  \beta  ]  .
   \]
   By \Srule{Gen}, \Srule{Poly}, \Srule{Inst}, \Srule{Prod}, and \Srule{Trans},
   we have
   $\Gamma  \vdash    \ottsym{(}    \text{\unboldmath$\forall$}  \, \alpha  \ottsym{.} \, \ottnt{C}    [  \ottnt{B}  \ottsym{/}  \beta  ]    \ottsym{)}  \times   \text{\unboldmath$\forall$}  \, \alpha  \ottsym{.} \, \ottnt{D}     [  \ottnt{B}  \ottsym{/}  \beta  ]    \sqsubseteq   \text{\unboldmath$\forall$}  \, \alpha  \ottsym{.} \, \ottsym{(}     \ottnt{C}    [  \ottnt{B}  \ottsym{/}  \beta  ]    \times  \ottnt{D}     [  \ottnt{B}  \ottsym{/}  \beta  ]    \ottsym{)}$.
   Thus, by \Srule{Trans},
   \[
    \Gamma  \vdash     \ottnt{C}    [   \text{\unboldmath$\forall$}  \, \alpha  \ottsym{.} \, \ottnt{B}  \ottsym{/}  \beta  ]    \times  \ottnt{D}     [   \text{\unboldmath$\forall$}  \, \alpha  \ottsym{.} \, \ottnt{B}  \ottsym{/}  \beta  ]    \sqsubseteq   \text{\unboldmath$\forall$}  \, \alpha  \ottsym{.} \, \ottsym{(}     \ottnt{C}    [  \ottnt{B}  \ottsym{/}  \beta  ]    \times  \ottnt{D}     [  \ottnt{B}  \ottsym{/}  \beta  ]    \ottsym{)}.
   \]

  \case $\ottnt{A} \,  =  \,  \ottnt{C}  +  \ottnt{D} $: Similarly to the case that $\ottnt{A}$ is a product type;
   this case uses \Srule{Sum} and \Srule{DSum}
   instead of \Srule{Prod} and \Srule{DProd}.

  \case $\ottnt{A} \,  =  \,  \ottnt{C}  \, \mathsf{list} $: Similarly to the case that $\ottnt{A}$ is a product type;
   this case uses \Srule{List} and \Srule{DList}
   instead of \Srule{Prod} and \Srule{DProd}.
 \end{caseanalysis}
\end{proof}

\ifrestate
\lemmSubjectRed*
\else
\begin{lemmap}{Subject reduction}{subject-red}
 Suppose that all operations satisfy the signature restriction.
 \begin{enumerate}
  \item If $\Delta  \vdash  \ottnt{M_{{\mathrm{1}}}}  \ottsym{:}  \ottnt{A}$ and $\ottnt{M_{{\mathrm{1}}}}  \rightsquigarrow  \ottnt{M_{{\mathrm{2}}}}$,
        then $\Delta  \vdash  \ottnt{M_{{\mathrm{2}}}}  \ottsym{:}  \ottnt{A}$.
  \item If $\Delta  \vdash  \ottnt{M_{{\mathrm{1}}}}  \ottsym{:}  \ottnt{A}$ and $\ottnt{M_{{\mathrm{1}}}}  \longrightarrow  \ottnt{M_{{\mathrm{2}}}}$,
        then $\Delta  \vdash  \ottnt{M_{{\mathrm{2}}}}  \ottsym{:}  \ottnt{A}$.
 \end{enumerate}
\end{lemmap}
\fi
\begin{proof}
 \begin{enumerate}
  \item Suppose that $\Delta  \vdash  \ottnt{M_{{\mathrm{1}}}}  \ottsym{:}  \ottnt{A}$ and $\ottnt{M_{{\mathrm{1}}}}  \rightsquigarrow  \ottnt{M_{{\mathrm{2}}}}$.
        By induction on the typing derivation for $\ottnt{M_{{\mathrm{1}}}}$.
        \begin{caseanalysis}
         \case \T{Var}, \T{Op}, \T{Pair}, \T{InL}, \T{InR}, and \T{Cons}:
          Contradictory because there are no reduction rules
          that can be applied to $\ottnt{M_{{\mathrm{1}}}}$.

         \case \T{Const}, \T{Abs}, and \T{Nil}: Contradictory
          since $\ottnt{M_{{\mathrm{1}}}}$ is a value and no reduction rules can be applied to values.

         \case \T{App}: We have two reduction rules which can be applied to
          function applications.
          \begin{caseanalysis}
           \case \R{Const}:
            We are given
            \begin{itemize}
             \item $\ottnt{M_{{\mathrm{1}}}} \,  =  \, \ottnt{c_{{\mathrm{1}}}} \, \ottnt{c_{{\mathrm{2}}}}$,
             \item $\ottnt{M_{{\mathrm{2}}}} \,  =  \,  \zeta  (  \ottnt{c_{{\mathrm{1}}}}  ,  \ottnt{c_{{\mathrm{2}}}}  ) $,
             \item $\Delta  \vdash  \ottnt{c_{{\mathrm{1}}}} \, \ottnt{c_{{\mathrm{2}}}}  \ottsym{:}  \ottnt{A}$,
             \item $\Delta  \vdash  \ottnt{c_{{\mathrm{1}}}}  \ottsym{:}  \ottnt{B}  \rightarrow  \ottnt{A}$, and
             \item $\Delta  \vdash  \ottnt{c_{{\mathrm{2}}}}  \ottsym{:}  \ottnt{B}$
            \end{itemize}
            for some $\ottnt{c_{{\mathrm{1}}}}$, $\ottnt{c_{{\mathrm{2}}}}$, and $\ottnt{B}$.
            By \reflem{val-inv-const},
            $\Delta  \vdash   \mathit{ty}  (  \ottnt{c_{{\mathrm{1}}}}  )   \sqsubseteq  \ottnt{B}  \rightarrow  \ottnt{A}$.
            By \reflem{subtyping-unqualify} and \refasm{const},
            $ \mathit{ty}  (  \ottnt{c_{{\mathrm{1}}}}  )  \,  =  \, \iota  \rightarrow  \ottnt{C}$ for some $\iota$ and $\ottnt{C}$.
            Since $ \zeta  (  \ottnt{c_{{\mathrm{1}}}}  ,  \ottnt{c_{{\mathrm{2}}}}  ) $ is defined, it is found that
            $ \mathit{ty}  (  \ottnt{c_{{\mathrm{2}}}}  )  \,  =  \, \iota$ and $ \mathit{ty}  (   \zeta  (  \ottnt{c_{{\mathrm{1}}}}  ,  \ottnt{c_{{\mathrm{2}}}}  )   )  \,  =  \, \ottnt{C}$.
            Since $\vdash  \Delta$ by \reflem{type-wf},
            we have $\Delta  \vdash   \zeta  (  \ottnt{c_{{\mathrm{1}}}}  ,  \ottnt{c_{{\mathrm{2}}}}  )   \ottsym{:}   \mathit{ty}  (   \zeta  (  \ottnt{c_{{\mathrm{1}}}}  ,  \ottnt{c_{{\mathrm{2}}}}  )   ) $.
            Since $\Delta  \vdash  \iota  \rightarrow   \mathit{ty}  (   \zeta  (  \ottnt{c_{{\mathrm{1}}}}  ,  \ottnt{c_{{\mathrm{2}}}}  )   )   \sqsubseteq  \ottnt{B}  \rightarrow  \ottnt{A}$ (recall that
            $\ottnt{C} \,  =  \,  \mathit{ty}  (   \zeta  (  \ottnt{c_{{\mathrm{1}}}}  ,  \ottnt{c_{{\mathrm{2}}}}  )   ) $), we have $\Delta  \vdash   \mathit{ty}  (   \zeta  (  \ottnt{c_{{\mathrm{1}}}}  ,  \ottnt{c_{{\mathrm{2}}}}  )   )   \sqsubseteq  \ottnt{A}$
            by \reflem{subtyping-inv-fun-mono}.
            By \T{Inst}, we have $\Delta  \vdash   \zeta  (  \ottnt{c_{{\mathrm{1}}}}  ,  \ottnt{c_{{\mathrm{2}}}}  )   \ottsym{:}  \ottnt{A}$.

           \case \R{Beta}:
            We are given
            \begin{itemize}
             \item $\ottnt{M_{{\mathrm{1}}}} \,  =  \, \ottsym{(}   \lambda\!  \, \mathit{x}  \ottsym{.}  \ottnt{M}  \ottsym{)} \, \ottnt{v}$,
             \item $\ottnt{M_{{\mathrm{2}}}} \,  =  \,  \ottnt{M}    [  \ottnt{v}  /  \mathit{x}  ]  $,
             \item $\Delta  \vdash  \ottsym{(}   \lambda\!  \, \mathit{x}  \ottsym{.}  \ottnt{M}  \ottsym{)} \, \ottnt{v}  \ottsym{:}  \ottnt{A}$,
             \item $\Delta  \vdash   \lambda\!  \, \mathit{x}  \ottsym{.}  \ottnt{M}  \ottsym{:}  \ottnt{B}  \rightarrow  \ottnt{A}$, and
             \item $\Delta  \vdash  \ottnt{v}  \ottsym{:}  \ottnt{B}$
            \end{itemize}
            for some $\mathit{x}$, $\ottnt{M}$, $\ottnt{v}$, and $\ottnt{B}$.
            By \reflem{val-inv-abs}
            $\Delta  \ottsym{,}   \algeffseqoverindex{ \alpha }{ \text{\unboldmath$\mathit{I}$} }   \ottsym{,}  \mathit{x} \,  \mathord{:}  \, \ottnt{B'}  \vdash  \ottnt{M}  \ottsym{:}  \ottnt{A'}$ and
            $\Delta  \vdash   \text{\unboldmath$\forall$}  \,  \algeffseqoverindex{ \alpha }{ \text{\unboldmath$\mathit{I}$} }   \ottsym{.} \, \ottnt{B'}  \rightarrow  \ottnt{A'}  \sqsubseteq  \ottnt{B}  \rightarrow  \ottnt{A}$
            for some $ \algeffseqoverindex{ \alpha }{ \text{\unboldmath$\mathit{I}$} } $, $\ottnt{A'}$, and $\ottnt{B'}$.
            By \reflem{subtyping-inv-fun},
            there exist $ \algeffseqoverindex{ \alpha_{{\mathrm{1}}} }{ \text{\unboldmath$\mathit{I_{{\mathrm{1}}}}$} } $, $ \algeffseqoverindex{ \alpha_{{\mathrm{2}}} }{ \text{\unboldmath$\mathit{I_{{\mathrm{2}}}}$} } $, $ \algeffseqoverindex{ \beta }{ \text{\unboldmath$\mathit{J}$} } $, and $ \algeffseqoverindex{ \ottnt{C} }{ \text{\unboldmath$\mathit{I_{{\mathrm{1}}}}$} } $
            such that
            \begin{itemize}
             \item $\ottsym{\{}   \algeffseqoverindex{ \alpha }{ \text{\unboldmath$\mathit{I}$} }   \ottsym{\}} \,  =  \, \ottsym{\{}   \algeffseqoverindex{ \alpha_{{\mathrm{1}}} }{ \text{\unboldmath$\mathit{I_{{\mathrm{1}}}}$} }   \ottsym{\}} \,  \mathbin{\uplus}  \, \ottsym{\{}   \algeffseqoverindex{ \alpha_{{\mathrm{2}}} }{ \text{\unboldmath$\mathit{I_{{\mathrm{2}}}}$} }   \ottsym{\}}$,
             \item $\Delta  \ottsym{,}   \algeffseqoverindex{ \beta }{ \text{\unboldmath$\mathit{J}$} }   \vdash   \algeffseqoverindex{ \ottnt{C} }{ \text{\unboldmath$\mathit{I_{{\mathrm{1}}}}$} } $,
             \item $\Delta  \vdash  \ottnt{B}  \sqsubseteq    \text{\unboldmath$\forall$}  \,  \algeffseqoverindex{ \beta }{ \text{\unboldmath$\mathit{J}$} }   \ottsym{.} \, \ottnt{B'}    [   \algeffseqoverindex{ \ottnt{C} }{ \text{\unboldmath$\mathit{I_{{\mathrm{1}}}}$} }   \ottsym{/}   \algeffseqoverindex{ \alpha_{{\mathrm{1}}} }{ \text{\unboldmath$\mathit{I_{{\mathrm{1}}}}$} }   ]  $,
             \item $\Delta  \vdash    \text{\unboldmath$\forall$}  \,  \algeffseqoverindex{ \alpha_{{\mathrm{2}}} }{ \text{\unboldmath$\mathit{I_{{\mathrm{2}}}}$} }   \ottsym{.} \,  \text{\unboldmath$\forall$}  \,  \algeffseqoverindex{ \beta }{ \text{\unboldmath$\mathit{J}$} }   \ottsym{.} \, \ottnt{A'}    [   \algeffseqoverindex{ \ottnt{C} }{ \text{\unboldmath$\mathit{I_{{\mathrm{1}}}}$} }   \ottsym{/}   \algeffseqoverindex{ \alpha_{{\mathrm{1}}} }{ \text{\unboldmath$\mathit{I_{{\mathrm{1}}}}$} }   ]    \sqsubseteq  \ottnt{A}$, and
             \item type variables in $ \algeffseqoverindex{ \beta }{ \text{\unboldmath$\mathit{J}$} } $ do not appear free in
                   $\ottnt{A'}$ and $\ottnt{B'}$.
            \end{itemize}
            By \reflem{weakening},
            $\Delta  \ottsym{,}   \algeffseqoverindex{ \beta }{ \text{\unboldmath$\mathit{J}$} }   \ottsym{,}   \algeffseqoverindex{ \alpha }{ \text{\unboldmath$\mathit{I}$} }   \ottsym{,}  \mathit{x} \,  \mathord{:}  \, \ottnt{B'}  \vdash  \ottnt{M}  \ottsym{:}  \ottnt{A'}$ and
            $\Delta  \ottsym{,}   \algeffseqoverindex{ \beta }{ \text{\unboldmath$\mathit{J}$} }   \ottsym{,}   \algeffseqoverindex{ \alpha_{{\mathrm{2}}} }{ \text{\unboldmath$\mathit{I_{{\mathrm{2}}}}$} }   \vdash   \algeffseqoverindex{ \ottnt{C} }{ \text{\unboldmath$\mathit{I_{{\mathrm{1}}}}$} } $.
            Thus, by \reflem{ty-subst} (\ref{lem:ty-subst:term}),
            \begin{equation}
             \Delta  \ottsym{,}   \algeffseqoverindex{ \beta }{ \text{\unboldmath$\mathit{J}$} }   \ottsym{,}   \algeffseqoverindex{ \alpha_{{\mathrm{2}}} }{ \text{\unboldmath$\mathit{I_{{\mathrm{2}}}}$} }   \ottsym{,}  \mathit{x} \,  \mathord{:}  \, \ottnt{B'} \,  [   \algeffseqoverindex{ \ottnt{C} }{ \text{\unboldmath$\mathit{I_{{\mathrm{1}}}}$} }   \ottsym{/}   \algeffseqoverindex{ \alpha }{ \text{\unboldmath$\mathit{I_{{\mathrm{1}}}}$} }   ]   \vdash  \ottnt{M}  \ottsym{:}   \ottnt{A'}    [   \algeffseqoverindex{ \ottnt{C} }{ \text{\unboldmath$\mathit{I_{{\mathrm{1}}}}$} }   \ottsym{/}   \algeffseqoverindex{ \alpha_{{\mathrm{1}}} }{ \text{\unboldmath$\mathit{I_{{\mathrm{1}}}}$} }   ]  
              \label{eqn:subject-red:app:beta:body}
            \end{equation}

            Since $\Delta  \vdash  \ottnt{v}  \ottsym{:}  \ottnt{B}$ and $\Delta  \vdash  \ottnt{B}  \sqsubseteq    \text{\unboldmath$\forall$}  \,  \algeffseqoverindex{ \beta }{ \text{\unboldmath$\mathit{J}$} }   \ottsym{.} \, \ottnt{B'}    [   \algeffseqoverindex{ \ottnt{C} }{ \text{\unboldmath$\mathit{I_{{\mathrm{1}}}}$} }   \ottsym{/}   \algeffseqoverindex{ \alpha_{{\mathrm{1}}} }{ \text{\unboldmath$\mathit{I_{{\mathrm{1}}}}$} }   ]  $,
            we have
            \[
             \Delta  \vdash  \ottnt{v}  \ottsym{:}    \text{\unboldmath$\forall$}  \,  \algeffseqoverindex{ \beta }{ \text{\unboldmath$\mathit{J}$} }   \ottsym{.} \, \ottnt{B'}    [   \algeffseqoverindex{ \ottnt{C} }{ \text{\unboldmath$\mathit{I_{{\mathrm{1}}}}$} }   \ottsym{/}   \algeffseqoverindex{ \alpha_{{\mathrm{1}}} }{ \text{\unboldmath$\mathit{I_{{\mathrm{1}}}}$} }   ]  
            \]
            by \T{Inst} (note that $\Delta  \vdash    \text{\unboldmath$\forall$}  \,  \algeffseqoverindex{ \beta }{ \text{\unboldmath$\mathit{J}$} }   \ottsym{.} \, \ottnt{B'}    [   \algeffseqoverindex{ \ottnt{C} }{ \text{\unboldmath$\mathit{I_{{\mathrm{1}}}}$} }   \ottsym{/}   \algeffseqoverindex{ \alpha_{{\mathrm{1}}} }{ \text{\unboldmath$\mathit{I_{{\mathrm{1}}}}$} }   ]  $ is
            shown easily with \reflem{type-wf}).
            By \reflem{weakening} (\ref{lem:weakening:term}),
            \Srule{Inst}, and \T{Inst},
            we have
            \[
             \Delta  \ottsym{,}   \algeffseqoverindex{ \beta }{ \text{\unboldmath$\mathit{J}$} }   \ottsym{,}   \algeffseqoverindex{ \alpha_{{\mathrm{2}}} }{ \text{\unboldmath$\mathit{I_{{\mathrm{2}}}}$} }   \vdash  \ottnt{v}  \ottsym{:}   \ottnt{B'}    [   \algeffseqoverindex{ \ottnt{C} }{ \text{\unboldmath$\mathit{I_{{\mathrm{1}}}}$} }   \ottsym{/}   \algeffseqoverindex{ \alpha }{ \text{\unboldmath$\mathit{I_{{\mathrm{1}}}}$} }   ]  .
            \]
            By \reflem{term-subst} (\ref{lem:term-subst:term})
            with (\ref{eqn:subject-red:app:beta:body}),
            \[
             \Delta  \ottsym{,}   \algeffseqoverindex{ \beta }{ \text{\unboldmath$\mathit{J}$} }   \ottsym{,}   \algeffseqoverindex{ \alpha_{{\mathrm{2}}} }{ \text{\unboldmath$\mathit{I_{{\mathrm{2}}}}$} }   \vdash   \ottnt{M}    [  \ottnt{v}  /  \mathit{x}  ]    \ottsym{:}   \ottnt{A'}    [   \algeffseqoverindex{ \ottnt{C} }{ \text{\unboldmath$\mathit{I_{{\mathrm{1}}}}$} }   \ottsym{/}   \algeffseqoverindex{ \alpha_{{\mathrm{1}}} }{ \text{\unboldmath$\mathit{I_{{\mathrm{1}}}}$} }   ]  .
            \]
            By \T{Gen} (with permutation of the bindings in the typing context),
            \[
             \Delta  \vdash   \ottnt{M}    [  \ottnt{v}  /  \mathit{x}  ]    \ottsym{:}    \text{\unboldmath$\forall$}  \,  \algeffseqoverindex{ \alpha_{{\mathrm{2}}} }{ \text{\unboldmath$\mathit{I_{{\mathrm{2}}}}$} }   \ottsym{.} \,  \text{\unboldmath$\forall$}  \,  \algeffseqoverindex{ \beta }{ \text{\unboldmath$\mathit{J}$} }   \ottsym{.} \, \ottnt{A'}    [   \algeffseqoverindex{ \ottnt{C} }{ \text{\unboldmath$\mathit{I_{{\mathrm{1}}}}$} }   \ottsym{/}   \algeffseqoverindex{ \alpha_{{\mathrm{1}}} }{ \text{\unboldmath$\mathit{I_{{\mathrm{1}}}}$} }   ]  .
            \]
            Since $\Delta  \vdash    \text{\unboldmath$\forall$}  \,  \algeffseqoverindex{ \alpha_{{\mathrm{2}}} }{ \text{\unboldmath$\mathit{I_{{\mathrm{2}}}}$} }   \ottsym{.} \,  \text{\unboldmath$\forall$}  \,  \algeffseqoverindex{ \beta }{ \text{\unboldmath$\mathit{J}$} }   \ottsym{.} \, \ottnt{A'}    [   \algeffseqoverindex{ \ottnt{C} }{ \text{\unboldmath$\mathit{I_{{\mathrm{1}}}}$} }   \ottsym{/}   \algeffseqoverindex{ \alpha_{{\mathrm{1}}} }{ \text{\unboldmath$\mathit{I_{{\mathrm{1}}}}$} }   ]    \sqsubseteq  \ottnt{A}$,
            we have $\Delta  \vdash   \ottnt{M}    [  \ottnt{v}  /  \mathit{x}  ]    \ottsym{:}  \ottnt{A}$ by \T{Inst}.
          \end{caseanalysis}

         \case \T{Gen}: By the IH and \T{Gen}.
         \case \T{Inst}: By the IH and \T{Inst}.

         \case \T{Handle}:
          We have two reduction rules which can be applied to
          $ \mathsf{handle} $--$ \mathsf{with} $ expressions.
          \begin{caseanalysis}
           \case \R{Return}:
            We are given
            \begin{itemize}
             \item $\ottnt{M_{{\mathrm{1}}}} \,  =  \, \mathsf{handle} \, \ottnt{v} \, \mathsf{with} \, \ottnt{H}$,
             \item $ \ottnt{H} ^\mathsf{return}  \,  =  \, \mathsf{return} \, \mathit{x}  \rightarrow  \ottnt{M}$,
             \item $\ottnt{M_{{\mathrm{2}}}} \,  =  \,  \ottnt{M}    [  \ottnt{v}  /  \mathit{x}  ]  $,
             \item $\Delta  \vdash  \mathsf{handle} \, \ottnt{v} \, \mathsf{with} \, \ottnt{H}  \ottsym{:}  \ottnt{A}$,
             \item $\Delta  \vdash  \ottnt{v}  \ottsym{:}  \ottnt{B}$,
             \item $\Delta  \vdash  \ottnt{H}  \ottsym{:}  \ottnt{B}  \Rightarrow  \ottnt{A}$
            \end{itemize}
            for some $\ottnt{v}$, $\ottnt{H}$, $\mathit{x}$, $\ottnt{M}$, and $\ottnt{B}$.
            By inversion of the derivation of
            $\Delta  \vdash  \ottnt{H}  \ottsym{:}  \ottnt{B}  \Rightarrow  \ottnt{A}$, we have $\Delta  \ottsym{,}  \mathit{x} \,  \mathord{:}  \, \ottnt{B}  \vdash  \ottnt{M}  \ottsym{:}  \ottnt{A}$.
            By \reflem{term-subst} (\ref{lem:term-subst:term}),
            $\Delta  \vdash   \ottnt{M}    [  \ottnt{v}  /  \mathit{x}  ]    \ottsym{:}  \ottnt{A}$, which is the conclusion we have to show.

           \case \R{Handle}:
            We are given
            \begin{itemize}
             \item $\ottnt{M_{{\mathrm{1}}}} \,  =  \, \mathsf{handle} \,  \ottnt{E}  [   \textup{\texttt{\#}\relax}  \mathsf{op}   \ottsym{(}   \ottnt{v}   \ottsym{)}   ]  \, \mathsf{with} \, \ottnt{H}$,
             \item $\mathsf{op} \,  \not\in  \, \ottnt{E}$,
             \item $\ottnt{H}  \ottsym{(}  \mathsf{op}  \ottsym{)} \,  =  \, \mathsf{op}  \ottsym{(}  \mathit{x}  \ottsym{,}  \mathit{k}  \ottsym{)}  \rightarrow  \ottnt{M}$,
             \item $\ottnt{M_{{\mathrm{2}}}} \,  =  \,   \ottnt{M}    [  \ottnt{v}  /  \mathit{x}  ]      [   \lambda\!  \, \mathit{y}  \ottsym{.}  \mathsf{handle} \,  \ottnt{E}  [  \mathit{y}  ]  \, \mathsf{with} \, \ottnt{H}  /  \mathit{k}  ]  $,
             \item $\Delta  \vdash  \mathsf{handle} \,  \ottnt{E}  [   \textup{\texttt{\#}\relax}  \mathsf{op}   \ottsym{(}   \ottnt{v}   \ottsym{)}   ]  \, \mathsf{with} \, \ottnt{H}  \ottsym{:}  \ottnt{A}$,
             \item $\Delta  \vdash   \ottnt{E}  [   \textup{\texttt{\#}\relax}  \mathsf{op}   \ottsym{(}   \ottnt{v}   \ottsym{)}   ]   \ottsym{:}  \ottnt{B}$,
             \item $\Delta  \vdash  \ottnt{H}  \ottsym{:}  \ottnt{B}  \Rightarrow  \ottnt{A}$
            \end{itemize}
            for some $\ottnt{E}$, $\mathsf{op}$, $\ottnt{v}$, $\ottnt{H}$, $\mathit{x}$, $\mathit{y}$,
            $\mathit{k}$, $\ottnt{M}$, and $\ottnt{B}$.
            Suppose that $\mathit{ty} \, \ottsym{(}  \mathsf{op}  \ottsym{)} \,  =  \,   \text{\unboldmath$\forall$}  \,  \algeffseqover{ \alpha }   \ottsym{.} \,  \ottnt{C}  \hookrightarrow  \ottnt{D} $.
            By inversion of the derivation of
            $\Delta  \vdash  \ottnt{H}  \ottsym{:}  \ottnt{B}  \Rightarrow  \ottnt{A}$,
            we have $\Delta  \ottsym{,}   \algeffseqover{ \alpha }   \ottsym{,}  \mathit{x} \,  \mathord{:}  \, \ottnt{C}  \ottsym{,}  \mathit{k} \,  \mathord{:}  \, \ottnt{D}  \rightarrow  \ottnt{A}  \vdash  \ottnt{M}  \ottsym{:}  \ottnt{A}$.

            By \reflem{ectx-op-typing},
            $\Delta  \ottsym{,}   \algeffseqoverindex{ \beta }{ \text{\unboldmath$\mathit{J}$} }   \vdash   \algeffseqover{ \ottnt{C_{{\mathrm{0}}}} } $ and
            $\Delta  \ottsym{,}   \algeffseqoverindex{ \beta }{ \text{\unboldmath$\mathit{J}$} }   \vdash  \ottnt{v}  \ottsym{:}   \ottnt{C}    [   \algeffseqover{ \ottnt{C_{{\mathrm{0}}}} }   \ottsym{/}   \algeffseqover{ \alpha }   ]  $
            for some $ \algeffseqoverindex{ \beta }{ \text{\unboldmath$\mathit{J}$} } $ and $ \algeffseqover{ \ottnt{C_{{\mathrm{0}}}} } $.
            Since $\Delta  \vdash   \text{\unboldmath$\forall$}  \,  \algeffseqoverindex{ \beta }{ \text{\unboldmath$\mathit{J}$} }   \ottsym{.} \,  \algeffseqover{ \ottnt{C_{{\mathrm{0}}}} } $,
            \begin{equation}
             \Delta  \ottsym{,}  \mathit{x} \,  \mathord{:}  \, \ottnt{C} \,  [   \text{\unboldmath$\forall$}  \,  \algeffseqoverindex{ \beta }{ \text{\unboldmath$\mathit{J}$} }   \ottsym{.} \,  \algeffseqover{ \ottnt{C_{{\mathrm{0}}}} }   \ottsym{/}   \algeffseqover{ \alpha }   ]   \ottsym{,}  \mathit{k} \,  \mathord{:}  \,  \ottnt{D}    [   \text{\unboldmath$\forall$}  \,  \algeffseqoverindex{ \beta }{ \text{\unboldmath$\mathit{J}$} }   \ottsym{.} \,  \algeffseqover{ \ottnt{C_{{\mathrm{0}}}} }   \ottsym{/}   \algeffseqover{ \alpha }   ]    \rightarrow  \ottnt{A}  \vdash  \ottnt{M}  \ottsym{:}  \ottnt{A}
             \label{eqn:subject-red:handle:handle:one}
            \end{equation}
            by \reflem{ty-subst} (\ref{lem:ty-subst:term}) (note that
            type variables in $ \algeffseqover{ \alpha } $ do not appear free in $\ottnt{A}$).

            Since $\Delta  \ottsym{,}   \algeffseqoverindex{ \beta }{ \text{\unboldmath$\mathit{J}$} }   \vdash  \ottnt{v}  \ottsym{:}   \ottnt{C}    [   \algeffseqover{ \ottnt{C_{{\mathrm{0}}}} }   \ottsym{/}   \algeffseqover{ \alpha }   ]  $, we have
            $\Delta  \vdash  \ottnt{v}  \ottsym{:}    \text{\unboldmath$\forall$}  \,  \algeffseqoverindex{ \beta }{ \text{\unboldmath$\mathit{J}$} }   \ottsym{.} \, \ottnt{C}    [   \algeffseqover{ \ottnt{C_{{\mathrm{0}}}} }   \ottsym{/}   \algeffseqover{ \alpha }   ]  $ by \T{Gen}.
            By \refdef{signature-restriction}, the occurrences of $ \algeffseqover{ \alpha } $ in the domain type
            $\ottnt{C}$ of the type signature of $\mathsf{op}$ are only negative or
            strictly positive.
            Thus, we have $\Delta  \vdash  \ottnt{v}  \ottsym{:}   \ottnt{C}    [   \algeffseqover{  \text{\unboldmath$\forall$}  \,  \algeffseqoverindex{ \beta }{ \text{\unboldmath$\mathit{J}$} }   \ottsym{.} \, \ottnt{C_{{\mathrm{0}}}} }   \ottsym{/}   \algeffseqover{ \alpha }   ]  $
            by \reflem{subtyping-forall-move} (\ref{lem:subtyping-forall-move:neg})
            and \T{Inst}
            (note that we can suppose that $ \algeffseqoverindex{ \beta }{ \text{\unboldmath$\mathit{J}$} } $ do not appear
            free in $\ottnt{C}$).
            Thus, by applying \reflem{term-subst} (\ref{lem:term-subst:term})
            to (\ref{eqn:subject-red:handle:handle:one}), we have
            \begin{equation}
             \Delta  \ottsym{,}  \mathit{k} \,  \mathord{:}  \,  \ottnt{D}    [   \text{\unboldmath$\forall$}  \,  \algeffseqoverindex{ \beta }{ \text{\unboldmath$\mathit{J}$} }   \ottsym{.} \,  \algeffseqover{ \ottnt{C_{{\mathrm{0}}}} }   \ottsym{/}   \algeffseqover{ \alpha }   ]    \rightarrow  \ottnt{A}  \vdash   \ottnt{M}    [  \ottnt{v}  /  \mathit{x}  ]    \ottsym{:}  \ottnt{A}.
              \label{eqn:subject-red:handle:handle:two}
            \end{equation}

            We show that
            \[
             \Delta  \vdash   \lambda\!  \, \mathit{y}  \ottsym{.}  \mathsf{handle} \,  \ottnt{E}  [  \mathit{y}  ]  \, \mathsf{with} \, \ottnt{H}  \ottsym{:}   \ottnt{D}    [   \text{\unboldmath$\forall$}  \,  \algeffseqoverindex{ \beta }{ \text{\unboldmath$\mathit{J}$} }   \ottsym{.} \,  \algeffseqover{ \ottnt{C_{{\mathrm{0}}}} }   \ottsym{/}   \algeffseqover{ \alpha }   ]    \rightarrow  \ottnt{A}.
            \]
            By \refdef{signature-restriction}, the occurrences of $ \algeffseqover{ \alpha } $ in the codomain type
            $\ottnt{D}$ of the type signature of $\mathsf{op}$ are only positive.
            Thus, we have $\Delta  \vdash   \ottnt{D}    [   \text{\unboldmath$\forall$}  \,  \algeffseqoverindex{ \beta }{ \text{\unboldmath$\mathit{J}$} }   \ottsym{.} \,  \algeffseqover{ \ottnt{C_{{\mathrm{0}}}} }   \ottsym{/}   \algeffseqover{ \alpha }   ]    \sqsubseteq    \text{\unboldmath$\forall$}  \,  \algeffseqoverindex{ \beta }{ \text{\unboldmath$\mathit{J}$} }   \ottsym{.} \, \ottnt{D}    [   \algeffseqover{ \ottnt{C_{{\mathrm{0}}}} }   \ottsym{/}   \algeffseqover{ \alpha }   ]  $
            by \reflem{subtyping-forall-move} (\ref{lem:subtyping-forall-move:pos})
            (note that we can suppose that
            $ \algeffseqoverindex{ \beta }{ \text{\unboldmath$\mathit{J}$} } $ do not appear free in $\ottnt{D}$).
            Thus,
            \[
             \Delta  \ottsym{,}  \mathit{y} \,  \mathord{:}  \, \ottnt{D} \,  [   \text{\unboldmath$\forall$}  \,  \algeffseqoverindex{ \beta }{ \text{\unboldmath$\mathit{J}$} }   \ottsym{.} \,  \algeffseqover{ \ottnt{C_{{\mathrm{0}}}} }   \ottsym{/}   \algeffseqover{ \alpha }   ]   \vdash  \mathit{y}  \ottsym{:}    \text{\unboldmath$\forall$}  \,  \algeffseqoverindex{ \beta }{ \text{\unboldmath$\mathit{J}$} }   \ottsym{.} \, \ottnt{D}    [   \algeffseqover{ \ottnt{C_{{\mathrm{0}}}} }   \ottsym{/}   \algeffseqover{ \alpha }   ]  
            \]
            by \T{Inst}.
            By \reflem{weakening} (\ref{lem:weakening:term}) and \Srule{Inst},
            \[
             \Delta  \ottsym{,}  \mathit{y} \,  \mathord{:}  \, \ottnt{D} \,  [   \text{\unboldmath$\forall$}  \,  \algeffseqoverindex{ \beta }{ \text{\unboldmath$\mathit{J}$} }   \ottsym{.} \,  \algeffseqover{ \ottnt{C_{{\mathrm{0}}}} }   \ottsym{/}   \algeffseqover{ \alpha }   ]   \ottsym{,}   \algeffseqoverindex{ \beta }{ \text{\unboldmath$\mathit{J}$} }   \vdash  \mathit{y}  \ottsym{:}   \ottnt{D}    [   \algeffseqover{ \ottnt{C_{{\mathrm{0}}}} }   \ottsym{/}   \algeffseqover{ \alpha }   ]  .
            \]
            By \reflem{ectx-op-typing},
            \[
             \Delta  \ottsym{,}  \mathit{y} \,  \mathord{:}  \,  \text{\unboldmath$\forall$}  \,  \algeffseqoverindex{ \beta }{ \text{\unboldmath$\mathit{J}$} }   \ottsym{.} \, \ottnt{D} \,  [   \algeffseqover{ \ottnt{C_{{\mathrm{0}}}} }   \ottsym{/}   \algeffseqover{ \alpha }   ]   \vdash   \ottnt{E}  [  \mathit{y}  ]   \ottsym{:}  \ottnt{B}.
            \]
            By \reflem{var-subtype},
            \[
             \Delta  \ottsym{,}  \mathit{y} \,  \mathord{:}  \, \ottnt{D} \,  [   \text{\unboldmath$\forall$}  \,  \algeffseqoverindex{ \beta }{ \text{\unboldmath$\mathit{J}$} }   \ottsym{.} \,  \algeffseqover{ \ottnt{C_{{\mathrm{0}}}} }   \ottsym{/}   \algeffseqover{ \alpha }   ]   \vdash   \ottnt{E}  [  \mathit{y}  ]   \ottsym{:}  \ottnt{B}.
            \]
            Thus, we have
            \[
             \Delta  \ottsym{,}  \mathit{y} \,  \mathord{:}  \, \ottnt{D} \,  [   \text{\unboldmath$\forall$}  \,  \algeffseqoverindex{ \beta }{ \text{\unboldmath$\mathit{J}$} }   \ottsym{.} \,  \algeffseqover{ \ottnt{C_{{\mathrm{0}}}} }   \ottsym{/}   \algeffseqover{ \alpha }   ]   \vdash  \mathsf{handle} \,  \ottnt{E}  [  \mathit{y}  ]  \, \mathsf{with} \, \ottnt{H}  \ottsym{:}  \ottnt{A}
            \]
            by \reflem{weakening} (\ref{lem:weakening:handler}) and \T{Handle}.
            By \T{Abs},
            \[
             \Delta  \vdash   \lambda\!  \, \mathit{y}  \ottsym{.}  \mathsf{handle} \,  \ottnt{E}  [  \mathit{y}  ]  \, \mathsf{with} \, \ottnt{H}  \ottsym{:}   \ottnt{D}    [   \text{\unboldmath$\forall$}  \,  \algeffseqoverindex{ \beta }{ \text{\unboldmath$\mathit{J}$} }   \ottsym{.} \,  \algeffseqover{ \ottnt{C_{{\mathrm{0}}}} }   \ottsym{/}   \algeffseqover{ \alpha }   ]    \rightarrow  \ottnt{A}.
            \]

            By applying \reflem{term-subst} (\ref{lem:term-subst:term}) to
            (\ref{eqn:subject-red:handle:handle:two}), we have
            \[
             \Delta  \vdash    \ottnt{M}    [  \ottnt{v}  /  \mathit{x}  ]      [   \lambda\!  \, \mathit{y}  \ottsym{.}  \mathsf{handle} \,  \ottnt{E}  [  \mathit{y}  ]  \, \mathsf{with} \, \ottnt{H}  /  \mathit{k}  ]    \ottsym{:}  \ottnt{A},
            \]
            which is what we have to show.
          \end{caseanalysis}

         \case \T{Proj1}: We have one reduction rule \R{Proj1} which can be
          applied to projection $ \pi_1 $.  Thus, we are given
          \begin{itemize}
           \item $\ottnt{M_{{\mathrm{1}}}} \,  =  \, \pi_1  \ottsym{(}  \ottnt{v_{{\mathrm{1}}}}  \ottsym{,}  \ottnt{v_{{\mathrm{2}}}}  \ottsym{)}$,
           \item $\ottnt{M_{{\mathrm{2}}}} \,  =  \, \ottnt{v_{{\mathrm{1}}}}$,
           \item $\Delta  \vdash  \pi_1  \ottsym{(}  \ottnt{v_{{\mathrm{1}}}}  \ottsym{,}  \ottnt{v_{{\mathrm{2}}}}  \ottsym{)}  \ottsym{:}  \ottnt{A}$,
           \item $\Delta  \vdash  \ottsym{(}  \ottnt{v_{{\mathrm{1}}}}  \ottsym{,}  \ottnt{v_{{\mathrm{2}}}}  \ottsym{)}  \ottsym{:}   \ottnt{A}  \times  \ottnt{B} $
          \end{itemize}
          for some $\ottnt{v_{{\mathrm{1}}}}$, $\ottnt{v_{{\mathrm{2}}}}$, and $\ottnt{B}$.
          By \reflem{val-inv-pair},
          $\Delta  \ottsym{,}   \algeffseqoverindex{ \alpha }{ \text{\unboldmath$\mathit{I}$} }   \vdash  \ottnt{v_{{\mathrm{1}}}}  \ottsym{:}  \ottnt{C_{{\mathrm{1}}}}$ and $\Delta  \ottsym{,}   \algeffseqoverindex{ \alpha }{ \text{\unboldmath$\mathit{I}$} }   \vdash  \ottnt{v_{{\mathrm{2}}}}  \ottsym{:}  \ottnt{C_{{\mathrm{2}}}}$ and
          $\Delta  \vdash    \text{\unboldmath$\forall$}  \,  \algeffseqoverindex{ \alpha }{ \text{\unboldmath$\mathit{I}$} }   \ottsym{.} \, \ottnt{C_{{\mathrm{1}}}}  \times  \ottnt{C_{{\mathrm{2}}}}   \sqsubseteq   \ottnt{A}  \times  \ottnt{B} $
          for some $ \algeffseqoverindex{ \alpha }{ \text{\unboldmath$\mathit{I}$} } $, $\ottnt{C_{{\mathrm{1}}}}$, and $\ottnt{C_{{\mathrm{2}}}}$.
          By \reflem{subtyping-inv-prod},
          there exist $ \algeffseqoverindex{ \alpha_{{\mathrm{1}}} }{ \text{\unboldmath$\mathit{I_{{\mathrm{1}}}}$} } $, $ \algeffseqoverindex{ \alpha_{{\mathrm{2}}} }{ \text{\unboldmath$\mathit{I_{{\mathrm{2}}}}$} } $, $ \algeffseqoverindex{ \beta }{ \text{\unboldmath$\mathit{J}$} } $, and $ \algeffseqoverindex{ \ottnt{D} }{ \text{\unboldmath$\mathit{I_{{\mathrm{1}}}}$} } $
          such that
          \begin{itemize}
           \item $\ottsym{\{}   \algeffseqoverindex{ \alpha }{ \text{\unboldmath$\mathit{I}$} }   \ottsym{\}} \,  =  \, \ottsym{\{}   \algeffseqoverindex{ \alpha_{{\mathrm{1}}} }{ \text{\unboldmath$\mathit{I_{{\mathrm{1}}}}$} }   \ottsym{\}} \,  \mathbin{\uplus}  \, \ottsym{\{}   \algeffseqoverindex{ \alpha_{{\mathrm{2}}} }{ \text{\unboldmath$\mathit{I_{{\mathrm{2}}}}$} }   \ottsym{\}}$,
           \item $\Delta  \ottsym{,}   \algeffseqoverindex{ \beta }{ \text{\unboldmath$\mathit{J}$} }   \vdash   \algeffseqoverindex{ \ottnt{D} }{ \text{\unboldmath$\mathit{I_{{\mathrm{1}}}}$} } $,
           \item $\Delta  \vdash    \text{\unboldmath$\forall$}  \,  \algeffseqoverindex{ \alpha_{{\mathrm{2}}} }{ \text{\unboldmath$\mathit{I_{{\mathrm{2}}}}$} }   \ottsym{.} \,  \text{\unboldmath$\forall$}  \,  \algeffseqoverindex{ \beta }{ \text{\unboldmath$\mathit{J}$} }   \ottsym{.} \, \ottnt{C_{{\mathrm{1}}}}    [   \algeffseqoverindex{ \ottnt{D} }{ \text{\unboldmath$\mathit{I_{{\mathrm{1}}}}$} }   \ottsym{/}   \algeffseqoverindex{ \alpha_{{\mathrm{1}}} }{ \text{\unboldmath$\mathit{I_{{\mathrm{1}}}}$} }   ]    \sqsubseteq  \ottnt{A}$,
           \item $\Delta  \vdash    \text{\unboldmath$\forall$}  \,  \algeffseqoverindex{ \alpha_{{\mathrm{2}}} }{ \text{\unboldmath$\mathit{I_{{\mathrm{2}}}}$} }   \ottsym{.} \,  \text{\unboldmath$\forall$}  \,  \algeffseqoverindex{ \beta }{ \text{\unboldmath$\mathit{J}$} }   \ottsym{.} \, \ottnt{C_{{\mathrm{2}}}}    [   \algeffseqoverindex{ \ottnt{D} }{ \text{\unboldmath$\mathit{I_{{\mathrm{1}}}}$} }   \ottsym{/}   \algeffseqoverindex{ \alpha_{{\mathrm{1}}} }{ \text{\unboldmath$\mathit{I_{{\mathrm{1}}}}$} }   ]    \sqsubseteq  \ottnt{B}$, and
           \item type variables in $ \algeffseqoverindex{ \beta }{ \text{\unboldmath$\mathit{J}$} } $ do not appear in $\ottnt{C_{{\mathrm{1}}}}$ and $\ottnt{C_{{\mathrm{2}}}}$.
          \end{itemize}

          We have to show that
          \[
           \Delta  \vdash  \ottnt{v_{{\mathrm{1}}}}  \ottsym{:}  \ottnt{A}.
          \]
          Since $\Delta  \vdash    \text{\unboldmath$\forall$}  \,  \algeffseqoverindex{ \alpha_{{\mathrm{2}}} }{ \text{\unboldmath$\mathit{I_{{\mathrm{2}}}}$} }   \ottsym{.} \,  \text{\unboldmath$\forall$}  \,  \algeffseqoverindex{ \beta }{ \text{\unboldmath$\mathit{J}$} }   \ottsym{.} \, \ottnt{C_{{\mathrm{1}}}}    [   \algeffseqoverindex{ \ottnt{D} }{ \text{\unboldmath$\mathit{I_{{\mathrm{1}}}}$} }   \ottsym{/}   \algeffseqoverindex{ \alpha_{{\mathrm{1}}} }{ \text{\unboldmath$\mathit{I_{{\mathrm{1}}}}$} }   ]    \sqsubseteq  \ottnt{A}$,
          it suffices to show that
          \[
           \Delta  \vdash  \ottnt{v_{{\mathrm{1}}}}  \ottsym{:}    \text{\unboldmath$\forall$}  \,  \algeffseqoverindex{ \alpha_{{\mathrm{2}}} }{ \text{\unboldmath$\mathit{I_{{\mathrm{2}}}}$} }   \ottsym{.} \,  \text{\unboldmath$\forall$}  \,  \algeffseqoverindex{ \beta }{ \text{\unboldmath$\mathit{J}$} }   \ottsym{.} \, \ottnt{C_{{\mathrm{1}}}}    [   \algeffseqoverindex{ \ottnt{D} }{ \text{\unboldmath$\mathit{I_{{\mathrm{1}}}}$} }   \ottsym{/}   \algeffseqoverindex{ \alpha_{{\mathrm{1}}} }{ \text{\unboldmath$\mathit{I_{{\mathrm{1}}}}$} }   ]  
          \]
          by \T{Inst}.
          We have $\Delta  \ottsym{,}   \algeffseqoverindex{ \beta }{ \text{\unboldmath$\mathit{J}$} }   \ottsym{,}   \algeffseqoverindex{ \alpha }{ \text{\unboldmath$\mathit{I}$} }   \vdash  \ottnt{v_{{\mathrm{1}}}}  \ottsym{:}  \ottnt{C_{{\mathrm{1}}}}$
          by \reflem{weakening} (\ref{lem:weakening:term}).
          By \reflem{ty-subst} (\ref{lem:ty-subst:term}),
          we have $\Delta  \ottsym{,}   \algeffseqoverindex{ \beta }{ \text{\unboldmath$\mathit{J}$} }   \ottsym{,}   \algeffseqoverindex{ \alpha_{{\mathrm{2}}} }{ \text{\unboldmath$\mathit{I_{{\mathrm{2}}}}$} }   \vdash  \ottnt{v_{{\mathrm{1}}}}  \ottsym{:}   \ottnt{C_{{\mathrm{1}}}}    [   \algeffseqoverindex{ \ottnt{D} }{ \text{\unboldmath$\mathit{I_{{\mathrm{1}}}}$} }   \ottsym{/}   \algeffseqoverindex{ \alpha_{{\mathrm{1}}} }{ \text{\unboldmath$\mathit{I_{{\mathrm{1}}}}$} }   ]  $.
          By \T{Gen} (and swapping $ \algeffseqoverindex{ \beta }{ \text{\unboldmath$\mathit{J}$} } $ and $ \algeffseqoverindex{ \alpha_{{\mathrm{2}}} }{ \text{\unboldmath$\mathit{I_{{\mathrm{2}}}}$} } $
          in the typing context $\Delta  \ottsym{,}   \algeffseqoverindex{ \beta }{ \text{\unboldmath$\mathit{J}$} }   \ottsym{,}   \algeffseqoverindex{ \alpha_{{\mathrm{2}}} }{ \text{\unboldmath$\mathit{I_{{\mathrm{2}}}}$} } $),
          we have
          \[
           \Delta  \vdash  \ottnt{v_{{\mathrm{1}}}}  \ottsym{:}    \text{\unboldmath$\forall$}  \,  \algeffseqoverindex{ \alpha_{{\mathrm{2}}} }{ \text{\unboldmath$\mathit{I_{{\mathrm{2}}}}$} }   \ottsym{.} \,  \text{\unboldmath$\forall$}  \,  \algeffseqoverindex{ \beta }{ \text{\unboldmath$\mathit{J}$} }   \ottsym{.} \, \ottnt{C_{{\mathrm{1}}}}    [   \algeffseqoverindex{ \ottnt{D} }{ \text{\unboldmath$\mathit{I_{{\mathrm{1}}}}$} }   \ottsym{/}   \algeffseqoverindex{ \alpha_{{\mathrm{1}}} }{ \text{\unboldmath$\mathit{I_{{\mathrm{1}}}}$} }   ]  .
          \]

         \case \T{Proj2}: Similar to the case for \T{Proj1}.

         \case \T{Case}:
          We have two reduction rules which can be applied to
          case expressions.
          \begin{caseanalysis}
           \case \R{CaseL}:
            We are given
            \begin{itemize}
             \item $\ottnt{M_{{\mathrm{1}}}} \,  =  \, \mathsf{case} \, \ottsym{(}  \mathsf{inl} \, \ottnt{v}  \ottsym{)} \, \mathsf{of} \, \mathsf{inl} \, \mathit{x}  \rightarrow  \ottnt{M'_{{\mathrm{1}}}}  \ottsym{;} \, \mathsf{inr} \, \mathit{y}  \rightarrow  \ottnt{M'_{{\mathrm{2}}}}$,
             \item $\ottnt{M_{{\mathrm{2}}}} \,  =  \,  \ottnt{M'_{{\mathrm{1}}}}    [  \ottnt{v}  /  \mathit{x}  ]  $,
             \item $\Delta  \vdash  \mathsf{case} \, \ottsym{(}  \mathsf{inl} \, \ottnt{v}  \ottsym{)} \, \mathsf{of} \, \mathsf{inl} \, \mathit{x}  \rightarrow  \ottnt{M'_{{\mathrm{1}}}}  \ottsym{;} \, \mathsf{inr} \, \mathit{y}  \rightarrow  \ottnt{M'_{{\mathrm{2}}}}  \ottsym{:}  \ottnt{A}$,
             \item $\Delta  \vdash  \mathsf{inl} \, \ottnt{v}  \ottsym{:}   \ottnt{B_{{\mathrm{1}}}}  +  \ottnt{B_{{\mathrm{2}}}} $,
             \item $\Delta  \ottsym{,}  \mathit{x} \,  \mathord{:}  \, \ottnt{B_{{\mathrm{1}}}}  \vdash  \ottnt{M'_{{\mathrm{1}}}}  \ottsym{:}  \ottnt{A}$, and
             \item $\Delta  \ottsym{,}  \mathit{x} \,  \mathord{:}  \, \ottnt{B_{{\mathrm{2}}}}  \vdash  \ottnt{M'_{{\mathrm{2}}}}  \ottsym{:}  \ottnt{A}$
            \end{itemize}
            for some $\ottnt{v}$, $\mathit{x}$, $\mathit{y}$, $\ottnt{M'_{{\mathrm{1}}}}$,, $\ottnt{M'_{{\mathrm{2}}}}$,
            $\ottnt{B_{{\mathrm{1}}}}$, and $\ottnt{B_{{\mathrm{2}}}}$.
            By \reflem{val-inv-inl},
            $\Delta  \ottsym{,}   \algeffseqoverindex{ \alpha }{ \text{\unboldmath$\mathit{I}$} }   \vdash  \ottnt{v}  \ottsym{:}  \ottnt{C_{{\mathrm{1}}}}$ and
            $\Delta  \vdash    \text{\unboldmath$\forall$}  \,  \algeffseqoverindex{ \alpha }{ \text{\unboldmath$\mathit{I}$} }   \ottsym{.} \, \ottnt{C_{{\mathrm{1}}}}  +  \ottnt{C_{{\mathrm{2}}}}   \sqsubseteq   \ottnt{B_{{\mathrm{1}}}}  +  \ottnt{B_{{\mathrm{2}}}} $
            for some $ \algeffseqoverindex{ \alpha }{ \text{\unboldmath$\mathit{I}$} } $, $\ottnt{C_{{\mathrm{1}}}}$, and $\ottnt{C_{{\mathrm{2}}}}$.
            By \reflem{subtyping-inv-sum},
            there exist $ \algeffseqoverindex{ \alpha_{{\mathrm{1}}} }{ \text{\unboldmath$\mathit{I_{{\mathrm{1}}}}$} } $, $ \algeffseqoverindex{ \alpha_{{\mathrm{2}}} }{ \text{\unboldmath$\mathit{I_{{\mathrm{2}}}}$} } $, $ \algeffseqoverindex{ \beta }{ \text{\unboldmath$\mathit{J}$} } $, and $ \algeffseqoverindex{ \ottnt{D} }{ \text{\unboldmath$\mathit{I_{{\mathrm{1}}}}$} } $
            such that
            \begin{itemize}
             \item $\ottsym{\{}   \algeffseqoverindex{ \alpha }{ \text{\unboldmath$\mathit{I}$} }   \ottsym{\}} \,  =  \, \ottsym{\{}   \algeffseqoverindex{ \alpha_{{\mathrm{1}}} }{ \text{\unboldmath$\mathit{I_{{\mathrm{1}}}}$} }   \ottsym{\}} \,  \mathbin{\uplus}  \, \ottsym{\{}   \algeffseqoverindex{ \alpha_{{\mathrm{2}}} }{ \text{\unboldmath$\mathit{I_{{\mathrm{2}}}}$} }   \ottsym{\}}$,
             \item $\Delta  \ottsym{,}   \algeffseqoverindex{ \beta }{ \text{\unboldmath$\mathit{J}$} }   \vdash   \algeffseqoverindex{ \ottnt{D} }{ \text{\unboldmath$\mathit{I_{{\mathrm{1}}}}$} } $,
             \item $\Delta  \vdash    \text{\unboldmath$\forall$}  \,  \algeffseqoverindex{ \alpha_{{\mathrm{2}}} }{ \text{\unboldmath$\mathit{I_{{\mathrm{2}}}}$} }   \ottsym{.} \,  \text{\unboldmath$\forall$}  \,  \algeffseqoverindex{ \beta }{ \text{\unboldmath$\mathit{J}$} }   \ottsym{.} \, \ottnt{C_{{\mathrm{1}}}}    [   \algeffseqoverindex{ \ottnt{D} }{ \text{\unboldmath$\mathit{I_{{\mathrm{1}}}}$} }   \ottsym{/}   \algeffseqoverindex{ \alpha_{{\mathrm{1}}} }{ \text{\unboldmath$\mathit{I_{{\mathrm{1}}}}$} }   ]    \sqsubseteq  \ottnt{B_{{\mathrm{1}}}}$,
             \item $\Delta  \vdash    \text{\unboldmath$\forall$}  \,  \algeffseqoverindex{ \alpha_{{\mathrm{2}}} }{ \text{\unboldmath$\mathit{I_{{\mathrm{2}}}}$} }   \ottsym{.} \,  \text{\unboldmath$\forall$}  \,  \algeffseqoverindex{ \beta }{ \text{\unboldmath$\mathit{J}$} }   \ottsym{.} \, \ottnt{C_{{\mathrm{2}}}}    [   \algeffseqoverindex{ \ottnt{D} }{ \text{\unboldmath$\mathit{I_{{\mathrm{1}}}}$} }   \ottsym{/}   \algeffseqoverindex{ \alpha_{{\mathrm{1}}} }{ \text{\unboldmath$\mathit{I_{{\mathrm{1}}}}$} }   ]    \sqsubseteq  \ottnt{B_{{\mathrm{2}}}}$, and
             \item type variables in $ \algeffseqoverindex{ \beta }{ \text{\unboldmath$\mathit{J}$} } $ do not appear in $\ottnt{C_{{\mathrm{1}}}}$ and $\ottnt{C_{{\mathrm{2}}}}$.
            \end{itemize}

            We first show that
            \[
             \Delta  \vdash  \ottnt{v}  \ottsym{:}  \ottnt{B_{{\mathrm{1}}}}.
            \]
            Since $\Delta  \vdash    \text{\unboldmath$\forall$}  \,  \algeffseqoverindex{ \alpha_{{\mathrm{2}}} }{ \text{\unboldmath$\mathit{I_{{\mathrm{2}}}}$} }   \ottsym{.} \,  \text{\unboldmath$\forall$}  \,  \algeffseqoverindex{ \beta }{ \text{\unboldmath$\mathit{J}$} }   \ottsym{.} \, \ottnt{C_{{\mathrm{1}}}}    [   \algeffseqoverindex{ \ottnt{D} }{ \text{\unboldmath$\mathit{I_{{\mathrm{1}}}}$} }   \ottsym{/}   \algeffseqoverindex{ \alpha_{{\mathrm{1}}} }{ \text{\unboldmath$\mathit{I_{{\mathrm{1}}}}$} }   ]    \sqsubseteq  \ottnt{B_{{\mathrm{1}}}}$,
            it suffices to show that
            \[
             \Delta  \vdash  \ottnt{v}  \ottsym{:}    \text{\unboldmath$\forall$}  \,  \algeffseqoverindex{ \alpha_{{\mathrm{2}}} }{ \text{\unboldmath$\mathit{I_{{\mathrm{2}}}}$} }   \ottsym{.} \,  \text{\unboldmath$\forall$}  \,  \algeffseqoverindex{ \beta }{ \text{\unboldmath$\mathit{J}$} }   \ottsym{.} \, \ottnt{C_{{\mathrm{1}}}}    [   \algeffseqoverindex{ \ottnt{D} }{ \text{\unboldmath$\mathit{I_{{\mathrm{1}}}}$} }   \ottsym{/}   \algeffseqoverindex{ \alpha_{{\mathrm{1}}} }{ \text{\unboldmath$\mathit{I_{{\mathrm{1}}}}$} }   ]  
            \]
            by \T{Inst}.
            We have $\Delta  \ottsym{,}   \algeffseqoverindex{ \beta }{ \text{\unboldmath$\mathit{J}$} }   \ottsym{,}   \algeffseqoverindex{ \alpha }{ \text{\unboldmath$\mathit{I}$} }   \vdash  \ottnt{v_{{\mathrm{1}}}}  \ottsym{:}  \ottnt{C_{{\mathrm{1}}}}$
            by \reflem{weakening} (\ref{lem:weakening:term}).
            By \reflem{ty-subst} (\ref{lem:ty-subst:term}),
            we have $\Delta  \ottsym{,}   \algeffseqoverindex{ \beta }{ \text{\unboldmath$\mathit{J}$} }   \ottsym{,}   \algeffseqoverindex{ \alpha_{{\mathrm{2}}} }{ \text{\unboldmath$\mathit{I_{{\mathrm{2}}}}$} }   \vdash  \ottnt{v_{{\mathrm{1}}}}  \ottsym{:}   \ottnt{C_{{\mathrm{1}}}}    [   \algeffseqoverindex{ \ottnt{D} }{ \text{\unboldmath$\mathit{I_{{\mathrm{1}}}}$} }   \ottsym{/}   \algeffseqoverindex{ \alpha_{{\mathrm{1}}} }{ \text{\unboldmath$\mathit{I_{{\mathrm{1}}}}$} }   ]  $.
            By \T{Gen} (and swapping $ \algeffseqoverindex{ \beta }{ \text{\unboldmath$\mathit{J}$} } $ and $ \algeffseqoverindex{ \alpha_{{\mathrm{2}}} }{ \text{\unboldmath$\mathit{I_{{\mathrm{2}}}}$} } $
            in the typing context $\Delta  \ottsym{,}   \algeffseqoverindex{ \beta }{ \text{\unboldmath$\mathit{J}$} }   \ottsym{,}   \algeffseqoverindex{ \alpha_{{\mathrm{2}}} }{ \text{\unboldmath$\mathit{I_{{\mathrm{2}}}}$} } $),
            we have
            \[
             \Delta  \vdash  \ottnt{v_{{\mathrm{1}}}}  \ottsym{:}    \text{\unboldmath$\forall$}  \,  \algeffseqoverindex{ \alpha_{{\mathrm{2}}} }{ \text{\unboldmath$\mathit{I_{{\mathrm{2}}}}$} }   \ottsym{.} \,  \text{\unboldmath$\forall$}  \,  \algeffseqoverindex{ \beta }{ \text{\unboldmath$\mathit{J}$} }   \ottsym{.} \, \ottnt{C_{{\mathrm{1}}}}    [   \algeffseqoverindex{ \ottnt{D} }{ \text{\unboldmath$\mathit{I_{{\mathrm{1}}}}$} }   \ottsym{/}   \algeffseqoverindex{ \alpha_{{\mathrm{1}}} }{ \text{\unboldmath$\mathit{I_{{\mathrm{1}}}}$} }   ]  .
            \]

            Since $\Delta  \ottsym{,}  \mathit{x} \,  \mathord{:}  \, \ottnt{B_{{\mathrm{1}}}}  \vdash  \ottnt{M'_{{\mathrm{1}}}}  \ottsym{:}  \ottnt{A}$,
            we have
            \[
             \Delta  \vdash   \ottnt{M'_{{\mathrm{1}}}}    [  \ottnt{v}  /  \mathit{x}  ]    \ottsym{:}  \ottnt{A}
            \]
            by \reflem{term-subst} (\ref{lem:term-subst:term}).

           \case \R{CaseR}: Similar to the case for \R{CaseR}, using
            \reflem{val-inv-inr} instead of \reflem{val-inv-inl}.
          \end{caseanalysis}

         \case \T{CaseList}:
          We have two reduction rules which can be applied to
          case expressions for lists.
          \begin{caseanalysis}
           \case \R{Nil}: Obvious.
           \case \R{Cons}:
            We are given
            \begin{itemize}
             \item $\ottnt{M_{{\mathrm{1}}}} \,  =  \, \mathsf{case} \, \ottsym{(}  \mathsf{cons} \, \ottnt{v}  \ottsym{)} \, \mathsf{of} \, \mathsf{nil} \, \rightarrow  \ottnt{M'_{{\mathrm{1}}}}  \ottsym{;} \, \mathsf{cons} \, \mathit{x}  \rightarrow  \ottnt{M'_{{\mathrm{2}}}}$,
             \item $\ottnt{M_{{\mathrm{2}}}} \,  =  \,  \ottnt{M'_{{\mathrm{2}}}}    [  \ottnt{v}  /  \mathit{x}  ]  $,
             \item $\Delta  \vdash  \mathsf{case} \, \ottsym{(}  \mathsf{cons} \, \ottnt{v}  \ottsym{)} \, \mathsf{of} \, \mathsf{nil} \, \rightarrow  \ottnt{M'_{{\mathrm{1}}}}  \ottsym{;} \, \mathsf{cons} \, \mathit{y}  \rightarrow  \ottnt{M'_{{\mathrm{2}}}}  \ottsym{:}  \ottnt{A}$,
             \item $\Delta  \vdash  \mathsf{cons} \, \ottnt{v}  \ottsym{:}   \ottnt{B}  \, \mathsf{list} $, and
             \item $\Delta  \ottsym{,}  \mathit{x} \,  \mathord{:}  \,   \ottnt{B}  \times  \ottnt{B}   \, \mathsf{list}   \vdash  \ottnt{M'_{{\mathrm{2}}}}  \ottsym{:}  \ottnt{A}$
            \end{itemize}
            for some $\ottnt{v}$, $\mathit{x}$, $\ottnt{M'_{{\mathrm{1}}}}$,, $\ottnt{M'_{{\mathrm{2}}}}$, and $\ottnt{B}$.
            By \reflem{val-inv-cons},
            $\Delta  \ottsym{,}   \algeffseqoverindex{ \alpha }{ \text{\unboldmath$\mathit{I}$} }   \vdash  \ottnt{v}  \ottsym{:}    \ottnt{C}  \times  \ottnt{C}   \, \mathsf{list} $ and
            $\Delta  \vdash    \text{\unboldmath$\forall$}  \,  \algeffseqoverindex{ \alpha }{ \text{\unboldmath$\mathit{I}$} }   \ottsym{.} \, \ottnt{C}  \, \mathsf{list}   \sqsubseteq   \ottnt{B}  \, \mathsf{list} $
            for some $ \algeffseqoverindex{ \alpha }{ \text{\unboldmath$\mathit{I}$} } $ and $\ottnt{C}$.
            By \reflem{subtyping-inv-list},
            there exist $ \algeffseqoverindex{ \alpha_{{\mathrm{1}}} }{ \text{\unboldmath$\mathit{I_{{\mathrm{1}}}}$} } $, $ \algeffseqoverindex{ \alpha_{{\mathrm{2}}} }{ \text{\unboldmath$\mathit{I_{{\mathrm{2}}}}$} } $, $ \algeffseqoverindex{ \beta }{ \text{\unboldmath$\mathit{J}$} } $, and $ \algeffseqoverindex{ \ottnt{D} }{ \text{\unboldmath$\mathit{I_{{\mathrm{1}}}}$} } $
            such that
            \begin{itemize}
             \item $\ottsym{\{}   \algeffseqoverindex{ \alpha }{ \text{\unboldmath$\mathit{I}$} }   \ottsym{\}} \,  =  \, \ottsym{\{}   \algeffseqoverindex{ \alpha_{{\mathrm{1}}} }{ \text{\unboldmath$\mathit{I_{{\mathrm{1}}}}$} }   \ottsym{\}} \,  \mathbin{\uplus}  \, \ottsym{\{}   \algeffseqoverindex{ \alpha_{{\mathrm{2}}} }{ \text{\unboldmath$\mathit{I_{{\mathrm{2}}}}$} }   \ottsym{\}}$,
             \item $\Delta  \ottsym{,}   \algeffseqoverindex{ \beta }{ \text{\unboldmath$\mathit{J}$} }   \vdash   \algeffseqoverindex{ \ottnt{D} }{ \text{\unboldmath$\mathit{I_{{\mathrm{1}}}}$} } $,
             \item $\Delta  \vdash    \text{\unboldmath$\forall$}  \,  \algeffseqoverindex{ \alpha_{{\mathrm{2}}} }{ \text{\unboldmath$\mathit{I_{{\mathrm{2}}}}$} }   \ottsym{.} \,  \text{\unboldmath$\forall$}  \,  \algeffseqoverindex{ \beta }{ \text{\unboldmath$\mathit{J}$} }   \ottsym{.} \, \ottnt{C}    [   \algeffseqoverindex{ \ottnt{D} }{ \text{\unboldmath$\mathit{I_{{\mathrm{1}}}}$} }   \ottsym{/}   \algeffseqoverindex{ \alpha_{{\mathrm{1}}} }{ \text{\unboldmath$\mathit{I_{{\mathrm{1}}}}$} }   ]    \sqsubseteq  \ottnt{B}$, and
             \item type variables in $ \algeffseqoverindex{ \beta }{ \text{\unboldmath$\mathit{J}$} } $ do not appear in $\ottnt{C}$.
            \end{itemize}

            We first show that
            \[
             \Delta  \vdash     \text{\unboldmath$\forall$}  \,  \algeffseqoverindex{ \alpha_{{\mathrm{2}}} }{ \text{\unboldmath$\mathit{I_{{\mathrm{2}}}}$} }   \ottsym{.} \,  \text{\unboldmath$\forall$}  \,  \algeffseqoverindex{ \beta }{ \text{\unboldmath$\mathit{J}$} }   \ottsym{.} \,    \ottnt{C}    [   \algeffseqoverindex{ \ottnt{D} }{ \text{\unboldmath$\mathit{I_{{\mathrm{1}}}}$} }   \ottsym{/}   \algeffseqoverindex{ \alpha_{{\mathrm{1}}} }{ \text{\unboldmath$\mathit{I_{{\mathrm{1}}}}$} }   ]      \times     \ottnt{C}    [   \algeffseqoverindex{ \ottnt{D} }{ \text{\unboldmath$\mathit{I_{{\mathrm{1}}}}$} }   \ottsym{/}   \algeffseqoverindex{ \alpha_{{\mathrm{1}}} }{ \text{\unboldmath$\mathit{I_{{\mathrm{1}}}}$} }   ]       \, \mathsf{list}   \sqsubseteq    \ottnt{B}  \times  \ottnt{B}   \, \mathsf{list} .
            \]
            Since $\Delta  \vdash    \text{\unboldmath$\forall$}  \,  \algeffseqoverindex{ \alpha_{{\mathrm{2}}} }{ \text{\unboldmath$\mathit{I_{{\mathrm{2}}}}$} }   \ottsym{.} \,  \text{\unboldmath$\forall$}  \,  \algeffseqoverindex{ \beta }{ \text{\unboldmath$\mathit{J}$} }   \ottsym{.} \, \ottnt{C}    [   \algeffseqoverindex{ \ottnt{D} }{ \text{\unboldmath$\mathit{I_{{\mathrm{1}}}}$} }   \ottsym{/}   \algeffseqoverindex{ \alpha_{{\mathrm{1}}} }{ \text{\unboldmath$\mathit{I_{{\mathrm{1}}}}$} }   ]    \sqsubseteq  \ottnt{B}$,
            we have
            \[
             \Delta  \vdash   \ottsym{(}    \text{\unboldmath$\forall$}  \,  \algeffseqoverindex{ \alpha_{{\mathrm{2}}} }{ \text{\unboldmath$\mathit{I_{{\mathrm{2}}}}$} }   \ottsym{.} \,  \text{\unboldmath$\forall$}  \,  \algeffseqoverindex{ \beta }{ \text{\unboldmath$\mathit{J}$} }   \ottsym{.} \, \ottnt{C}    [   \algeffseqoverindex{ \ottnt{D} }{ \text{\unboldmath$\mathit{I_{{\mathrm{1}}}}$} }   \ottsym{/}   \algeffseqoverindex{ \alpha_{{\mathrm{1}}} }{ \text{\unboldmath$\mathit{I_{{\mathrm{1}}}}$} }   ]    \ottsym{)}  \, \mathsf{list}   \sqsubseteq   \ottnt{B}  \, \mathsf{list} 
            \]
            by \Srule{List}.  We also have
            \[
             \Delta  \vdash     \text{\unboldmath$\forall$}  \,  \algeffseqoverindex{ \alpha_{{\mathrm{2}}} }{ \text{\unboldmath$\mathit{I_{{\mathrm{2}}}}$} }   \ottsym{.} \,  \text{\unboldmath$\forall$}  \,  \algeffseqoverindex{ \beta }{ \text{\unboldmath$\mathit{J}$} }   \ottsym{.} \, \ottnt{C}    [   \algeffseqoverindex{ \ottnt{D} }{ \text{\unboldmath$\mathit{I_{{\mathrm{1}}}}$} }   \ottsym{/}   \algeffseqoverindex{ \alpha_{{\mathrm{1}}} }{ \text{\unboldmath$\mathit{I_{{\mathrm{1}}}}$} }   ]    \, \mathsf{list}   \sqsubseteq   \ottsym{(}    \text{\unboldmath$\forall$}  \,  \algeffseqoverindex{ \alpha_{{\mathrm{2}}} }{ \text{\unboldmath$\mathit{I_{{\mathrm{2}}}}$} }   \ottsym{.} \,  \text{\unboldmath$\forall$}  \,  \algeffseqoverindex{ \beta }{ \text{\unboldmath$\mathit{J}$} }   \ottsym{.} \, \ottnt{C}    [   \algeffseqoverindex{ \ottnt{D} }{ \text{\unboldmath$\mathit{I_{{\mathrm{1}}}}$} }   \ottsym{/}   \algeffseqoverindex{ \alpha_{{\mathrm{1}}} }{ \text{\unboldmath$\mathit{I_{{\mathrm{1}}}}$} }   ]    \ottsym{)}  \, \mathsf{list} 
            \]
            by \Srule{DList}.  Thus, by \Srule{Trans}, we have
            \[
             \Delta  \vdash     \text{\unboldmath$\forall$}  \,  \algeffseqoverindex{ \alpha_{{\mathrm{2}}} }{ \text{\unboldmath$\mathit{I_{{\mathrm{2}}}}$} }   \ottsym{.} \,  \text{\unboldmath$\forall$}  \,  \algeffseqoverindex{ \beta }{ \text{\unboldmath$\mathit{J}$} }   \ottsym{.} \, \ottnt{C}    [   \algeffseqoverindex{ \ottnt{D} }{ \text{\unboldmath$\mathit{I_{{\mathrm{1}}}}$} }   \ottsym{/}   \algeffseqoverindex{ \alpha_{{\mathrm{1}}} }{ \text{\unboldmath$\mathit{I_{{\mathrm{1}}}}$} }   ]    \, \mathsf{list}   \sqsubseteq   \ottnt{B}  \, \mathsf{list} .
            \]
            By \Srule{Prod},
            \[
             \Delta  \vdash   \ottsym{(}    \text{\unboldmath$\forall$}  \,  \algeffseqoverindex{ \alpha_{{\mathrm{2}}} }{ \text{\unboldmath$\mathit{I_{{\mathrm{2}}}}$} }   \ottsym{.} \,  \text{\unboldmath$\forall$}  \,  \algeffseqoverindex{ \beta }{ \text{\unboldmath$\mathit{J}$} }   \ottsym{.} \, \ottnt{C}    [   \algeffseqoverindex{ \ottnt{D} }{ \text{\unboldmath$\mathit{I_{{\mathrm{1}}}}$} }   \ottsym{/}   \algeffseqoverindex{ \alpha_{{\mathrm{1}}} }{ \text{\unboldmath$\mathit{I_{{\mathrm{1}}}}$} }   ]    \ottsym{)}  \times  \ottsym{(}   \text{\unboldmath$\forall$}  \,  \algeffseqoverindex{ \alpha_{{\mathrm{2}}} }{ \text{\unboldmath$\mathit{I_{{\mathrm{2}}}}$} }   \ottsym{.} \,  \text{\unboldmath$\forall$}  \,  \algeffseqoverindex{ \beta }{ \text{\unboldmath$\mathit{J}$} }   \ottsym{.} \,   \ottnt{C}    [   \algeffseqoverindex{ \ottnt{D} }{ \text{\unboldmath$\mathit{I_{{\mathrm{1}}}}$} }   \ottsym{/}   \algeffseqoverindex{ \alpha_{{\mathrm{1}}} }{ \text{\unboldmath$\mathit{I_{{\mathrm{1}}}}$} }   ]    \, \mathsf{list}   \ottsym{)}   \sqsubseteq    \ottnt{B}  \times  \ottnt{B}   \, \mathsf{list} .
            \]
            By \Srule{DProd} and \Srule{Trans}, we have
            \begin{equation}
             \Delta  \vdash     \text{\unboldmath$\forall$}  \,  \algeffseqoverindex{ \alpha_{{\mathrm{2}}} }{ \text{\unboldmath$\mathit{I_{{\mathrm{2}}}}$} }   \ottsym{.} \,  \text{\unboldmath$\forall$}  \,  \algeffseqoverindex{ \beta }{ \text{\unboldmath$\mathit{J}$} }   \ottsym{.} \,    \ottnt{C}    [   \algeffseqoverindex{ \ottnt{D} }{ \text{\unboldmath$\mathit{I_{{\mathrm{1}}}}$} }   \ottsym{/}   \algeffseqoverindex{ \alpha_{{\mathrm{1}}} }{ \text{\unboldmath$\mathit{I_{{\mathrm{1}}}}$} }   ]      \times     \ottnt{C}    [   \algeffseqoverindex{ \ottnt{D} }{ \text{\unboldmath$\mathit{I_{{\mathrm{1}}}}$} }   \ottsym{/}   \algeffseqoverindex{ \alpha_{{\mathrm{1}}} }{ \text{\unboldmath$\mathit{I_{{\mathrm{1}}}}$} }   ]       \, \mathsf{list}   \sqsubseteq    \ottnt{B}  \times  \ottnt{B}   \, \mathsf{list} 
              \label{eqn:subject-red:caselist:one}
            \end{equation}

            Next, we show that
            \[
             \Delta  \vdash  \ottnt{v}  \ottsym{:}    \ottnt{B}  \times  \ottnt{B}   \, \mathsf{list} .
            \]
            By \T{Inst} with (\ref{eqn:subject-red:caselist:one}),
            it suffices to show that
            \[
             \Delta  \vdash  \ottnt{v}  \ottsym{:}     \text{\unboldmath$\forall$}  \,  \algeffseqoverindex{ \alpha_{{\mathrm{2}}} }{ \text{\unboldmath$\mathit{I_{{\mathrm{2}}}}$} }   \ottsym{.} \,  \text{\unboldmath$\forall$}  \,  \algeffseqoverindex{ \beta }{ \text{\unboldmath$\mathit{J}$} }   \ottsym{.} \,    \ottnt{C}    [   \algeffseqoverindex{ \ottnt{D} }{ \text{\unboldmath$\mathit{I_{{\mathrm{1}}}}$} }   \ottsym{/}   \algeffseqoverindex{ \alpha_{{\mathrm{1}}} }{ \text{\unboldmath$\mathit{I_{{\mathrm{1}}}}$} }   ]      \times     \ottnt{C}    [   \algeffseqoverindex{ \ottnt{D} }{ \text{\unboldmath$\mathit{I_{{\mathrm{1}}}}$} }   \ottsym{/}   \algeffseqoverindex{ \alpha_{{\mathrm{1}}} }{ \text{\unboldmath$\mathit{I_{{\mathrm{1}}}}$} }   ]       \, \mathsf{list} .
            \]
            We have $\Delta  \ottsym{,}   \algeffseqoverindex{ \beta }{ \text{\unboldmath$\mathit{J}$} }   \ottsym{,}   \algeffseqoverindex{ \alpha }{ \text{\unboldmath$\mathit{I}$} }   \vdash  \ottnt{v}  \ottsym{:}    \ottnt{C}  \times  \ottnt{C}   \, \mathsf{list} $
            by \reflem{weakening} (\ref{lem:weakening:term}).
            By \reflem{ty-subst} (\ref{lem:ty-subst:term}),
            we have $\Delta  \ottsym{,}   \algeffseqoverindex{ \beta }{ \text{\unboldmath$\mathit{J}$} }   \ottsym{,}   \algeffseqoverindex{ \alpha_{{\mathrm{2}}} }{ \text{\unboldmath$\mathit{I_{{\mathrm{2}}}}$} }   \vdash  \ottnt{v}  \ottsym{:}      \ottnt{C}    [   \algeffseqoverindex{ \ottnt{D} }{ \text{\unboldmath$\mathit{I_{{\mathrm{1}}}}$} }   \ottsym{/}   \algeffseqoverindex{ \alpha_{{\mathrm{1}}} }{ \text{\unboldmath$\mathit{I_{{\mathrm{1}}}}$} }   ]    \times  \ottnt{C}     [   \algeffseqoverindex{ \ottnt{D} }{ \text{\unboldmath$\mathit{I_{{\mathrm{1}}}}$} }   \ottsym{/}   \algeffseqoverindex{ \alpha_{{\mathrm{1}}} }{ \text{\unboldmath$\mathit{I_{{\mathrm{1}}}}$} }   ]    \, \mathsf{list} $.
            By \T{Gen} (and swapping $ \algeffseqoverindex{ \beta }{ \text{\unboldmath$\mathit{J}$} } $ and $ \algeffseqoverindex{ \alpha_{{\mathrm{2}}} }{ \text{\unboldmath$\mathit{I_{{\mathrm{2}}}}$} } $
            in the typing context $\Delta  \ottsym{,}   \algeffseqoverindex{ \beta }{ \text{\unboldmath$\mathit{J}$} }   \ottsym{,}   \algeffseqoverindex{ \alpha_{{\mathrm{2}}} }{ \text{\unboldmath$\mathit{I_{{\mathrm{2}}}}$} } $),
            we have
            \[
             \Delta  \vdash  \ottnt{v}  \ottsym{:}     \text{\unboldmath$\forall$}  \,  \algeffseqoverindex{ \alpha_{{\mathrm{2}}} }{ \text{\unboldmath$\mathit{I_{{\mathrm{2}}}}$} }   \ottsym{.} \,  \text{\unboldmath$\forall$}  \,  \algeffseqoverindex{ \beta }{ \text{\unboldmath$\mathit{J}$} }   \ottsym{.} \,    \ottnt{C}    [   \algeffseqoverindex{ \ottnt{D} }{ \text{\unboldmath$\mathit{I_{{\mathrm{1}}}}$} }   \ottsym{/}   \algeffseqoverindex{ \alpha_{{\mathrm{1}}} }{ \text{\unboldmath$\mathit{I_{{\mathrm{1}}}}$} }   ]      \times     \ottnt{C}    [   \algeffseqoverindex{ \ottnt{D} }{ \text{\unboldmath$\mathit{I_{{\mathrm{1}}}}$} }   \ottsym{/}   \algeffseqoverindex{ \alpha_{{\mathrm{1}}} }{ \text{\unboldmath$\mathit{I_{{\mathrm{1}}}}$} }   ]       \, \mathsf{list} .
            \]

            Since $\Delta  \ottsym{,}  \mathit{x} \,  \mathord{:}  \,   \ottnt{B}  \times  \ottnt{B}   \, \mathsf{list}   \vdash  \ottnt{M'_{{\mathrm{2}}}}  \ottsym{:}  \ottnt{A}$,
            we have
            \[
             \Delta  \vdash   \ottnt{M'_{{\mathrm{2}}}}    [  \ottnt{v}  /  \mathit{x}  ]    \ottsym{:}  \ottnt{A}
            \]
            by \reflem{term-subst} (\ref{lem:term-subst:term}).
          \end{caseanalysis}

         \case \T{Fix}: We have one reduction rule \R{Fix} which can be
          applied to the fixed-point operator.  The proof is straightforward
          with \reflem{term-subst} (\ref{lem:term-subst:term}) and \T{Abs}.

        \end{caseanalysis}

  \item Suppose that $\Delta  \vdash  \ottnt{M_{{\mathrm{1}}}}  \ottsym{:}  \ottnt{A}$ and $\ottnt{M_{{\mathrm{1}}}}  \longrightarrow  \ottnt{M_{{\mathrm{2}}}}$.
        By definition, there exist some $\ottnt{E}$, $\ottnt{M'_{{\mathrm{1}}}}$, and $\ottnt{M'_{{\mathrm{2}}}}$ such that
        $\ottnt{M_{{\mathrm{1}}}} \,  =  \,  \ottnt{E}  [  \ottnt{M'_{{\mathrm{1}}}}  ] $, $\ottnt{M_{{\mathrm{2}}}} \,  =  \,  \ottnt{E}  [  \ottnt{M'_{{\mathrm{2}}}}  ] $, and $\ottnt{M'_{{\mathrm{1}}}}  \rightsquigarrow  \ottnt{M'_{{\mathrm{2}}}}$.
        The proof proceeds by induction on the typing derivation of
        for $\ottnt{M_{{\mathrm{1}}}} =  \ottnt{E}  [  \ottnt{M'_{{\mathrm{1}}}}  ] $.
        If $\ottnt{E} \,  =  \,  [] $, then we have the conclusion by the first case.
        In what follows, we suppose that $\ottnt{E} \,  \not=  \,  [] $.
        By case analysis on the typing rule applied last to derive
        $\Delta  \vdash   \ottnt{E}  [  \ottnt{M'_{{\mathrm{1}}}}  ]   \ottsym{:}  \ottnt{A}$.
        \begin{caseanalysis}
         \case \T{Var}, \T{Const}, \T{Abs}, \T{Nil}, and \T{Fix}:
          Contradictory because $\ottnt{E}$ has to be $ [] $.

         \case \T{App}:
          By case analysis on $\ottnt{E}$.
          \begin{caseanalysis}
           \case $\ottnt{E} \,  =  \, \ottnt{E'} \, \ottnt{M}$:
           We are given
           \begin{itemize}
            \item $\Delta  \vdash   \ottnt{E'}  [  \ottnt{M'_{{\mathrm{1}}}}  ]   \ottsym{:}  \ottnt{B}  \rightarrow  \ottnt{A}$ and
            \item $\Delta  \vdash  \ottnt{M}  \ottsym{:}  \ottnt{B}$
           \end{itemize}
           for some $\ottnt{B}$.
           By the IH, $\Delta  \vdash   \ottnt{E'}  [  \ottnt{M'_{{\mathrm{2}}}}  ]   \ottsym{:}  \ottnt{B}  \rightarrow  \ottnt{A}$.
           Since $\ottnt{M_{{\mathrm{2}}}} \,  =  \,  \ottnt{E'}  [  \ottnt{M'_{{\mathrm{2}}}}  ]  \, \ottnt{M}$,
           we have the conclusion by \T{App}.

          \case $\ottnt{E} \,  =  \, \ottnt{v} \, \ottnt{E'}$: By the IH.
         \end{caseanalysis}

         \case \T{Gen}: By the IH.
         \case \T{Inst}: By the IH.
         \case \T{Op}: By the IH.
         \case \T{Handle}: By the IH.
         \case \T{Pair}: By the IH.
         \case \T{Proj1}: By the IH.
         \case \T{Proj2}: By the IH.
         \case \T{InL}: By the IH.
         \case \T{InR}: By the IH.
         \case \T{Case}: By the IH.
         \case \T{Cons}: By the IH.
         \case \T{CaseList}: By the IH.
        \end{caseanalysis}
 \end{enumerate}
\end{proof}

\ifrestate
\thmTypeSoundness*
\else
\begin{thm}[Type Soundness]
 \label{thm:type-sound}
 Suppose that all operations satisfy the signature restriction.
 If $\Delta  \vdash  \ottnt{M}  \ottsym{:}  \ottnt{A}$ and $\ottnt{M}  \longrightarrow^{*}  \ottnt{M'}$ and $\ottnt{M'}  \centernot\longrightarrow$, then:
 \begin{itemize}
  \item $\ottnt{M'}$ is a value; or
  \item $\ottnt{M'} \,  =  \,  \ottnt{E}  [   \textup{\texttt{\#}\relax}  \mathsf{op}   \ottsym{(}   \ottnt{v}   \ottsym{)}   ] $ for some $\ottnt{E}$, $\mathsf{op}$, and $\ottnt{v}$
        such that $\mathsf{op} \,  \not\in  \, \ottnt{E}$.
 \end{itemize}
\end{thm}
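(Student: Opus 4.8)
The plan is to follow the standard syntactic approach to type soundness of Wright and Felleisen, namely to combine the already-established progress (\reflem{progress}) and subject reduction (\reflem{subject-red}) lemmas. Since both lemmas are available, the theorem itself requires only a simple induction over the evaluation sequence; the genuine technical content has already been discharged in those two lemmas. I would emphasize at the outset that subject reduction is where the hypothesis ``all operations satisfy the signature restriction'' is consumed (via \reflem{subtyping-forall-move} in the \R{Handle} case), whereas progress holds unconditionally; the theorem's standing assumption is therefore passed directly into the invocation of \reflem{subject-red}.

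First I would prove, by induction on the number of steps in $\ottnt{M}  \longrightarrow^{*}  \ottnt{M'}$, the invariant that every intermediate term remains well typed under $\Delta$ at type $\ottnt{A}$. In the base case the reduction has length zero, so $\ottnt{M'} \,  =  \, \ottnt{M}$ and $\Delta  \vdash  \ottnt{M'}  \ottsym{:}  \ottnt{A}$ holds by assumption. In the inductive step I would decompose $\ottnt{M}  \longrightarrow  \ottnt{M_{{\mathrm{0}}}}  \longrightarrow^{*}  \ottnt{M'}$, apply part~(2) of \reflem{subject-red} to $\Delta  \vdash  \ottnt{M}  \ottsym{:}  \ottnt{A}$ and $\ottnt{M}  \longrightarrow  \ottnt{M_{{\mathrm{0}}}}$ to obtain $\Delta  \vdash  \ottnt{M_{{\mathrm{0}}}}  \ottsym{:}  \ottnt{A}$, and then invoke the induction hypothesis on the shorter reduction $\ottnt{M_{{\mathrm{0}}}}  \longrightarrow^{*}  \ottnt{M'}$. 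This yields $\Delta  \vdash  \ottnt{M'}  \ottsym{:}  \ottnt{A}$.

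Having established that $\ottnt{M'}$ is well typed, I would apply progress (\reflem{progress}) to $\Delta  \vdash  \ottnt{M'}  \ottsym{:}  \ottnt{A}$, noting that $\Delta$ is a typing context consisting only of type-variable bindings and hence is a legitimate instance of the metavariable used in that lemma. Progress offers three alternatives: either $\ottnt{M'}  \longrightarrow  \ottnt{M''}$ for some $\ottnt{M''}$, or $\ottnt{M'}$ is a value, or $\ottnt{M'} \,  =  \,  \ottnt{E}  [   \textup{\texttt{\#}\relax}  \mathsf{op}   \ottsym{(}   \ottnt{v}   \ottsym{)}   ] $ with $\mathsf{op} \,  \not\in  \, \ottnt{E}$. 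The first alternative is ruled out by the hypothesis $\ottnt{M'}  \centernot\longrightarrow$, which by definition asserts that no such $\ottnt{M''}$ exists. The remaining two alternatives are exactly the two disjuncts in the conclusion, so the theorem follows.

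The proof has no real obstacle once progress and subject reduction are in hand, since it is a textbook assembly step. If anything, the only point requiring minor care is confirming that the type-variable-only context $\Delta$ is preserved throughout, and that the exclusion of the ``takes a step'' case of progress is justified purely by $\ottnt{M'}  \centernot\longrightarrow$; both are immediate. The substantive difficulties---the inversion lemma for type containment on function types (\reflem{subtyping-inv-fun}) and the forall-moving lemma (\reflem{subtyping-forall-move}) underpinning the \R{Handle} case---reside entirely within the subject-reduction argument, which I treat here as a black box.

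\endinput
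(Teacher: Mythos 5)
Your proof is correct and matches the paper's own argument, which likewise derives the theorem directly from progress (\reflem{progress}) and subject reduction (\reflem{subject-red}); the induction over the length of the evaluation sequence and the elimination of the ``takes a step'' case via $\ottnt{M'}  \centernot\longrightarrow$ are exactly the standard assembly the paper leaves implicit.
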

\fi
\begin{proof}
 By Lemmas~\ref{lem:subject-red} and \ref{lem:progress}.
\end{proof}

\subsection{Soundness of the Type-and-effect System}
\label{sec:app:proof:eff-sound-typing}

This section show soundness of the type-and-effect system.  We may reuse the
lemmas proven in \refsec{app:proof:sound-typing} if their statements and proofs
do not need change.


\begin{lemmap}{Weakening}{eff-weakening}
 Suppose that $\vdash  \Gamma_{{\mathrm{1}}}  \ottsym{,}  \Gamma_{{\mathrm{2}}}$.
 Let $\Gamma_{{\mathrm{3}}}$ be a typing context such that
 $ \mathit{dom}  (  \Gamma_{{\mathrm{2}}}  )  \,  \mathbin{\cap}  \,  \mathit{dom}  (  \Gamma_{{\mathrm{3}}}  )  \,  =  \,  \emptyset $.
 \begin{enumerate}
  \item \label{lem:eff-weakening:typing-context}
        If $\vdash  \Gamma_{{\mathrm{1}}}  \ottsym{,}  \Gamma_{{\mathrm{3}}}$, then $\vdash  \Gamma_{{\mathrm{1}}}  \ottsym{,}  \Gamma_{{\mathrm{2}}}  \ottsym{,}  \Gamma_{{\mathrm{3}}}$.

  \item \label{lem:eff-weakening:type}
        If $\Gamma_{{\mathrm{1}}}  \ottsym{,}  \Gamma_{{\mathrm{3}}}  \vdash  \ottnt{A}$, then $\Gamma_{{\mathrm{1}}}  \ottsym{,}  \Gamma_{{\mathrm{2}}}  \ottsym{,}  \Gamma_{{\mathrm{3}}}  \vdash  \ottnt{A}$.

  \item \label{lem:eff-weakening:sub}
        If $\Gamma_{{\mathrm{1}}}  \ottsym{,}  \Gamma_{{\mathrm{3}}}  \vdash  \ottnt{A}  \sqsubseteq  \ottnt{B}$,
        then $\Gamma_{{\mathrm{1}}}  \ottsym{,}  \Gamma_{{\mathrm{2}}}  \ottsym{,}  \Gamma_{{\mathrm{3}}}  \vdash  \ottnt{A}  \sqsubseteq  \ottnt{B}$.

  \item \label{lem:eff-weakening:term}
        If $\Gamma_{{\mathrm{1}}}  \ottsym{,}  \Gamma_{{\mathrm{3}}}  \vdash  \ottnt{M}  \ottsym{:}  \ottnt{A} \,  |  \, \epsilon$,
        then $\Gamma_{{\mathrm{1}}}  \ottsym{,}  \Gamma_{{\mathrm{2}}}  \ottsym{,}  \Gamma_{{\mathrm{3}}}  \vdash  \ottnt{M}  \ottsym{:}  \ottnt{A} \,  |  \, \epsilon$.

  \item \label{lem:eff-weakening:handler}
        If $\Gamma_{{\mathrm{1}}}  \ottsym{,}  \Gamma_{{\mathrm{3}}}  \vdash  \ottnt{H}  \ottsym{:}  \ottnt{A} \,  |  \, \epsilon  \Rightarrow  \ottnt{B} \,  |  \, \epsilon'$,
        then $\Gamma_{{\mathrm{1}}}  \ottsym{,}  \Gamma_{{\mathrm{2}}}  \ottsym{,}  \Gamma_{{\mathrm{3}}}  \vdash  \ottnt{H}  \ottsym{:}  \ottnt{A} \,  |  \, \epsilon  \Rightarrow  \ottnt{B} \,  |  \, \epsilon'$.
 \end{enumerate}
\end{lemmap}
\begin{proof}
 By (mutual) induction on the derivations of the judgments.
\end{proof}

\begin{lemmap}{Type substitution}{eff-ty-subst}
 Suppose that $\Gamma_{{\mathrm{1}}}  \vdash  \ottnt{A}$.
 \begin{enumerate}
  \item \label{lem:eff-ty-subst:typing-context}
        If $\vdash  \Gamma_{{\mathrm{1}}}  \ottsym{,}  \alpha  \ottsym{,}  \Gamma_{{\mathrm{2}}}$, then $\vdash  \Gamma_{{\mathrm{1}}}  \ottsym{,}  \Gamma_{{\mathrm{2}}} \,  [  \ottnt{A}  \ottsym{/}  \alpha  ] $.

  \item \label{lem:eff-ty-subst:type}
        If $\Gamma_{{\mathrm{1}}}  \ottsym{,}  \alpha  \ottsym{,}  \Gamma_{{\mathrm{2}}}  \vdash  \ottnt{B}$,
        then $\Gamma_{{\mathrm{1}}}  \ottsym{,}  \Gamma_{{\mathrm{2}}} \,  [  \ottnt{A}  \ottsym{/}  \alpha  ]   \vdash   \ottnt{B}    [  \ottnt{A}  \ottsym{/}  \alpha  ]  $.

  \item \label{lem:eff-ty-subst:sub}
        If $\Gamma_{{\mathrm{1}}}  \ottsym{,}  \alpha  \ottsym{,}  \Gamma_{{\mathrm{2}}}  \vdash  \ottnt{B}  \sqsubseteq  \ottnt{C}$,
        then $\Gamma_{{\mathrm{1}}}  \ottsym{,}  \Gamma_{{\mathrm{2}}} \,  [  \ottnt{A}  \ottsym{/}  \alpha  ]   \vdash   \ottnt{B}    [  \ottnt{A}  \ottsym{/}  \alpha  ]    \sqsubseteq   \ottnt{C}    [  \ottnt{A}  \ottsym{/}  \alpha  ]  $.

  \item \label{lem:eff-ty-subst:term}
        If $\Gamma_{{\mathrm{1}}}  \ottsym{,}  \alpha  \ottsym{,}  \Gamma_{{\mathrm{2}}}  \vdash  \ottnt{M}  \ottsym{:}  \ottnt{B} \,  |  \, \epsilon$,
        then $\Gamma_{{\mathrm{1}}}  \ottsym{,}  \Gamma_{{\mathrm{2}}} \,  [  \ottnt{A}  \ottsym{/}  \alpha  ]   \vdash  \ottnt{M}  \ottsym{:}   \ottnt{B}    [  \ottnt{A}  \ottsym{/}  \alpha  ]   \,  |  \, \epsilon$.

  \item If $\Gamma_{{\mathrm{1}}}  \ottsym{,}  \alpha  \ottsym{,}  \Gamma_{{\mathrm{2}}}  \vdash  \ottnt{H}  \ottsym{:}  \ottnt{B} \,  |  \, \epsilon  \Rightarrow  \ottnt{C} \,  |  \, \epsilon'$,
        then $\Gamma_{{\mathrm{1}}}  \ottsym{,}  \Gamma_{{\mathrm{2}}} \,  [  \ottnt{A}  \ottsym{/}  \alpha  ]   \vdash  \ottnt{H}  \ottsym{:}   \ottnt{B}    [  \ottnt{A}  \ottsym{/}  \alpha  ]   \,  |  \, \epsilon  \Rightarrow   \ottnt{C}    [  \ottnt{A}  \ottsym{/}  \alpha  ]   \,  |  \, \epsilon'$.

 \end{enumerate}
\end{lemmap}
\begin{proof}
 Straightforward by (mutual) induction on the derivations of the judgments,
 as in \reflem{ty-subst}.
\end{proof}

\begin{lemmap}{Term substitution}{eff-term-subst}
 Suppose that $\Gamma_{{\mathrm{1}}}  \vdash  \ottnt{M}  \ottsym{:}  \ottnt{A} \,  |  \, \epsilon$ for any $\epsilon$.
 \begin{enumerate}
  \item \label{lem:eff-term-subst:term}
        If $\Gamma_{{\mathrm{1}}}  \ottsym{,}  \mathit{x} \,  \mathord{:}  \, \ottnt{A}  \ottsym{,}  \Gamma_{{\mathrm{2}}}  \vdash  \ottnt{M'}  \ottsym{:}  \ottnt{B} \,  |  \, \epsilon$,
        then $\Gamma_{{\mathrm{1}}}  \ottsym{,}  \Gamma_{{\mathrm{2}}}  \vdash   \ottnt{M'}    [  \ottnt{M}  /  \mathit{x}  ]    \ottsym{:}  \ottnt{B} \,  |  \, \epsilon$.

  \item If $\Gamma_{{\mathrm{1}}}  \ottsym{,}  \mathit{x} \,  \mathord{:}  \, \ottnt{A}  \ottsym{,}  \Gamma_{{\mathrm{2}}}  \vdash  \ottnt{H}  \ottsym{:}  \ottnt{B} \,  |  \, \epsilon  \Rightarrow  \ottnt{C} \,  |  \, \epsilon'$,
        then $\Gamma_{{\mathrm{1}}}  \ottsym{,}  \Gamma_{{\mathrm{2}}}  \vdash   \ottnt{H}    [  \ottnt{M}  /  \mathit{x}  ]    \ottsym{:}  \ottnt{B} \,  |  \, \epsilon  \Rightarrow  \ottnt{C} \,  |  \, \epsilon'$.
 \end{enumerate}
\end{lemmap}
\begin{proof}
 By mutual induction on the typing derivations as in \reflem{term-subst}.
\end{proof}

\begin{lemmap}{Canonical forms}{eff-canonical-forms}
 Suppose that $\Gamma  \vdash  \ottnt{v}  \ottsym{:}  \ottnt{A} \,  |  \, \epsilon$.
 \begin{enumerate}
  \item If $ \mathit{unqualify}  (  \ottnt{A}  )  \,  =  \, \iota$,
        then $\ottnt{v} \,  =  \, \ottnt{c}$ for some $\ottnt{c}$.

  \item If $ \mathit{unqualify}  (  \ottnt{A}  )  \,  =  \,  \ottnt{B}   \rightarrow ^{ \epsilon' }  \ottnt{C} $,
        then $\ottnt{v} \,  =  \, \ottnt{c}$ for some $\ottnt{c}$,
        or $\ottnt{v} \,  =  \,  \lambda\!  \, \mathit{x}  \ottsym{.}  \ottnt{M}$ for some $\mathit{x}$ and $\ottnt{M}$.

  \item If $ \mathit{unqualify}  (  \ottnt{A}  )  \,  =  \,  \ottnt{B}  \times  \ottnt{C} $,
        then $\ottnt{v} \,  =  \, \ottsym{(}  \ottnt{v_{{\mathrm{1}}}}  \ottsym{,}  \ottnt{v_{{\mathrm{2}}}}  \ottsym{)}$ for some $\ottnt{v_{{\mathrm{1}}}}$ and $\ottnt{v_{{\mathrm{2}}}}$.

  \item If $ \mathit{unqualify}  (  \ottnt{A}  )  \,  =  \,  \ottnt{B}  +  \ottnt{C} $,
        then $\ottnt{v} \,  =  \, \mathsf{inl} \, \ottnt{v'}$ or $\ottnt{v} \,  =  \, \mathsf{inr} \, \ottnt{v'}$ for some $\ottnt{v'}$.

  \item If $ \mathit{unqualify}  (  \ottnt{A}  )  \,  =  \,  \ottnt{B}  \, \mathsf{list} $,
        then $\ottnt{v} \,  =  \, \mathsf{nil}$ or $\ottnt{v} \,  =  \, \mathsf{cons} \, \ottnt{v'}$ for some $\ottnt{v'}$.
 \end{enumerate}
\end{lemmap}
\begin{proof}
 Similarly to \reflem{canonical-forms}.
\end{proof}

\begin{lemmap}{Type containment inversion: function types}{eff-subtyping-inv-fun}
 If $\Gamma  \vdash    \text{\unboldmath$\forall$}  \,  \algeffseqoverindex{ \alpha_{{\mathrm{1}}} }{ \text{\unboldmath$\mathit{I_{{\mathrm{1}}}}$} }   \ottsym{.} \, \ottnt{A_{{\mathrm{1}}}}   \rightarrow ^{ \epsilon_{{\mathrm{1}}} }  \ottnt{A_{{\mathrm{2}}}}   \sqsubseteq    \text{\unboldmath$\forall$}  \,  \algeffseqoverindex{ \alpha_{{\mathrm{2}}} }{ \text{\unboldmath$\mathit{I_{{\mathrm{2}}}}$} }   \ottsym{.} \, \ottnt{B_{{\mathrm{1}}}}   \rightarrow ^{ \epsilon_{{\mathrm{2}}} }  \ottnt{B_{{\mathrm{2}}}} $,
 then $\epsilon_{{\mathrm{1}}} \,  =  \, \epsilon_{{\mathrm{2}}}$ and
 there exist $ \algeffseqoverindex{ \alpha_{{\mathrm{11}}} }{ \text{\unboldmath$\mathit{I_{{\mathrm{11}}}}$} } $, $ \algeffseqoverindex{ \alpha_{{\mathrm{12}}} }{ \text{\unboldmath$\mathit{I_{{\mathrm{12}}}}$} } $, $ \algeffseqoverindex{ \beta }{ \text{\unboldmath$\mathit{J}$} } $, and $ \algeffseqoverindex{ \ottnt{C} }{ \text{\unboldmath$\mathit{I_{{\mathrm{11}}}}$} } $
 such that
 \begin{itemize}
  \item $\ottsym{\{}   \algeffseqoverindex{ \alpha_{{\mathrm{1}}} }{ \text{\unboldmath$\mathit{I_{{\mathrm{1}}}}$} }   \ottsym{\}} \,  =  \, \ottsym{\{}   \algeffseqoverindex{ \alpha_{{\mathrm{11}}} }{ \text{\unboldmath$\mathit{I_{{\mathrm{11}}}}$} }   \ottsym{\}} \,  \mathbin{\uplus}  \, \ottsym{\{}   \algeffseqoverindex{ \alpha_{{\mathrm{12}}} }{ \text{\unboldmath$\mathit{I_{{\mathrm{12}}}}$} }   \ottsym{\}}$,
  \item $\Gamma  \ottsym{,}   \algeffseqoverindex{ \alpha_{{\mathrm{2}}} }{ \text{\unboldmath$\mathit{I_{{\mathrm{2}}}}$} }   \ottsym{,}   \algeffseqoverindex{ \beta }{ \text{\unboldmath$\mathit{J}$} }   \vdash   \algeffseqoverindex{ \ottnt{C} }{ \text{\unboldmath$\mathit{I_{{\mathrm{11}}}}$} } $,
  \item $\Gamma  \ottsym{,}   \algeffseqoverindex{ \alpha_{{\mathrm{2}}} }{ \text{\unboldmath$\mathit{I_{{\mathrm{2}}}}$} }   \vdash  \ottnt{B_{{\mathrm{1}}}}  \sqsubseteq    \text{\unboldmath$\forall$}  \,  \algeffseqoverindex{ \beta }{ \text{\unboldmath$\mathit{J}$} }   \ottsym{.} \, \ottnt{A_{{\mathrm{1}}}}    [   \algeffseqoverindex{ \ottnt{C} }{ \text{\unboldmath$\mathit{I_{{\mathrm{11}}}}$} }   \ottsym{/}   \algeffseqoverindex{ \alpha_{{\mathrm{11}}} }{ \text{\unboldmath$\mathit{I_{{\mathrm{11}}}}$} }   ]  $,
  \item $\Gamma  \ottsym{,}   \algeffseqoverindex{ \alpha_{{\mathrm{2}}} }{ \text{\unboldmath$\mathit{I_{{\mathrm{2}}}}$} }   \vdash    \text{\unboldmath$\forall$}  \,  \algeffseqoverindex{ \alpha_{{\mathrm{12}}} }{ \text{\unboldmath$\mathit{I_{{\mathrm{12}}}}$} }   \ottsym{.} \,  \text{\unboldmath$\forall$}  \,  \algeffseqoverindex{ \beta }{ \text{\unboldmath$\mathit{J}$} }   \ottsym{.} \, \ottnt{A_{{\mathrm{2}}}}    [   \algeffseqoverindex{ \ottnt{C} }{ \text{\unboldmath$\mathit{I_{{\mathrm{11}}}}$} }   \ottsym{/}   \algeffseqoverindex{ \alpha_{{\mathrm{11}}} }{ \text{\unboldmath$\mathit{I_{{\mathrm{11}}}}$} }   ]    \sqsubseteq  \ottnt{B_{{\mathrm{2}}}}$,
  \item type variables in $\ottsym{\{}   \algeffseqoverindex{ \beta }{ \text{\unboldmath$\mathit{J}$} }   \ottsym{\}}$ do not appear free in $\ottnt{A_{{\mathrm{1}}}}$ and $\ottnt{A_{{\mathrm{2}}}}$, and
  \item if $ \algeffseqoverindex{ \alpha_{{\mathrm{12}}} }{ \text{\unboldmath$\mathit{I_{{\mathrm{12}}}}$} } $ or $ \algeffseqoverindex{ \beta }{ \text{\unboldmath$\mathit{J}$} } $ is not the empty sequence,
        $\mathit{SR} \, \ottsym{(}  \epsilon_{{\mathrm{1}}}  \ottsym{)}$.
 \end{itemize}
\end{lemmap}
\begin{proof}
 Similarly to \reflem{subtyping-inv-fun}.
\end{proof}

\begin{lemma}{eff-subtyping-inv-fun-mono}
 If $\Gamma  \vdash   \ottnt{A_{{\mathrm{1}}}}   \rightarrow ^{ \epsilon_{{\mathrm{1}}} }  \ottnt{A_{{\mathrm{2}}}}   \sqsubseteq   \ottnt{B_{{\mathrm{1}}}}   \rightarrow ^{ \epsilon_{{\mathrm{2}}} }  \ottnt{B_{{\mathrm{2}}}} $,
 then $\epsilon_{{\mathrm{1}}} \,  =  \, \epsilon_{{\mathrm{2}}}$ and $\Gamma  \vdash  \ottnt{B_{{\mathrm{1}}}}  \sqsubseteq  \ottnt{A_{{\mathrm{1}}}}$ and $\Gamma  \vdash  \ottnt{A_{{\mathrm{2}}}}  \sqsubseteq  \ottnt{B_{{\mathrm{2}}}}$.
\end{lemma}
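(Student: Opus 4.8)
The plan is to mirror the proof of \reflem{subtyping-inv-fun-mono} almost verbatim, replacing the use of \reflem{subtyping-inv-fun} with its effect-aware counterpart \reflem{eff-subtyping-inv-fun}, which has been established earlier and already delivers the effect-equality $\epsilon_{{\mathrm{1}}} = \epsilon_{{\mathrm{2}}}$ as part of its conclusion. Thus the heavy lifting has been pushed into the inversion lemma, and what remains is routine assembly with the type containment rules.

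First I would apply \reflem{eff-subtyping-inv-fun} to the hypothesis $\Gamma  \vdash   \ottnt{A_{{\mathrm{1}}}}   \rightarrow ^{ \epsilon_{{\mathrm{1}}} }  \ottnt{A_{{\mathrm{2}}}}   \sqsubseteq   \ottnt{B_{{\mathrm{1}}}}   \rightarrow ^{ \epsilon_{{\mathrm{2}}} }  \ottnt{B_{{\mathrm{2}}}} $, instantiating the outer quantifier prefixes $ \algeffseqoverindex{ \alpha_{{\mathrm{1}}} }{ \text{\unboldmath$\mathit{I_{{\mathrm{1}}}}$} } $ and $ \algeffseqoverindex{ \alpha_{{\mathrm{2}}} }{ \text{\unboldmath$\mathit{I_{{\mathrm{2}}}}$} } $ as the empty sequences. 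This immediately yields $\epsilon_{{\mathrm{1}}} = \epsilon_{{\mathrm{2}}}$, discharging one of the three required conclusions. Because $ \algeffseqoverindex{ \alpha_{{\mathrm{1}}} }{ \text{\unboldmath$\mathit{I_{{\mathrm{1}}}}$} } $ is empty, the partition $\ottsym{\{}   \algeffseqoverindex{ \alpha_{{\mathrm{11}}} }{ \text{\unboldmath$\mathit{I_{{\mathrm{11}}}}$} }   \ottsym{\}} \,  \mathbin{\uplus}  \, \ottsym{\{}   \algeffseqoverindex{ \alpha_{{\mathrm{12}}} }{ \text{\unboldmath$\mathit{I_{{\mathrm{12}}}}$} }   \ottsym{\}}$ forces both $ \algeffseqoverindex{ \alpha_{{\mathrm{11}}} }{ \text{\unboldmath$\mathit{I_{{\mathrm{11}}}}$} } $ and $ \algeffseqoverindex{ \alpha_{{\mathrm{12}}} }{ \text{\unboldmath$\mathit{I_{{\mathrm{12}}}}$} } $ to be empty as well, so the substitution $ [   \algeffseqoverindex{ \ottnt{C} }{ \text{\unboldmath$\mathit{I_{{\mathrm{11}}}}$} }   \ottsym{/}   \algeffseqoverindex{ \alpha_{{\mathrm{11}}} }{ \text{\unboldmath$\mathit{I_{{\mathrm{11}}}}$} }   ] $ acts as the identity. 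The remaining two conclusions of the inversion lemma therefore reduce to $\Gamma  \vdash  \ottnt{B_{{\mathrm{1}}}}  \sqsubseteq   \text{\unboldmath$\forall$}  \,  \algeffseqoverindex{ \beta }{ \text{\unboldmath$\mathit{J}$} }   \ottsym{.} \, \ottnt{A_{{\mathrm{1}}}}$ and $\Gamma  \vdash   \text{\unboldmath$\forall$}  \,  \algeffseqoverindex{ \beta }{ \text{\unboldmath$\mathit{J}$} }   \ottsym{.} \, \ottnt{A_{{\mathrm{2}}}}  \sqsubseteq  \ottnt{B_{{\mathrm{2}}}}$ for some $ \algeffseqoverindex{ \beta }{ \text{\unboldmath$\mathit{J}$} } $ that does not occur free in $\ottnt{A_{{\mathrm{1}}}}$ or $\ottnt{A_{{\mathrm{2}}}}$.

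Next I would close the two containment obligations exactly as in the monomorphic case. For the contravariant direction, $\Gamma  \vdash   \text{\unboldmath$\forall$}  \,  \algeffseqoverindex{ \beta }{ \text{\unboldmath$\mathit{J}$} }   \ottsym{.} \, \ottnt{A_{{\mathrm{1}}}}  \sqsubseteq  \ottnt{A_{{\mathrm{1}}}}$ holds by repeated application of \Srule{Inst} (instantiating each variable of $ \algeffseqoverindex{ \beta }{ \text{\unboldmath$\mathit{J}$} } $ with an arbitrary well-formed type such as $ \text{\unboldmath$\forall$}  \, \gamma  \ottsym{.} \, \gamma$), whence $\Gamma  \vdash  \ottnt{B_{{\mathrm{1}}}}  \sqsubseteq  \ottnt{A_{{\mathrm{1}}}}$ follows by \Srule{Trans}. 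For the covariant direction, since $ \algeffseqoverindex{ \beta }{ \text{\unboldmath$\mathit{J}$} } $ does not occur free in $\ottnt{A_{{\mathrm{2}}}}$, repeated \Srule{Gen} gives $\Gamma  \vdash  \ottnt{A_{{\mathrm{2}}}}  \sqsubseteq   \text{\unboldmath$\forall$}  \,  \algeffseqoverindex{ \beta }{ \text{\unboldmath$\mathit{J}$} }   \ottsym{.} \, \ottnt{A_{{\mathrm{2}}}}$, and \Srule{Trans} then yields $\Gamma  \vdash  \ottnt{A_{{\mathrm{2}}}}  \sqsubseteq  \ottnt{B_{{\mathrm{2}}}}$.

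I do not expect any genuine obstacle in this lemma itself: the only point that might look delicate—establishing $\epsilon_{{\mathrm{1}}} = \epsilon_{{\mathrm{2}}}$ in the presence of \Srule{DFunEff}—is handled entirely inside \reflem{eff-subtyping-inv-fun}, where the side condition $\mathit{SR} \, \ottsym{(}  \epsilon_{{\mathrm{1}}}  \ottsym{)}$ on \Srule{DFunEff} guarantees that effect annotations are preserved across the distributive step. Consequently all of the real difficulty (tracking effect labels through the \Srule{Trans} chains and the interaction of \Srule{FunEff} and \Srule{DFunEff}) lives in that earlier inversion proof, and the present statement is a direct corollary.
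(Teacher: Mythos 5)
Your proposal is correct and follows essentially the same route as the paper: the paper's proof is exactly "apply \reflem{eff-subtyping-inv-fun} with empty quantifier prefixes (which already yields $\epsilon_{{\mathrm{1}}} \,  =  \, \epsilon_{{\mathrm{2}}}$), then discharge the two residual containments $\Gamma  \vdash  \ottnt{B_{{\mathrm{1}}}}  \sqsubseteq   \text{\unboldmath$\forall$}  \,  \algeffseqover{ \beta }   \ottsym{.} \, \ottnt{A_{{\mathrm{1}}}}$ and $\Gamma  \vdash   \text{\unboldmath$\forall$}  \,  \algeffseqover{ \beta }   \ottsym{.} \, \ottnt{A_{{\mathrm{2}}}}  \sqsubseteq  \ottnt{B_{{\mathrm{2}}}}$ via \Srule{Inst}, \Srule{Gen}, and \Srule{Trans}," mirroring \reflem{subtyping-inv-fun-mono}. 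Your observation that all the effect-tracking difficulty is absorbed by the inversion lemma is also how the paper organizes the argument.
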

\begin{proof}
 Similarly to \reflem{subtyping-inv-fun-mono} with
 \reflem{eff-subtyping-inv-fun}.
\end{proof}

\begin{lemmap}{Value inversion: constants}{eff-val-inv-const}
 If $\Gamma  \vdash  \ottnt{c}  \ottsym{:}  \ottnt{A} \,  |  \, \epsilon$, then $\Gamma  \vdash   \mathit{ty}  (  \ottnt{c}  )   \sqsubseteq  \ottnt{A}$.
\end{lemmap}
\begin{proof}
 Similarly to \reflem{val-inv-const}.
\end{proof}

\begin{lemmap}{Progress}{eff-progress}
 If $\Delta  \vdash  \ottnt{M}  \ottsym{:}  \ottnt{A} \,  |  \, \epsilon$, then:
 \begin{itemize}
  \item $\ottnt{M}  \longrightarrow  \ottnt{M'}$ for some $\ottnt{M'}$;
  \item $\ottnt{M}$ is a value; or
  \item $\ottnt{M} \,  =  \,  \ottnt{E}  [   \textup{\texttt{\#}\relax}  \mathsf{op}   \ottsym{(}   \ottnt{v}   \ottsym{)}   ] $ for some $\ottnt{E}$, $\mathsf{op}$, and $\ottnt{v}$
        such that $\mathsf{op} \,  \not\in  \, \ottnt{E}$ and $\mathsf{op} \,  \in  \, \epsilon$.
 \end{itemize}
\end{lemmap}
\begin{proof}
 Similarly to \reflem{progress} with the lemmas proven in this section.
 The case for \Te{Weak} is also straightforward.
\end{proof}

\begin{lemmap}{Value inversion: lambda abstractions}{eff-val-inv-abs}
 If $\Gamma  \vdash   \lambda\!  \, \mathit{x}  \ottsym{.}  \ottnt{M}  \ottsym{:}  \ottnt{A} \,  |  \, \epsilon$,
 then $\Gamma  \ottsym{,}   \algeffseqover{ \alpha }   \ottsym{,}  \mathit{x} \,  \mathord{:}  \, \ottnt{B}  \vdash  \ottnt{M}  \ottsym{:}  \ottnt{C} \,  |  \, \epsilon'$ and $\Gamma  \vdash    \text{\unboldmath$\forall$}  \,  \algeffseqover{ \alpha }   \ottsym{.} \, \ottnt{B}   \rightarrow ^{ \epsilon' }  \ottnt{C}   \sqsubseteq  \ottnt{A}$
 for some $ \algeffseqover{ \alpha } $, $\ottnt{B}$, $\ottnt{C}$, and $\epsilon'$.
\end{lemmap}
\begin{proof}
 Similarly to \reflem{val-inv-abs}.
\end{proof}

\begin{lemmap}{Value inversion: pairs}{eff-val-inv-pair}
 If $\Gamma  \vdash  \ottsym{(}  \ottnt{M_{{\mathrm{1}}}}  \ottsym{,}  \ottnt{M_{{\mathrm{2}}}}  \ottsym{)}  \ottsym{:}  \ottnt{A} \,  |  \, \epsilon$,
 then $\Gamma  \ottsym{,}   \algeffseqover{ \alpha }   \vdash  \ottnt{M_{{\mathrm{1}}}}  \ottsym{:}  \ottnt{B_{{\mathrm{1}}}} \,  |  \, \epsilon$ and $\Gamma  \ottsym{,}   \algeffseqover{ \alpha }   \vdash  \ottnt{M_{{\mathrm{2}}}}  \ottsym{:}  \ottnt{B_{{\mathrm{2}}}} \,  |  \, \epsilon$ and
 $\Gamma  \vdash    \text{\unboldmath$\forall$}  \,  \algeffseqover{ \alpha }   \ottsym{.} \, \ottnt{B_{{\mathrm{1}}}}  \times  \ottnt{B_{{\mathrm{2}}}}   \sqsubseteq  \ottnt{A}$
 for some $ \algeffseqover{ \alpha } $, $\ottnt{B_{{\mathrm{1}}}}$, and $\ottnt{B_{{\mathrm{2}}}}$.
\end{lemmap}
\begin{proof}
 Similarly to \reflem{val-inv-pair}.
\end{proof}

\begin{lemmap}{Value inversion: left injections}{eff-val-inv-inl}
 If $\Gamma  \vdash  \mathsf{inl} \, \ottnt{M}  \ottsym{:}  \ottnt{A} \,  |  \, \epsilon$, then
 $\Gamma  \ottsym{,}   \algeffseqover{ \alpha }   \vdash  \ottnt{M}  \ottsym{:}  \ottnt{B} \,  |  \, \epsilon$ and
 $\Gamma  \vdash    \text{\unboldmath$\forall$}  \,  \algeffseqover{ \alpha }   \ottsym{.} \, \ottnt{B}  +  \ottnt{C}   \sqsubseteq  \ottnt{A}$
 for some $ \algeffseqover{ \alpha } $, $\ottnt{B}$, and $\ottnt{C}$.
\end{lemmap}
\begin{proof}
 Similarly to \reflem{val-inv-inl}.
\end{proof}

\begin{lemmap}{Value inversion: right injections}{eff-val-inv-inr}
 If $\Gamma  \vdash  \mathsf{inr} \, \ottnt{M}  \ottsym{:}  \ottnt{A} \,  |  \, \epsilon$, then
 $\Gamma  \ottsym{,}   \algeffseqover{ \alpha }   \vdash  \ottnt{M}  \ottsym{:}  \ottnt{C} \,  |  \, \epsilon$ and
 $\Gamma  \vdash    \text{\unboldmath$\forall$}  \,  \algeffseqover{ \alpha }   \ottsym{.} \, \ottnt{B}  +  \ottnt{C}   \sqsubseteq  \ottnt{A}$
 for some $ \algeffseqover{ \alpha } $, $\ottnt{B}$, and $\ottnt{C}$.
\end{lemmap}
\begin{proof}
 Similarly to the proof of \reflem{val-inv-inr}.
\end{proof}

\begin{lemmap}{Value inversion: cons}{eff-val-inv-cons}
 If $\Gamma  \vdash  \mathsf{cons} \, \ottnt{M}  \ottsym{:}  \ottnt{A} \,  |  \, \epsilon$, then
 $\Gamma  \ottsym{,}   \algeffseqover{ \alpha }   \vdash  \ottnt{M}  \ottsym{:}    \ottnt{B}  \times  \ottnt{B}   \, \mathsf{list}  \,  |  \, \epsilon$ and
 $\Gamma  \vdash    \text{\unboldmath$\forall$}  \,  \algeffseqover{ \alpha }   \ottsym{.} \, \ottnt{B}  \, \mathsf{list}   \sqsubseteq  \ottnt{A}$
 for some $ \algeffseqover{ \alpha } $ and $\ottnt{B}$.
\end{lemmap}
\begin{proof}
 Similarly to \reflem{val-inv-cons}.
\end{proof}

\begin{lemma}{eff-term-inv-op}
 If $\mathit{ty} \, \ottsym{(}  \mathsf{op}  \ottsym{)} \,  =  \,   \text{\unboldmath$\forall$}  \,  \algeffseqoverindex{ \alpha }{ \text{\unboldmath$\mathit{I}$} }   \ottsym{.} \,  \ottnt{A}  \hookrightarrow  \ottnt{B} $ and $\Gamma  \vdash   \textup{\texttt{\#}\relax}  \mathsf{op}   \ottsym{(}   \ottnt{v}   \ottsym{)}   \ottsym{:}  \ottnt{C} \,  |  \, \epsilon$, then
 \begin{itemize}
  \item $\Gamma  \ottsym{,}   \algeffseqoverindex{ \beta }{ \text{\unboldmath$\mathit{J}$} }   \vdash   \algeffseqoverindex{ \ottnt{D} }{ \text{\unboldmath$\mathit{I}$} } $,
  \item $\Gamma  \ottsym{,}   \algeffseqoverindex{ \beta }{ \text{\unboldmath$\mathit{J}$} }   \vdash  \ottnt{v}  \ottsym{:}   \ottnt{A}    [   \algeffseqoverindex{ \ottnt{D} }{ \text{\unboldmath$\mathit{I}$} }   \ottsym{/}   \algeffseqoverindex{ \alpha }{ \text{\unboldmath$\mathit{I}$} }   ]   \,  |  \, \epsilon'$,
  \item $\epsilon' \,  \subseteq  \, \epsilon$,
  \item $\mathsf{op} \,  \in  \, \epsilon'$, and
  \item $\Gamma  \vdash    \text{\unboldmath$\forall$}  \,  \algeffseqoverindex{ \beta }{ \text{\unboldmath$\mathit{J}$} }   \ottsym{.} \, \ottnt{B}    [   \algeffseqoverindex{ \ottnt{D} }{ \text{\unboldmath$\mathit{I}$} }   \ottsym{/}   \algeffseqoverindex{ \alpha }{ \text{\unboldmath$\mathit{I}$} }   ]    \sqsubseteq  \ottnt{C}$; or
 \end{itemize}
 for some $ \algeffseqoverindex{ \beta }{ \text{\unboldmath$\mathit{J}$} } $, $ \algeffseqoverindex{ \ottnt{D} }{ \text{\unboldmath$\mathit{I}$} } $, and $\epsilon'$.
 Furthermore, if $ \algeffseqoverindex{ \beta }{ \text{\unboldmath$\mathit{J}$} } $ is not the empty sequence, $\mathit{SR} \, \ottsym{(}  \epsilon'  \ottsym{)}$ holds.
\end{lemma}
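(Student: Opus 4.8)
The plan is to prove this by induction on the derivation of $\Gamma \vdash \#op(v) : C \mid \epsilon$, following the structure of the pure-system analogue \reflem{term-inv-op} but additionally threading the effect annotations and the $\mathit{SR}$ bookkeeping. Only four rules can conclude a judgment whose subject is an operation call $\#op(v)$, namely \Te{Op}, \Te{Inst}, \Te{Weak}, and \Te{Gen}, so these are the four cases I would handle.

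Before starting the induction I would record one auxiliary observation that does much of the real work: the predicate $\mathit{SR}$ is downward closed along $\subseteq$, i.e.\ if $\epsilon' \subseteq \epsilon$ and $\mathit{SR}(\epsilon)$ hold, then $\mathit{SR}(\epsilon')$ holds. This is immediate from \refdef{eff-signature-restriction}, since that definition merely asserts the signature conditions for \emph{every} $\mathsf{op} \in \epsilon$, and shrinking the effect set only discards obligations.

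The three routine cases go as follows. In the base case \Te{Op} I take $\algeffseqover{\beta}$ empty and $\epsilon'$ equal to the ambient effect $\epsilon$; the premises of \Te{Op} directly supply $\Gamma \vdash \algeffseqover{D}$, $\Gamma \vdash v : A[\algeffseqover{D}/\algeffseqover{\alpha}] \mid \epsilon$, and $\mathsf{op} \in \epsilon$, the required containment $\Gamma \vdash B[\algeffseqover{D}/\algeffseqover{\alpha}] \sqsubseteq C$ is \Srule{Refl}, and the ``furthermore'' clause is vacuous because $\algeffseqover{\beta}$ is empty. For \Te{Inst} I apply the induction hypothesis and compose the resulting containment with the instantiation premise using \Srule{Trans}, leaving $\epsilon'$, $\algeffseqover{\beta}$, and the $\mathit{SR}$ side condition untouched. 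For \Te{Weak}, where the subject is typed at a smaller effect $\epsilon'' \subseteq \epsilon$, the induction hypothesis supplies $\epsilon' \subseteq \epsilon''$, and transitivity of $\subseteq$ gives $\epsilon' \subseteq \epsilon$; every other component passes through unchanged.

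The case \Te{Gen} is the interesting one, and it is where the new clause is discharged; I expect it to be the main (if mild) obstacle. Here $C = \forall \beta.\,C_0$ with premises $\Gamma,\beta \vdash \#op(v) : C_0 \mid \epsilon$ and $\mathit{SR}(\epsilon)$. Applying the induction hypothesis yields some $\algeffseqover{\beta_0}$ together with a containment $\Gamma,\beta \vdash \forall \algeffseqover{\beta_0}.\,B[\algeffseqover{D}/\algeffseqover{\alpha}] \sqsubseteq C_0$, which I lift to $\Gamma \vdash \forall \beta.\,\forall \algeffseqover{\beta_0}.\,B[\algeffseqover{D}/\algeffseqover{\alpha}] \sqsubseteq \forall \beta.\,C_0$ by \Srule{Poly}, setting $\algeffseqover{\beta} = \beta,\algeffseqover{\beta_0}$. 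Since this $\algeffseqover{\beta}$ always contains $\beta$, it is non-empty, so I must now produce $\mathit{SR}(\epsilon')$. This is precisely where the $\mathit{SR}(\epsilon)$ premise of \Te{Gen} is indispensable: combined with $\epsilon' \subseteq \epsilon$ (carried through by the induction hypothesis) and the downward-closure observation above, it yields $\mathit{SR}(\epsilon')$. Thus \Te{Gen} both creates the non-empty quantifier prefix and simultaneously furnishes the restriction fact needed to justify it, and no rule other than \Te{Gen} can make the prefix non-empty -- which is exactly why the invariant is maintainable and why the effect-system design is sound.
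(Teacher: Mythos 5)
Your proof is correct and follows essentially the same route as the paper's: induction on the typing derivation with exactly the cases \Te{Op}, \Te{Inst}, \Te{Weak}, and \Te{Gen}, where \Te{Gen} is the only rule that extends the quantifier prefix and its $\mathit{SR}(\epsilon)$ premise is what discharges the side condition. The paper's own proof is a one-line sketch per case; your version usefully makes explicit the downward closure of $\mathit{SR}$ under $\subseteq$, which the paper leaves implicit but which is indeed needed to pass from $\mathit{SR}(\epsilon)$ to $\mathit{SR}(\epsilon')$.
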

\begin{proof}
 By induction on the typing derivation.
 There are only five typing rules that can be applied to $ \textup{\texttt{\#}\relax}  \mathsf{op}   \ottsym{(}   \ottnt{v}   \ottsym{)} $.
 \begin{caseanalysis}
  \case \Te{Gen}: Straightforward by the IH.
   Note that $\mathit{SR} \, \ottsym{(}  \epsilon  \ottsym{)}$ by inversion.


  \case \Te{Inst}: Straightforward by the IH and \Srule{Trans}.
  \case \Te{Op}: Trivial.
  \case \Te{Weak}: By the IH.
 \end{caseanalysis}
\end{proof}

\begin{lemma}{eff-ectx-typing}
 If $\Gamma  \ottsym{,}   \algeffseqoverindex{ \alpha }{ \text{\unboldmath$\mathit{I}$} }   \vdash   \ottnt{E}  [   \textup{\texttt{\#}\relax}  \mathsf{op}   \ottsym{(}   \ottnt{v}   \ottsym{)}   ]   \ottsym{:}  \ottnt{A} \,  |  \, \epsilon$ and $\mathsf{op} \,  \not\in  \, \ottnt{E}$, then
 \begin{itemize}
  \item $\Gamma  \ottsym{,}   \algeffseqoverindex{ \alpha }{ \text{\unboldmath$\mathit{I}$} }   \ottsym{,}   \algeffseqoverindex{ \beta }{ \text{\unboldmath$\mathit{J}$} }   \vdash   \textup{\texttt{\#}\relax}  \mathsf{op}   \ottsym{(}   \ottnt{v}   \ottsym{)}   \ottsym{:}  \ottnt{B} \,  |  \, \epsilon'$ and
  \item $\Gamma  \ottsym{,}  \mathit{y} \,  \mathord{:}  \,  \text{\unboldmath$\forall$}  \,  \algeffseqoverindex{ \alpha }{ \text{\unboldmath$\mathit{I}$} }   \ottsym{.} \,  \text{\unboldmath$\forall$}  \,  \algeffseqoverindex{ \beta }{ \text{\unboldmath$\mathit{J}$} }   \ottsym{.} \, \ottnt{B}  \ottsym{,}   \algeffseqoverindex{ \alpha }{ \text{\unboldmath$\mathit{I}$} }   \vdash   \ottnt{E}  [  \mathit{y}  ]   \ottsym{:}  \ottnt{A} \,  |  \, \epsilon$ for any $\mathit{y} \,  \not\in  \,  \mathit{dom}  (  \Gamma  ) $, and
  \item $\mathsf{op} \,  \in  \, \epsilon$
 \end{itemize}
 for some $ \algeffseqoverindex{ \beta }{ \text{\unboldmath$\mathit{J}$} } $, $\ottnt{B}$, and $\epsilon'$.
 Furthermore, if $ \algeffseqoverindex{ \beta }{ \text{\unboldmath$\mathit{J}$} } $ is not the empty sequence,
 then $\mathit{SR} \, \ottsym{(}  \ottsym{\{}  \mathsf{op}  \ottsym{\}}  \ottsym{)}$ holds.
\end{lemma}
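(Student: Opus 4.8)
The plan is to prove this by induction on the derivation of $\Gamma  \ottsym{,}   \algeffseqover{ \alpha }   \vdash   \ottnt{E}  [   \textup{\texttt{\#}\relax}  \mathsf{op}   \ottsym{(}   \ottnt{v}   \ottsym{)}   ]   \ottsym{:}  \ottnt{A} \,  |  \, \epsilon$, following the pure analogue \reflem{ectx-typing} (which already builds up the sequence $ \algeffseqover{ \beta } $ of generalized type variables while ascending the context and re-generalizes the $ \ottnt{E}  [  \mathit{y}  ] $ judgment). Relative to \reflem{ectx-typing}, the genuinely new obligations are the effect bookkeeping—namely $\mathsf{op} \,  \in  \, \epsilon$—and the ``furthermore'' clause $\mathit{SR} \, \ottsym{(}  \ottsym{\{}  \mathsf{op}  \ottsym{\}}  \ottsym{)}$ whenever $ \algeffseqover{ \beta } $ is non-empty. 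In the base case $\ottnt{E} \,  =  \,  [] $ I would take $ \algeffseqover{ \beta } $ empty, $\ottnt{B} \,  =  \, \ottnt{A}$, and $\epsilon' \,  =  \, \epsilon$, so the first conclusion is the hypothesis and the furthermore clause is vacuous; the judgment for $ \ottnt{E}  [  \mathit{y}  ]  \,  =  \, \mathit{y}$ comes from \Te{Var} and \Te{Inst}/\Srule{Inst} exactly as in \reflem{ectx-typing}, and $\mathsf{op} \,  \in  \, \epsilon$ follows by applying \reflem{eff-term-inv-op} to the hypothesis, which yields $\epsilon'$ with $\mathsf{op} \,  \in  \, \epsilon' \,  \subseteq  \, \epsilon$.

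The purely structural cases—\Te{App}, \Te{Pair}, \Te{Proj1}, \Te{Proj2}, \Te{InL}, \Te{InR}, \Te{Case}, \Te{Cons}, \Te{CaseList}, and \Te{Op}—leave $ \algeffseqover{ \beta } $ unchanged and are handled exactly as in \reflem{ectx-typing}: apply the IH to the subderivation containing the hole, weaken the sibling subterms by \reflem{eff-weakening} to admit the new binding for $\mathit{y}$, and reassemble with the same rule. Since each of these rules carries the same effect $\epsilon$ on the whole term and on the subterm holding the operation call, the IH's conclusion $\mathsf{op} \,  \in  \, \epsilon$ transfers directly, and the non-empty-$ \algeffseqover{ \beta } $ part of the furthermore clause is inherited verbatim from the IH. The cases \Te{Inst} and \Te{Weak} likewise leave $ \algeffseqover{ \beta } $ untouched; for \Te{Weak} I would additionally use $\epsilon' \,  \subseteq  \, \epsilon$ to lift $\mathsf{op} \,  \in  \, \epsilon'$ to the weakened effect and re-apply \Te{Weak} to the $ \ottnt{E}  [  \mathit{y}  ] $ judgment.

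Two cases require more than transcription. The crux is \Te{Gen}, the only rule that enlarges $ \algeffseqover{ \beta } $ and hence the only place the $\mathit{SR} \, \ottsym{(}  \ottsym{\{}  \mathsf{op}  \ottsym{\}}  \ottsym{)}$ obligation can first arise. Here the derivation ends in $\Gamma  \ottsym{,}   \algeffseqover{ \alpha }   \ottsym{,}  \gamma  \vdash   \ottnt{E}  [   \textup{\texttt{\#}\relax}  \mathsf{op}   \ottsym{(}   \ottnt{v}   \ottsym{)}   ]   \ottsym{:}  \ottnt{A'} \,  |  \, \epsilon$ with side condition $\mathit{SR} \, \ottsym{(}  \epsilon  \ottsym{)}$ and $\ottnt{A} \,  =  \,  \text{\unboldmath$\forall$}  \, \gamma  \ottsym{.} \, \ottnt{A'}$; applying the IH and prepending $\gamma$ gives a non-empty $ \algeffseqover{ \beta } $, and re-generalizing $ \ottnt{E}  [  \mathit{y}  ] $ by \Te{Gen} proceeds as in \reflem{ectx-typing}. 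Because $ \algeffseqover{ \beta } $ is now non-empty I must discharge $\mathit{SR} \, \ottsym{(}  \ottsym{\{}  \mathsf{op}  \ottsym{\}}  \ottsym{)}$: this is exactly where the \Te{Gen} premise $\mathit{SR} \, \ottsym{(}  \epsilon  \ottsym{)}$ is consumed, since the IH supplies $\mathsf{op} \,  \in  \, \epsilon$ and, by \refdef{eff-signature-restriction}, $\mathit{SR} \, \ottsym{(}  \epsilon  \ottsym{)}$ is a per-operation condition, so $\mathsf{op} \,  \in  \, \epsilon$ yields $\mathit{SR} \, \ottsym{(}  \ottsym{\{}  \mathsf{op}  \ottsym{\}}  \ottsym{)}$.

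The other nontrivial case is \Te{Handle}, where $\ottnt{E} \,  =  \, \mathsf{handle} \, \ottnt{E'} \, \mathsf{with} \, \ottnt{H}$: here the whole term's effect $\epsilon$ is the handler's \emph{output} effect, while the subterm $ \ottnt{E'}  [   \textup{\texttt{\#}\relax}  \mathsf{op}   \ottsym{(}   \ottnt{v}   \ottsym{)}   ] $ to which the IH applies carries the handler's \emph{input} effect, so $\mathsf{op} \,  \in  \, \epsilon$ does not transfer automatically. I would close this gap with a small auxiliary fact, proved by inversion on the handler typing derivation: if $\Gamma  \vdash  \ottnt{H}  \ottsym{:}  \ottnt{A} \,  |  \, \epsilon_{{\mathrm{0}}} \Rightarrow \ottnt{B} \,  |  \, \epsilon$, $\ottnt{H}$ has no operation clause for $\mathsf{op}$, and $\mathsf{op} \,  \in  \, \epsilon_{{\mathrm{0}}}$, then $\mathsf{op} \,  \in  \, \epsilon$—because the operations of $\epsilon_{{\mathrm{0}}}$ not removed by \THerule{Op} flow into the return clause's effect, which is contained in $\epsilon$ by \THerule{Return}. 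The op-freeness hypothesis $\mathsf{op} \,  \not\in  \, \ottnt{E}$ guarantees precisely that $\ottnt{H}$ has no clause for $\mathsf{op}$. I expect this handler-effect passthrough, together with the $\mathit{SR}$ reasoning in the \Te{Gen} case, to be the only steps beyond a routine adaptation of \reflem{ectx-typing}.
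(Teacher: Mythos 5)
Your proposal is correct and follows essentially the same route as the paper's proof: induction on the typing derivation, with the base case delegated to \reflem{eff-term-inv-op}, the \Te{Gen} case discharging $\mathit{SR} \, \ottsym{(}  \ottsym{\{}  \mathsf{op}  \ottsym{\}}  \ottsym{)}$ from the rule's premise $\mathit{SR} \, \ottsym{(}  \epsilon  \ottsym{)}$ together with $\mathsf{op} \,  \in  \, \epsilon$, and the \Te{Handle} case using $\mathsf{op} \,  \not\in  \, \ottnt{E}$ to push $\mathsf{op}$ from the handler's input effect into its output effect. Your explicit handler-inversion argument for that last step is exactly the content the paper leaves implicit in its one-line justification.
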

\begin{proof}
 By induction on the typing derivation.
 \begin{caseanalysis}
  \case \Te{Var}, \Te{Const}, \Te{Abs}, \Te{Nil}, and \Te{Fix}:
   Contradictory.

  \case \Te{App}:
   By case analysis on $\ottnt{E}$.
   \begin{caseanalysis}
    \case $\ottnt{E} \,  =  \, \ottnt{E'} \, \ottnt{M_{{\mathrm{2}}}}$:
     By inversion of the typing derivation, we have
     $\Gamma  \ottsym{,}   \algeffseqoverindex{ \alpha }{ \text{\unboldmath$\mathit{I}$} }   \vdash   \ottnt{E'}  [   \textup{\texttt{\#}\relax}  \mathsf{op}   \ottsym{(}   \ottnt{v}   \ottsym{)}   ]   \ottsym{:}   \ottnt{C}   \rightarrow ^{ \epsilon'' }  \ottnt{A}  \,  |  \, \epsilon$ and
     $\Gamma  \ottsym{,}   \algeffseqoverindex{ \alpha }{ \text{\unboldmath$\mathit{I}$} }   \vdash  \ottnt{M_{{\mathrm{2}}}}  \ottsym{:}  \ottnt{C} \,  |  \, \epsilon$ and
     $\epsilon'' \,  \subseteq  \, \epsilon$
     for some $\ottnt{C}$ and $\epsilon''$.
     By the IH,
     \begin{itemize}
      \item $\Gamma  \ottsym{,}   \algeffseqoverindex{ \alpha }{ \text{\unboldmath$\mathit{I}$} }   \ottsym{,}   \algeffseqoverindex{ \beta }{ \text{\unboldmath$\mathit{J}$} }   \vdash   \textup{\texttt{\#}\relax}  \mathsf{op}   \ottsym{(}   \ottnt{v}   \ottsym{)}   \ottsym{:}  \ottnt{B} \,  |  \, \epsilon'$,
      \item $\Gamma  \ottsym{,}  \mathit{y} \,  \mathord{:}  \,  \text{\unboldmath$\forall$}  \,  \algeffseqoverindex{ \alpha }{ \text{\unboldmath$\mathit{I}$} }   \ottsym{.} \,  \text{\unboldmath$\forall$}  \,  \algeffseqoverindex{ \beta }{ \text{\unboldmath$\mathit{J}$} }   \ottsym{.} \, \ottnt{B}  \ottsym{,}   \algeffseqoverindex{ \alpha }{ \text{\unboldmath$\mathit{I}$} }   \vdash   \ottnt{E'}  [  \mathit{y}  ]   \ottsym{:}   \ottnt{C}   \rightarrow ^{ \epsilon'' }  \ottnt{A}  \,  |  \, \epsilon$
            for any $\mathit{y} \,  \not\in  \,  \mathit{dom}  (  \Gamma  ) $, and
      \item $\mathsf{op} \,  \in  \, \epsilon$,
      \item If $ \algeffseqoverindex{ \beta }{ \text{\unboldmath$\mathit{J}$} } $ is not the empty sequence,
            then $\mathit{SR} \, \ottsym{(}  \ottsym{\{}  \mathsf{op}  \ottsym{\}}  \ottsym{)}$ holds.
     \end{itemize}
     for some $ \algeffseqoverindex{ \beta }{ \text{\unboldmath$\mathit{J}$} } $, $\ottnt{B}$, and $\epsilon'$.
     By \reflem{eff-weakening} (\ref{lem:eff-weakening:term}) and \Te{App},
     $\Gamma  \ottsym{,}  \mathit{y} \,  \mathord{:}  \,  \text{\unboldmath$\forall$}  \,  \algeffseqoverindex{ \alpha }{ \text{\unboldmath$\mathit{I}$} }   \ottsym{.} \,  \text{\unboldmath$\forall$}  \,  \algeffseqoverindex{ \beta }{ \text{\unboldmath$\mathit{J}$} }   \ottsym{.} \, \ottnt{B}  \ottsym{,}   \algeffseqoverindex{ \alpha }{ \text{\unboldmath$\mathit{I}$} }   \vdash   \ottnt{E'}  [  \mathit{y}  ]  \, \ottnt{M_{{\mathrm{2}}}}  \ottsym{:}  \ottnt{A} \,  |  \, \epsilon$, i.e.,
     $\Gamma  \ottsym{,}  \mathit{y} \,  \mathord{:}  \,  \text{\unboldmath$\forall$}  \,  \algeffseqoverindex{ \alpha }{ \text{\unboldmath$\mathit{I}$} }   \ottsym{.} \,  \text{\unboldmath$\forall$}  \,  \algeffseqoverindex{ \beta }{ \text{\unboldmath$\mathit{J}$} }   \ottsym{.} \, \ottnt{B}  \ottsym{,}   \algeffseqoverindex{ \alpha }{ \text{\unboldmath$\mathit{I}$} }   \vdash   \ottnt{E}  [  \mathit{y}  ]   \ottsym{:}  \ottnt{A} \,  |  \, \epsilon$.

    \case $\ottnt{E} \,  =  \, \ottnt{v_{{\mathrm{1}}}} \, \ottnt{E'}$:
     Similarly to the above case.
   \end{caseanalysis}

  \case \Te{Gen}:
   By the IH.  We find $\mathit{SR} \, \ottsym{(}  \ottsym{\{}  \mathsf{op}  \ottsym{\}}  \ottsym{)}$ by $\mathsf{op} \,  \in  \, \epsilon$ and $\mathit{SR} \, \ottsym{(}  \epsilon  \ottsym{)}$.


  \case \Te{Inst}: By the IH.

  \case \Te{Op}:
   If $\ottnt{E} \,  =  \,  [] $, the proof is straightforward by letting
   $ \algeffseqoverindex{ \beta }{ \text{\unboldmath$\mathit{J}$} } $ be the empty sequence,
   $\ottnt{B} \,  =  \, \ottnt{A}$, and
   $\epsilon' \,  =  \, \epsilon$;
   $\mathsf{op} \,  \in  \, \epsilon$ is found by \reflem{eff-term-inv-op}.

   Otherwise, the proof is similar to the case for \Te{App}.

  \case \Te{Handle}: By the IH.  We find $\mathsf{op} \,  \in  \, \epsilon$ because
   the handler does not have an operation clause for $\mathsf{op}$ ($\mathsf{op} \,  \not\in  \, \ottnt{E}$).

  \case \Te{Weak}: By the IH.

  \otherwise: Similarly to the case for \Te{App}.

 \end{caseanalysis}
\end{proof}

\begin{lemma}{eff-var-subtype}
 Suppose that $\Gamma_{{\mathrm{1}}}  \vdash  \ottnt{A}  \sqsubseteq  \ottnt{B}$ and $\Gamma_{{\mathrm{1}}}  \vdash  \ottnt{A}$.
 \begin{enumerate}
  \item If $\Gamma_{{\mathrm{1}}}  \ottsym{,}  \mathit{x} \,  \mathord{:}  \, \ottnt{B}  \ottsym{,}  \Gamma_{{\mathrm{2}}}  \vdash  \ottnt{M}  \ottsym{:}  \ottnt{C} \,  |  \, \epsilon$,
        then $\Gamma_{{\mathrm{1}}}  \ottsym{,}  \mathit{x} \,  \mathord{:}  \, \ottnt{A}  \ottsym{,}  \Gamma_{{\mathrm{2}}}  \vdash  \ottnt{M}  \ottsym{:}  \ottnt{C} \,  |  \, \epsilon$.
  \item If $\Gamma_{{\mathrm{1}}}  \ottsym{,}  \mathit{x} \,  \mathord{:}  \, \ottnt{B}  \ottsym{,}  \Gamma_{{\mathrm{2}}}  \vdash  \ottnt{H}  \ottsym{:}  \ottnt{C} \,  |  \, \epsilon  \Rightarrow  \ottnt{D} \,  |  \, \epsilon'$,
        then $\Gamma_{{\mathrm{1}}}  \ottsym{,}  \mathit{x} \,  \mathord{:}  \, \ottnt{A}  \ottsym{,}  \Gamma_{{\mathrm{2}}}  \vdash  \ottnt{H}  \ottsym{:}  \ottnt{C} \,  |  \, \epsilon  \Rightarrow  \ottnt{D} \,  |  \, \epsilon'$.
 \end{enumerate}
\end{lemma}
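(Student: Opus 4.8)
The plan is to follow the proof of its pure-system counterpart \reflem{var-subtype} verbatim, proceeding by simultaneous (mutual) induction on the two typing derivations $\Gamma_{{\mathrm{1}}}  \ottsym{,}  \mathit{x} \,  \mathord{:}  \, \ottnt{B}  \ottsym{,}  \Gamma_{{\mathrm{2}}}  \vdash  \ottnt{M}  \ottsym{:}  \ottnt{C} \,  |  \, \epsilon$ and $\Gamma_{{\mathrm{1}}}  \ottsym{,}  \mathit{x} \,  \mathord{:}  \, \ottnt{B}  \ottsym{,}  \Gamma_{{\mathrm{2}}}  \vdash  \ottnt{H}  \ottsym{:}  \ottnt{C} \,  |  \, \epsilon  \Rightarrow  \ottnt{D} \,  |  \, \epsilon'$, rebinding $\mathit{x}$ from $\ottnt{B}$ to $\ottnt{A}$. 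The effect annotations on judgments never interact with the type assigned to $\mathit{x}$, so they are simply threaded through unchanged; this is why the argument is insensitive to the passage from the polymorphic type system to the type-and-effect system, and most cases are completely routine.

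The crux is the base case \Te{Var}. When the term is a variable $\mathit{z}$ distinct from $\mathit{x}$, its binding $\mathit{z} \,  \mathord{:}  \, \ottnt{C}$ is untouched by replacing $\mathit{x} \,  \mathord{:}  \, \ottnt{B}$ with $\mathit{x} \,  \mathord{:}  \, \ottnt{A}$, and the side condition $\vdash  \Gamma_{{\mathrm{1}}}  \ottsym{,}  \mathit{x} \,  \mathord{:}  \, \ottnt{A}  \ottsym{,}  \Gamma_{{\mathrm{2}}}$ still holds because $\ottnt{A}$ and $\ottnt{B}$ are both well formed under $\Gamma_{{\mathrm{1}}}$ (implicit in $\Gamma_{{\mathrm{1}}}  \vdash  \ottnt{A}  \sqsubseteq  \ottnt{B}$) and share the bound variable $\mathit{x}$, so the two contexts have the same domain and equal well-formedness status; hence \Te{Var} reapplies directly. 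When the term is exactly $\mathit{x}$, uniqueness of bindings forces $\ottnt{C} \,  =  \, \ottnt{B}$, and the goal $\Gamma_{{\mathrm{1}}}  \ottsym{,}  \mathit{x} \,  \mathord{:}  \, \ottnt{A}  \ottsym{,}  \Gamma_{{\mathrm{2}}}  \vdash  \mathit{x}  \ottsym{:}  \ottnt{B} \,  |  \, \epsilon$ is obtained by first deriving $\Gamma_{{\mathrm{1}}}  \ottsym{,}  \mathit{x} \,  \mathord{:}  \, \ottnt{A}  \ottsym{,}  \Gamma_{{\mathrm{2}}}  \vdash  \mathit{x}  \ottsym{:}  \ottnt{A} \,  |  \, \epsilon$ via \Te{Var} and then coercing with \Te{Inst}. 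The two remaining premises of \Te{Inst} are supplied by \reflem{eff-weakening}: the containment hypothesis $\Gamma_{{\mathrm{1}}}  \vdash  \ottnt{A}  \sqsubseteq  \ottnt{B}$ weakens to $\Gamma_{{\mathrm{1}}}  \ottsym{,}  \mathit{x} \,  \mathord{:}  \, \ottnt{A}  \ottsym{,}  \Gamma_{{\mathrm{2}}}  \vdash  \ottnt{A}  \sqsubseteq  \ottnt{B}$ by case (\ref{lem:eff-weakening:sub}), and the well-formedness $\Gamma_{{\mathrm{1}}}  \vdash  \ottnt{B}$ (again implicit in the containment judgment) weakens to $\Gamma_{{\mathrm{1}}}  \ottsym{,}  \mathit{x} \,  \mathord{:}  \, \ottnt{A}  \ottsym{,}  \Gamma_{{\mathrm{2}}}  \vdash  \ottnt{B}$ by case (\ref{lem:eff-weakening:type}).

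Every other typing rule (including the effect-flavored rules \Te{App}, \Te{Gen}, \Te{Fix}, \Te{Weak}, \Te{Handle}, and the handler rules \THerule{Return} and \THerule{Op}) is handled uniformly: apply the induction hypothesis to each subderivation—note that the hypotheses $\Gamma_{{\mathrm{1}}}  \vdash  \ottnt{A}  \sqsubseteq  \ottnt{B}$ and $\Gamma_{{\mathrm{1}}}  \vdash  \ottnt{A}$ carry over unchanged even after the context is extended with further variable and type-variable bindings in $\Gamma_{{\mathrm{2}}}$—and then reapply the same rule, keeping the side conditions (subeffecting constraints, $\mathit{SR}$ predicates, effect unions) exactly as they were, since none of them mentions the type of $\mathit{x}$. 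I do not anticipate a genuine obstacle here: the only place the containment hypothesis is consumed is the \Te{Var}/$\mathit{x}$ case above, and the only technical care needed is the routine bookkeeping that well-formedness of the rebound context and the weakening of the containment judgment are available at the point where \Te{Inst} is applied.
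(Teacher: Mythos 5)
Your proof is correct and is exactly the argument the paper intends: the paper's own proof of this lemma (like its pure-system counterpart, \reflem{var-subtype}) is stated only as ``by mutual induction on the typing derivations,'' and your write-up supplies the details, with the crux being the \Te{Var} case for $\mathit{x}$ itself, resolved by rederiving $\mathit{x}$ at type $\ottnt{A}$ and coercing to $\ottnt{B}$ via \Te{Inst} using the weakened containment hypothesis. All other cases are, as you say, routine reapplications of the induction hypothesis, since the effect annotations and side conditions never depend on the type bound to $\mathit{x}$.
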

\begin{proof}
 By mutual induction on the typing derivations.
\end{proof}

\begin{lemma}{eff-ectx-op-typing}
 If $\mathit{ty} \, \ottsym{(}  \mathsf{op}  \ottsym{)} \,  =  \,   \text{\unboldmath$\forall$}  \,  \algeffseqoverindex{ \alpha }{ \text{\unboldmath$\mathit{I}$} }   \ottsym{.} \,  \ottnt{A}  \hookrightarrow  \ottnt{B} $ and $\Gamma  \vdash   \ottnt{E}  [   \textup{\texttt{\#}\relax}  \mathsf{op}   \ottsym{(}   \ottnt{v}   \ottsym{)}   ]   \ottsym{:}  \ottnt{C} \,  |  \, \epsilon$ and
 $\mathsf{op} \,  \not\in  \, \ottnt{E}$, then
 \begin{itemize}
  \item $\Gamma  \ottsym{,}   \algeffseqoverindex{ \beta }{ \text{\unboldmath$\mathit{J}$} }   \vdash   \algeffseqoverindex{ \ottnt{D} }{ \text{\unboldmath$\mathit{I}$} } $,
  \item $\Gamma  \ottsym{,}   \algeffseqoverindex{ \beta }{ \text{\unboldmath$\mathit{J}$} }   \vdash  \ottnt{v}  \ottsym{:}   \ottnt{A}    [   \algeffseqoverindex{ \ottnt{D} }{ \text{\unboldmath$\mathit{I}$} }   \ottsym{/}   \algeffseqoverindex{ \alpha }{ \text{\unboldmath$\mathit{I}$} }   ]   \,  |  \, \epsilon'$, and
  \item for any $\mathit{y} \,  \not\in  \,  \mathit{dom}  (  \Gamma  ) $,
        $\Gamma  \ottsym{,}  \mathit{y} \,  \mathord{:}  \,  \text{\unboldmath$\forall$}  \,  \algeffseqoverindex{ \beta }{ \text{\unboldmath$\mathit{J}$} }   \ottsym{.} \, \ottnt{B} \,  [   \algeffseqoverindex{ \ottnt{D} }{ \text{\unboldmath$\mathit{I}$} }   \ottsym{/}   \algeffseqoverindex{ \alpha }{ \text{\unboldmath$\mathit{I}$} }   ]   \vdash   \ottnt{E}  [  \mathit{y}  ]   \ottsym{:}  \ottnt{C} \,  |  \, \epsilon$
 \end{itemize}
 for some $ \algeffseqoverindex{ \beta }{ \text{\unboldmath$\mathit{J}$} } $, $ \algeffseqoverindex{ \ottnt{D} }{ \text{\unboldmath$\mathit{I}$} } $, and $\epsilon'$.
 Furthermore, if $ \algeffseqoverindex{ \beta }{ \text{\unboldmath$\mathit{J}$} } $ is not the empty sequence,
 $\mathit{SR} \, \ottsym{(}  \ottsym{\{}  \mathsf{op}  \ottsym{\}}  \ottsym{)}$ holds.
\end{lemma}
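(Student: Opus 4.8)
The plan is to follow the proof of \reflem{ectx-op-typing} from the effect-free development, but now threading the effect annotations and the signature-restriction side conditions through the two specialized inversion lemmas. First I would apply \reflem{eff-ectx-typing} to the hypotheses $\Gamma  \vdash   \ottnt{E}  [   \textup{\texttt{\#}\relax}  \mathsf{op}   \ottsym{(}   \ottnt{v}   \ottsym{)}   ]   \ottsym{:}  \ottnt{C} \,  |  \, \epsilon$ and $\mathsf{op} \,  \not\in  \, \ottnt{E}$. This yields some $ \algeffseqoverindex{ \beta_{{\mathrm{1}}} }{ \text{\unboldmath$\mathit{J_{{\mathrm{1}}}}$} } $, a type $\ottnt{B_{{\mathrm{0}}}}$, and an effect $\epsilon_{{\mathrm{0}}}$ with $\Gamma  \ottsym{,}   \algeffseqoverindex{ \beta_{{\mathrm{1}}} }{ \text{\unboldmath$\mathit{J_{{\mathrm{1}}}}$} }   \vdash   \textup{\texttt{\#}\relax}  \mathsf{op}   \ottsym{(}   \ottnt{v}   \ottsym{)}   \ottsym{:}  \ottnt{B_{{\mathrm{0}}}} \,  |  \, \epsilon_{{\mathrm{0}}}$, with $\Gamma  \ottsym{,}  \mathit{y} \,  \mathord{:}  \,  \text{\unboldmath$\forall$}  \,  \algeffseqoverindex{ \beta_{{\mathrm{1}}} }{ \text{\unboldmath$\mathit{J_{{\mathrm{1}}}}$} }   \ottsym{.} \, \ottnt{B_{{\mathrm{0}}}}  \vdash   \ottnt{E}  [  \mathit{y}  ]   \ottsym{:}  \ottnt{C} \,  |  \, \epsilon$ for every fresh $\mathit{y}$, together with $\mathsf{op} \,  \in  \, \epsilon$ and, whenever $ \algeffseqoverindex{ \beta_{{\mathrm{1}}} }{ \text{\unboldmath$\mathit{J_{{\mathrm{1}}}}$} } $ is nonempty, $\mathit{SR} \, \ottsym{(}  \ottsym{\{}  \mathsf{op}  \ottsym{\}}  \ottsym{)}$.

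Next I would invert the bare operation call by \reflem{eff-term-inv-op} applied to $\Gamma  \ottsym{,}   \algeffseqoverindex{ \beta_{{\mathrm{1}}} }{ \text{\unboldmath$\mathit{J_{{\mathrm{1}}}}$} }   \vdash   \textup{\texttt{\#}\relax}  \mathsf{op}   \ottsym{(}   \ottnt{v}   \ottsym{)}   \ottsym{:}  \ottnt{B_{{\mathrm{0}}}} \,  |  \, \epsilon_{{\mathrm{0}}}$. This produces further type variables $ \algeffseqoverindex{ \beta_{{\mathrm{2}}} }{ \text{\unboldmath$\mathit{J_{{\mathrm{2}}}}$} } $, witnesses $ \algeffseqoverindex{ \ottnt{D} }{ \text{\unboldmath$\mathit{I}$} } $, and an effect $\epsilon'$ such that $\Gamma  \ottsym{,}   \algeffseqoverindex{ \beta_{{\mathrm{1}}} }{ \text{\unboldmath$\mathit{J_{{\mathrm{1}}}}$} }   \ottsym{,}   \algeffseqoverindex{ \beta_{{\mathrm{2}}} }{ \text{\unboldmath$\mathit{J_{{\mathrm{2}}}}$} }   \vdash   \algeffseqoverindex{ \ottnt{D} }{ \text{\unboldmath$\mathit{I}$} } $, $\Gamma  \ottsym{,}   \algeffseqoverindex{ \beta_{{\mathrm{1}}} }{ \text{\unboldmath$\mathit{J_{{\mathrm{1}}}}$} }   \ottsym{,}   \algeffseqoverindex{ \beta_{{\mathrm{2}}} }{ \text{\unboldmath$\mathit{J_{{\mathrm{2}}}}$} }   \vdash  \ottnt{v}  \ottsym{:}   \ottnt{A}    [   \algeffseqoverindex{ \ottnt{D} }{ \text{\unboldmath$\mathit{I}$} }   \ottsym{/}   \algeffseqoverindex{ \alpha }{ \text{\unboldmath$\mathit{I}$} }   ]   \,  |  \, \epsilon'$, $\mathsf{op} \,  \in  \, \epsilon'$, $\Gamma  \ottsym{,}   \algeffseqoverindex{ \beta_{{\mathrm{1}}} }{ \text{\unboldmath$\mathit{J_{{\mathrm{1}}}}$} }   \vdash    \text{\unboldmath$\forall$}  \,  \algeffseqoverindex{ \beta_{{\mathrm{2}}} }{ \text{\unboldmath$\mathit{J_{{\mathrm{2}}}}$} }   \ottsym{.} \, \ottnt{B}    [   \algeffseqoverindex{ \ottnt{D} }{ \text{\unboldmath$\mathit{I}$} }   \ottsym{/}   \algeffseqoverindex{ \alpha }{ \text{\unboldmath$\mathit{I}$} }   ]    \sqsubseteq  \ottnt{B_{{\mathrm{0}}}}$, and, whenever $ \algeffseqoverindex{ \beta_{{\mathrm{2}}} }{ \text{\unboldmath$\mathit{J_{{\mathrm{2}}}}$} } $ is nonempty, $\mathit{SR} \, \ottsym{(}  \epsilon'  \ottsym{)}$. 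I would then take $ \algeffseqoverindex{ \beta }{ \text{\unboldmath$\mathit{J}$} }  \,  =  \,  \algeffseqoverindex{ \beta_{{\mathrm{1}}} }{ \text{\unboldmath$\mathit{J_{{\mathrm{1}}}}$} }   \ottsym{,}   \algeffseqoverindex{ \beta_{{\mathrm{2}}} }{ \text{\unboldmath$\mathit{J_{{\mathrm{2}}}}$} } $ along with these $ \algeffseqoverindex{ \ottnt{D} }{ \text{\unboldmath$\mathit{I}$} } $ and $\epsilon'$, so that the first two required conclusions ($\Gamma  \ottsym{,}   \algeffseqoverindex{ \beta }{ \text{\unboldmath$\mathit{J}$} }   \vdash   \algeffseqoverindex{ \ottnt{D} }{ \text{\unboldmath$\mathit{I}$} } $ and the typing of $\ottnt{v}$) hold verbatim.

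For the third conclusion I would lift the type containment $\Gamma  \ottsym{,}   \algeffseqoverindex{ \beta_{{\mathrm{1}}} }{ \text{\unboldmath$\mathit{J_{{\mathrm{1}}}}$} }   \vdash    \text{\unboldmath$\forall$}  \,  \algeffseqoverindex{ \beta_{{\mathrm{2}}} }{ \text{\unboldmath$\mathit{J_{{\mathrm{2}}}}$} }   \ottsym{.} \, \ottnt{B}    [   \algeffseqoverindex{ \ottnt{D} }{ \text{\unboldmath$\mathit{I}$} }   \ottsym{/}   \algeffseqoverindex{ \alpha }{ \text{\unboldmath$\mathit{I}$} }   ]    \sqsubseteq  \ottnt{B_{{\mathrm{0}}}}$ to $\Gamma  \vdash    \text{\unboldmath$\forall$}  \,  \algeffseqoverindex{ \beta }{ \text{\unboldmath$\mathit{J}$} }   \ottsym{.} \, \ottnt{B}    [   \algeffseqoverindex{ \ottnt{D} }{ \text{\unboldmath$\mathit{I}$} }   \ottsym{/}   \algeffseqoverindex{ \alpha }{ \text{\unboldmath$\mathit{I}$} }   ]    \sqsubseteq   \text{\unboldmath$\forall$}  \,  \algeffseqoverindex{ \beta_{{\mathrm{1}}} }{ \text{\unboldmath$\mathit{J_{{\mathrm{1}}}}$} }   \ottsym{.} \, \ottnt{B_{{\mathrm{0}}}}$ by iterating \Srule{Poly}, and then rewrite the type of the hole variable in $\Gamma  \ottsym{,}  \mathit{y} \,  \mathord{:}  \,  \text{\unboldmath$\forall$}  \,  \algeffseqoverindex{ \beta_{{\mathrm{1}}} }{ \text{\unboldmath$\mathit{J_{{\mathrm{1}}}}$} }   \ottsym{.} \, \ottnt{B_{{\mathrm{0}}}}  \vdash   \ottnt{E}  [  \mathit{y}  ]   \ottsym{:}  \ottnt{C} \,  |  \, \epsilon$ by \reflem{eff-var-subtype}, yielding $\Gamma  \ottsym{,}  \mathit{y} \,  \mathord{:}  \,  \text{\unboldmath$\forall$}  \,  \algeffseqoverindex{ \beta }{ \text{\unboldmath$\mathit{J}$} }   \ottsym{.} \, \ottnt{B} \,  [   \algeffseqoverindex{ \ottnt{D} }{ \text{\unboldmath$\mathit{I}$} }   \ottsym{/}   \algeffseqoverindex{ \alpha }{ \text{\unboldmath$\mathit{I}$} }   ]   \vdash   \ottnt{E}  [  \mathit{y}  ]   \ottsym{:}  \ottnt{C} \,  |  \, \epsilon$ for every fresh $\mathit{y}$, exactly as in the effect-free case (the well-formedness premise of \reflem{eff-var-subtype} being routine). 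The only genuinely new step relative to \reflem{ectx-op-typing} is the final side condition: if $ \algeffseqoverindex{ \beta }{ \text{\unboldmath$\mathit{J}$} } $ is nonempty then $\mathit{SR} \, \ottsym{(}  \ottsym{\{}  \mathsf{op}  \ottsym{\}}  \ottsym{)}$. Here I would case split on the source of nonemptiness: if $ \algeffseqoverindex{ \beta_{{\mathrm{1}}} }{ \text{\unboldmath$\mathit{J_{{\mathrm{1}}}}$} } $ is nonempty, $\mathit{SR} \, \ottsym{(}  \ottsym{\{}  \mathsf{op}  \ottsym{\}}  \ottsym{)}$ is already delivered by \reflem{eff-ectx-typing}; if $ \algeffseqoverindex{ \beta_{{\mathrm{2}}} }{ \text{\unboldmath$\mathit{J_{{\mathrm{2}}}}$} } $ is nonempty, \reflem{eff-term-inv-op} delivers $\mathit{SR} \, \ottsym{(}  \epsilon'  \ottsym{)}$, and since $\mathsf{op} \,  \in  \, \epsilon'$ and $\mathit{SR}$ is a condition quantified over the operations of its argument effect (\refdef{eff-signature-restriction}), this downward closure gives $\mathit{SR} \, \ottsym{(}  \ottsym{\{}  \mathsf{op}  \ottsym{\}}  \ottsym{)}$. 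I expect the main obstacle to be bookkeeping rather than conceptual: keeping the two blocks of bound type variables $ \algeffseqoverindex{ \beta_{{\mathrm{1}}} }{ \text{\unboldmath$\mathit{J_{{\mathrm{1}}}}$} } $ and $ \algeffseqoverindex{ \beta_{{\mathrm{2}}} }{ \text{\unboldmath$\mathit{J_{{\mathrm{2}}}}$} } $ (and the freshness assumptions licensing weakening and permutation) straight when assembling $ \algeffseqoverindex{ \beta }{ \text{\unboldmath$\mathit{J}$} } $, and correctly attributing the signature-restriction obligation to whichever of the two lemmas discharges it.
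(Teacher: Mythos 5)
Your proposal matches the paper's proof essentially step for step: invert with \reflem{eff-ectx-typing}, then \reflem{eff-term-inv-op}, concatenate the two blocks of bound type variables, lift the containment by \Srule{Poly}, and retype the hole via \reflem{eff-var-subtype}; the handling of the $\mathit{SR}$ side condition (including deriving $\mathit{SR} \, \ottsym{(}  \ottsym{\{}  \mathsf{op}  \ottsym{\}}  \ottsym{)}$ from $\mathit{SR} \, \ottsym{(}  \epsilon'  \ottsym{)}$ and $\mathsf{op} \,  \in  \, \epsilon'$) is also exactly what the paper does. This is correct and needs no changes.
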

\begin{proof}
 By \reflem{eff-ectx-typing},
 \begin{itemize}
  \item $\Gamma  \ottsym{,}   \algeffseqoverindex{ \beta_{{\mathrm{1}}} }{ \text{\unboldmath$\mathit{J_{{\mathrm{1}}}}$} }   \vdash   \textup{\texttt{\#}\relax}  \mathsf{op}   \ottsym{(}   \ottnt{v}   \ottsym{)}   \ottsym{:}  \ottnt{C'} \,  |  \, \epsilon''$ and
  \item $\Gamma  \ottsym{,}  \mathit{y} \,  \mathord{:}  \,  \text{\unboldmath$\forall$}  \,  \algeffseqoverindex{ \beta_{{\mathrm{1}}} }{ \text{\unboldmath$\mathit{J_{{\mathrm{1}}}}$} }   \ottsym{.} \, \ottnt{C'}  \vdash   \ottnt{E}  [  \mathit{y}  ]   \ottsym{:}  \ottnt{C} \,  |  \, \epsilon$ for any $\mathit{y} \,  \not\in  \,  \mathit{dom}  (  \Gamma  ) $, and
  \item if $ \algeffseqoverindex{ \beta_{{\mathrm{1}}} }{ \text{\unboldmath$\mathit{J_{{\mathrm{1}}}}$} } $ is not the empty sequence,
        then $\mathit{SR} \, \ottsym{(}  \ottsym{\{}  \mathsf{op}  \ottsym{\}}  \ottsym{)}$ holds
 \end{itemize}
 for some $ \algeffseqoverindex{ \beta_{{\mathrm{1}}} }{ \text{\unboldmath$\mathit{J_{{\mathrm{1}}}}$} } $ and $\ottnt{C'}$.
 By \reflem{eff-term-inv-op},
 \begin{itemize}
  \item $\Gamma  \ottsym{,}   \algeffseqoverindex{ \beta_{{\mathrm{1}}} }{ \text{\unboldmath$\mathit{J_{{\mathrm{1}}}}$} }   \ottsym{,}   \algeffseqoverindex{ \beta_{{\mathrm{2}}} }{ \text{\unboldmath$\mathit{J_{{\mathrm{2}}}}$} }   \vdash   \algeffseqoverindex{ \ottnt{D} }{ \text{\unboldmath$\mathit{I}$} } $,
  \item $\Gamma  \ottsym{,}   \algeffseqoverindex{ \beta_{{\mathrm{1}}} }{ \text{\unboldmath$\mathit{J_{{\mathrm{1}}}}$} }   \ottsym{,}   \algeffseqoverindex{ \beta_{{\mathrm{2}}} }{ \text{\unboldmath$\mathit{J_{{\mathrm{2}}}}$} }   \vdash  \ottnt{v}  \ottsym{:}   \ottnt{A}    [   \algeffseqoverindex{ \ottnt{D} }{ \text{\unboldmath$\mathit{I}$} }   \ottsym{/}   \algeffseqoverindex{ \alpha }{ \text{\unboldmath$\mathit{I}$} }   ]   \,  |  \, \epsilon'$,
  \item $\Gamma  \ottsym{,}   \algeffseqoverindex{ \beta_{{\mathrm{1}}} }{ \text{\unboldmath$\mathit{J_{{\mathrm{1}}}}$} }   \vdash    \text{\unboldmath$\forall$}  \,  \algeffseqoverindex{ \beta_{{\mathrm{2}}} }{ \text{\unboldmath$\mathit{J_{{\mathrm{2}}}}$} }   \ottsym{.} \, \ottnt{B}    [   \algeffseqoverindex{ \ottnt{D} }{ \text{\unboldmath$\mathit{I}$} }   \ottsym{/}   \algeffseqoverindex{ \alpha }{ \text{\unboldmath$\mathit{I}$} }   ]    \sqsubseteq  \ottnt{C'}$, and
  \item if $ \algeffseqoverindex{ \beta_{{\mathrm{2}}} }{ \text{\unboldmath$\mathit{J_{{\mathrm{2}}}}$} } $ is not the empty sequence,
        $\mathit{SR} \, \ottsym{(}  \ottsym{\{}  \mathsf{op}  \ottsym{\}}  \ottsym{)}$ holds
 \end{itemize}
 for some $ \algeffseqoverindex{ \beta_{{\mathrm{2}}} }{ \text{\unboldmath$\mathit{J_{{\mathrm{2}}}}$} } $, $ \algeffseqoverindex{ \ottnt{D} }{ \text{\unboldmath$\mathit{I}$} } $, and $\epsilon'$.

 We show the conclusion by letting $ \algeffseqoverindex{ \beta }{ \text{\unboldmath$\mathit{J}$} }  \,  =  \,  \algeffseqoverindex{ \beta_{{\mathrm{1}}} }{ \text{\unboldmath$\mathit{J_{{\mathrm{1}}}}$} }   \ottsym{,}   \algeffseqoverindex{ \beta_{{\mathrm{2}}} }{ \text{\unboldmath$\mathit{J_{{\mathrm{2}}}}$} } $.
 It suffices to show that, for any $\mathit{y} \,  \not\in  \,  \mathit{dom}  (  \Gamma  ) $,
 \[
 \Gamma  \ottsym{,}  \mathit{y} \,  \mathord{:}  \,  \text{\unboldmath$\forall$}  \,  \algeffseqoverindex{ \beta_{{\mathrm{1}}} }{ \text{\unboldmath$\mathit{J_{{\mathrm{1}}}}$} }   \ottsym{.} \,  \text{\unboldmath$\forall$}  \,  \algeffseqoverindex{ \beta_{{\mathrm{2}}} }{ \text{\unboldmath$\mathit{J_{{\mathrm{2}}}}$} }   \ottsym{.} \, \ottnt{B} \,  [   \algeffseqoverindex{ \ottnt{D} }{ \text{\unboldmath$\mathit{I}$} }   \ottsym{/}   \algeffseqoverindex{ \alpha }{ \text{\unboldmath$\mathit{I}$} }   ]   \vdash   \ottnt{E}  [  \mathit{y}  ]   \ottsym{:}  \ottnt{C} \,  |  \, \epsilon.
 \]
 Since $\Gamma  \ottsym{,}   \algeffseqoverindex{ \beta_{{\mathrm{1}}} }{ \text{\unboldmath$\mathit{J_{{\mathrm{1}}}}$} }   \vdash    \text{\unboldmath$\forall$}  \,  \algeffseqoverindex{ \beta_{{\mathrm{2}}} }{ \text{\unboldmath$\mathit{J_{{\mathrm{2}}}}$} }   \ottsym{.} \, \ottnt{B}    [   \algeffseqoverindex{ \ottnt{D} }{ \text{\unboldmath$\mathit{I}$} }   \ottsym{/}   \algeffseqoverindex{ \alpha }{ \text{\unboldmath$\mathit{I}$} }   ]    \sqsubseteq  \ottnt{C'}$,
 we have
 \[
 \Gamma  \vdash    \text{\unboldmath$\forall$}  \,  \algeffseqoverindex{ \beta_{{\mathrm{1}}} }{ \text{\unboldmath$\mathit{J_{{\mathrm{1}}}}$} }   \ottsym{.} \,  \text{\unboldmath$\forall$}  \,  \algeffseqoverindex{ \beta_{{\mathrm{2}}} }{ \text{\unboldmath$\mathit{J_{{\mathrm{2}}}}$} }   \ottsym{.} \, \ottnt{B}    [   \algeffseqoverindex{ \ottnt{D} }{ \text{\unboldmath$\mathit{I}$} }   \ottsym{/}   \algeffseqoverindex{ \alpha }{ \text{\unboldmath$\mathit{I}$} }   ]    \sqsubseteq   \text{\unboldmath$\forall$}  \,  \algeffseqoverindex{ \beta_{{\mathrm{1}}} }{ \text{\unboldmath$\mathit{J_{{\mathrm{1}}}}$} }   \ottsym{.} \, \ottnt{C'}
 \]
 by \Srule{Poly}.
 Since $\Gamma  \ottsym{,}  \mathit{y} \,  \mathord{:}  \,  \text{\unboldmath$\forall$}  \,  \algeffseqoverindex{ \beta_{{\mathrm{1}}} }{ \text{\unboldmath$\mathit{J_{{\mathrm{1}}}}$} }   \ottsym{.} \, \ottnt{C'}  \vdash   \ottnt{E}  [  \mathit{y}  ]   \ottsym{:}  \ottnt{C} \,  |  \, \epsilon$,
 we have
 \[
 \Gamma  \ottsym{,}  \mathit{y} \,  \mathord{:}  \,  \text{\unboldmath$\forall$}  \,  \algeffseqoverindex{ \beta_{{\mathrm{1}}} }{ \text{\unboldmath$\mathit{J_{{\mathrm{1}}}}$} }   \ottsym{.} \,  \text{\unboldmath$\forall$}  \,  \algeffseqoverindex{ \beta_{{\mathrm{2}}} }{ \text{\unboldmath$\mathit{J_{{\mathrm{2}}}}$} }   \ottsym{.} \, \ottnt{B} \,  [   \algeffseqoverindex{ \ottnt{D} }{ \text{\unboldmath$\mathit{I}$} }   \ottsym{/}   \algeffseqoverindex{ \alpha }{ \text{\unboldmath$\mathit{I}$} }   ]   \vdash   \ottnt{E}  [  \mathit{y}  ]   \ottsym{:}  \ottnt{C} \,  |  \, \epsilon.
 \]
 by \reflem{eff-var-subtype}.
\end{proof}

\begin{lemma}{eff-val-any-eff}
 If $\Gamma  \vdash  \ottnt{v}  \ottsym{:}  \ottnt{A} \,  |  \, \epsilon$, then
 $\Gamma  \vdash  \ottnt{v}  \ottsym{:}  \ottnt{A} \,  |  \, \epsilon'$ for any $\epsilon'$.
\end{lemma}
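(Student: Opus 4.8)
The plan is to reduce the claim to the single effect annotation $\emptyset$ and then weaken. Concretely, I would first establish the auxiliary statement that if $\Gamma \vdash \ottnt{v} : \ottnt{A} \mid \epsilon$ then $\Gamma \vdash \ottnt{v} : \ottnt{A} \mid \emptyset$, and then conclude the lemma by a single application of \Te{Weak}, using the fact that $\emptyset \subseteq \epsilon'$ for every $\epsilon'$. This routing through $\emptyset$ is the crux: it is what lets the final weakening step hand back an arbitrary $\epsilon'$ for free.

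The auxiliary statement I would prove by induction on the typing derivation of $\Gamma \vdash \ottnt{v} : \ottnt{A} \mid \epsilon$, with a case analysis on the last rule. Because $\ottnt{v}$ is a value, the syntax-directed rules that can occur are \Te{Const}, \Te{Abs}, \Te{Pair}, \Te{InL}, \Te{InR}, \Te{Nil}, and \Te{Cons}; in addition the non-syntax-directed rules \Te{Gen}, \Te{Inst}, and \Te{Weak} may be the last rule. For \Te{Const} and \Te{Nil}, the only premise is a well-formedness judgment that does not mention the effect, so the rule may be reapplied verbatim with the annotation $\emptyset$. For \Te{Abs}, the outer effect is unconstrained by the premise $\Gamma, \mathit{x} \mathord{:} \ottnt{A} \vdash \ottnt{M} : \ottnt{B} \mid \epsilon'$ (the latent effect $\epsilon'$ sits on the function arrow and is untouched), so I would simply reinstantiate the rule with outer effect $\emptyset$. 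For the value constructors \Te{Pair}, \Te{InL}, \Te{InR}, and \Te{Cons}, each immediate subterm is again a value carried with the same effect $\epsilon$; the induction hypothesis rewrites each to effect $\emptyset$, after which the same rule reassembles the value at $\emptyset$. For \Te{Inst} and \Te{Weak}, the conclusion effect equals that of the (value-)premise, so applying the induction hypothesis to the premise and then reapplying the same rule yields the result at $\emptyset$.

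The hard part, and really the only point of interest, will be the \Te{Gen} case, and it is exactly where the choice of target annotation $\emptyset$ rather than an arbitrary $\epsilon'$ matters. Here the derivation ends with $\Gamma, \alpha \vdash \ottnt{v} : \ottnt{A} \mid \epsilon$ together with the side condition $\mathit{SR} \, \ottsym{(}  \epsilon  \ottsym{)}$, concluding $\Gamma \vdash \ottnt{v} : \text{\unboldmath$\forall$}\, \alpha . \ottnt{A} \mid \epsilon$. The induction hypothesis gives $\Gamma, \alpha \vdash \ottnt{v} : \ottnt{A} \mid \emptyset$, and since $\mathit{SR} \, \ottsym{(}  \emptyset  \ottsym{)}$ holds trivially, \Te{Gen} applies to give $\Gamma \vdash \ottnt{v} : \text{\unboldmath$\forall$}\, \alpha . \ottnt{A} \mid \emptyset$. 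This is precisely why a naive induction proving the statement for an arbitrary $\epsilon'$ directly would fail: reapplying \Te{Gen} would require $\mathit{SR} \, \ottsym{(}  \epsilon'  \ottsym{)}$, which need not hold, whereas going through $\emptyset$ sidesteps the obstacle. With the auxiliary statement in hand, the lemma follows immediately: given $\Gamma \vdash \ottnt{v} : \ottnt{A} \mid \emptyset$ and $\emptyset \subseteq \epsilon'$, \Te{Weak} yields $\Gamma \vdash \ottnt{v} : \ottnt{A} \mid \epsilon'$ for any $\epsilon'$.
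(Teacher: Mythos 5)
Your proof is correct and takes the same route as the paper, whose entire argument is ``straightforward by induction on the typing derivation.'' Your factoring through $\emptyset$ makes explicit the one genuinely non-trivial point that the paper glosses over: re-applying \Te{Gen} at an arbitrary $\epsilon'$ would require $\mathit{SR} \, \ottsym{(}  \epsilon'  \ottsym{)}$, so one must first regeneralize at an effect satisfying the signature restriction (such as $\emptyset$) and only then invoke \Te{Weak}.
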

\begin{proof}
 Straightforward by induction on the typing derivation.
\end{proof}

\begin{lemma}{eff-subtyping-forall-move}
 Suppose that $\alpha$ does not appear free in $\ottnt{A}$.
 \begin{enumerate}
  \item \label{lem:eff-subtyping-forall-move:neg}
        Suppose that
        (1) the occurrences of $\beta$ in $\ottnt{A}$ are only negative or strictly
        positive and
        (2) for any function type $ \ottnt{C}   \rightarrow ^{ \epsilon }  \ottnt{D} $ occurring at a strictly
        positive position of $\ottnt{A}$, if $\beta \,  \in  \,  \mathit{ftv}  (  \ottnt{D}  ) $, then $\mathit{SR} \, \ottsym{(}  \epsilon  \ottsym{)}$.
        Then $\Gamma  \vdash    \text{\unboldmath$\forall$}  \, \alpha  \ottsym{.} \, \ottnt{A}    [  \ottnt{B}  \ottsym{/}  \beta  ]    \sqsubseteq   \ottnt{A}    [   \text{\unboldmath$\forall$}  \, \alpha  \ottsym{.} \, \ottnt{B}  \ottsym{/}  \beta  ]  $.
  \item \label{lem:eff-subtyping-forall-move:pos}
        If the occurrences of $\beta$ in $\ottnt{A}$ are only positive,
        then $\Gamma  \vdash   \ottnt{A}    [   \text{\unboldmath$\forall$}  \, \alpha  \ottsym{.} \, \ottnt{B}  \ottsym{/}  \beta  ]    \sqsubseteq    \text{\unboldmath$\forall$}  \, \alpha  \ottsym{.} \, \ottnt{A}    [  \ottnt{B}  \ottsym{/}  \beta  ]  $.
 \end{enumerate}
\end{lemma}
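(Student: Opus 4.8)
The plan is to prove both parts simultaneously by structural induction on $\ottnt{A}$, following the proof of \reflem{subtyping-forall-move} almost verbatim and tracking the one genuinely new ingredient: in the effect calculus the distributive rule \Srule{DFun} is replaced by \Srule{DFunEff}, whose extra premise $\mathit{SR}(\epsilon)$ must be discharged. Throughout, I would maintain the invariant that whenever case (\ref{lem:eff-subtyping-forall-move:neg}) is invoked on a subterm, the hypothesis analogous to condition~(2)—that every function type occurring strictly positively in that subterm with $\beta$ free in its codomain carries an $\mathit{SR}$ effect—also holds for the subterm. This transfers to every strictly positive subcomponent because the strictly positive positions of the codomain of a function type, and of the components of a product, sum, list, or $\forall$-type, are themselves strictly positive positions of the enclosing type (by the generalization of Definitions~\ref{def:polarity} and \ref{def:ext-polarity} to occurrences of types).

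For the routine part, case (\ref{lem:eff-subtyping-forall-move:pos}) carries no $\mathit{SR}$ side condition and is handled exactly as in the monomorphic development, via the effect-aware analogue of \reflem{subtyping-forall-remove} (whose proof is unchanged, since it uses only \Srule{Inst} together with the congruence rules \Srule{Poly}, \Srule{FunEff}, \Srule{Prod}, \Srule{Sum}, \Srule{List}, none of which constrains effects) followed by \Srule{Poly}, \Srule{Gen}, and \Srule{Trans}. The base cases ($\ottnt{A} = \beta$, $\ottnt{A} = \gamma \neq \beta$, $\ottnt{A} = \iota$) are discharged by \Srule{Refl}, \Srule{Inst}, and \Srule{Gen} as before; the $\forall$-type case is the IH plus \Srule{Poly} and a permutation of top-level quantifiers; and the product, sum, and list cases apply the IH for case (\ref{lem:eff-subtyping-forall-move:neg}) to each component and combine with the congruence rule and the corresponding distributive rule (\Srule{DProd}, \Srule{DSum}, \Srule{DList}), each of which is effect-free and hence needs no $\mathit{SR}$ premise.

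The crux is the function-type case $\ottnt{A} = \ottnt{C} \rightarrow^{\epsilon} \ottnt{D}$ of case (\ref{lem:eff-subtyping-forall-move:neg}). As in the monomorphic proof, the polarity hypothesis forces the occurrences of $\beta$ in $\ottnt{C}$ to be only positive and those in $\ottnt{D}$ to be only negative or strictly positive, so the IH gives $\Gamma \vdash \ottnt{C}[\forall \alpha. \ottnt{B}/\beta] \sqsubseteq \forall \alpha. \ottnt{C}[\ottnt{B}/\beta]$ (case pos on $\ottnt{C}$) and $\Gamma \vdash \forall \alpha. \ottnt{D}[\ottnt{B}/\beta] \sqsubseteq \ottnt{D}[\forall \alpha. \ottnt{B}/\beta]$ (case neg on $\ottnt{D}$). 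I would then split on whether $\beta \in \mathit{ftv}(\ottnt{D})$. If $\beta \in \mathit{ftv}(\ottnt{D})$, I reuse the monomorphic chain, whose only non-congruence step distributes the outer $\forall \alpha$ into the codomain by \Srule{DFunEff}; its premise $\mathit{SR}(\epsilon)$ is supplied precisely by condition~(2), since $\ottnt{C} \rightarrow^{\epsilon} \ottnt{D}$ sits at the root—a strictly positive position—of $\ottnt{A}$ and $\beta$ is free in $\ottnt{D}$. If $\beta \notin \mathit{ftv}(\ottnt{D})$, then $\ottnt{D}[\ottnt{B}/\beta] = \ottnt{D} = \ottnt{D}[\forall \alpha. \ottnt{B}/\beta]$ and $\alpha \notin \mathit{ftv}(\ottnt{D})$, so I avoid \Srule{DFunEff} entirely: choosing any closed well-formed $\tau$, \Srule{Inst} gives $\forall \alpha.(\ottnt{C}[\ottnt{B}/\beta] \rightarrow^{\epsilon} \ottnt{D}) \sqsubseteq \ottnt{C}[\ottnt{B}/\beta][\tau/\alpha] \rightarrow^{\epsilon} \ottnt{D}$, and \Srule{FunEff} (using $\forall \alpha. \ottnt{C}[\ottnt{B}/\beta] \sqsubseteq \ottnt{C}[\ottnt{B}/\beta][\tau/\alpha]$ by \Srule{Inst} and $\ottnt{D} \sqsubseteq \ottnt{D}$ by \Srule{Refl}) followed by \Srule{Trans} yields $\sqsubseteq (\forall \alpha. \ottnt{C}[\ottnt{B}/\beta]) \rightarrow^{\epsilon} \ottnt{D}$; one more \Srule{FunEff} with the contravariant IH on $\ottnt{C}$ finishes. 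This is exactly the observation recorded after \refdef{eff-signature-restriction} that the distributive law, and hence $\mathit{SR}$, is unnecessary when the bound variable does not reach the codomain.

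I expect the main obstacle to be bookkeeping rather than any deep difficulty: I must verify that condition~(2) genuinely propagates to every strictly positive subterm on which case (\ref{lem:eff-subtyping-forall-move:neg}) recurses (so that \Srule{DFunEff} is always applicable where it is used), and conversely that the contravariant domain $\ottnt{C}$ is only ever attacked through case (\ref{lem:eff-subtyping-forall-move:pos}), which imposes no effect requirement—matching the fact that function types inside $\ottnt{C}$ lie at non-strictly-positive positions and are unconstrained by condition~(2). Getting the $\beta \in \mathit{ftv}(\ottnt{D})$ versus $\beta \notin \mathit{ftv}(\ottnt{D})$ dichotomy aligned with when \Srule{DFunEff} is genuinely needed is where the effect-aware statement departs from its monomorphic ancestor, and is the step I would write out most carefully.
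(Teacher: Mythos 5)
Your proposal is correct and takes essentially the same route as the paper's proof: structural induction on $\ottnt{A}$ proving both parts simultaneously, with part~(2) discharged via the effect-aware analogue of \reflem{subtyping-forall-remove} plus \Srule{Poly}, \Srule{Gen}, \Srule{Trans}, and the function-type case of part~(1) split on whether $\beta \in \mathit{ftv}(\ottnt{D})$, applying \Srule{DFunEff} with $\mathit{SR}(\epsilon)$ supplied by condition~(2) in the former subcase and bypassing it in the latter. The only difference is cosmetic: for $\beta \notin \mathit{ftv}(\ottnt{D})$ you instantiate the outer $\forall\alpha$ directly with an arbitrary closed type, whereas the paper routes through \Srule{Poly}, weakening, and a final \Srule{Inst}; both derivations are sound.
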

\begin{proof}
 By induction on $\ottnt{A}$.
 The second case is proven by \reflem{subtyping-forall-remove}, \Srule{Poly},
 \Srule{Gen}, and \Srule{Trans}.

 Let us consider the second case.  We consider the case that $\ottnt{A} \,  =  \,  \ottnt{C}   \rightarrow ^{ \epsilon }  \ottnt{D} $
 for some $\ottnt{C}$, $\ottnt{D}$, and $\epsilon$; the other cases are shown similarly
 to \reflem{subtyping-forall-move}.
 By the IH on $\ottnt{C}$, $\Gamma  \vdash   \ottnt{C}    [   \text{\unboldmath$\forall$}  \, \alpha  \ottsym{.} \, \ottnt{B}  \ottsym{/}  \beta  ]    \sqsubseteq    \text{\unboldmath$\forall$}  \, \alpha  \ottsym{.} \, \ottnt{C}    [  \ottnt{B}  \ottsym{/}  \beta  ]  $.

 Now, we show that
 \begin{equation}
  \Gamma  \vdash     \text{\unboldmath$\forall$}  \, \alpha  \ottsym{.} \, \ottsym{(}    \text{\unboldmath$\forall$}  \, \alpha  \ottsym{.} \, \ottnt{C}    [  \ottnt{B}  \ottsym{/}  \beta  ]    \ottsym{)}   \rightarrow ^{ \epsilon }  \ottnt{D}     [  \ottnt{B}  \ottsym{/}  \beta  ]    \sqsubseteq     \ottnt{C}    [   \text{\unboldmath$\forall$}  \, \alpha  \ottsym{.} \, \ottnt{B}  \ottsym{/}  \beta  ]     \rightarrow ^{ \epsilon }  \ottnt{D}     [   \text{\unboldmath$\forall$}  \, \alpha  \ottsym{.} \, \ottnt{B}  \ottsym{/}  \beta  ]  .
   \label{eqn:eff-subtyping-forall-move:fun:one}
 \end{equation}
 If $\beta \,  \in  \,  \mathit{ftv}  (  \ottnt{D}  ) $, then $\mathit{SR} \, \ottsym{(}  \epsilon  \ottsym{)}$ by the assumption.
 By the IH on $\ottnt{D}$, $\Gamma  \vdash    \text{\unboldmath$\forall$}  \, \alpha  \ottsym{.} \, \ottnt{D}    [  \ottnt{B}  \ottsym{/}  \beta  ]    \sqsubseteq   \ottnt{D}    [   \text{\unboldmath$\forall$}  \, \alpha  \ottsym{.} \, \ottnt{B}  \ottsym{/}  \beta  ]  $.
 By \Srule{FunEff},
 \[
  \Gamma  \vdash    \ottsym{(}    \text{\unboldmath$\forall$}  \, \alpha  \ottsym{.} \, \ottnt{C}    [  \ottnt{B}  \ottsym{/}  \beta  ]    \ottsym{)}   \rightarrow ^{ \epsilon }   \text{\unboldmath$\forall$}  \, \alpha  \ottsym{.} \, \ottnt{D}     [  \ottnt{B}  \ottsym{/}  \beta  ]    \sqsubseteq     \ottnt{C}    [   \text{\unboldmath$\forall$}  \, \alpha  \ottsym{.} \, \ottnt{B}  \ottsym{/}  \beta  ]     \rightarrow ^{ \epsilon }  \ottnt{D}     [   \text{\unboldmath$\forall$}  \, \alpha  \ottsym{.} \, \ottnt{B}  \ottsym{/}  \beta  ]  .
 \]
 Since $\mathit{SR} \, \ottsym{(}  \epsilon  \ottsym{)}$,
 we have (\ref{eqn:eff-subtyping-forall-move:fun:one})
 by \Srule{DFunEff} and \Srule{Trans}.
 Otherwise, if $\beta \,  \not\in  \,  \mathit{ftv}  (  \ottnt{D}  ) $, then
 $\Gamma  \ottsym{,}  \alpha  \vdash   \ottnt{D}    [  \ottnt{B}  \ottsym{/}  \beta  ]    \sqsubseteq   \ottnt{D}    [   \text{\unboldmath$\forall$}  \, \alpha  \ottsym{.} \, \ottnt{B}  \ottsym{/}  \beta  ]  $ by \Srule{Refl}
 because $ \ottnt{D}    [  \ottnt{B}  \ottsym{/}  \beta  ]   =  \ottnt{D}    [   \text{\unboldmath$\forall$}  \, \alpha  \ottsym{.} \, \ottnt{B}  \ottsym{/}  \beta  ]   = \ottnt{D}$.
 Thus,
 \[
  \Gamma  \vdash     \text{\unboldmath$\forall$}  \, \alpha  \ottsym{.} \, \ottsym{(}    \text{\unboldmath$\forall$}  \, \alpha  \ottsym{.} \, \ottnt{C}    [  \ottnt{B}  \ottsym{/}  \beta  ]    \ottsym{)}   \rightarrow ^{ \epsilon }  \ottnt{D}     [  \ottnt{B}  \ottsym{/}  \beta  ]    \sqsubseteq      \text{\unboldmath$\forall$}  \, \alpha  \ottsym{.} \, \ottnt{C}    [   \text{\unboldmath$\forall$}  \, \alpha  \ottsym{.} \, \ottnt{B}  \ottsym{/}  \beta  ]     \rightarrow ^{ \epsilon }  \ottnt{D}     [   \text{\unboldmath$\forall$}  \, \alpha  \ottsym{.} \, \ottnt{B}  \ottsym{/}  \beta  ]  
 \]
 by \Srule{Poly} and \reflem{eff-weakening} (\ref{lem:eff-weakening:sub}).
 Since $\alpha$ does not occur in $\ottnt{A} =  \ottnt{C}   \rightarrow ^{ \epsilon }  \ottnt{D} $,
 we can have (\ref{eqn:eff-subtyping-forall-move:fun:one})
 by eliminating the outermost $\forall$ on the RHS type with
 \Srule{Inst}.

 By \Srule{Inst},
 \begin{equation}
  \Gamma  \ottsym{,}  \alpha  \vdash    \text{\unboldmath$\forall$}  \, \alpha  \ottsym{.} \, \ottnt{C}    [  \ottnt{B}  \ottsym{/}  \beta  ]    \sqsubseteq   \ottnt{C}    [  \ottnt{B}  \ottsym{/}  \beta  ]  .
   \label{eqn:eff-subtyping-forall-move:fun:two}
 \end{equation}
 By \Srule{FunEff} and \Srule{Poly} with (\ref{eqn:eff-subtyping-forall-move:fun:two}),
 \[
  \Gamma  \vdash      \text{\unboldmath$\forall$}  \, \alpha  \ottsym{.} \, \ottnt{C}    [  \ottnt{B}  \ottsym{/}  \beta  ]     \rightarrow ^{ \epsilon }  \ottnt{D}     [  \ottnt{B}  \ottsym{/}  \beta  ]    \sqsubseteq     \text{\unboldmath$\forall$}  \, \alpha  \ottsym{.} \, \ottsym{(}    \text{\unboldmath$\forall$}  \, \alpha  \ottsym{.} \, \ottnt{C}    [  \ottnt{B}  \ottsym{/}  \beta  ]    \ottsym{)}   \rightarrow ^{ \epsilon }  \ottnt{D}     [  \ottnt{B}  \ottsym{/}  \beta  ]  .
 \]
 Thus, by \Srule{Trans} with (\ref{eqn:eff-subtyping-forall-move:fun:one}),
 \[
  \Gamma  \vdash      \text{\unboldmath$\forall$}  \, \alpha  \ottsym{.} \, \ottnt{C}    [  \ottnt{B}  \ottsym{/}  \beta  ]     \rightarrow ^{ \epsilon }  \ottnt{D}     [  \ottnt{B}  \ottsym{/}  \beta  ]    \sqsubseteq     \ottnt{C}    [   \text{\unboldmath$\forall$}  \, \alpha  \ottsym{.} \, \ottnt{B}  \ottsym{/}  \beta  ]     \rightarrow ^{ \epsilon }  \ottnt{D}     [   \text{\unboldmath$\forall$}  \, \alpha  \ottsym{.} \, \ottnt{B}  \ottsym{/}  \beta  ]  .
 \]
\end{proof}

\begin{lemmap}{Subject reduction}{eff-subject-red}
 \begin{enumerate}
  \item If $\Delta  \vdash  \ottnt{M_{{\mathrm{1}}}}  \ottsym{:}  \ottnt{A} \,  |  \, \epsilon$ and $\ottnt{M_{{\mathrm{1}}}}  \rightsquigarrow  \ottnt{M_{{\mathrm{2}}}}$,
        then $\Delta  \vdash  \ottnt{M_{{\mathrm{2}}}}  \ottsym{:}  \ottnt{A} \,  |  \, \epsilon$.
  \item If $\Delta  \vdash  \ottnt{M_{{\mathrm{1}}}}  \ottsym{:}  \ottnt{A} \,  |  \, \epsilon$ and $\ottnt{M_{{\mathrm{1}}}}  \longrightarrow  \ottnt{M_{{\mathrm{2}}}}$,
        then $\Delta  \vdash  \ottnt{M_{{\mathrm{2}}}}  \ottsym{:}  \ottnt{A} \,  |  \, \epsilon$.
 \end{enumerate}
\end{lemmap}
\begin{proof}
 \begin{enumerate}
  \item By induction on the typing derivation.  Most of the cases are similar to
        \reflem{subject-red}.  We here focus on the cases that need a treatment
        specific to the type-and-effect system.
        \begin{caseanalysis}
         \case \Te{App}/\R{Beta}:
          We are given
          \begin{itemize}
           \item $\ottnt{M_{{\mathrm{1}}}} \,  =  \, \ottsym{(}   \lambda\!  \, \mathit{x}  \ottsym{.}  \ottnt{M}  \ottsym{)} \, \ottnt{v}$,
           \item $\ottnt{M_{{\mathrm{2}}}} \,  =  \,  \ottnt{M}    [  \ottnt{v}  /  \mathit{x}  ]  $,
           \item $\Delta  \vdash  \ottsym{(}   \lambda\!  \, \mathit{x}  \ottsym{.}  \ottnt{M}  \ottsym{)} \, \ottnt{v}  \ottsym{:}  \ottnt{A} \,  |  \, \epsilon$,
           \item $\Delta  \vdash   \lambda\!  \, \mathit{x}  \ottsym{.}  \ottnt{M}  \ottsym{:}   \ottnt{B}   \rightarrow ^{ \epsilon_{{\mathrm{0}}} }  \ottnt{A}  \,  |  \, \epsilon$,
           \item $\Delta  \vdash  \ottnt{v}  \ottsym{:}  \ottnt{B} \,  |  \, \epsilon$, and
           \item $\epsilon_{{\mathrm{0}}} \,  \subseteq  \, \epsilon$
          \end{itemize}
          for some $\mathit{x}$, $\ottnt{M}$, $\ottnt{v}$, $\ottnt{B}$, and $\epsilon_{{\mathrm{0}}}$.
          By \reflem{eff-val-inv-abs}
          $\Delta  \ottsym{,}   \algeffseqoverindex{ \alpha }{ \text{\unboldmath$\mathit{I}$} }   \ottsym{,}  \mathit{x} \,  \mathord{:}  \, \ottnt{B'}  \vdash  \ottnt{M}  \ottsym{:}  \ottnt{A'} \,  |  \, \epsilon'$ and
          $\Delta  \vdash    \text{\unboldmath$\forall$}  \,  \algeffseqoverindex{ \alpha }{ \text{\unboldmath$\mathit{I}$} }   \ottsym{.} \, \ottnt{B'}   \rightarrow ^{ \epsilon' }  \ottnt{A'}   \sqsubseteq   \ottnt{B}   \rightarrow ^{ \epsilon_{{\mathrm{0}}} }  \ottnt{A} $
          for some $ \algeffseqoverindex{ \alpha }{ \text{\unboldmath$\mathit{I}$} } $, $\ottnt{A'}$, $\ottnt{B'}$, and $\epsilon'$.
          By \reflem{eff-subtyping-inv-fun},
          we find $\epsilon' \,  =  \, \epsilon_{{\mathrm{0}}}$, and
          there exist $ \algeffseqoverindex{ \alpha_{{\mathrm{1}}} }{ \text{\unboldmath$\mathit{I_{{\mathrm{1}}}}$} } $, $ \algeffseqoverindex{ \alpha_{{\mathrm{2}}} }{ \text{\unboldmath$\mathit{I_{{\mathrm{2}}}}$} } $, $ \algeffseqoverindex{ \beta }{ \text{\unboldmath$\mathit{J}$} } $, and $ \algeffseqoverindex{ \ottnt{C} }{ \text{\unboldmath$\mathit{I_{{\mathrm{1}}}}$} } $
          such that
          \begin{itemize}
           \item $\ottsym{\{}   \algeffseqoverindex{ \alpha }{ \text{\unboldmath$\mathit{I}$} }   \ottsym{\}} \,  =  \, \ottsym{\{}   \algeffseqoverindex{ \alpha_{{\mathrm{1}}} }{ \text{\unboldmath$\mathit{I_{{\mathrm{1}}}}$} }   \ottsym{\}} \,  \mathbin{\uplus}  \, \ottsym{\{}   \algeffseqoverindex{ \alpha_{{\mathrm{2}}} }{ \text{\unboldmath$\mathit{I_{{\mathrm{2}}}}$} }   \ottsym{\}}$,
           \item $\Delta  \ottsym{,}   \algeffseqoverindex{ \beta }{ \text{\unboldmath$\mathit{J}$} }   \vdash   \algeffseqoverindex{ \ottnt{C} }{ \text{\unboldmath$\mathit{I_{{\mathrm{1}}}}$} } $,
           \item $\Delta  \vdash  \ottnt{B}  \sqsubseteq    \text{\unboldmath$\forall$}  \,  \algeffseqoverindex{ \beta }{ \text{\unboldmath$\mathit{J}$} }   \ottsym{.} \, \ottnt{B'}    [   \algeffseqoverindex{ \ottnt{C} }{ \text{\unboldmath$\mathit{I_{{\mathrm{1}}}}$} }   \ottsym{/}   \algeffseqoverindex{ \alpha_{{\mathrm{1}}} }{ \text{\unboldmath$\mathit{I_{{\mathrm{1}}}}$} }   ]  $,
           \item $\Delta  \vdash    \text{\unboldmath$\forall$}  \,  \algeffseqoverindex{ \alpha_{{\mathrm{2}}} }{ \text{\unboldmath$\mathit{I_{{\mathrm{2}}}}$} }   \ottsym{.} \,  \text{\unboldmath$\forall$}  \,  \algeffseqoverindex{ \beta }{ \text{\unboldmath$\mathit{J}$} }   \ottsym{.} \, \ottnt{A'}    [   \algeffseqoverindex{ \ottnt{C} }{ \text{\unboldmath$\mathit{I_{{\mathrm{1}}}}$} }   \ottsym{/}   \algeffseqoverindex{ \alpha_{{\mathrm{1}}} }{ \text{\unboldmath$\mathit{I_{{\mathrm{1}}}}$} }   ]    \sqsubseteq  \ottnt{A}$, and
           \item type variables in $ \algeffseqoverindex{ \beta }{ \text{\unboldmath$\mathit{J}$} } $ do not appear free in
                 $\ottnt{A'}$ and $\ottnt{B'}$, and
           \item If $ \algeffseqoverindex{ \alpha_{{\mathrm{2}}} }{ \text{\unboldmath$\mathit{I_{{\mathrm{2}}}}$} } $ or $ \algeffseqoverindex{ \beta }{ \text{\unboldmath$\mathit{J}$} } $ is not the empty sequence,
                 $\mathit{SR} \, \ottsym{(}  \epsilon_{{\mathrm{0}}}  \ottsym{)}$.
          \end{itemize}
         By \reflem{eff-weakening},
         $\Delta  \ottsym{,}   \algeffseqoverindex{ \beta }{ \text{\unboldmath$\mathit{J}$} }   \ottsym{,}   \algeffseqoverindex{ \alpha }{ \text{\unboldmath$\mathit{I}$} }   \ottsym{,}  \mathit{x} \,  \mathord{:}  \, \ottnt{B'}  \vdash  \ottnt{M}  \ottsym{:}  \ottnt{A'} \,  |  \, \epsilon'$ and
         $\Delta  \ottsym{,}   \algeffseqoverindex{ \beta }{ \text{\unboldmath$\mathit{J}$} }   \ottsym{,}   \algeffseqoverindex{ \alpha_{{\mathrm{2}}} }{ \text{\unboldmath$\mathit{I_{{\mathrm{2}}}}$} }   \vdash   \algeffseqoverindex{ \ottnt{C} }{ \text{\unboldmath$\mathit{I_{{\mathrm{1}}}}$} } $.
         Thus, by \reflem{eff-ty-subst} (\ref{lem:eff-ty-subst:term}),
         \begin{equation}
          \Delta  \ottsym{,}   \algeffseqoverindex{ \beta }{ \text{\unboldmath$\mathit{J}$} }   \ottsym{,}   \algeffseqoverindex{ \alpha_{{\mathrm{2}}} }{ \text{\unboldmath$\mathit{I_{{\mathrm{2}}}}$} }   \ottsym{,}  \mathit{x} \,  \mathord{:}  \, \ottnt{B'} \,  [   \algeffseqoverindex{ \ottnt{C} }{ \text{\unboldmath$\mathit{I_{{\mathrm{1}}}}$} }   \ottsym{/}   \algeffseqoverindex{ \alpha }{ \text{\unboldmath$\mathit{I_{{\mathrm{1}}}}$} }   ]   \vdash  \ottnt{M}  \ottsym{:}   \ottnt{A'}    [   \algeffseqoverindex{ \ottnt{C} }{ \text{\unboldmath$\mathit{I_{{\mathrm{1}}}}$} }   \ottsym{/}   \algeffseqoverindex{ \alpha_{{\mathrm{1}}} }{ \text{\unboldmath$\mathit{I_{{\mathrm{1}}}}$} }   ]   \,  |  \, \epsilon'
           \label{eqn:eff-subject-red:app:beta:body}
         \end{equation}

         Since $\Delta  \vdash  \ottnt{v}  \ottsym{:}  \ottnt{B} \,  |  \, \epsilon$ and $\Delta  \vdash  \ottnt{B}  \sqsubseteq    \text{\unboldmath$\forall$}  \,  \algeffseqoverindex{ \beta }{ \text{\unboldmath$\mathit{J}$} }   \ottsym{.} \, \ottnt{B'}    [   \algeffseqoverindex{ \ottnt{C} }{ \text{\unboldmath$\mathit{I_{{\mathrm{1}}}}$} }   \ottsym{/}   \algeffseqoverindex{ \alpha_{{\mathrm{1}}} }{ \text{\unboldmath$\mathit{I_{{\mathrm{1}}}}$} }   ]  $,
         we have
         \[
         \Delta  \vdash  \ottnt{v}  \ottsym{:}    \text{\unboldmath$\forall$}  \,  \algeffseqoverindex{ \beta }{ \text{\unboldmath$\mathit{J}$} }   \ottsym{.} \, \ottnt{B'}    [   \algeffseqoverindex{ \ottnt{C} }{ \text{\unboldmath$\mathit{I_{{\mathrm{1}}}}$} }   \ottsym{/}   \algeffseqoverindex{ \alpha_{{\mathrm{1}}} }{ \text{\unboldmath$\mathit{I_{{\mathrm{1}}}}$} }   ]   \,  |  \, \epsilon
         \]
         by \Te{Inst} (note that $\Delta  \vdash    \text{\unboldmath$\forall$}  \,  \algeffseqoverindex{ \beta }{ \text{\unboldmath$\mathit{J}$} }   \ottsym{.} \, \ottnt{B'}    [   \algeffseqoverindex{ \ottnt{C} }{ \text{\unboldmath$\mathit{I_{{\mathrm{1}}}}$} }   \ottsym{/}   \algeffseqoverindex{ \alpha_{{\mathrm{1}}} }{ \text{\unboldmath$\mathit{I_{{\mathrm{1}}}}$} }   ]  $ is
         shown easily with \reflem{type-wf}).
         By \reflem{eff-weakening} (\ref{lem:weakening:term}),
         \Srule{Inst}, and \Te{Inst},
         we have
         \[
         \Delta  \ottsym{,}   \algeffseqoverindex{ \beta }{ \text{\unboldmath$\mathit{J}$} }   \ottsym{,}   \algeffseqoverindex{ \alpha_{{\mathrm{2}}} }{ \text{\unboldmath$\mathit{I_{{\mathrm{2}}}}$} }   \vdash  \ottnt{v}  \ottsym{:}   \ottnt{B'}    [   \algeffseqoverindex{ \ottnt{C} }{ \text{\unboldmath$\mathit{I_{{\mathrm{1}}}}$} }   \ottsym{/}   \algeffseqoverindex{ \alpha }{ \text{\unboldmath$\mathit{I_{{\mathrm{1}}}}$} }   ]   \,  |  \, \epsilon.
         \]
         By Lemmas~\ref{lem:eff-val-any-eff} and \ref{lem:eff-term-subst} (\ref{lem:term-subst:term})
         with (\ref{eqn:eff-subject-red:app:beta:body}),
         \[
         \Delta  \ottsym{,}   \algeffseqoverindex{ \beta }{ \text{\unboldmath$\mathit{J}$} }   \ottsym{,}   \algeffseqoverindex{ \alpha_{{\mathrm{2}}} }{ \text{\unboldmath$\mathit{I_{{\mathrm{2}}}}$} }   \vdash   \ottnt{M}    [  \ottnt{v}  /  \mathit{x}  ]    \ottsym{:}   \ottnt{A'}    [   \algeffseqoverindex{ \ottnt{C} }{ \text{\unboldmath$\mathit{I_{{\mathrm{1}}}}$} }   \ottsym{/}   \algeffseqoverindex{ \alpha_{{\mathrm{1}}} }{ \text{\unboldmath$\mathit{I_{{\mathrm{1}}}}$} }   ]   \,  |  \, \epsilon'.
         \]
         By \Te{Gen} (with permutation of the bindings in the typing context),
         \[
         \Delta  \vdash   \ottnt{M}    [  \ottnt{v}  /  \mathit{x}  ]    \ottsym{:}    \text{\unboldmath$\forall$}  \,  \algeffseqoverindex{ \alpha_{{\mathrm{2}}} }{ \text{\unboldmath$\mathit{I_{{\mathrm{2}}}}$} }   \ottsym{.} \,  \text{\unboldmath$\forall$}  \,  \algeffseqoverindex{ \beta }{ \text{\unboldmath$\mathit{J}$} }   \ottsym{.} \, \ottnt{A'}    [   \algeffseqoverindex{ \ottnt{C} }{ \text{\unboldmath$\mathit{I_{{\mathrm{1}}}}$} }   \ottsym{/}   \algeffseqoverindex{ \alpha_{{\mathrm{1}}} }{ \text{\unboldmath$\mathit{I_{{\mathrm{1}}}}$} }   ]   \,  |  \, \epsilon'
         \]
         (note that If $ \algeffseqoverindex{ \alpha_{{\mathrm{2}}} }{ \text{\unboldmath$\mathit{I_{{\mathrm{2}}}}$} } $ or $ \algeffseqoverindex{ \beta }{ \text{\unboldmath$\mathit{J}$} } $ is not the empty sequence,
         $\mathit{SR} \, \ottsym{(}  \epsilon'  \ottsym{)}$).
         Since $\Delta  \vdash    \text{\unboldmath$\forall$}  \,  \algeffseqoverindex{ \alpha_{{\mathrm{2}}} }{ \text{\unboldmath$\mathit{I_{{\mathrm{2}}}}$} }   \ottsym{.} \,  \text{\unboldmath$\forall$}  \,  \algeffseqoverindex{ \beta }{ \text{\unboldmath$\mathit{J}$} }   \ottsym{.} \, \ottnt{A'}    [   \algeffseqoverindex{ \ottnt{C} }{ \text{\unboldmath$\mathit{I_{{\mathrm{1}}}}$} }   \ottsym{/}   \algeffseqoverindex{ \alpha_{{\mathrm{1}}} }{ \text{\unboldmath$\mathit{I_{{\mathrm{1}}}}$} }   ]    \sqsubseteq  \ottnt{A}$,
         we have $\Delta  \vdash   \ottnt{M}    [  \ottnt{v}  /  \mathit{x}  ]    \ottsym{:}  \ottnt{A} \,  |  \, \epsilon'$ by \Te{Inst}.
         Since $\epsilon' \,  \subseteq  \, \epsilon$,
         we have
         \[
          \Delta  \vdash   \ottnt{M}    [  \ottnt{v}  /  \mathit{x}  ]    \ottsym{:}  \ottnt{A} \,  |  \, \epsilon
         \]
         by \Te{Weak}.

         \case \Te{Gen}: By the IH and \Te{Gen}.
         \case \Te{Handle}/\R{Handle}:
          We are given
          \begin{itemize}
           \item $\ottnt{M_{{\mathrm{1}}}} \,  =  \, \mathsf{handle} \,  \ottnt{E}  [   \textup{\texttt{\#}\relax}  \mathsf{op}   \ottsym{(}   \ottnt{v}   \ottsym{)}   ]  \, \mathsf{with} \, \ottnt{H}$,
           \item $\mathsf{op} \,  \not\in  \, \ottnt{E}$,
           \item $\ottnt{H}  \ottsym{(}  \mathsf{op}  \ottsym{)} \,  =  \, \mathsf{op}  \ottsym{(}  \mathit{x}  \ottsym{,}  \mathit{k}  \ottsym{)}  \rightarrow  \ottnt{M}$,
           \item $\ottnt{M_{{\mathrm{2}}}} \,  =  \,   \ottnt{M}    [  \ottnt{v}  /  \mathit{x}  ]      [   \lambda\!  \, \mathit{y}  \ottsym{.}  \mathsf{handle} \,  \ottnt{E}  [  \mathit{y}  ]  \, \mathsf{with} \, \ottnt{H}  /  \mathit{k}  ]  $,
           \item $\Delta  \vdash  \mathsf{handle} \,  \ottnt{E}  [   \textup{\texttt{\#}\relax}  \mathsf{op}   \ottsym{(}   \ottnt{v}   \ottsym{)}   ]  \, \mathsf{with} \, \ottnt{H}  \ottsym{:}  \ottnt{A} \,  |  \, \epsilon$,
           \item $\Delta  \vdash   \ottnt{E}  [   \textup{\texttt{\#}\relax}  \mathsf{op}   \ottsym{(}   \ottnt{v}   \ottsym{)}   ]   \ottsym{:}  \ottnt{B} \,  |  \, \epsilon'$,
           \item $\Delta  \vdash  \ottnt{H}  \ottsym{:}  \ottnt{B} \,  |  \, \epsilon'  \Rightarrow  \ottnt{A} \,  |  \, \epsilon$
          \end{itemize}
          for some $\ottnt{E}$, $\mathsf{op}$, $\ottnt{v}$, $\ottnt{H}$, $\mathit{x}$, $\mathit{y}$,
          $\mathit{k}$, $\ottnt{M}$, $\ottnt{B}$, and $\epsilon'$.
          Suppose that $\mathit{ty} \, \ottsym{(}  \mathsf{op}  \ottsym{)} \,  =  \,   \text{\unboldmath$\forall$}  \,  \algeffseqover{ \alpha }   \ottsym{.} \,  \ottnt{C}  \hookrightarrow  \ottnt{D} $.
          By inversion of the derivation of
          $\Delta  \vdash  \ottnt{H}  \ottsym{:}  \ottnt{B} \,  |  \, \epsilon'  \Rightarrow  \ottnt{A} \,  |  \, \epsilon$,
          we have $\Delta  \ottsym{,}   \algeffseqover{ \alpha }   \ottsym{,}  \mathit{x} \,  \mathord{:}  \, \ottnt{C}  \ottsym{,}  \mathit{k} \,  \mathord{:}  \,  \ottnt{D}   \rightarrow ^{ \epsilon }  \ottnt{A}   \vdash  \ottnt{M}  \ottsym{:}  \ottnt{A} \,  |  \, \epsilon$.

          By \reflem{eff-ectx-op-typing},
          \begin{itemize}
           \item $\Delta  \ottsym{,}   \algeffseqoverindex{ \beta }{ \text{\unboldmath$\mathit{J}$} }   \vdash   \algeffseqover{ \ottnt{C_{{\mathrm{0}}}} } $,
           \item $\Delta  \ottsym{,}   \algeffseqoverindex{ \beta }{ \text{\unboldmath$\mathit{J}$} }   \vdash  \ottnt{v}  \ottsym{:}   \ottnt{C}    [   \algeffseqover{ \ottnt{C_{{\mathrm{0}}}} }   \ottsym{/}   \algeffseqover{ \alpha }   ]   \,  |  \, \epsilon_{{\mathrm{0}}}$,
           \item $\Gamma  \ottsym{,}  \mathit{y} \,  \mathord{:}  \,  \text{\unboldmath$\forall$}  \,  \algeffseqoverindex{ \beta }{ \text{\unboldmath$\mathit{J}$} }   \ottsym{.} \, \ottnt{D} \,  [   \algeffseqover{ \ottnt{C_{{\mathrm{0}}}} }   \ottsym{/}   \algeffseqover{ \alpha }   ]   \vdash   \ottnt{E}  [  \mathit{y}  ]   \ottsym{:}  \ottnt{B} \,  |  \, \epsilon'$, and
           \item if $ \algeffseqoverindex{ \beta }{ \text{\unboldmath$\mathit{J}$} } $ is not the empty sequence, $\mathit{SR} \, \ottsym{(}  \ottsym{\{}  \mathsf{op}  \ottsym{\}}  \ottsym{)}$
          \end{itemize}
          for some $ \algeffseqoverindex{ \beta }{ \text{\unboldmath$\mathit{J}$} } $, $ \algeffseqover{ \ottnt{C_{{\mathrm{0}}}} } $, and $\epsilon_{{\mathrm{0}}}$.
          Since $\Delta  \vdash   \text{\unboldmath$\forall$}  \,  \algeffseqoverindex{ \beta }{ \text{\unboldmath$\mathit{J}$} }   \ottsym{.} \,  \algeffseqover{ \ottnt{C_{{\mathrm{0}}}} } $,
          \begin{equation}
           \Delta  \ottsym{,}  \mathit{x} \,  \mathord{:}  \, \ottnt{C} \,  [   \text{\unboldmath$\forall$}  \,  \algeffseqoverindex{ \beta }{ \text{\unboldmath$\mathit{J}$} }   \ottsym{.} \,  \algeffseqover{ \ottnt{C_{{\mathrm{0}}}} }   \ottsym{/}   \algeffseqover{ \alpha }   ]   \ottsym{,}  \mathit{k} \,  \mathord{:}  \,   \ottnt{D}    [   \text{\unboldmath$\forall$}  \,  \algeffseqoverindex{ \beta }{ \text{\unboldmath$\mathit{J}$} }   \ottsym{.} \,  \algeffseqover{ \ottnt{C_{{\mathrm{0}}}} }   \ottsym{/}   \algeffseqover{ \alpha }   ]     \rightarrow ^{ \epsilon }  \ottnt{A}   \vdash  \ottnt{M}  \ottsym{:}  \ottnt{A} \,  |  \, \epsilon
            \label{eqn:eff-subject-red:handle:handle:one}
          \end{equation}
          by \reflem{eff-ty-subst} (\ref{lem:eff-ty-subst:term}) (note that
          type variables in $ \algeffseqover{ \alpha } $ do not appear free in $\ottnt{A}$).
          Since $\Delta  \ottsym{,}   \algeffseqoverindex{ \beta }{ \text{\unboldmath$\mathit{J}$} }   \vdash  \ottnt{v}  \ottsym{:}   \ottnt{C}    [   \algeffseqover{ \ottnt{C_{{\mathrm{0}}}} }   \ottsym{/}   \algeffseqover{ \alpha }   ]   \,  |  \, \epsilon_{{\mathrm{0}}}$, we have
          $\Delta  \vdash  \ottnt{v}  \ottsym{:}    \text{\unboldmath$\forall$}  \,  \algeffseqoverindex{ \beta }{ \text{\unboldmath$\mathit{J}$} }   \ottsym{.} \, \ottnt{C}    [   \algeffseqover{ \ottnt{C_{{\mathrm{0}}}} }   \ottsym{/}   \algeffseqover{ \alpha }   ]   \,  |  \, \epsilon_{{\mathrm{0}}}$
          by \reflem{eff-val-any-eff} and \Te{Gen}.

          We show that $\Delta  \vdash  \ottnt{v}  \ottsym{:}   \ottnt{C}    [   \algeffseqover{  \text{\unboldmath$\forall$}  \,  \algeffseqoverindex{ \beta }{ \text{\unboldmath$\mathit{J}$} }   \ottsym{.} \, \ottnt{C_{{\mathrm{0}}}} }   \ottsym{/}   \algeffseqover{ \alpha }   ]   \,  |  \, \epsilon_{{\mathrm{0}}}$.
          If $ \algeffseqoverindex{ \beta }{ \text{\unboldmath$\mathit{J}$} } $ is not empty, then $\mathit{SR} \, \ottsym{(}  \ottsym{\{}  \mathsf{op}  \ottsym{\}}  \ottsym{)}$.
          Thus, we have the derivation
          by \reflem{eff-subtyping-forall-move} (\ref{lem:eff-subtyping-forall-move:neg})
          and \Te{Inst}
          (note that we can suppose that $ \algeffseqoverindex{ \beta }{ \text{\unboldmath$\mathit{J}$} } $ do not appear
          free in $\ottnt{C}$).
          Otherwise, if $ \algeffseqoverindex{ \beta }{ \text{\unboldmath$\mathit{J}$} } $ is empty, we also have it.

          By applying Lemmas~\ref{lem:eff-val-any-eff} and \ref{lem:eff-term-subst} (\ref{lem:eff-term-subst:term})
          to (\ref{eqn:eff-subject-red:handle:handle:one}), we have
          \begin{equation}
           \Delta  \ottsym{,}  \mathit{k} \,  \mathord{:}  \,   \ottnt{D}    [   \text{\unboldmath$\forall$}  \,  \algeffseqoverindex{ \beta }{ \text{\unboldmath$\mathit{J}$} }   \ottsym{.} \,  \algeffseqover{ \ottnt{C_{{\mathrm{0}}}} }   \ottsym{/}   \algeffseqover{ \alpha }   ]     \rightarrow ^{ \epsilon }  \ottnt{A}   \vdash   \ottnt{M}    [  \ottnt{v}  /  \mathit{x}  ]    \ottsym{:}  \ottnt{A} \,  |  \, \epsilon.
            \label{eqn:eff-subject-red:handle:handle:two}
          \end{equation}

         We show that
         \[
         \Delta  \vdash   \lambda\!  \, \mathit{y}  \ottsym{.}  \mathsf{handle} \,  \ottnt{E}  [  \mathit{y}  ]  \, \mathsf{with} \, \ottnt{H}  \ottsym{:}    \ottnt{D}    [   \text{\unboldmath$\forall$}  \,  \algeffseqoverindex{ \beta }{ \text{\unboldmath$\mathit{J}$} }   \ottsym{.} \,  \algeffseqover{ \ottnt{C_{{\mathrm{0}}}} }   \ottsym{/}   \algeffseqover{ \alpha }   ]     \rightarrow ^{ \epsilon }  \ottnt{A}  \,  |  \, \epsilon''
         \]
         for any $\epsilon''$.

         For that, we first show that
         $\Delta  \vdash   \ottnt{D}    [   \text{\unboldmath$\forall$}  \,  \algeffseqoverindex{ \beta }{ \text{\unboldmath$\mathit{J}$} }   \ottsym{.} \,  \algeffseqover{ \ottnt{C_{{\mathrm{0}}}} }   \ottsym{/}   \algeffseqover{ \alpha }   ]    \sqsubseteq    \text{\unboldmath$\forall$}  \,  \algeffseqoverindex{ \beta }{ \text{\unboldmath$\mathit{J}$} }   \ottsym{.} \, \ottnt{D}    [   \algeffseqover{ \ottnt{C_{{\mathrm{0}}}} }   \ottsym{/}   \algeffseqover{ \alpha }   ]  $.
         If $ \algeffseqoverindex{ \beta }{ \text{\unboldmath$\mathit{J}$} } $ is not empty, then $\mathit{SR} \, \ottsym{(}  \ottsym{\{}  \mathsf{op}  \ottsym{\}}  \ottsym{)}$.
         Thus, we have the derivation
         by \reflem{eff-subtyping-forall-move} (\ref{lem:eff-subtyping-forall-move:pos})
         (note that we can suppose that
         $ \algeffseqoverindex{ \beta }{ \text{\unboldmath$\mathit{J}$} } $ do not appear free in $\ottnt{D}$).
         Otherwise, if $ \algeffseqoverindex{ \beta }{ \text{\unboldmath$\mathit{J}$} } $ is empty, we also have it
         by \Srule{Refl}.

         Thus, since $\Gamma  \ottsym{,}  \mathit{y} \,  \mathord{:}  \,  \text{\unboldmath$\forall$}  \,  \algeffseqoverindex{ \beta }{ \text{\unboldmath$\mathit{J}$} }   \ottsym{.} \, \ottnt{D} \,  [   \algeffseqover{ \ottnt{C_{{\mathrm{0}}}} }   \ottsym{/}   \algeffseqover{ \alpha }   ]   \vdash   \ottnt{E}  [  \mathit{y}  ]   \ottsym{:}  \ottnt{B} \,  |  \, \epsilon'$,
         we have
         \[
         \Delta  \ottsym{,}  \mathit{y} \,  \mathord{:}  \, \ottnt{D} \,  [   \text{\unboldmath$\forall$}  \,  \algeffseqoverindex{ \beta }{ \text{\unboldmath$\mathit{J}$} }   \ottsym{.} \,  \algeffseqover{ \ottnt{C_{{\mathrm{0}}}} }   \ottsym{/}   \algeffseqover{ \alpha }   ]   \vdash   \ottnt{E}  [  \mathit{y}  ]   \ottsym{:}  \ottnt{B} \,  |  \, \epsilon'
         \]
         by \reflem{eff-var-subtype}.
         Thus, we have
         \[
         \Delta  \ottsym{,}  \mathit{y} \,  \mathord{:}  \, \ottnt{D} \,  [   \text{\unboldmath$\forall$}  \,  \algeffseqoverindex{ \beta }{ \text{\unboldmath$\mathit{J}$} }   \ottsym{.} \,  \algeffseqover{ \ottnt{C_{{\mathrm{0}}}} }   \ottsym{/}   \algeffseqover{ \alpha }   ]   \vdash  \mathsf{handle} \,  \ottnt{E}  [  \mathit{y}  ]  \, \mathsf{with} \, \ottnt{H}  \ottsym{:}  \ottnt{A} \,  |  \, \epsilon
         \]
         by \reflem{eff-weakening} (\ref{lem:eff-weakening:handler}) and \Te{Handle}.
         By \Te{Abs},
         \[
         \Delta  \vdash   \lambda\!  \, \mathit{y}  \ottsym{.}  \mathsf{handle} \,  \ottnt{E}  [  \mathit{y}  ]  \, \mathsf{with} \, \ottnt{H}  \ottsym{:}    \ottnt{D}    [   \text{\unboldmath$\forall$}  \,  \algeffseqoverindex{ \beta }{ \text{\unboldmath$\mathit{J}$} }   \ottsym{.} \,  \algeffseqover{ \ottnt{C_{{\mathrm{0}}}} }   \ottsym{/}   \algeffseqover{ \alpha }   ]     \rightarrow ^{ \epsilon }  \ottnt{A}  \,  |  \, \epsilon''
         \]
         for any $\epsilon''$.

         By applying \reflem{eff-term-subst} (\ref{lem:eff-term-subst:term}) to
         (\ref{eqn:eff-subject-red:handle:handle:two}), we have
         \[
         \Delta  \vdash    \ottnt{M}    [  \ottnt{v}  /  \mathit{x}  ]      [   \lambda\!  \, \mathit{y}  \ottsym{.}  \mathsf{handle} \,  \ottnt{E}  [  \mathit{y}  ]  \, \mathsf{with} \, \ottnt{H}  /  \mathit{k}  ]    \ottsym{:}  \ottnt{A} \,  |  \, \epsilon,
         \]
         which is what we have to show.

         \case \Te{Fix}/\R{Fix}:  By \reflem{eff-term-subst}.
          Note that the fixed-point combinator can be given any effect.
        \end{caseanalysis}

  \item Straightforward by induction on the typing derivation.
 \end{enumerate}
\end{proof}

\ifrestate
\thmEffTypeSoundness*
\else
\begin{thm}[Type Soundness]
 \label{thm:eff-type-sound}
 If $\Delta  \vdash  \ottnt{M}  \ottsym{:}  \ottnt{A} \,  |  \,  \emptyset $ and $\ottnt{M}  \longrightarrow^{*}  \ottnt{M'}$ and $\ottnt{M'}  \centernot\longrightarrow$, then
 $\ottnt{M'}$ is a value.
\end{thm}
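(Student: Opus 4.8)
The plan is to derive \refthm{eff-type-sound} from the effect-system versions of progress (\reflem{eff-progress}) and subject reduction (\reflem{eff-subject-red}), exactly as \refthm{type-sound} is obtained from \reflem{progress} and \reflem{subject-red}, but now exploiting the effect annotation to rule out an unhandled operation call in the final term.

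First I would lift subject reduction from a single step to a multi-step evaluation. Given $\Delta \vdash M : A \mid \emptyset$ and $M \longrightarrow^{*} M'$, I would induct on the number of evaluation steps: the reflexive case is immediate, and for each step $M_0 \longrightarrow M_1$ the second part of \reflem{eff-subject-red} preserves both the type $A$ and the effect annotation, so the annotation stays $\emptyset$ throughout. Hence $\Delta \vdash M' : A \mid \emptyset$.

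Next, since $M' \centernot\longrightarrow$, I would apply \reflem{eff-progress} to the judgment $\Delta \vdash M' : A \mid \emptyset$. The lemma yields three alternatives. The first, $M' \longrightarrow M''$, contradicts $M' \centernot\longrightarrow$. The third, $M' = E[\,\#\mathsf{op}(v)\,]$ with $\mathsf{op} \notin E$ and $\mathsf{op} \in \emptyset$, is impossible because no operation belongs to the empty effect set; this is exactly where the effect discipline pays off, since the empty annotation certifies that the residual computation performs no unhandled operation. Only the second alternative survives, namely that $M'$ is a value, which is the desired conclusion.

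The argument for the theorem itself is short once the two supporting lemmas are available, so there is no real obstacle at this level. The substantive work sits in \reflem{eff-subject-red}, whose \Te{Handle}/\R{Handle} case depends on \reflem{eff-subtyping-forall-move} and the effect-aware distributive rule \Srule{DFunEff} (with its side condition $\mathit{SR}(\epsilon')$ discharged by the third clause of \refdef{eff-signature-restriction}), and in the refinement of \reflem{eff-progress} that additionally records $\mathsf{op} \in \epsilon$ for the residual operation call. The one point to watch in the theorem's own proof is simply that subject reduction must preserve the \emph{empty} effect, rather than merely some effect, so that the $\mathsf{op} \in \emptyset$ case is genuinely vacuous.
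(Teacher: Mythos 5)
Your proposal is correct and matches the paper's own argument, which derives the theorem directly from the effect-system progress and subject reduction lemmas (Lemmas~\ref{lem:eff-progress} and \ref{lem:eff-subject-red}), with the $\mathsf{op} \in \emptyset$ clause of progress making the unhandled-operation case vacuous exactly as you describe. No gaps.
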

\fi
\begin{proof}
 By Lemmas~\ref{lem:eff-subject-red} and \ref{lem:eff-progress}.
\end{proof}

\fi

\end{document}